\documentclass[conference,a4paper,10pt,times,romanappendices]{IEEEtran}
\pdfoutput=1

\newif\ifdraft
% \drafttrue
 \draftfalse

\newif\iffull
% \fullfalse
\fulltrue

\usepackage{lipsum}
\usepackage[normalem]{ulem}
\usepackage{mathtools}
\usepackage{mathpartir}
\usepackage{stmaryrd}
\usepackage{amsthm}
\usepackage{float}
\usepackage{wrapfig}
\usepackage{framed}
\usepackage{xcolor}
\usepackage{textcomp}
\usepackage{xspace}
\usepackage{amsmath,amssymb,amsfonts}
\usepackage[T1]{fontenc}
\usepackage[utf8]{inputenc}
\usepackage{cite}
\usepackage{algorithmic}
\usepackage{graphicx}
\usepackage{placeins}

\def\BibTeX{{\rm B\kern-.05em{\sc i\kern-.025em b}\kern-.08em
    T\kern-.1667em\lower.7ex\hbox{E}\kern-.125emX}}

\usepackage{url}
\usepackage{tikz}
\usepackage{proof}
\usepackage{hyperref}
\usepackage{enumerate}

\setlength{\ULdepth}{0.15em}

\input{macros.sty}

\usetikzlibrary{backgrounds}
\usetikzlibrary{intersections}
\usetikzlibrary{scopes}
\usetikzlibrary{calc}
\usetikzlibrary{decorations.pathreplacing}
\usetikzlibrary{decorations.pathmorphing,shapes}
\usetikzlibrary{backgrounds}
\usetikzlibrary{shapes.misc}

\pgfdeclarelayer{bg1}    % declare background layer
\pgfdeclarelayer{bg0}    % declare background layer
\pgfsetlayers{background,bg1,bg0,main} % set the order (main is the standard layer)

\newtheorem{assumption}{Assumption}
\newtheorem*{theorem*}{Theorem}
\newtheorem*{lemma*}{Lemma}
\newtheorem*{proposition*}{Proposition}

\newtheorem{definition}{Definition}
\newtheorem{proposition}{Proposition}
\newtheorem{theorem}{Theorem}
\newtheorem{lemma}{Lemma}

\theoremstyle{remark}
\newtheorem{remark}{Remark}
\newtheorem{example}{Example}

\begin{document}
\iffull
\title{The 5G-AKA Authentication Protocol Privacy
  \\(Technical Report)}
\else
\title{The 5G-AKA Authentication Protocol Privacy}
\fi

\author{\IEEEauthorblockN{Adrien Koutsos}
  \IEEEauthorblockA{\textit{LSV, CNRS, ENS Paris-Saclay,
      Université Paris-Saclay} \\
    Cachan, France\\
    adrien.koutsos@lsv.fr}}

\maketitle
\ifdraft
\thispagestyle{plain}
\pagestyle{plain}
\fi

\begin{abstract}
  We study the 5G-AKA authentication protocol described in the 5G mobile communication standards. This version of AKA tries to achieve a better privacy than the 3G and 4G versions through the use of asymmetric randomized encryption. Nonetheless, we show that except for the IMSI-catcher attack, all known attacks against 5G-AKA privacy still apply.
  
  Next, we modify the 5G-AKA protocol to prevent these attacks, while \ch{satisfying 5G-AKA efficiency constraints as much as possible}{satisfying the cost and efficiency constraints of the 5G-AKA protocol}. We then formally prove that our protocol is $\sigma$-unlinkable. This is a new security notion, which allows for a fine-grained quantification of a protocol privacy. Our security proof is carried out in the Bana-Comon indistinguishability logic. We also prove mutual authentication as a secondary result.
    % Next, we modify the 5G-AKA protocol to prevent these attacks, while satisfying the cost and efficiency constraints of the 5G-AKA protocol. We then formally prove that our protocol is $\sigma$-unlinkable. This is a new security notion, which allows for a fine-grained quantification of a protocol privacy. Our security proof is carried out in the Bana-Comon indistinguishability logic. We also prove mutual authentication as a secondary result.
\end{abstract}

\begin{IEEEkeywords}
  AKA, Unlinkability, Privacy, Formal Methods.
\end{IEEEkeywords}

\section{Introduction}

Mobile communication technologies are widely used for voice, text and Internet access. These technologies allow a subscriber's device, typically a mobile phone, to connect wirelessly to an antenna, and from there to its service provider. The two most recent generations of mobile communication standards, the $\threeg$ and $\fourg$ standards, have been designed by the $\threegpp$ consortium. The \emph{fifth generation} ($\fiveg$) of mobile communication standards is being finalized, and drafts are now available~\cite{ts33501}. These standards describe protocols that aim at providing security guarantees to the subscribers and service providers. One of the most important such protocol is the \emph{Authentication and Key Agreement} ($\aka$) protocol, which allows a subscriber and its service provider to establish a shared secret key in an authenticated fashion. There are different variants of the $\aka$ protocol, one for each generation.

In the $\threeg$ and $\fouraka$ protocols, the subscriber and its service provider share a long term secret key. The subscriber stores this key in a cryptographic chip, the \emph{Universal Subscriber Identity Module} ($\tusim$), which also performs all the cryptographic computations. Because of the $\tusim$ limited computational power, the protocols only use symmetric key cryptography without any pseudo-random number generation on the subscriber side. Therefore
% , in these versions of $\aka$,
the subscriber does not use a random challenge to prevent replay attacks, but instead relies on a sequence number $\sqn$. Since the sequence number has to be tracked by the subscriber and its service provider, the $\aka$ protocols are stateful.

Because a user could be easily tracked through its mobile phone, it is important that the $\aka$ protocols provide privacy guarantees. The $\threeg$ and $\fouraka$ protocols try to do that using temporary identities. While this provides some privacy against a \emph{passive adversary}, this is not enough against an \emph{active adversary}. Indeed, these protocols allow an antenna to ask for a user permanent identity when it does not know its temporary identity (this naturally happens in roaming situations). This mechanism is abused by $\imsi$-catchers~\cite{imsi-catcher} to collect the permanent identities of all mobile devices in range.

The $\imsi$-catcher attack is not the only known attack against the privacy of the $\aka$ protocols. In~\cite{BH-BH17}, the authors show how an attacker can obtain the least significant bits of a subscriber's sequence number, which allows the attacker to monitor the user's activity. The authors of~\cite{DBLP:conf/ccs/ArapinisMRRGRB12} describe a linkability attack against the $\threeaka$ protocol. This attack is similar to the attack on the French e-passport~\cite{DBLP:conf/csfw/ArapinisCRR10}, and relies on the fact that $\threeaka$ protocol uses different error messages if the authentication failed because of a bad $\macsym$ or because a de-synchronization occurred.

The $\fiveg$ standards include changes to the $\aka$ protocol to improve its privacy guarantees. In $\fiveaka$, a user never sends its permanent identity in plain-text. Instead, it encrypts it using a \emph{randomized asymmetric encryption} with its service provider public key. While this prevents the $\imsi$-catcher attack, this is not sufficient to get unlinkability. Indeed, the attacks from~\cite{DBLP:conf/ccs/ArapinisMRRGRB12,BH-BH17} against the $\threeg$ and $\fouraka$ protocols still apply. Moreover, the authors of~\cite{DBLP:journals/popets/FouqueOR16} proposed an attack against a variant of the $\aka$ protocol introduced in~\cite{DBLP:conf/ccs/ArapinisMRRGRB12}, which uses the fact that an encrypted identity can be replayed. It turns out that their attack also applies to $\fiveaka$.

\paragraph{Objectives}
Our goal is to improve the privacy of $\fiveaka$ while satisfying its design and efficiency constraints. In particular, our protocol should be as efficient as the $\fiveaka$ protocol, have \ch{a similar}{the same} communication complexity and rely on the same cryptographic primitives. Moreover, we want formal guarantees on the privacy provided by our~protocol.

\paragraph{Formal Methods}
Formal methods are the best way to get a strong confidence in the security provided by a protocol. They have been successfully applied to prove the security of crucial protocols, such as Signal~\cite{DBLP:conf/eurosp/KobeissiBB17} and TLS~\cite{DBLP:journals/ieeesp/BhargavanFK16, DBLP:conf/ccs/CremersHHSM17}. There exist several approaches to formally prove a protocol security.

In the \emph{symbolic} or \emph{Dolev-Yao} (DY) model, protocols are modeled as members of a formal process algebra~\cite{DBLP:journals/jacm/AbadiBF18}. In this model, the attacker controls the network: he reads all messages and he can forge new messages using capabilities granted to him through a fixed set of rules. While security in this model can be automated (e.g.~\cite{proverif, DBLP:conf/sp/ChevalKR18, Meier:2013:TPS:2526861.2526920, DBLP:conf/ccs/CortierGLM17}), it offers limited guarantees: we only prove security against an attacker that has the designated capabilities.
% This approach can be easily automated, and there are many automated or semi-automated tools, e.g. \textsc{proverif}~\cite{proverif}, \textsc{Deepsec}~\cite{DBLP:conf/sp/ChevalKR18} and \textsc{tamarin}~\cite{Meier:2013:TPS:2526861.2526920}. But this model offers limited guarantees: we only prove security for the capabilities given to the attacket.

The \emph{computational model} is more realistic. The attacker also controls the network, but is not limited by a fixed set of rules. Instead, the attacker is any Probabilistic Polynomial-time Turing Machine (PPTM for short). Security proofs in this model are typically sequences of game transformations~\cite{DBLP:journals/iacr/Shoup04} between a game stating the protocol security and cryptographic hypotheses. This model offers strong security guarantees, but proof automation is much harder. For instance, \textsc{CryptoVerif}~\cite{DBLP:journals/tdsc/Blanchet08} cannot prove the security of stateful cryptographic protocols (such as the $\aka$ protocols).

 % Still, there exists a tool: \textsc{CryptoVerif}~\cite{DBLP:journals/tdsc/Blanchet08}. Unfortunately, \textsc{CryptoVerif} cannot prove the security of stateful cryptographic protocols (such as the $\aka$ protocols).

There is a third model, the \emph{Bana-Comon} (BC) model~\cite{DBLP:conf/post/BanaC12,Bana:2014:CCS:2660267.2660276}. In this model, messages are terms and the security property is a first-order formula. Instead of granting the attacker capabilities through rules, as in the symbolic approach, we state what the adversary \emph{cannot} do. This model has several advantages. First, since security in the BC model entails computational security, it offers strong security guarantees. Then, there is no ambiguity: the adversary can do anything which is not explicitly forbidden. Finally, this approach is well-suited to model stateful protocols.

\paragraph{Related Work}
There are several formal analysis of $\aka$ protocols in the symbolic models. In~\cite{DBLP:conf/sp/ChevalKR18}, the authors use the \textsc{Deepsec} tool to prove unlinkability of the protocol for three sessions.  In~\cite{DBLP:conf/ccs/ArapinisMRRGRB12} and \cite{DBLP:conf/ccs/BroekVR15}, the authors use \textsc{Proverif} to prove unlinkability of $\aka$ variants for, respectively, three sessions and an unbounded number of sessions. In these three works, the authors abstracted away several key features of the protocol. Because \textsc{Deepsec} and \textsc{Proverif} do not support the xor operator, they replaced it with a symmetric encryption. Moreover, sequence numbers are modeled by nonces in~\cite{DBLP:conf/ccs/ArapinisMRRGRB12} and~\cite{DBLP:conf/sp/ChevalKR18}. While \cite{DBLP:conf/ccs/BroekVR15} models the sequence number update, they assume it is always incremented by one, which is incorrect. Finally, none of these works modeled the re-synchronization or the temporary identity mechanisms. Because \ch{of these inaccuracies in their models}{they have inaccurate models}, they all miss attacks.
% Because they have inaccurate models, they all miss attacks.

In~\cite{Basin2018AFA}, the authors use the \textsc{Tamarin} prover to analyse multiple properties of $\fiveaka$. For each property, they either find a proof, or exhibit an attack. To our knowledge, this is the most precise symbolic analysis of an $\aka$ protocol. For example, they correctly model the xor and the re-synchronization mechanisms, and they represent sequence numbers as integers (which makes their model stateful). Still, they decided not to include the temporary identity mechanism. Using this model, they successfully rediscover the linkability attack from~\cite{DBLP:conf/ccs/ArapinisMRRGRB12}.

We are aware of two analysis of $\aka$ protocols in the computational model. In~\cite{DBLP:journals/popets/FouqueOR16}, the authors present
\ch
{a significantly modified version of $\aka$,}
{a variant of $\aka$,}
called $\privaka$, and claim it is unlinkable. However, we discovered a linkability attack against the protocol, which falsifies the authors claim. In~\cite{DBLP:journals/ijisec/LeeSWW14}, the authors study the $\fouraka$ protocol \emph{without its first message}. They show that this reduced protocol satisfies a form of anonymity (which is weaker than unlinkability). Because they consider a weak privacy property for a reduced protocol, \ch{they fail to}{their model does not} capture the linkability attacks from the literature.

To summarize, there is currently no computational security proof of a complete version of an $\aka$ protocol.
% In conclusion, there is no valid unlinkability proof of a precise and complete model of an $\aka$ protocol in the~literature.

\paragraph{Contributions}
Our contributions are:
\begin{itemize}
\item We study the privacy of the $\fiveaka$ protocol described in the $\threegpp$ draft~\cite{ts33501}. Thanks to the introduction of asymmetric encryption, the $\fiveg$ version of $\aka$ is not vulnerable to the $\imsi$-catcher attack. However, we show that the linkability attacks from~\cite{DBLP:journals/popets/FouqueOR16, DBLP:conf/ccs/ArapinisMRRGRB12, BH-BH17} against older versions of $\aka$ still apply to $\fiveaka$.
\item We present a new linkability attack against $\privaka$,
  \ch
  {a significantly modified version of}
  {a variant of}
  the $\aka$ protocol introduced and claimed unlinkable in~\cite{DBLP:journals/popets/FouqueOR16}. This attack exploits the fact that%
  \ch
  {, in $\privaka$,}
  {}
  a message can be delayed to yield a state update later in the execution of the protocol, where it can be detected.
\item We propose the $\faka$ protocol, which is a modified version of $\fiveaka$ with better privacy guarantees and satisfying the same design and efficiency constraints.
\item We introduce a new privacy property, called $\sigma$-unlinkability, inspired from \cite{DBLP:conf/esorics/HermansPVP11} and Vaudenay's Privacy~\cite{DBLP:conf/asiacrypt/Vaudenay07}. Our property is parametric and allows us to have a fine-grained quantification of a protocol privacy.
\item We formally prove that $\faka$ satisfies the $\sigma$-unlinkability property in the computational model. Our proof is carried out in the BC model, and holds for any number of agents and sessions that are not related to the security parameter. We also show that $\faka$ provides mutual authentication.
\end{itemize}

\paragraph{Outline}
In Section~\ref{section:standards} and~\ref{section:attack-unlink} we describe the $\fiveaka$ protocol and the known linkability attacks against it. We present the $\faka$ protocol in Section~\ref{section:faka-description}, and we define the $\sigma$-unlinkability property in Section~\ref{section:unlink-body}. Finally, we show how we model the $\faka$ protocol using the BC logic in Section~\ref{section:modeling-body}, and we state and sketch the proofs of the mutual authentication and $\sigma$-unlinkability of $\faka$ in Section~\ref{section:security-proofs-body}.
\iffull
The full proofs are in Appendix.
\else
\ch{This is an extended abstract without the full proofs, which can be found in the technical report~\cite{DBLP:journals/corr/abs-1811-06922}.}{}
\fi

%%% Local Variables:
%%% mode: latex
%%% TeX-master: "main"
%%% End:

\section{The \fiveaka Protocol}
\label{section:standards}

We present the $\fiveaka$ protocol described in the $\threegpp$ standards~\cite{ts33501}. This is a three-party authentication protocol between:
\begin{itemize}
\item The \emph{User Equipment} ($\tue$). This is the subscriber's physical device  using the mobile communication network (e.g. a mobile phone). Each $\tue$ contains a cryptographic chip, the \emph{Universal Subscriber Identity Module} ($\tusim$), which stores the user confidential material (such as secret keys).
\item The \emph{Home Network} ($\thn$), which is the subscriber's service provider. It maintains a database with the necessary data to authenticate its subscribers.
\item The \emph{Serving Network} ($\tsn$). It controls the base station (the antenna) the $\tue$ is communicating with through a wireless channel.
\end{itemize}
If the $\thn$ has a base station nearby the $\tue$, then the $\thn$ and the $\tsn$ are the same entity. But this is not always the case (e.g. in roaming situations). When no base station from the user's $\thn$ are in range, the $\tue$ uses another network's base station.

The $\tue$ and its corresponding $\thn$ share some confidential key material and the \emph{Subscription Permanent Identifier} ($\supi$), which uniquely identifies the $\tue$. The $\tsn$ does not have access to the secret key material. It follows that all cryptographic computations are performed by the $\thn$, and sent to the $\tsn$ through a secure channel. The $\tsn$ also forwards all the information it gets from the $\tue$ to the $\thn$. But the $\tue$ permanent identity is not kept hidden from the $\tsn$: after a successful authentication, the $\thn$ sends the $\supi$ to the $\tsn$. This is not technically needed, but is done for legal reasons. Indeed, the $\tsn$ needs to know whom it is serving to be able to answer to \emph{Lawful Interception} requests.

Therefore, privacy requires to trust both the $\thn$ and the $\tsn$. Since, in addition, they communicate through a secure channel, we decided to model them as a single entity and we include the $\tsn$ inside the $\thn$. A description of the protocol with three distinct parties can be found in~\cite{Basin2018AFA}.

% It follows that if the $\tsn$ is compromised then the $\tue$ privacy is broken. Since privacy needs that both the $\thn$ and $\tsn$ are not compromised, and since the $\thn$ and $\tsn$ communicate through a secure channel, we decided to model them as a single entity and include the $\tsn$ inside the $\thn$. A description of the protocol with three distinct parties can be found in~\cite{Basin2018AFA}.

\subsection{Description of the Protocol}
The $\fiveg$ standard proposes two authentication protocols, $\eapakap$ and $\fiveaka$. Since their differences are not relevant for privacy, we only describe the $\fiveaka$ protocol.

\paragraph{Cryptographic Primitives}
As in the $\threeg$ and $\fourg$ variants, the $\fiveaka$ protocol uses several keyed cryptographic one-way functions: $\owsym^1$, $\owsym^2$, $\owsym^5$, $\owsym^{1,\mstar}$ and $\owsym^{5,\mstar}$. These functions are used both for integrity and confidentiality, and take as input a long term secret key $\key$ (which is different for each subscriber).

A major novelty in $\fiveaka$ is the introduction of an asymmetric randomized encryption $\enc{\cdot}{\pk}{\enonce}$. Here $\pk$ is the public key, and $\enonce$ is the encryption randomness. Previous versions of $\aka$ did not use asymmetric encryption because the $\tusim$, which is a cryptographic micro-processor, \ch{had no randomness generation capabilities.}{was not powerful enough.} The asymmetric encryption is used to conceal the identity of the $\tue$, by sending $\enc{\supi}{\pk}{\enonce}$ instead of transmitting the $\supi$ in clear (as in $\threeg$ and $\fouraka$). % The use of asymmetric encryption is, in our opinion, the most important change between $\fouraka$ and $\fiveaka$.

% Such a function could not be used in the old versions of $\aka$ because the $\tusim$, which is a cryptographic micro-processor, was not powerful enough to do the computations. Asymmetric encryption is used to conceal the identity of the $\tue$, by sending $\enc{\supi}{\pk}{\enonce}$ instead of sending the $\supi$ in clear (as it is done in $\threeg$ and $\fourg$ $\aka$). The use of asymmetric encryption is, in our opinion, the most important change between $\fouraka$ and $\fiveaka$.

\paragraph{Temporary Identities}
After a successful run of the protocol, the $\thn$ may issue a temporary identity, a \emph{Globally Unique Temporary Identity} ($\guti$), to the $\tue$. Each $\guti$ can be used in \emph{at most one session} to replace the encrypted identity $\enc{\supi}{\pk}{\enonce}$. It is renewed after each use. Using a $\guti$ allows to avoid one asymmetric encryption. This saves a pseudo-random number generation and the expensive computation of an asymmetric encryption.

\paragraph{Sequence Numbers}
The $\fiveaka$ protocol prevents replay attacks using a sequence number $\sqn$ instead of a random challenge. This sequence number is included in the messages, incremented after each successful run of the protocol, and must be tracked and updated by the $\tue$ and the $\thn$. As it may get de-synchronized (e.g. because a message is lost), there are two versions of it: the $\tue$ sequence number $\sqn_\ue$, and the $\thn$ sequence number~$\sqn_\hn$.

\paragraph{State}
The $\tue$ and $\thn$ share the $\tue$ identity $\supi$, a long-term symmetric secret key $\key$, \ch{a sequence number $\sqn_\ue$}{} and the $\thn$ public key $\pk_\hn$. The $\tue$ also stores in $\guti$ the value of the last temporary identity assigned to it (if there is one). Finally, the $\thn$ stores the secret key $\sk_\hn$ corresponding to $\pk_\hn$, \ch{its version $\sqn_\hn$ of every $\tue$'s sequence number}{} and a mapping between the $\guti$s and the $\supi$s.

\begin{figure}[tb]
  \begin{center}
    \begin{tikzpicture}[>=latex]
      \tikzset{every node/.style={font=\footnotesize}};
      \tikzset{en/.style={minimum height=0.2cm,minimum width=0.15cm,fill=black}};
      \tikzset{mb/.style={solid,thick,draw=gray!90}};
      \tikzset{mp/.style={inner sep=0,outer sep=0,draw=none}};
      \tikzset{sb/.style={double}};

      \path (0,0)
      node[above,yshift=0.1cm,anchor=south] (ue) {$\tue$}
      node[draw,name path=suen,double=gray!80,double distance=0.035cm,below,
      anchor=north west,xshift=-0.5cm,align=left] (sue)
      {$\supi,\guti,\key,\pk_\hn,\sqn_\ue$};

      \path[name path=aline] (ue) -- ++(0,-10.6);

      \path[draw,very thick,name intersections={of=suen and aline}]
      (intersection-2) node[inner sep=0] (tue) {}
      -- ++(0,-10.3) node[en]{};

      \path (7,0)
      node[above,yshift=0.1cm,anchor=south] (hn) {$$\thn$$}
      node[double=gray!80,draw,name path=shnn,double distance=0.035cm,below,
      anchor=north east,xshift=0.5cm,align=left] (shn)
      {$\supi,\guti,\key,\sk_\hn,\sqn_\hn$};

      \path[name path=bline] (hn) -- ++(0,-10.7);

      \path[draw,very thick,name intersections={of=shnn and bline}]
      (intersection-2) node[inner sep=0] (thn) {}
      -- ++(0,-10.3) node[en]{};

      \path (tue) -- ++(0,-0.8)
      node[mp] (ntue0) {}
      [draw,->] -- ++(7,0)
      node[midway,above]{$\guti$ \textbf{or} $\enc{\supi}{\pk_\hn}{\enonce}$};

      \path (ntue0) -- ++(0,-0.3)
      node[sb,fill=white,name path=nptue0,anchor=north west,
      xshift=-0.5cm,draw,align=left] {
        \textbf{if $\guti$ was used:}
        $\;\guti \la \unset$};

      \path[name intersections={of=aline and nptue0}]
      (intersection-2) node[mp] (nptue1){};

      \path let \p1=(nptue1) in
      let \p2=(hn) in
      (\x2,\y1) node[mp] (pthn1) {};

      \path (pthn1) -- ++(0,-0.8) node[mp] (t){};
      \path[draw,->] (t) -- ++(-7,0)
      node[midway,above]{$\triplet
        {\nonce}
        {\sqn_\hn \oplus \xow{\nonce}{\key}{5}}
        {\xow{\spair{\sqn_\hn}{\nonce}}{\key}{1}}$}
      node (ue1) {};

      \path (ue1) -- ++(0,-0.3)
      node[mp] (inputn) {}
      node[sb,fill=white,name path=pue1,anchor=north west,
      xshift=-0.5cm,draw,align=left] (c) {
        \textbf{\underline{Input $\sfx$:}}\\
        $\nonce_\sfr,\, \sqn_\sfr \la
        \pi_1(\sfx),\, \pi_2(\sfx) \oplus \xow{\nonce_\sfr}{\key}{5}$\\
        $\sfb_\lmacsym \la
        \xow{\spair{\sqn_\sfr}{\nonce_\sfr}}{\key}{1} =
        \pi_3(\sfx)$\\
        $\sfb_\sqn \la \range{\sqn_\ue}{\sqn_\sfr}$};

      \path (inputn) -- ++(7,0)
      node[sb,fill=white,anchor=north east,
      xshift=0.5cm,draw,align=left] (c) {
        $\sqn_\hn \la \sqn_\hn + 1$};

      \path[name intersections={of=aline and pue1}]
      (intersection-2) -- ++(0,-0.95)
      node[sb,fill=white,name path=pue2,anchor=north west,
      xshift=-0.5cm,draw] (ue2p) {
        $\sqn_\ue \la \sqn_\sfr$};

      \path[name intersections={of=pue2 and aline}]
      (intersection-2) -- ++(0,-0.3) node (ue3) {}
      [draw,->] -- ++(7,0)
      node[midway,above]{$
        % \triplet
        {\xow{\nonce_\sfr}{\key}{2}}
        % {\xow{\nonce_\sfr}{\key}{3}}
        % {\xow{\nonce_\sfr}{\key}{4}}
        $};

      \path (ue3) -- ++(-0.85,1.5) node[mp] (ue3l){};
      \path (ue3) -- ++(7,0) -- ++(0.7,-0.2) node[mp] (ue3r){};
      \path (ue3l) node[outer sep=0,draw,fill=white,anchor=north west] (a)
      {\textbf{$\sfb_\lmacsym \wedge \sfb_\sqn$}};
      \path (a.north east) -- ++ (0,-0.15) node[mp] (x){}
      (a.south west) -- ++ (0.15,0) node[mp] (y){};
      \draw[mb] (x.base) -| (ue3r.base) -| (y.base);

      \path (ue3) -- ++(0,-1.1) node (ue2) {}
      [draw,->] -- ++(7,0)
      node[midway,above] {``Auth-Failure''};

      \path (ue2) -- ++(-0.85,0.7) node[mp] (ue2l){};
      \path (ue2) -- ++(7,0) -- ++(0.7,-0.2) node[mp] (ue2r){};
      \path (ue2l) node[outer sep=0,draw,fill=white,anchor=north west] (a)
      {\textbf{$\neg \sfb_\lmacsym$}};
      \path (a.north east) -- ++ (0,-0.15) node[mp] (x){}
      (a.south west) -- ++ (0.15,0) node[mp] (y){};
      \draw[mb] (x.base) -| (ue2r.base) -| (y.base);

      \path (ue2) -- ++(0,-1.2) node[mp] (ue40) {}
      [draw,->] -- ++(7,0) node[mp] (ue4) {}
      node[midway,above]{$
        \lpair
        {\sqn_\ue \oplus \sxow{\nonce_\sfr}{\key}{5}}
        {\sxow{\spair{\sqn_\ue}{\nonce_\sfr}}{\key}{1}}$};
      \path (ue4) -- ++(0,-0.3)
      node[sb,fill=white,name path=ue4path,anchor=north east,align=left,
      xshift=0.5cm,draw] (ue2p) {
        \textbf{\underline{Input $\sfy$:}}\\
        $\sqn^\mstar_\sfr \la \pi_1(\sfy) \oplus \sxow{\nonce}{\key}{5}$\\
        $\textsf{if }
        \sxow{\spair{\sqn^\mstar_\sfr}{\nonce}}{\key}{1} =
        \pi_2(\sfy)$
        \textsf{ then}
        $\;\sqn_\hn \la \sqn^\mstar_\sfr + 1$};

      \path[name intersections={of=ue4path and bline}] (intersection-2)
      node[mp](ue4i){};
      \path (ue40) -- ++(-0.85,0.8) node[mp] (ue4l){};
      \path (ue4i) -- ++(0.7,-0.2) node[mp] (ue4r){};
      \path (ue4l) node[outer sep=0,draw,fill=white,anchor=north west] (a)
      {\textbf{$\sfb_\lmacsym \wedge \neg \sfb_\sqn$}};
      \path (a.north east) -- ++ (0,-0.15) node[mp] (x){}
      (a.south west) -- ++ (0.15,0) node[mp] (y){};
      \draw[mb] (x.base) -| (ue4r.base) -| (y.base);
    \end{tikzpicture}
  \end{center}

  \ch{\textbf{Conventions:} $\la$ is used for assignments, and has a lower priority than the equality comparison operator $=$.}{}
  \caption{The $\fiveaka$ Protocol}
  \label{fig:fiveaka}
\end{figure}

\paragraph{Authentication Protocol}
The $\fiveaka$ protocol is represented in Fig.~\ref{fig:fiveaka}. We now describe an honest execution of the protocol. The $\tue$ initiates the protocol by identifying itself to the $\thn$, which it can do in two different ways:
\begin{itemize}
\item It can send a temporary identity $\guti$, if one was assigned to it. After sending the $\guti$, the $\tue$ sets it to $\unset$ to ensure that it will not be used more than once. Otherwise, it would allow an adversary to link sessions together.
\item It can send its concealed permanent identity $\enc{\supi}{\pk_\hn}{\enonce}$, using the $\thn$ public key $\pk_\hn$ and a fresh randomness $\enonce$.
\end{itemize}
Upon reception of an identifying message, the $\thn$ retrieves the permanent identity $\supi$: if it received a temporary identity $\guti$, this is done through a database look-up; and if a concealed permanent identity was used, it uses $\sk_\hn$ to decrypt it. It can then recover $\sqn_\hn$ and the key $\key$ associated to the identity $\supi$ from its memory. The $\thn$ then generates a fresh nonce $\nonce$. It masks the sequence number $\sqn_\hn$ by xoring it with $\xow{\nonce}{\key}{5}$, and $\lmacsym$ the message by computing $\xow{\spair{\sqn_\hn}{\nonce}}{\key}{1}$ (we use $\langle\dots\rangle$ for tuples). It then sends the message $\striplet{\nonce}{\sqn_\hn \oplus \xow{\nonce}{\key}{5}}{\,\xow{\spair{\sqn_\hn}{\nonce}}{\key}{1}}$.

When receiving this message, the $\tue$ computes $\xow{\nonce}{\key}{5}$. With it, it unmasks $\sqn_\hn$ and checks the authenticity of the message by re-computing $\xow{\spair{\sqn_\hn}{\nonce}}{\key}{1}$ and verifying that it is equal to the third component of the message. It also checks whether $\sqn_\hn$ and $\sqn_\ue$ are in range\footnote{The specification is loose here: it only requires that $\sqn_\ue < \sqn_\hn \le \sqn_\ue + C$, where $C$ is some constant chosen by the $\thn$.}. If both checks succeed, the $\tue$ sets $\sqn_\ue$ to $\sqn_\hn$, which prevents this message from being accepted again. It then sends
$\xow{\nonce}{\key}{2}$ to prove to $\thn$ the knowledge of $\key$.
% $\striplet{\xow{\nonce}{\key}{2}}{\xow{\nonce}{\key}{3}}{\xow{\nonce}{\key}{4}}$ to prove to $\thn$ the knowledge of $\key$ (the message is cut in three parts for reasons that are not relevant here).
If the authenticity check fails, an ``Auth-Failure'' message is sent. Finally, if the authenticity check succeeds but the range check fails, $\tue$ starts the re-synchronization sub-protocol, which we describe below.

\paragraph{Re-synchronization}
The re-synchronization protocol allows the $\thn$ to obtain the current value of $\sqn_\ue$. First, the $\tue$ masks $\sqn_\ue$ by xoring it with $\sxow{\nonce}{\key}{5}$, $\lmacsym$ the message using $\sxow{\spair{\sqn_\ue}{\nonce}}{\key}{1}$ and sends the pair $\spair{\sqn_\ue \oplus \sxow{\nonce}{\key}{5}}{\sxow{\spair{\sqn_\ue}{\nonce}}{\key}{1}}$. When receiving this message, the $\thn$ unmasks $\sqn_\ue$ and checks the $\lmacsym$. If the authentication test is successful, $\thn$ sets the value of $\sqn_\hn$ to $\sqn_\ue + 1$. This ensures that $\thn$ first message in the next session of the protocol is in the correct range.

\paragraph{$\guti$ Assignment}
\ch{There is a final component of the protocol which is not described in Fig.~\ref{fig:fiveaka} (as it is not used in the privacy attacks we present later).}{} After a successful run of the protocol, the $\thn$ generates a new temporary identity $\guti$ and links it to the $\tue$'s permanent identity in its database. Then, it sends the \ch{masked}{} fresh $\guti$ to the $\tue$.

% After a successful run of the protocol, the $\thn$ generates a new temporary identity $\guti$ and links it to the $\tue$'s permanent identity in its database. Then, it sends the concealed fresh $\guti$ to the $\tue$. The sub-protocol used to send a fresh $\guti$ is not used in the privacy attacks we present in the next session. Therefore, we omit its description.

% \paragraph{Key Derivation}
% After a successful run of the protocol, $\tue$ and $\thn$ can both derive a session key $\key_{\textsf{s}}$ from the nonce $\nonce$ and the long-term secret $\key$ using a key derivation function $\kdfsym$. $\thn$ can then randomly generate a new temporary identity $\guti$ and link it to $\tue$'s permanent identity in its database. Finally, the fresh $\guti$ is sent to $\tue$ using the session key $\key_{\textsf{s}}$. The sub-protocol used to send a fresh $\guti$ is not used in the privacy attacks we present in the next session. Therefore, we decided to omit its description.

%%% Local Variables:
%%% mode: latex
%%% TeX-master: "main"
%%% End:

\section{Unlinkability Attacks Against $\fiveaka$}
\label{section:attack-unlink}

We present in this section several attacks against $\aka$ that appeared in the literature. All these attacks but one (the $\imsi$-catcher attack) carry over to $\fiveaka$. Moreover, several fixes of the $\threeg$ and $\fourg$ versions of $\aka$ have been proposed. We discuss the two most relevant fixes, the first by Arapinis et al.~\cite{DBLP:conf/ccs/ArapinisMRRGRB12}, and the second by Fouque et al.~\cite{DBLP:journals/popets/FouqueOR16}.

None of these fixes are satisfactory. The modified $\aka$ protocol given in~\cite{DBLP:conf/ccs/ArapinisMRRGRB12} has been shown flawed in~\cite{DBLP:journals/popets/FouqueOR16}. The authors of~\cite{DBLP:journals/popets/FouqueOR16} then propose
\ch
{their own protocol, called $\privaka$,}
{their own version, called $\privaka$,}
and claim it is unlinkable (they only provide a proof sketch). While analyzing the $\privaka$ protocol, we discovered an attack allowing to permanently de-synchronize the $\tue$ and the $\thn$. Since a de-synchronized $\tue$ can be easily tracked (after being de-synchronized, the $\tue$ rejects all further messages), our attack is also an unlinkability attack. This is in direct contradiction with the security property claimed in~\cite{DBLP:journals/popets/FouqueOR16}. This is a novel attack that never appeared in the literature.

\subsection{$\imsi$-Catcher Attack}
\begin{figure}[tb]
  \begin{center}
    \begin{tikzpicture}[>=latex]
      \tikzset{every node/.style={font=\footnotesize}};
      \tikzset{en/.style={minimum height=0.2cm,minimum width=0.15cm,fill=black}};
      \tikzset{mb/.style={solid,thick,draw=gray!90}};
      \tikzset{mp/.style={inner sep=0,outer sep=0,draw=none}};
      \tikzset{sb/.style={draw,fill=white,double,align=left}};

      \path (0,0)
      node[above,yshift=0.1cm,anchor=south] (ue) {$\tue$};

      \path[name path=aline] (ue) -- ++(0,-2.4) node[mp] (xend) {};

      \path (7.7,0)
      node[above,yshift=0.1cm,anchor=south] (hn) {\text{Attacker}};

      \path[name path=bline] (hn) -- ++(0,-2.5);

      \path[name path=xendp] (xend) -- ++(8,0);
      \path[name intersections={of=xendp and bline}]
      (intersection-1) node[mp] (xendr){};
      \path[draw,very thick]
      (hn) -- (xendr) node[en]{};

      \path[draw,very thick]
      (ue) -- (xend) node[en]{};

      \path (ue) -- ++(0,-0.4)
      [draw,->] -- ++(7.7,0)
      node[pos=0.5,above]{$\tmsi$ or $\imsi$}
      node[mp] (hn1) {};

      \path (hn1) -- ++(0,-1)
      [draw,->] -- ++(-7.7,0)
      node[pos=0.5,above]{``Permanent-ID-Request''}
      node[mp] (ue2) {};

      \path (hn1) -- ++(0.5,-0.2) node[mp] (hn1l){};
      \path (hn1l) -- ++(-8,0) -- ++(-0.5,-1.5) node[mp] (hn1r){};
      \path (hn1l) node[outer sep=0,draw,fill=white,anchor=north east] (a)
      {\textbf{If $\tmsi$ received}};
      \path (a.north west) -- ++ (0,-0.15) node[mp] (x){}
      (a.south east) -- ++ (-0.15,0) node[mp] (y){};
      \draw[mb] (x.base) -| (hn1r.base) -| (y.base);

      \path (ue2) -- ++(0,-0.5)
      [draw,->] -- ++(7.7,0)
      node[pos=0.5,above]{$\imsi$}
      node[mp] (ue2) {};

    \end{tikzpicture}
  \end{center}

  \caption{An $\imsi$-Catcher Attack}
  \label{fig:imsi-catcher-attack}
\end{figure}
All the older versions of $\aka$ ($\fourg$ and earlier) are vulnerable to the  $\imsi$-catcher attack~\cite{imsi-catcher}. This attack simply relies on the fact that, in these versions of $\aka$, the permanent identity (called the \emph{International Mobile Subscriber Identity} or $\imsi$ in the $\fourg$ specifications) is not encrypted but sent in plain-text. Moreover, even if a temporary identity is used (a \emph{Temporary Mobile Subscriber Identity} or $\tmsi$), an attacker can simply send a \text{Permanent-ID-Request} message to obtain the $\tue$'s permanent identity. The attack is depicted in Fig.~\ref{fig:imsi-catcher-attack}.

This necessitates an active attacker with its own base station. At the time, this required specialized hardware, and was believed to be too expensive. This is no longer the case, and can be done for a few hundreds dollars (see~\cite{DBLP:conf/ndss/ShaikSBAN16}).

\subsection{The Failure Message Attack}
\label{subsec:desync-attack}
\begin{figure}[tb]
  \begin{center}
    \begin{tikzpicture}[>=latex]
      \tikzset{every node/.style={font=\footnotesize}};
      \tikzset{en/.style={minimum height=0.2cm,minimum width=0.15cm,fill=black}};
      \tikzset{mb/.style={solid,thick,draw=gray!90}};
      \tikzset{mp/.style={inner sep=0,outer sep=0,draw=none}};
      \tikzset{sb/.style={draw,fill=white,double,align=left}};

      \path (0,0)
      node[above,yshift=0.1cm,anchor=south] (ue) {$\tue_{\imsi_{\textsf{t}}}$};

      \path[name path=aline] (ue) -- ++(0,-1.6);

      \path[draw,very thick] (ue)  -- ++(0,-1.6) node[mp](xend){};

      \path (7.7,0)
      node[above,yshift=0.1cm,anchor=south] (hn) {$\thn$};

      \path[name path=bline] (hn) -- ++(0,-1.6);

      \path[name path=xendp] (xend) -- ++(8,0);
      \path[name intersections={of=xendp and bline}]
      (intersection-1) node[mp] (xendr){};

      \path[draw,very thick] (hn) -- (xendr) node[mp]{};

      \path (hn) -- ++(0,-0.5)
      [draw,->] -- ++(-7.7,0)
      node[midway,above]{$
        t_{\textsf{auth}} \equiv
        \triplet
        {\nonce}
        {\sqn_\hn \oplus \xow{\nonce}{\key}{5}}
        {\xow{\spair{\sqn_\hn}{\nonce}}{\key}{1}}$}
      node[mp] (ue1) {};

      \path (ue1) -- ++(0,-0.8)
      [draw,->] -- ++(7.7,0)
      node[mp] (hn1) {}
      node[midway,above]{$
        % \triplet
        {\xow{\nonce}{\key}{2}}
        % {\xow{\nonce}{\key}{3}}
        % {\xow{\nonce}{\key}{4}}
        $};

      % Bottom part
      \path (hn1) -- ++(-7.7,-0.7)
      node (ue) {$\tue_{\imsi'}$};

      \path[name path=aline] (ue) -- ++(0,-3.1);

      \path[draw,very thick] (ue)  -- ++(0,-3.1) node[en](xend){};

      \path (ue) -- ++((7.7,0) node (hn) {Attacker};

      \path[name path=bline] (hn) -- ++(0,-3.1);

      \path[name path=xendp] (xend) -- ++(8,0);
      \path[name intersections={of=xendp and bline}]
      (intersection-1) node[en] (xendr){};

      \path[draw,very thick] (hn) -- (xendr) node[mp]{};

      \path (hn) -- ++(0,-0.5)
      [draw,->] -- ++(-7.7,0)
      node[midway,above]{$t_{\textsf{auth}}$}
      node[mp] (ue1) {};

      \path (ue1) -- ++(0,-0.9)
      node[mp] (ue2) {}
      [draw,->] -- ++(7.7,0)
      node[mp] (hn1) {}
      node[pos=0.55,above]{``Auth-Failure''};

      \path (ue2) -- ++(-0.5,0.7) node[mp] (ue2l){};
      \path (ue2) -- ++(7.5,0) -- ++(0.5,-0.2) node[mp] (ue2r){};
      \path (ue2l) node[outer sep=0,draw,fill=white,anchor=north west] (a)
      {\textbf{$\textsf{If }\imsi' \ne \imsi_{\textsf{t}}$}};
      \path (a.north east) -- ++ (0,-0.15) node[mp] (x){}
      (a.south west) -- ++ (0.15,0) node[mp] (y){};
      \draw[name path=p1,mb] (x.base) -| (ue2r.base) -| (y.base);

      \path[name intersections={of=p1 and aline}]
      (intersection-1) node[mp] (ue3){};

      \path (ue3) -- ++(0,-1)
      node[mp] (ue4) {}
      [draw,->] -- ++(7.7,0)
      node[mp] (hn2) {}
      node[pos=0.55,above]{$
        \lpair
        {\sqn_\ue \oplus \sxow{\nonce_\sfr}{\key}{5}}
        {\sxow{\spair{\sqn_\ue}{\nonce_\sfr}}{\key}{1}}$};

      \path (ue4) -- ++(-0.5,0.8) node[mp] (ue4l){};
      \path (ue4) -- ++(7.5,0) -- ++(0.5,-0.2) node[mp] (ue4r){};
      \path (ue4l) node[outer sep=0,draw,fill=white,anchor=north west] (a)
      {\textbf{$\textsf{If }\imsi' = \imsi_{\textsf{t}}$}};
      \path (a.north east) -- ++ (0,-0.15) node[mp] (x){}
      (a.south west) -- ++ (0.15,0) node[mp] (y){};
      \draw[mb] (x.base) -| (ue4r.base) -| (y.base);

    \end{tikzpicture}
  \end{center}

  \caption{The Failure Message Attack by~\cite{DBLP:conf/ccs/ArapinisMRRGRB12}}
  \label{fig:desync-attack}
\end{figure}

In \cite{DBLP:conf/ccs/ArapinisMRRGRB12}, Arapinis et al. propose to use an asymmetric encryption to protect against the $\imsi$-catcher attack: each $\tue$ carries the public-key of its corresponding $\thn$, and uses it to encrypt its permanent identity. This is basically the solution that was adopted by $\threegpp$ for the $\fiveg$ version of $\aka$. Interestingly, they show that this is not enough to ensure privacy, and give a linkability attack that does not rely on the identification message sent by $\tue$. While their attack is against the $\threeaka$ protocol, it is applicable to the $\fiveaka$ protocol. % The attack is depicted in Fig.~\ref{fig:desync-attack}, and described in details below.

\paragraph{The Attack}
The attack is depicted in Fig.~\ref{fig:desync-attack}, and works in two phases. First, the adversary eavesdrops a successful run of the protocol between the $\thn$ and the target $\tue$ with identity $\imsi_{\textsf{t}}$, and stores the authentication message $t_{\textsf{auth}}$ sent by $\thn$. In a second phase, the attacker $\mathcal{A}$ tries to determine whether a $\tue$ with identity $\imsi'$ is the initial $\tue$ (i.e. whether $\imsi' = \imsi_{\textsf{t}}$). To do this, $\mathcal{A}$ initiates a new session of the protocol and replays the message $t_{\textsf{auth}}$. If $\imsi' \ne \imsi_{\textsf{t}}$, then the $\lmacsym$ test fails, and $\tue_{\imsi'}$ answers ``Auth-Failure''. If $\imsi' = \imsi_{\textsf{t}}$, then the $\lmacsym$ test succeeds but the $\textsf{range}$ test fails, and $\tue_{\imsi'}$ sends a re-synchronization message. % $\spair{\sqn_\ue \oplus \sxow{\nonce_\sfr}{\key}{5}}{\sxow{\spair{\sqn_\ue}{\nonce_\sfr}}{\key}{1}}$.

The adversary can distinguish between the two messages, and therefore knows if it is interacting with the original or a different $\tue$. Moreover, the second phase of the attack can be repeated every time the adversary wants to check for the presence of the tracked user $\imsi_{\textsf{t}}$ in its vicinity.

\paragraph{Proposed Fix}
To protect against the failure message attack, the authors of~\cite{DBLP:conf/ccs/ArapinisMRRGRB12} propose that the $\tue$ encrypts \ch{both}{} error messages using the public key $\pk_\hn$ of the $\thn$\ch{, making them indistinguishable. To the adversary, there is no distinctions between}{}
an authentication and a de-synchronization failure. The fixed $\aka$ protocol, \emph{without the identifying message $\enc{\imsi}{\pk_\hn}{\enonce}$}, was formally checked in the symbolic model using the \textsc{proverif} tool. Because this message was omitted in the model, an attack was missed. We present this attack next. % in the next section.

% This prevents an adversary from distinguishing between an authentication and a de-synchronization failure. The fixed $\aka$ protocol, \emph{without the identifying message $\enc{\imsi}{\pk_\hn}{\enonce}$}, was formally checked in the symbolic model using the \textsc{proverif} tool. Because this message was omitted in the model, an attack was missed. We present this attack in the next section.

\subsection{The Encrypted $\imsi$ Replay Attack}
\label{subsec:encrpt-imsi-replay}
In~\cite{DBLP:journals/popets/FouqueOR16}, Fouque et al. give an attack against the fixed $\aka$ proposed by Arapinis et al. in~\cite{DBLP:conf/ccs/ArapinisMRRGRB12}. Their attack, described in Fig.~\ref{fig:arapinis-fixed-aka-attack}, uses the fact the identifying message $\enc{\imsi_{\textsf{t}}}{\pk_\hn}{\enonce}$ in the proposed $\aka$ protocol by Arapinis et al. can be replayed. 

\begin{figure}[tb]
  \begin{center}
    \begin{tikzpicture}[>=latex]
      \tikzset{every node/.style={font=\footnotesize}};
      \tikzset{en/.style={minimum height=0.2cm,minimum width=0.15cm,fill=black}};
      \tikzset{mb/.style={solid,thick,draw=gray!90}};
      \tikzset{mp/.style={inner sep=0,outer sep=0,draw=none}};
      \tikzset{sb/.style={draw,fill=white,double,align=left}};

      \path (0,0)
      node[above,yshift=0.1cm,anchor=south] (ue) {$\tue_{\imsi_{\textsf{t}}}$};

      \path[name path=aline] (ue) -- ++(0,-0.8);

      \path[draw,very thick] (ue)  -- ++(0,-0.8) node[mp](xend){};

      \path (7.7,0)
      node[above,yshift=0.1cm,anchor=south] (hn) {$\thn$};

      \path[name path=bline] (hn) -- ++(0,-0.8);

      \path[name path=xendp] (xend) -- ++(8,0);
      \path[name intersections={of=xendp and bline}]
      (intersection-1) node[mp] (xendr){};

      \path[draw,very thick] (hn) -- (xendr) node[mp]{};

      \path (ue) -- ++(0,-0.5)
      node[mp] (ue1) {}
      [draw,->] -- ++(7.7,0)
      node[midway,above]{$\enc{\imsi_{\textsf{t}}}{\pk_\hn}{\enonce}$};

      % Bottom part
      \path (ue1) -- ++(0,-0.7)
      node (ue) {$\tue_{\imsi'}$};

      \path (ue) -- ++(7.7,0)
      node (hn) {$\thn$};

      \path[name path=aline] (ue) -- ++(0,-3.85);

      \path[draw,very thick] (ue)  -- ++(0,-3.85) node[en](xend){};

      \path[name path=bline] (hn) -- ++(0,-3.85);

      \path[name path=xendp] (xend) -- ++(8,0);
      \path[name intersections={of=xendp and bline}]
      (intersection-1) node[en] (xendr){};

      \path[draw,very thick] (hn) -- (xendr) node[mp]{};

      \path (ue) -- ++(0,-0.5)
      node[mp] (ue1) {}
      [draw,->] -- ++(3.7,0)
      node[midway,above]{$\enc{\imsi'}{\pk_\hn}{\enonce'}$}
      node[mp] (a){};
      \path[draw=none] (a) -- ++(0.3,0)
      node[midway,transform shape,scale=1.4] (c) {$/$}
      node[mp](b){};
      \path[draw,->] (b) -- ++(3.7,0)
      node[midway,above]{$\enc{\imsi_{\textsf{t}}}{\pk_\hn}{\enonce}$}
      node[mp] (hn1) {};

      \path (hn1) -- ++(0,-0.75)
      [draw,->] -- ++(-7.7,0)
      node[midway,above]{$
        t_{\textsf{auth}} \equiv
        \triplet
        {\nonce}
        {\sqn_\hn \oplus \xow{\nonce}{\key}{5}}
        {\xow{\spair{\sqn_\hn}{\nonce}}{\key}{1}}$}
      node[mp] (ue2) {};

      \path (ue2) -- ++(0,-0.9)
      node[mp] (ue3) {}
      [draw,->] -- ++(7.7,0)
      node[mp] (hn2) {}
      node[pos=0.55,above]{Failure Message};

      \path (ue3) -- ++(-0.5,0.7) node[mp] (ue3l){};
      \path (ue3) -- ++(7.5,0) -- ++(0.5,-0.2) node[mp] (ue3r){};
      \path (ue3l) node[outer sep=0,draw,fill=white,anchor=north west] (a)
      {\textbf{$\textsf{If }\imsi' \ne \imsi_{\textsf{t}}$}};
      \path (a.north east) -- ++ (0,-0.15) node[mp] (x){}
      (a.south west) -- ++ (0.15,0) node[mp] (y){};
      \draw[name path=p1,mb] (x.base) -| (ue3r.base) -| (y.base);

      \path[name intersections={of=p1 and aline}]
      (intersection-1) node[mp] (ue4){};

      \path (ue4) -- ++(0,-1)
      node[mp] (ue5) {}
      [draw,->] -- ++(7.7,0)
      node[mp] (hn3) {}
      node[pos=0.55,above]{$
        % \triplet
        {\xow{\nonce_\sfr}{\key}{2}}
        % {\xow{\nonce_\sfr}{\key}{3}}
        % {\xow{\nonce_\sfr}{\key}{4}}
        $};

      \path (ue5) -- ++(-0.5,0.8) node[mp] (ue5l){};
      \path (ue5) -- ++(7.5,0) -- ++(0.5,-0.2) node[mp] (ue5r){};
      \path (ue5l) node[outer sep=0,draw,fill=white,anchor=north west] (a)
      {\textbf{$\textsf{If }\imsi' = \imsi_{\textsf{t}}$}};
      \path (a.north east) -- ++ (0,-0.15) node[mp] (x){}
      (a.south west) -- ++ (0.15,0) node[mp] (y){};
      \draw[mb] (x.base) -| (ue5r.base) -| (y.base);

    \end{tikzpicture}
  \end{center}

  \caption{The Encrypted $\imsi$ Replay Attack by~\cite{DBLP:journals/popets/FouqueOR16}}
  \label{fig:arapinis-fixed-aka-attack}
\end{figure}

In a first phase, the attacker $\mathcal{A}$ eavesdrops and stores the identifying message $\enc{\imsi_{\textsf{t}}}{\pk_\hn}{\enonce}$ of an honest session between the user $\tue_{\imsi_{\textsf{t}}}$ it wants to track and the $\thn$. Then, every time $\mathcal{A}$ wants to determine whether some user $\tue_{\imsi'}$ is the tracked user $\tue_{\imsi_{\textsf{t}}}$, it intercepts the identifying message $\enc{\imsi'}{\pk_\hn}{\enonce'}$ sent by $\tue_{\imsi'}$, and replaces it with the stored message $\enc{\imsi_{\textsf{t}}}{\pk_\hn}{\enonce}$. Finally, $\mathcal{A}$ lets the protocol continue without further tampering. We have two possible outcomes:
\begin{itemize}
\item If $\imsi' \ne \imsi_{\textsf{t}}$ then the message $t_{\textsf{auth}}$ sent by $\thn$ is $\lmacsym$-ed using the wrong key, and the $\tue$ rejects the message. Hence the attacker observes a failure message.
\item If $\imsi' = \imsi_{\textsf{t}}$ then $t_{\textsf{auth}}$ is accepted by $\tue_{\imsi'}$, and the attacker observes a success message.
\end{itemize}
Therefore the attacker can deduce whether it is interacting with $\tue_{\imsi_{\textsf{t}}}$ or not, which breaks unlinkability.

\subsection{Attack Against The $\privaka$ Protocol}
\label{subsection:perm-desync-attack}

The authors of~\cite{DBLP:journals/popets/FouqueOR16} then propose
\ch
{the $\privaka$ protocol, which is a significantly modified version of $\aka$.}
{their own version of the $\aka$ protocol, which they call $\privaka$.}
The authors claim that their protocol achieves authentication and client unlinkability. But we discovered a de-synchronization attack: it is possible to permanently de-synchronize the $\tue$ and the $\thn$. Our attack uses the fact that in $\privaka$, the $\thn$ sequence number is incremented only upon reception of the confirmation message from the $\tue$. Therefore, by intercepting the last message from the $\tue$, we can prevent the $\thn$ from incrementing its sequence number. We now describe the attack.

We run a session of the protocol, but we intercept the last message and store it for later use. Note that the $\thn$'s session is not closed. At that point, the $\tue$ and the $\thn$ are de-synchronized by one. We re-synchronize them by running a full session of the protocol. We then re-iterate the steps described above: we run a session of the protocol, prevent the last message from arriving at the $\thn$, and then run a full session of the protocol to re-synchronize the $\thn$ and the $\tue$. Now the $\tue$ and the $\thn$ are synchronized, and we have two stored messages, one for each uncompleted session. We then send the two messages to the corresponding $\thn$ sessions, which accept them and increment the sequence number. In the end, it is incremented by~\emph{two}.

The problem is that the $\tue$ and the $\thn$ cannot recover from a de-synchronization by two. We believe that this was missed by the authors of \cite{DBLP:journals/popets/FouqueOR16}\footnote{``the two sequence numbers may become desynchronized by one step [...]. Further desynchronization is prevented [...]'' (p. 266 \cite{DBLP:journals/popets/FouqueOR16})}. Remark that this attack is also an unlinkability attack. To attack some user $\tue_{\imsi}$'s privacy, we permanently de-synchronize it. Then each time $\tue_{\imsi}$ tries to run the $\privaka$ protocol, it will abort, which allows the adversary to track it.

\begin{remark}
  \ch{Our attack requires that the $\thn$ does not close the first session when we execute the second session. At the end of the attack, before sending the two stored messages, there are two $\thn$ sessions simultaneously opened for the same $\tue$. If the $\thn$ closes any un-finished sessions when starting a new session with the same $\tue$, our attack does not work.}{}

  \ch{But this make another unlinkability attack possible. Indeed, closing a session because of some later session between the $\thn$ and the same $\tue$ reveals a link between the two sessions. We describe the attack. First, we start a session $i$ between a user $\tue_{\agent{A}}$ and the $\thn$, but we intercept and store the last message $t_{\agent{A}}$ from the user. Then, we let the $\thn$ run a full session with some user $\tue_{\agent{X}}$. Finally, we complete the initial session $i$ by sending the stored message $t_{\agent{A}}$ to the $\thn$. Here, we have two cases. If $\agent{X} = \agent{A}$, then the $\thn$ closed the first session when it completed the second. Hence it rejects $t_{\agent{A}}$. If $\agent{X} \ne \agent{A}$, then the first session is still opened, and it accepts~$t_{\agent{A}}$.}{}

  \ch{Closing a session may leak information to the adversary. Protocols which aim at providing unlinkability must explicit when sessions can safely be closed. By default, we assume a session stays open. In a real implementation, a timeout \emph{tied to the session} (and not the user identity) could be used to avoid keeping sessions opened forever.}{}
\end{remark}
\subsection{Sequence Numbers and Unlinkability}
\label{subsection:conj-unlink}
We conjecture that it is not possible to achieve functionality (i.e. honest sessions eventually succeed), authentication and unlinkability at the same time when using a sequence number based protocol%
\ch{\ with no random number generation capabilities in the $\tue$ side.}
{.}
We briefly explain our intuition.

In any sequence number based protocol, the agents may become de-synchronized because they cannot know if their last message has been received.
% \footnote{Indeed, in an asynchronous communication system one never knows if the last message has been received.}.
Furthermore, the attacker can cause de-synchronization by blocking messages. The problem is that we have contradictory requirements. On the one hand, to ensure authentication, an agent must reject a  replayed message. On the other hand, in order to guarantee unlinkability, an honest agent has to behave the same way when receiving a message from a synchronized agent or from a de-synchronized agent. Since functionality requires that a message from a synchronized agent is accepted, it follows that a message from a de-synchronized agent must be accepted. Intuitively, it seems to us that an honest agent cannot distinguish between a protocol message which is being replayed and an honest protocol message from a de-synchronized agent. It follows that a replayed message should be both rejected and accepted, which is a contradiction.

This is only a conjecture. We do not have a formal statement, or a proof. Actually, it is unclear how to formally define the set of protocols that rely on sequence numbers to achieve authentication. Note however that all requirements can be satisfied simultaneously if we allow \emph{both} parties to generate random challenges in each session (in $\aka$, only $\thn$ uses a random challenge). Examples of challenge based unlinkable authentication protocols can be found in~\cite{DBLP:conf/sp/HirschiBD16}.

%%% Local Variables:
%%% mode: latex
%%% TeX-master: "main"
%%% End:

\section{The $\faka$ Protocol}
\label{section:faka-description}

We now describe our principal contribution, which is the design of the $\faka$ protocol. This is a fixed version of the $\fiveaka$ protocol offering some form of privacy against an \emph{active} attacker. First, we explicit the efficiency and design constraints. We then describe the $\faka$ protocol, and explain how we designed this protocol from $\fiveaka$ by fixing all the previously described attacks. As we mentioned before, we think unlinkability cannot be achieved under these constraints. Nonetheless, our protocol satisfies some weaker notion of unlinkability that we call $\sigma$-unlinkability. This is a new security property that we introduce. Finally, we will show a subtle attack, and explain how we fine-tuned $\faka$ to prevent it.

\subsection{Efficiency and Design Constraints}
We now explicit the protocol design constraints. These constraints are necessary for an efficient, in-expensive to implement and backward compatible protocol. Observe that, in a mobile setting, it is very important to avoid expensive computations as they quickly drain the $\tue$'s battery.

\paragraph{Communication Complexity}
In $\fiveaka$, authentication is achieved using only three messages: two messages are sent by the $\tue$, and one by the $\thn$. \ch{We want our protocol to have a similar communication complexity. While we did not manage to use only three messages in all scenarios, our protocol achieves authentication in less than four messages.}
{Therefore, we want our protocol to have at most three messages per session.}

\paragraph{Cryptographic primitives}
We recall that all cryptographic primitives are computed in the $\tusim$, where they are implemented in hardware. It follows that using more primitives in the $\tue$ would make the $\tusim$ more voluminous and expensive. Hence we restrict $\faka$ to the cryptographic primitives used in $\fiveaka$: we use only symmetric keyed one-way functions and asymmetric encryption. Notice that the $\tusim$ cannot do asymmetric \emph{decryption}. As in $\fiveaka$, we use some in-expensive functions, e.g. xor, pairs, by-one increments and boolean tests. We believe that relying on the same cryptographic primitives helps ensuring backward compatibility, and would simplify the protocol deployment.

\paragraph{Random Number Generation}
% The last efficiency constraint is a constraint on pseudo-random numbers generation.
In $\fiveaka$, the $\tue$ generates at most one nonce per session, which is used to randomize the asymmetric encryption. Moreover, if the $\tue$ was assigned a $\guti$ in the previous session then there is no random number generation.
% This means that, as long as the $\tue$ stays in the same \emph{Serving Network}\footnote{Indeed, the $\guti$s to $\supi$s mapping is maintained by the \emph{Serving Networks}, not the \emph{Home Networks}. When changing \emph{SN} (which happens rarely when not in a roaming situation), the $\guti$ will not be recognized by the new \emph{SN}, and the $\supi$ will be used for identification.}, it will not generate any randomness.
Remark that when the $\tue$ and the $\thn$ are de-synchronized, the authentication fails and the $\tue$ sends a re-synchronization message. % Notice that it is necessary that the authentication fails, because the $\tue$ cannot distinguish between a replayed message and a message from a de-synchronized $\thn$.
Since the session fails, no fresh $\guti$ is assigned to the $\tue$. Hence, the next session of the protocol has to conceal the $\supi$ using $\enc{\supi}{\pk_\hn}{\enonce}$, which requires a random number generation. Therefore, we constrain our protocol to use at most one random number generation by the $\tue$ per session, and only if no $\guti$ has been assigned or if the $\tue$ and the $\thn$ have been de-synchronized.

\paragraph{Summary}
We summarize the constraints for~$\faka$:
\begin{itemize}
\item \ch{It must use at most four messages per sessions.}{It must have at most three messages per session.}
\item The $\tue$ may use only keyed one-way functions and asymmetric \emph{encryption}. The $\thn$ may use these functions, plus asymmetric \emph{decryption}.
\item The $\tue$ may generate at most one random number per session, and only if no $\guti$ is available, or if re-synchronization with the $\thn$ is necessary.
% \item $\faka$ should be as close to possible to $\fiveaka$.
\end{itemize}

\subsection{Key Ideas}
In this section, we present the two key ideas used in the design of the $\faka$ protocol.

\paragraph{Postponed Re-Synchronization Message}
We recall that whenever the $\tue$ and the $\thn$ are de-synchronized, the authentication fails and the $\tue$ sends a re-synchronization message. The problem is that this message can be distinguished from a $\lmacsym$ failure message, which allows the attack presented in Section~\ref{subsec:desync-attack}. Since the session fails, no $\guti$ is assigned to the $\tue$, and  the next session will use the asymmetric encryption to conceal the $\supi$. The first key idea is to piggy-back on the randomized encryption of the \emph{next session} to send a concealed re-synchronization message. More precisely, we replace the message $\enc{\supi}{\pk_\hn}{\enonce}$ by $\enc{\spair{\supi}{\sqn_\ue}}{\pk_\hn}{\enonce}$. This has several advantages:
\begin{itemize}
\item We can remove the re-synchronization message that lead to the unlinkability attack presented in Section~\ref{subsec:desync-attack}. In $\faka$, whenever the $\lmacsym$ check or the range check fails, the same failure message is sent.
\item This does not require more random number generation by the $\tue$, since a random number is already being generated to conceal the $\supi$ in the next session.
\end{itemize}
The $\threegpp$ technical specification (see~\cite{ts33501}, Annex C) requires that the asymmetric encryption used in the $\fiveaka$ protocol is the \textsc{ecies} encryption scheme, which is an hybrid encryption scheme. Hybrid encryption schemes use a randomized asymmetric encryption to conceal a temporary key. This key is then used to encrypt the message using a symmetric encryption, which is in-expensive. Hence encrypting the pair $\spair{\supi}{\sqn_\ue}$ is almost as fast as encrypting only $\supi$, and requires the $\tue$ to generate the same amount of randomness.

% In~\cite{DBLP:conf/ccs/ArapinisMRRGRB12}, the authors encrypt the $\tue$'s error messages using the public key $\pk_\hn$ of the $\thn$. Our method is more efficient, as it saves one random number generation and one asymmetric encryption.

\paragraph{$\thn$ Challenge Before Identification}
To prevent the Encrypted $\imsi$ Replay Attack of Section~\ref{subsec:encrpt-imsi-replay}, we add a random challenge $\nonce$ from the $\thn$.
\ch
{The $\tue$ initiates the protocol by requesting a challenge without identifying itself. When requested, the $\thn$ generates and sends a fresh challenge $\nonce$ to the $\tue$, which includes it in its response}
{The challenge is sent before the $\tue$ identifies itself, and is included in the $\tue$'s response}
by $\lmacsym$-ing it with the $\supi$ using a symmetric one-way function $\macsym^1$ with key $\mkey^\ID$. The $\tue$ response is~now:
\[
  \lpair
  {\enc{\pair{\agent{\supi}}{\sqn_\ue}}
    {\pk_\hn}{\enonce}}
  {\mac{\spair
      {\enc{\pair{\agent{\supi}}{\sqn_\ue}}
        {\pk_\hn}{\enonce}}
      {\nonce}}
    {\mkey^\ID}{1}}
\]
This challenge is only needed when the encrypted permanent identity is used. If the $\tue$ uses a temporary identity $\guti$, then we do not need to use a random challenge. Indeed, temporary identities can only be used once before being discarded, and are therefore not subject to replay attacks. By consequence we split the protocol in two sub-protocols:
\begin{itemize}
\item The $\supi$ sub-protocol
  \ch
  {uses a random challenge from the $\thn$, encrypts the}
  {is initiated by the $\thn$ with a random challenge, uses the encrypted}
  permanent identity and allows to re-synchronize the $\tue$ and the $\thn$.
\item The $\guti$ sub-protocol is initiated by the $\tue$ using a temporary identity.
\end{itemize}
In the $\supi$ sub-protocol, the $\tue$'s answer includes the challenge. We use this to save one message: the last confirmation step from the $\tue$ is not needed, and is removed.
\ch
{The resulting sub-protocol has four messages. Observe that the $\guti$ sub-protocol is faster, since it uses only three messages.}
{As wanted, the resulting sub-protocol has only three messages.}

\subsection{Architecture and States}
\begin{figure}[tb]
  \begin{center}
    \begin{tikzpicture}[>=latex, every node/.style={font={\footnotesize}}]
      \tikzset{mn/.style={draw=gray!80,line width=1.6pt,rounded corners, minimum width=2.5cm, minimum height=0.7cm}}

      \node[mn] (a) at (0,0) {
        {$\supi$ Sub-Protocol}
        % permanent identity\\
        % re-synchronize\\
        % more expensive
      };

      \node[mn] (b) at (3.6,0) {
        {$\guti$ Sub-Protocol}% \\
        % temporary identity\\
        % requires synchronization\\
        % less expensive
      };

      \node[mn] (c) at (1.8,-1.3) {
        {$\refresh$ Sub-Protocol}};

      \draw[->] (a) -- (c);
      \draw[->] (b) -- (c);
    \end{tikzpicture}
  \end{center}
  \caption{General Architecture of the $\faka$ Protocol}
  \label{fig:faka-full-p-stack}
\end{figure}

%%% Local Variables:
%%% mode: latex
%%% TeX-master: "main"
%%% End:

Instead of a monolithic protocol, we have three sub-protocols: the $\supi$ and $\guti$ sub-protocols, which handle authentication; and the $\refresh$ sub-protocol, which is run after authentication has been achieved and assigns a fresh temporary identity to the $\tue$. A full session of the $\faka$ protocol comprises a session of the $\supi$ or $\guti$ sub-protocols, followed by a session of the $\refresh$ sub-protocol. This is graphically depicted in Fig.~\ref{fig:faka-full-p-stack}.

\ch
{Since the $\guti$ sub-protocol uses only three messages and does not require the $\tue$ to generate a random number or compute an asymmetric encryption,}
{Since the $\guti$ sub-protocol does not require the $\tue$ to compute asymmetric encryption or to do random number generation,}
it is faster than the $\supi$ sub-protocol. By consequence, the $\tue$ should always use the $\guti$ sub-protocol if it has a temporary identity available.

The $\thn$ runs concurrently an arbitrary number of sessions, but a subscriber cannot run more than one session at the same time. Of course, sessions from \emph{different} subscribers may be concurrently running. We associate a unique integer, the session number, to every session, and we use $\thn(j)$ and $\tue_\ID(j)$ to refer to the $j$-th session of, respectively, the $\thn$ and the $\tue$ with identity $\ID$.

\paragraph{One-Way Functions}
We separate functions that are used only for confidentiality from functions that are also used for integrity. We have two confidentiality functions $\owsym$ and $\rowsym$, which use the key $\key$, and five integrity functions $\macsym^1$--$\,\macsym^5$, which use the key $\mkey$. We require that $\owsym$ and $\rowsym$ (resp. $\macsym^1$--$\,\macsym^5$) satisfy jointly the $\prfass$ assumption.

\ch{This is a new assumption, which requires that these functions are \emph{simultaneously} computationally indistinguishable from random functions.}{}
\ch{\begin{definition}[Jointly $\prfass$ Functions]
  Let $H_1(\cdot,\cdot),\dots,H_n(\cdot,\cdot)$ be a finite family of keyed hash functions from $\{0,1\}^*\times\{0,1\}^\eta$ to $\{0,1\}^\eta$. The functions $H_1,\dots,H_n$ are \emph{Jointly Pseudo Random Functions} if, for any PPTM adversary $\cala$ with access to oracles ${\mathcal{O}}_{f_1},\dots,{\mathcal{O}}_{f_n}$:
  \begin{multline*}
    |\Pr(k: \; {\cala}^{{\calo}_{H_1(\cdot,k)},\dots,{\calo}_{H_n(\cdot,k)}}(1^\eta)=1)
    -\\
    \Pr(g_1,\dots,g_n:\;
    {\cala}^{{\calo}_{g_1(\cdot)},\dots,{\calo}_{g_n(\cdot)}}(1^\eta)=1)|
  \end{multline*}
  is negligible, where:
  \begin{itemize}
  \item $k$ is drawn uniformly in $\{0,1\}^\eta$.
  \item $g_1,\dots,g_n$ are drawn uniformly in the set of all functions from $\{0,1\}^*$ to $\{0,1\}^\eta$.
  \end{itemize}
\end{definition}}{}
\ch{Observe that if $H_1,\dots,H_n$ are jointly $\prfass$ then, in particular, every individual $H_i$ is a $\prfass$.}{}
\ch{\begin{remark}
While this is a non-usual assumption, it is simple to build a set of functions $H_1,\dots,H_n$ which are jointly $\prfass$ from a single $\prfass$ $H$. For example, let $\ttag_1,\dots,\ttag_n$ be non-ambiguous tags, and let $H_i(m,\key) = H(\ttag_i(m),\key)$. Then, $H_1,\dots,H_n$ are jointly $\prfass$ whenever $H$ is a $\prfass$ (see %
\iffull
Appendix~\ref{app:subsection-joints-prf}).
\else
\cite{DBLP:journals/corr/abs-1811-06922}).
\fi
\end{remark}}{}
\paragraph{$\tue$ Persistent State}
Each $\tue_\ID$ with identity $\ID$ has a state $\textsf{state}_\ue^\ID$ persistent across sessions. It contains the following immutable values: the permanent identity $\supi = \ID$, the confidentiality key $\key^\ID$, the integrity key $\mkey^\ID$ and the $\thn$'s public key $\pk_\hn$. The states also contain mutable values: the sequence number $\sqn_\ue$, the temporary identity $\guti_\ue$ and the boolean $\success_\ue$. We have $\success_\ue = \false$ whenever no valid temporary identity is assigned to the $\tue$. Finally, there are mutable values that are not persistent across sessions. E.g. $\bauth_\ue$ stores $\thn$'s random challenge, and $\eauth_\ue$ stores $\thn$'s random challenge \emph{when the authentication is successful}.

\paragraph{$\thn$ Persistent State}
The $\thn$ state $\textsf{state}_\hn$ contains the secret key $\sk_\hn$ corresponding to the public key $\pk_\hn$. Also, for every subscriber with identity $\ID$, it stores the keys $\key^\ID$ and $\mkey^\ID$, the permanent identity $\supi = \ID$, the $\thn$ version of the sequence number $\sqn_\hn^\ID$ and the temporary identity $\guti_\hn^\ID$. It stores in $\tsuccess_\hn^\ID$ the random challenge of the last session that was either a successful $\supi$ session which modified the sequence number, or a $\guti$ session which authenticated $\ID$. This is used to detect and prevent some subtle attacks, which we present later. Finally, every session $\thn(j)$ stores in $\bauth_\hn^j$ the identity claimed by the $\tue$, and in $\eauth_\hn^j$ the identity of the $\tue$ it authenticated.

\begin{figure}[tb]
  \begin{center}
    \begin{tikzpicture}[>=latex]
      \tikzset{every node/.style={font=\footnotesize}};
      \tikzset{en/.style={minimum height=0.2cm,minimum width=0.15cm,fill=black}};
      \tikzset{mb/.style={solid,thick,draw=gray!90}};
      \tikzset{mp/.style={inner sep=0,outer sep=0,draw=none}};
      \tikzset{sb/.style={draw,fill=white,double,align=left}};

      \path (0,0)
      node[above,yshift=0.1cm,anchor=south] (ue) {$\tue$}
      node[draw,name path=suen,double=gray!80,double distance=0.035cm,below,
      anchor=north west,xshift=-0.5cm,align=left] (sue)
      {$\textsf{state}_\ue^\ID$};

      \path[name path=aline] (ue) -- ++(0,-10.6);

      \path[draw,very thick,name intersections={of=suen and aline}]
      (intersection-2) node[inner sep=0] (tue) {}
      -- ++(0,-9.65) node[en](xend){};

      \path (7.7,0)
      node[above,yshift=0.1cm,anchor=south] (hn) {$\thn(j)$}
      node[double=gray!80,draw,name path=shnn,double distance=0.035cm,below,
      anchor=north east,xshift=0.5cm,align=left] (shn)
      {$\textsf{state}_\hn$};

      \path[name path=bline] (hn) -- ++(0,-10.6);

      \path[name path=xendp] (xend) -- ++(8,0);
      \path[name intersections={of=xendp and bline}]
      (intersection-1) node[mp] (xendr){};
      \path[draw,very thick,name intersections={of=shnn and bline}]
      (intersection-2) node[inner sep=0] (thn) {}
      -- (xendr) node[en]{};

      \path[black] (tue) -- ++(0,-0.3)
      [draw,->% ,blue
      ] -- ++(7.7,0)
      node[midway,above]{$\ch{\rchallenge}{}$}
      node[mp] (hnpre0) {};

      \path (hnpre0) -- ++(0,-0.6)
      [draw,->] -- ++(-7.7,0)
      node[midway,above]{$\nonce^j$}
      node[mp] (ue1) {};

      \path (ue1) -- ++(0,-0.3)
      node[sb,name path=pue1,anchor=north west,xshift=-0.5cm] (c) {
        \textbf{\underline{Input $\nonce_\sfr$}:} % \\[0.15em]
        $\bauth_\ue \la \nonce_\sfr$};

      \path[name intersections={of=aline and pue1}]
      (intersection-2) node[mp] (ue2){};

      \path (ue2) -- ++(0,-0.8)
      node[mp] (ue2bs){}
      [draw,->] -- ++(7.7,0)
      node[midway,above]{$
        \lpair
        {\enc{\pair{\agent{\supi}}{\sqn_\ue}}
          {\pk_\hn}{\enonce}}
        {\mac{\spair
            {\enc{\pair{\agent{\supi}}{\sqn_\ue}}
              {\pk_\hn}{\enonce}}
            {\nonce_\sfr}}
          {\mkey^\ID}{1}}$}
      node[mp] (hn3) {};

      \path (ue2bs) -- ++(0,-0.3)
      node[sb,name path=pue2bs,anchor=north west,xshift=-0.5cm] (c) {
        $\sqn_\ue \la \sqn_\ue + 1$};

      \path (hn3) -- ++(0,-0.3)
      node[sb,name path=phn3,anchor=north east,xshift=0.5cm] (c) {
        \textbf{\underline{Input $\sfy$:}}\\[0.15em]
        $\pair{\ID_\sfr}{\sqn_\sfr} \la
        \dec(\pi_1(\sfy),\sk_\hn)$\\[0.25em]
        $\sfb^\ID_\macsym \la
        \begin{alignedat}[t]{1}
          &
          \pi_2(\sfy) =
          \mac{\spair
              {\pi_1(\sfy)}
              {\nonce^j}}
            {\mkey^\ID}{1}\\[-0.4em]
          &\wedge \ID_\sfr = \ID
        \end{alignedat}$\\[0.25em]
        $\sfbinc^\ID \la
        \sfb_\macsym^\ID \wedge \sqn_\sfr \ge \sqn^\ID_\hn$\\[0.25em]
        $\!
        \begin{alignedat}{4}
          &\textsf{if }\sfb^\ID_\macsym
          &&\textsf{ then }\;&&\bauth^j_\hn,\eauth^j_\hn &\;\la\;& \ID
        \end{alignedat}$\\[0.3em]
        $\!
        \begin{alignedat}{4}
          &\textsf{if }\sfbinc^\ID
          &&\textsf{ then }\;&&\sqn^\ID_\hn &\;\la\;& \sqn_\sfr + 1\\[-0.3em]
          &&&&&\tsuccess_\hn^\ID &\;\la\;& \nonce^j\\[-0.3em]
          &&&&&\guti^\ID_\hn &\;\la\;& \guti^j
        \end{alignedat}$};

      \path[name intersections={of=bline and phn3}]
      (intersection-2) node[mp] (hn4){};

      \path (hn4) -- ++(0,-1.1)
      node[mp] (hn5) {}
      [draw,->] -- ++(-7.7,0)
      node[mp] (ue4) {}
      node[midway,above]{$
        \mac{\spair{\nonce^j}{\sqn_\sfr + 1}}{\mkey^\ID}{2}$};

      \path (hn5) -- ++(0.5,0.9) node[mp] (hn5l){};
      \path (hn5) -- ++(-7.5,0) -- ++(-0.5,-0.2) node[mp] (hn5r){};
      \path (hn5l) node[outer sep=0,draw,fill=white,anchor=north east] (a)
      {\textbf{$\sfb_\macsym$}};
      \path (a.north west) -- ++ (0,-0.15) node[mp] (x){}
      (a.south east) -- ++ (-0.15,0) node[mp] (y){};
      \draw[mb] (x.base) -| (hn5r.base) -| (y.base);

      \path (ue4) -- ++(0,-0.4)
      node[sb,name path=pue4,anchor=north west,xshift=-0.5cm] (c) {
        \textbf{\underline{Input $\sfz$:}}\\[0.25em]
        $\sfb_{\textsf{ok}} \la
        \sfz = \mac{\spair{\bauth_\ue}{\sqn_\ue}}{\mkey^\ID}{2}$\\[0.25em]
        $\eauth_\ue \la \ite{\sfb_{\textsf{ok}}}{\bauth_\ue}{\fail}$};
    \end{tikzpicture}
  \end{center}
  \caption{The $\supi$ Sub-Protocol of the $\faka$ Protocol}
  \label{fig:faka-supi}
\end{figure}

%%% Local Variables:
%%% mode: latex
%%% TeX-master: "main"
%%% End:

\subsection{The $\supi$, $\guti$ and $\refresh$ Sub-Protocols}
We describe honest executions of the three sub-protocols of the $\faka$ protocol. An honest execution is an execution where the adversary dutifully forwards the messages without tampering. Each execution is between a $\tue$ and $\thn(j)$.

\paragraph{The $\supi$ Sub-Protocol}
This protocol uses the $\tue$'s permanent identity, re-synchronizes the $\tue$ and the $\thn$ and is expensive to run. The protocol is sketched in Fig.~\ref{fig:faka-supi}. % and is informally described below. 

\ch
{The $\tue$ initiates the protocol by requesting a challenge from the network. When asked, $\thn(j)$ sends a fresh random challenge $\nonce^j$. After}
{$\thn(j)$ initiates the protocol by sending a random challenge $\nonce^j$. After}
receiving $\nonce^j$, the $\tue$ stores it in $\bauth_\ue$, and answers with the encryption of its permanent identity together with the current value of its sequence number, using the $\thn$ public key $\pk_\hn$. It also includes the $\lmacsym$ of this encryption and of the\ch{}{ random} challenge, which yields the message:
\[
  \lpair
  {\enc{\pair{\agent{\supi}}{\sqn_\ue}}
    {\pk_\hn}{\enonce}}
  {\mac{\spair
      {\enc{\pair{\agent{\supi}}{\sqn_\ue}}
        {\pk_\hn}{\enonce}}
      {\nonce^j}}
    {\mkey^\ID}{1}}
\]
Then the $\tue$ increments its sequence number by one. When it gets this message, the $\thn$ retrieves the pair $\spair{\supi}{\sqn_\ue}$ by decrypting the encryption using its secret key $\sk_\hn$. For every identity $\ID$, it checks if $\supi = \ID$ and if the $\lmacsym$ is correct. If this is the case, $\thn$ authenticated $\ID$, and it stores $\ID$ in $\bauth_\hn^j$ and $\eauth_\hn^j$. After having authenticated $\ID$, $\thn$ checks whether the sequence number $\sqn_\ue$ it received is greater than or equal to $\sqn_\hn^\ID$. If this holds, it sets $\sqn_\hn^\ID$ to $\sqn_\ue + 1$, stores $\nonce^j$ in $\tsuccess_\hn^\ID$, generates a fresh temporary identity $\guti^j$ and stores it into $\guti_\hn^\ID$. This additional check ensures that the $\thn$ sequence number is always increasing, which is a crucial property of the protocol.

If the $\thn$ authenticated $\ID$, it sends a confirmation message $\mac{\spair{\nonce^j}{\sqn_\ue + 1}}{\mkey^\ID}{2}$ to the $\tue$. This message is sent even if the received sequence number $\sqn_\ue$ is smaller than $\sqn_\hn^\ID$. When receiving the confirmation message, if the $\lmacsym$ is valid then the $\tue$ authenticated the $\thn$, and it stores in $\eauth_\ue$ the initial random challenge (which it keeps in $\bauth_\ue$). If the $\lmacsym$ test fails, it stores in $\eauth_\ue$ the special value $\fail$.

% the $\tue$ checks if the $\lmacsym$ is correct. If this is true then $\tue$ authenticated the $\thn$, and it stores in $\eauth_\ue$ the initial random challenge (which was kept in $\bauth_\ue$). If the $\lmacsym$ test fails, it stores in $\eauth_\ue$ the special value $\fail$.

\begin{figure}[tb]
  \begin{center}
    \begin{tikzpicture}[>=latex]
      \tikzset{every node/.style={font=\footnotesize}};
      \tikzset{en/.style={minimum height=0.2cm,minimum width=0.15cm,fill=black}};
      \tikzset{mb/.style={solid,thick,draw=gray!90}};
      \tikzset{mp/.style={inner sep=0,outer sep=0,draw=none}};
      \tikzset{sb/.style={draw,fill=white,double,align=left}};

      \path (0,0)
      node[above,yshift=0.1cm,anchor=south] (ue) {$\tue$}
      node[draw,name path=suen,double=gray!80,double distance=0.035cm,below,
      anchor=north west,xshift=-0.5cm,align=left] (sue)
      {$\textsf{state}_\ue^\ID$};
      % {$\supi,\key,\mkey,\pk_\hn,\guti_\ue$\\
      %   $\bauth_\ue,\sqn_\ue$};

      \path[name path=aline] (ue) -- ++(0,-12.9);

      \path[draw,very thick,name intersections={of=suen and aline}]
      (intersection-2) node[inner sep=0] (tue) {}
      -- ++(0,-12.6) node[en](xend){};

      \path (7.7,0)
      node[above,yshift=0.1cm,anchor=south] (hn) {$\thn(j)$}
      node[double=gray!80,draw,name path=shnn,double distance=0.035cm,below,
      anchor=north east,xshift=0.5cm,align=left] (shn)
      {$\textsf{state}_\hn$};
      % {$(\ID,\guti^\ID_\hn,\key^\ID,\mkey^\ID,\sqn^\ID_\hn)_{\ID \in \iddom},
      % \sk_\hn$};

      \path[name path=bline] (hn) -- ++(0,-13.6);

      \path[name path=xendp] (xend) -- ++(8,0);
      \path[name intersections={of=xendp and bline}]
      (intersection-1) node[mp] (xendr){};
      \path[draw,very thick,name intersections={of=shnn and bline}]
      (intersection-2) node[inner sep=0] (thn) {}
      -- (xendr) node[en]{};

      \path (tue) -- ++(0,-0.9)
      node[mp] (uetemp) {}
      [draw,->] -- ++(7.7,0)
      node[midway,above]{$\guti_\ue$}
      node[mp] (hn1) {};

      \path (uetemp) -- ++(-0.5,0.7) node[mp] (uetempl){};
      \path (uetemp) -- ++(7.5,0) -- ++(0.5,-0.2) node[mp] (uetempr){};
      \path (uetempl) node[outer sep=0,draw,fill=white,anchor=north west] (a)
      {\textbf{$\success_\ue$}};
      \path (a.north east) -- ++ (0,-0.15) node[mp] (x){}
      (a.south west) -- ++ (0.15,0) node[mp] (y){};
      \draw[mb] (x.base) -| (uetempr.base) -| (y.base);

      \path (uetemp) -- ++(0,-0.5)
      node[sb,name path=puetemp,anchor=north west,xshift=-0.5cm] (c)
      {$\success_\ue \la \false$};

      \path (hn1) -- ++(0,-0.5)
      node[sb,name path=phn1,anchor=north east,xshift=0.5cm] (c) {
        \textbf{\underline{Input $\sfx$}:}\\[0.15em]
        $\sfb_\ID \la \guti_\hn^\ID = \sfx \wedge
        \guti_\hn^\ID \ne \unset$\\[0.3em]
        $\!
        \begin{alignedat}{4}
          &\textsf{if }\sfb_\ID&&\textsf{ then }\;&&
          \guti_\hn^\ID &\;\la\;& \unset\\[-0.4em]
          &&&&&\bauth^j_\hn &\;\la\;& \ID\\[-0.4em]
          &&&&&\tsuccess_\hn^\ID &\;\la\;& \nonce^j
        \end{alignedat}$};

      \path[name intersections={of=bline and phn1}]
      (intersection-2) node[mp] (hn2){};

      \path (hn2) -- ++(0,-1.2)
      node[mp] (hn3) {}
      [draw,->] -- ++(-7.7,0)
      node[mp] (ue2) {}
      node[midway,above]{$
        \triplet
        {\nonce^j}
        {\sqn^\ID_\hn \oplus \ow{\nonce^j}{\key^\ID}}
        {\mac{\striplet
            {\nonce^j}{\sqn^\ID_\hn}{\guti_\hn^\ID}}
          {\mkey^\ID}{3}}$};

      \path (hn3) -- ++(0.5,1) node[mp] (hn3l){};
      \path (hn3) -- ++(-7.5,0) -- ++(-0.5,-0.2) node[mp] (hn3r){};
      \path (hn3l) node[outer sep=0,draw,fill=white,anchor=north east] (a)
      {\textbf{$\sfb_\ID$}};
      \path (a.north west) -- ++ (0,-0.15) node[mp] (x){}
      (a.south east) -- ++ (-0.15,0) node[mp] (y){};
      \draw[mb] (x.base) -| (hn3r.base) -| (y.base);

      \path (ue2) -- ++(0,-0.5)
      node[sb,name path=pue2,anchor=north west,xshift=-0.5cm] (c) {
        \textbf{\underline{Input $\sfy$}:}\\[0.15em]
        $\nonce_\sfr,\,\sqn_\sfr \la
        \pi_1(\sfy),\,\pi_2(\sfy) \oplus \ow{\nonce_\sfr}{\key^\ID}$\\[0.15em]
        $\sfb_\acc \la
        \begin{alignedat}[t]{1}
          &\pi_3(\sfy) =
          \mac{\striplet
            {\nonce_\sfr}
            {\sqn_\sfr}
            {\guti_\ue}}{\mkey^\ID}{3})\\[-0.3em]
          & \wedge \range{\sqn_\ue}{\sqn_\sfr}
        \end{alignedat}$\\[0.3em]
        $\!
        \begin{alignedat}{4}
          &\textsf{if }\sfb_\acc&&\textsf{ then }\;&&
          \bauth_\ue,\eauth_\ue &\;\la\;& \nonce_\sfr\\[-0.3em]
          &&&&&\sqn_\ue &\;\la\;& \sqn_\ue + 1
        \end{alignedat}$\\[0.2em]
        $\!\begin{alignedat}{4}
          &\textsf{if }\neg \sfb_\acc&&\textsf{ then }\;&&
          \bauth_\ue,\eauth_\ue &\;\la\;& \fail
        \end{alignedat}$};

      \path[name intersections={of=aline and pue2}]
      (intersection-2) node[mp] (ue3){};

      \path (ue3) -- ++(0,-1.2)
      node[mp] (ue4) {}
      [draw,->] -- ++(7.7,0)
      node[mp] (hn4) {}
      node[midway,above]{$
        \mac{\nonce_\sfr}{\mkey^\ID}{4}$};

      \path (ue4) -- ++(-0.5,1) node[mp] (ue4l){};
      \path (ue4) -- ++(7.5,0) -- ++(0.5,-0.2) node[mp] (ue4r){};
      \path (ue4l) node[outer sep=0,draw,fill=white,anchor=north west] (a)
      {\textbf{$\sfb_\acc$}};
      \path (a.north east) -- ++ (0,-0.15) node[mp] (x){}
      (a.south west) -- ++ (0.15,0) node[mp] (y){};
      \draw[mb] (x.base) -| (ue4r.base) -| (y.base);

      \path (hn4) -- ++(0,-0.5)
      node[sb,name path=pue2,anchor=north east,xshift=0.5cm] (c) {
        \textbf{\underline{Input $\sfz$}:}\\[0.25em]
        $\sfb^\ID_\macsym \la (\bauth_\hn^j = \ID) \wedge
        (\sfz =
        \mac{\nonce^j}{\mkey^\ID}{4})$\\[0.25em]
        $\sfbinc^\ID \la
        \sfb^\ID_\macsym \wedge \tsuccess_\hn^\ID = \nonce^j$\\[0.25em]
        $\!
        \begin{alignedat}{4}
          &\textsf{if }\sfb^\ID_\macsym &&\textsf{ then }\;&&
          \eauth^j_\hn &\;\la\;& \ID\\
          &\textsf{if }\sfbinc^\ID&&\textsf{ then }\;&&
          \sqn^\ID_\hn &\;\la\;& \sqn^\ID_\hn + 1\\[-0.3em]
          &&&&& \guti^\ID_\hn &\;\la\;& \guti^j
        \end{alignedat}$};
    \end{tikzpicture}
  \end{center}
  \caption{The $\guti$ Sub-Protocol of the $\faka$ Protocol}
  \label{fig:faka-guti}
\end{figure}

%%% Local Variables:
%%% mode: latex
%%% TeX-master: "main"
%%% End:

\paragraph{The $\guti$ Sub-Protocol}
This protocol uses the $\tue$'s temporary identity, requires synchronization to succeed and is inexpensive.  The protocol is sketched in Fig.~\ref{fig:faka-guti}. 

When $\success_\ue$ is true, the $\tue$ can initiate the protocol by sending its temporary identity $\guti_\ue$. The $\tue$ then sets $\success_\ue$ to $\false$ to guarantee that this temporary identity is not used again. When receiving a temporary identity $\sfx$, $\thn$ looks if there is an $\ID$ such that $\guti_\hn^\ID$ is equal to $\sfx$ and is not $\unset$. If the temporary identity belongs to $\ID$, it sets $\guti_\hn^\ID$ to $\unset$ and stores $\ID$ in $\bauth_\hn^j$. Then it generates a random challenge $\nonce^j$, stores it in $\tsuccess_\hn^\ID$, and sends it to the $\tue$, together with the xor of the sequence number $\sqn_\hn^\ID$ with $\ow{\nonce^j}{\key^\ID}$, and a $\lmacsym$:
\[
  \ltriplet
  {\nonce^j}
  {\sqn^\ID_\hn \oplus \ow{\nonce^j}{\key^\ID}}
  {\mac{\striplet
      {\nonce^j}{\sqn^\ID_\hn}{\guti_\hn^\ID}}
    {\mkey^\ID}{3}}
\]
When it receives this message, the $\tue$ retrieves the challenge $\nonce^j$ at the beginning of the message, computes $\ow{\nonce^j}{\key^\ID}$ and uses this value to unconceal the sequence number $\sqn_\hn^\ID$. It then computes $\mac{\striplet{\nonce^j}{\sqn_\hn^\ID}{\guti_\ue}}{\mkey^\ID}{3}$ and compares it to the $\lmacsym$ received from the network. If the $\lmacsym$s are not equal, or if the range check $\range{\sqn_\ue}{\sqn_\hn^\ID}$ fails, it puts $\fail$ into $\bauth_\ue$ and $\eauth_\ue$ to record that the authentication was not successful. If both tests succeed, it stores in $\bauth_\ue$ and $\eauth_\ue$ the random challenge, increments $\sqn_\ue$ by one and sends the confirmation message $\mac{\nonce^j}{\mkey^\ID}{4}$. When receiving this message, the $\thn$ verifies that the $\lmacsym$ is correct. If this is the case then the $\thn$ authenticated the $\tue$, and stores $\ID$ into $\eauth_\hn^\ID$. Then, $\thn$ checks whether $\tsuccess_\hn^\ID$ is still equal to the challenge $\nonce^j$ stored in it at the beginning of the session. If this is true, the $\thn$ increments $\sqn_\hn^\ID$ by one, generates a fresh temporary identity $\guti^j$ and stores it into $\guti_\hn^\ID$.

\paragraph{The $\refresh$ Sub-Protocol}
The $\refresh$ sub-protocol is run after a successful authentication, regardless of the authentication sub-protocol used. It assigns a fresh temporary identity to the $\tue$ to allow the next $\faka$ session to run the faster $\guti$ sub-protocol. It is depicted in Fig.~\ref{fig:faka-refresh}.

The $\thn$ conceals the temporary identity $\guti^j$ generated by the authentication sub-protocol by xoring it with $\row{\nonce^j}{\key^\ID}$, and $\lmacsym$s it.
% It also $\lmacsym$s this message by computing $\mac{\spair{\guti^j}{\nonce^j}}{\mkey}{5}$, and then sends the message:
% \[
%   \lpair
%   {\guti^j \oplus \row{\nonce^j}{\key^\ID}}
%   {\mac{\spair{\guti^j}{\nonce^j}}{\mkey}{5}}
% \]
When receiving this message, $\tue$ unconceals the temporary identity $\guti_\hn^\ID$ by xoring its first component with $\row{\eauth_\ue}{\mkey^\ID}$ (since $\eauth_\ue$ contains the $\thn$'s challenge after authentication). Then $\tue$ checks that the $\lmacsym$ is correct and that the authentication was successful. If it is the case, it stores $\guti_\hn^\ID$ in $\guti_\ue$ and sets $\success_\ue$ to $\true$.

\begin{figure}[tb]
  \begin{center}
    \begin{tikzpicture}[>=latex]
      \tikzset{every node/.style={font=\footnotesize}};
      \tikzset{en/.style={minimum height=0.2cm,minimum width=0.15cm,fill=black}};
      \tikzset{mb/.style={solid,thick,draw=gray!90}};
      \tikzset{mp/.style={inner sep=0,outer sep=0,draw=none}};
      \tikzset{sb/.style={draw,fill=white,double,align=left}};
      
      \path (0,0)
      node[above,yshift=0.1cm,anchor=south] (ue) {$\tue$}
      node[draw,name path=suen,double=gray!80,double distance=0.035cm,below,
      anchor=north west,xshift=-0.5cm,align=left] (sue)
      {$\textsf{state}_\ue^\ID$};
      % {$\supi,\key,\mkey,\pk_\hn,\guti_\ue$\\
      % $\bauth_\ue,\sqn_\ue$};

      \path[name path=aline] (ue) -- ++(0,-4.85);
      
      \path[draw,very thick,name intersections={of=suen and aline}]
      (intersection-2) node[inner sep=0] (tue) {}
      -- ++(0,-4.45) node[en](xend){};
      
      \path (7.7,0)
      node[above,yshift=0.1cm,anchor=south] (hn) {$\thn(j)$}
      node[double=gray!80,draw,name path=shnn,double distance=0.035cm,below,
      anchor=north east,xshift=0.5cm,align=left] (shn)
      {$\textsf{state}_\hn$};
      % {$(\ID,\guti^\ID_\hn,\key^\ID,\mkey^\ID,\sqn^\ID_\hn)_{\ID \in \iddom},\sk_\hn$};

      \path[name path=bline] (hn) -- ++(0,-5.35);

      \path[name path=xendp] (xend) -- ++(8,0);
      \path[name intersections={of=xendp and bline}]
      (intersection-1) node[mp] (xendr){};
      \path[draw,very thick,name intersections={of=shnn and bline}]
      (intersection-2) node[inner sep=0] (thn) {}
      -- (xendr) node[en]{};
      
      \path (thn) -- ++(0,-1.2)
      node[mp] (hn1) {}
      [draw,->] -- ++(-7.7,0)
      node[pos=0.6,above]{$
        \spair
        {\guti^j \oplus \row{\nonce^j}{\key^\ID}}
        {\mac{\pair{\guti^j}{\nonce^j}}{\mkey^\ID}{5}}$}
      node[mp] (ue1) {};

      \path (hn1) -- ++(0.5,0.9) node[mp] (hn1l){};
      \path (hn1) -- ++(-7.5,0) -- ++(-0.5,-0.2) node[mp] (hn1r){};
      \path (hn1l) node[outer sep=0,draw,fill=white,anchor=north east] (a)
      {\textbf{$\eauth_\hn^\ID = \ID$}};
      \path (a.north west) -- ++ (0,-0.15) node[mp] (x){}
      (a.south east) -- ++ (-0.15,0) node[mp] (y){};
      \draw[mb] (x.base) -| (hn1r.base) -| (y.base);

      \path (ue1) -- ++(0,-0.5)
      node[sb,name path=pue1,anchor=north west,xshift=-0.5cm] (c) {
        \textbf{\underline{Input $\sfx$}:}\\[0.15em]
        $\guti_\sfr \la \pi_1(\sfx) \oplus \row{\eauth_\ue}{\mkey^\ID}$\\[0.25em]
        $\sfb_\acc \la
        \begin{alignedat}[t]{1}
          &\big(\pi_2(\sfx) =
          \mac{\spair{\guti_\sfr}{\eauth_\ue}}{\mkey^\ID}{5}\big) \\[-0.2em]
          &\wedge
          (\eauth_\ue \ne \fail)
        \end{alignedat}$\\[0.3em]
        $\guti_\ue \la \ite{\sfb_\acc}{\guti_\sfr}{\unset}$\\
        $\success_\ue \la \sfb_\acc$};

    \end{tikzpicture}
  \end{center}
  \caption{The $\refresh$ Sub-Protocol of the $\faka$ Protocol}
  \label{fig:faka-refresh}
\end{figure}

%%% Local Variables:
%%% mode: latex
%%% TeX-master: "main"
%%% End:

%%% Local Variables:
%%% mode: latex
%%% TeX-master: "main"
%%% End:

\section{Unlinkability}
\label{section:unlink-body}
We now define the unlinkability property we use, which is inspired from \cite{DBLP:conf/esorics/HermansPVP11} and Vaudenay's privacy~\cite{DBLP:conf/asiacrypt/Vaudenay07}.
% Contrary to~\cite{DBLP:conf/asiacrypt/Vaudenay07}, we do not allow corruption of users. Corruption is used to prove forward secrecy, which the $\fiveaka$ protocol does not provide. Indeed, obtaining the long-term secret of a $\tue$ allows to decrypt all its past messages.

\paragraph{Definition}
The property is defined by a game in which an adversary tries to link together some subscriber's sessions. The adversary is a PPTM which interacts, through oracles, with $N$ different subscribers with identities $\ID_1,\dots,\ID_N$, and with the $\thn$. The adversary cannot use a subscriber's permanent identity to refer to it, as it may not know it. Instead, we associate a virtual handler $\vh$ to any subscriber currently running a session of the protocol. We maintain a list $\freeuelist$ of all subscribers that are ready to start a session. We now describe the oracles $\mathcal{O}_b$:
\begin{itemize}
\item $\startsession()$: starts a new $\thn$ session and returns its session number $j$.
\item $\sendhn(m,j)$ (resp. $\sendue(m,\vh)$): sends the message $m$ to $\thn(j)$ (resp. the $\tue$ associated with $\vh$), and returns $\thn(j)$ (resp. $\vh$) answer.
\item $\resulthn(j)$ (resp. $\resultue(\vh)$): returns $\true$ if $\thn(j)$ (resp. the $\tue$ associated with $\vh$) has made a successful authentication.
\item $\drawue(\ID_{i_0},\ID_{i_1})$: checks that $\ID_{i_0}$ and $\ID_{i_1}$ are both in $\freeuelist$. If that is the case, returns a new virtual handler pointing to $\ID_{i_b}$, depending on an internal secret bit $b$. Then, it removes $\ID_{i_0}$ and $\ID_{i_1}$ from $\freeuelist$.
\item $\freeue(\vh)$: makes the virtual handler $\vh$ no longer valid, and adds back to $\freeuelist$ the two identities that were removed when the virtual handler was created.
\end{itemize}

We recall that a function is negligible if and only if it is asymptotically smaller than the inverse of any polynomial. An adversary $\mathcal{A}$ interacting with $\mathcal{O}_b$ is winning the $q$-unlinkability game if: $\mathcal{A}$ makes less than $q$ calls to the oracles; and it can guess the value of the internal bit $b$ with a probability better than $1/2$ by a non-negligible margin, i.e. if the following quantity is non negligible in $\eta$:
\[
  \left| 2 \times \Pr \left(b: \mathcal{A}^{\mathcal{O}_b}(1^\eta) = b \right) - 1 \right|
\]
Finally, a protocol is $q$-unlinkable if and only if there are no winning adversaries against the $q$-unlinkability game.

\ch{\paragraph{Corruption}
In~\cite{DBLP:conf/esorics/HermansPVP11,DBLP:conf/asiacrypt/Vaudenay07}, the adversary is allowed to corrupt some tags using a $\corrupt$ oracle. Several classes of adversary are defined by restricting its access to the corruption oracle. A \emph{strong} adversary has unrestricted access, a \emph{destructive} adversary can no longer use a tag after corrupting it (it is destroyed), a \emph{forward} adversary can only follow a $\corrupt$ call by further $\corrupt$ calls, and finally a \emph{weak} adversary cannot use $\corrupt$ at all. A protocol is $\calc$ unlinkable if no adversary in $\calc$ can win the unlinkability game. Clearly, we have the following relations:
\[
  \emph{strong} \;\Ra\;
  \emph{destructive} \;\Ra\;
  \emph{forward} \;\Ra\;
  \emph{weak}
\]
The $\fiveaka$ protocol does not provide forward secrecy: indeed, obtaining the long-term secret of a $\tue$ allows to decrypt all its past messages. By consequence, the best we can hope for is \emph{weak} unlinkability. Since such adversaries cannot call $\corrupt$, we removed the oracle from our definition.}{}

\paragraph{Wide Adversary}
Note that the adversary knows if the protocol was successful or not using the $\resultue$ and $\resulthn$ oracles (such an adversary is called \emph{wide} in Vaudenay's terminology~\cite{DBLP:conf/asiacrypt/Vaudenay07}). Indeed, in an authenticated key agreement protocol, this information is always available to the adversary: if the key exchange succeeds then it is followed by another protocol using the newly established key; while if it fails then either a new key-exchange session is initiated, or no message is sent. Hence the adversary knows if the key exchange was successful by passive monitoring.

\subsection{\texorpdfstring{$\sigma$}{sigma}-Unlinkability}

\begin{figure}[t]
  \begin{center}
    \begin{tikzpicture}[>=latex% ,
      % background rectangle/.style={fill=blue!20}, show background rectangle
      ]
      \tikzset{every node/.style={font=\footnotesize}};
      \tikzset{en/.style={minimum height=0.2cm,minimum width=0.15cm,fill}};
      \tikzset{mb/.style={solid,thick,draw=gray!90}};
      \tikzset{mp/.style={inner sep=0,outer sep=0,minimum size=0,draw=none}};
      \tikzset{sb/.style={draw,fill=white,double,align=left}};

      \pgfmathsetmacro{\width}{7.7}
      
      \path (0,0)
      node[mp,yshift=0.1cm,anchor=south] (ue) {}
      node[xshift=-0.5cm,yshift=0.1cm,anchor=south west]
      (uet) {$\tue_{\ID_{\agent{A}}}$};

      \path[name path=aline] (ue) -- ++(0,-1.4) node[mp] (xend) {};
            
      \path (\width,0)
      node[above,yshift=0.1cm,anchor=south] (hn) {$\thn$};

      \path[name path=bline] (hn) -- ++(0,-2.8);

      \path[name path=uebeg] (ue.south) -- ++(8,0);
      \path[name intersections={of=uebeg and bline}]
      (intersection-1) node[mp] (hnbeg){};
      
      \path[name path=xendp] (xend) -- ++(8,0);
      \path[name intersections={of=xendp and bline}]
      (intersection-1) node[mp] (xendr){};
      \path[draw,very thick]
      (hnbeg) -- (xendr) node[]{};

      \path[draw,very thick]
      (ue) -- (xend) node[]{};
      
      \path (ue) -- ++(0,-0.2)
      [draw,->] -- ++(\width,0)
      node[pos=0.5,above]{$\guti_{\agent{A}}$}
      node[mp] (hn1) {};

      \path (hn1) -- ++(0,-0.5)
      [draw,->] -- ++(-\width,0)
      node[pos=0.5,above]{$\dots$}
      node[mp] (ue2) {};

      \path (ue2) -- ++(0,-0.5)
      node[mp] (ue3) {}
      [draw,->] -- ++(\width/2-0.15,0)
      node[midway,above]{$\dots$}
      node[mp] (a){};
      \path[draw=none] (a) -- ++(0.3,0)
      node[midway,transform shape,scale=1.4] (c) {$/$}
      node[mp](b){};

      %%%%%%%%%%%%%%%
      % Second Part %
      %%%%%%%%%%%%%%% 
      
      \path (xend) -- ++(0,-0.8)
      node[mp,yshift=0.1cm,anchor=south] (ue) {}
      node[xshift=-0.5cm,yshift=0.5,anchor=south west]
      (uet) {$\tue_{\ID_{\agent{X}}}$ where $\ID_{\agent{X}} = \ID_{\agent{A}}$ or $\ID_{\agent{B}}$};

      \path[name path=aline,overlay] (ue) -- ++(0,-20);
            
      \path (ue) -- ++(\width,0)
      node[above,anchor=south] (hn) {$\thn$};

      \path[name path=bline,overlay] (hn) -- ++(0,-20);

      \path[name path=uebeg] (ue.south) -- ++(8,0);
      \path[name intersections={of=uebeg and bline}]
      (intersection-1) node[mp] (hnbeg){};
      
      \path (ue) -- ++(0,-0.95)
      node[mp] (ue2){}
      [draw,->] -- ++(\width,0)
      node[pos=0.5,above]{$\nosuci$};

      \path (ue2) -- ++(-0.5,0.75) node[mp] (ue2l){};
      \path (ue2) -- ++(\width,0) -- ++(0.3,-0.2) node[mp] (ue2r){};
      \path (ue2l) node[outer sep=0,draw,fill=white,anchor=north west] (a)
      {\textbf{$\ID_{\agent{X}} = \ID_{\agent{A}}$}};
      \path (a.north east) -- ++ (0,-0.2) node[mp] (x){}
      (a.south west) -- ++ (0.2,0) node[mp] (y){};
      \draw[mb] (x.base) -| (ue2r.base) -| (y.base);

      \path (ue2) -- ++(0,-1.1)
      node[mp] (ue3){}
      [draw,->] -- ++(\width,0)
      node[pos=0.5,above]{$\guti_{\agent{B}}$};

      \path (ue3) -- ++(-0.5,0.75) node[mp] (ue3l){};
      \path (ue3) -- ++(\width,0) -- ++(0.3,-0.2) node[mp] (ue3r){};
      \path (ue3l) node[outer sep=0,draw,fill=white,anchor=north west] (a)
      {\textbf{$\ID_{\agent{X}} = \ID_{\agent{B}}$}};
      \path (a.north east) -- ++ (0,-0.2) node[mp] (x){}
      (a.south west) -- ++ (0.2,0) node[mp] (y){};
      \draw[mb] (x.base) -| (ue3r.base) -| (y.base);

      \begin{pgfonlayer}{bg0}
        \draw[very thick] (ue) -- (ue3)
        -- ++ (0,-0.5) node[en]{};
        \draw[very thick]
        let \p1 = (hnbeg),
        \p2 = (ue3) in
        (hnbeg) -- (\x1,\y2) 
        -- ++ (0,-0.5) node[en]{};
      \end{pgfonlayer}
    \end{tikzpicture}
  \end{center}
  \caption{Consecutive $\guti$ Sessions of $\faka$ Are Not Unlinkable.}
  \label{fig:attack-guti-link}
\end{figure}

%%% Local Variables:
%%% mode: latex
%%% TeX-master: "main"
%%% End:

\ch
{In accord with our conjecture in Section~\ref{subsection:conj-unlink}, the $\faka$ protocol is not unlinkable. Indeed, an adversary $\cala$ can easily win the linkability game. First, $\cala$ ensures that $\ID_{\agent{A}}$ and $\ID_{\agent{B}}$ have a valid temporary identity assigned: $\cala$ calls $\drawue(\ID_{\agent{A}},\ID_{\agent{A}})$ to obtain a virtual handler for $\ID_{\agent{A}}$, and runs a $\supi$ and $\refresh$ sessions between $\ID_{\agent{A}}$ and the $\thn$ with no interruptions. This assigns a temporary identity to $\ID_{\agent{A}}$. We use the same procedure for $\ID_{\agent{B}}$.}{}

\ch
{Then, $\cala$ executes the attack described in Fig.~\ref{fig:attack-guti-link}. It starts a $\guti$ session with $\ID_{\agent{A}}$, and intercepts the last message. At that point, $\ID_{\agent{A}}$ no longer has a temporary identity, while $\ID_{\agent{B}}$ still does. Then, it calls $\drawue(\ID_{\agent{A}},\ID_{\agent{B}})$, which returns a virtual handler $\vh$ to $\ID_{\agent{A}}$ or $\ID_{\agent{B}}$. The attacker then start a new $\guti$ session with $\vh$. If $\vh$ is a handler for $\ID_{\agent{A}}$, the $\tue$ returns $\nosuci$. If $\vh$ aliases $\ID_{\agent{B}}$, the $\tue$ returns the temporary identity $\guti_{\agent{A}}$. The adversary $\cala$ can distinguish between these two cases, and therefore wins the game.}
{As discussed in Section~\ref{subsection:conj-unlink}, we believe that unlinkability cannot be achieved for $\faka$. Indeed, an adversary can easily win the linkability game. First, it calls $\drawue(\ID_{\agent{A}},\ID_{\agent{B}})$ on a de-synchronized user $\ID_{\agent{A}}$ and a synchronized user $\ID_{\agent{B}}$. The call to $\drawue$ returns a virtual handler $\vh$ to $\ID_{\agent{A}}$ or $\ID_{\agent{B}}$. The attacker then runs a full session of the protocol with $\vh$. If the session fails, $\vh$ is a handler for $\ID_{\agent{A}}$, otherwise $\vh$ aliases $\ID_{\agent{B}}$. In both cases, the adversary guesses the correct value of the internal bit $b$.}

\paragraph{$\sigma$-unlinkability}
To prevent this, we want to forbid $\drawue$ to be called on de-synchronized subscribers. We do this by modifying the state of the user chosen by $\drawue$. We let $\sigma$ be an update on the state of the subscribers. We then define the oracle $\drawue_\sigma(\ID_{i_0},\ID_{i_1})$: it checks that $\ID_{\agent{A}}$ and $\ID_{\agent{B}}$ are both free, then \emph{applies the update} $\sigma$ to $\ID_{i_b}$'s state, and returns a new virtual handler pointing to $\ID_{i_b}$. The $(q,\sigma)$-unlinkability game is the $q$-unlinkability game in which we replace $\drawue$ with $\drawue_\sigma$. A protocol is $(q,\sigma)$-unlinkable if and only if there is no winning adversary against the $(q,\sigma)$-unlinkability game. Finally, a protocol is $\sigma$-unlinkable if it is $(q,\sigma)$-unlinkable for any $q$.

\paragraph{Application to $\faka$}
The privacy guarantees given by the $\sigma$-unlinkability depend on the choice of $\sigma$. The idea is to choose a $\sigma$ that allows to establish privacy in \emph{some scenarios} of the standard unlinkability game\footnote{Remark that when $\sigma$ is the empty state update, the $\sigma$-unlinkability and unlinkability properties coincide.}. % Standard unlinkability holds only if every scenario of the unlinkability game is secure. The $\sigma$-unlinkability property is finer-grained, and allows to show privacy for only a subset of the scenarios\footnote{Remark that when $\sigma$ is the empty state update, the $\sigma$-unlinkability and unlinkability properties coincide.}

We illustrate this on the $\faka$ protocol. Let $\sigma_\sunlink = \success_\ue \mapsto \false$ be the function that makes the $\tue$'s temporary identity not valid. This simulates the fact that the $\guti$ has been used and is no longer available. If the $\tue$'s temporary identity is not valid, then it can only run the $\supi$ sub-protocol. Hence, if the $\faka$ protocol is $\sigma_\sunlink$-unlinkable, then no adversary can distinguish between a normal execution and an execution where we change the identity of a subscriber each time it runs the $\supi$ sub-protocol. We give in Fig.~\ref{fig:ssunlink-example} an example of such a scenario. We now state our main result:
\begin{figure}[t]
  \begin{center}
    \begin{tikzpicture}
      [>=latex, every node/.style={font={\footnotesize}},
      scale=0.9,transform shape]

      \tikzset{mb/.style={draw=gray!80,line width=1.6pt,
          minimum width=1cm, minimum height=0.7cm}}
      \tikzset{mp/.style={mb}}
      \tikzset{ms/.style={mb,rounded corners = 8pt}}

      \path (0,0)
      node[mp] (a) {$\agent{A}$}
      -- +(0,-1.55)
      node[mp] (a0) {$\agent{A}$}
      
      -- ++(1.6,0)
      node[mp] (b) {$\agent{B}$}
      -- +(0,-1.55)
      node[mp] (b0) {$\agent{B}$}
      
      -- ++(1.6,0)
      node[ms] (c) {$\agent{A}$}
      -- +(0,-1.55)
      node[ms] (c0) {$\agent{A}$}
      
      -- ++(1.6,0)
      node[mp] (d) {$\agent{B}$}
      -- +(0,-1.55)
      node[mp] (d0) {$\agent{C}$}
      
      -- ++(1.6,0)
      node[ms] (e) {$\agent{B}$}
      -- +(0,-1.55)
      node[ms] (e0) {$\agent{C}$}
      
      -- ++(1.6,0)
      node[ms] (f) {$\agent{B}$}
      -- +(0,-1.55)
      node[ms] (f0) {$\agent{C}$};

      \draw[->] (a) -- (b) -- (c) -- (d) -- (e) -- (f)
      (f.east) -- ++(0.4,0);
      \draw[->] (a0) -- (b0) -- (c0) -- (d0) -- (e0) -- (f0)
      (f0.east) -- ++(0.4,0);

      \path[name path=p1] (c) -- (d0);
      \path[name path=p2] (c0) -- (d);
      \path[name intersections={of=p1 and p2}]
      (intersection-1) node[scale=1.8] {$\sim$};

    \end{tikzpicture}    
  \end{center}
  \caption{Two indistinguishable executions. Square (resp. round) nodes are executions of the $\supi$ (resp. $\guti$) sub-protocol. Each time the $\supi$ sub-protocol is used, we can change the subscriber's identity.}
  \label{fig:ssunlink-example}
\end{figure}

%%% Local Variables:
%%% mode: latex
%%% TeX-master: "main"
%%% End:
\begin{theorem}
  \label{thm:main-unlink}
  The $\faka$ protocol is $\sigma_\sunlink$-unlinkable for an arbitrary number of agents and sessions when the asymmetric encryption $\enc{\_}{\_}{\_}$ is \textsc{ind-cca1} secure and $\owsym$ and $\rowsym$ (resp. $\macsym^1$--$\,\macsym^5$) satisfy jointly the $\prfass$ assumption.
\end{theorem}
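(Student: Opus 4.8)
The plan is to work inside the Bana--Comon indistinguishability logic and to reduce $\sigma_\sunlink$-unlinkability to the provability of a single indistinguishability $\cframe^0 \sim \cframe^1$, where $\cframe^b$ is the frame collecting every message, and every $\resulthn$ and $\resultue$ answer bit (the adversary being wide), produced by the $(q,\sigma_\sunlink)$-game when the internal bit is $b$. Since $q$ and the number $N$ of subscribers are independent of the security parameter, $\cframe^0$ and $\cframe^1$ are ground terms of bounded size built from adversarial function symbols (encoding the oracle calls) applied to the running frame, so it suffices to exhibit a derivation of $\cframe^0 \sim \cframe^1$ from the structural axioms together with the $\prfass$ axiom for $\owsym,\rowsym$, the $\prfass$ axiom for $\macsym^1$--$\macsym^5$, and the \ccao\ encryption axiom; computational soundness of these axioms, under the hypotheses of the theorem, then yields the result. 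A preliminary ingredient, proved first and reused throughout, is a family of state invariants describing the reachable configurations --- in particular the mutual authentication properties (every message an honest party accepts was emitted by the honest peer in a matching session) and the fact that desynchronization between $\sqn_\ue$ and $\sqn_\hn^\ID$ stays bounded --- so that the control flow of every session, hence every result bit, is a frame-observable function of whether the adversary relayed messages faithfully.

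The derivation itself is a finite sequence of game transformations. First I would apply the $\prfass$ axiom to the integrity keys $\mkey^\ID$: every honestly computed $\mac{\cdot}{\mkey^\ID}{i}$ becomes a fresh name, and every verification of an adversarial term against such a $\macsym^i$ is rewritten to $\false$ unless that term is syntactically one of the honest MACs (the tagging convention distinguishing $\macsym^1$--$\macsym^5$ keeps these rewritings consistent). This simultaneously discharges the authentication invariants and decouples the frame from the keyed integrity computations. Second, I would apply the $\prfass$ axiom to the confidentiality keys $\key^\ID$: the masking terms $\ow{\nonce^j}{\key^\ID}$ and $\row{\nonce^j}{\key^\ID}$ become fresh names, so $\sqn_\hn^\ID \oplus \ow{\nonce^j}{\key^\ID}$ and $\guti^j \oplus \row{\nonce^j}{\key^\ID}$ are one-time pads and may be replaced by fresh names independent of the sequence numbers and temporary identities. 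Third, I would apply the \ccao\ axiom to replace each honestly produced $\enc{\pair{\supi}{\sqn_\ue}}{\pk_\hn}{\enonce}$ by the encryption of a constant $\enc{0}{\pk_\hn}{\enonce}$; the side condition is that $\sk_\hn$ occurs only in decryptions that do not functionally depend on the challenge ciphertexts, which I discharge with the authentication step: when the adversary forwards an honest ciphertext to a $\thn$ session, the decryption is rewritten to the plaintext recorded by the bookkeeping, and when it forwards any other ciphertext the accompanying MAC on $\spair{\cdot}{\nonce^j}$ verifies for no $\ID$, so the decrypted plaintext is discarded (no state update, no confirmation message) and hence never observed.

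After these steps, $\cframe^b$ depends on the identity $\ID_{i_b}$ chosen by $\drawue_{\sigma_\sunlink}$ only through fresh names and through per-identity random functions whose outputs also occur solely inside further fresh-name substitutions. Here the choice $\sigma_\sunlink = \success_\ue \mapsto \false$ is essential: it forces the drawn subscriber to run the $\supi$ sub-protocol from a state carrying no $\guti$-based correlation with its earlier sessions, and the bounded-desynchronization invariant makes the $\thn$'s comparison tests $\sqn_\sfr \ge \sqn_\hn^\ID$ and $\range{\sqn_\ue}{\sqn_\sfr}$ behave identically whichever of $\ID_{i_0},\ID_{i_1}$ was picked --- this is exactly the weakening of full unlinkability called for by the conjecture of Section~\ref{subsection:conj-unlink}. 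It then remains to prove $\cframe^0 \sim \cframe^1$ by structural reasoning: the two frames are obtained from a common frame by renaming independent fresh names and re-indexing independent random functions, so they agree up to $\alpha$-renaming, and the $\refl$ and freshness axioms close the derivation.

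The step I expect to be the main obstacle is the \ccao\ one: formalising the bookkeeping that pairs each $\thn$ decryption either with a recorded honest plaintext or with a discarded value, and proving this pairing sound for an \emph{unbounded} (parameter-independent but otherwise arbitrary) number of interleaved $\supi$ sessions, requires the authentication and state invariants to be stated and maintained with considerable precision; most of the technical effort goes there, together with the companion task of checking that the sequence-number control flow is genuinely independent of the drawn identity once $\sigma_\sunlink$ has been applied.
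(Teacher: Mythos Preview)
Your high-level architecture is sound and close to the paper's: reduce to a BC-logic indistinguishability, prove authentication-style invariants first, then use the $\prfass$ and \ccao\ axioms to strip away the cryptography. But the final step --- ``the two frames are obtained from a common frame by renaming independent fresh names and re-indexing independent random functions, so they agree up to $\alpha$-renaming'' --- does not go through as written, and this is where the real work lies.

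The obstruction is the sequence numbers. After all your crypto replacements, the frames still contain the boolean tests $\sqn_\sfr \ge \sqn_\hn^\ID$ (in $\pnai(j,1)$) and $\range{\sqn_\ue}{\sqn_\sfr}$ (in $\cuai_\ID(j,1)$), and these depend on the \emph{absolute} values of the counters, which are different for $\ID_{i_0}$ and $\ID_{i_1}$ because each identity has its own history. These values are deterministic state, not fresh names, so no $\alpha$-renaming identifies them. Your ``bounded-desynchronization'' invariant is too weak: since the paper's $\rangesym$ is implemented as equality, a difference of $0$ on the left and $1$ on the right is already distinguishing. What the paper actually maintains is that the \emph{exact} quantity $\syncdiff_\tau^\ID = \cond{\sync_\ue^\ID}{(\sqn_\ue^\ID - \sqn_\hn^\ID)}$ matches on both sides --- and only under the guard $\sync_\ue^\ID$ recording that a $\supi$ session has succeeded since the last $\newsession_\ID$. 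Proving this invariant survives each action (particularly $\cnai(j,1)$, where the $\tsuccess_\hn^\ID = \nonce^j$ check is essential to block the delayed-message attack of Section~4.E) is the technical heart of the proof, and your proposal does not engage with it.

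Relatedly, the paper does not perform a global game hop. It proves $\cframe_\tau,\lreveal_\tau \sim \cframe_{\utau},\rreveal_\tau$ by induction on the trace, where $\reveal$ is an explicit list of extra leaked terms (including $\syncdiff$, $\msuci$, and per-action MAC values) that strengthen the hypothesis just enough for the inductive step to close. At each action it first rewrites the acceptance conditionals into disjunctions over honest transcripts (via \textsc{euf-mac}/\textsc{cr}-based characterizations such as \ref{equ3} and \ref{sequ2}), then matches left and right transcripts, and only then applies $\prfmac$/$\prff$/\ccao\ locally to discharge the residual terms. The \ccao\ step you flagged as hardest is in fact routine in this organization; the genuinely delicate cases are $\pnai(j,1)$ and $\cnai(j,1)$, where one must argue that transcripts originating \emph{before} the last $\newsession_\ID$ always fail the range test on the left (so they can be matched to the trivially-failing right-side transcripts for the fresh copy of $\ID$), and that $\syncdiff$ is preserved --- both of which require the fine-grained sequence-number reasoning that your outline omits.
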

This result is shown later in the paper. Still, the intuition is that no adversary can distinguish between two sessions of the $\supi$ protocol. Moreover, the $\supi$ protocol has two important properties. First, it re-synchronizes the user with the $\thn$, which prevents the attacker from using any prior de-synchronization. Second, the $\faka$ protocol is designed in such a way that no message sent by the $\tue$ before a successful $\supi$ session can modify the $\thn$'s state after the $\supi$ session. Therefore, any time the $\supi$ protocol is run, we get a ``clean slate'' and we can change the subscriber's identity. Note that we have a trade-off between efficiency and privacy: the $\supi$ protocol is more expensive to run, but provides more privacy. 

\subsection{A Subtle Attack}
\begin{figure}[tb]
  \begin{center}
    \begin{tikzpicture}[>=latex]
      \tikzset{every node/.style={font=\footnotesize}};
      \tikzset{en/.style={minimum height=0.2cm,minimum width=0.15cm,fill=black}};
      \tikzset{mb/.style={solid,thick,draw=gray!90}};
      \tikzset{mp/.style={inner sep=0,outer sep=0,draw=none}};
      \tikzset{sb/.style={draw,fill=white,double,align=left}};
      
      \path (0,0)
      node[mp,yshift=0.1cm,anchor=south] (ue) {}
      node[xshift=-0.5cm,yshift=0.1cm,anchor=south west]
      (uet) {$\tue_{\ID_{\agent{A}}}$};

      \path[name path=aline] (ue) -- ++(0,-1.4) node[mp] (xend) {};
            
      \path (7.7,0)
      node[above,yshift=0.1cm,anchor=south] (hn) {$\thn$};

      \path[name path=bline] (hn) -- ++(0,-2.8);

      \path[name path=uebeg] (ue.south) -- ++(8,0);
      \path[name intersections={of=uebeg and bline}]
      (intersection-1) node[mp] (hnbeg){};
      
      \path[name path=xendp] (xend) -- ++(8,0);
      \path[name intersections={of=xendp and bline}]
      (intersection-1) node[mp] (xendr){};
      \path[draw,very thick]
      (hnbeg) -- (xendr) node[]{};

      \path[draw,very thick]
      (ue) -- (xend) node[]{};
      
      \path (ue) -- ++(0,-0.2)
      [draw,->] -- ++(7.7,0)
      node[pos=0.5,above]{$\guti$}
      node[mp] (hn1) {};

      \path (hn1) -- ++(0,-0.5)
      [draw,->] -- ++(-7.7,0)
      node[pos=0.5,above]{$\dots$}
      node[mp] (ue2) {};

      \path (ue2) -- ++(0,-0.5)
      node[mp] (ue3) {}
      [draw,->] -- ++(3.7,0)
      node[midway,above]{$t_{\textsf{auth}}$}
      node[mp] (a){};
      \path[draw=none] (a) -- ++(0.3,0)
      node[midway,transform shape,scale=1.4] (c) {$/$}
      node[mp](b){};

      %%%%%%%%%%%%%%%
      % Second Part %
      %%%%%%%%%%%%%%% 
      
      \path (xend) -- ++(0,-0.8)
      node[mp,yshift=0.1cm,anchor=south] (ue) {}
      node[xshift=-0.5cm,yshift=0.1cm,anchor=south west]
      (uet) {$\tue_{\ID_{\agent{A}}}$ or $\tue_{\ID_{\agent{B}}}$};

      \path[name path=aline] (ue) -- ++(0,-2.7) node[mp] (xend) {};
            
      \path (ue) -- ++(7.7,0)
      node[above,yshift=0.1cm,anchor=south] (hn) {$\thn$};

      \path[name path=bline] (hn) -- ++(0,-3.1);

      \path[name path=uebeg] (ue.south) -- ++(8,0);
      \path[name intersections={of=uebeg and bline}]
      (intersection-1) node[mp] (hnbeg){};

      \path[name path=xendp] (xend) -- ++(8,0);
      \path[name intersections={of=xendp and bline}]
      (intersection-1) node[mp] (xendr){};
      \path[draw,very thick]
      (hnbeg) -- (xendr) node[en]{};

      \path[draw,very thick]
      (ue) -- (xend) node[en]{};
      
      \path (hn) -- ++(0,-0.85)
      node[mp] (hn1) {}
      -- ++(-7.7,0)
      node[pos=0.5]{$\supi$ Session};

      \path (hn1) -- ++(0.5,0.3) node[mp] (hn1l){};
      \path (hn1) -- ++(-7.7,0) -- ++(-0.5,-0.3) node[mp] (hn1r){};
      \draw[mb] (hn1l.center) -| (hn1r.center) -| (hn1l.center);
     
      \path (hn1r) --  ++(0.5,0) -- ++(0,-0.55)
      node[mp] (ue2) {}
      -- ++(3.7,0)
      node[mp] (a){};
      \path[draw=none] (a) -- ++(0.3,0)
      node[midway,transform shape,scale=1.4] (c) {$/$}
      node[mp](b){};
      \draw[->] (b)
      -- ++(3.7,0)
      node[midway,above]{$t_{\textsf{auth}}$}
      node[mp] (hn2){};

      \path (hn2) -- ++(0,-0.75)
      node[mp] (hn3) {}
      -- ++(-7.7,0)
      node[pos=0.5]{$\guti$ Session};

      \path (hn3) -- ++(0.5,0.3) node[mp] (hn3l){};
      \path (hn3) -- ++(-7.7,0) -- ++(-0.5,-0.3) node[mp] (hn3r){};
      \draw[mb] (hn3l.center) -| (hn3r.center) -| (hn3l.center);
    \end{tikzpicture}
  \end{center}
  \caption{A Subtle Attack Against The $\fakam$ Protocol}
  \label{fig:subtle-attack-fakam}
\end{figure}

We now explain what is the role of $\tsuccess_\hn^\ID$, and how it prevents a subtle attack against the $\sigma_\sunlink$-unlinkability of $\faka$. We let $\fakam$ be the $\faka$ protocol where we modify the $\guti$ sub-protocol we described in Fig.~\ref{fig:faka-guti}: in the state update of the $\thn$'s last input, we remove the check $\tsuccess_\hn^\ID = \nonce^j$ (i.e. $\sfbinc^\ID = \sfb_{\macsym}^\ID$). The attack is described in Fig.~\ref{fig:subtle-attack-fakam}.

First, we run a session of the $\guti$ sub-protocol between $\tue_{\ID_{\agent{A}}}$ and the $\thn$, but we do not forward the last message $t_{\textsf{auth}}$ to the $\thn$. We then call $\drawue_{\sigma_\sunlink}(\ID_{\agent{A}},\ID_{\agent{B}})$, which returns a virtual handler $\vh$ to $\ID_{\agent{A}}$ or $\ID_{\agent{B}}$.
% Since $\vh$ is a virtual handler for a subscriber with no valid temporary identity, we cannot run the $\guti$ sub-protocol. 
We run a full session using the $\supi$ sub-protocol with $\vh$, and then send the message $t_{\textsf{auth}}$ to the $\thn$. We can check that, because we removed the condition $\tsuccess_\hn^\ID = \nonce^j$ from $\sfbinc^\ID$, this message causes the $\thn$ to increment $\sqn_\hn^{\ID_{\agent{A}}}$ by one. At that point, $\tue_{\ID_{\agent{A}}}$ is de-synchronized but $\tue_{\ID_{\agent{B}}}$ is synchronized. Finally, we run a session of the $\guti$ sub-protocol. The session has two possible outcomes: if $\vh$ aliases to ${\agent{A}}$ then it fails, while if $\vh$ aliases to ${\agent{B}}$, it succeeds. This leads to an attack.

When we removed the condition $\tsuccess_\hn^\ID = \nonce^j$, we broke the ``clean slate'' property of the $\supi$ sub-protocol: we can use a message from a session that started \emph{before} the $\supi$ session to modify the state \emph{after} the $\supi$ session. $\tsuccess_\hn^\ID$ allows to detect whether another session has been executed since the current session started, and to prevent the update of the sequence number when this is the case. % Interestingly, this check was not included in the first draft of the $\faka$ protocol. Of course, we could not complete the proof. After inquiry, we found the attack and fixed the protocol.

%%% Local Variables:
%%% mode: latex
%%% TeX-master: "main"
%%% End:

\section{Modeling in The Bana-Comon Logic}
\label{section:modeling-body}

We prove Theorem~\ref{thm:main-unlink} using the Bana-Comon model introduced in~\cite{Bana:2014:CCS:2660267.2660276}. This is a first order logic, in which protocol messages are represented by terms using special function symbols for the adversary's inputs. It has only one predicate, $\sim$, which represents computational indistinguishability. To use this model, we first build a set of axioms $\axioms$ specifying what the adversary \emph{cannot} do. This set of axiom comprises computationally valid properties, cryptographic hypotheses and implementation assumptions. Then, given a protocol and a security property, we compute a formula $\phi$ expressing the protocol security. Finally, we show that the security property $\phi$ can be deduced from the axioms $\axioms$. If this is the case, this entails \emph{computational security}.

\subsection{Syntax and Semantics}
We quickly recall the syntax and semantics of the logic.
\paragraph{Terms}
Terms are built using function symbols in $\sig$, names in $\Nonce$ (representing random samplings) and variables in~$\mathcal{X}$. The set $\sig$ of function symbols contains a countable set of \emph{adversarial} function symbols $\mathcal{G}$, which represent the adversary inputs, and protocol function symbols. The protocol function symbols are the functions used in the protocol, e.g. the pair $\pair{\_}{\_}$, the $i$-th projection $\pi_i$, encryption $\enc{\_}{\_}{\_}$, decryption $\dec(\_,\_)$, $\symite$, $\true$, $\false$, equality $\eq{\_}{\_}$, integer greater or equal $\Geq{\_}{\_}$ and length $\length(\_)$.
% This is a sorted logic with two sorts, $\term$ and $\bool$, where $\bool \subseteq \term$. Each function symbol comes with a type, for example:
% \[
%   \symite : \bool \times \term \times \term \ra \term
% \]
% Moreover all the names in $\Nonce$ have sort $\term$, and each variable in $\mathcal{X}$ comes with a sort.

\paragraph{Formulas}
For every integer $n$, we have one predicate symbol $\sim_{n}$ of arity $2n$, which represents equivalence between two vectors of terms of length $n$. We use an infix notation for $\sim_n$, and omit $n$ when not relevant. Formulas are built using the usual Boolean connectives and first-order quantifiers.

\paragraph{Semantics}
We use the classical semantics of first-order logic. Given an interpretation domain, we interpret terms, function symbols and predicates as, respectively, elements, functions and relations of this domain.

We focus on a particular class of models, called the \emph{computational models} (see~\cite{Bana:2014:CCS:2660267.2660276} for a formal definition). In a computational model $\cmodel$, terms are interpreted in the set of PPTMs equipped with a working tape and two random tapes $\rho_1,\rho_2$. The tape $\rho_1$ is used for the protocol random values, while $\rho_2$ is for the adversary's random samplings. The adversary cannot access directly the random tape $\rho_1$, although it may obtain part of $\rho_1$ through the protocol messages. A key feature is to let the interpretation of an adversarial function $g$ be \emph{any} PPTM, which soundly models an attacker \emph{arbitrary probabilistic polynomial time computation}. Moreover, the predicates $\sim_n$ are interpreted using \emph{computational indistinguishability} $\approx$. Two families of distributions of bit-string sequences $(m_\eta)_{\eta}$ and $(m'_\eta)_{\eta}$, indexed by $\eta$, are indistinguishable iff for every PPTM $\mathcal{A}$ with random tape $\rho_2$, the following quantity is negligible in $\eta$:
\begin{multline*}
  \big|\,\Pr(\rho_1,\rho_2: \; \mathcal{A}(m_\eta(\rho_1,\rho_2),\rho_2)=1) \;-\\
  \Pr(\rho_1,\rho_2: \; \mathcal{A}(m'_\eta(\rho_1,\rho_2),\rho_2)=1) \,\big|
\end{multline*}

\subsection{Modeling of the $\faka$ Protocol States and Messages}

We now use the Bana-Comon logic to model the $\sigma_{\sunlink}$-unlinkability of the $\faka$ protocol. We consider a setting with $N$ identities $\ID_1,\dots,\ID_N$, and we let $\iddom$ be the set of all identities. To improve readability, protocol descriptions often omit some details. For example, in Section~\ref{section:faka-description} we sometimes omitted the description of the error messages. The failure message attack of~\cite{DBLP:conf/ccs/ArapinisMRRGRB12} demonstrates that such details may be crucial for security. An advantage of the Bana-Comon model is that it requires us to fully formalize the protocol, and to make all assumptions explicit.

% We consider a setting with $B$ base identities $\agent{A}_1,\dots,\agent{A}_{B}$, and we let $\baseiddom$ be the set of base identities. We need extra identities to express unlinkability: for every base identity $\agent{A}_i$, we have $C$ copies $\agent{A}_{i} = \agent{A}_{i,1},\dots,\agent{A}_{i,C}$ of $\agent{A}_i$. In total we use $N = B \times C$ distinct identities. We let $\ID_1,\dots,\ID_N$ be an arbitrary enumeration of all the identities, and we let $\iddom$ be the set of all identities.

\paragraph{Symbolic State}
For every identity $\ID \in \iddom$, we use several variables to represent $\tue_{\ID}$'s state. E.g., $\sqn_\ue^\ID$ and $\guti_\ue^\ID$ store, respectively, $\tue_{\ID}$'s sequence number and temporary identity. Similarly, we have variables for $\thn$'s state, e.g. $\sqn_\hn^\ID$. We let $\vardom$ be the set of variables used in $\faka$:
\begin{equation*}
  \small
  \bigcup_{\substack{j \in \mathbb{N},\textsc{a} \in \{\ue,\hn\}\\ \ID \in \iddom\\}}
  % \raisebox{-0.5em}{$
  \left\{
    \begin{array}[c]{l}
      \small
      \sqn_{\textsc{a}}^\ID, \guti_{\textsc{a}}^\ID,\eauth_\ue^\ID, \bauth_\ue^\ID,
      \eauth_\hn^j\\
      \small
      \bauth_\hn^j, \uetsuccess^\ID, \success_{\ue}^\ID, \tsuccess_\hn^\ID
    \end{array}
  \right\}% $}
\end{equation*}
A symbolic state $\cstate$ is a mapping from $\vardom$ to terms. Intuitively, $\cstate(\sfx)$ is a term representing (the distribution of) the value of~$\sfx$.
\begin{example}
  To avoid confusion with the \emph{semantic} equality $=$, we use $\equiv$ to denote \emph{syntactic} equality. Then, we can express the fact that $\guti_\ue^\ID$ is unset in a symbolic state $\cstate$ by having $\cstate(\guti_\ue^\ID) \equiv \unset$. Also, given a state $\cstate$, we \ch{can state}{} that $\cstate'$ is the state $\cstate$ in which we incremented  $\sqn_\ue^\ID$ by having $\cstate'(\sfx)$ be the term $\cstate(\sqn_\ue^\ID) + 1$ if $\sfx$ is $\sqn_\ue^\ID$, and $\cstate(\sfx)$ otherwise.
  % \[
  %   \cstate'(\sfx) \equiv
  %   \begin{dcases*}
  %     \cstate(\sqn_\ue^\ID) + 1 & if $\sfx = \sqn_\ue^\ID$\\
  %     \cstate(\sfx) & otherwise
  %   \end{dcases*}
  % \]
\end{example}

\paragraph{Symbolic Traces}
We explain how to express $(q,\sigma_\sunlink)$-unlinkability in the BC model. In the $(q,\sigma_\sunlink)$-unlinkability game, the adversary chooses dynamically which oracle it wants to call. This is not convenient to use in proofs, as we do not know statically the $i$-th action of the adversary. We prefer an alternative point-of-view, in which the trace of oracle calls is fixed (w.l.o.g., as shown later in Proposition~\ref{prop:main-unlink-bc}). Then, there are no winning adversaries against the $\sigma_\sunlink$-unlinkability game with a fixed trace of oracle calls if the adversary's interactions with the oracles when $b = 0$ are indistinguishable from the interactions with the oracles when~$b = 1$.

We use the following action identifiers to represent symbolic calls to the oracle of the $(q,\sigma_\sunlink)$-unlinkability game:
\begin{itemize}
\item $\ns_{\ID}(j)$ represents a call to $\drawue_{\sigma_\sunlink}(\ID,\_)$ when $b = 0$ or $\drawue_{\sigma_\sunlink}(\_,\ID)$ when $b = 1$.
\item $\npuai{i}{\ID}{j}$ (resp. $\cuai_\ID(j,i)$) is the $i$-th user message in the session $\tue_{\ID}(j)$ of the $\supi$ (resp. $\guti$) sub-protocol.
\item $\fuai_\ID(j)$ is the only user message in the session $\tue_{\ID}(j)$ of the $\refresh$ sub-protocol.
\item $\pnai(j,i)$ (resp. $\cnai(j,i)$) is the $i$-th network message in the session $\thn(j)$ of the $\supi$ (resp. $\guti$) sub-protocol.
\item $\fnai(j)$ is the only network message in the session $\thn(j)$ of the $\refresh$ sub-protocol.
\end{itemize}
The remaining oracle calls either have no outputs and do not modify the state (e.g. $\startsession$), or can be simulated using the oracles above. E.g., since the $\thn$ sends an error message whenever the protocol is not successful, the output of $\resulthn$ can be deduced from the protocol messages.

A \emph{symbolic trace} $\tau$ is a finite sequence of action identifiers. We associate, to any execution of the $(q,\sigma_{\sunlink})$-unlinkability game with a fixed trace of oracle calls, a pair of symbolic traces $(\tau_l,\tau_r)$, which corresponds to the adversary's interactions with the oracles when $b$ is, respectively, $0$ and $1$. We let $\runlink$ be the set of such pairs of traces.

\begin{example}
  We give the symbolic traces corresponding to the honest execution of $\faka$ between $\tue_{\ID}(i)$ and $\thn(j)$. If the $\supi$ protocol is used, we have the trace $\tau^{i,j}_\supi(\ID)$:
  {\small
    \begin{gather*}
      \ch{\npuai{0}{\ID}{i},\,}{}\pnai(j,0), \npuai{1}{\ID}{i}, \pnai(j,1), \npuai{2}{\ID}{i}, \fnai(j), \fuai_\ID(i)
    \end{gather*}}
  And if the $\guti$ sub-protocol is used, the trace $\tau^{i,j}_\guti(\ID)$:
  {\small
    \begin{gather*}
      \cuai_\ID(i,0), \cnai(j,0), \cuai_\ID(i,1), \cnai(j,1), \fnai(j), \fuai_\ID(i)
    \end{gather*}}
  Which such notations, the left trace $\tau_l$ of the attack described in Fig.~\ref{fig:subtle-attack-fakam}, in which the adversary only interacts with $\agent{A}$, is:
  {\small
    \[
      \cuai_{\agent{A}}(0,0), \cnai(0,0), \cuai_{\agent{A}}(0,1),
      \tau^{1,1}_\supi(\agent{A}),\cnai(0,1), \tau^{2,2}_\guti(\agent{A})
    \]}
  Similarly, we can give the right trace $\tau_r$ in which the adversary interacts with $\agent{A}$ and $\agent{B}$:
  {\small
    \[
      \cuai_{\agent{A}}(0,0), \cnai(0,0), \cuai_{\agent{A}}(0,1),
      \tau^{0,1}_\supi(\agent{B}),\cnai(0,1), \tau^{1,2}_\guti(\agent{B})
    \]}
\end{example}

\paragraph{Symbolic Messages}
We define, for every action identifier $\ai$, the term representing the output observed by the adversary when $\ai$ is executed. Since the protocol is stateful, this term is a function of the prefix trace of actions executed since the beginning. We define by mutual induction, for any symbolic trace $\tau = \tauo,\ai$ whose last action is $\ai$:
\begin{itemize}
\item The term $t_\tau$ representing the last message observed during the execution of $\tau$.
\item The symbolic state $\cstate_\tau$ representing the state after the execution of $\tau$.
\item The frame $\cframe_\tau$ representing the sequence of all messages observed during the execution of $\tau$.
\end{itemize}
Some syntactic sugar: we let $\instate_\tau = \cstate_\tauo$ be the symbolic state before the execution of the last action; and $\inframe_\tau = \cframe_\tauo$ be the sequence of all messages observed during the execution of $\tau$, except for the last message.

The frame $\cframe_\tau$ is simply the frame $\inframe_\tau$ extended with $t_\tau$, i.e. $\cframe_\tau \equiv \inframe_\tau,t_\tau$. Moreover the initial frame contains only $\pk_\hn$, i.e. $\cframe_\epsilon \equiv \pk_\hn$. When executing an action $\ai$, only a subset of the symbolic state is modified. For example, if the adversary interacts with $\tue_{\ID}$ then the state of the $\thn$ and of all the other users is unchanged. Therefore instead of defining $\cstate_\tau$, we define the \emph{symbolic state update} $\upstate_\tau$, which is a \emph{partial} function from $\vardom$ to terms. Then $\cstate_\tau$ is the function:
\[
  \cstate_\tau(\sfx) \equiv
  \begin{dcases}
    \instate_\tau(\sfx) & \text{ if $\sfx \not \in \dom(\upstate_\tau)$}\\
    \upstate_\tau(\sfx) & \text{ if $\sfx \in \dom(\upstate_\tau)$}
  \end{dcases}
\]
where $\dom$ gives the domain of a function. Now, for every action $\ai$, we define $t_\tau$ and $\upstate_\tau$ using $\inframe_\tau$ and $\instate_\tau$. As an example, we describe the second message and state update of the session $\tue_{\ID}(j)$ for the $\supi$ sub-protocol, which corresponds to the action $\npuai{1}{\ID}{j}$. We recall the relevant part of Fig.~\ref{fig:faka-supi}:
\begin{center}
  \begin{tikzpicture}[>=latex]
    \tikzset{every node/.style={font=\footnotesize}};
    \tikzset{en/.style={minimum height=0.2cm,minimum width=0.15cm,fill=black}};
    \tikzset{mb/.style={solid,thick,draw=gray!90}};
    \tikzset{mp/.style={inner sep=0,outer sep=0,draw=none}};
    \tikzset{sb/.style={draw,fill=white,double,align=left}};

    \path (0,0)
    node[above,yshift=0.1cm,anchor=south] (ue) {$\tue$};
    \path[name path=aline] (ue) -- ++(0,-2.9);

    \path[draw,very thick]
    (ue) -- ++(0,-3) ;

    \path (ue) -- ++(0,-0.4)
    -- ++(4,0) node[mp] (a) {};
    \path[draw,->] (a)
    -- ++(-4,0)
    node[mp] (ue1) {};

    \path (ue1) -- ++(0,-0.25)
    node[sb,name path=pue1,anchor=north west,xshift=-0.5cm] (c) {
      \textbf{\underline{Input $\nonce_\sfr$}:} $\bauth_\ue \la \nonce_\sfr$};

    \path[name intersections={of=aline and pue1}]
    (intersection-2) node[mp] (ue2){};

    \path (ue2) -- ++(0,-0.8)
    node[mp] (ue2bs){}
    [draw,->] -- ++(7,0)
    node[midway,above]{$
      \lpair
      {\enc{\pair{\agent{\ID}}{\sqn_\ue}}
        {\pk_\hn}{\enonce}}
      {\mac{\spair
          {\enc{\pair{\agent{\ID}}{\sqn_\ue}}
            {\pk_\hn}{\enonce}}
          {\nonce_\sfr}}
        {\mkey}{1}}$}
    node[mp] (hn3) {};

    \path (ue2bs) -- ++(0,-0.25)
    node[sb,name path=pue2bs,anchor=north west,xshift=-0.5cm] (c) {
      $\sqn_\ue \la \sqn_\ue + 1$};
  \end{tikzpicture}
\end{center}
First, we need a term representing the value inputted by $\tue_{\ID}$ from the network. As we have an active adversary, this value can be anything that the adversary can compute using the knowledge it accumulated since the beginning of the protocol. The knowledge of the adversary, or the frame, is the sequence of all messages observed during the execution of $\tau$, except for the last message. This is exactly $\inframe_\tau$. Finally, we use a special function symbol $g \in \mathcal{G}$ to represent the arbitrary polynomial time computation done by the adversary. This yields the term $g(\inframe_\tau)$, which symbolically represents the input.

We now need to build a term representing the asymmetric encryption of the pair containing the $\tue$'s permanent identity $\ID$ and its sequence number. The permanent identity $\ID$ is simply represented using a constant function symbol $\ID$ (we omit the parenthesis $()$). $\tue_{\ID}$'s sequence number is stored in the variable $\sqn_\ue^\ID$. To retrieve its value, we just do a look-up in the symbolic state $\instate_\tau$, which yields $\instate_\tau(\sqn_\ue^\ID)$. Finally, we use the asymmetric encryption function symbol to build the term $t^{\textsf{enc}}_\tau \equiv \enc{\spair{\ID}{\instate_\tau(\sqn_\ue^\ID)}}{\pk_\hn}{\enonce^j}$. Notice that the encryption is randomized using a nonce $\enonce^j$, and that the freshness of the randomness is guaranteed by indexing the nonce with the session number $j$. Finally, we can give $t_\tau$ and $\upstate_\tau$:
{  \small
  \begin{alignat*}{2}
    t_\tau &\;\equiv\;&&
    \;\;\lpair{t^{\textsf{enc}}_\tau}
    {\mac{\spair
        {t^{\textsf{enc}}_\tau}
        {g(\inframe_\tau)}}
      {\mkey^\ID}{1}}\\
    \upstate_\tau &\;\equiv\;&&
    \left\{
      \begin{array}{lcl}
        \sqn_\ue^\ID \mapsto \sqnsuc(\instate_\tau(\sqn_\ue^\ID))&\;&
        \eauth_\ue^\ID \mapsto \fail\\
        \bauth_\ue^\ID \mapsto g(\inframe_\tau)&&
        \suci_\ue^\ID \mapsto \unset\\
        \success_\ue^\ID \mapsto \false&&
      \end{array}\right.
  \end{alignat*}}
Remark that we omitted some state updates in the description of the protocol in Fig.~\ref{fig:faka-supi}. For example, $\tue_{\ID}$ temporary identity $\guti_\ue^\ID$ is reset when starting the $\supi$ sub-protocol. In the BC model, these details are made explicit.

The description of $t_\tau$ and $\upstate_\tau$ for the other actions can be found in Fig.~\ref{fig:protocol-term-supi} and Fig.~\ref{fig:protocol-guti-refresh}. Observe that we describe
\ch{one more message for}
{four messages for}
the $\supi$ and $\guti$ protocols
\ch
{than in}
{, while in}
Section~\ref{section:faka-description}
\ch{.}{they only have three messages.} This is because we add one message ($\npuai{2}{\ID}{j}$ for $\supi$ and $\cnai(j,1)$ for $\guti$) for proof purposes, to simulate the $\resultue$ and $\resulthn$ oracles. Also, notice that in the $\guti$ protocol, when $\thn$ receives a $\guti$ that is not assigned to anybody, it sends a decoy message to a special dummy identity $\IDdum$.

\begin{figure}[tb]
  \footnotesize
  \ch{\underline{\textbf{Case} $\ai = {\npuai{0}{\ID}{j}}$.}
  $t_\tau \;\equiv\; \rchallenge$}{}\\[0.4em]
  \underline{\textbf{Case} $\ai = {\pnai(j,0)}$.}
  $t_\tau \;\equiv\; \nonce^j$

  \underline{\textbf{Case} $\ai = {\npuai{1}{\ID}{j}}$.}
  Let $t^{\textsf{enc}}_\tau \equiv \enc{\spair{\ID}{\instate_\tau(\sqn_\ue^\ID)}}{\pk_\hn}{\enonce^j}$, then:
  \begin{alignat*}{2}
    t_\tau &\;\equiv\;&&
    \;\;\lpair{t^{\textsf{enc}}_\tau}
    {\mac{\spair
        {t^{\textsf{enc}}_\tau}
        {g(\inframe_\tau)}}
      {\mkey^\ID}{1}}\\
    \upstate_\tau &\;\equiv\;&&
    \left\{
      \begin{array}{lcl}
        \sqn_\ue^\ID \mapsto \sqnsuc(\instate_\tau(\sqn_\ue^\ID))&\quad&
        \eauth_\ue^\ID \mapsto \fail\\
        \bauth_\ue^\ID \mapsto g(\inframe_\tau)&&
        \suci_\ue^\ID \mapsto \unset\\
        \success_\ue^\ID \mapsto \false&&
      \end{array}\right.
  \end{alignat*}

  \underline{\textbf{Case} $\ai = {\pnai(j,1)}$.}
  Let $t_{\textsf{dec}} \equiv \dec(\pi_1(g(\inframe_\tau)),\sk_\hn)$, and let:
  \begin{alignat*}{2}
    \accept_\tau^{\ID_i} &\;\equiv\;\;&&
    \begin{alignedat}[t]{1}
      &\eq{\pi_2(g(\inframe_\tau))}
      {\mac{\spair
          {\pi_1(g(\inframe_\tau))}
          {\nonce^j}}
        {\mkey^{\ID_i}}{1}}\\
      &\wedge\;
      \eq{\pi_1(t_{\textsf{dec}})}{\ID_i}
    \end{alignedat}\\
    \incaccept_\tau^{\ID_i}  &\;\equiv\;\;&&
    \accept_\tau^{\ID_i} \wedge
    \Geq{\pi_2(t_{\textsf{dec}})}
    {\instate_\tau(\sqn^{\ID_i}_\hn)}\qquad\qquad\null
  \end{alignat*}
  \vspace{-2em}
  \begin{alignat*}{2}
    t_\tau &\;\equiv\; &&
    \begin{alignedat}[t]{2}
      &\ite{\accept_\tau^{\ID_1}}
      {\mac{\spair
          {\nonce^j}
          {\sqnsuc(\pi_2(t_{\textsf{dec}}))}}
        {\mkey^{\ID_1}}{2}\\ &\!\!}
      {\ite{\accept_\tau^{\ID_2}}
        {\mac{\spair
            {\nonce^j}
            {\sqnsuc(\pi_2(t_{\textsf{dec}}))}}
          {\mkey^{\ID_2}}{2}\\[-0.4em] & \qquad \cdots\\[-0.3em] &\!\!}
        {\unknownid}}
    \end{alignedat}\\
    \upstate_\tau &\;\equiv\;&&
    \begin{dcases}
      \tsuccess_\hn^{\ID_i} \mapsto
      \begin{alignedat}[t]{2}
        \ite{\incaccept_\tau^{\ID_i}&}{\nonce^j}
        {\instate_\tau(\tsuccess_\hn^{\ID_i})}
      \end{alignedat}\\
      \suci_\hn^{\ID_i} \mapsto
      \begin{alignedat}[t]{2}
        \ite{\incaccept_\tau^\ID &}
        {\suci^j }
        {\instate_\tau(\suci_\hn^{\ID_i})}
      \end{alignedat}\\
      \sqn_\hn^{\ID_i} \mapsto
      \begin{alignedat}[t]{2}
        \ite{\incaccept_\tau^{\ID_i}&}
        {\sqnsuc(\pi_2(t_{\textsf{dec}})) }
        {\instate_\tau(\sqn_\hn^{\ID_i})}
      \end{alignedat}\\
      \bauth_\hn^{j},\eauth_\hn^{j} \mapsto
      \begin{alignedat}[t]{2}
        &\ite{\accept_\tau^{\ID_1}}
        {\ID_1\\ & \!\!}
        {\ite{\accept_\tau^{\ID_2}}
          { \ID_2 \\[-0.3em] & \qquad \cdots \\[-0.3em] &\!\! }
          { \unknownid}}
      \end{alignedat}
    \end{dcases}
  \end{alignat*}
  \underline{\textbf{Case} $\ai = {\npuai{2}{\ID}{j}}$.}
  \begin{alignat*}{2}
    \accept_\tau^\ID  &\;\equiv\;\;&&
    \eq{g(\inframe_\tau)}
    {\mac{\spair
        {\instate_\tau(\bauth_\ue^\ID)}
        {\instate_\tau(\sqn_\ue^\ID)}}
      {\mkey^\ID}{2}}\\
    t_\tau  &\;\equiv\;\;&&
    \ite{\accept_\tau^\ID}
    {\textsf{ok}}
    {\textsf{error}}\\
    \upstate_\tau &\;\equiv\;&&
    % \begin{dcases}
    \eauth_\ue^\ID \mapsto
    \ite{\accept_\tau^\ID}
    {\instate_\tau(\bauth_\ue^\ID)}{\fail}
      % \\
      % \sync_\ue^\ID \mapsto \instate_\tau(\sync_\ue^\ID) \vee \accept_\tau^\ID
    % \end{dcases}
  \end{alignat*}
  \caption{The Symbolic Terms and State Updates for the $\supi$ Sub-Protocol.}
  \label{fig:protocol-term-supi}
\end{figure}

\begin{figure}[tbp]
  \footnotesize
  \underline{\textbf{Case} $\ai = {\newsession_\ID(j)}$.}
  $\upstate_\tau \;\equiv\; \success_\ue^\ID \mapsto \false$
  \vspace{0.3em}

  \underline{\textbf{Case} $\ai = {\cuai_\ID(j,0)}$.}
  \begin{alignat*}{2}
    t_\tau &\;\equiv\;\;&&
    \ite{\instate_\tau(\success_\ue^\ID)}
    {\instate_\tau(\suci_\ue^\ID)}
    {\nosuci}\\
    \upstate_\tau &\;\equiv\;&&
    \left\{
      \begin{array}{lcl}
        \success_\ue^\ID \mapsto \false &\quad&
        \eauth_\ue^\ID \mapsto \fail\\
        \uetsuccess^\ID \mapsto \instate_\tau(\success_\ue^\ID) &\quad&
        \bauth_\ue^\ID \mapsto \fail
      \end{array}
    \right.
  \end{alignat*}

  \underline{\textbf{Case} $\ai = {\cnai(j,0)}$.}
  Let $t^{\ID_i}_{\oplus} \equiv \instate_\tau(\sqn_\hn^{\ID_i}) \oplus \ow{\nonce^j}{\key^{\ID_i}} $, then:
  \begin{alignat*}{2}
    \msg_\tau^{\ID_i} &\;\equiv\;\;&&
    \striplet{\nonce^j}
    {t^{\ID_i}_{\oplus}}
    {\mac{\striplet
        {\nonce^j}
        {\instate_\tau(\sqn_\hn^{\ID_i})}
        {\instate_\tau(\suci_\hn^{\ID_i})}}
      {\mkey^{\ID_i}}{3}}\\
    \accept_\tau^{\ID_i} &\;\equiv\;\;&&
    \eq{\instate_\tau(\suci_\hn^{\ID_i})}{g(\inframe_\tau)}
    \wedge
    \neg \eq{\instate_\tau(\suci_\hn^{\ID_i})}{\unset}\\
    t_\tau &\;\equiv\;\; &&
    \begin{alignedat}[t]{2}
      &\ite{\accept_\tau^{\ID_1}}
      {\msg_\tau^{\ID_1}\\
        & \!\!}
      {\ite{\accept_\tau^{\ID_2}}
        {\msg_\tau^{\ID_2} \\[-0.3em]
          & \qquad \cdots \\[-0.3em] &\!\! }
        {\msg_{\tau}^{\IDdum}}}
    \end{alignedat}\\
    \upstate_\tau &\;\equiv&&
    \begin{dcases}
      \suci_\hn^{\ID_i} \mapsto
      \begin{alignedat}[t]{2}
        &\ite{\accept_\tau^{\ID_i}}
        {\unset}
        {\instate_\tau(\suci_\hn^{\ID_i})}
      \end{alignedat}\\
      \tsuccess_\hn^{\ID_i} \mapsto
      \begin{alignedat}[t]{2}
        &\ite{\accept_\tau^{\ID_i}}
        {\nonce^j}
        {\instate_\tau(\tsuccess_\hn^{\ID_i})}
      \end{alignedat}\\[0.3em]
      \bauth_\hn^{j} \mapsto
      \begin{alignedat}[t]{2}
        &\ite{\accept_\tau^{\ID_1}}
        {\ID_1\\ & \!\!}
        {\ite{\accept_\tau^{\ID_2}}
          {\ID_2 \\[-0.3em] & \qquad \cdots \\[-0.3em] &\!\! }
          {\unknownid}}
      \end{alignedat}
    \end{dcases}
  \end{alignat*}

  \underline{\textbf{Case} $\ai = {\cuai_\ID(j,1)}$.}
  Let $t_\sqn \equiv
  \pi_2(g(\inframe_\tau)) \xor \ow{\pi_1(g(\inframe_\tau))}{\key^\ID}$, then:
  \begin{alignat*}{2}
    \accept_\tau^\ID &\; \equiv \;\;&&
    \begin{alignedat}[t]{1}
      &\eq
      {\pi_3(g(\inframe_\tau))}
      {\mac{\striplet
          {\pi_1(g(\inframe_\tau))}
          {t_{\sqn}}
          {\instate_\tau(\suci_\ue^\ID)}}
        {\mkey^\ID}{3}}\\
      &\wedge \;
      \instate_\tau(\uetsuccess^{\ID}) \;\wedge \;
      \range{\instate_\tau(\sqn_\ue^\ID)}
      {t_\sqn}
    \end{alignedat}\\
    t_\tau &\; \equiv \;\;&&
    \ite{\accept_\tau^\ID}
    {\mac{\pi_1(g(\inframe_\tau))}{\mkey^\ID}{4}}
    {\textsf{error}}\\
    \upstate_\tau &\;\equiv&&\!\!\!\!
    \begin{dcases}
      \bauth_\ue^\ID,\eauth_\ue^\ID \mapsto
      \begin{alignedat}[t]{2}
        \ite{\accept_\tau^\ID}
        {&\pi_1(g(\inframe_\tau))}{ \fail}
      \end{alignedat}\\
      \sqn_\ue^\ID \mapsto
      \begin{alignedat}[t]{2}
        \ite{\accept_\tau^\ID}
        {&\sqnsuc(\instate_\tau(\sqn_\ue^\ID))}{ \instate_\tau(\sqn_\ue^\ID)}
      \end{alignedat}
    \end{dcases}
  \end{alignat*}

  \underline{\textbf{Case} $\ai = {\cnai(j,1)}$.}
  \begin{alignat*}{2}
    \accept_\tau^{\ID_i}
    &\; \equiv \;\;&&
    \eq{g(\inframe_\tau)}{\mac{\nonce^j}{\mkey^{\ID_i}}{4}}
    \wedge
    \eq{\instate_\tau(\bauth_\hn^j)}{\ID_i}\\
    \incaccept_\tau^{\ID_i}
    &\; \equiv \;\;&&
    \accept_\tau^{\ID_i} \wedge
    \eq{\instate_\tau(\tsuccess_\hn^{\ID_i})}{\nonce^j}\\
    t_\tau &\; \equiv \;\;&&
    \ite{\textstyle\bigvee_{i} \accept_\tau^{\ID_i}}{\textsf{ok}}{\textsf{error}}
  \end{alignat*}
  \vspace{-2em}
  \begin{alignat*}{2}
    \upstate_\tau &\;\equiv&&
    \begin{dcases}
      \sqn_\hn^{\ID_i} \mapsto
      \begin{alignedat}[t]{2}
        \ite{\incaccept_\tau^{\ID_i}&}
        {\sqnsuc(\instate_\tau(\sqn_\hn^{\ID_i}))\\ &}
        {\instate_\tau(\sqn_\hn^{\ID_i})}
      \end{alignedat}\\
      \suci_\hn^{\ID_i} \mapsto
      \begin{alignedat}[t]{2}
        \ite{\incaccept_\tau^{\ID_i} &}
        {\suci^j }
        {\instate_\tau(\suci_\hn^{\ID_i})}
      \end{alignedat}\\[0em]
      \eauth_\hn^{j} \mapsto
      \begin{alignedat}[t]{2}
        &\ite{\accept_\tau^{\ID_1}}
        {\ID_1\\ & \!\!}
        {\ite{\accept_\tau^{\ID_2}}
          {\ID_2 \\[-0.3em] & \qquad \cdots \\[-0.3em] &\!\! }
          {\unknownid}}
      \end{alignedat}
    \end{dcases}
  \end{alignat*}

  \underline{Case $\ai = {\fnai(j)}$.}
  \begin{alignat*}{2}
    \msg_\tau^{\ID_i} &\;\equiv\;\;&&
    \spair
    {\suci^j \oplus \row{\nonce^j}{\key^{\ID_i}}}
    {\mac{\spair
        {\suci^j}
        {\nonce^j}}
      {\mkey^{\ID_i}}{5}}\\
    t_\tau &\; \equiv \;\;&&
    \begin{alignedat}[t]{2}
      &\ite{\eq{\instate_\tau(\eauth_\hn^j)}{\ID_1}}
      {\msg_\tau^{\ID_1}\\ & \!\!}
      {\ite{\eq{\instate_\tau(\eauth_\hn^j)}{\ID_2}}
        {\msg_\tau^{\ID_2} \\[-0.3em] & \qquad \cdots \\[-0.3em] &\!\! }
        {\unknownid}}
    \end{alignedat}
  \end{alignat*}
  % \begin{alignat*}{2}
  %   \msg_\tau^{\ID} &\; \equiv \;\;&&
  %   \spair
  %   {\suci^j \oplus \row{\nonce^j}{\key^{\ID}}}
  %   {\mac{\spair
  %       {\suci^j}
  %       {\nonce^j}}
  %     {\mkey^{\ID}}{5}}\\
  %   t_\tau &\; \equiv \;\;&&
  %   \begin{alignedat}[t]{2}
  %     &\ite{\eq{\instate_\tau(\eauth_\hn^j)}{\ID_1}}
  %     {\msg_\tau^{\ID_1}\\ & \!\!}
  %     {\ite{\eq{\instate_\tau(\eauth_\hn^j)}{\ID_2}}
  %       {\msg_\tau^{\ID_2} \\[-0.3em] & \qquad \cdots \\[-0.3em] &\!\! }
  %       {\unknownid}}
  %   \end{alignedat}
  % \end{alignat*}

  \underline{\textbf{Case} $\ai = {\fuai_\ID(j)}$.}
  Let $t_\guti \equiv \pi_1(g(\inframe_\tau)) \xor \row{\instate_\tau(\eauth_\ue^{\ID})}{\key^\ID}$, then:
  \begin{alignat*}{2}
    \accept_\tau^\ID  &\; \equiv \;\;&&
    \begin{alignedat}[t]{1}
      &\eq
      {\pi_2(g(\inframe_\tau))}
      {\mac
        {\spair
          {t_{\guti}}
          {\instate_\tau(\eauth_\ue^{\ID})}}
        {\mkey^\ID}{5}}\\
      &\wedge\;
      \neg \eq{\instate_\tau(\eauth_\ue^{\ID})}{\fail}
      \;\wedge\;
      \neg \eq{\instate_\tau(\eauth_\ue^{\ID})}{\bot}
    \end{alignedat}\\
    t_\tau &\; \equiv \;\;&&
    \begin{alignedat}[t]{2}
      &\ite
      {\accept_\tau^\ID}
      {\textsf{ok}}
      { \textsf{error}}
    \end{alignedat}\\
    \upstate_\tau &\;\equiv\;&&
    \begin{dcases}
      \success_\ue^{\ID} \mapsto \accept_\tau^\ID\\
      \suci_\ue^{\ID} \mapsto
      \begin{alignedat}[t]{2}
        &\ite
        {\accept_\tau^\ID}
        {t_\guti}
        {\unset}
      \end{alignedat}
    \end{dcases}
  \end{alignat*}
  \caption{The Symbolic Terms and State Updates for $\ns_\ID(j)$ and the $\guti$ and $\refresh$ Sub-Protocols.}
  \label{fig:protocol-guti-refresh}
\end{figure}

%%% Local Variables:
%%% mode: latex
%%% TeX-master: "main"
%%% End:

The following soundness theorem states that security in the BC model implies computationally security:
\begin{proposition}
  \label{prop:main-unlink-bc}
  The $\faka$ protocol is $\sigma_\sunlink$-unlinkable in any computational model satisfying the axioms $\axioms$ if, for every $(\tau_l,\tau_r) \in \runlink$, we can derive $\cframe_{\tau_l} \sim \cframe_{\tau_r}$ using $\axioms$.
\end{proposition}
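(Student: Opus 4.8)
The plan is to prove the soundness direction directly. Assume that for every $(\tau_l,\tau_r)\in\runlink$ there is a derivation of $\cframe_{\tau_l}\sim\cframe_{\tau_r}$ from $\axioms$, fix an arbitrary bound $q$ and an arbitrary computational model $\cmodel$ satisfying $\axioms$, and show that no PPTM wins the $(q,\sigma_\sunlink)$-unlinkability game in $\cmodel$; since $q$ is arbitrary this gives $\sigma_\sunlink$-unlinkability. Because $\axioms$ holds in $\cmodel$ and $\sim$ is interpreted by computational indistinguishability, every derivation yields $\sem{\cframe_{\tau_l}}\approx\sem{\cframe_{\tau_r}}$. It then remains to turn a hypothetical winning adversary $\cala$ into a distinguisher for one of these pairs.

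First I would remove the adaptivity in the adversary's choice of oracle calls. Abstracting away the (adversarially chosen) bitstrings passed to $\sendue$ and $\sendhn$, and keeping only the oracle name together with the session numbers, handlers and identities involved, there are — once $q$ and $N$ are fixed — only finitely many possible \emph{shapes} $T$ of oracle-call sequences of length at most $q$. For each shape $T$ let $\cala_T$ be the adversary that runs $\cala$ and outputs a uniform bit as soon as $\cala$ would deviate from $T$. A routine averaging argument over the finitely many shapes shows that if $\cala$ has non-negligible advantage then some $\cala_T$ has non-negligible advantage while issuing exactly the calls prescribed by $T$, only the payloads being left free.

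Next I would match the game played by $\cala_T$ against the symbolic semantics. By construction of $\runlink$, a fixed shape $T$ determines a unique pair $(\tau_l,\tau_r)\in\runlink$: in the world $b=0$ every $\drawue_{\sigma_\sunlink}(\ID_{i_0},\ID_{i_1})$ call resolves to $\ID_{i_0}$ and the ensuing handler actions become actions on $\ID_{i_0}$, yielding $\tau_l$, and symmetrically for $b=1$, yielding $\tau_r$. One then checks — this is the bookkeeping-heavy core of the argument — that the inductive definitions of $t_\tau$, $\upstate_\tau$, $\cstate_\tau$ and $\cframe_\tau$ (Figs.~\ref{fig:protocol-term-supi} and~\ref{fig:protocol-guti-refresh}) faithfully model the $\faka$ sub-protocols and the oracles: the persistent, cross-session state is threaded through $\cstate_\tau$; the state modification performed by $\drawue_{\sigma_\sunlink}$ is exactly the $\ns_\ID(j)$ update $\success_\ue^\ID\mapsto\false$; the side oracles $\startsession$, $\freeue$, $\resulthn$ and $\resultue$ either leave the frame unchanged or are simulable from the recorded messages; and the payloads chosen by $\cala_T$ are represented by the adversarial symbols $g(\inframe_\tau)$. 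Interpreting each such $g$ in $\cmodel$ as the PPTM that computes $\cala_T$'s next payload from the messages received so far and its own coins — the same code in both worlds, since $\cala_T$ ignores $b$ — makes $\sem{\cframe_{\tau_l}}$ (resp.\ $\sem{\cframe_{\tau_r}}$) the exact distribution of the message sequence observed by $\cala_T$ when $b=0$ (resp.\ $b=1$).

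Finally, $\cala_T$'s guess is a PPTM applied to that observed message sequence together with $\cala_T$'s own coins, so a non-negligible distinguishing advantage for $\cala_T$ would contradict $\sem{\cframe_{\tau_l}}\approx\sem{\cframe_{\tau_r}}$. Hence no adversary wins the $(q,\sigma_\sunlink)$-unlinkability game, and $\faka$ is $\sigma_\sunlink$-unlinkable in $\cmodel$. I expect the main obstacle to be the matching step: rigorously proving that the inductively defined symbolic frames capture the stateful game execution, in particular getting the state-threading, the precise semantics of $\drawue_{\sigma_\sunlink}$, and the claimed simulability of the side oracles exactly right — the failure-message attack of~\cite{DBLP:conf/ccs/ArapinisMRRGRB12} is a reminder that this is precisely where modelling errors cause missed attacks.
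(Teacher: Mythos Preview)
Your proposal is correct and follows exactly the approach the paper invokes: the paper's own proof consists of a single sentence pointing to the ``Fixed Trace Privacy implies Bounded Session Privacy'' result of~\cite{DBLP:conf/csfw/ComonK17} and omits all details, whereas you have spelled out precisely that argument (finite-shape averaging to fix the trace of oracle calls, then interpreting the adversarial symbols $g$ as the adversary's payload-computing code to match $\sem{\cframe_{\tau_b}}$ with the view in world $b$). Your identification of the matching step as the delicate part is also accurate and is the reason the paper defers to prior work.
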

The proof of this result is basically the proof that Fixed Trace Privacy implies Bounded Session Privacy in~\cite{DBLP:conf/csfw/ComonK17}. We omit the details.
% The proof of a similar result can be found in~\cite{DBLP:conf/csfw/ComonK17}. We omit the details.
% The proof is in two steps. First, we show that a protocol is $(q,\sigma_\sunlink)$-unlinkable iff it is $(q,\sigma_\sunlink)$-unlinkable with a fixed trace of oracle calls. This follows from the fact that, for every $q$, there are finitely many traces of oracle calls. The proof is exactly the same that the proof that $m$-Fixed Trace Privacy is equivalent to $m$-Bounded Session Privacy.

%%% Local Variables:
%%% mode: latex
%%% TeX-master: "main"
%%% End:

\subsection{Axioms}
Using Proposition~\ref{prop:main-unlink-bc}, we know that to prove Theorem~\ref{thm:main-unlink} we need to derive $\cframe_{\tau_l} \sim \cframe_{\tau_r}$, for every $(\tau_l,\tau_r) \in \runlink$, using a set of inference rules $\axioms$. Moreover, we need the axioms $\axioms$ to be valid in any computational model where the asymmetric encryption $\enc{\_}{\_}{\_}$ is \textsc{ind-cca1} secure and $\owsym$ and $\rowsym$ (resp. $\macsym^1$--$\,\macsym^5$) satisfy jointly the $\prfass$ assumption.

Remark that the $\faka$ protocol described in Section~\ref{section:faka-description} is under-specified. E.g., we never specified how the $\pair{\_}{\_}$ function should be implemented. Instead of giving a complex specification of the protocol, we are going to put requirements on $\faka$ implementations through the set of axioms $\axioms$. Then, if we can derive $\cframe_{\tau_l} \sim \cframe_{\tau_r}$ using $\axioms$ for every $(\tau_l,\tau_r) \in \runlink$, we know that any implementation of $\faka$ satisfying the axioms $\axioms$ is secure.

Our axioms are of two kinds. First, we have \emph{structural axioms}, which are properties that are valid in any computational model. For example, we have axioms stating that $\sim$ is an equivalence relation. Second, we have \emph{implementation axioms}, which reflect implementation assumptions on the protocol functions. For example, we can declare that different identity symbols are never equal by having an axiom $\eq{\ID_1}{\ID_2} \sim \false$ for every $\ID_1 \not \equiv \ID_2$. % , or we can state than some primitive satisfies some cryptographic hypothesis
For space reasons, we only describe a few of them here (the full set of axioms $\axioms$ is given in %
\iffull
Appendix~\ref{section:app-axioms}). 
\else
\cite{DBLP:journals/corr/abs-1811-06922}).
\fi

\paragraph{Equality Axioms}
If $\eq{s}{t} \sim \true$ holds in any computational model then we know that the interpretations of $s$ and $t$ are always equal except for a negligible number of samplings. Let $s \peq t$ be a shorthand for $\eq{s}{t} \sim \true$. We use $\peq$ to specify functional correctness properties of the protocol function symbols. For example, the following rules state that the $i$-th projection of a pair is the $i$-th element of the pair, and that the decryption with the correct key of a cipher-text is equal to the message in plain-text:
\begin{mathpar}
  \small
  \unary{\pi_i(\pair{x_1}{x_2}) \peq x_i}
  \text{ for } i \in \{1,2\}
  
  \unary{\dec(\enc{x}{\pk(y)}{z},\sk(y)) \peq x}
\end{mathpar}

\paragraph{Structural Axioms}
Structural axioms are axioms which are valid in any computational model, e.g.:
\begin{mathpar}
  \infer[\fa]{
    f(\vec{u}_1),\vec{v}_1 \sim f(\vec{u}_2), \vec{v}_2
  }{
    \vec{u}_1,\vec{v}_1 \sim \vec{u}_2, \vec{v}_2
  }

  \infer[R]{
    \vec u, s \sim \vec v
  }{
    \vec u, t \sim \vec v
    \;&\;
    s \peq t
  }    
\end{mathpar}
The axiom $\fa$ states that to show that two function applications are indistinguishable, it is sufficient to show that their arguments are indistinguishable. The axiom $R$ states that if $s \peq t$ holds then we can safely replace $s$ by $t$.

\paragraph{Cryptographic Assumptions}
We now explain how cryptographic assumptions are translated into axioms. We illustrate this on the unforgeability property of the functions $\macsym^1$--$\,\macsym^5$. Recall that $\tue_{\ID}$ uses the same secret key $\mkey^\ID$ for these five functions. Therefore, instead of the standard $\prfass$ assumption, we assume that these functions are \emph{jointly} $\prfass$\ch{, i.e.}{. Basically, we assume that} $\macsym^1$--$\,\macsym^5$ are \emph{simultaneously} computationally indistinguishable from random functions.
\ch{}{While this is a non-usual assumption, it is simple to build a set of functions $H_1,\dots,H_l$ which are jointly $\prfass$ from a single $\prfass$ $H$. For example, let $\ttag_1,\dots,\ttag_l$ be non-ambiguous tags, and let $H_i(m,\key) = H(\ttag_i(m),\key)$. Then, $H_1,\dots,H_l$ are jointly $\prfass$ whenever $H$ is a $\prfass$ (see %
  \iffull
  Appendix~\ref{app:subsection-joints-prf}).
  \else
  \cite{DBLP:journals/corr/abs-1811-06922}).
  \fi}

It is well-known that if $H$ is a $\prfass$ then $H$ is unforgeable against an adversary with oracle access to $H(\cdot,\mkey)$. Similarly, we can show that if $H,H_1,\dots,H_l$ are jointly $\prfass$, then no adversary can forge a mac of $H(\cdot,\mkey)$, even if it has oracle access to $H(\cdot,\mkey),H_1(\cdot,\mkey),\dots,H_l(\cdot,\mkey)$. We translate this property as follows: let $s,m$ be ground terms where $\mkey$ appears only in subterms of the form $\mac{\_}{\mkey}{\,\_}$, then for every $1 \le j \le 5$, if $S$ is the set of subterms of $s,m$ of the form $\mac{\_}{\mkey}{j}$ then we have an instance of $\textsc{euf-mac}^j$:
\[
  \unary{s = \mac{m}{\mkey}{j} \ra
    \bigvee_{u \in S} s = \mac{u}{\mkey}{j}}
  \tag{$\textsc{euf-mac}^j$}
\]
where $u = v$ denotes the term $\eq{u}{v}$. \ch{Basically, if $s$ is a valid $\macsym$ then $s$ must have been honestly generated.}{} Similarly, we can build a set of axioms reflecting the fact that some functions are jointly collision-resistant. Indeed, if $H,H_1,\dots,H_l$ are jointly $\prfass$, then no adversary can build a collision for $H(\cdot,\mkey)$, even if it has oracle access to $H(\cdot,\mkey),H_1(\cdot,\mkey),\dots,H_l(\cdot,\mkey)$. This translates as follows: let $m_1,m_2$ be ground terms, if $\mkey$ appears only in subterms of the form $\mac{\_}{\mkey}{\,\_}$ then we have an instance of $\textsc{cr}^j$:
\[
  \begin{array}[c]{c}
    \unary{\mac{m_1}{\mkey}{j} = \mac{m_2}{\mkey}{j}
      \ra m_1 = m_2}
  \end{array}
  \tag{$\textsc{cr}^j$}
\]
These axioms are sound (the proof is given in~%
\iffull
Appendix~\ref{app:subsection-joints-prf}).
\else
\cite{DBLP:journals/corr/abs-1811-06922}).
\fi
\begin{proposition}
  \label{prop:euf-mac-valid}
  For every $1 \le j \le 5$, the $\textsc{euf-mac}^j$ and $\textsc{cr}^j$ axioms are valid in any computational model where the $(\macsym^i)_i$ functions are interpreted as jointly $\prfass$ functions.
\end{proposition}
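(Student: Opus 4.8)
The plan is the standard two-step reduction that turns a $\prfass$ hypothesis into a Bana--Comon axiom. Fix $j \in \{1,\dots,5\}$ and a computational model $\cmodel$ in which $\macsym^1,\dots,\macsym^5$ are interpreted as jointly $\prfass$ functions. For a given $\textsc{euf-mac}^j$ instance with ground terms $s,m$ (where $\mkey$ occurs only inside subterms of the form $\mac{\_}{\mkey}{\,\_}$, and $S$ is the set of $\mac{\_}{\mkey}{j}$-subterms of $s,m$), the axiom is valid in $\cmodel$ exactly when the Boolean term $\beta \equiv \big(s = \mac{m}{\mkey}{j}\big) \wedge \bigwedge_{u \in S}\big(s \neq \mac{u}{\mkey}{j}\big)$ evaluates to $\true$ only with negligible probability; for $\textsc{cr}^j$ the corresponding term is $\beta \equiv \big(\mac{m_1}{\mkey}{j} = \mac{m_2}{\mkey}{j}\big) \wedge (m_1 \neq m_2)$. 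So I would show $\beta \peq \false$ in both cases, using the same argument.

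First, I would reduce to truly random functions. Build a PPTM $\mathcal{B}$ for the joint $\prfass$ game against $\macsym^1,\dots,\macsym^5$: given oracles $\calo_1,\dots,\calo_5$, $\mathcal{B}$ samples the randomness of every name except $\mkey$ and of every adversarial symbol, evaluates $s$, $m$ and $\beta$ exactly as $\cmodel$ interprets them, \emph{except} that each subterm $\mac{t}{\mkey}{i}$ is computed by first evaluating $t$ to a bit-string and then querying $\calo_i$ on it, and finally outputs $1$ iff $\beta$ evaluated to $\true$. This is a faithful simulation precisely because $\mkey$ never occurs outside such macs, so $\mathcal{B}$ never needs the value of $\mkey$ or of any mac it cannot obtain from an oracle; and it runs in polynomial time since $s,m,\beta$ are fixed finite terms and every function symbol of $\cmodel$ is a PPTM. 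When the oracles are $\macsym^i(\cdot,\hat{\mkey})$ for a uniform key, $\Pr[\mathcal{B}=1]$ equals the probability that $\beta$ is $\true$ in $\cmodel$; when they are independent uniformly random functions $g_1,\dots,g_5$, it equals the same probability in the idealized experiment where each $\macsym^i(\cdot,\mkey)$ is replaced by $g_i$. By the joint $\prfass$ assumption these two probabilities differ by a negligible function of $\eta$, so it remains to bound the idealized one.

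Then, I would bound the bad event in the idealized experiment by lazy sampling of the $g_i$'s. For $\textsc{euf-mac}^j$, evaluating $s$ and $m$ queries $g_j$ exactly at the values $\{\hat u : u \in S\}$; once these finitely many values are revealed, $\hat s$ and $\hat m$ are fixed, and forming $\mac{m}{\mkey}{j}$ then queries $g_j$ at $\hat m$. If $\hat m$ equals some $\hat u$ with $u \in S$, then $g_j(\hat m)=g_j(\hat u)$, so $\beta=\true$ would demand $\hat s = g_j(\hat u)$ and $\hat s \neq g_j(\hat u)$, which is impossible; otherwise $g_j(\hat m)$ is a fresh uniform value in $\{0,1\}^\eta$, independent of the already-fixed $\hat s$, hence equal to it with probability $2^{-\eta}$. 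So $\beta$ is $\true$ with probability at most $2^{-\eta}$. For $\textsc{cr}^j$, let $X$ be the constant-size set of all points queried to $g_j$ while evaluating $m_1$, $m_2$ and the two outer macs; lazy sampling gives that some two distinct points of $X$ collide under $g_j$ with probability at most $\binom{|X|}{2}2^{-\eta}$, and the bad event is contained in this collision event since $\hat m_1 \neq \hat m_2$ while both lie in $X$. Combining with the reduction, $\beta$ is $\true$ in $\cmodel$ with negligible probability, i.e. the axiom holds; since this is uniform over instances and over $j$, Proposition~\ref{prop:euf-mac-valid} follows.

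The hard part is the bookkeeping in the first step: one must check that $\mathcal{B}$ is genuinely efficient and that rerouting every $\mac{\_}{\mkey}{i}$-subterm through an oracle perfectly reproduces $\cmodel$ in the real case. This is exactly where the side condition ``$\mkey$ occurs only inside $\mac{\_}{\mkey}{\,\_}$-subterms'' is indispensable, and it is also why the \emph{joint} $\prfass$ assumption (rather than the single-function one) is required, since within $s$ and $m$ the key $\mkey$ may simultaneously appear under several of the superscripts $\macsym^1,\dots,\macsym^5$. The rest is a routine random-function calculation.
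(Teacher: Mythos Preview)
Your proof is correct and rests on the same core idea as the paper's: simulate the evaluation of the axiom's terms by routing every $\mac{\_}{\mkey}{i}$-subterm through an oracle (which is exactly what the side condition on $\mkey$ makes possible), then bound the bad event in the random-function world. The paper decomposes this differently: it introduces intermediate game-based notions---``\textsc{euf-mac} secure with a key jointly used by the other $\macsym^i$'', and the analogous notion for collision resistance---first shows that jointly $\prfass$ implies these by the standard forgery/collision argument, and then separately shows that these game-based properties imply the corresponding axioms by the term-simulation argument; for $\textsc{cr}^j$ it even routes through the joint-\textsc{euf-mac} notion rather than going back to $\prfass$ directly. Your approach collapses both steps into a single reduction, which is tidier for this specific goal; the paper's modularity yields standalone intermediate lemmas and makes each individual reduction a textbook one.
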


%%% Local Variables:
%%% mode: latex
%%% TeX-master: "main"
%%% End:

\section{Security Proofs}
\label{section:security-proofs-body}
We now state the authentication and $\sigma_{\sunlink}$-unlinkability lemmas. For space reasons, we only sketch the proofs (the full proofs are given in %
\iffull
Appendix~\ref{section:acc-charac-app-first} and \ref{section:app-unlink}).
\else
the technical report~\cite{DBLP:journals/corr/abs-1811-06922}).
\fi

\subsection{Mutual Authentication of the $\faka$ Protocol}
Authentication is modeled by a correspondence property~\cite{287633} of the form ``in any execution, if event $A$ occurs, then event $B$ occurred''. This can be translated in the BC logic.

\paragraph{Authentication of the User by the Network}
$\faka$ guarantees authentication of the user by the network if in any execution, if $\thn(j)$ believes it authenticated $\tue_{\ID}$, then $\tue_{\ID}$ stated earlier that it had initiated the protocol with~$\thn(j)$.

We recall that $\eauth_\hn^j$ stores the identity of the $\tue$ authenticated by $\thn(j)$, and that $\tue_{\ID}$ stores in $\bauth_\ue^\ID$ the random challenge it received. Moreover, the session $\thn(j)$ is uniquely identified by its random challenge $\nonce^j$. Therefore, authentication of the user by the network is modeled by stating that, for any symbolic trace $\tau \in \support(\runlink)$% (where $\dom(\mathcal{R})$ is the domain of $\mathcal{R}$)
, if $\instate_\tau(\eauth_\hn^j) = \ID$ then there exists some prefix $\tau'$ of $\tau$ such that $\instate_{\tau'}(\bauth_\ue^\ID) = \nonce^j$. Let $\popreleq$ be the prefix ordering on symbolic traces, then:
\begin{lemma}
  \label{lem:auth-serv-net-body}
  For every $\tau \in \support(\runlink)$, $\ID \in \iddom$ and $j \in \mathbb{N}$, there is derivation using $\axioms$ of:
  \[
    \textstyle \instate_\tau(\eauth_\hn^j) = \ID \;\ra\;
    \bigvee_{\tau' \popreleq \tau}\;
    \instate_{\tau'}(\bauth_\ue^\ID) = \nonce^j
  \]
\end{lemma}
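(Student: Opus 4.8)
The plan is to prove the correspondence property by induction on the length of the symbolic trace $\tau$, tracking how the variable $\eauth_\hn^j$ can ever become equal to a genuine identity $\ID \in \iddom$. The base case ($\tau = \epsilon$) is trivial, since initially $\eauth_\hn^j$ is unset (equal to $\bot$ or $\unknownid$, never equal to an $\ID \in \iddom$). For the inductive step, the key observation is that $\eauth_\hn^j$ is only ever written in three actions: $\pnai(j,1)$ (the $\supi$ sub-protocol), $\cnai(j,1)$ (the $\guti$ sub-protocol), and — reading Fig.~\ref{fig:protocol-guti-refresh} — $\cnai(j,1)$ again; all other actions leave it untouched, so by induction hypothesis we are done unless the last action $\ai$ of $\tau$ is one of these. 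Thus the whole argument reduces to analysing the guard under which $\eauth_\hn^j$ gets set to $\ID$ in each of these two actions.

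Consider first the case $\ai = \pnai(j,1)$. By Fig.~\ref{fig:protocol-term-supi}, $\instate_\tau(\eauth_\hn^j) = \ID$ forces $\accept_\tau^{\ID}$ to hold, which in particular implies
\[
  \pi_2(g(\inframe_\tau)) = \mac{\spair{\pi_1(g(\inframe_\tau))}{\nonce^j}}{\mkey^\ID}{1}.
\]
Now I would invoke $\textsc{euf-mac}^1$: since the only subterms of the relevant terms of the form $\mac{\_}{\mkey^\ID}{1}$ are those honestly generated by $\tue_{\ID}$ in some prior action $\npuai{1}{\ID}{i}$ — which produces the tag $\mac{\spair{t^{\textsf{enc}}}{g(\inframe_{\tau''})}}{\mkey^\ID}{1}$ — unforgeability yields that $g(\inframe_\tau)$ equals (up to the equality axioms, using $\textsc{cr}^1$ and the pairing axioms to match the second components) a pair whose second component is $g(\inframe_{\tau''})$, and hence $g(\inframe_{\tau''}) = \nonce^j$. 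But by the state update of $\npuai{1}{\ID}{i}$ we have $\cstate_{\tau''}(\bauth_\ue^\ID) = g(\inframe_{\tau''})$, so taking $\tau' = \tau''$ (a prefix of $\tau$) gives $\instate_{\tau'}(\bauth_\ue^\ID) = \nonce^j$ as required. The freshness of $\nonce^j$ — it is a name indexed by $j$ occurring only in session $\thn(j)$ — is what lets us conclude the prior action was a $\tue_{\ID}$ action tied to \emph{this} network session; this will need an auxiliary "fresh nonce" argument, which I expect the paper handles with a dedicated freshness axiom.

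The case $\ai = \cnai(j,1)$ is analogous but uses $\textsc{euf-mac}^4$: $\instate_\tau(\eauth_\hn^j) = \ID$ forces $\accept_\tau^{\ID}$, hence $g(\inframe_\tau) = \mac{\nonce^j}{\mkey^\ID}{4}$ together with $\instate_\tau(\bauth_\hn^j) = \ID$. Unforgeability of $\macsym^4$ says this tag must have been honestly produced, and the only action producing a $\mac{\_}{\mkey^\ID}{4}$ tag is $\cuai_\ID(i,1)$, which outputs $\mac{\pi_1(g(\inframe_{\tau''}))}{\mkey^\ID}{4}$ only when its own $\accept_\tau^\ID$ guard holds; by $\textsc{cr}^4$ we get $\pi_1(g(\inframe_{\tau''})) = \nonce^j$, and the state update of that action sets $\bauth_\ue^\ID = \pi_1(g(\inframe_{\tau''})) = \nonce^j$. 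Again $\tau'' \popreleq \tau$, which closes the induction.

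**The main obstacle** I anticipate is the bookkeeping around \emph{which} $\mac{\_}{\mkey^\ID}{j}$ subterms actually occur in $s$ and $m$ when instantiating $\textsc{euf-mac}^j$, and matching the honestly-generated candidate tags across \emph{all} prior actions of \emph{all} sessions — the set $S$ in the axiom statement ranges over a trace-dependent family, so one must argue that the only members of $S$ whose payload could equal $\spair{\cdot}{\nonce^j}$ (resp.\ $\nonce^j$) are those from $\tue_{\ID}$ sessions, using the freshness of $\nonce^j$ and a case analysis on the structure of the payloads of tags produced elsewhere (e.g.\ the $\macsym^1$ tags in $\npuai{1}{\ID'}{i}$ for $\ID' \ne \ID$ use key $\mkey^{\ID'} \ne \mkey^\ID$, so they do not even belong to $S$). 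Making this "tag origin" reasoning rigorous — essentially a syntactic inspection of Figs.~\ref{fig:protocol-term-supi} and~\ref{fig:protocol-guti-refresh} — is the technical heart of the argument; everything else is a routine application of the equality axioms, $R$, $\fa$, and the correctness rules for $\pi_i$ and $\dec$.
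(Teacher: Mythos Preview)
Your proposal is correct and matches the paper's approach: induction on $\tau$, with the only nontrivial cases being $\ai = \pnai(j,1)$ and $\ai = \cnai(j,1)$, each handled via the appropriate $\textsc{euf-mac}$ axiom together with $\textsc{cr}$, pair injectivity, and freshness of $\nonce^j$. The paper factors the unforgeability reasoning into separate ``necessary acceptance condition'' lemmas (Lemma~\ref{lem:acc-cond-body} for $\pnai(j,1)$ and an analogous one for $\cnai(j,1)$) rather than inlining it, and is explicit that the set $S$ in $\textsc{euf-mac}^1$ also contains tags from prior $\pnai(j_0,1)$ actions (eliminated via $\nonce^{j_0} \ne \nonce^j$), which you only allude to in your obstacle paragraph---but the substance is the same.
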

% The full proof of Lemma~\ref{lem:auth-serv-net-body} can be found in Appendix~\ref{app:subsection-auth-serv-net-proof}. Still, we outline the main steps.
The key ingredients to show this lemma are \emph{necessary conditions} for a message to be accepted by the network. Basically, a message can be accepted only if it was honestly generated by a subscriber. These necessary conditions rely on the unforgeability and collision-resistance of $(\macsym^j)_{1 \le j \le 5}$.

\paragraph{Necessary Acceptance Conditions}
Using the $\textsc{euf-mac}^j$ and $\textsc{cr}^j$ axioms, we can find necessary conditions for a message to be accepted by a user. We illustrate this on the $\thn$'s second message in the $\supi$ sub-protocol. We depict a part of the execution between session $\tue_{\ID}(i)$ and session $\thn(j)$ below:
\begin{center}
  \begin{tikzpicture}[>=latex,scale=0.94,transform shape]
    \tikzset{every node/.style={font=\footnotesize}};
    \tikzset{en/.style={minimum height=0.2cm,minimum width=0.15cm,fill=black}};
    \tikzset{mb/.style={solid,thick,draw=gray!90}};
    \tikzset{mp/.style={inner sep=0,outer sep=0,draw=none}};
    \tikzset{sb/.style={draw,fill=white,double,align=left}};

    \path (0,0)
    node[above,yshift=0.1cm,anchor=south] (ue) {$\tue_{\ID}(i)$};

    \path[name path=aline] (ue) -- ++(0,-1.6);

    \path[draw,very thick]
    (ue) -- ++(0,-1.6) node(xend){};

    \path (6.8,0)
    node[above,yshift=0.1cm,anchor=south] (hn) {$\thn(j)$};

    \path[name path=bline] (hn) -- ++(0,-1.7);

    \path[name path=xendp] (xend) -- ++(8,0);
    \path[name intersections={of=xendp and bline}]
    (intersection-1) node[mp] (xendr){};
    \path[draw,very thick]
    (hn) -- (xendr) node{};

    \path (hn) -- ++(0,-0.5)
    node[right]{$\pnai(j,0)$}
    [draw,->] -- ++(-6.8,0)
    node[midway,above]{$\nonce^j$}
    node[mp] (ue1) {};

    \path (ue1) -- ++(0,-0.9)
    node[left]{$\npuai{1}{\ID}{i}$}
    [draw,->] -- ++(6.8,0)
    node[midway,above]{$
      \lpair
      {\enc{\pair{\agent{\ID}}{\sqn_\ue}}
        {\pk_\hn}{\enonce^i}}
      {\mac{\spair
          {\enc{\pair{\agent{\ID}}{\sqn_\ue}}
            {\pk_\hn}{\enonce^i}}
          {\nonce^j}}
        {\mkey}{1}}$}
    node[right]{$\pnai(j,1)$};
  \end{tikzpicture}
\end{center}
We then prove that if a message is accepted by $\thn(j)$ as coming from $\tue_{\ID}$, then the first component of this message must have been honestly generated by a session of $\tue_{\ID}$. Moreover, we know that this session received the challenge~$\nonce^j$.
\begin{lemma}
  \label{lem:acc-cond-body}
  Let $\ID \in \iddom$ and $\tau \in \support(\runlink)$ be a trace ending with $\pnai(j,1)$. There is a derivation using $\axioms$~of:
  \[
    \accept_\tau^{\ID} \ra
    \quad\;\bigvee_{
      \mathclap{\taut = \_,\npuai{1}{\ID}{\_} \popreleq \tau}}\quad\;
    \left(
      \pi_1(g(\inframe_\tau)) =
      t^{\textsf{enc}}_\taut
      \wedge
      g(\inframe_{\taut}) = \nonce^j
    \right)
  \]
\end{lemma}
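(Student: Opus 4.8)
The plan is to derive the implication using the unforgeability and collision-resistance axioms for $\macsym^1$, together with the structural and equality axioms. First I would unfold $\accept_\tau^\ID$: by its definition for a trace ending with $\pnai(j,1)$ (Fig.~\ref{fig:protocol-term-supi}) it is the conjunction of $\pi_2(g(\inframe_\tau)) = \mac{\spair{\pi_1(g(\inframe_\tau))}{\nonce^j}}{\mkey^\ID}{1}$ and $\pi_1(\dec(\pi_1(g(\inframe_\tau)),\sk_\hn)) = \ID$; only the first conjunct is needed here. Writing $s \equiv \pi_2(g(\inframe_\tau))$ and $m \equiv \spair{\pi_1(g(\inframe_\tau))}{\nonce^j}$, I would apply $\textsc{euf-mac}^1$ to $s = \mac{m}{\mkey^\ID}{1}$. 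Its syntactic side condition — that $\mkey^\ID$ occurs in $s$ and $m$ only inside subterms of the form $\mac{\_}{\mkey^\ID}{\,\_}$ — follows from a structural induction over the mutually defined families $t_{\tauo}$, $\upstate_{\tauo}$, $\cframe_{\tauo}$ (Figs.~\ref{fig:protocol-term-supi} and~\ref{fig:protocol-guti-refresh}): one checks that $\mkey^\ID$ is only ever used as the key argument of the integrity functions $\macsym^1$--$\macsym^5$ (the confidentiality functions $\owsym,\rowsym$ are keyed by $\key^\ID$, the encryption by $\pk_\hn$), and that $\mkey^\ID$ enters $\inframe_\tau$ only through the adversarial symbol $g$. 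The axiom then gives $\accept_\tau^\ID \ra \bigvee_{u \in S}\, \pi_2(g(\inframe_\tau)) = \mac{u}{\mkey^\ID}{1}$, where $S$ is the set of subterms of $s,m$ of the form $\mac{\_}{\mkey^\ID}{1}$.

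The second step is to pin down $S$. Inspecting the protocol terms, the subterms of $\inframe_\tau$ of the form $\mac{\_}{\mkey^\ID}{1}$ are of exactly two kinds: \textbf{(a)} those produced by the $\tue_\ID$ actions $\npuai{1}{\ID}{\_}$, each of the shape $\mac{\spair{t^{\textsf{enc}}_{\taut}}{g(\inframe_{\taut})}}{\mkey^\ID}{1}$ for the prefix $\taut = \_,\npuai{1}{\ID}{\_} \popreleq \tau$ ending at that action; and \textbf{(b)} those occurring inside the acceptance conditions of the network actions $\pnai(j',1)$, of the shape $\mac{\spair{\pi_1(g(\inframe_{\tau'}))}{\nonce^{j'}}}{\mkey^\ID}{1}$. (There are no others: the identity $\ID$ fixes the key, $\macsym^1$ never occurs in the $\guti$ or $\refresh$ sub-protocol terms, and the $\npuai{2}{\ID}{\_}$ and $\pnai(j',1)$ \emph{outputs} carry only $\macsym^2$-tags.) For a disjunct of type \textbf{(b)}, since $\pnai(j,1)$ occurs exactly once in $\tau$ — at its end — we have $j' \ne j$; combining $\pi_2(g(\inframe_\tau)) = \mac{\spair{\pi_1(g(\inframe_{\tau'}))}{\nonce^{j'}}}{\mkey^\ID}{1}$ with the first conjunct of $\accept_\tau^\ID$, then applying $\textsc{cr}^1$ and the projection axiom $\pi_2(\pair{x_1}{x_2}) \peq x_2$, forces $\nonce^j = \nonce^{j'}$, which is refuted by the structural axiom stating that two distinct names are unequal up to a negligible number of samplings; hence every type-\textbf{(b)} disjunct is provably $\false$ and is dropped.

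It remains to treat a disjunct of type \textbf{(a)}, for a fixed $\taut = \_,\npuai{1}{\ID}{\_} \popreleq \tau$. Combining $\pi_2(g(\inframe_\tau)) = \mac{\spair{t^{\textsf{enc}}_{\taut}}{g(\inframe_{\taut})}}{\mkey^\ID}{1}$ with the first conjunct of $\accept_\tau^\ID$ gives $\mac{\spair{\pi_1(g(\inframe_\tau))}{\nonce^j}}{\mkey^\ID}{1} = \mac{\spair{t^{\textsf{enc}}_{\taut}}{g(\inframe_{\taut})}}{\mkey^\ID}{1}$; $\textsc{cr}^1$ then yields $\spair{\pi_1(g(\inframe_\tau))}{\nonce^j} = \spair{t^{\textsf{enc}}_{\taut}}{g(\inframe_{\taut})}$, and projecting this equality with the axioms $\pi_i(\pair{x_1}{x_2}) \peq x_i$ and rewriting with the rule $R$ gives $\pi_1(g(\inframe_\tau)) = t^{\textsf{enc}}_{\taut}$ and $g(\inframe_{\taut}) = \nonce^j$, i.e.\ exactly the $\taut$-th disjunct of the conclusion. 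Distributing $\ra$ over the finite disjunction indexed by $S$ and discharging the propositional steps with the structural axioms completes the derivation (if no $\npuai{1}{\ID}{\_}$ occurs in $\tau$, the type-\textbf{(a)} part of $S$ is empty and we get $\accept_\tau^\ID \ra \false$, matching the empty right-hand side). The main obstacle is the bookkeeping in the first two steps: the structural induction showing that $\mkey^\ID$ is used only as a $\macsym$ key throughout every $t_{\tauo}$, $\upstate_{\tauo}$ and $\cframe_{\tauo}$ (needed for the side conditions of $\textsc{euf-mac}^1$ and $\textsc{cr}^1$), and the precise enumeration of the honest $\macsym^1_{\mkey^\ID}$-subterms of $\inframe_\tau$; once these are in place, the rest is routine equational and propositional manipulation.
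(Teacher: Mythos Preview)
Your proposal follows exactly the paper's route: apply $\textsc{euf-mac}^1$ (with the key-occurrence side condition established by induction over the protocol terms, which the paper packages as invariant \textsc{(inv-key)}), split the honest $\macsym^1_{\mkey^\ID}$-subterms into the user-generated ones from $\npuai{1}{\ID}{\_}$ and the network-side ones from earlier $\pnai(j',1)$, eliminate the latter via $\textsc{cr}^1$ and $\nonce^{j'}\neq\nonce^j$, and extract the two conjuncts in the former case via $\textsc{cr}^1$.

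One small but genuine correction in the last step of both case analyses: to pass from the boolean term $\spair{a}{b} = \spair{c}{d}$ produced by $\textsc{cr}^1$ to $a=c$ and $b=d$, the paper uses the pair \emph{injectivity} axioms $\textsf{EQInj}(\pair{\cdot}{\_})$ and $\textsf{EQInj}(\pair{\_}{\cdot})$ from $\axioms_{\textsf{inj}}$, not the projection axioms $\pi_i(\pair{x_1}{x_2})\peq x_i$ together with $R$. The projection axioms live at the $\peq$ level and $R$ only licenses rewriting by $\peq$-equalities inside $\sim$-formulas; there is no congruence rule of the shape $\eq{x}{y}\ra\eq{f(x)}{f(y)}$ among the listed axioms, so ``projecting the equality'' is not a derivable step as you describe it. Swapping in the $\textsf{EQInj}$ axioms fixes this immediately and leaves the rest of your argument intact.
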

\begin{proof}[Proof sktech]
  % (full proof in Appendix~\ref{subsection:inv-nec-cond})
  Let $t_{\textsf{dec}}$ be the term $\dec(\pi_1(g(\inframe_\tau)),\sk_\hn)$. Then $\thn(j)$ accepts the last message iff the following test succeeds:
\[
  \dashuline{
    \pi_2(g(\inframe_\tau))
    =
    \mac{\spair
      {\pi_1(g(\inframe_\tau))}
      {\nonce^j}}
    {\mkey^{\ID}}{1}}
  \wedge
  \pi_1(t_{\textsf{dec}}) = \ID
\]
By applying $\textsc{euf-mac}^1$ to the underlined part above, we know that if the test holds then $\pi_2(g(\inframe_\tau))$ is equal to one of the honest $\macsym_{\mkey^\ID}^1$ subterms of $\pi_2(g(\inframe(\tau)))$, which are the terms:
\begin{gather}
  \left(
    \mac{\spair
      {t^{\textsf{enc}}_\taut}
      {g(\inframe_{\taut})}}
    {\mkey^\ID}{1}
  \right)_{\substack{\taut = \_, \npuai{1}{{\ID}}{\_} \popre \tau}}\\
  \left(
    {\mac
      {\spair
        {\pi_1(g(\inframe_{\taut}))}
        {\nonce^{j_1}}}
      {\mkey^\ID}{1}}
  \right)_{\taut = \_, \pnai(j_1,1) \popre \tau}
  \label{eq:case-1-body}
\end{gather}
Where $\popre$ is the strict version of $\popreleq$. We know that $\pnai(j,1)$ cannot appear twice in $\tau$. Hence for every $\taut = \_, \pnai(j_1,1) \popre \tau$, we know that $j_1 \ne j$. Using the fact that two distinct nonces are never equal except for a negligible number of samplings, we can derive that $\eq{\nonce^{j_1}}{\nonce^j} = \false$. Using an axiom stating that the pair is injective and the $\textsc{cr}^1$ axiom, we can show that $\pi_2(g(\inframe_\tau))$ cannot by equal to one of the terms in~\eqref{eq:case-1-body}.

Finally, for every $\taut = \_, \npuai{1}{{\ID}}{\_} \popre \tau$, using the $\textsc{cr}^1$ and the pair injectivity axioms we can derive that:
\begin{multline*}
  \small
  \mac{\spair
    {\pi_1(g(\inframe_\tau))}
    {\nonce^j}}
  {\mkey^{\ID}}{1}
  =
  \mac{\spair
    {t^{\textsf{enc}}_\taut}
    {g(\inframe_{\taut})}}
  {\mkey^\ID}{1}\\
  \ra
  \pi_1(g(\inframe_\tau))
  =
  t^{\textsf{enc}}_\taut
  % \enc{\pair{\ID}{\instate_{\taut}(\sqn_\ue^\ID)}}{\pk_\hn}{\enonce^{i_1}}
  \wedge
  \nonce^j
  =
  g(\inframe_{\taut})
  \qedhere
\end{multline*}
\end{proof}
We prove a similar lemma for $\cnai(j,1)$. The proof of Lemma~\ref{lem:auth-serv-net-body} is straightforward using these two properties.

\paragraph{Authentication of the Network by the User}
The $\faka$ protocol also provides authentication of the network by the user. That is, in any execution, if $\tue_{\ID}$ believes it authenticated session $\thn(j)$ then $\thn(j)$ stated that it had initiated the protocol with $\tue_{\ID}$. Formally:
\begin{lemma}
  \label{lem:auth-net-body}
  For every $\tau \in \support(\runlink)$, $\ID \in \iddom$ and $j \in \mathbb{N}$, there is derivation using $\axioms$ of:
  \[
    \textstyle
    \instate_\tau(\eauth_\ue^\ID) = \nonce^j
    \;\ra\;
    \bigvee_{\tau' \popreleq \tau}\;
    \instate_{\tau'}(\bauth_\hn^j) = \ID
  \]
\end{lemma}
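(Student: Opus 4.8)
The plan is to mirror the proof of Lemma~\ref{lem:auth-serv-net-body}, replacing its network-side acceptance analysis by the dual, user-side one. Inspecting Fig.~\ref{fig:protocol-term-supi} and Fig.~\ref{fig:protocol-guti-refresh}, the variable $\eauth_\ue^\ID$ is updated only by $\npuai{1}{\ID}{i}$ and $\cuai_\ID(i,0)$ (which set it to $\fail$), by $\npuai{2}{\ID}{i}$ (which sets it to $\instate_{\taut}(\bauth_\ue^\ID)$ when the guard $\accept_{\taut}^\ID$ holds and to $\fail$ otherwise, where $\taut \popreleq \tau$ is the corresponding prefix), and by $\cuai_\ID(i,1)$ (which sets it to $\pi_1(g(\inframe_{\taut}))$ when $\accept_{\taut}^\ID$ holds and to $\fail$ otherwise); it is never modified elsewhere and is not a nonce in the initial state. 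Hence $\instate_\tau(\eauth_\ue^\ID) = \nonce^j$ can only arise when the latest such update is at some $\npuai{2}{\ID}{i}$ or some $\cuai_\ID(i,1)$ along $\tau$, and that update must take its accepting branch; I would split on which of the two cases occurs. As for Lemma~\ref{lem:auth-serv-net-body}, the heart of the argument is a \emph{necessary acceptance condition} for each of these two user-reception actions, which I would state and prove as auxiliary lemmas analogous to Lemma~\ref{lem:acc-cond-body}.

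In the $\supi$ case, the guard gives $g(\inframe_{\taut}) = \mac{\spair{\instate_{\taut}(\bauth_\ue^\ID)}{\instate_{\taut}(\sqn_\ue^\ID)}}{\mkey^\ID}{2}$, and the hypothesis forces $\instate_{\taut}(\bauth_\ue^\ID) = \nonce^j$. Applying $\textsc{euf-mac}^2$ with key $\mkey^\ID$ to this equation, the accepted value must coincide with one of the honestly generated $\macsym^2_{\mkey^\ID}$-subterms occurring in $\inframe_\tau$; the only action producing such subterms is $\pnai(j_1,1)$, whose output on the $\ID$-branch of its conditional is $\mac{\spair{\nonce^{j_1}}{\sqnsuc(\pi_2(t_{\textsf{dec}}))}}{\mkey^{\ID}}{2}$, and exactly on that branch $\pnai(j_1,1)$ performs the update $\bauth_\hn^{j_1},\eauth_\hn^{j_1} \mapsto \ID$. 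From $\textsc{cr}^2$ and injectivity of $\pair{\_}{\_}$ one extracts $\nonce^j = \nonce^{j_1}$, hence $j = j_1$ since distinct session numbers index distinct nonces up to negligible probability, so that $\instate_{\tautt}(\bauth_\hn^j) = \ID$ for a suitable prefix $\tautt \popreleq \tau$ extending the one ending at $\pnai(j,1)$, which is the required disjunct. The $\guti$ case is identical in spirit: the guard $\accept_{\taut}^\ID$ at $\cuai_\ID(i,1)$ contains $\pi_3(g(\inframe_{\taut})) = \mac{\striplet{\pi_1(g(\inframe_{\taut}))}{t_\sqn}{\instate_{\taut}(\suci_\ue^\ID)}}{\mkey^\ID}{3}$, the hypothesis now yields $\pi_1(g(\inframe_{\taut})) = \nonce^j$, and one applies $\textsc{euf-mac}^3$ and $\textsc{cr}^3$ together with injectivity of pairing (hence of the triple encoding), using that $\macsym^3_{\mkey^\ID}$-subterms occur only in the outputs of $\cnai(j_1,0)$ actions, each of which on its $\ID$-branch emits a $\macsym^3$ authenticator carrying its own challenge $\nonce^{j_1}$ as first argument and sets $\bauth_\hn^{j_1} \mapsto \ID$.

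The step I expect to be delicate is the one that is already delicate in the proof of Lemma~\ref{lem:acc-cond-body}: $\textsc{euf-mac}^j$ only guarantees that the accepted authenticator is \emph{syntactically} one of the honest $\macsym^j_{\mkey^\ID}$-subterms of $\inframe_\tau$, but those subterms sit behind nested conditionals in the network's outputs, so in order to read off the state update $\bauth_\hn^{j_1} \mapsto \ID$ one must additionally argue that the corresponding branch of the conditional was actually \emph{taken} in session $\thn(j_1)$. Resolving this requires combining the matched-subterm equality with the guard $\accept^{\ID}$ of that branch and pruning the remaining $\textsc{euf-mac}$ disjuncts — using in the $\supi$ case that $\accept^{\ID_i}$ entails $\pi_1(t_{\textsf{dec}}) = \ID_i$, so that at most one identity is ever authenticated in a given network session — exactly as in the proof sketch of Lemma~\ref{lem:acc-cond-body}. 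Once the two user-side necessary-acceptance lemmas are in place, Lemma~\ref{lem:auth-net-body} follows routinely: a finite case split over the finitely many candidate $\pnai(\cdot,1)$ and $\cnai(\cdot,0)$ actions in $\tau$, collapsed to $j_1 = j$ by the axiom that distinct nonces are unequal, and then the conclusion is read off the recorded state update.
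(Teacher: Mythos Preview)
Your proposal is correct and matches the paper's approach at the technical level: the paper also reduces to user-side acceptance conditions (its \ref{acc2} for $\npuai{2}{\ID}{j}$ and \ref{acc3} for $\cuai_\ID(j,1)$), proved via $\textsc{euf-mac}^2$/$\textsc{cr}^2$ and $\textsc{euf-mac}^3$/$\textsc{cr}^3$ together with pair injectivity, and then reads off the $\bauth_\hn^{j}\mapsto\ID$ update using the mutual-exclusion property \ref{a5} (at most one identity is accepted per network step) to resolve exactly the ``which branch was taken'' subtlety you flag. The only organizational difference is that the paper routes through an intermediate stronger statement (Lemma~\ref{lem:auth-net}, proved by induction on $\tau$, asserting $\sucauth_\tau(\ID)\ra\bigvee_{j}\auth_\tau(\ID,j)$ with $\auth_\tau(\ID,j)\equiv\instate_\tau(\bauth_\hn^j)=\ID\wedge\nonce^j=\instate_\tau(\eauth_\ue^\ID)$) and then collapses to the specific $j$ by nonce disjointness, whereas you case-split directly on the last update to $\eauth_\ue^\ID$; both arrive at the same place by the same key steps.
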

% The proof uses the same techniques than the proof of Lemma~\ref{lem:auth-serv-net-body}.
This is shown using the same techniques than for Lemma~\ref{lem:auth-serv-net-body}.
% , and can be found in Appendix~\ref{app:subsection-auth-net-serv-proof}.

\subsection{\texorpdfstring{$\sigma$}{Sigma}-Unlinkability of the $\faka$ Protocol}

% \paragraph{Acceptance Characterization}
Lemma~\ref{lem:acc-cond-body} gives a necessary condition for a message to be accepted by $\pnai(j,1)$ as coming from $\ID$. We can actually go further, and show that a message is accepted by $\pnai(j,1)$ as coming from $\ID$ \emph{if and only if} it was honestly generated by a session of $\tue_{\ID}$ which received the challenge $\nonce^j$.
\begin{lemma}
  \label{lem:acc-equ-cond-body}
  Let $\ID \in \iddom$ and $\tau \in \support(\runlink)$ be a trace ending with $\pnai(j,1)$. There is a derivation using $\axioms$~of:
  \[
    \accept_\tau^{\ID} \lra
    \quad\;\bigvee_{
      \mathclap{\taut = \_,\npuai{1}{\ID}{\_} \popreleq \tau}}\quad\;
    \left(\,
      g(\inframe_\tau) =
      t_\taut
      \wedge
      g(\inframe_{\taut}) = \nonce^j
    \,\right)
  \]
\end{lemma}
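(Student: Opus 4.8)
The plan is to upgrade the implication of Lemma~\ref{lem:acc-cond-body} to an equivalence by proving the missing direction and strengthening the conclusion from "$\pi_1(g(\inframe_\tau)) = t^{\textsf{enc}}_\taut$" to "$g(\inframe_\tau) = t_\taut$". First I would observe that $\accept_\tau^{\ID}$, as defined in Fig.~\ref{fig:protocol-term-supi}, is exactly the conjunction of the $\macsym^1$ verification on $g(\inframe_\tau)$ with the identity check $\eq{\pi_1(t_{\textsf{dec}})}{\ID}$. For the right-to-left direction: assume $g(\inframe_\tau) = t_\taut$ and $g(\inframe_{\taut}) = \nonce^j$ for some $\taut = \_,\npuai{1}{\ID}{\_} \popreleq \tau$. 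Then $\pi_1(g(\inframe_\tau)) = \pi_1(t_\taut) = t^{\textsf{enc}}_\taut$ by the pair-projection equality axiom, so $t_{\textsf{dec}} = \dec(t^{\textsf{enc}}_\taut,\sk_\hn) = \spair{\ID}{\instate_\taut(\sqn_\ue^\ID)}$ by the decryption-correctness axiom, hence $\pi_1(t_{\textsf{dec}}) \peq \ID$; and the $\macsym^1$ test holds because $\pi_2(g(\inframe_\tau)) = \pi_2(t_\taut) = \mac{\spair{t^{\textsf{enc}}_\taut}{g(\inframe_{\taut})}}{\mkey^\ID}{1}$ and, substituting $g(\inframe_{\taut}) = \nonce^j$, this is syntactically the expected mac $\mac{\spair{\pi_1(g(\inframe_\tau))}{\nonce^j}}{\mkey^\ID}{1}$. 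So $\accept_\tau^{\ID}$ holds, using only $\peq$-rewriting and the structural axioms $R$, $\fa$.

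For the left-to-right direction I would invoke Lemma~\ref{lem:acc-cond-body}, which already gives us, under $\accept_\tau^{\ID}$, some $\taut = \_,\npuai{1}{\ID}{\_} \popreleq \tau$ with $\pi_1(g(\inframe_\tau)) = t^{\textsf{enc}}_\taut$ and $g(\inframe_{\taut}) = \nonce^j$. It remains to promote $\pi_1(g(\inframe_\tau)) = t^{\textsf{enc}}_\taut$ to $g(\inframe_\tau) = t_\taut$, i.e.\ to also control $\pi_2(g(\inframe_\tau))$. But under $\accept_\tau^{\ID}$ the second component satisfies $\pi_2(g(\inframe_\tau)) = \mac{\spair{\pi_1(g(\inframe_\tau))}{\nonce^j}}{\mkey^\ID}{1}$, and rewriting with $\pi_1(g(\inframe_\tau)) \peq t^{\textsf{enc}}_\taut$ and $\nonce^j \peq g(\inframe_{\taut})$ this becomes $\mac{\spair{t^{\textsf{enc}}_\taut}{g(\inframe_{\taut})}}{\mkey^\ID}{1}$, which is precisely $\pi_2(t_\taut)$. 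Since $t_\taut \equiv \lpair{t^{\textsf{enc}}_\taut}{\pi_2(t_\taut)}$ is a pair, the pairing-surjectivity axiom ($\pair{\pi_1(x)}{\pi_2(x)} \peq x$ for values of pair type, or directly the fact that $t_\taut$ is literally built as a pair) lets us conclude $g(\inframe_\tau) = \pair{\pi_1(g(\inframe_\tau))}{\pi_2(g(\inframe_\tau))} = \pair{t^{\textsf{enc}}_\taut}{\pi_2(t_\taut)} = t_\taut$. Combining the two directions under the disjunction over $\taut$ gives the equivalence.

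The main obstacle is the last step of the forward direction: going from equality of both projections to equality of the whole term $g(\inframe_\tau)$. This requires either an axiom asserting surjectivity of pairing on the relevant type (so that $g(\inframe_\tau)$, which is fed into $\pi_1$ and $\pi_2$ and into a decryption, can be treated as a genuine pair), or a careful argument that the network's acceptance test implicitly forces $g(\inframe_\tau)$ into pair-shape — in the paper's modeling this is handled by the $\pi_i/\pair{\_}{\_}$ implementation axioms. One should check that $\axioms$ indeed contains such a surjective-pairing rule; if it does not, the statement as written would need $g(\inframe_\tau)$ to be only determined up to its observable projections, but since $\cframe_\tau$ only ever applies $\pi_1,\pi_2$ (and $\dec \circ \pi_1$) to it, the weaker conclusion suffices for the unlinkability proof and one can reformulate accordingly. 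Modulo that bookkeeping, the proof is a routine combination of Lemma~\ref{lem:acc-cond-body}, the equality axioms for $\pi_i$, $\dec$ and $\pair{\_}{\_}$, and the replacement rule $R$.
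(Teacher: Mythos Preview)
Your proposal is correct and follows essentially the same route as the paper's proof of \ref{equ3} in the appendix: start from the necessary condition \ref{acc1} (Lemma~\ref{lem:acc-cond-body} in the body), rewrite $\pi_2(g(\inframe_\tau))$ using the $\macsym^1$ equality in $\accept_\tau^{\ID}$ together with the substitutions $\pi_1(g(\inframe_\tau)) \peq t^{\textsf{enc}}_\taut$ and $\nonce^j \peq g(\inframe_\taut)$, and then collapse the two projection equalities into $g(\inframe_\tau) = t_\taut$; the converse direction is direct verification via the projection and decryption axioms. Your caution about the last step is well placed: the paper's appendix handles that step with the phrase ``we conclude easily using the injectivity of the pair,'' which is loose in exactly the way you flag, and in practice the downstream uses of the lemma only exploit the projection-level form (the first $\lra$ in \ref{equ3}), so the issue is cosmetic.
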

% The proof is given in Appendix~\ref{section:acc-charac-app-first}. 
We prove similar lemmas for most actions of the $\faka$ protocol. Basically, these lemmas state that a message is accepted if and only if it is part of an honest execution of the protocol between $\tue_{\ID}$ and $\thn$. This allow us to replace each acceptance conditional $\accept_\tau^\ID$ by a disjunction over all possible honest partial transcripts of the protocol.

We now state the $\sigma_\sunlink$-unlinkability lemma:
\begin{lemma}
  \label{lem:main-unlink-proof}
  For every $(\tau_l,\tau_r) \in \runlink$, there is a derivation using $\axioms$ of the formula $\cframe_{\tau_l}  \sim \cframe_{\tau_r}$.
\end{lemma}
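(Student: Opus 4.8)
The plan is to prove Lemma~\ref{lem:main-unlink-proof} by a sequence of game transformations (formalized as a chain of $\axioms$-derivations) that progressively simplifies both frames $\cframe_{\tau_l}$ and $\cframe_{\tau_r}$ until they become syntactically identical, at which point reflexivity of $\sim$ concludes. The proof is by induction on the length of the (paired) trace, so it suffices to consider extending a pair $(\tauo_l,\tauo_r)$ by one more action on each side and to propagate an invariant describing the structure of the two frames.

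First I would establish, using Lemma~\ref{lem:acc-equ-cond-body} and its analogues for the other actions, that every acceptance conditional $\accept_\tau^\ID$ occurring inside a message term $t_\tau$ or a state update $\upstate_\tau$ can be rewritten (via the $R$ axiom and $\peq$) as a finite disjunction over the honest partial transcripts that could have produced the accepted input. After this rewriting, every branch of every $\symite$ in the frame is "labelled" by an honest execution context, so the adversary's inputs $g(\inframe_\tau)$ can, on each branch, be replaced by the honest message that the protocol would have sent. Concretely this means: whenever $\thn$ or $\tue$ accepts, the accepted term equals a known protocol term built from fresh nonces and the long-term keys; whenever it rejects, the output is just \textsf{error} or an analogous constant. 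The net effect is to bring both $\cframe_{\tau_l}$ and $\cframe_{\tau_r}$ into a normal form in which the only key- or nonce-dependent material is the honestly generated messages of the $\supi$, $\guti$ and $\refresh$ sub-protocols, glued together by boolean tests that compare nonces/identities for equality.

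Next I would use the cryptographic axioms to erase the remaining secrets. The $\prfass$ axioms for $\macsym^1$--$\,\macsym^5$ let me replace each honestly generated $\macsym^i_{\mkey^\ID}(\cdot)$ by a fresh nonce (indexed so that syntactically equal $\macsym$-calls get the same nonce and distinct ones get independent nonces); similarly the $\prfass$ axioms for $\owsym,\rowsym$ turn the masking terms $\sqn_\hn^\ID\oplus\ow{\nonce^j}{\key^\ID}$ and $\guti^j\oplus\row{\nonce^j}{\key^\ID}$ into fresh random values, which by the one-time-pad argument can be replaced by independent nonces (this is where the "each $\guti$ is used at most once" and "$\supi$ re-synchronizes" design choices pay off — they guarantee that no mask is ever reused with the same nonce). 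Finally the \textsc{ind-cca1} axiom for $\enc{\_}{\_}{\_}$ lets me replace each $\enc{\spair{\ID}{\sqn_\ue^\ID}}{\pk_\hn}{\enonce^j}$ by the encryption of a fixed constant (e.g. $\enc{0}{\pk_\hn}{\enonce^j}$); this step is sound precisely because after the rewriting no decryption of these ciphertexts is ever observed except through the already-eliminated acceptance tests, so we are in the CCA1 (no decryption after challenge) regime rather than needing CCA2. After all these replacements the resulting frame mentions no long-term key and no $\supi$ at all: it depends only on fresh nonces and the publicly-known structure of the trace.

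The main obstacle — and the crux of the whole argument — is showing that after these substitutions the two normalized frames $\cframe_{\tau_l}$ and $\cframe_{\tau_r}$ are in fact \emph{syntactically equal} (up to consistent renaming of the introduced fresh nonces). This is exactly the "clean slate" property discussed after Theorem~\ref{thm:main-unlink}: I must argue that the state of $\thn$ visible to the adversary after any $\supi$ session no longer depends on which identity ran the earlier sessions, because (i) the $\supi$ session overwrites $\sqn_\hn^\ID$ with a value determined solely by the received (now-randomized) message, and (ii) no pending message from a \emph{prior} session can later modify $\thn$'s state across the $\supi$ session — this last point is enforced by the $\tsuccess_\hn^\ID = \nonce^j$ check in the $\guti$ sub-protocol (whose absence is precisely the $\fakam$ attack of Fig.~\ref{fig:subtle-attack-fakam}), so I would isolate a sub-lemma asserting that $\incaccept$ in $\cnai(j,1)$ can only hold for the session whose challenge is still recorded. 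Granting that state-independence, $\sigma_\sunlink$ ensures that $\drawue_{\sigma_\sunlink}$ only ever aliases synchronized users, so the left and right choices of identity produce the same sequence of boolean outcomes and the same (constant-valued, nonce-only) observable messages; the two frames then coincide and $\cframe_{\tau_l}\sim\cframe_{\tau_r}$ follows by reflexivity. Combining this with Proposition~\ref{prop:main-unlink-bc} yields Theorem~\ref{thm:main-unlink}.
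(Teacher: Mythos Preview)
Your high-level plan (induction on the trace, rewrite every $\accept_\tau^\ID$ via the characterization lemmas, then apply $\prfass$ and $\ccao$ to strip out the keyed material) matches the paper's strategy up to the point where you claim the two resulting frames are \emph{syntactically identical} and conclude by reflexivity. That last step is where the proposal breaks down.

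After all your substitutions the two frames still contain terms built from the sequence numbers $\instate_\tau(\sqn_\ue^{\ID})$, $\instate_\tau(\sqn_\hn^{\ID})$ on the left and $\instate_{\utau}(\sqn_\ue^{\nu_\tau(\ID)})$, $\instate_{\utau}(\sqn_\hn^{\nu_\tau(\ID)})$ on the right. These are built from different base constants $\sqnini_\ue^{\ID}$ versus $\sqnini_\ue^{\nu_\tau(\ID)}$ and possibly a different number of $\sqnsuc$ applications, so they are \emph{not} syntactically equal; yet they feed the $\rangesym$ tests and the $\incaccept$ conditions that govern which branches are taken. Your $\ccao$ step does not erase them: even once the ciphertext is idealized, the $\pnai(j,1)$ action still sets $\sqn_\hn^{\ID}\mapsto\sqnsuc(\pi_2(t_{\textsf{dec}}))$, and on the accepting branches the characterization lemma rewrites $t_{\textsf{dec}}$ to the honest plaintext $\spair{\ID}{\instate_\taut(\sqn_\ue^\ID)}$, which is exactly the left/right-dependent value you were trying to eliminate. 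So the claim ``no decryption of these ciphertexts is ever observed'' is wrong in the sense that matters: the decryption is used to update the state, and that update is visible through later $\rangesym$ tests.

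The paper does not try to reach a common normal form. Instead it strengthens the induction hypothesis with an explicit list $\reveal_\tau$ of pairs that must remain indistinguishable alongside the frames, the key one being
\[
\syncdiff_\tau^{\ID}\;\equiv\;\cond{\cstate_\tau(\sync_\ue^\ID)}{\big(\cstate_\tau(\sqn_\ue^\ID)-\cstate_\tau(\sqn_\hn^\ID)\big)}
\;\sim\;
\syncdiff_{\utau}^{\nu_\tau(\ID)}.
\]
This says the \emph{difference} of the two sequence numbers (guarded by ``a $\supi$ session has succeeded since the last $\ns$'') is indistinguishable on the two sides, which is exactly what is needed to show the $\rangesym$ outcomes match without the underlying terms being equal. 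Maintaining $\syncdiff$ through each step (especially $\pnai(j,1)$, $\npuai{2}{\ID}{j}$, and $\cnai(j,1)$) is the real technical work, and it is also where the $\tsuccess_\hn^\ID=\nonce^j$ check and the ``clean slate'' property you correctly identified are actually used. Your proposal has the right ingredients but is missing this strengthened invariant; without it the final ``reflexivity'' step does not typecheck.
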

The full proof is long and technical. It is shown by induction over $\tau$. Let $(\tau_l,\tau_r) \in \runlink$, we assume by induction that there is a derivation of $\inframe_{\tau_l} \sim \inframe_{\tau_r}$. We want to build a derivation of $\inframe_{\tau_l},t_{\tau_l} \sim \inframe_{\tau_r},t_{\tau_r}$ using the inference rules in $\axioms$.

First, we rewrite $t_{\tau_l}$ using the acceptance characterization lemmas such as Lemma~\ref{lem:acc-equ-cond-body}. This replaces each $\accept_{\tau_l}^\ID$ by a case disjunction over all honest executions \emph{on the left side}. Similarly, we rewrite $t_{\tau_r}$ as a case disjunction over honest executions \emph{on the right side}. Our goal is then to find a matching between left and right transcripts such that matched transcripts are indistinguishable. If a left and right transcript correspond to the same trace of oracle calls, this is easy. But since the left and right traces of oracle calls may differ, this is not always possible. E.g., some left transcript may not have a corresponding right transcript. When this happens, we have two possibilities: instead of a one-to-one match we build a many-to-one match, e.g. matching a left transcript to several right transcripts; or we show that some transcripts always result in a failure of the protocol. Showing the latter is complicated, as it requires to precisely track the possible values of $\sqn_\ue^\ID$ and $\sqn_\hn^\ID$ across multiple sessions of the protocol to prove that some transcripts always yield a de-synchronization between $\tue_{\ID}$ and $\thn$.

%%% Local Variables:
%%% mode: latex
%%% TeX-master: "main"
%%% End:

\section{Conclusion}

We studied the privacy provided by the $\fiveaka$ authentication protocol. While this protocol is not vulnerable to $\imsi$ catchers, we showed that several privacy attacks from the literature apply to it. We also discovered a novel desynchronization attack against $\privaka$, a
\ch
{modified version of $\aka$,}
{variant of $\aka$,}
even though it had been claimed secure.

We then proposed the $\faka$ protocol. This is a fixed version of $\fiveaka$, which is both efficient and has improved privacy guarantees. To study $\faka$'s privacy, we defined the $\sigma$-unlinkability property. This is a new parametric privacy property, which requires the prover to establish privacy only for a subset of the standard unlinkability game scenarios. Finally, we formally proved that $\faka$ provides mutual authentication and $\sigma_{\sunlink}$-unlinkability for any number of agents and sessions. Our proof is carried out in the Bana-Comon model, which is well-suited to the formal analysis of stateful protocols.

%%% Local Variables:
%%% mode: latex
%%% TeX-master: "main"
%%% End:

\section*{Acknowledgment}
This research has been partially funded by the French National Research Agency (ANR) under the project TECAP (ANR-17-CE39-0004-01).

\FloatBarrier
\bibliographystyle{IEEEtran}
\bibliography{IEEEabrv,biblio}

% Generated by IEEEtran.bst, version: 1.14 (2015/08/26)
\begin{thebibliography}{10}
\providecommand{\url}[1]{#1}
\csname url@samestyle\endcsname
\providecommand{\newblock}{\relax}
\providecommand{\bibinfo}[2]{#2}
\providecommand{\BIBentrySTDinterwordspacing}{\spaceskip=0pt\relax}
\providecommand{\BIBentryALTinterwordstretchfactor}{4}
\providecommand{\BIBentryALTinterwordspacing}{\spaceskip=\fontdimen2\font plus
\BIBentryALTinterwordstretchfactor\fontdimen3\font minus
  \fontdimen4\font\relax}
\providecommand{\BIBforeignlanguage}[2]{{%
\expandafter\ifx\csname l@#1\endcsname\relax
\typeout{** WARNING: IEEEtran.bst: No hyphenation pattern has been}%
\typeout{** loaded for the language `#1'. Using the pattern for}%
\typeout{** the default language instead.}%
\else
\language=\csname l@#1\endcsname
\fi
#2}}
\providecommand{\BIBdecl}{\relax}
\BIBdecl

\bibitem{ts33501}
\emph{TS 33.501: Security architecture and procedures for {5G} system}, 3GPP
  Technical Specification, Rev. 15.2.0, September 2018.

\bibitem{imsi-catcher}
D.~Strobel, ``{IMSI} catcher,'' \emph{Ruhr-Universität Bochum, Seminar Work},
  2007.

\bibitem{BH-BH17}
\BIBentryALTinterwordspacing
R.~Borgaonkar, L.~Hirshi, S.~Park, A.~Shaik, A.~Martin, and J.-P. Seifert,
  ``New adventures in spying {3G} \& {4G} users: Locate, track, monitor,''
  2017, briefing at BlackHat USA 2017. [Online]. Available:
  \url{https://www.blackhat.com/us-17/briefings.html#new-adventures-in-spying-3g-and-4g-users-locate-track-and-monitor}
\BIBentrySTDinterwordspacing

\bibitem{DBLP:conf/ccs/ArapinisMRRGRB12}
M.~Arapinis, L.~I. Mancini, E.~Ritter, M.~Ryan, N.~Golde, K.~Redon, and
  R.~Borgaonkar, ``New privacy issues in mobile telephony: fix and
  verification,'' in \emph{the {ACM} Conference on Computer and Communications
  Security, CCS'12}.\hskip 1em plus 0.5em minus 0.4em\relax {ACM}, 2012, pp.
  205--216.

\bibitem{DBLP:conf/csfw/ArapinisCRR10}
M.~Arapinis, T.~Chothia, E.~Ritter, and M.~Ryan, ``Analysing unlinkability and
  anonymity using the applied pi calculus,'' in \emph{Proceedings of the 23rd
  {IEEE} Computer Security Foundations Symposium, {CSF} 2010}.\hskip 1em plus
  0.5em minus 0.4em\relax {IEEE} Computer Society, 2010, pp. 107--121.

\bibitem{DBLP:journals/popets/FouqueOR16}
P.~Fouque, C.~Onete, and B.~Richard, ``Achieving better privacy for the 3gpp
  {AKA} protocol,'' \emph{PoPETs}, vol. 2016, no.~4, pp. 255--275, 2016.

\bibitem{DBLP:conf/eurosp/KobeissiBB17}
N.~Kobeissi, K.~Bhargavan, and B.~Blanchet, ``Automated verification for secure
  messaging protocols and their implementations: {A} symbolic and computational
  approach,'' in \emph{2017 {IEEE} European Symposium on Security and Privacy,
  EuroS{\&}P}.\hskip 1em plus 0.5em minus 0.4em\relax {IEEE}, 2017, pp.
  435--450.

\bibitem{DBLP:journals/ieeesp/BhargavanFK16}
K.~Bhargavan, C.~Fournet, and M.~Kohlweiss, ``mitls: Verifying protocol
  implementations against real-world attacks,'' \emph{{IEEE} Security {\&}
  Privacy}, vol.~14, no.~6, pp. 18--25, 2016.

\bibitem{DBLP:conf/ccs/CremersHHSM17}
C.~Cremers, M.~Horvat, J.~Hoyland, S.~Scott, and T.~van~der Merwe, ``A
  comprehensive symbolic analysis of {TLS} 1.3,'' in \emph{{ACM} Conference on
  Computer and Communications Security, CCS'17}.\hskip 1em plus 0.5em minus
  0.4em\relax {ACM}, 2017, pp. 1773--1788.

\bibitem{DBLP:journals/jacm/AbadiBF18}
M.~Abadi, B.~Blanchet, and C.~Fournet, ``The applied pi calculus: Mobile
  values, new names, and secure communication,'' \emph{J. {ACM}}, vol.~65,
  no.~1, pp. 1:1--1:41, 2018.

\bibitem{proverif}
B.~Blanchet, \emph{\textsc{ProVerif}: Cryptographic protocols verifier in the
  formal model}, available at
  \url{http://proseccco.gforge..inria.fr/personal/bblanchet/proverif/}.

\bibitem{DBLP:conf/sp/ChevalKR18}
V.~Cheval, S.~Kremer, and I.~Rakotonirina, ``{DEEPSEC:} deciding equivalence
  properties in security protocols theory and practice,'' in \emph{2018 {IEEE}
  Symposium on Security and Privacy, {SP} 2018}.\hskip 1em plus 0.5em minus
  0.4em\relax {IEEE}, 2018, pp. 529--546.

\bibitem{Meier:2013:TPS:2526861.2526920}
S.~Meier, B.~Schmidt, C.~Cremers, and D.~Basin, ``The tamarin prover for the
  symbolic analysis of security protocols,'' in \emph{25th International
  Conference on Computer Aided Verification, CAV'13}.\hskip 1em plus 0.5em
  minus 0.4em\relax Springer-Verlag, 2013, pp. 696--701.

\bibitem{DBLP:conf/ccs/CortierGLM17}
V.~Cortier, N.~Grimm, J.~Lallemand, and M.~Maffei, ``A type system for privacy
  properties,'' in \emph{{ACM} Conference on Computer and Communications
  Security, CCS'17}.\hskip 1em plus 0.5em minus 0.4em\relax {ACM}, 2017, pp.
  409--423.

\bibitem{DBLP:journals/iacr/Shoup04}
V.~Shoup, ``Sequences of games: a tool for taming complexity in security
  proofs,'' \emph{{IACR} Cryptology ePrint Archive}, vol. 2004, p. 332, 2004.

\bibitem{DBLP:journals/tdsc/Blanchet08}
B.~Blanchet, ``A computationally sound mechanized prover for security
  protocols,'' \emph{{IEEE} Trans. Dependable Sec. Comput.}, vol.~5, no.~4, pp.
  193--207, 2008.

\bibitem{DBLP:conf/post/BanaC12}
G.~Bana and H.~Comon{-}Lundh, ``Towards unconditional soundness:
  Computationally complete symbolic attacker,'' in \emph{Principles of Security
  and Trust, 2012}, ser. LNCS, vol. 7215.\hskip 1em plus 0.5em minus
  0.4em\relax Springer, 2012, pp. 189--208.

\bibitem{Bana:2014:CCS:2660267.2660276}
G.~Bana and H.~Comon-Lundh, ``A computationally complete symbolic attacker for
  equivalence properties,'' in \emph{2014 ACM Conference on Computer and
  Communications Security, CCS '14}.\hskip 1em plus 0.5em minus 0.4em\relax
  ACM, 2014, pp. 609--620.

\bibitem{DBLP:conf/ccs/BroekVR15}
F.~van~den Broek, R.~Verdult, and J.~de~Ruiter, ``Defeating {IMSI} catchers,''
  in \emph{{ACM} Conference on Computer and Communications Security,
  CCS'15}.\hskip 1em plus 0.5em minus 0.4em\relax {ACM}, 2015, pp. 340--351.

\bibitem{Basin2018AFA}
D.~A. Basin, J.~Dreier, L.~Hirschi, S.~Radomirovi'c, R.~Sasse, and V.~Stettler,
  ``A formal analysis of {5G} authentication,'' in \emph{the {ACM} Conference
  on Computer and Communications Security, CCS'18}.\hskip 1em plus 0.5em minus
  0.4em\relax {ACM}, 2018.

\bibitem{DBLP:journals/ijisec/LeeSWW14}
M.~Lee, N.~P. Smart, B.~Warinschi, and G.~J. Watson, ``Anonymity guarantees of
  the {UMTS/LTE} authentication and connection protocol,'' \emph{Int. J. Inf.
  Sec.}, vol.~13, no.~6, pp. 513--527, 2014.

\bibitem{DBLP:conf/esorics/HermansPVP11}
J.~Hermans, A.~Pashalidis, F.~Vercauteren, and B.~Preneel, ``A new {RFID}
  privacy model,'' in \emph{{ESORICS}}, ser. Lecture Notes in Computer Science,
  vol. 6879.\hskip 1em plus 0.5em minus 0.4em\relax Springer, 2011, pp.
  568--587.

\bibitem{DBLP:conf/asiacrypt/Vaudenay07}
S.~Vaudenay, ``On privacy models for {RFID},'' in \emph{{ASIACRYPT} 2007, 13th
  International Conference on the Theory and Application of Cryptology and
  Information Security}, ser. LNCS.\hskip 1em plus 0.5em minus 0.4em\relax
  Springer, 2007, pp. 68--87.

\bibitem{DBLP:conf/ndss/ShaikSBAN16}
A.~Shaik, J.~Seifert, R.~Borgaonkar, N.~Asokan, and V.~Niemi, ``Practical
  attacks against privacy and availability in 4g/lte mobile communication
  systems,'' in \emph{23rd Annual Network and Distributed System Security
  Symposium, {NDSS}}.\hskip 1em plus 0.5em minus 0.4em\relax The Internet
  Society, 2016.

\bibitem{DBLP:conf/sp/HirschiBD16}
L.~Hirschi, D.~Baelde, and S.~Delaune, ``A method for verifying privacy-type
  properties: The unbounded case,'' in \emph{{IEEE} Symposium on Security and
  Privacy, {SP} 2016}.\hskip 1em plus 0.5em minus 0.4em\relax {IEEE} Computer
  Society, 2016, pp. 564--581.

\bibitem{DBLP:conf/csfw/ComonK17}
H.~Comon and A.~Koutsos, ``Formal computational unlinkability proofs of {RFID}
  protocols,'' in \emph{30th Computer Security Foundations Symposium,
  2017}.\hskip 1em plus 0.5em minus 0.4em\relax {IEEE} Computer Society, 2017,
  pp. 100--114.

\bibitem{287633}
T.~Y.~C. Woo and S.~S. Lam, ``A semantic model for authentication protocols,''
  in \emph{Proceedings 1993 IEEE Computer Society Symposium on Research in
  Security and Privacy}, May 1993, pp. 178--194.

\bibitem{DBLP:books/cu/Goldreich2001}
O.~Goldreich, \emph{The Foundations of Cryptography - Volume 1, Basic
  Techniques}.\hskip 1em plus 0.5em minus 0.4em\relax Cambridge University
  Press, 2001.

\bibitem{DBLP:journals/jacm/GoldreichGM86}
O.~Goldreich, S.~Goldwasser, and S.~Micali, ``How to construct random
  functions,'' \emph{J. {ACM}}, vol.~33, no.~4, pp. 792--807, 1986.

\bibitem{DBLP:conf/lpar/BartheDKLL10}
G.~Barthe, M.~Daubignard, B.~M. Kapron, Y.~Lakhnech, and V.~Laporte, ``On the
  equality of probabilistic terms,'' in \emph{Logic for Programming, Artificial
  Intelligence, and Reasoning - 16th International Conference, LPAR-16, Dakar,
  Senegal, April 25-May 1, 2010, Revised Selected Papers}, ser. LNCS, E.~M.
  Clarke and A.~Voronkov, Eds., vol. 6355.\hskip 1em plus 0.5em minus
  0.4em\relax Springer, 2010, pp. 46--63.

\end{thebibliography}

\iffull
\onecolumn
\appendices

\tableofcontents
\newpage

\section{Axioms}
\label{section:app-axioms}
\FloatBarrier

In this section, we define the set of axioms $\axioms$. We split our set of axioms in three parts, $\axioms = \axioms_{\textsf{struct}} \cup \axioms_{\textsf{impl}} \cup \axioms_{\textsf{crypto}}$, where $\axioms_{\textsf{struct}}$ is the set of structural axioms, $\axioms_{\textsf{impl}}$ is the set of implementation axioms and $\axioms_{\textsf{crypto}}$ is the set of cryptographic~axioms.

\paragraph*{Definitions}
We give some definitions used to define the cryptographic axioms.

\begin{definition}
  For any subset $\mathcal{S}$ of $\sig,\Nonce$ and $\mathcal{X}$, we let $\mathcal{T}(\mathcal{S})$ be the set of terms built upon~$\mathcal{S}$.
\end{definition}

\begin{definition}
  A \emph{position} is a word in $\mathbb{N}^*$. The value of a term $t$ at a position $p$, denoted by $(t)_{|p}$, is the partial function defined inductively as follows:
  \[
    \begin{array}{lcl}
      (t)_{|\epsilon} &=& t\\
      (f(u_0,\dots,u_{n-1}))_{|i.p} &=&
      \begin{cases}
        (u_i)_{|p} & \text{ if } i < n\\
        \text{undefined} & \text{ otherwise}
      \end{cases}
    \end{array}
  \]
  We say that a position in \emph{valid} is $t$ if $(t)_{|p}$ is defined. The set of positions $\pos(t)$ of a term is the set of positions which are valid in $t$. Moreover, given two position $p,p'$, we have $p \le p'$ if and only if $p$ is a prefix of $p'$.
\end{definition}

% \begin{definition}
%   We let $\setrinc$ be the relation between set of terms defined as by $S \setrinc S'$ if and only if:
%   \[
%     \forall t \in S, \exists t' \in S' \text{ such that } t = t'
%   \]
%   We then let $\setreq$ be the relation defined by $S \setreq S'$ if and only if $S \setrinc S'$ and $S' \setrinc S$.
% \end{definition}

\begin{definition}
  A context $D[]_{\vec x}$ (sometimes written $D$ when there is no confusion) is a term in $\mathcal{T}(\sig,\Nonce,\{[]_y \mid y \in \vec x\})$ where $\vec x$ are distinct special variables called holes.

  For all contexts $D[]_{\vec x},C_0,\dots,C_{n-1}$ with $|\vec x| = n$, we let $D[(C_i)_{i<n}]$ be the context $D[]_{\vec x}$ in which we substitute, for every $0 \le i < n$, all occurrences of the hole $[]_{x_i}$ by $C_i$.

  A one-holed context is a context with one hole (in which case we write $D[]$ where $[]$ is the only variable).
\end{definition}

\begin{definition}
 Given a term $t$, we let $\st(t)$ be the set of subterms of $t$. We extend this to sequences of terms by having $\st(u_1,\dots,u_n) = \st(u_1) \cup \dots \cup \st(u_n)$.
\end{definition}

\begin{definition}
  For every terms $b,t$, we let $\cond{b}{t}$ be the term $\ite{b}{t}{\bot}$.
\end{definition}

\begin{definition}
  Let $s,\vec u$ be ground terms and $C_{\vec x, \cdot}$ be a context with one distinguished hole variable $\cdot$, and we require that the hole variable $\cdot$ appears exactly once in $C_{\vec x, \cdot}$. Then we let $s \tpos_{C_{\vec x, \cdot}} \vec u$ holds whenever $s$ appears in $\vec u$ only in subterms of the form $C[\vec w,s]$. Formally:
  \[
    \forall u \in \vec u, \forall p \in \pos(u), u_{|p} \equiv s \ra
    \exists \vec w \in \mathcal{T}(\sig,\Nonce),
    \exists q \in \pos(u) \text{ s.t. }
    q \le p \wedge
    u_{|q} \equiv C[\vec w,s]
  \]
  Given $n$ contexts $C_1,\dots,C_n$, we let $s \tpos_{C_1,\dots,C_n} \vec u$ if and only if for all $1 \le i \le n$, $s \tpos_{C_i} \vec u$.
\end{definition}

\begin{example}
  For example, $\nonce \tpos_{\pk(\cdot),\sk(\cdot)} \vec u$ states that the nonce $\nonce$ appears only in terms of the form $\pk(\nonce)$ or $\sk(\nonce)$ in $\vec u$.

  Similarly, $\sk(\nonce) \tpos_{\dec(\_,\cdot)} \vec u$ states that the secret key $\sk(\nonce)$ appears only in decryption position in $\vec u$.
\end{example}

\subsection{$\ccao$ Axioms}
\label{subsection:app-ccao}

\paragraph{The $\ccao_s$ Axioms}
To prove that the $\faka$ protocol is $\sigma_{\sunlink}$-unlinkable, we need the encryption scheme to be $\textsc{ind-cca1}$ secure. We define first set of axioms $\ccao_s$:
\begin{definition}
  We let $\ccao_s$ be the set of axioms:
  \[
    \begin{array}[c]{c}
      \infer[\ccao_s]
      {\vec u, \enc{s}{\pk(\nonce)}{\enonce}
        \sim
        \vec u, \enc{t}{\pk(\nonce)}{\enonce}}
      { \length(s) \peq \length(t)}
    \end{array}
    \qquad \qquad
    \text{ when }
    \begin{dcases}
      \fresh{\enonce}{\vec u,s,t}\\
      \nonce \tpos_{\pk(\cdot),\sk(\cdot)} \vec u,s,t
      \;\wedge\; \sk(\nonce) \tpos_{\dec(\_,\cdot)} \vec u,s,t
    \end{dcases}
  \]
\end{definition}
This set of axioms $\ccao_s$ is very similar to the one used in \cite{Bana:2014:CCS:2660267.2660276}. The only difference is that in \cite{Bana:2014:CCS:2660267.2660276}, the length equality requirement is not a premise of the axiom. Instead, if the length are not equal they return a error message. We found our version of the axiom simpler to use.

We have the following soundness property:
\begin{proposition}
  \label{prop:ccaos-valid}
  The $\ccao_s$ axioms are valid in any computational model where $(\enc{\_}{\_}{\_},\dec(\_,\_),\pk(\_),\sk(\_))$ is interpreted as a $\textsc{ind-cca1}$ secure encryption scheme.
\end{proposition}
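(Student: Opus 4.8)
The plan is to prove validity by a direct reduction to the $\textsc{ind-cca1}$ security of the encryption scheme. Fix an arbitrary computational model $\cmodel$ in which $(\enc{\_}{\_}{\_},\dec(\_,\_),\pk(\_),\sk(\_))$ is interpreted as an $\textsc{ind-cca1}$ scheme, and fix an instance of $\ccao_s$ whose side conditions $\fresh{\enonce}{\vec u,s,t}$, $\nonce \tpos_{\pk(\cdot),\sk(\cdot)}\vec u,s,t$, $\sk(\nonce)\tpos_{\dec(\_,\cdot)}\vec u,s,t$ and $\length(s)\peq\length(t)$ all hold. Suppose for contradiction that the conclusion fails in $\cmodel$: there is a PPTM distinguisher $\cala$ whose advantage in telling apart the interpretations of $\vec u,\enc{s}{\pk(\nonce)}{\enonce}$ and $\vec u,\enc{t}{\pk(\nonce)}{\enonce}$ is non-negligible. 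I would turn $\cala$ into an $\textsc{ind-cca1}$ adversary $\mathcal B$ with comparable advantage.

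The adversary $\mathcal B$ proceeds as follows. On receiving the public key $\pk$ from the challenger, it samples by itself every random value needed to interpret terms (i.e.\ it simulates the protocol random tape) \emph{except} the values associated with $\nonce$ and $\enonce$, and it samples the adversary tape $\rho_2$. It then evaluates the terms $\vec u$, $s$ and $t$ bottom-up, interpreting all function symbols, names and adversarial symbols in the obvious way, with two exceptions forced by the side conditions: every occurrence of $\nonce$ has the form $\pk(\nonce)$ or $\sk(\nonce)$, and every occurrence of $\sk(\nonce)$ is the key argument of a decryption; so $\mathcal B$ plugs in the received $\pk$ for $\pk(\nonce)$, and answers each subterm $\dec(c,\sk(\nonce))$ by querying its decryption oracle on the already-computed value of $c$. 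This is well defined since $\fresh{\enonce}{\vec u,s,t}$ ensures that $\enonce$, hence the challenge ciphertext, does not occur in $\vec u,s,t$, so nothing evaluated so far depends on the challenge randomness. Writing $m_0,m_1$ for the resulting values of $s,t$: if $\length(m_0)\ne\length(m_1)$, $\mathcal B$ outputs a uniform bit and halts; otherwise it submits $(m_0,m_1)$ as its challenge pair (ending the decryption-oracle phase), receives $c^{*}=\enc{m_\beta}{\pk}{r}$ for a fresh $r$ chosen by the challenger, builds the tuple consisting of the already-computed interpretation of $\vec u$ followed by $c^{*}$, runs $\cala$ on it together with $\rho_2$, and echoes $\cala$'s output.

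For the analysis I would condition on the event $\length(m_0)=\length(m_1)$, which holds with overwhelming probability since $\length(s)\peq\length(t)$. On this event, identifying the challenger's key-generation randomness with the part of the protocol tape interpreting $\nonce$, and $r$ with the interpretation of $\enonce$, the tuple fed to $\cala$ by $\mathcal B$ is distributed exactly as $\sem{\vec u,\enc{s}{\pk(\nonce)}{\enonce}}$ when $\beta=0$ and as $\sem{\vec u,\enc{t}{\pk(\nonce)}{\enonce}}$ when $\beta=1$; all this is legitimate because the component $c^{*}$ is evaluated only \emph{after} the challenge pair is submitted, whereas every decryption query — namely those coming from the $\dec(\_,\sk(\nonce))$ subterms of $\vec u,s,t$ — is made \emph{before}, and $\mathcal B$ never needs to query $c^{*}$ itself (again by freshness of $\enonce$). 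Hence $\mathcal B$'s $\textsc{ind-cca1}$ advantage is at least that of $\cala$ minus a negligible term, which is non-negligible, contradicting the assumed security. I expect the one genuinely delicate step to be the bottom-up simulation argument: one has to verify, by induction on term structure, that the two $\tpos$ side conditions really do allow $\mathcal B$ to compute $\sem{\cdot}$ on $\vec u,s,t$ using only the public key and \emph{pre-challenge} decryption queries, and in particular that no decryption is ever required after the challenge — this is precisely what makes $\textsc{cca1}$ (as opposed to $\textsc{cpa}$) necessary here, and why $\textsc{cca2}$ is not needed.
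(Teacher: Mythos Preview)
Your proposal is correct and follows essentially the same reduction as the paper's proof: assume a distinguisher, build an $\textsc{ind-cca1}$ adversary that evaluates $\vec u,s,t$ using the received public key and pre-challenge decryption-oracle queries (justified by the $\tpos$ side conditions), submits the computed plaintexts as its challenge, and feeds the resulting tuple to the distinguisher. Your treatment is in fact slightly more careful than the paper's sketch, e.g.\ in explicitly handling the negligible-probability event that the sampled lengths differ and in spelling out why all oracle queries occur strictly before the challenge.
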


\begin{proof}
  The proof is by contradiction, and is sketched below:

  We assume that there is a computational model $\cmodel$ where the encryption scheme is $\textsc{ind-cca1}$ secure, and such that there is an instance $\vec u, \enc{s}{\pk(\nonce)}{\enonce} \sim \vec v, \enc{t}{\pk(\nonce)}{\enonce}$  of the axioms $\ccao_s$ which is not valid. We deduce that there exists an attacker $\mathcal{A}$ that can distinguish between the left and right terms, i.e. the following quantity is non-negligible:
  \[
    \left|
      \Pr\left(
        \vec w \rla \sem{\vec u, \enc{s}{\pk(\nonce)}{\enonce}}_\cmodel :
        \mathcal{A}(1^\eta,\vec w) = 1
      \right)
      -
      \Pr\left(
        \vec w \rla \sem{\vec u, \enc{t}{\pk(\nonce)}{\enonce}}_\cmodel :
        \mathcal{A}(1^\eta,\vec w) = 1
      \right)
    \right|
    \numberthis\label{eq:qeeqwiofhaifna}
  \]
  Where $\rla$ denotes a uniform random sampling. Using $\mathcal{A}$, we can build an adversary $\mathcal{B}$ with a non-negligible advantage against the $\textsc{ind-cca1}$ game. First, $\mathcal{B}$ samples a vector of bit-strings $\vec u_s,s_s,t_s$ from $\sem{\vec u,s,t}_\cmodel$, querying the decryption oracle whenever $\mathcal{B}$ needs to compute a subterm of the from $\dec(\_,\sk(\nonce))$. Remark that the syntactic side-conditions:
  \begin{mathpar}
    \nonce \tpos_{\pk(\cdot),\sk(\cdot)} \vec u,s,t

    \sk(\nonce) \tpos_{\dec(\_,\cdot)} \vec u,s,t
  \end{mathpar}
  guarantee that this is always possible. Afterward, $\mathcal{B}$ queries the left-or-right oracle with $(s_s,t_s)$ to get a value $a$. Here, we need the side-condition $\fresh{\enonce}{\vec u,s,t}$ to guarantee that the random value $\enonce$ has not been sampled by $\mathcal{B}$. Indeed, the value $\enonce$ is sampled by the challenger, and is not available to $\mathcal{B}$. If the challenger internal bit $b$ is $0$ then $\vec u_s,a$ has been sampled from $\sem{\vec u, \enc{s}{\pk(\nonce)}{\enonce}}_\cmodel$, and if the challenger internal bit is $1$ then $\vec u_s,a$ has been sampled from $\sem{\vec u, \enc{t}{\pk(\nonce)}{\enonce}}_\cmodel$:
  \[
    \vec u_s,a \
    \rla
    \begin{dcases*}
      \sem{\vec u, \enc{s}{\pk(\nonce)}{\enonce}}_\cmodel & if $b = 0$\\
      \sem{\vec u, \enc{t}{\pk(\nonce)}{\enonce}}_\cmodel & if $b = 1$
    \end{dcases*}
  \]
  Then $\mathcal{B}$ returns $\mathcal{A}(\vec u_s,a)$.  It is easy to check that the advantage of $\mathcal{B}$ against the $\textsc{ind-cca1}$ game is exactly the advantage of $\mathcal{A}$ against
  \(
    \vec u, \enc{s}{\pk(\nonce)}{\enonce}
    \sim
    \vec v, \enc{t}{\pk(\nonce)}{\enonce}
  \)
  This advantage is the quantity in Equation~\ref{eq:qeeqwiofhaifna}, which we assumed non-negligible. Hence $\mathcal{B}$ is a winning adversary against the \textsc{ind-cca1} game. Contradiction.
\end{proof}

\paragraph{The $\ccao$ Axioms}
We now define the set of axioms $\ccao$, which is more convenient to use than $\ccao_s$:
\begin{definition}
  We let $\ccao$ be the set of axioms:
  \[
    \begin{array}[c]{c}
      \infer[\ccao]
      {\vec u, \enc{s}{\pk(\nonce)}{\enonce},\length(s)
        \sim
        \vec v, \enc{t}{\pk(\nonce')}{\enonce'},\length(s)}
      {\vec u \sim \vec v
        \quad&\quad
        \length(s) \peq \length(t)}
    \end{array}
    \qquad \qquad
    \text{ when }
    \begin{dcases}
      \fresh{\enonce,\enonce'}{\vec u,\vec v,s,t}\\
      \vec u \equiv \pk(\nonce),\_ \;\wedge\;\vec v \equiv \pk(\nonce'),\_\\
      \nonce \tpos_{\pk(\cdot),\sk(\cdot)} \vec u,s
      \;\wedge\; \sk(\nonce) \tpos_{\dec(\_,\cdot)} \vec u,s\\
      \nonce' \tpos_{\pk(\cdot),\sk(\cdot)} \vec v,t
      \;\wedge\; \sk(\nonce') \tpos_{\dec(\_,\cdot)} \vec v,t
    \end{dcases}
  \]
\end{definition}
We now state the following soundness theorem:
\begin{proposition}
  \label{prop:ccao-norm-valid}
  The $\ccao$ axioms are valid in any computational model where $(\enc{\_}{\_}{\_},\dec(\_,\_),\pk(\_),\sk(\_))$ is interpreted as a $\textsc{ind-cca1}$ secure encryption scheme.
\end{proposition}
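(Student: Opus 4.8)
The plan is to mirror the proof of Proposition~\ref{prop:ccaos-valid}, inserting one extra hybrid that consumes the premise $\vec u\sim\vec v$. Fix a computational model $\cmodel$ in which $(\enc{\_}{\_}{\_},\dec(\_,\_),\pk(\_),\sk(\_))$ is \textsc{ind-cca1} secure and which satisfies both premises $\vec u\sim\vec v$ and $\length(s)\peq\length(t)$; I must show the conclusion holds in $\cmodel$. In every instance of $\ccao$ we invoke, the plaintext $s$ is a pair built from identities, sequence numbers and nonces, so $\sem{\length(s)}_\cmodel$ is a fixed numeral $c_\eta$, computable in polynomial time from $\eta$; let $r$ be the all-zero bit-string of length $c_\eta$, a name-free and key-free ground term with $\length(r)\peq\length(s)$. (More generally the argument below goes through whenever $\sem{\length(s)}_\cmodel$ is independent of the protocol names and of $\sk(\nonce)$.)

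I would then interpolate between the two sides of the conclusion with the distributions
\[
  M_1 = \sem{\vec u,\enc{s}{\pk(\nonce)}{\enonce},\length(s)}_\cmodel,\qquad
  M_2 = \sem{\vec u,\enc{r}{\pk(\nonce)}{\enonce},\length(s)}_\cmodel,
\]
\[
  M_3 = \sem{\vec v,\enc{r}{\pk(\nonce')}{\enonce'},\length(s)}_\cmodel,\qquad
  M_4 = \sem{\vec v,\enc{t}{\pk(\nonce')}{\enonce'},\length(s)}_\cmodel,
\]
taken over the random tapes. The steps $M_1\approx M_2$ and $M_3\approx M_4$ are the \textsc{ind-cca1} reduction from the proof of Proposition~\ref{prop:ccaos-valid}: for $M_1\approx M_2$, a distinguisher yields an adversary that receives $\pk(\nonce)$ from the challenger, reconstructs $\sem{\vec u}_\cmodel$ and $\sem{s}_\cmodel$ using the decryption oracle on every subterm $\dec(\_,\sk(\nonce))$ — legitimate by the side conditions $\nonce\tpos_{\pk(\cdot),\sk(\cdot)}\vec u,s$ and $\sk(\nonce)\tpos_{\dec(\_,\cdot)}\vec u,s$, and because these queries precede the challenge — queries the challenge on $(\sem{s}_\cmodel,r)$, whose lengths agree by construction, and runs the distinguisher on $\sem{\vec u}_\cmodel$, the challenge, and $c_\eta$; freshness of $\enonce$ ensures the challenger's coins may stand for $\enonce$. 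The step $M_3\approx M_4$ is symmetric, using instead the side conditions on $\vec v,t$, the premise $\length(s)\peq\length(t)$ (so that $\length(r)\peq\length(t)$), and freshness of $\enonce'$.

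The crucial step is $M_2\approx M_3$. Both tuples consist of $\vec w$ (namely $\vec u$, resp. $\vec v$), followed by the encryption of $r$ under $\vec w$'s first component $\pk(\nonce)$ (resp. $\pk(\nonce')$) with fresh randomness $\enonce$ (resp. $\enonce'$), followed by $\length(s)$, and both $r$ and $\length(s)$ are name-free, key-free deterministic values. Let $F$ be the PPTM that, given a tuple whose first component is a public key, samples a fresh coin, and returns the tuple with a fresh encryption of $r$ under that public key inserted before the last position; then $M_2 = F\big(\sem{\vec u,\length(s)}_\cmodel\big)$ and $M_3 = F\big(\sem{\vec v,\length(s)}_\cmodel\big)$, since the fresh name $\enonce$ (resp. $\enonce'$) does not occur elsewhere and is faithfully simulated by $F$'s internal coin. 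As $\sem{\length(s)}_\cmodel=c_\eta$ is a fixed function of $\eta$, appending it preserves indistinguishability, so $\sem{\vec u,\length(s)}_\cmodel\approx\sem{\vec v,\length(s)}_\cmodel$ follows from $\vec u\sim\vec v$; applying the PPTM $F$ preserves $\approx$, giving $M_2\approx M_3$. Chaining the three steps yields $M_1\approx M_4$, i.e. the conclusion of this $\ccao$ instance.

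The main obstacle I expect is the middle step, in particular keeping the simulation of $\enonce,\enonce'$ and of $\length(s)$ honest: the argument hinges on $\length(s)$ — hence the neutral plaintext $r$ and the trailing length component — being independent of the secret key and of the protocol names, which is exactly why $\ccao$ carries $\length(s)$ on both sides and requires $\length(s)\peq\length(t)$. The same reasoning can alternatively be phrased inside the logic as a short derivation from $\ccao_s$ and the structural axioms: $\ccao_s$ (after a $\perm$ re-ordering so the ciphertext is last) for the two outer steps $M_1\sim M_2$ and $M_3\sim M_4$, and $\fa$ together with the premise $\vec u\sim\vec v$ for the middle step $M_2\sim M_3$; transitivity of $\sim$ then closes the chain.
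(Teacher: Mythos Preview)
Your approach is essentially the same as the paper's: a three-step hybrid using two applications of $\ccao_s$ with the neutral plaintext $0^{\length(s)}$ for the outer steps, and the premise $\vec u\sim\vec v$ together with $\fa$, $\dup$, and $\ax{Fresh}$ for the middle step, chained by transitivity. The syntactic derivation you sketch in your last paragraph is exactly what the paper does; you are in fact more explicit than the paper about the condition needed to append $\length(s)$ to both sides in the middle step (the paper simply lands on $\vec u,\length(s)\sim\vec v,\length(s)$ at the leaf without further comment).
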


\begin{proof}
  The proof relies on the transitivity axiom $\trans$ and the $\ccao_s$ axioms, which are valid are valid in  any computational model where $(\enc{\_}{\_}{\_},\dec(\_,\_),\pk(\_),\sk(\_))$ is interpreted as a $\textsc{ind-cca1}$ secure encryption scheme using Proposition~\ref{prop:ccaos-valid}.
  \[
    \infer[\trans]
    {
      \vec u, \enc{s}{\pk(\nonce)}{\enonce}
      \sim
      \vec v, \enc{t}{\pk(\nonce')}{\enonce'}
    }{
      \infer[\ccao_s]{
        \vec u, \enc{s}{\pk(\nonce)}{\enonce}
        \sim
        \vec u, \enc{0^{\length(s)}}{\pk(\nonce)}{\enonce}
      }{
        \unary{
          \length(s) = \length(0^{\length(s)})
        }
      }
      \;&\;
      \vec u, \enc{0^{\length(s)}}{\pk(\nonce)}{\enonce}
      \sim
      \vec v, \enc{0^{\length(t)}}{\pk(\nonce)}{\enonce}
      \;&\;
      \infer[\ccao_s]{
        \vec v, \enc{0^{\length(t)}}{\pk(\nonce)}{\enonce}
        \sim
        \vec v, \enc{t}{\pk(\nonce')}{\enonce'}
      }{
        \unary{
          \length(t) = \length(0^{\length(t)})
        }
      }
    }
  \]
  And:
  \[
    \infer[\fa^3]{
      \vec u, \enc{0^{\length(s)}}{\pk(\nonce)}{\enonce}
      \sim
      \vec v, \enc{0^{\length(t)}}{\pk(\nonce)}{\enonce}
    }{
      \infer[R]{
        \vec u, \length(s),\pk(\nonce),\enonce
        \sim
        \vec v, \length(t),\pk(\nonce),\enonce
      }{
        \infer[\dup]{
          \vec u, \length(s),\pk(\nonce),\enonce
          \sim
          \vec v, \length(s),\pk(\nonce),\enonce
        }{
          \infer[\ax{Fresh}]{
            \vec u, \length(s),\enonce
            \sim
            \vec v, \length(s),\enonce
          }{
            \unary{
              \vec u, \length(s)
              \sim
              \vec v, \length(s)
            }
          }
        }
        \;&\;
        \unary{
          \length(s)
          =
          \length(t)}
      }
    }
    \qedhere
  \]
\end{proof}

% \paragraph{The $\textsf{CPA}$ Axioms}
% The proof of Proposition~\ref{prop:cpa-sound} follows exactly the same steps:
% \begin{itemize}
% \item First, we define the set of axioms $\textsf{CPA}_s$ which is exactly the set of instances of $\ccao_s$ that contain no decryption.
% \item We show the validity of $\textsf{CPA}_s$ in any computational model where the encryption is interpreted as a $\textsc{ind-cpa}$ encryption scheme. The proof works exactly like the proof of Proposition~\ref{prop:ccaos-valid}.
% \item Finally, the proof of the validity of the $\textsf{CPA}$ axioms is exactly the same that the proof of Proposition~\ref{prop:ccao-norm-valid}, but using $\textsf{CPA}_s$ instead of $\ccao_s$.
% \end{itemize}

\subsection{$\textsc{prf-mac}$ Axioms}
\label{app:subsection-joints-prf}
\begin{definition}[$\prfass$ Function\cite{DBLP:books/cu/Goldreich2001,DBLP:journals/jacm/GoldreichGM86}]
  Let $H(\cdot,\cdot): \{0,1\}^*\times\{0,1\}^\eta \rightarrow \{0,1\}^\eta$ be a keyed hash functions. The function $H$  is a \emph{Pseudo Random Function} if, for any PPTM adversary $\cal A$ with access to an oracle ${\cal O}_f$:
  \[ |\Pr(k: \; {\cal A}^{{\cal O}_{H(\cdot,k)}}(1^\eta)=1)
    -
    \Pr(g:\; {\cal A}^{{\cal O}_{g(\cdot)}}(1^\eta)=1)|\]
  is negligible.
  Where
  \begin{itemize}
  \item $k$ is drawn uniformly in $\{0,1\}^\eta$.
  \item $g$ is drawn uniformly in the set of all functions from $\{0,1\}^*$ to $\{0,1\}^\eta$.
  \end{itemize}
\end{definition}
The authors of~\cite{DBLP:conf/csfw/ComonK17} already gave axioms for this property (and proved soundness). We recall their axiom schema below, using our notations:
\[
  \begin{array}[c]{c}
    \infer
    {
      \begin{alignedat}{2}
        &&&\textstyle\vec u,
        \ite{\bigvee_{i \in I} \eq{m}{m_i}}{\zero}{H(m,\key)}\\
        &\sim\;\;&&\textstyle
        \vec u,
        \ite{\bigvee_{i \in I} \eq{m}{m_i}}{\zero}{\nonce}
      \end{alignedat}
    }
    {
    }
  \end{array}
  \text{ when }
  \begin{dcases}
    \fresh{\nonce}{\vec u,m}\\
    \key \tpos_{H(\_,\cdot)} \vec u,m\\
    \{m_i \mid i \in I\} = \{u \mid H(u,\key) \in \st(\vec u,m)\}\\
    \forall u,v.\,\text{if } H(u,v) \in \st(\vec u,m) \text{ then } v \equiv \key
  \end{dcases}
\]
We simplify this axiom schema by dropping the last syntactical requirement. Indeed, it is not necessary to require that every occurrence of $H$ in $\vec u,m$ uses the key $\key$. We prove that this is valid. 
\begin{proposition}
  \label{prop:prf-simpl-less-requ}
  The following set of axioms is valid in any computational model where the $H$ is interpreted as a $\prfass$ function:
  \[
    \begin{array}[c]{c}
      \infer
      {
        \begin{alignedat}{2}
          &&&\textstyle\vec u,
          \ite{\bigvee_{i \in I} \eq{m}{m_i}}{\zero}{H(m,\key)}\\
          &\sim\;\;&&\textstyle
          \vec u,
          \ite{\bigvee_{i \in I} \eq{m}{m_i}}{\zero}{\nonce}
        \end{alignedat}
      }{}
    \end{array}
    \text{ when }
    \begin{dcases}
      \fresh{\nonce}{\vec u,m}\\
      \key \tpos_{H(\_,\cdot)} \vec u,m\\
      \{m_i \mid i \in I\} = \{u \mid H(u,\key) \in \st(\vec u,m)\}
    \end{dcases}
  \]
\end{proposition}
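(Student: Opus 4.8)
The plan is to adapt the soundness proof of the $\prfass$ axiom schema of~\cite{DBLP:conf/csfw/ComonK17} and to check that the side condition we dropped — that \emph{every} occurrence of $H$ in $\vec u,m$ uses the key $\key$ — plays no role. I argue by contraposition. Suppose $\cmodel$ is a computational model in which $H$ is interpreted as a $\prfass$ but some instance of the schema is not valid, i.e.\ there is a PPTM $\mathcal{A}$ distinguishing $\vec u,\ite{\bigvee_{i\in I}\eq{m}{m_i}}{\zero}{H(m,\key)}$ from $\vec u,\ite{\bigvee_{i\in I}\eq{m}{m_i}}{\zero}{\nonce}$ with non-negligible advantage in $\cmodel$. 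As in the original proof I interpose the hybrid distribution obtained by replacing, on both sides, every subterm of the form $H(t,\key)$ (the last component included, and inside the guard) by $g(t)$ for a fresh random function $g$. This splits the gap into two reductions to the $\prfass$ game for $H$ (real $H(\cdot,\key)$ versus $g$), separated by one information-theoretic step.

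The core is an adversary $\mathcal{B}$ that, given an oracle $f\in\{H(\cdot,\key),g\}$, produces a sample of (its hybrid version of) one of the two sides and runs $\mathcal{A}$ on it. To do so, $\mathcal{B}$ samples the interpretations of all names of $\vec u,m,\nonce$ \emph{except} $\key$, and evaluates $\vec u$, $m$ and the $m_i$ bottom-up, answering each subterm $H(t,\key)$ with a query $f(\mathrm{eval}(t))$. The only point not already in~\cite{DBLP:conf/csfw/ComonK17} is the treatment of a subterm $H(t_1,t_2)$ with $t_2\not\equiv\key$: there $\mathcal{B}$ simply evaluates $t_1$ and $t_2$ recursively and then applies the public function $H$ to the two resulting bit-strings. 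This is well-defined precisely because the retained side condition $\key\tpos_{H(\_,\cdot)}\vec u,m$ guarantees that $\key$ occurs in $t_1,t_2$, if at all, only inside subterms $H(\_,\key)$, which $\mathcal{B}$ resolves through its oracle; hence $\mathcal{B}$ never needs the bit-string of $\key$ itself. Since such subterms are evaluated coherently with every other use of the (real or simulated) function $H(\cdot,\key)$, substituting $g$ for $H(\cdot,\key)$ throughout yields exactly the intended hybrid, and taking $f=H(\cdot,\key)$ yields exactly the original distribution.

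It remains to handle the guard and the last component. Once $\mathcal{B}$ has the bit-strings of $m$ and of the $m_i$, it evaluates $\bigvee_{i\in I}\eq{m}{m_i}$; if it holds, the last component is $\zero$ on both sides and there is nothing to do. If it fails, the value of $m$ is not among the values of the $m_i$, and since the points already queried to $f$ while evaluating $\vec u,m$ are exactly the values of the arguments of the $H(\_,\key)$ subterms of $\vec u,m$ — i.e.\ the values of the $m_i$ — the point $m$ is fresh for $f$; so $\mathcal{B}$ may use $f(\mathrm{eval}(m))$ as the last component. With $f=H(\cdot,\key)$ this rebuilds the left-hand distribution; with $f=g$, freshness makes $g(\mathrm{eval}(m))$ uniform and independent of all else, hence distributed exactly like $\nonce$ — legitimate since $\fresh{\nonce}{\vec u,m}$ lets $\mathcal{B}$ sample $\nonce$ independently for the information-theoretic step. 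Chaining the two reductions and the trivial middle step transfers $\mathcal{A}$'s non-negligible advantage into a non-negligible advantage of $\mathcal{B}$ against the $\prfass$ game, contradicting the hypothesis on $H$.

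I expect the main obstacle to be bookkeeping rather than a genuine difficulty: laying out the hybrid cleanly and, within it, verifying that the simulated evaluation of $\vec u,m$ is faithful — in particular that the newly permitted subterms $H(t_1,t_2)$ with $t_2\not\equiv\key$ are evaluated consistently across all hybrids — and that the oracle query generating the challenge component always lands on a fresh point. Both are exactly the places where one must be careful about which randomness is drawn by $\mathcal{B}$ and which is held by the $\prfass$ challenger.
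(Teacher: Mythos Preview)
Your proposal is correct and follows essentially the same hybrid argument as the paper: replace all occurrences of $H(\cdot,\key)$ by a random function $g$ via the $\prfass$ assumption, use freshness of $m$ under the guard to swap $g(m)$ for a fresh nonce, then apply $\prfass$ again to restore $H(\cdot,\key)$ in $\vec u$. Your treatment of the newly allowed subterms $H(t_1,t_2)$ with $t_2\not\equiv\key$---evaluating them with the public function $H$, which is possible precisely because the retained side condition $\key\tpos_{H(\_,\cdot)}\vec u,m$ keeps $\key$ out of $t_1,t_2$---is the key observation, and matches the paper's factoring of $\vec u$ as a context $\vec v[\,]$ with $\key\notin\st(\vec v)$.

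One minor stylistic difference: the paper introduces a second independent random function $g'$ and replaces the guarded $g(m)$ by $g'(m)$ before undoing the first $\prfass$ step, so that the freshness argument becomes ``$g'$ is queried only once''. You instead argue directly that $g(\mathrm{eval}(m))$ is uniform conditioned on the guard failing. Both are fine; the paper's detour through $g'$ makes the information-theoretic step syntactically cleaner (no conditioning on the guard), while yours saves one hybrid. Your exposition would benefit from stating explicitly that the second $\prfass$ reduction uses a \emph{different} simulator (one that plugs in the sampled $\nonce$ rather than an oracle query for the last component), since the paragraph describing $\mathcal{B}$ reads as if a single reduction covers both sides.
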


\begin{proof}
  We consider an instance of the axiom schema:
  \[
    \infer
    {
      \begin{alignedat}{2}
        &&&\textstyle\vec u,
        \ite{\bigvee_{i \in I} \eq{m}{m_i}}{\zero}{H(m,\key)}\\
        &\sim\;\;&&\textstyle
        \vec u,
        \ite{\bigvee_{i \in I} \eq{m}{m_i}}{\zero}{\nonce}
      \end{alignedat}
    }{}
  \]
  Let $\vec h \equiv (H(m_i,\key))_{i \in I}$ and $\vec v[],b[]$ be contexts such that $\vec v[\vec h] \equiv \vec u$ and $b[\vec h] \equiv \bigvee_{i \in I} \eq{m}{m_i}$ and such that $\key \not \in \st(\vec v, b)$. Let $\cmodel^0$ be a computational model and $\mathcal{A}$ be an adversary. We need to show that:
  \[
    \Pr\left(
      \mathcal{A}\left(
        \lrsem{\vec v[\vec h], \cond{b[\vec h]}{H(m,\key)}}_{\cmodel^0}
        \right) = 1
    \right)
    \approx
    \Pr\left(
      \mathcal{A}\left(
        \lrsem{\vec v[\vec h], \cond{b[\vec h]}{\nonce}}_{\cmodel^0}
      \right) = 1
    \right)
  \]
  Let $\cmodel$ be an extension of $\cmodel^0$ where we added two function symbols $g,g'$ which are interpreted as random functions. Then we know that it is sufficient to show that:
  \[
    \Pr\left(
      \mathcal{A}\left(
        \lrsem{\vec v[\vec h], \cond{b[\vec h]}{H(m,\key)}}_{\cmodel}
      \right) = 1
    \right)
    \approx
    \Pr\left(
      \mathcal{A}\left(
        \lrsem{\vec v[\vec h], \cond{b[\vec h]}{\nonce}}_{\cmodel}
      \right) = 1
    \right)
    \numberthis\label{eq:fdughofuqeriowr}
  \]
  Let $\vec r \equiv (g(m_i))_{i \in I} $. It is straightforward to check that, thanks to the $\prfass$ assumption of $H$, we have:
  \[
    \Pr\left(
      \mathcal{A}\left(
        \lrsem{\vec v[\vec h], \cond{b[\vec h]}{H(m,\key)}}_{\cmodel}
      \right) = 1
    \right)
    \approx
    \Pr\left(
      \mathcal{A}\left(
        \lrsem{\vec v[\vec r\,], \cond{b[\vec r\,]}{g(m)}}_{\cmodel}
      \right) = 1
    \right)
  \]
  Moreover, using the fact that the subterm $g(m)$ is guarded by $b[\vec r\,]$, we know that, except for a negligible number of samplings, $m$ is never queried to the random function $g$ except once, in $\cond{b[\vec r\,]}{g(m)}$. It follows that we can safely replace the last call to $g(m)$ by a call to $g'(m)$, which yields:
  \[
    \Pr\left(
      \mathcal{A}\left(
        \lrsem{\vec v[\vec r\,], \cond{b[\vec r\,]}{g(m)}}_{\cmodel}
      \right) = 1
    \right)
    \approx
    \Pr\left(
      \mathcal{A}\left(
        \lrsem{\vec v[\vec r\,], \cond{b[\vec r\,]}{g'(m)}}_{\cmodel}
      \right) = 1
    \right)
  \]
  Now, using again the $\prfass$ property of $H$, we know that:
  \[
    \Pr\left(
      \mathcal{A}\left(\lrsem{\vec v[\vec r\,], \cond{b[\vec r\,]}{g'(m)}}_{\cmodel}\right) = 1
    \right)
    \approx
    \Pr\left(
      \mathcal{A}\left(\lrsem{\vec v[\vec h], \cond{b[\vec h]}{g'(m)}}_{\cmodel}\right) = 1
    \right)
  \]
  Finally, since $g'$ appears only once in $\vec v[\vec h], \cond{b[\vec h]}{g'(m)}$, we can replace $g'(m)$ by a fresh nonce. Hence:
  \[
    \Pr\left(
      \mathcal{A}\left(\lrsem{\vec v[\vec h], \cond{b[\vec h]}{g'(m)}}_{\cmodel}\right) = 1
    \right)
    \approx
    \Pr\left(
      \mathcal{A}\left(\lrsem{\vec v[\vec h], \cond{b[\vec h]}{\nonce}}_{\cmodel}\right) = 1
    \right)
  \]
  This concludes the proof of \eqref{eq:fdughofuqeriowr}.
\end{proof}

This can be extended to have a finite family of functions being jointly $\prfass$. A finite family of functions $H_1(\cdot,k),\dots,H_n(\cdot,k)$ are jointly $\prfass$ if they are jointly computationally indistinguishable from random functions. Formally:
\begin{definition}[Jointly $\prfass$ Functions]
  Let $H_1(\cdot,\cdot),\dots,H_n(\cdot,\cdot)$ be a finite family of keyed hash functions such that for every $1 \le i \le n$, $H_i(\cdot,\cdot) : \{0,1\}^*\times\{0,1\}^\eta \rightarrow \{0,1\}^\eta$. The functions $H_1,\dots,H_n$ are \emph{Jointly Pseudo Random Functions} if, for any PPTM adversary $\cal A$ with access to oracles ${\cal O}_{f_1},\dots,{\cal O}_{f_n}$:
  \[ |\Pr(k: \; {\cal A}^{{\cal O}_{H_1(\cdot,k)},\dots,{\cal O}_{H_n(\cdot,k)}}(1^\eta)=1)
    -
    \Pr(g_1,\dots,g_n:\; {\cal A}^{{\cal O}_{g_1(\cdot)},\dots,{\cal O}_{g_n(\cdot)}}(1^\eta)=1)|\]
  is negligible.
  Where
  \begin{itemize}
  \item $k$ is drawn uniformly in $\{0,1\}^\eta$.
  \item $g_1,\dots,g_n$ are drawn uniformly in the set of all functions from $\{0,1\}^*$ to $\{0,1\}^\eta$.
  \end{itemize}
\end{definition}

\begin{remark}
  It is easy to build a family $H_1,\dots,H_n$ of jointly pseudo random functions from a pseudo random function $H(\cdot,\cdot)$. First, let $(\ttag_i(\cdot))_{1 \le i \le n}$ be a set of tagging functions. We require that these functions are unambiguous, i.e. for all bit-strings $u,v$ and $i \ne j$ we must have $\ttag_i(u) \ne \ttag_j(v)$. Then for every $1 \le i \le n$, we let $H_i(x,y) = H(\ttag_i(x),y)$. It is straightforward to show that if $H$ is a $\prfass$ then $H_1,\dots,H_n$ are jointly $\prfass$.
\end{remark}

Now, we translate this property for $\owsym$ and $\rowsym$ (resp. $\macsym^1$--$\,\macsym^5$) in the logic.

\begin{definition}
  We let $\setmac_{\mkey}^j(u)$ be the set of $\macsym^j$ terms under key $\mkey$ in $u$:
  \[
    \setmac_{\mkey}^j(u) = \{ m \mid \mac{m}{\mkey}{j} \in \st(u)\}
  \]
\end{definition}

\begin{definition}
  \label{def:prf-mac-ax}
  For every $1 \le j \le 5$, we let $\prfmac^j$ be the set of axioms:
  \[
    \begin{array}[c]{c}
      \infer[\prfmac^j]
      {
        \begin{alignedat}[c]{2}
          &&&\textstyle\vec u,
          \ite{\bigvee_{i \in I} \eq{m}{m_i}}{\zero}{\mac{m}{\mkey}{j}}\\
          &\sim\;\;&&\textstyle
          \vec u,
          \ite{\bigvee_{i \in I} \eq{m}{m_i}}{\zero}{\nonce}
        \end{alignedat}
      }{}
    \end{array}
    \text{ when }
    \begin{dcases}
      \fresh{\nonce}{\vec u,m}\\
      \mkey \tpos_{\mac{\_}{\cdot}{\_}} \vec u,m\\
      \{m_i \mid i \in I\} = \setmac_{\mkey}^j(\vec u,m)
    \end{dcases}
  \]
\end{definition}

\begin{definition}
  Let $g \in \{\owsym,\rowsym\}$. We let $\setprf^{g}_{\key}(u)$ be the set of $g$ terms under key $\key$ in $u$:
  \[
    \setprf^{g}_{\key}(u) = \{ m \mid g_\key(m) \in \st(u)\}
  \]
\end{definition}

\begin{definition}
  \label{def:prf-f-fr-ax}
  For every $g \in \{\owsym,\rowsym\}$, we let $\prfass$-$g$ be the set of axioms:
  \[
    \begin{array}[c]{c}
      \infer[\text{$\prfass$-g}]
      {
        \begin{alignedat}[c]{2}
          &&&\textstyle\vec u,
          \ite{\bigvee_{i \in I} \eq{m}{m_i}}{\zero}{g_\key(m)}\\
          &\sim\;\;&&\textstyle
          \vec u,
          \ite{\bigvee_{i \in I} \eq{m}{m_i}}{\zero}{\nonce}
        \end{alignedat}
      }{}
    \end{array}
    \text{ when }
    \begin{dcases}
      \fresh{\nonce}{\vec u,m}\\
      \key \tpos_{\owsym_{\cdot}(\_),\rowsym_{\cdot}(\_)} \vec u,m\\
      \{m_i \mid i \in I\} = \setprf^g_{\key}(\vec u,m)
    \end{dcases}
  \]
\end{definition}

\begin{proposition}
  \label{prop:prf-mult-norm-valid}
  The $(\prfmac^j)$ (resp. $\prff$ and $\prffr$) axiom schemes are valid in any computational model where the $(\macsym^j)$ (resp. $\owsym$ and $\rowsym$) function symbols are interpreted as jointly $\prfass$ functions.
\end{proposition}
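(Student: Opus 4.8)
The plan is to lift the hybrid argument from the proof of Proposition~\ref{prop:prf-simpl-less-requ} to the joint setting, by first isolating a generic statement: if $H_1,\dots,H_n$ are jointly $\prfass$ functions, all keyed by the same symbol $\key$, then for every $1 \le j \le n$ the guarded-replacement axiom
\[
  \vec u,\; \ite{\bigvee_{i \in I}\eq{m}{m_i}}{\zero}{H_j(m,\key)}
  \;\sim\;
  \vec u,\; \ite{\bigvee_{i \in I}\eq{m}{m_i}}{\zero}{\nonce}
\]
is valid whenever $\fresh{\nonce}{\vec u,m}$, the symbol $\key$ occurs in $\vec u,m$ only as the key argument of one of the $H_l$, and $\{m_i \mid i \in I\}$ is exactly the set of first arguments of $H_j$ occurring in $\st(\vec u,m)$. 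Both $\prfmac^j$ (instantiating $H_1,\dots,H_5$ by $\macsym^1,\dots,\macsym^5$ keyed by $\mkey$, with $n=5$) and $\prff$, $\prffr$ (instantiating $H_1,H_2$ by $\owsym,\rowsym$ keyed by $\key$, with $n=2$) are instances of this statement, since the side-conditions of Definitions~\ref{def:prf-mac-ax} and~\ref{def:prf-f-fr-ax} unfold precisely to these hypotheses. The soundness of $\prfmac^j$, $\prff$ and $\prffr$ in a computational model where the corresponding families are interpreted as jointly $\prfass$ then follows immediately from the soundness of the generic axiom.

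To prove the generic statement, fix a computational model $\cmodel^0$ in which $H_1,\dots,H_n$ are jointly $\prfass$, and extend it, as in the proof of Proposition~\ref{prop:prf-simpl-less-requ}, to a model $\cmodel$ with fresh function symbols $g_1,\dots,g_n,g'$ interpreted as independent random functions; it suffices to prove the indistinguishability in $\cmodel$. Put $b \equiv \bigvee_{i\in I}\eq{m}{m_i}$, and decompose $\vec u \equiv \vec v[\vec h]$, $b \equiv b_0[\vec h]$, where $\vec h$ lists all honest subterms of the form $H_l(t,\key)$ occurring in $\vec u,m$ and the contexts $\vec v, b_0$ no longer mention $\key$ (note that $b$ only involves the $H_j$-subterms). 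I would then chain four computational approximations. (1) Using the \emph{joint} $\prfass$ assumption, simultaneously replace every $H_l(\cdot,\key)$ by $g_l$; the side-condition that $\key$ occurs only as a key argument of some $H_l$ ensures that the contexts $\vec v,b_0$ can be evaluated by a machine that does not know $\key$, so a reduction to the joint-$\prfass$ game goes through, the simulator relaying each syntactic occurrence $H_l(t,\key)$ to its $l$-th oracle; after this step the guarded subterm reads $\ite{b_0[\vec r]}{\zero}{g_j(m)}$, with $\vec r$ the tuple of $g_l$-values that replaced $\vec h$. (2) Since $\ite{b_0[\vec r]}{\zero}{g_j(m)}$ evaluates $g_j(m)$ only on the branches where $m$ differs from every argument already fed to $g_j$, this occurrence is, except with negligible probability, a point never queried to $g_j$ elsewhere, hence may be replaced by $g'(m)$. (3) Apply the joint $\prfass$ assumption in reverse to turn each $g_l$ back into $H_l(\cdot,\key)$, restoring $\vec u$ verbatim and leaving $\ite{b}{\zero}{g'(m)}$. (4) As $g'$ now occurs only in that single subterm, replace $g'(m)$ by the nonce $\nonce$, which is sound because $\fresh{\nonce}{\vec u,m}$. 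Composing the four approximations gives the generic axiom.

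The only genuinely new ingredient compared with Proposition~\ref{prop:prf-simpl-less-requ} is the use of the \emph{joint} assumption in steps (1) and (3): here a single key symbol is shared by several distinct function symbols, so no individual $\prfass$ assumption could license replacing them all at once, whereas the joint $\prfass$ hypothesis does, via the simulator that relays occurrences of $H_l$ to its $l$-th oracle. The point I expect to require the most care is step (2): one must check that the guard $\bigvee_{i\in I}\eq{m}{m_i}$, which by the side-condition ranges over \emph{all and only} the first arguments of $H_j$ occurring in $\vec u,m$, really forces the evaluated occurrence of $g_j(m)$ to be a fresh point of $g_j$; the degenerate case in which $m$ is syntactically one of the $m_i$ is harmless, since then $b$ always evaluates to $\true$ and both sides of the equivalence reduce to $\vec u,\zero$. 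This is exactly the bookkeeping already carried out in the single-function proof, now performed for $H_j$ against its $n-1$ siblings, which cannot interfere because their honest values travel through the functions $g_l$ with $l \ne j$.
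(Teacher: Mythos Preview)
Your proposal is correct and follows essentially the same approach as the paper, which simply states that the soundness proof ``follows the same step'' as Proposition~\ref{prop:prf-simpl-less-requ}, ``by replacing every call $H_j(\_,\key)$ with a call to a random function $g_i(\_)$, where $(g_i)_{1 \le i \le n}$ are independent random functions,'' and omits the details. You have in fact supplied those omitted details carefully, including the correct observation that the joint $\prfass$ assumption is what licenses the simultaneous replacement in steps (1) and (3), and that the guard in step (2) ranges only over the $H_j$-arguments (not all $H_l$-arguments), which is exactly what is needed since only collisions with prior $g_j$-queries matter.
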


\begin{proof}
  This is an extension of the axiom schema given in Proposition~\ref{prop:prf-simpl-less-requ}. The soundness proof follows the same step, by replacing every call $H_j(\_,\key)$ with a call to a random function $g_i(\_)$, where $(g_i)_{1 \le i \le n}$ are independent random functions. We omit the details.
\end{proof}

\begin{remark}
  If we have a valid instance of $\prfmac^j$:
  \[
    \infer[\prfmac^j]
    {
      \begin{alignedat}[c]{2}
        &&&\textstyle\vec u,
        \ite{\bigvee_{i \in I} \eq{m}{m_i}}{\zero}{\mac{m}{\mkey}{j}}\\
        &\sim\;\;&&\textstyle
        \vec u,
        \ite{\bigvee_{i \in I} \eq{m}{m_i}}{\zero}{\nonce}
      \end{alignedat}
    }{}
  \]
  then using transitivity we know that:
  \[
    \infer[\trans]
    {
      \vec u,
      \ite{\bigvee_{i \in I} \eq{m}{m_i}}{\zero}{\mac{m}{\mkey}{j}}
      \sim
      \vec v
    }{
      \infer[\prfmac^j]{
        \begin{alignedat}{2}
          &&&\textstyle\vec u,
          \ite{\bigvee_{i \in I} \eq{m}{m_i}}{\zero}{\mac{m}{\mkey}{j}}\\
          &\sim\;\;&&\textstyle
          \vec u,
          \ite{\bigvee_{i \in I} \eq{m}{m_i}}{\zero}{\nonce}
        \end{alignedat}
      }{}
      \quad&\quad
      \vec u,
      \ite{\bigvee_{i \in I} \eq{m}{m_i}}{\zero}{\nonce}
      \sim
      \vec v
    }
  \]
  Therefore the following axiom schema is admissible using $\prfmac^j + \trans$:
  \[
    \begin{array}[c]{c}\infer
      {
        \vec u,
        \ite{\bigvee_{i \in I} \eq{m}{m_i}}{\zero}{\mac{m}{\mkey}{j}}
        \sim
        \vec v
      }{
        \vec u,
        \ite{\bigvee_{i \in I} \eq{m}{m_i}}{\zero}{\nonce}
        \sim
        \vec v
      }
    \end{array}
    \text{ when }
    \begin{dcases}
      \fresh{\nonce}{\vec u,m}\\
      \mkey \tpos_{\mac{\_}{\cdot}{\_}} \vec u,m\\
      \{m_i \mid i \in I\} = \setmac_{\mkey}^j(\vec u,m)
    \end{dcases}
  \]
  We will prefer the axiom schema above over the axiom schema given in Definition~\ref{def:prf-mac-ax}. By a notation abuse, we refer also to the axiom above as $\prfmac^j$. The same remark applies to $\prff$ and $\prffr$.
\end{remark}

\subsection{$\textsc{euf-mac}$ Axioms}

\paragraph{The \textsc{simp-euf-mac} Axioms}
\begin{definition}
  We let $\setmac_{\mkey}(u)$ be the set of $\macsym$ terms under key $\mkey$ in $u$:
  \[
    \setmac_{\mkey}(u) = \{ m \mid \macsym_{\mkey}(m) \in \st(u)\}
  \]
\end{definition}

\begin{definition}
  A function $\macsym$ is \textsc{euf-mac} secure if for every PPTM $\mathcal{A}$, the following quantity is negligible in $\eta$:
  \[
    \Pr\left(\mkey \la \{0,1\}^\eta :
      (m,\sigma) \la \mathcal{A}^{\mathcal{O}_{\macsym}(\mkey)}(1^\eta),
      m \text{ not queried to } \mathcal{O}_{\macsym}(\mkey)
      \text{ and }
      \sigma = \mac{m}{\mkey}{}\right)
  \]
\end{definition}

This can be modeled using the following axioms:
\begin{definition}
  We let \textsc{simp-euf-mac} be the set of axioms:
  \[
    \unary{s = \macsym_{\mkey}(m) \ra
      \bigvee_{u \in S} s = \macsym_{\mkey}(u)}
    \quad\text{ when }
    \begin{cases}
      \mkey \tpos_{\macsym_{\cdot}(\_)} s,m\\
      S = \setmac_{\mkey}(s,m)
    \end{cases}
    \tag{\textsc{simp-euf-mac}}
  \]
\end{definition}

\begin{proposition}
  The \textsc{simp-euf-mac} axioms are valid in any computational model where the $\macsym$ function is interpreted as an \textsc{euf-mac} secure function.
\end{proposition}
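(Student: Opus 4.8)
I would prove soundness by a reduction to the \textsc{euf-mac} game, of the same shape as the reduction used for Proposition~\ref{prop:ccaos-valid}. Since the displayed axiom is a closed Boolean term $b \equiv \big(\eq{s}{\macsym_\mkey(m)} \ra \bigvee_{u \in S} \eq{s}{\macsym_\mkey(u)}\big)$, its validity in a computational model $\cmodel$ is equivalent to $\Pr\big(\sem{b}_\cmodel = 0\big)$ being negligible in $\eta$ (the identity distinguisher is optimal). So I would fix a model $\cmodel$ in which $\macsym$ is interpreted as an \textsc{euf-mac} secure function, fix an instance of \textsc{simp-euf-mac} satisfying $\mkey \tpos_{\macsym_\cdot(\_)} s,m$ and $S = \setmac_\mkey(s,m)$, and let $\textsf{Bad}$ be the event, over the random tapes of $\cmodel$, that $\sem{s}$ is a valid $\macsym_\mkey$-tag of $\sem{m}$ while $\sem{s} \ne \macsym_\mkey(\sem{u})$ for every $u \in S$; the goal is $\Pr(\textsf{Bad})$ negligible.

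From $\cmodel$ I would build a PPTM $\mathcal{B}$ playing the \textsc{euf-mac} game against $\macsym$, with oracle access to $\macsym_\mkey(\cdot)$ for the challenger's key $\mkey$. The crucial point is that the side condition $\mkey \tpos_{\macsym_\cdot(\_)} s,m$ forces every occurrence of $\mkey$ in $s,m$ to sit in the key slot of a subterm $\macsym_\mkey(w)$ with $w \in S$. Hence $\mathcal{B}$ can evaluate $s$ and $m$ bottom-up exactly as $\cmodel$ does — sampling itself the part of the protocol tape not associated with the name $\mkey$, together with its own coins — the only change being that whenever it reaches a node $\macsym_\mkey(w)$ it first computes (recursively, in the same way) the value of $w$ and then queries its oracle on it rather than applying the key directly. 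Because the MAC is deterministic, the oracle returns exactly $\macsym_\mkey(\sem{w})$, so $\mathcal{B}$ reproduces the correct distributions of $\sem{s}$ and $\sem{m}$; it runs in polynomial time since $s,m$ are fixed terms and every function symbol is poly-time. Finally $\mathcal{B}$ outputs $(\sem{m},\sem{s})$ as its forgery attempt.

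Then I would check that $\textsf{Bad}$ entails that $\mathcal{B}$ wins. On $\textsf{Bad}$ we have $\sem{s} = \macsym_\mkey(\sem{m})$, so $\sem{s}$ is a valid tag of $\sem{m}$; and the only messages $\mathcal{B}$ ever submitted to its oracle are the bit-strings $\{\sem{u} \mid u \in S\}$, so the forged message $\sem{m}$ is fresh as long as $\sem{m} \ne \sem{u}$ for all $u \in S$ — which holds on $\textsf{Bad}$, since $\sem{m} = \sem{u}$ would give $\sem{s} = \macsym_\mkey(\sem{m}) = \macsym_\mkey(\sem{u})$, contradicting $\textsf{Bad}$. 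Thus $\textsf{Bad}$ implies a fresh forgery, so $\Pr(\textsf{Bad}) \le \Pr(\mathcal{B}\text{ wins})$, which is negligible by \textsc{euf-mac} security of $\macsym$ in $\cmodel$; hence every instance of the schema is valid. (The case $S = \emptyset$ is automatically covered: then, together with the side condition, $\mkey$ does not occur in $s,m$ at all, $\mathcal{B}$ makes no query, and any valid tag is a forgery.)

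The argument is almost entirely routine — it is a strict simplification of the reductions already carried out in this appendix — so the only two points deserving explicit attention are: the bottom-up evaluability of $s,m$ by $\mathcal{B}$, which is exactly where $\mkey \tpos_{\macsym_\cdot(\_)} s,m$ is used; and the freshness step, i.e. the small observation that $\sem{m}$ coinciding with some queried $\sem{u}$ would make the axiom's conclusion hold rather than fail. Neither is a real obstacle, so I would keep the write-up short and cross-reference Proposition~\ref{prop:ccaos-valid} for the reduction boilerplate.
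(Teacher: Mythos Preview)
Your proof is correct and follows exactly the approach the paper intends: the paper's own proof simply says ``The proof is similar to the proof of Proposition~\ref{prop:ccaos-valid}. We omit the details.'' Your write-up is in fact more detailed than what the paper provides, and the two points you flag (bottom-up evaluability via the $\tpos$ side condition, and the freshness argument showing $\sem{m}\notin\{\sem{u}\mid u\in S\}$ on $\textsf{Bad}$) are precisely the places where the reduction needs care.
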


The proof is similar to the proof of Proposition~\ref{prop:ccaos-valid}. We omit the details.

\paragraph{The \textsc{euf-mac} Axioms}
It is well-know that if a function $H$ is a $\prfass$ then $H$ is $\textsc{euf-mac}$ secure. We give here the counterpart of this result for a family of functions $H,H_1,\dots,H_l$ which are jointly $\prfass$.

If $H,H_1,\dots,H_l$ which are jointly $\prfass$, then no adversary can forge a mac of $H(\cdot,\mkey)$, even if the adversary has oracle access to $H(\cdot,\mkey),H_1(\cdot,\mkey),\dots,H_l(\cdot,\mkey)$. First, we define what it means for a function to be \textsc{euf-mac} secure with a key jointly used by other functions:
\begin{definition}
  A function $H$ is \textsc{euf-mac} secure with a key jointly used by $H_1,\dots,H_l$ if for every PPTM $\mathcal{A}$, the following quantity is negligible in $\eta$:
  \[
    \Pr\left(\mkey \la \{0,1\}^\eta :
      (m,\sigma) \la
      \mathcal{A}^{\mathcal{O}_{H(\cdot,\mkey)},
        \mathcal{O}_{H_1(\cdot,\mkey)},
        \dots,
        \mathcal{O}_{H_l(\cdot,\mkey)}}(1^\eta),
      m \text{ not queried to } \mathcal{O}_{H(\cdot,\mkey)}
      \text{ and }
      \sigma = H(m,\mkey)\right)
  \]
\end{definition}

\begin{proposition}
  \label{prop:prf-implies-mac-joints-app}
  If $H,H_1,\dots,H_l$ are jointly $\prfass$ then $H$ is \textsc{euf-mac} secure with a key jointly used by $H_1,\dots,H_l$.
\end{proposition}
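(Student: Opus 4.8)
The plan is to give the standard reduction from the \textsc{euf-mac} security of $H$ (with a key jointly used by $H_1,\dots,H_l$) to the joint $\prfass$ assumption on $H,H_1,\dots,H_l$, the only new ingredient compared to the classical ``$\prfass$ implies \textsc{euf-mac}'' argument being the mutual independence of the random functions replacing the $H_i$. Assume for contradiction that $H$ is not \textsc{euf-mac} secure with a key jointly used by $H_1,\dots,H_l$: there is a PPTM $\mathcal{A}$ and a non-negligible $\epsilon$ such that, with probability $\epsilon(\eta)$, $\mathcal{A}^{\mathcal{O}_{H(\cdot,\mkey)},\mathcal{O}_{H_1(\cdot,\mkey)},\dots,\mathcal{O}_{H_l(\cdot,\mkey)}}(1^\eta)$ outputs a pair $(m,\sigma)$ with $m$ never queried to $\mathcal{O}_{H(\cdot,\mkey)}$ and $\sigma = H(m,\mkey)$. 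We may assume w.l.o.g. that $\mathcal{A}$ never repeats a query and records the list of its queries.

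From $\mathcal{A}$ I would build an adversary $\mathcal{B}$ against the joint $\prfass$ game. $\mathcal{B}$ receives oracle access to $l+1$ oracles $\mathcal{O}_{f_0},\dots,\mathcal{O}_{f_l}$ which are either $\bigl(H(\cdot,k),H_1(\cdot,k),\dots,H_l(\cdot,k)\bigr)$ for a uniformly drawn $k$, or $(g_0,g_1,\dots,g_l)$ for mutually independent uniformly drawn functions. $\mathcal{B}$ runs $\mathcal{A}$, answering a query of $\mathcal{A}$ to $\mathcal{O}_{H(\cdot,\mkey)}$ (resp. $\mathcal{O}_{H_i(\cdot,\mkey)}$) by forwarding it to $\mathcal{O}_{f_0}$ (resp. $\mathcal{O}_{f_i}$) and returning the answer. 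When $\mathcal{A}$ halts with output $(m,\sigma)$, $\mathcal{B}$ checks whether $m$ was previously forwarded to $\mathcal{O}_{f_0}$; if so it outputs $0$, otherwise it queries $\mathcal{O}_{f_0}$ on $m$ and outputs $1$ iff the answer equals $\sigma$.

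It then remains to bound $\mathcal{B}$'s distinguishing advantage. In the real case, $\mathcal{B}$ perfectly simulates the \textsc{euf-mac} game for $\mathcal{A}$ (the final verification query to $\mathcal{O}_{f_0}$ is legitimate since $m$ was not forwarded before), so $\mathcal{B}$ outputs $1$ with probability exactly $\epsilon(\eta)$. In the ideal case, $\mathcal{A}$'s view is a function of the values of $g_0,\dots,g_l$ on the points it queried; since $m$ was never queried to $g_0$ and the $g_i$ are mutually independent, $g_0(m)$ is uniform in $\{0,1\}^\eta$ and independent of $\mathcal{A}$'s output $\sigma$, hence $\mathcal{B}$ outputs $1$ with probability $2^{-\eta}$. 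Therefore $\bigl|\Pr(\mathcal{B}=1 \mid \text{real}) - \Pr(\mathcal{B}=1 \mid \text{ideal})\bigr| \ge \epsilon(\eta) - 2^{-\eta}$, which is non-negligible, contradicting the joint $\prfass$ assumption; hence $H$ is \textsc{euf-mac} secure with a key jointly used by $H_1,\dots,H_l$. There is no real obstacle in this proof: the only point deserving care is that $m$ may well have been queried to the other oracles $\mathcal{O}_{f_i}$ with $i \ge 1$, which is precisely why jointness must be used through the independence of $g_0$ from $g_1,\dots,g_l$; the analogous collision-resistance statement used elsewhere (for the $\textsc{cr}^j$ axioms) follows by the same template, replacing the forgery check by a check that two distinct queried messages received the same $g_0$-value.
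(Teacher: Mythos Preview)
Your proof is correct and follows essentially the same approach as the paper's: both build a distinguisher $\mathcal{B}$ against the joint $\prfass$ game by running $\mathcal{A}$, forwarding its oracle queries, and then testing whether the forged $\sigma$ matches the value of the first oracle on the fresh message $m$, bounding the ideal-world success by $2^{-\eta}$. Your explicit emphasis on the mutual independence of $g_0$ from $g_1,\dots,g_l$ (allowing $m$ to have been queried to the other oracles) is a welcome clarification that the paper leaves implicit.
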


\begin{proof}
  The proof is almost the same than the proof showing that if a function $H$ is a $\prfass$ then $H$ is $\textsc{euf-mac}$ secure, and is by reduction. If $H$ is not \textsc{euf-mac} secure with a key jointly used by $H_1,\dots,H_l$ then there exists an adversary $\mathcal{A}$ winning the corresponding game with a non-negligible probability. It is simple to build from $\mathcal{A}$ an adversary $\mathcal{B}$ against the joint $\prfass$ property of $H,H_1,\dots,H_l$.

  First, $\mathcal{B}$ runs the adversary $\mathcal{A}$, forwarding and logging its the oracle calls. Eventually, $\mathcal{A}$ returns a pair $(m,\sigma)$. Then, $\mathcal{B}$ queries the first oracle on $m$, which returns a value $\sigma'$. Finally, $\mathcal{B}$ returns $1$ if and only if $\mathcal{A}$ never queried the first oracle on $m$ and $\sigma' = \sigma$. Then:
  \begin{itemize}
  \item If $\mathcal{B}$ is interacting with the oracles $\mathcal{O}_{H(\cdot,\mkey)},\mathcal{O}_{H_1(\cdot,\mkey)}, \dots, \mathcal{O}_{H_l(\cdot,\mkey)}$, its probability of returning $1$ is exactly the advantage of $\mathcal{A}$ against the \textsc{euf-mac} game with key jointly used.
  \item If $\mathcal{B}$ is interacting with the oracles $\mathcal{O}_{g(\cdot)},\mathcal{O}_{g_1(\cdot)}, \dots, \mathcal{O}_{g_l(\cdot)}$ where $g,g_1,\dots,g_l$ are random functions, then its probability of returning $1$ is the probability of having $g(m) = \sigma$ knowing that $m$ was never queried to $g$. Since $g$ is a random function, this is less than $1/2^\eta$.
  \end{itemize}
  Since $\mathcal{A}$ has a non-negligible advantage against the \textsc{euf-mac} game with key jointly used, we deduce that $\mathcal{B}$ has a non-negligible advantage against the joint \prff game.
\end{proof}

We translate this cryptographic property in the logic to obtain the sets of axioms $(\textsc{euf-mac}^j)_{1 \le j \le 5}$. % Our definition is more general than the one given in the body of the paper, by allowing to rewrite using $R$ the element of the set of subterms of $s,m$ of the form $\mac{\_}{\mkey}{j}$ (i.e. we ask that $S \setreq \setmac_{\mkey}^j(s,m)$ instead of $S = \setmac_{\mkey}^j(s,m)$).
\begin{definition}
  We let $\textsc{euf-mac}^j$ be the set of axioms:
  \[
    \unary{s = \mac{m}{\mkey}{j} \ra
      \bigvee_{u \in S} s = \mac{u}{\mkey}{j}}
    \quad\text{ when }
    \begin{cases}
      \mkey \tpos_{\mac{\_}{\cdot}{\_}} s,m\\
      S = \setmac_{\mkey}^j(s,m)
    \end{cases}
    \tag{$\textsc{euf-mac}^j$}
  \]
\end{definition}

\begin{proposition}
  \label{prop:euf-mac-valid-app}
  The $\textsc{euf-mac}^j$ axioms are valid in any computational model where the $\macsym^j$ function is interpreted as a \textsc{euf-mac} secure function with a key jointly used by the interpretations of $(\macsym^i)_{1\le i \le 5, i \ne j}$.
\end{proposition}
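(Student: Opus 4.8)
The plan is to prove soundness by a direct reduction to the \textsc{euf-mac} game with a jointly‑used key, in exactly the style of the soundness argument for Proposition~\ref{prop:ccaos-valid} and for the single‑function \textsc{simp-euf-mac} axioms. Fix $j$ and argue by contradiction: assume there is a computational model $\cmodel$ in which $\macsym^j$ is \textsc{euf-mac} secure with a key jointly used by $(\macsym^i)_{i \ne j}$, but in which some instance of $\textsc{euf-mac}^j$ is not valid. Spelling this out, there are ground terms $s,m$ satisfying the side‑conditions $\mkey \tpos_{\mac{\_}{\cdot}{\_}} s,m$ and $S = \setmac_{\mkey}^j(s,m)$, such that the boolean term
\[
  s = \mac{m}{\mkey}{j} \ra \bigvee_{u \in S} s = \mac{u}{\mkey}{j}
\]
evaluates to $\false$ with non‑negligible probability over the random tapes; equivalently, with non‑negligible probability one has both $\sem{s}_\cmodel = \macsym^j(\sem{m}_\cmodel,\mkey)$ and $\sem{s}_\cmodel \ne \macsym^j(\sem{u}_\cmodel,\mkey)$ for every $u \in S$.

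From this I would build a forger $\mathcal{B}$. Given $1^\eta$ and oracle access to $\mathcal{O}_{\macsym^j(\cdot,\mkey)}$ and to each $\mathcal{O}_{\macsym^i(\cdot,\mkey)}$ for $i \ne j$ (with $\mkey$ the challenger's secret), $\mathcal{B}$ samples all the protocol names other than $\mkey$ together with its own adversarial tape, hard‑codes the interpretations given by $\cmodel$, and evaluates $\sem{s}_\cmodel$ and $\sem{m}_\cmodel$ bottom‑up; whenever the evaluation reaches a subterm $\mac{w}{\mkey}{i}$ it replaces it by the oracle answer $\mathcal{O}_{\macsym^i(\cdot,\mkey)}(\sem{w}_\cmodel)$. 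The side‑condition $\mkey \tpos_{\mac{\_}{\cdot}{\_}} s,m$ is precisely what makes this possible: $\mkey$ never occurs in $s,m$ outside the key slot of a $\macsym$, so no adversarial symbol in $\mathcal{G}$ and no other primitive is ever applied to $\mkey$ directly, and the whole computation is polynomial with a constant number of queries. Finally $\mathcal{B}$ outputs $(\sem{m}_\cmodel,\sem{s}_\cmodel)$.

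The last step is to check that $\mathcal{B}$ wins whenever the bad event occurs. The messages that $\mathcal{B}$ submits to $\mathcal{O}_{\macsym^j(\cdot,\mkey)}$ are exactly $\{\sem{u}_\cmodel : u \in S\}$, since $S = \setmac_{\mkey}^j(s,m)$ collects precisely the messages of the $\macsym^j$‑under‑$\mkey$ subterms of $s,m$. On the bad event, $\sem{s}_\cmodel = \macsym^j(\sem{m}_\cmodel,\mkey)$ gives a correct tag; and if $\sem{m}_\cmodel$ had been queried it would equal some $\sem{u}_\cmodel$ with $u \in S$, forcing $\sem{s}_\cmodel = \macsym^j(\sem{u}_\cmodel,\mkey)$ and contradicting the bad event, so $\sem{m}_\cmodel$ is fresh and the forgery is valid (in particular, the degenerate case $m \in S$ makes the implication a tautology and so cannot contribute to the bad event). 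Hence $\mathcal{B}$ wins with non‑negligible probability, contradicting the assumed security of $\macsym^j$ with a jointly‑used key. I expect the only delicate point to be exactly this bookkeeping — identifying the oracle‑query set with $\setmac_{\mkey}^j(s,m)$ and checking that the side‑conditions let $\mathcal{B}$ simulate $\cmodel$ faithfully — which is routine and essentially the same as in the single‑key \textsc{simp-euf-mac} case.
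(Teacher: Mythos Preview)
Your proof is correct and follows essentially the same approach as the paper: a reduction by contradiction to the \textsc{euf-mac} game with jointly-used key, in which the forger evaluates $s$ and $m$ bottom-up using the oracles wherever a $\mac{\_}{\mkey}{i}$ subterm is encountered, relying on the side-condition $\mkey \tpos_{\mac{\_}{\cdot}{\_}} s,m$ to ensure this is possible. Your write-up is in fact slightly more careful than the paper's in explicitly identifying the oracle-query set with $\setmac_{\mkey}^j(s,m)$ and arguing freshness of $\sem{m}$.
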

Remark that it is easy to prove Proposition~\ref{prop:euf-mac-valid} using Proposition~\ref{prop:prf-implies-mac-joints-app} and Proposition~\ref{prop:euf-mac-valid-app}.

\begin{proof}[Proof of Proposition~\ref{prop:euf-mac-valid-app}]
  The proof is straightforward and has the same structure than the proof of Proposition~\ref{prop:ccaos-valid}. 

  We assume that there is a computational model $\cmodel$ where the $\macsym^j$ function is interpreted as a \textsc{euf-mac} secure function with a key jointly used by the interpretations of $(\macsym^i)_{1\le i \le 5, i \ne j}$. Moreover, we assume that there is an instance:
  \[
    \unary{s = \mac{m}{\mkey}{j} \ra
      \bigvee_{u \in S} s = \mac{u}{\mkey}{j}}
  \]
  of the $\textsc{euf-mac}^j$ which is not valid in $\cmodel$, and such that:
  \begin{mathpar}
    \mkey \tpos_{\mac{\_}{\cdot}{\_}} s,m

    S = \setmac_{\mkey}^j(s,m)
  \end{mathpar}
  Therefore we know that the following quantity is not negligible in $\eta$:
  \[
    \textstyle
    \Pr\left(\rho_1,\rho_2:
      \sem{s = \mac{m}{\mkey}{j}}_{\rho_1,\rho_2}^\eta
      \wedge
      \neg \sem{\bigvee_{u \in S} s = \mac{u}{\mkey}{j}}_{\rho_1,\rho_2}^\eta
    \right)
  \]
  Or equivalently the following quantity is not negligible in $\eta$:
  \[
    \textstyle
    \Pr\left(\rho_1,\rho_2:
      \sem{s}_{\rho_1,\rho_2}^\eta = \sem{\mac{m}{\mkey}{j}}_{\rho_1,\rho_2}^\eta
      \wedge
      \bigwedge_{u \in S} \sem{s}_{\rho_1,\rho_2}^\eta \ne
      \sem{\mac{u}{\mkey}{j}}_{\rho_1,\rho_2}^\eta
    \right)
    \numberthis\label{eq:fofhdsogfhqwirjiod}
  \]
  Using $\cmodel$ we can build a adversary $\mathcal{A}$ against the \textsc{euf-mac} game with key jointly used. The adversary $\mathcal{A}$ simply samples two values $a_s,a_{\macsym}$ from $\sem{s}_{\rho_1,\rho_2}^\eta$ and $\sem{\mac{m}{\mkey}{j}}_{\rho_1,\rho_2}^\eta$ by sampling a value from all the subterms of $s$ and $m$ in a bottom-up fashion. The adversary calls the $(\mathcal{O}_{\macsym^i})_{1 \le i \le 5}$ whenever he need to sample a value from a subterm of the form $\mac{\_}{\mkey}{i}$. Remark that the side-condition $\mkey \tpos_{\mac{\_}{\cdot}{\_}} s,m$ ensures that this is always possible. Then $\mathcal{A}$ returns $a_s,a_{\macsym}$. One can check that the advantage of $\mathcal{A}$ against the \textsc{euf-mac} game with key jointly used by $(\macsym^i)_{1\le i \le 5, i \ne j}$ is exactly the quantity in \ref{eq:fofhdsogfhqwirjiod}. It follows that $\mathcal{A}$ has a non-negligible probability of winning the game. Contradiction.
\end{proof}

\subsection{\textsc{cr} Axioms}
We recall the definition of Collision-Resistance:
\begin{definition}
  A function $H$ is \textsc{cr} secure if for every PPTM $\mathcal{A}$, the following quantity is negligible in $\eta$:
  \[
    \Pr\left(\mkey \la \{0,1\}^\eta :
      (m_1,m_2) \la \mathcal{A}^{\mathcal{O}_{H(\cdot,\mkey)}}(1^\eta),
      m_1 \ne m_2 \text{ and } H(m_1,\mkey) = H(m_2,\mkey)
    \right)
  \]
\end{definition}
As for unforgeability, we generalize this to allow the key $\mkey$ to be jointly used by others functions. Formally:
\begin{definition}
  A function $H$ is \textsc{cr} secure with a key jointly used by $H_1,\dots,H_l$ if for every PPTM $\mathcal{A}$, the following quantity is negligible in $\eta$:
  \[
    \Pr\left(\mkey \la \{0,1\}^\eta :
      (m_1,m_2) \la
      \mathcal{A}^{\mathcal{O}_{H(\cdot,\mkey)},
        \mathcal{O}_{H_1(\cdot,\mkey)},
        \dots,
        \mathcal{O}_{H_l(\cdot,\mkey)}}(1^\eta),
      m_1 \ne m_2 \text{ and } H(m_1,\mkey) = H(m_2,\mkey)
    \right)
  \]
\end{definition}
It is well-known that a \textsc{euf-mac} secure function is also \textsc{cr} secure. Similarly we have that:
\begin{proposition}
  \label{prop:euf-joints-implies-cr-joints}
  If $H$ is \textsc{euf-mac} secure with a key jointly used by $H_1,\dots,H_l$ then $H$ is \textsc{cr} secure with a key jointly used by $H_1,\dots,H_l$.
\end{proposition}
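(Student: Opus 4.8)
The plan is to prove the contrapositive by a black-box reduction, mirroring the classical argument that unforgeability implies collision resistance, while simply carrying the extra oracles $\mathcal{O}_{H_1(\cdot,\mkey)},\dots,\mathcal{O}_{H_l(\cdot,\mkey)}$ along unchanged. Suppose $\mathcal{C}$ is a PPTM collision finder whose advantage $\epsilon(\eta)$ in the \textsc{cr} game with a key jointly used by $H_1,\dots,H_l$ is non-negligible, and let $Q(\eta)$ be a polynomial bound on the number of oracle queries $\mathcal{C}$ makes. We build a forger $\mathcal{B}$ against $H$'s \textsc{euf-mac} security with a key jointly used by $H_1,\dots,H_l$, and show its advantage is non-negligible. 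Throughout, $\mathcal{B}$ runs $\mathcal{C}$ internally and forwards every query of $\mathcal{C}$ --- whether to $\mathcal{O}_{H(\cdot,\mkey)}$ or to some $\mathcal{O}_{H_i(\cdot,\mkey)}$ --- to its own corresponding oracle, logging the answers; the side oracles then play no further role, since the forgery only needs to be fresh with respect to $\mathcal{O}_{H(\cdot,\mkey)}$.

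Call a run of $\mathcal{C}$ \emph{good} if it outputs $(m_1,m_2)$ with $m_1 \neq m_2$ and $H(m_1,\mkey) = H(m_2,\mkey)$; we may assume w.l.o.g. that $\mathcal{C}$ never repeats a query. Split good runs according to how many of $m_1,m_2$ were sent to $\mathcal{O}_{H(\cdot,\mkey)}$. If \emph{at most one} of them, say $m_j$, was queried, then $\mathcal{B}$ recovers $\sigma \equiv H(m_j,\mkey)$ from its log (or by one extra query if $m_j$ was not queried); since the run is good, $\sigma$ is also a valid tag for $m_{3-j}$, and $m_{3-j}$ was never sent to $\mathcal{O}_{H(\cdot,\mkey)}$, so $(m_{3-j},\sigma)$ is a valid forgery. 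As $\mathcal{B}$ does not know $j$ in advance, it flips a fair coin at the start to guess it, losing a factor $1/2$. If \emph{both} $m_1$ and $m_2$ were queried, $\mathcal{B}$ instead guesses two indices $t_1 < t_2 \le Q(\eta)$, runs $\mathcal{C}$ only up to but not including its $t_2$-th query to $\mathcal{O}_{H(\cdot,\mkey)}$, reads off the message $m$ that $\mathcal{C}$ is about to query there together with the tag $\sigma'$ of the $t_1$-th queried message from its log, and outputs $(m,\sigma')$. When the guesses are correct --- i.e. $m_1$ is the $t_1$-th queried message and $m_2$ is first queried at step $t_2$ --- this is again a valid forgery (fresh, since $m = m_2$ had not yet been queried), contributing a factor at least $1/Q(\eta)^2$. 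Running the two strategies with a further independent bit, $\mathcal{B}$'s forging advantage is at least $\epsilon(\eta)/\bigl(4\,Q(\eta)^2\bigr)$, which is non-negligible since $\epsilon$ is non-negligible and $Q$ is polynomial, contradicting the hypothesis.

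The main obstacle --- and the only point where the argument departs from the textbook one --- is precisely the case in which the collision finder queries \emph{both} colliding messages to the target oracle: then the naive ``query $m_1$, forge $m_2$'' step produces no fresh forgery, and one genuinely has to pay a polynomial loss via the double-index guessing trick above. This loss is harmless in the present setting, because all we need downstream (for Proposition~\ref{prop:euf-mac-valid} and its uses) is non-negligibility, not a tight bound; the joint-key feature adds nothing of substance, since $\mathcal{B}$ merely relays the auxiliary oracle queries verbatim.
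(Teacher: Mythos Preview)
Your proof is correct and follows the same reduction the paper sketches in one line (build a forger from the collision-finder, relaying the side oracles verbatim); in fact you are \emph{more} careful than the paper, which asserts the two adversaries have ``the same advantage'' and thereby glosses over exactly the both-messages-queried case you identify and handle with the index-guessing trick. A minor inefficiency: in the ``at most one queried'' branch you need not guess $j$ in advance---after $\mathcal{C}$ halts, $\mathcal{B}$ can inspect its log to see which of $m_1,m_2$ (if either) was queried and pick the fresh one deterministically---so that factor of $1/2$ is spurious, though harmless for the conclusion.
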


\begin{proof}
  We can easily build an adversary $\mathcal{B}$ against the \textsc{euf-mac} game with a key jointly from any adversary $\mathcal{A}$ against the \textsc{cr} game with a key jointly such that $\mathcal{A}$ and $\mathcal{B}$ have the same advantage against their respective games. The result follows.
\end{proof}

We translate this game in the logic as follows:
\begin{definition}
  We let $\textsc{cr}^j$ be the set of axioms:
  \[
    \begin{array}[c]{c}
      \unary{
        \mac{m_1}{\mkey}{j} = \mac{m_2}{\mkey}{j}
        \ra m_1 = m_2}
    \end{array}
    \quad\text{ when }
    \mkey \tpos_{\mac{\_}{\cdot}{\_}} m_1,m_2
    \tag{$\textsc{cr}^j$}
  \]
\end{definition}

\begin{proposition}
  \label{prop:cr-ax-valid-app}
  The $\textsc{cr}^j$ axioms are valid in any computational model where the $\macsym^j$ function is interpreted as a \textsc{cr} secure function with a key jointly used by the interpretations of $(\macsym^i)_{1\le i \le 5, i \ne j}$.
\end{proposition}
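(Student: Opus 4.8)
The plan is to prove this exactly as the earlier soundness results are established (cf. the proofs of Proposition~\ref{prop:ccaos-valid} and Proposition~\ref{prop:euf-mac-valid-app}): by a direct reduction, arguing by contradiction. Suppose $\cmodel$ is a computational model in which $\macsym^j$ is interpreted as a \textsc{cr} secure function with a key jointly used by the interpretations of $(\macsym^i)_{1 \le i \le 5,\, i \ne j}$, and suppose there is an instance of $\textsc{cr}^j$, say $\mac{m_1}{\mkey}{j} = \mac{m_2}{\mkey}{j} \ra m_1 = m_2$ with $\mkey \tpos_{\mac{\_}{\cdot}{\_}} m_1,m_2$, which is \emph{not} valid in $\cmodel$. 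Unfolding the semantics of $\sim$ and of $\ra$, non-validity means exactly that
\[
  \Pr\left(\rho_1,\rho_2 :
    \sem{\mac{m_1}{\mkey}{j}}_{\rho_1,\rho_2}^\eta = \sem{\mac{m_2}{\mkey}{j}}_{\rho_1,\rho_2}^\eta
    \;\wedge\;
    \sem{m_1}_{\rho_1,\rho_2}^\eta \ne \sem{m_2}_{\rho_1,\rho_2}^\eta
  \right)
\]
is non-negligible in $\eta$; note that on this event the pair $(\sem{m_1},\sem{m_2})$ is precisely a collision for the function $\sem{\macsym^j}(\cdot,\sem{\mkey})$.

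From this I would build an adversary $\mathcal{A}$ against the \textsc{cr} game with a key jointly used by $(\macsym^i)_{1 \le i \le 5,\, i \ne j}$, following the reduction in the proof of Proposition~\ref{prop:euf-mac-valid-app}. The adversary $\mathcal{A}$ receives $1^\eta$ together with oracle access to $\mathcal{O}_{\macsym^1(\cdot,\mkey)},\dots,\mathcal{O}_{\macsym^5(\cdot,\mkey)}$, where $\mkey$ is the challenger's secret key. It samples the protocol randomness $\rho_1$ (excluding the slice interpreting $\mkey$) and $\rho_2$ itself, then evaluates the ground terms $m_1$ and $m_2$ bottom-up, obtaining bit-strings $a_1 \la \sem{m_1}$ and $a_2 \la \sem{m_2}$. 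The only place where $\mathcal{A}$ would need the value of $\mkey$ is when evaluating a subterm of the form $\mac{u}{\mkey}{i}$; by the side-condition $\mkey \tpos_{\mac{\_}{\cdot}{\_}} m_1,m_2$, every occurrence of $\mkey$ in $m_1,m_2$ is of this shape, so $\mathcal{A}$ computes such a subterm by querying $\mathcal{O}_{\macsym^i(\cdot,\mkey)}$ on the already-computed value of $u$. Finally $\mathcal{A}$ outputs $(a_1,a_2)$.

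By construction the distribution of $(a_1,a_2)$ produced by $\mathcal{A}$ coincides with that of $(\sem{m_1},\sem{m_2})$, and $\mathcal{A}$ wins precisely when $a_1 \ne a_2$ and $\sem{\macsym^j}(a_1,\sem{\mkey}) = \sem{\macsym^j}(a_2,\sem{\mkey})$, i.e.\ with probability equal to the non-negligible quantity displayed above. This contradicts the assumed \textsc{cr} security with a key jointly used, so the proposition follows. (An alternative route is to combine Proposition~\ref{prop:euf-joints-implies-cr-joints} with the validity of $\textsc{euf-mac}^j$; however $\textsc{cr}^j$ is not a direct logical consequence of $\textsc{euf-mac}^j$ — the latter only yields a disjunction over mac-subterms — so the direct reduction above is cleaner.)

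The main obstacle is minor and purely a matter of rigour: one must check carefully that the side-condition $\mkey \tpos_{\mac{\_}{\cdot}{\_}} m_1,m_2$ genuinely guarantees that $\mathcal{A}$ never needs direct access to $\sem{\mkey}$, and that replacing the honest computation of each $\mac{u}{\mkey}{i}$ by an oracle call produces an identically distributed output. This is exactly the point handled (and then elided as routine) in the proofs of Proposition~\ref{prop:ccaos-valid} and Proposition~\ref{prop:euf-mac-valid-app}, so I would invoke that same pattern rather than repeat the bookkeeping.
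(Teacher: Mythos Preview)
Your proposal is correct and follows exactly the approach the paper indicates: the paper's proof is ``The proof works exactly like the proof of Proposition~\ref{prop:ccaos-valid} and Proposition~\ref{prop:euf-mac-valid-app}. We omit the details,'' and you have faithfully instantiated that pattern with the appropriate reduction to the \textsc{cr} game with jointly used key.
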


\begin{proof}
The proof works exactly like the proof of Proposition~\ref{prop:ccaos-valid} and Proposition~\ref{prop:euf-mac-valid-app}. We omit the details.
\end{proof}

% Finally, it is simple to show that if $H_1,\dots,H_l$ are jointly \prff then there can be no collisions between $H_i$ and $H_j$ for $i \ne j$.
% \begin{definition}
%   For every $1 \le i,j \le 5$ such that $i \ne j$, we let $\textsc{cr}^{i,j}$ be the set of axioms:
%   \[
%     \begin{array}[c]{c}
%       \unary{
%         \mac{m_1}{\mkey}{i} = \mac{m_2}{\mkey}{j}
%         \ra \false}
%     \end{array}
%     \quad\text{ when }
%     \mkey \tpos_{\mac{\_}{\cdot}{\_}} m_1,m_2
%     \tag{$\textsc{cr}^{i,j}$}
%   \]
% \end{definition}

% \begin{proposition}
%   For every $1 \le i,j \le 5$, the $\textsc{cr}^{i,j}$ axioms are valid in any computational model where the $(\macsym^l)_{1 \le l \le 5}$ functions are interpreted as jointly \prff functions.
% \end{proposition}

% \begin{proof}
%   First, observe that given an adversary that can build collisions between $\mac{\cdot}{\mkey}{i}$ and $\mac{\cdot}{\mkey}{j}$, it is simple to build an adversary that can distinguish between $(\mac{\cdot}{\mkey}{l})_{1 \le l \le 5}$ and random functions. It follows that no adversary can build such collisions with a non-negligible advantage. Finally we show that if there exists an computational model where some instance of $\textsc{cr}^{i,j}$ is not valid then there exists just such an adversary. 
% \end{proof}

\subsection{Cryptographic Axioms}

\begin{definition}
  We let $\axioms_{\textsf{crypto}}$ be the set of cryptographic axioms:
  \[
    \axioms_{\textsf{crypto}}
    =
    \ccao
    \cup \left(\prfmac^j\right)_{1 \le j \le 5}
    \cup\prff
    \cup\prffr
    \cup\left(\textsc{euf-mac}^j\right)_{1 \le j \le 5}
    \cup \left(\textsc{cr}^j\right)_{1 \le j \le 5}
  \]
\end{definition}

\begin{proposition}
  The axioms in $\axioms_{\textsf{crypto}}$ are valid in any computational model where the asymmetric encryption $\enc{\_}{\_}{\_}$ is \textsc{ind-cca1} secure and $\owsym$ and $\rowsym$ (resp. $\macsym^1$--$\,\macsym^5$) satisfy jointly the \prff assumption.
\end{proposition}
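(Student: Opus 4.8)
The plan is to observe that this is a bookkeeping result: a set of axioms is valid in a computational model exactly when each of its members is, and by definition $\axioms_{\textsf{crypto}}$ is the finite union of the six families $\ccao$, $(\prfmac^j)_{1\le j\le 5}$, $\prff$, $\prffr$, $(\textsc{euf-mac}^j)_{1\le j\le 5}$ and $(\textsc{cr}^j)_{1\le j\le 5}$. Since validity is a per-model property it commutes with this union, so it suffices to establish, in an arbitrary computational model satisfying the three hypotheses (\textsc{ind-cca1} for $\enc{\_}{\_}{\_}$; joint $\prfass$ for $\{\owsym,\rowsym\}$; joint $\prfass$ for $\{\macsym^1,\dots,\macsym^5\}$), the validity of each family separately; for five of the six families an earlier proposition applies essentially verbatim.

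For the encryption family I would invoke Proposition~\ref{prop:ccao-norm-valid}, which gives validity of $\ccao$ under \textsc{ind-cca1} (internally via $\trans$, $\fa$ and the base axioms $\ccao_s$, themselves sound by Proposition~\ref{prop:ccaos-valid}). For the pure pseudorandomness families, the hypotheses are exactly that $\owsym,\rowsym$ are jointly $\prfass$ and that $\macsym^1,\dots,\macsym^5$ are jointly $\prfass$, so Proposition~\ref{prop:prf-mult-norm-valid} yields at once the validity of $\prff$, $\prffr$ and every $\prfmac^j$. The only subtlety worth spelling out is that the key side-conditions in these schemes ($\key$ occurring only under $\owsym$/$\rowsym$, resp.\ $\mkey$ only under $\macsym^{\_}$) are precisely what lets the soundness reduction simulate every occurrence of the function symbols through oracle calls, and that keys of distinct identities are independent, so there is nothing to reconcile across identities.

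For the unforgeability and collision families I would compose the two two-step reductions already proved. Fix $j\in\{1,\dots,5\}$, take $H=\macsym^j$ and let $H_1,\dots,H_4$ be the remaining $(\macsym^i)_{i\ne j}$; by hypothesis $H,H_1,\dots,H_4$ are jointly $\prfass$, hence by Proposition~\ref{prop:prf-implies-mac-joints-app} the function $H=\macsym^j$ is \textsc{euf-mac} secure with a key jointly used by $H_1,\dots,H_4$, and Proposition~\ref{prop:euf-mac-valid-app} then gives validity of $\textsc{euf-mac}^j$. Composing further with Proposition~\ref{prop:euf-joints-implies-cr-joints} (\textsc{euf-mac} with a jointly used key implies collision-resistance with a jointly used key) and Proposition~\ref{prop:cr-ax-valid-app} gives validity of $\textsc{cr}^j$. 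Running this over all $j$ and collecting the six families finishes the argument.

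I do not expect a genuine obstacle, since every ingredient is an already-established lemma; the only care needed is the correct instantiation of the ``key jointly used by $H_1,\dots,H_l$'' definitions with the right partner functions for each fixed $j$ (and, symmetrically, the fact that $\owsym$ and $\rowsym$ share the key $\key$, which is the reason one must assume the joint rather than the individual $\prfass$ property). If one wanted to be fully explicit, the remaining step is simply to remark that ``the axioms in $\axioms_{\textsf{crypto}}$ are valid'' unfolds to a finite conjunction, each conjunct of which has just been discharged.
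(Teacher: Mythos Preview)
Your proposal is correct and follows the same approach as the paper, which simply cites Propositions~\ref{prop:ccao-norm-valid}, \ref{prop:prf-mult-norm-valid}, \ref{prop:euf-mac-valid-app} and \ref{prop:cr-ax-valid-app}. You have in fact been more explicit than the paper by spelling out the intermediate implications (joint $\prfass \Rightarrow$ \textsc{euf-mac} with jointly used key via Proposition~\ref{prop:prf-implies-mac-joints-app}, and \textsc{euf-mac} $\Rightarrow$ \textsc{cr} via Proposition~\ref{prop:euf-joints-implies-cr-joints}) that the paper's one-line proof leaves implicit.
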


\begin{proof}
  This is a direct consequence of the Propositions~\ref{prop:ccao-norm-valid}, \ref{prop:prf-mult-norm-valid}, \ref{prop:euf-mac-valid-app} and \ref{prop:cr-ax-valid-app}.
\end{proof}

\subsection{Structural and Implementation Axioms}
\label{section-app:struct-axs}

\begin{figure}[t]
  \begin{mathpar}
    \infer[\sym]{\vec u \sim \vec v}{\vec v \sim \vec u}

    \infer[\refl]{\vec u \sim \vec u}{}

    \infer[\trans]{
      \vec u \sim \vec v
    }{
      \vec u \sim \vec w & \vec w \sim \vec v
    }

    \infer[\perm]{
      (x_i)_{i \le n} \sim (y_i)_{i \le n}
    }{
      (x_{\pi(i)})_{i \le n} \sim (y_{\pi(i)})_{i \le n}
    }

    \infer[\dup]{
      \vec u,t,t \sim \vec v,t',t'
    }{
      \vec u,t \sim \vec v,t'}

    \infer[\restr]{\vec u \sim \vec v}{\vec u,t \sim \vec v,t'}

    \begin{array}[c]{c}
      \infer[\fa]{f(\vec{x}),\vec{y} \sim f(\vec{x'}), \vec{y'}}
      {\vec{x},\vec{y}\sim \vec{x'}, \vec{y'}}
    \end{array}

    \begin{array}[c]{c}
      \infer[R]{
        \vec u, t \sim \vec v
      }{
        \vec u, s \sim \vec v
        \;&\;
        s \peq t
      }
    \end{array}

    \begin{array}[c]{c}
      \infer[\ax{Fresh}]{
        \vec u, \nonce \sim \vec v, \nonce
      }{
        \vec u \sim \vec v
      }
    \end{array}
    \text{when $\nonce \not \in \st(\vec u,\vec v)$}

    \begin{array}[c]{c}
      \infer[\oplus\textsf{-indep}]{
        \vec u, t \oplus \nonce \sim \vec v
      }{
        \vec u, \nonce \sim \vec v
      }
    \end{array}
    \text{when $\nonce \not \in \st(\vec u,t)$}

    \begin{array}[c]{c}
      \infer[\cs]{
        \vec w, (\ite{b}{u_i}{v_i})_i
        \sim
        \vec w',(\ite{b'}{u_i'}{v_i'})_i}
      {
        \vec w, b, (u_i)_i \sim \vec w', b', (u_i')_i \quad & \quad
        \vec w, b, (v_i)_i \sim \vec w', b', (v_i')_i
      }
    \end{array}
  \end{mathpar}
  \textbf{Conventions:} $\pi$ is a permutation of $\{1,\ldots, n\}$ and $f \in \sig$.
  \caption{\label{figure:axioms} The Structural Axioms $\axioms_{\textsf{struct}}$.}
\end{figure}

\begin{figure}[p]
  \textbullet\ \textit{The set $\axioms_{\textsf{ite}}$ of equality axioms related to the $\symite$ function symbol:}
  \begin{mathpar}
    \unary{f(\vec{u},\ite{b}{x}{y},\vec v) \peq
      \ite{b}{f(\vec{u},x,\vec v)}{f(\vec{u},y,\vec v)}}
    \text{ for any } f \in \ssig

    \unary{\ite{(\ite{b}{a}{c})}{x}{y} \peq
    \ite{b}{(\ite{a}{x}{y})}{(\ite{c}{x}{y})}}

    \unary{\ite{\true}{x}{y} \peq x}

    \unary{\ite{\false}{x}{y} \peq y}

    \unary{\ite{b}{x}{x} \peq x}

    \unary{\ite{b}{(\ite{b}{x}{y})}{z} \peq \ite{b}{x}{z}}

    \unary{\ite{b}{x}{(\ite{b}{y}{z})} \peq \ite{b}{x}{z}}

    \unary{\ite{b}{(\ite{a}{x}{y})}{z} \peq
    \ite{a}{(\ite{b}{x}{z})}{(\ite{b}{y}{z})}}

    \unary{\ite{b}{x}{(\ite{a}{y}{z})} \peq
    \ite{a}{(\ite{b}{x}{y})}{(\ite{b}{x}{z})}}
  \end{mathpar}

  \textbullet\ \textit{The set $\axioms_{\textsf{eq}}$ of equality and disequality axiom:}
  \begin{mathpar}
    \unary{\pi_i (\pair{x_1}{x_2}) \peq x_i} \text{ for } i \in \{1,2\}
      
    \unary{\pi_i (\triplet{x_1}{x_2}{x_3}) \peq x_i} \text{ for } i \in \{1,2,3\}
        
    \unary{\dec(\enc{x}{\pk(y)}{z},\sk(y)) \peq x}

    \unary{\eq{x}{x} \peq \true}

    \infer[\textsf{$\ne$-Const}]{\eq{\agent{A}}{\agent{B}} \peq \false}{}
    \begin{array}[c]{l}
      \text{ for every }\agent{A},\agent{B} \in \cstdom\\
      \text{ s.t. }
      \agent{A} \not \peq \agent{B}
    \end{array}
    % \infer[\axtag]{\eq{\ttag_i(u)}{\ttag_j(v)} = \false}{}
    % \text{ for every } i \ne j
    
    \infer[\ax{EQIndep}]{\eq{t}{\nonce} \peq \false}{}
    \text{ if } \nonce \not \in \st(t)
  \end{mathpar}

  \textbullet\ \textit{The set $\axioms_{\textsf{len}}$ of length axioms:}
  \begin{mathpar}
    \infer[]{
      \length(\pair{u}{v}) \peq \length(\pair{s}{t})
    }{
      \length(u) \peq \length(s)
      \quad&\quad
      \length(v) \peq \length(t)
    }

    \unary{\length(\ID_1) \peq \length(\ID_2)}
    \text{ for every } \ID,\ID' \in \iddom

    \unary{\length(\sqnsuc(\sqnini_\ue^\ID)) \peq \length(\sqnini_\ue^\ID)}

    \unary{\length(\sqnini_\ue^{\ID_1}) \peq \length(\sqnini_\ue^{\ID_2})}

    \unary{\length(0^x) \peq x}

    \unary{\length(1^x) \peq x}

    \unary{\length(\sfx) \not \peq 0}
    \text{ when }\sfx \in \cstdom

    \infer{
      \length(\pair{u}{v}) \not \peq 0
    }{
      \length(u) \not \peq 0
    }

    \infer{
      \length(\pair{u}{v}) \not \peq 0
    }{
      \length(v) \not \peq 0
    }

    \infer[\textsf{l-neq}]{
      A^x \not \peq B^y
    }{
      A \not \peq B
      \;&\;
      \length(A) \not \peq 0
      \;&\;
      x \not \peq 0
    }
  \end{mathpar}
  
  \textbullet\ \textit{The set $\axioms_{\textsf{inj}}$ of injectivity axioms:}
    \begin{mathpar}
      % {\small
      %   \infer[\textsf{Inj}(\pair{\cdot}{\_})]
      %   {\eq{\pair{u}{v}}{\pair{s}{t}} \peq \false}{\eq{u}{s} \peq \false}}

      % {\small
      %   \infer[\textsf{Inj}(\pair{\_}{\cdot})]
      %   {\eq{\pair{u}{v}}{\pair{s}{t}} \peq \false}{\eq{v}{t} \peq \false}}

      % {\small
      %   \infer[\textsf{Inj}(\enc{\cdot}{\_}{\_})]
      %   {\eq{\enc{u}{\pk(\nonce)}{\enonce}}{\enc{v}{\pk(\nonce')}{\enonce'}}
      %     \peq \false}{\eq{v}{t} \peq \false}}
      \infer[\textsf{EQInj}(\pair{\cdot}{\_})]{
        \neg \eq{u}{s} \wedge \eq{\pair{u}{v}}{\pair{s}{t}} \peq \false}{}

      \infer[\textsf{EQInj}(\pair{\_}{\cdot})]
      {\neg \eq{v}{t} \wedge \eq{\pair{u}{v}}{\pair{s}{t}} \peq \false}{}

      \infer[\textsf{EQInj}(\enc{\cdot}{\_}{\_})]
      {\neg \eq{u}{v} \wedge
        \eq{\enc{u}{\pk(\nonce)}{\enonce}}
        {\enc{v}{\pk(\nonce')}{\enonce'}} \peq \false}{}
  \end{mathpar}

  \textbullet\ \textit{The set $\axioms_{\sqn}$ of sequence number axioms:}
  \begin{mathpar}
    \unary{\range{u}{v} \peq \eq{u}{v}}

    \unary{\sqnsuc(u) \peq u + 1}

    \infer[$\sqn$\textsf{-ini}]{\sqnini_\hn^\ID \le \sqnini_\ue^\ID}{}

    \unary{\phi[\vec u] \peq \true}
    \begin{array}[c]{l}
      \text{when }
      \vec u \text{ are ground terms}\\
      \text{and }
      \text{Th}(\mathbb{Z},0,+,-,=,\le) \models \phi[\vec x\,]
    \end{array}

  \end{mathpar}
  
  \caption{\label{fig:axioms-add} The Set of Axiom $\axioms_{\textsf{impl}} = \axioms_{\textsf{ite}} \cup \axioms_{\textsf{eq}} \cup \axioms_{\textsf{len}} \cup \axioms_{\textsf{inj}} \cup \axioms_{\sqn} $.}
\end{figure}

%%% Local Variables:
%%% mode: latex
%%% TeX-master: "main"
%%% End:

We present the structural axioms $\axioms_{\textsf{struct}}$, which are given in Fig.~\ref{figure:axioms}. All these axioms have been introduced in the literature (e.g. see \cite{Bana:2014:CCS:2660267.2660276,DBLP:conf/csfw/ComonK17}). Still, we informally describe them: the axioms $\sym,\refl$ and $\trans$ states that computational indistinguishability is an equivalence relation; the $\perm$ axiom is used to change the order of the terms using a permutation $\pi$ on both side of $\sim$; $\dup$ is used to remove duplicate; $\restr$ allows to strengthen the goal; $\fa$ states that to show that two function applications are indistinguishable, it is sufficient to show that their arguments are indistinguishable; the axiom $R$ allow to rewrite any occurrence of $s$ into $t$ if we can show that $s = t$; $\ax{Fresh}$ allows to remove a random sampling appearing if it is not used; $\ax{$\oplus$-\textsf{indep}}$ is the optimistic sampling rule (see~\cite{DBLP:conf/lpar/BartheDKLL10}); and finally, $\cs$ states that to show that two $\symite$ are indistinguishable, it is sufficient to show that their $\textsf{then}$ branches and $\textsf{else}$ branches are indistinguishable, when giving the value of the branching conditional to the adversary.

We then have the following soundness result:
\begin{proposition}
  The axioms in $\axioms_{\textsf{struct}}$ are valid in any computational model.
\end{proposition}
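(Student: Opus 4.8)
The plan is to prove the proposition one axiom at a time, checking in each case that, under the interpretation of $\sim$ as computational indistinguishability $\approx$ in an arbitrary computational model $\cmodel$, the hypothesis implies the conclusion. Every case is handled by the same scheme: from a PPTM $\mathcal{A}$ distinguishing the two vectors on the conclusion side with non-negligible advantage, I would construct a PPTM $\mathcal{B}$ that distinguishes the two vectors of some premise with at least the same advantage, contradicting the assumed validity of that premise. The machine $\mathcal{B}$ always consists of a simple polynomial-time post-processing of its input (using only its own random tape $\rho_2$) followed by a call to $\mathcal{A}$; correctness of the reduction then amounts to checking that this post-processing turns a sample of a premise vector into a sample of the corresponding conclusion vector, either exactly or up to a statistically negligible error.

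First I would dispatch the routine axioms. For $\sym$, $\refl$ and $\trans$ it suffices to recall that $\approx$ is an equivalence relation, transitivity following from the triangle inequality on the distinguishing advantages. For $\perm$ and $\dup$, the post-processing done by $\mathcal{B}$ is just a permutation, respectively a duplication, of components; for $\restr$ it is the deletion of the extra component. For $\fa$, $\mathcal{B}$ applies the interpretation of $f \in \sig$ --- a deterministic polynomial-time function --- to the relevant components of its input before calling $\mathcal{A}$; this is the one place where soundness uses the fact that function symbols are interpreted as PPTMs. For $R$, the premise $s \peq t$ means precisely that $\sem{s}_{\cmodel}$ and $\sem{t}_{\cmodel}$ agree on all but a negligible fraction of the tapes $\rho_1,\rho_2$, so the distributions of $\vec u, s$ and $\vec u, t$ are statistically close and the conclusion follows. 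For $\ax{Fresh}$, since $\nonce \notin \st(\vec u,\vec v)$ the distribution of $\vec u,\nonce$ coincides with that of $\vec u, r$ for a fresh uniform string $r$ sampled independently of $\vec u$; hence $\mathcal{B}$, on input $\vec u$, samples such an $r$ from $\rho_2$, appends it, and runs $\mathcal{A}$, giving a perfect simulation. For $\oplus\textsf{-indep}$ the point is the one-time-pad identity: as $\nonce \notin \st(\vec u, t)$, conditionally on the value of $(\vec u, t)$ the string $t \oplus \nonce$ is uniform and remains independent of $\vec u$, so under the usual length conventions for $\oplus$ the vectors $\vec u, t \oplus \nonce$ and $\vec u, \nonce$ are identically distributed, and the conclusion is immediate from the premise.

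The one case that calls for a little more care is $\cs$, since neither premise directly exposes both the $\textsf{then}$-branches and the $\textsf{else}$-branches. Here I would not build a single reduction but split $\mathcal{A}$'s acceptance probability according to the value of the guard bit: letting $\mathcal{B}$ (resp. $\mathcal{B}'$) be the machine that, on a premise vector whose guard component is $b$, extracts $\vec w$ and the remaining components and runs $\mathcal{A}(\vec w, (u_i)_i)$ when $b$ evaluates to $\true$ (resp. runs $\mathcal{A}(\vec w, (v_i)_i)$ when $b$ evaluates to $\false$) and outputs $0$ otherwise, one checks that on each side $\Pr(\mathcal{A}\ \text{accepts}\ \vec w, (\ite{b}{u_i}{v_i})_i) = \Pr(\mathcal{B}\ \text{accepts}) + \Pr(\mathcal{B}'\ \text{accepts})$, so that $\mathcal{A}$'s advantage is bounded by the sum of $\mathcal{B}$'s advantage, controlled by the first premise $\vec w, b, (u_i)_i \sim \vec w', b', (u_i')_i$, and $\mathcal{B}'$'s advantage, controlled by the second premise. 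I expect this bookkeeping --- the analogue of the standard argument for $\cs$ already carried out in~\cite{Bana:2014:CCS:2660267.2660276,DBLP:conf/csfw/ComonK17}, from which all these axioms are taken --- to be the only (mild) obstacle; collecting the per-axiom validity statements then yields the proposition.
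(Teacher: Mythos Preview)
Your proposal is correct and considerably more detailed than the paper's own proof, which consists of a single sentence deferring to the literature: ``The soundness proofs can be found in~\cite{Bana:2014:CCS:2660267.2660276,DBLP:conf/csfw/ComonK17}.'' Your per-axiom reductions are exactly the standard arguments one finds in those references, so there is no substantive divergence; you have simply unfolded what the paper leaves as a citation.
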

\begin{proof}
  The soundness proofs can be found in~\cite{Bana:2014:CCS:2660267.2660276,DBLP:conf/csfw/ComonK17}.
\end{proof}

We can now define the set of axioms $\axioms$:
\begin{definition}
  We let $\axioms$ be the set inference rules:
  \[
    \axioms =
    \axioms_{\textsf{struct}} \cup
    \axioms_{\textsf{impl}} \cup
    \axioms_{\textsf{crypto}}
  \]
\end{definition}

\begin{definition}
  We let $\simp$ denote a sequence of applications of $R, \fa$ and $\dup$, i.e.:
  \[
    \begin{gathered}[c]
      \infer[\simp]{\vec u \sim \vec v}{\vec s \sim \vec t}
    \end{gathered}
    \quad\text{ when }\quad
    \begin{gathered}[c]
      \infer[(R+\fa+\dup)^*]{\vec u \sim \vec v}{\vec s \sim \vec t}
    \end{gathered}
  \]
\end{definition}

\paragraph*{Implementation Axioms}
We describe the implementation axioms $\axioms_{\textsf{impl}}$ given in Fig.~\ref{fig:axioms-add}. The set of implementation axioms is the union of the following sets of axioms:
\begin{itemize}
\item The set $\axioms_{\textsf{ite}}$ of equalities satisfied by the $\symite$ function symbols.
\item The set $\axioms_{\textsf{eq}}$ of axioms satisfied by the equality function symbol $\eq{\_}{\_}$. This includes functional properties of projections and decryption, equality properties such as reflexivity and dis-equalities.
\item The set $\axioms_{\textsf{len}}$ of axioms satisfied by the length function $\length(\_)$.
\item The set $\axioms_{\textsf{inj}}$ of injectivity axioms.
\item The set $\axioms_{\sqn}$ of axioms satisfied by the $\range(\_,\_)$ function on sequence numbers.
\end{itemize}

\subsection{$\textsc{p-euf-mac}_s$ Axioms}
We can refine the unforgeability axioms $\textsc{euf-mac}^j$ using a finite partition of the outcomes.
\begin{definition}
  A finite family of conditionals $(b_i)_{i \in I}$ is a valid $\cs$ partition if:
  \begin{gather*}
    \textstyle
    \left(\bigvee_i b_i \wedge \bigwedge_{i \ne j} b_i \ne b_j\right) \peq \true
  \end{gather*}
\end{definition}
We can have a more precise axiom, by considering a valid $\cs$ partition $(b_i)_{i \in I}$ and applying the $\textsc{euf-mac}^j$ axiom once for each element of the partition.
\begin{definition}
  We let $\textsc{p-euf-mac}_s^j$ be the set of axioms:
  \[
    \unary{s = \mac{m}{\mkey}{j} \ra
      \bigvee_{i \in I}
      b_i \wedge
      \bigvee_{u \in S_i} s = \mac{u}{\mkey}{j}}
    \text{ when }
    \begin{cases}
      \mkey \tpos_{\mac{\_}{\cdot}{\_}} s,m\\
      (b_i)_{i \in I} \text{ is a valid $\cs$ partition}\\
      \text{There exists } (s_i,m_i)_{i \in I} \text{ s.t. for every } i \in I\\
      \qquad\cond{b_i}{s_i} \peq \cond{b_i}{s}
      \wedge \cond{b_i}{m_i} \peq \cond{b_i}{m} \\
      \qquad S_i = \setmac_{\mkey}^j(s_i,m_i)
    \end{cases}
    \tag{$\textsc{p-euf-mac}^j_s$}
  \]
\end{definition}

\begin{proposition}
  \label{prop:app-euf-mac-no-s}
  The $\textsc{p-euf-mac}^j_s$ axioms are logical consequences of the axioms $\axioms$.
\end{proposition}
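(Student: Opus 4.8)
The plan is to prove the $\textsc{p-euf-mac}^j_s$ schema by a case analysis over the partition $(b_i)_{i \in I}$, applying the plain $\textsc{euf-mac}^j$ axiom once inside each branch. First I would observe that, since $(b_i)_{i \in I}$ is a valid $\cs$ partition, we have $\bigvee_{i \in I} b_i \peq \true$ (and, if needed, $b_i \wedge b_{i'} \peq \false$ for $i \neq i'$); these are consequences of the partition hypothesis together with the $\axioms_{\textsf{eq}}$ and $\axioms_{\textsf{ite}}$ axioms. Using $R$ together with the purely propositional identities available in $\axioms_{\textsf{ite}}$ (booleans are encoded with $\symite$ and the constants $\true,\false$, so $\wedge,\vee,\neg$ are definable and enjoy the usual equational laws), this lets us rewrite the boolean term $s = \mac{m}{\mkey}{j}$ into the equivalent term $\bigvee_{i \in I}\bigl(b_i \wedge (s = \mac{m}{\mkey}{j})\bigr)$.

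Next, for each fixed $i \in I$, I would perform a \emph{guarded rewrite}: from the hypotheses $\cond{b_i}{s_i} \peq \cond{b_i}{s}$ and $\cond{b_i}{m_i} \peq \cond{b_i}{m}$ one derives, for any context $C$, that $\cond{b_i}{C[s,m]} \peq \cond{b_i}{C[s_i,m_i]}$, simply by distributing the guard $b_i$ through $C$ with the $\symite$-distribution and idempotence axioms of $\axioms_{\textsf{ite}}$. Applying this with $C[\cdot,\cdot] = (\cdot = \mac{\cdot}{\mkey}{j})$ gives
\[
  b_i \wedge (s = \mac{m}{\mkey}{j}) \;\peq\; b_i \wedge (s_i = \mac{m_i}{\mkey}{j}).
\]
Now I apply $\textsc{euf-mac}^j$ to the pair $(s_i,m_i)$: its key side condition $\mkey \tpos_{\mac{\_}{\cdot}{\_}} s_i,m_i$ is inherited from $\mkey \tpos_{\mac{\_}{\cdot}{\_}} s,m$ because inside the branch $b_i$ only the active parts of $s$ and $m$ matter and these coincide with $s_i,m_i$, and the honest-mac set is exactly $S_i = \setmac_{\mkey}^j(s_i,m_i)$. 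This yields $s_i = \mac{m_i}{\mkey}{j} \ra \bigvee_{u \in S_i} s_i = \mac{u}{\mkey}{j}$; conjoining with $b_i$ and rewriting $s_i$ back to $s$ inside the branch (guarded rewrite again) gives $b_i \wedge (s = \mac{m}{\mkey}{j}) \ra b_i \wedge \bigvee_{u \in S_i} s = \mac{u}{\mkey}{j}$.

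Finally I would recombine the branches: taking the disjunction over $i$ and using $\bigvee_{i\in I} b_i \peq \true$ yields
\[
  s = \mac{m}{\mkey}{j} \;\ra\; \bigvee_{i \in I}\Bigl(b_i \wedge \bigvee_{u \in S_i} s = \mac{u}{\mkey}{j}\Bigr),
\]
which is the desired $\textsc{p-euf-mac}^j_s$ instance, derived using only $\textsc{euf-mac}^j \subseteq \axioms$ and the structural/implementation axioms. I expect the main obstacle to be bookkeeping rather than a conceptual difficulty: one has to be precise about why $\textsc{euf-mac}^j$ may legitimately be applied to the representatives $s_i,m_i$ inside a branch (i.e. that the syntactic $\tpos$ side condition is not spuriously broken by the choice of representatives), and about the propositional manipulations moving between $s = \mac{m}{\mkey}{j}$ and its branch-wise decomposition; everything else is routine equational reasoning in $\axioms_{\textsf{ite}}$.
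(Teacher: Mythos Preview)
Your proposal is correct and follows essentially the same approach as the paper: split $s = \mac{m}{\mkey}{j}$ over the $\cs$ partition, perform the guarded rewrite $s,m \rightsquigarrow s_i,m_i$ under each $b_i$, apply $\textsc{euf-mac}^j$ branch-wise, then rewrite $s_i$ back to $s$. The paper's proof is even terser (four displayed implications with no further commentary) and does not pause on the $\tpos$ side-condition you flag; your concern there is legitimate bookkeeping but is glossed over in the paper just as you anticipated.
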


\begin{proof}
  The proof is pretty straightforward:
  \begin{alignat*}{2}
    s = \mac{m}{\mkey}{j}
    &\;\ra\;\;&
    \bigvee_{i \in I}
    \left(
      b_i \wedge s = \mac{m}{\mkey}{j}
    \right)
    \tag{Since $(b_i)_{i \in I}$ is a valid $\cs$ partition}\\
    &\;\ra\;\;&
    \bigvee_{i \in I}
    \left(
      b_i \wedge s_i = \mac{m_i}{\mkey}{j}
    \right)\displaybreak[0]\\
    &\;\ra\;\;&
    \bigvee_{i \in I}
    b_i \wedge
    \bigvee_{u \in S_i} s_i = \mac{u}{\mkey}{j}
    \tag{Using $\textsc{euf-mac}^j$ for every $i \in I$}\\
    &\;\ra\;\;&
    \bigvee_{i \in I}
    b_i \wedge
    \bigvee_{u \in S_i} s = \mac{u}{\mkey}{j}
  \end{alignat*}
\end{proof}

\subsection{$\textsc{p-euf-mac}$ Axioms}

We can further refine the unforgeability axioms, by noticing that macs appearing only in boolean conditionals can be ignored.

\begin{definition}
  For every term $u$, we let $\sst(u)$ be the set of subterms of $u$ appearing outside a conditional:
  \begin{mathpar}
    \sst(\ite{b}{u}{v}) = \{\sst(\ite{b}{u}{v})\} \cup \sst(u) \cup \sst(v)
    
    \sst(f(\vec u)) = \{f(\vec u)\} \cup \bigcup_{u \in \vec u} \sst(u)
    \text{ when } f \ne \symite
  \end{mathpar}
\end{definition}

\begin{definition}
  We let $\ssetmac_{\mkey}^j(u)$ be the set of mac-ed terms under key $\mkey$ and tag $j$ in $u$ appearing outside a conditional:
  \[
    \ssetmac_{\mkey}^j(u) = \{ m \mid \mac{m}{\mkey}{j} \in \sst(u)\}
  \]
\end{definition}

\begin{definition}
  We let $\textsc{p-euf-mac}^j$ be the set of axioms:   
  \[
    \unary{s = \mac{m}{\mkey}{j} \ra
      \bigvee_{i \in I}
      b_i \wedge
      \bigvee_{u \in S_i} s = \mac{u}{\mkey}{j}}
    \text{ when }
    \begin{cases}
      \mkey \tpos_{\mac{\_}{\cdot}{\_}} s,m\\
      (b_i)_{i \in I} \text{ is a valid $\cs$ partition}\\
      \text{There exists } (s_i,m_i)_{i \in I} \text{ s.t. for every } i \in I\\
      \qquad\cond{b_i}{s_i} \peq \cond{b_i}{s}
      \wedge \cond{b_i}{m_i} \peq \cond{b_i}{m} \\
      \qquad S_i = \ssetmac_{\mkey}^j(s_i,m_i)
    \end{cases}
    \tag{$\textsc{p-euf-mac}^j$}
  \]
\end{definition}

\begin{proposition}
  \label{prop:p-comp-euf-mac-valid-app}
  The $\textsc{p-euf-mac}^j$ axioms are logical consequences of the axioms $\axioms$.
\end{proposition}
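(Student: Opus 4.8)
The plan is to derive $\textsc{p-euf-mac}^j$ from the already-established $\textsc{p-euf-mac}_s^j$ (Proposition~\ref{prop:app-euf-mac-no-s}). The only difference between the two schemes is that the conclusion of $\textsc{p-euf-mac}_s^j$ ranges over $\setmac_{\mkey}^j(s_i,m_i)$, whereas $\textsc{p-euf-mac}^j$ ranges over the smaller set $\ssetmac_{\mkey}^j(s_i,m_i)$, i.e.\ it ignores $\mac{\_}{\mkey}{j}$ subterms occurring strictly inside conditional guards. Semantically this is sound because a guard only selects a branch and never itself becomes the value of $s$; the task is to turn this observation into a derivation using $\axioms$.

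The technical core will be a \emph{decision-tree normalisation} lemma: using only the equalities of $\axioms_{\textsf{ite}}$, any finite tuple of ground terms $\vec t$ can be rewritten, under a valid $\cs$ partition $(b_\ell)_\ell$, as $\symite$-free tuples $(\vec t_\ell)_\ell$ with $\cond{b_\ell}{\vec t_\ell} \peq \cond{b_\ell}{\vec t}$, where each $b_\ell$ is a conjunction of $\symite$-free boolean literals. The construction is the standard if-lifting: recursively bring every inner $\symite$ to the top, case-splitting on the ($\symite$-free) guards that appear. The key invariant to maintain is that every $\mac{\_}{\mkey}{j}$ subterm occurring in a leaf $\vec t_\ell$ already occurred in $\vec t$ outside every conditional guard — because if-lifting moves subterms of a guard only into new guards (hence into the partition literals $b_\ell$, which $\textsc{p-euf-mac}_s^j$ ignores), and subterms of a branch only into branches. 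Consequently $\setmac_{\mkey}^j(\vec t_\ell) \subseteq \ssetmac_{\mkey}^j(\vec t)$; the side condition $\mkey \tpos_{\mac{\_}{\cdot}{\_}} \vec t$ is also preserved, since if-lifting never creates a non-mac occurrence of $\mkey$.

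Given an instance of $\textsc{p-euf-mac}^j$ with partition $(c_k)_k$, representatives $(s_k,m_k)_k$ and $S_k = \ssetmac_{\mkey}^j(s_k,m_k)$, I would apply the normalisation lemma to each pair $(s_k,m_k)$, obtaining a refinement $(c_k \wedge b_{k,i})_{k,i}$ of $(c_k)_k$ together with $\symite$-free representatives $(s_{k,i},m_{k,i})$ such that $\cond{c_k \wedge b_{k,i}}{s_{k,i}} \peq \cond{c_k \wedge b_{k,i}}{s}$ (and likewise for $m$), $\setmac_{\mkey}^j(s_{k,i},m_{k,i}) \subseteq S_k$, and $\mkey \tpos_{\mac{\_}{\cdot}{\_}} s_{k,i},m_{k,i}$. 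Since $(c_k \wedge b_{k,i})_{k,i}$ is again a valid $\cs$ partition, $\textsc{p-euf-mac}_s^j$ applies to it and yields
\[
  s = \mac{m}{\mkey}{j} \;\ra\;
  \bigvee_{k,i}\Big( (c_k \wedge b_{k,i}) \wedge
  \bigvee_{u \in \setmac_{\mkey}^j(s_{k,i},m_{k,i})} s = \mac{u}{\mkey}{j}\Big).
\]
Weakening each inner disjunction to range over $S_k \supseteq \setmac_{\mkey}^j(s_{k,i},m_{k,i})$ and then collapsing the refinement via $\bigvee_i b_{k,i} \peq \true$ gives exactly $s = \mac{m}{\mkey}{j} \ra \bigvee_k \big(c_k \wedge \bigvee_{u \in S_k} s = \mac{u}{\mkey}{j}\big)$, which is the desired instance of $\textsc{p-euf-mac}^j$. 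All of these are routine boolean manipulations supported by $\axioms$, of the same flavour as the proof of Proposition~\ref{prop:app-euf-mac-no-s}.

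The main obstacle is the normalisation lemma, and within it the bookkeeping that guards never leak a $\mac{\_}{\mkey}{j}$ subterm into a leaf — this is precisely what licenses replacing $\setmac_{\mkey}^j$ by $\ssetmac_{\mkey}^j$. One also has to check that the if-lifting terminates (the usual multiset ordering on $\symite$-occurrences weighted by the non-$\symite$ context above them works) and that the partition produced is a valid $\cs$ partition (exhaustiveness and mutual exclusiveness of the decision-tree paths). Once this is pinned down, the remaining steps are short and follow the template of Propositions~\ref{prop:app-euf-mac-no-s} and~\ref{prop:euf-mac-valid-app}.
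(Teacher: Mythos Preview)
Your if-lifting strategy is essentially the paper's, but the invariant you rely on is false. You claim $\setmac_{\mkey}^j(\vec t_\ell) \subseteq \ssetmac_{\mkey}^j(\vec t)$ because ``if-lifting moves subterms of a branch only into branches''. But the homomorphism rule $f(\ldots,\ite{b}{x}{y},\ldots) \peq \ite{b}{f(\ldots,x,\ldots)}{f(\ldots,y,\ldots)}$ manufactures new subterms $f(\ldots,x,\ldots)$ that were not in the original term. Concretely, take $t = \mac{\ite{b}{u}{v}}{\mkey}{j}$: then $\ssetmac_{\mkey}^j(t) = \{\ite{b}{u}{v}\}$, yet after lifting the two leaves are $\mac{u}{\mkey}{j}$ and $\mac{v}{\mkey}{j}$, whose $\setmac$'s are $\{u\}$ and $\{v\}$ --- neither a subset of $\{\ite{b}{u}{v}\}$. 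So your weakening step ``$S_k \supseteq \setmac_{\mkey}^j(s_{k,i},m_{k,i})$'' does not hold as written. The fix is that under the guard $b_{k,i}$ each element of $\setmac_{\mkey}^j(s_{k,i},m_{k,i})$ is $\peq$-equal to an element of $S_k$ (in the example, $u \peq \ite{b}{u}{v}$ under $b$), which does let you replace the inner disjunction; but this is a different and slightly more delicate claim than the set inclusion you stated.

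The paper organises things differently and thereby avoids this pitfall: it first proves the non-partitioned strict variant
\(
  s = \mac{m}{\mkey}{j} \ra \bigvee_{u \in \ssetmac_{\mkey}^j(s,m)} s = \mac{u}{\mkey}{j}
\)
and then reruns the proof of Proposition~\ref{prop:app-euf-mac-no-s} with this in place of $\textsc{euf-mac}^j$. For the intermediate lemma, rather than simplifying $\symite$'s away, it substitutes each guard $\beta$ by a labelled constant $\true_\beta$ or $\false_\beta$; then $\setmac$ of the substituted term is literally $(\ssetmac_{\mkey}^j(s,m))[\vec\nu/\vec\beta]$, and the reverse-lift equality $C[\vec\beta \diamond (w[\vec\nu/\vec\beta])_{\vec\nu}] \peq w$ recovers the disjunction over the original $\ssetmac_{\mkey}^j(s,m)$ without ever needing a set-inclusion claim.
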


\begin{proof}
  First, we are going to show that the following axioms are consequences of the axioms $\axioms$:
  \[
    \unary{s = \mac{m}{\mkey}{j} \ra
      \bigvee_{u \in S} s = \mac{u}{\mkey}{j}}
    \quad\text{ when }
    \begin{cases}
      \mkey \tpos_{\mac{\_}{\cdot}{\_}} s,m\\
      S \equiv \ssetmac_{\mkey}^j(s,m)
    \end{cases}
    \numberthis\label{eq:asifjioruqweq}
  \]
  Assuming the axioms above are valid, it is easy to conclude by repeating the proof of Proposition~\ref{prop:app-euf-mac-no-s}, but using the axiom above instead of $\textsc{euf-mac}^j$.

  To show that the axioms in~\eqref{eq:asifjioruqweq} are valid, we are going to pull out all conditionals using the properties of the $\symite$ function symbols. This yields a term of the form $C[\vec \beta \diamond (\vec e)]$ where $\vec e$ themselves of the form $s' = \mac{u'}{\mkey}{j}$. We then apply the $\textsc{euf-mac}^j$ axioms to every $e \in \vec e$. Finally, we rewrite back the conditionals.

  To be able to do this last step, we need, when we pull out the conditionals, to remember which conditional appeared where. We do this by replacing a conditional $b$ with either $\true_b$ or $\false_b$, where the $b$ lower-script is a label that we attach to the term.

  This motivates the following definition: for every boolean term $b$, we let $\lval_b = \{\true_b,\false_b\}$. We extend this to vector of conditionals by having $\lval_{u_0,\dots,u_l} = \lval_{u_0} \times \dots \times \lval_{u_l}$. Basically, for every vector of conditionals $\vec \beta$, choosing a vector of terms $\vec \nu \in \lval_{\vec \beta}$ correspond to choosing a valuation of $\vec \beta$.

  We can start showing the validity of \eqref{eq:asifjioruqweq}. Let $\vec \beta$ be the set of conditionals appearing in $s,m$, and $C$ be an if-context such that:
  \[
    \left(
      s = \mac{m}{\mkey}{j}
    \right)
    \lra
    \left(
      C\left[
        \vec \beta
        \diamond
        \left(
          s[\vec \nu/ \vec \beta] = \mac{m[\vec \nu/ \vec \beta]}{\mkey}{j}
        \right)_{\vec \nu \in \lval_{\vec \beta}}
      \right]
    \right)
  \]
  where $t[\vec u/\vec v]$ denotes the substitution of every occurrence of $\vec v$ by $\vec u$ in $t$. Then:
  \begin{alignat*}{2}
    s = \mac{m}{\mkey}{j}
    &\;\ra\;\;&&
    \left(
      C\left[
        \vec \beta
        \diamond
        \left(
          s[\vec \nu/ \vec \beta] = \mac{m[\vec \nu/ \vec \beta]}{\mkey}{j}
        \right)_{\vec \nu \in \lval_{\vec \beta}}
      \right]
    \right)\\
    \intertext{For every $\vec \nu \in \lval_{\vec \beta}$,
      let $S_{\vec \nu} = \setmac_{\mkey}^j
      (s[\vec \nu/ \vec \beta],m[\vec \nu/ \vec \beta])$. By applying $\textsc{euf-mac}^j$}
    &\;\ra\;\;&&
    \left(
      C\left[
        \vec \beta
        \diamond
        \left(
          \bigvee_{u \in S_{\vec \nu}}
          s[\vec \nu/ \vec \beta] = \mac{u}{\mkey}{j}
        \right)_{\vec \nu \in \lval_{\vec \beta}}
      \right]
    \right)    
  \end{alignat*}
  Since any conditional of $s[\vec \nu/ \vec \beta]$ or $m[\vec \nu/ \vec \beta]$ is of the form $\true_x$ or $\false_x$ for some label $x$, we know that:
  \[
    S_{\vec \nu} =
    \setmac_{\mkey}^j
    (s[\vec \nu/ \vec \beta],m[\vec \nu/ \vec \beta])
    =
    \ssetmac_{\mkey}^j
    (s[\vec \nu/ \vec \beta],m[\vec \nu/ \vec \beta])
  \]
  Moreover, we can check that:
  \[
    \ssetmac_{\mkey}^j
    (s[\vec \nu/ \vec \beta],m[\vec \nu/ \vec \beta])
    =
    \left(
      \ssetmac_{\mkey}^j(s,m)
    \right)[\vec \nu/ \vec \beta]
  \]
  Let $S = \ssetmac_{\mkey}^j(s,m)$. Hence:
  \begin{alignat*}{2}
    \left(
      C\left[
        \vec \beta
        \diamond
        \left(
          \bigvee_{u \in S_{\vec \nu}}
          s[\vec \nu/ \vec \beta] = \mac{u}{\mkey}{j}
        \right)_{\vec \nu \in \lval_{\vec \beta}}
      \right]
    \right)
    &\;\ra\;\;&&
    \left(
      C\left[
        \vec \beta
        \diamond
        \left(
          \bigvee_{u \in S[\vec \nu/ \vec \beta]}
          s[\vec \nu/ \vec \beta] = \mac{u}{\mkey}{j}
        \right)_{\vec \nu \in \lval_{\vec \beta}}
      \right]
    \right)\displaybreak[0]\\
    &\;\ra\;\;&&
    \left(
      C\left[
        \vec \beta
        \diamond
        \left(
          \left(
            \bigvee_{u \in S}
            s = \mac{u}{\mkey}{j}
          \right)
          [\vec \nu/ \vec \beta]
        \right)_{\vec \nu \in \lval_{\vec \beta}}
      \right]
    \right)\\
    &\;\ra\;\;&&
    \bigvee_{u \in S}
    s = \mac{u}{\mkey}{j}
  \end{alignat*}
  This concludes this proof.
\end{proof}

\subsection{Additional Axioms}
We present additional axioms, and show that they are logical consequences of the axioms $\axioms$.

An if-context is a context build using only the $\symite$ function symbol, and hole variables. We also require that no $\symite$ function symbol appears in a conditional position of another $\symite$ function symbol. Moreover, we split the hole variables in two disjoint sets: the set $\vec x$ of hole variables appearing in a conditional position, and the set $\vec y$ of variables appearing in leave position. Formally:
\begin{definition}
  For all distinct variables $\vec x, \vec y$, an if-context $D[]_{\vec x\diamond \vec y}$ is a context in $\mathcal{T}(\ite{\_}{\_}{\_},\{[]_z \mid z \in \vec x \cup \vec y\})$ such that for all position $p$, $D_{|p} \equiv \ite{b}{u}{v}$ implies:
  \begin{itemize}
  \item $b \in \{ []_z\mid z \in \vec x\}$
  \item $u,v \not \in  \{ []_z\mid z \in \vec x\}$
  \end{itemize}
\end{definition}

\begin{definition}
  For every nonces $\nonce_0,\dots,\nonce_l$, for every ground terms $\vec u$, we let $\fresh{\nonce_0,\dots,\nonce_l}{\vec u}$ holds if and only if for every $0 \le i \le l$, $\nonce_i \not \in \st(\vec u)$.
\end{definition}

\paragraph{The \textsf{indep-branch} Axioms}
This is useful to define the $\textsf{indep-branch}$ axiom. Let $\vec u$, $\vec b$ be ground terms, $C$ an if-context and $\nonce,(\nonce_i)_{i \in I}$ nonces. If $\nonce,(\nonce_i)_{i \in I}$ are distinct and such that $\fresh{\nonce,(\nonce_i)_{i \in I}}{\vec u,\vec b,C[]}$. Then the following inference rule is an instance of the $\textsf{indep-branch}$ axiom:
\[
  \infer[\textsf{indep-branch}]
  {\vec u, C\left[\vec b \diamond (\nonce_i)_{i \in I}\right] \sim \vec u,\nonce}{}
\]

\begin{proposition}
  The $\textsf{indep-branch}$ axioms are a consequence of the $\axioms$ axioms.
\end{proposition}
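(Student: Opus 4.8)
The plan is to show that every instance of the $\textsf{indep-branch}$ rule is derivable in $\axioms$ by structural induction on the if-context $C$ (equivalently, by induction on the number of $\ite{\_}{\_}{\_}$ symbols in $C$). The statement is already universally quantified over the ground sequence $\vec u$, and I will use this freedom to apply the induction hypothesis with an enlarged ambient sequence. By definition of if-contexts, $C$ is either a single leaf hole $[]_y$, or of the form $\ite{[]_x}{C_1}{C_2}$ with $x$ a conditional hole and $C_1,C_2$ strictly smaller if-contexts, so there are exactly two cases.

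In the base case $C \equiv []_y$, the instantiated term $C[\vec b \diamond (\nonce_i)_{i\in I}]$ is a single nonce $\nonce_{i_0}$ with $\nonce_{i_0}\notin\st(\vec u)$ and $\nonce_{i_0}\neq\nonce$, so the goal $\vec u,\nonce_{i_0}\sim\vec u,\nonce$ follows from $\refl$ and $\ax{Fresh}$, which lets a fresh sampling be introduced on both sides under any unused name.

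For the inductive step, let $\beta\in\vec b$ be the ground term filling the conditional hole $[]_x$, and let $t_1,t_2$ be the terms obtained from $C_1,C_2$ by filling their holes with the corresponding entries of $\vec b$ and $(\nonce_i)_{i\in I}$, so that $C[\vec b\diamond(\nonce_i)_{i\in I}]\equiv\ite{\beta}{t_1}{t_2}$; write $I_k\subseteq I$ for the set of indices of leaf nonces occurring in $t_k$. Using the axiom $\ite{b}{x}{x}\peq x$ of $\axioms_{\textsf{ite}}$ (together with $R$ and $\sym$) it suffices to derive $\vec u,\ite{\beta}{t_1}{t_2}\sim\vec u,\ite{\beta}{\nonce}{\nonce}$, and by the $\cs$ axiom — applied with a one-element vector, ambient terms $\vec u$ and tested condition $\beta$ on both sides — this reduces to the two goals
\[
  \vec u,\beta,t_1\sim\vec u,\beta,\nonce
  \qquad\text{and}\qquad
  \vec u,\beta,t_2\sim\vec u,\beta,\nonce .
\]
Each is an instance of the induction hypothesis for $C_k$ with ambient sequence $(\vec u,\beta)$, leaf nonces $(\nonce_i)_{i\in I_k}$ and fresh nonce $\nonce$: the nonces $\nonce,(\nonce_i)_{i\in I_k}$ remain pairwise distinct, and $\fresh{\nonce,(\nonce_i)_{i\in I_k}}{\vec u,\beta,\vec b,C_k[]}$ holds because $\beta$ is an entry of $\vec b$, $\st(C_k[])\subseteq\st(C[])$, and the hypothesis $\fresh{\nonce,(\nonce_i)_{i\in I}}{\vec u,\vec b,C[]}$ already gives freshness with respect to $\vec u$ and $\vec b$.

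The argument is entirely structural — no cryptographic axiom is used, only $\axioms_{\textsf{struct}}$ ($\cs$, $\ax{Fresh}$, $\refl$, $R$, $\sym$) and the if-then-else equalities $\axioms_{\textsf{ite}}$. The one delicate point, and the step I expect to be the main obstacle to write out cleanly, is the bookkeeping of the freshness side-conditions along the recursion: one must check that moving the tested condition $\beta$ into the ambient sequence preserves freshness of all the nonces involved, that passing to a sub-context $C_k$ only shrinks the relevant syntactic material, and that the base-case renaming of a single nonce is properly justified via $\ax{Fresh}$.
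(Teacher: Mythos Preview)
Your proposal is correct and is essentially the same argument as the paper's, just unfolded into an explicit structural induction. The paper collapses the whole right-hand side at once via $R$ (using $\ite{b}{x}{x}\peq x$ repeatedly to identify $C[\vec b\diamond(\nonce)_{i\in I}]$ with $\nonce$), then applies $\cs^*$ down to the leaves and closes each leaf with $\ax{Fresh}$; you interleave the single-step $R$ rewriting with each $\cs$ application, which is the same recursion written out. Your freshness bookkeeping is exactly the point that needs care, and your justification (that $\beta\in\vec b$ and $\st(C_k[])\subseteq\st(C[])$) is correct.
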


\begin{proof}
  TO prove this, we first introduce the if-context $C$ on the right to match the shape of the left side. We then split the proof using $\cs$, and conclude by applying $\ax{Fresh}$. This yields the derivation:
  \[
    \infer[R]{
      \vec u, C\left[\vec b \diamond (\nonce_i)_{i \in I}\right] \sim
      \vec u,\nonce
    }{
      \infer[\cs^*]{
        \vec u, C\left[\vec b \diamond (\nonce_i)_{i \in I}\right] \sim
        \vec u, C\left[\vec b \diamond (\nonce)_{i \in I}\right]
      }{
        \infer[\ax{Fresh}]{
          \forall i \in I,
          \vec u, \vec b, \nonce_i \sim
          \vec u, \vec b, \nonce
        }{}
      }
    }\qedhere
  \]
\end{proof}

\paragraph{Function Application Under Context}
It is often convenient to apply the $\fa$ axiom under an if-context $C$. Formally, let $\vec v,\vec b,(u_{i,j})_{i \in I, 1 \le j \le n},(u'_{i,j})_{i \in I, 1 \le j \le n}$ be terms and $C$ an if-context. Then the following inference rule is an instance of the $\fac$ axiom:
\[
  \infer[\fac]{
    \vec v, C\left[\vec b \diamond
      \left(
        f((u_{i,j})_{1 \le j \le n})
        \right)_{i \in I}\right]
    \sim
    \vec v', C\left[\vec b' \diamond
      \left(f((u'_{i,j})_{1 \le j \le n})
        \right)_{i \in I}\right]
  }{
    \vec v,
    \left(
      C\left[\vec b \diamond (u_{i,j})_{i \in I}\right]
    \right)_{1 \le j \le n}
    \sim
    \vec v',
    \left(
      C\left[\vec b' \diamond (u'_{i,j})_{i \in I}\right]
    \right)_{1 \le j \le n}
  }
\]

\begin{proposition}
  The $\fac$ axioms are a consequence of the $\axioms$ axioms.
\end{proposition}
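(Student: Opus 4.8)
The plan is to derive every instance of the $\fac$ rule from $\axioms$ in two stages: a purely equational normalization of the $\symite$-structure, followed by a single application of the ordinary $\fa$ rule. The first and central stage is to establish the rewriting lemma stating that \emph{a function symbol commutes with an if-context}: for any if-context $D[]_{\vec x \diamond \vec y}$ with $|\vec y| = |I|$, any $f \in \sig$ of arity $n$, and any ground instantiation $\vec b$ of the conditional holes and $(u_{i,j})_{i \in I,\, 1 \le j \le n}$ of the leaf holes,
\[
  D\big[\vec b \diamond (f(u_{i,1},\dots,u_{i,n}))_{i \in I}\big]
  \;\peq\;
  f\big(\,(D[\vec b \diamond (u_{i,j})_{i \in I}])_{1 \le j \le n}\,\big).
\]
I would prove this by induction on the structure of $D$. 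The base case, $D$ a single leaf hole, is the reflexivity of $\peq$. In the inductive step $D \equiv \ite{[]_z}{D_1}{D_2}$ with $z$ a conditional hole filled by some $b$, I would apply the induction hypothesis to $D_1$ and $D_2$, and then use the $\axioms_{\textsf{ite}}$ axiom $f(\vec u, \ite{b}{x}{y}, \vec v) \peq \ite{b}{f(\vec u,x,\vec v)}{f(\vec u,y,\vec v)}$ once for each of the $n$ argument positions to lift $f$ over the conditional node. Throughout I would rely on $\peq$ being a congruence, which is itself a consequence of the rule $R$ together with the reflexivity axiom $\eq{x}{x} \peq \true$ and the $\symite$ equalities of $\axioms_{\textsf{ite}}$.

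For the second stage: starting from the premise of the $\fac$ rule, $\vec v, (C[\vec b \diamond (u_{i,j})_{i}])_{1 \le j \le n} \sim \vec v', (C[\vec b' \diamond (u'_{i,j})_{i}])_{1 \le j \le n}$, I would apply $\perm$ to move the tuples $(C[\dots])_j$ to the front, apply $\fa$ to obtain $\vec v, f\big((C[\vec b \diamond (u_{i,j})_{i}])_j\big) \sim \vec v', f\big((C[\vec b' \diamond (u'_{i,j})_{i}])_j\big)$, and then rewrite both sides using $R$ and the rewriting lemma read right-to-left, turning $f\big((C[\vec b \diamond (u_{i,j})_i])_j\big)$ back into $C[\vec b \diamond (f(u_{i,1},\dots,u_{i,n}))_i]$. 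A final $\perm$ restores the position of $\vec v, \vec v'$, yielding exactly the conclusion of the $\fac$ rule.

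The main obstacle is the bookkeeping in the induction of the first stage: matching the precise definition of an if-context (conditional holes versus leaf holes), applying the $\axioms_{\textsf{ite}}$ commutation axiom in the correct order across all $n$ argument positions, and invoking congruence of $\peq$ at each rewrite step. This is all standard in the Bana–Comon setting but must be spelled out with care; by contrast, the second stage (combining $\fa$, $\perm$ and $R$) is entirely routine.
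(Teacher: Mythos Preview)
Your proposal is correct and follows the same approach as the paper: first rewrite using the $\symite$ homomorphism equalities to pull $f$ outside the if-context, then apply $\fa$, then rewrite back via $R$. You spell out the induction on the if-context and the uses of $\perm$ more explicitly than the paper, which simply sketches the derivation tree and leaves these routine steps implicit.
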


\begin{proof}
  First, we pull the $f$ function outside of the if-context $C$ using the homomorphism properties if $\symite$. Finally we apply the $\fa$ axiom. This yields the derivation:
  \[
    \infer[R]{
      \vec v, C\left[\vec b \diamond
        \left(
          f((u_{i,j})_{1 \le j \le n})
        \right)_{i \in I}\right]
      \sim
      \vec v', C\left[\vec b' \diamond
        \left(f((u'_{i,j})_{1 \le j \le n})
        \right)_{i \in I}\right]
    }{
      \infer[\fa]{
        \vec v,
        f\left(
          C\left[\vec b \diamond (u_{i,j})_{i \in I}\right]
        \right)_{1 \le j \le n}
        \sim
        \vec v',
        f\left(
          C\left[\vec b' \diamond (u'_{i,j})_{i \in I}\right]
        \right)_{1 \le j \le n}
      }{
        \vec v,
        \left(
          C\left[\vec b \diamond (u_{i,j})_{i \in I}\right]
        \right)_{1 \le j \le n}
        \sim
        \vec v',
        \left(
          C\left[\vec b' \diamond (u'_{i,j})_{i \in I}\right]
        \right)_{1 \le j \le n}
      }
    }
    \qedhere
  \]
\end{proof}

\paragraph{Program Constants}
\begin{definition}
  We define the set $\cstdom$ to be the set of program constant, with includes the set of agent names $\iddom$ and the constants $\unknownid$ and $\fail$:
  \[
    \cstdom \; := \;
    \iddom \cup
    \left\{
      \bot,\unknownid,\fail, 0, 1
    \right\}
  \]
\end{definition}

\begin{proposition}
  \label{prop:enc-neq}
  For every term $u,v$ we have, the following axiom is a consequence of the axioms $\axioms$:
  \begin{equation*}
    \label{eq:enc-agent}
    \infer{
      \eq{\enc{u}{\pk(\nonce)}{\enonce}}
      {\enc{v}{\pk(\nonce)}{\enonce'}} \peq \false
    }{
      \length(u) \peq \length(v)
      \;\;&\;\;
      \length(u) \not \peq 0
    }
    \qquad \text{ when }
    \begin{dcases}
      \enonce \not \equiv \enonce'\\
      \fresh{\enonce,\enonce'}{u, v}\\
      % u \equiv \pk(\nonce),\_ \;\wedge\;\vec v \equiv \pk(\nonce'),\_\\
      \nonce \tpos_{\pk(\cdot),\sk(\cdot)} u, v
      \;\wedge\; \sk(\nonce) \tpos_{\dec(\_,\cdot)} u, v
    \end{dcases}
  \end{equation*}
\end{proposition}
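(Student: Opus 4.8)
The plan is to reduce the statement to the encryption‑injectivity axiom $\textsf{EQInj}(\enc{\cdot}{\_}{\_})$ by first replacing the two (possibly equal, possibly correlated) plaintexts $u$ and $v$ with two \emph{provably distinct} constant bit‑strings of the appropriate length, using the $\ccao_s$ axioms — which are sound whenever $\enc{\_}{\_}{\_}$ is \textsc{ind-cca1} secure, by Proposition~\ref{prop:ccaos-valid}. Throughout, the target formula is $\eq{\enc{u}{\pk(\nonce)}{\enonce}}{\enc{v}{\pk(\nonce)}{\enonce'}} \sim \false$, and the premises available are $\length(u) \peq \length(v)$ and $\length(u) \not\peq 0$, together with the side conditions $\enonce \not\equiv \enonce'$, $\fresh{\enonce,\enonce'}{u,v}$, $\nonce \tpos_{\pk(\cdot),\sk(\cdot)} u,v$ and $\sk(\nonce) \tpos_{\dec(\_,\cdot)} u,v$.

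First I would rewrite the left ciphertext. Since $\length(u) \peq \length(1^{\length(u)})$ (by the axiom $\length(1^x)\peq x$ and reflexivity of $\peq$), and since $\fresh{\enonce}{\enc{v}{\pk(\nonce)}{\enonce'},u,1^{\length(u)}}$ holds — using $\enonce \not\equiv \enonce'$, $\fresh{\enonce,\enonce'}{u,v}$ and $\enonce \not\equiv \nonce$ — and the $\tpos$ requirements for $\nonce$ and $\sk(\nonce)$ carry over verbatim from the hypotheses on $u,v$, the axiom $\ccao_s$ gives
\[
  \enc{v}{\pk(\nonce)}{\enonce'},\, \enc{u}{\pk(\nonce)}{\enonce}
  \;\sim\;
  \enc{v}{\pk(\nonce)}{\enonce'},\, \enc{1^{\length(u)}}{\pk(\nonce)}{\enonce}.
\]
Symmetrically, using $\length(v) \peq \length(u) \peq \length(0^{\length(u)})$ — here the premise $\length(u)\peq\length(v)$ is used, via transitivity of $\peq$ and $\length(0^x)\peq x$ — together with $\fresh{\enonce'}{\enc{1^{\length(u)}}{\pk(\nonce)}{\enonce},v,0^{\length(u)}}$, a second instance of $\ccao_s$ lets me replace $\enc{v}{\pk(\nonce)}{\enonce'}$ by $\enc{0^{\length(u)}}{\pk(\nonce)}{\enonce'}$. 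Composing these two rewrites under $\eqsym(\cdot,\cdot)$ with $\fa$, $\perm$ and $\trans$ yields
\[
  \eq{\enc{u}{\pk(\nonce)}{\enonce}}{\enc{v}{\pk(\nonce)}{\enonce'}}
  \;\sim\;
  \eq{\enc{1^{\length(u)}}{\pk(\nonce)}{\enonce}}{\enc{0^{\length(u)}}{\pk(\nonce)}{\enonce'}}.
\]

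It then remains to show $\eq{\enc{1^{\length(u)}}{\pk(\nonce)}{\enonce}}{\enc{0^{\length(u)}}{\pk(\nonce)}{\enonce'}} \peq \false$ and conclude by $\trans$. For this I would first derive $1^{\length(u)} \not\peq 0^{\length(u)}$ from the $\textsf{l-neq}$ axiom, whose three side conditions are exactly $1 \not\peq 0$ (an instance of $\textsf{$\ne$-Const}$, as $0,1 \in \cstdom$), $\length(1) \not\peq 0$ (an instance of $\length(\sfx) \not\peq 0$ for $\sfx \in \cstdom$), and $\length(u) \not\peq 0$ (the remaining premise of the statement). Feeding $\neg\eq{1^{\length(u)}}{0^{\length(u)}} \peq \true$ into the axiom $\textsf{EQInj}(\enc{\cdot}{\_}{\_})$, which states $\neg\eq{s}{t} \wedge \eq{\enc{s}{\pk(\nonce)}{\enonce}}{\enc{t}{\pk(\nonce)}{\enonce'}} \peq \false$, and simplifying the Boolean term with the propositional equalities of $\axioms_{\textsf{ite}}$ (in particular $\true \wedge b \peq b$), gives the desired equality.

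The argument is essentially bookkeeping; the only genuine idea is that $\textsf{EQInj}$ by itself is useless here, since it needs \emph{distinct} plaintexts whereas $u$ and $v$ may coincide — so one must launder the plaintexts through $\ccao_s$ into provably different constants. Accordingly, the part requiring the most care is verifying every side condition of the two $\ccao_s$ instances, chiefly the freshness of $\enonce$ and $\enonce'$ (which hinges on $\enonce \not\equiv \enonce'$ together with $\fresh{\enonce,\enonce'}{u,v}$) and the positivity premise $\length(u) \not\peq 0$, which is precisely what makes $1^{\length(u)}$ and $0^{\length(u)}$ provably distinct and is therefore unavoidable in the proof.
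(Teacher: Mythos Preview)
Your proof is correct and follows essentially the same route as the paper: use \textsc{ind-cca1} security twice to replace the two plaintexts by the provably distinct constants $0^{\length(u)}$ and $1^{\length(u)}$, then conclude via $\textsf{EQInj}(\enc{\cdot}{\_}{\_})$ and $\textsf{l-neq}$. The paper swaps the order (it replaces $v$ first, then $u$) and writes $1^{\length(v)}$ rather than $1^{\length(u)}$, but since $\length(u)\peq\length(v)$ these are cosmetic differences. One minor point: the axiom set $\axioms_{\textsf{crypto}}$ contains $\ccao$, not $\ccao_s$; your two rewriting steps go through verbatim with $\ccao$ (prepend $\pk(\nonce)$ to the context, take $\vec u = \vec v$ and close the premise by $\refl$, then drop $\pk(\nonce)$ and $\length(s)$ with $\restr$), exactly as the paper does.
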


\begin{proof}
  We give directly the derivation:
  \[
    \footnotesize
    \infer[\trans]{
      \eq{\enc{u}{\pk(\nonce)}{\enonce}}
      {\enc{v}{\pk(\nonce)}{\enonce'}} \peq \false
    }{
      \infer[\fa]{
        \eq{\enc{u}{\pk(\nonce)}{\enonce}}
        {\enc{v}{\pk(\nonce)}{\enonce'}} \peq
        \eq{\enc{u}{\pk(\nonce)}{\enonce}}
        {\enc{1^{\length(v)}}{\pk(\nonce)}{\enonce'}}
      }{
        \infer[\restr]{
          \enc{u}{\pk(\nonce)}{\enonce},
          \enc{v}{\pk(\nonce)}{\enonce'} \sim
          \enc{u}{\pk(\nonce)}{\enonce},
          \enc{1^{\length(v)}}{\pk(\nonce)}{\enonce'}
        }{
          \infer[\ccao]{
            \pk(\nonce),
            \enc{u}{\pk(\nonce)}{\enonce},
            \enc{v}{\pk(\nonce)}{\enonce'} \sim
            \pk(\nonce),
            \enc{u}{\pk(\nonce)}{\enonce},
            \enc{1^{\length(v)}}{\pk(\nonce)}{\enonce'}
          }{
            \infer[\refl]
            {
              \pk(\nonce),\enc{u}{\pk(\nonce)}{\enonce},\length(v) \sim
              \pk(\nonce),\enc{u}{\pk(\nonce)}{\enonce},\length(v)
            }{}
            \;\;&\;\;
            \infer[]{
              \length(v) \peq \length(1^{\length(v)})
            }{
              \infer[\refl]{\length(v) \peq \length(v)}{}
            }
          }
        }
      }
      &
      \eq{\enc{u}{\pk(\nonce)}{\enonce}}
      {\enc{1^{\length(v)}}{\pk(\nonce)}{\enonce'}} \peq \false
    }
  \]
  To show $\eq{\enc{u}{\pk(\nonce)}{\enonce}} {\enc{1^{\length(v)}}{\pk(\nonce)}{\enonce'}} \peq \false$, we apply again transitivity:
  \[
    \infer[\trans]{
      \eq{\enc{u}{\pk(\nonce)}{\enonce}}
      {\enc{1^{\length(v)}}{\pk(\nonce)}{\enonce'}} \peq \false
    }{
      \eq{\enc{u}{\pk(\nonce)}{\enonce}}
      {\enc{1^{\length(v)}}{\pk(\nonce)}{\enonce'}} \peq
      \eq{\enc{0^{\length(u)}}{\pk(\nonce)}{\enonce}}
      {\enc{1^{\length(v)}}{\pk(\nonce)}{\enonce'}}
      \;\;&\;\;
      \eq{\enc{0^{\length(u)}}{\pk(\nonce)}{\enonce}}
      {\enc{1^{\length(v)}}{\pk(\nonce)}{\enonce'}} \peq
      \false
    }
  \]
  Now, we give the derivation of the left premise:
  \[
    \infer[\fa]{
      \eq{\enc{u}{\pk(\nonce)}{\enonce}}
      {\enc{1^{\length(v)}}{\pk(\nonce)}{\enonce'}} \peq
      \eq{\enc{0^{\length(u)}}{\pk(\nonce)}{\enonce}}
      {\enc{1^{\length(v)}}{\pk(\nonce)}{\enonce'}}
    }{
      \infer[\restr]{
        \enc{u}{\pk(\nonce)}{\enonce}
        \enc{1^{\length(v)}}{\pk(\nonce)}{\enonce'} \sim
        \enc{0^{\length(u)}}{\pk(\nonce)}{\enonce}
        \enc{1^{\length(v)}}{\pk(\nonce)}{\enonce'}
      }{
        \infer[\ccao]{
          \pk(\nonce),
          \enc{u}{\pk(\nonce)}{\enonce}
          \enc{1^{\length(v)}}{\pk(\nonce)}{\enonce'} \sim
          \pk(\nonce),
          \enc{0^{\length(u)}}{\pk(\nonce)}{\enonce}
          \enc{1^{\length(v)}}{\pk(\nonce)}{\enonce'}
        }{
          \infer[\refl]
          {
            \pk(\nonce),
            \enc{1^{\length(v)}}{\pk(\nonce)}{\enonce'},
            \length(u) \sim
            \pk(\nonce),
            \enc{1^{\length(v)}}{\pk(\nonce)}{\enonce'},
            \length(u)
          }{}
          \;\;&\;\;
          \infer[]{
            \length(u) \peq \length(0^{\length(u)})
          }{
            \infer[\refl]{\length(u) \peq \length(u)}{}
          }
        }
      }
    }
  \]
And finally we prove the right premise $\eq{\enc{0^{\length(u)}}{\pk(\nonce)}{\enonce}} {\enc{1^{\length(v)}}{\pk(\nonce)}{\enonce'}} \peq\false$:
  \[
    \infer[\textsf{EQInj}(\enc{\cdot}{\_}{\_}) + R]{
      \eq{\enc{0^{\length(u)}}{\pk(\nonce)}{\enonce}}
      {\enc{1^{\length(v)}}{\pk(\nonce)}{\enonce'}} \peq
      \false
    }{
      \infer[\textsf{l-neq}]{
        \eq{0^{\length(u)}}{1^{\length(v)}} \peq \false
      }{
        \infer[\textsf{EQConst}]{\eq{0}{1} \peq \false}{}
        \;\;&\;\;
        \unary{\length(0) \not \peq 0}
        \;\;&\;\;
        \length(u) \not \peq 0
      }
    }
    \qedhere
  \]

\end{proof}

\FloatBarrier

%%% Local Variables:
%%% mode: latex
%%% TeX-master: "main"
%%% End:

\newpage
\section{Protocol}
\FloatBarrier

\subsection{Symbolic Protocol}
\label{appendix:symbolic-protocol}

\begin{figure}[t]
  \begin{center}
    \begin{tikzpicture}[>=stealth]
      \tikzset{mn/.style={fill=white,draw,
          rounded corners,minimum height=2em,minimum width=4em}};

      \node[right] at (0,1) {\underline{Transition System $\mathcal{Q}_\ue^\ID$:}};

      \node[mn,draw=none,anchor=west]
      (a) at (0,0) {\footnotesize$\mathcal{E}_\ID^{\le j-1}$};
      \path (a) -- ++ (2,0)
      node[mn] (b) {\footnotesize$\npuai{0}{\ID}{j}$};
      \path (b) -- ++ (2,0)
      node[mn] (c0) {\footnotesize$\npuai{1}{\ID}{j}$};
      \path (c0) -- ++ (2,0)
      node[mn] (c) {\footnotesize$\npuai{2}{\ID}{j}$};
      
      \path (c) -- ++ (2,-0.8)
      node[mn] (d) {\footnotesize$\fuai_\ID(j)$};

      \node[mn,draw=none,anchor=west]
      (e) at (0,-1.6) {\footnotesize$\mathcal{E}_\ID^{\le j-1}$};
      \path (e) -- ++ (2.5,0)
      node[mn] (f) {\footnotesize$\cuai_\ID(j,0)$};
      \path (f) -- ++ (2.5,0)
      node[mn] (g) {\footnotesize$\cuai_\ID(j,1)$};

      \node[mn,draw=none,anchor=west]
      (h) at (0,-3.2) {\footnotesize$\mathcal{E}_\ID^{\le j-1}$};
      \path (h) -- ++ (2.5,0)
      node[mn] (i) {\footnotesize$\newsession_\ID(j)$};

      \draw[->] (a) -- (b);
      \draw[->] (b) -- (c0);
      \draw[->] (c0) -- (c);
      \draw[->] (c) -- (d);

      \draw[->] (e) -- (f);
      \draw[->] (f) -- (g);
      \draw[->] (g) -- (d);

      \draw[->] (h) -- (i);

      % Network
      \node[right] at (10.5,1) {\underline{Transition System $\mathcal{Q}_\hn^j$:}};

      \node[mn] (b) at (11.5,-0.8) {\footnotesize$\pnai(j,0)$};
      \path (b) -- ++ (2.5,0)
      node[mn] (c) {\footnotesize$\pnai(j,1)$};
      \path (c) -- ++ (2.5,-0.8)
      node[mn] (d) {\footnotesize$\fnai(j)$};

      \node[mn] (f) at (11.5,-2.4) {\footnotesize$\cnai(j,0)$};
      \path (f) -- ++ (2.5,0)
      node[mn] (g) {\footnotesize$\cnai(j,1)$};

      \draw[->] (b) -- (c);
      \draw[->] (c) -- (d);

      \draw[->] (f) -- (g);
      \draw[->] (g) -- (d);

      \draw[thick] (10,0.6) -- ++(0,-4);;
    \end{tikzpicture}
  \end{center}
  \textbf{Convention:} where $\mathcal{E}_\ID^{\le j} = \{ \npuai{i}{\ID}{j_0}, \cuai_\ID(j_0,i), \fuai_\ID(j_0), \newsession_\ID(j_0) \mid j_0 \le j\}$, the initial states of $\mathcal{Q}_\ue^\ID$ are $\npuai{1}{\ID}{0}$ and $\cuai_\ID(0,0)$, and the initial states of $\mathcal{Q}_\hn^j$ are $\pnai(j,0)$ and $\cnai(j,0)$. Every state of $\mathcal{Q}_\ue^\ID$ or $\mathcal{Q}_\hn^j$ is final.
  \caption{\label{fig:as-ts}The transition systems used to define valid symbolic traces.}
\end{figure}

In this section we formally define the symbolic traces of the $\faka$ protocol, as well as some functions and properties of these traces.

We recall that $\iddom = \{\ID_1,\dots,\ID_N\}$ is the set of identities used in the protocol. We split these identities between base identities, which are used by the normal protocol, and copies of the base identities, which we use to express the $\sigma_{\sunlink}$-unlinkability of the protocol. We have $B$ base identities $\agent{A}_1,\dots,\agent{A}_{B}$, and we let $\baseiddom$ be the set of base identities. Then, for every base identity $\agent{A}_i$, we have $C$ copies $\agent{A}_{i} = \agent{A}_{i,1},\dots,\agent{A}_{i,C}$ of $\agent{A}_i$. In total we use $N = B \times C$ distinct identities, and $\ID_1,\dots,\ID_N$ is an arbitrary enumeration of all the identities.

\paragraph{Valid Symbolic Trace}
We recall that an symbolic trace is a sequence of action identifiers, which symbolically represents calls from the adversary to the oracles. Remark that some sequence of action identifiers do not correspond to a valid execution of the protocol. E.g., since the session $\tue_{\ID}(j)$ cannot execute both the $\supi$ and the $\guti$ protocols, a \emph{valid symbolic trace} cannot contain both $\npuai{\_}{\ID}{j}$ and $\cuai_\ID(j,\_)$. Similarly, the $\thn$'s second message in the $\supi$ protocol cannot be sent before the first message, hence $\pnai(j,1)$ cannot appear before $\pnai(j,0)$ in $\tau$. Formally:
\begin{definition}
  Let $(\mathcal{Q}_\ue^{\ID})_{\ID \in \iddom}$ and $(\mathcal{Q}_\hn^j)_{j \in \mathbb{N}}$ be the automatas given in Fig.~\ref{fig:as-ts}. A symbolic trace $\tau = \ai_0,\dots,\ai_n$ is a \emph{valid} symbolic trace iff $\tau$ is an inter-leaving of the words $ w_{\ID^1},\dots,w_{\ID^N},w_\hn^0,\dots,w_\hn^l,\dots$ where:
  \begin{itemize}
  \item for every $1 \le j \le N$, $w_\ID^j$ is a run of $\mathcal{Q}_\ue^{\ID_j}$.
  \item for every $j \in \mathbb{N}$, $w_\hn^j$ is a run of $\mathcal{Q}_\hn^j$.
  \item for every $j \in \mathbb{N}$ and $1 \le i \le n$, if $\ai_i$ is a state of $\mathcal{Q}_\hn^j$ then there exists $i_0 < i$ such that $\ai_{i_0}$ is a state of~$\mathcal{Q}_\hn^{j-1}$
  \end{itemize}
  Furthermore, $\tau$ is said to be \emph{basic} if for all $1 \le j \le N$, if $w_\ID^j \ne \epsilon$ then $\ID_j$ is a base identity (i.e. $\ID_j \in \{\agent{A}_1,\dots,\agent{A}_{B}\}$).
\end{definition}

Since we only informally describe the $\faka$ protocol, we cannot formally prove that every $\tau \in \support(\runlink)$, $\tau$ is a valid symbolic trace. Instead, we put as an assumption that any implementation of the $\faka$ protocol must ensures that messages are processed as described in $\mathcal{Q}_\ue^\ID$ and $\mathcal{Q}_\hn^j$.
\begin{assumption}
  \label{ass:support-valid}
  For every $\tau \in \support(\runlink)$, $\tau$ is a valid symbolic trace
\end{assumption}

\paragraph{Modeling Unlinkability}
Given a symbolic trace $\tau \in \support(\runlink)$, there is a particular and unique symbolic trace $\utau$ which is the ``most anonymised trace'' corresponding to $\tau$. Intuitively, $\utau$ is the trace $\tau$ where we changed a user identity every time we could (i.e. every time $\ns_\ID(\_)$ appears). This is useful to prove that the $\fiveaka$ protocol is $\sigma_{\sunlink}$-unlinkable, as it reduces the number of cases we have to consider: we only need to show that we can derive $\cframe_{\tau} \sim \cframe_{\utau}$ for every $\tau \in \support(\runlink)$.
\begin{definition}
  Given an identity $\agent{A}_{b,c}$ where $c < C$, we let $\freshid(\agent{A}_{b,c}) = \agent{A}_{b,c + 1}$, and given a base identity $\agent{A}_{b,1}$ we let $\copyid(\agent{A}_{b,1}) = \{\agent{A}_{b,1} \mid 1 \le i \le C\}$.
\end{definition}

\begin{definition}
  We define some functions on symbolic traces:
  \begin{itemize}
  \item We let $\sthead$ be the function that, given a symbolic trace $\tau$, returns the last action in $\tau$ (or $\epsilon$ is $\tau$ is empty):
    \[
      \sthead(\tau) \;=\;
      \begin{dcases}
        \ai_n & \text{ if } \tau = \ai_0,\dots,\ai_n \text{ and } n \ge 0\\
        \epsilon & \text{ if } \tau = \epsilon
      \end{dcases}
    \]
  % \item We let $\popre$ be the prefix ordering on symbolic traces.
  \item Given a symbolic trace $\tau$, we let $\potau$ be the restriction of $\popre$ to the set of strict prefixes of $\tau$, i.e. $\tautt \potau \taut$ iff $\tautt \popre \taut$ and $\taut \popre \tau$.
  \item We extend $\potau$ to symbolic actions as follows: we have $\ai \potau \taut$ (resp. $\taut \potau \ai$) iff there exists $\tautt$ such that $\sthead(\tautt) = \ai$ and $\tautt \potau \taut$ (resp. $\taut \potau \tautt$).
  \end{itemize}

\end{definition}

\begin{definition}
  Given a symbolic trace $\tau$ with less than $C$ actions $\newsession_\ID(\_)$ for every $\ID$, we define the symbolic trace $\ufresh{\tau}$ where each time we encounter a action $\newsession_\ID(j)$, we replace all subsequent action with agent $\ID$ by action with agent $\freshid(\ID)$:
  \[
    \ufresh{\tau} =
    \begin{cases}
      \newsession_{\nu\ID}(j),\ufresh{\tau_0[\nu\ID/\ID]} & \text{ when } \tau = \newsession_\ID(j),\tau_0 \text{ and } \nu\ID=\freshid(\ID)\\
      \ai,\ufresh{\tau_0} & \text{ when } \tau = \ai,\tau_0 \text{ and } \ai \not \in \{\newsession_\ID(j)\mid \ID \in \iddom, j \in \mathbb{N}\}
    \end{cases}
  \]
\end{definition}

% Let $\tau \in \support(\runlink)$ be a symbolic trace. The symbolic trace $\utau$ is the trace obtained from $\tau$ by changing a user identity whenever possible (i.e. whenever $\ns_\ID(\_)$ appears). Intuitively, $\utau$ is the ``most anonymised trace'' such that $\runlink(\tau,\utau)$. 
One can easily check that $\runlink(\tau,\utau)$. Besides, remark that for every $(\tau_l,\tau_r) \in \runlink$ we have $\ufresh{\tau_l} = \ufresh{\tau_r}$. Moreover, $\sim$ is a transitive relation. Therefore, instead of proving that for every $\runlink(\tau_l,\tau_r)$ the formula $\cstate_{\tau_l}  \sim \cstate_{\tau_r}$ can be derived using $\axioms$, it is sufficient to show that for every $\tau \in \support(\runlink)$, we can derive $\cstate_{\tau} \sim \cstate_{\utau}$ using $\axioms$. Formally:
\begin{proposition}
  \label{prop:main-unlink-bc-app}
  The $\fiveaka$ protocol is $\sigma_\sunlink$-unlinkable in any computational model satisfying some axioms $\axioms$ if for every $\tau \in \support(\runlink)$, there is a derivation using $\axioms$ of $\cframe_{\tau} \sim \cframe_{\utau}$.
\end{proposition}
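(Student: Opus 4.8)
The plan is to derive the hypothesis of Proposition~\ref{prop:main-unlink-bc} from the apparently weaker hypothesis of the present statement, and then invoke that proposition. So I would fix an arbitrary pair $(\tau_l,\tau_r) \in \runlink$. Then both $\tau_l$ and $\tau_r$ lie in $\support(\runlink)$, so the standing hypothesis supplies derivations from $\axioms$ of $\cframe_{\tau_l} \sim \cframe_{\ufresh{\tau_l}}$ and of $\cframe_{\tau_r} \sim \cframe_{\ufresh{\tau_r}}$. One first checks the mild side condition that $\ufresh{\tau_l}$ and $\ufresh{\tau_r}$ are well defined, i.e.\ that each of $\tau_l,\tau_r$ contains fewer than $C$ actions $\ns_\ID(\_)$ for every $\ID$; this holds by the way $\runlink$ is built, since each base identity is drawn only a bounded number of times.

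The crux is the \emph{syntactic} identity $\ufresh{\tau_l} \equiv \ufresh{\tau_r}$. I would prove it by unfolding the definition of $\runlink$: a pair $(\tau_l,\tau_r)$ arises from one fixed trace of oracle calls of the $(q,\sigma_\sunlink)$-unlinkability game, with $\tau_l$ the instantiation for the secret bit $b = 0$ and $\tau_r$ the instantiation for $b = 1$. These two instantiations agree on all action identifiers, \emph{except} possibly on the identity subscripts carried by user actions occurring after a $\ns_\ID(\_)$ --- the symbolic counterpart of a $\drawue_{\sigma_\sunlink}$ call, which is exactly where $b$ enters. The rewriting $\ufresh{\cdot}$ replaces, after each $\ns_\ID(j)$, every later action of agent $\ID$ by the corresponding action of $\freshid(\ID)$; I would argue that applying it to $\tau_l$ and to $\tau_r$ erases precisely the positions where the two can disagree, sending both to the same canonical trace in which the drawn users become the successive fresh copies of the base identities, independently of the value of $b$. (This is the observation recorded above that $\ufresh{\tau_l} = \ufresh{\tau_r}$ for every $(\tau_l,\tau_r) \in \runlink$.) Since the frame $\cframe_\tau$ is a deterministic function of $\tau$ --- built by the mutual induction of Section~\ref{section:modeling-body} --- this immediately gives $\cframe_{\ufresh{\tau_l}} \equiv \cframe_{\ufresh{\tau_r}}$.

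The conclusion then follows from the structural axioms: applying $\sym$ to $\cframe_{\tau_r} \sim \cframe_{\ufresh{\tau_r}}$ and chaining with $\cframe_{\tau_l} \sim \cframe_{\ufresh{\tau_l}} \equiv \cframe_{\ufresh{\tau_r}}$ via $\trans$ produces a derivation of $\cframe_{\tau_l} \sim \cframe_{\tau_r}$ from $\axioms$. As $(\tau_l,\tau_r) \in \runlink$ was arbitrary, Proposition~\ref{prop:main-unlink-bc} delivers $\sigma_\sunlink$-unlinkability in every computational model satisfying $\axioms$. The one nontrivial ingredient is the identity $\ufresh{\tau_l} \equiv \ufresh{\tau_r}$, which I expect to be the main obstacle: not deep, but a genuine bookkeeping argument that must track how the secret bit $b$ propagates into the identity subscripts of action identifiers along a trace of $\runlink$, and how $\ns_\ID$, $\freshid$ and $\copyid$ interact.
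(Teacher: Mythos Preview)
Your proposal is correct and follows essentially the same approach as the paper: reduce to Proposition~\ref{prop:main-unlink-bc}, use the syntactic identity $\ufresh{\tau_l} = \ufresh{\tau_r}$ (which the paper simply asserts, having remarked on it earlier), and chain the two derivations via $\sym$ and $\trans$. The only minor difference is in justifying that $\ufresh{\tau}$ is well-defined: the paper argues by monotonicity in the number of identities (one may w.l.o.g.\ take $N$ large enough), whereas you appeal directly to the construction of $\runlink$.
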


\begin{proof}
  Assume that for every $\tau \in \support(\runlink)$, we can derive using $\axioms$ the formula $\cframe_{\tau} \sim \cframe_{\utau}$. Then, using Proposition~\ref{prop:main-unlink-bc} we know that $\fiveaka$ protocol is $\sigma_\sunlink$-unlinkable in any computational model satisfying axioms $\axioms$ if for every $(\tau_l,\tau_r) \in \runlink$, we can derive $\cframe_{\tau_l} \sim \cframe_{\tau_r}$. If $\faka$ is  $\sigma_\sunlink$-unlinkable with $N$ identities then it is  $\sigma_\sunlink$-unlinkable with $N'$ identities if $N' \le N$. Therefore, w.l.o.g. we can always assume that we have more identities than actions $\ns_\ID(\_)$ in $\tau$. Hence $\utau$ is well-defined, and we know that $(\tau_l,\ufresh{\tau_l}) \in \runlink$ and $(\tau_r,\ufresh{\tau_r}) \in \runlink$. By hypothesis, we have derivations of $\cframe_{\tau_l} \sim \cframe_{\ufresh{\tau_l}}$ and $\cframe_{\tau_r} \sim \cframe_{\ufresh{\tau_r}}$. Since $\ufresh{\tau_l} = \ufresh{\tau_r}$, and using the transitivity and symmetry axioms  $\trans$ and $\sym$, we get a derivation of $\cframe_{\tau_l} \sim \cframe_{\tau_r}$. This concludes this proof.
\end{proof}

\begin{proposition}
  \label{prop:tau-tu-utau}
  If $\tau$ is a valid basic symbolic trace with less than $C$ actions $\newsession$ then $\ufresh{\tau}$ is a valid symbolic trace.
\end{proposition}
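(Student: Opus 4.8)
<br>

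The statement to prove is Proposition~\ref{prop:tau-tu-utau}: if $\tau$ is a valid basic symbolic trace with fewer than $C$ actions $\newsession$, then $\ufresh{\tau}$ is a valid symbolic trace.

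The plan is to unfold the definition of $\ufresh{\tau}$ and track what the identity-renaming substitution does to the inter-leaving structure that witnesses validity of $\tau$. Recall that $\tau$ being valid means it is an inter-leaving of runs $w_{\ID^1},\dots,w_{\ID^N}$ of the automata $\mathcal{Q}_\ue^{\ID}$ together with runs $w_\hn^0,w_\hn^1,\dots$ of the automata $\mathcal{Q}_\hn^j$, subject to the extra ordering condition on the $\thn$ sessions. The key observation is that $\ufresh{\tau}$ is obtained from $\tau$ purely by applying identity-renamings of the form $\freshid$ to suffixes, triggered by the $\newsession$ actions; these renamings never touch the session numbers $j$, never reorder actions, and never touch $\thn$-side actions $\pnai(j,\cdot)$, $\cnai(j,\cdot)$, $\fnai(j)$. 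Consequently the $\thn$-side ordering condition is preserved verbatim, and the work is entirely on the $\tue$ side.

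First I would make precise, by induction on the length of $\tau$ following the two clauses in the definition of $\ufresh{\_}$, that $\ufresh{\tau}$ is well-defined under the hypothesis of fewer than $C$ occurrences of $\newsession$ for each $\ID$ — this is needed so that each application of $\freshid(\agent{A}_{b,c})=\agent{A}_{b,c+1}$ stays within the $C$ available copies $\agent{A}_{b,1},\dots,\agent{A}_{b,C}$. Here I use that $\tau$ is \emph{basic}: every non-empty user run $w_\ID^j$ has $\ID$ a base identity $\agent{A}_{b,1}$, so before any renaming only copy index $1$ is in use, and the chain of renamings triggered along $\tau$ walks through the copy indices $1,2,\dots$ monotonically; bounding the number of $\newsession$ actions by $C$ bounds the copy index by $C$. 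Second, I would argue that for each base identity $\agent{A}_b$, the actions of $\tau$ that mention some copy $\agent{A}_{b,c}$, read in order, decompose into consecutive blocks separated exactly by the $\newsession$ actions, where block $k$ (using copy $\agent{A}_{b,k}$ after renaming, since $\tau$ is basic block $0$ already uses $\agent{A}_{b,1}$ up to reindexing) — and since each $\newsession_\ID(j)$ is itself a final state of the one-step automaton $\mathcal{Q}_\ue^\ID$ with $\mathcal{E}_\ID^{\le j-1}$ as its source, and $\ufresh{\_}$ emits $\newsession_{\freshid(\ID)}(j)$ followed by the renamed remainder, each block after renaming is still a valid run of the corresponding $\mathcal{Q}_\ue^{\agent{A}_{b,k}}$ (the automata are identical up to the identity label). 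Third, I would collect these renamed per-copy runs plus the untouched $\thn$ runs and observe they inter-leave to give exactly $\ufresh{\tau}$ in the same positional order as the original inter-leaving gave $\tau$, and re-check the three inter-leaving conditions of the definition of validity — the first two are immediate from the previous step, the third is immediate because $\thn$-actions and their relative order are unchanged.

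The main obstacle I anticipate is the bookkeeping in the second step: correctly stating and proving that the actions of a given base-identity family split into $\newsession$-delimited blocks each of which is a run of one copy automaton, and that $\ufresh{\_}$ realigns the copy indices so that distinct blocks land on distinct copies $\agent{A}_{b,1},\dots,\agent{A}_{b,k}$ — this requires an inductive invariant relating, at each prefix, the ``current'' copy index for each base identity to the number of $\newsession$ actions seen so far. Once that invariant is stated cleanly, both the well-definedness (the copy index never exceeds $C$) and the validity (each block is an honest automaton run, nothing is reordered, $\thn$-conditions untouched) fall out; everything else is routine unfolding of definitions.
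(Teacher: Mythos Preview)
Your proposal is correct and takes essentially the same approach as the paper: the paper's proof is literally the single sentence ``The proof is straightforward by induction over $\tau$'', and your plan is exactly a fleshed-out version of that induction, correctly identifying that the renaming touches only $\tue$-side identity labels (leaving $\thn$-side actions and session numbers untouched) and that the per-identity runs split into $\newsession$-delimited blocks assigned to successive copies. Your anticipated bookkeeping invariant is the right one, and the hypothesis of fewer than $C$ $\newsession$ actions per base identity is used exactly where you say, to keep $\freshid$ well-defined.
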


\begin{proof}
  The proof is straightforward by induction over $\tau$.
\end{proof}

\begin{definition}
  Given a basic trace $\tau$ and a basic identity $\ID = \agent{A}_{i,0}$, we let $\nu_\tau(\ID)$ be the identity $\agent{A}_{i,l}$ where $l$ is the number of occurrences of $\newsession_\ID(\_)$ in $\tau$.
\end{definition}

\begin{definition}
  Let $\tau$ be a symbolic trace of actions $\ai_0,\dots,\ai_n$. Then for all $0 \le i < n$, $\suc_\tau(\ai_i) = \ai_{i + 1}$.
\end{definition}

\begin{definition}
  We define the partial $\session$ function:
  \begin{gather*}
    \session_\hn(\ai) = j \text{ when }
    \ai = \textsc{x}(j,\_), \textsc{x} \in \{\pnai,\cnai,\fnai \}
  \end{gather*}
\end{definition}

\begin{definition}
  We let $\sessionstarted_j(\tau)$ be true if and only if there exists $\ai \in \tau$ s.t. $\session(\ai)=j$.
\end{definition}

\subsection{The $\faka$ Protocol}

To show that the $\faka$ protocol is $\sigma_\sunlink$-unlinkable, we need to know, for every identity $\ID \in \iddom$, if there was a successful $\supi$ session since the last $\ns_\ID(\_)$. To do this, we extend the set of variables $\vardom$ by adding a phantom variable $\sync_\ue^\ID$ for every $\ID \in \iddom$. We also extend the symbolic state updates of $\ns_\ID(\_)$ and $\npuai{2}{\ID}{j}$ as follows:
\begin{itemize}
\item For $\ai = {\newsession_\ID(j)}$:
  \[
    \upstate_\tau \;\equiv\;
    \begin{dcases}
      \success_\ue^\ID \mapsto \false\\
      \sync_\ue^\ID \mapsto \false
    \end{dcases}
  \]
\item For $\ai = {\npuai{2}{\ID}{j}}$:
  \[
    \upstate_\tau \;\equiv\;
    \begin{dcases}
      \eauth_\ue^\ID \mapsto
      \ite{\accept_\tau^\ID}
      {\instate_\tau(\bauth_\ue^\ID)}{\fail}\\
      \sync_\ue^\ID \mapsto \instate_\tau(\sync_\ue^\ID) \vee \accept_\tau^\ID
    \end{dcases}
  \]
\end{itemize}
Remark that the variable $\sync_\ue^\ID$ is read only to update its value. It is not used in the actual protocol. By consequence, the $\faka$ protocol is  $\sigma_\sunlink$-unlinkable if and only if the extended $\faka$ protocol is $\sigma_\sunlink$-unlinkable.

We now give the definition of the initial symbolic state $\cstate_\epsilon$, which we omitted in the body:
\begin{definition}
  \label{def:init-sigma-phi}
  The symbolic state $\cstate_\epsilon$ is the function from $\vardom$ to terms defined by having, for every $\ID \in \iddom$ and $j \in \mathbb{N}$:
  \begin{mathpar}
    \cstate_\epsilon(\sqn^\ID_\ue) \equiv \sqnini_\ue^\ID

    \cstate_\epsilon(\sqn^\ID_\hn) \equiv \sqnini_\hn^\ID

    \cstate_\epsilon(\suci^\ID_\scx) \equiv \unset

    \cstate_\epsilon(\eauth^\ID_\ue) \equiv \fail

    \cstate_\epsilon(\bauth^\ID_\ue) \equiv \fail

    \cstate_\epsilon(\eauth^j_\hn) \equiv \fail

    \cstate_\epsilon(\bauth^j_\hn) \equiv \fail
    
    \cstate_\epsilon(\uetsuccess^{\ID}) \equiv \false
    
    \cstate_\epsilon(\success^{\ID}_\ue) \equiv \false

    \cstate_\epsilon(\tsuccess_\hn^{\ID}) \equiv \false

    \cstate_\epsilon(\sync^{\ID}) \equiv \false
  \end{mathpar}
\end{definition}

\subsection{Invariants and Necessary Acceptance Conditions}
\label{subsection:inv-nec-cond}

\paragraph{Notations}
From now on, the set of axioms $\axioms$ is fixed, and we stop specify the set of axioms used: we say that we have a derivation of a formula $\phi$ to mean that $\phi$ can be deduced from $\axioms$. Furthermore, we say that $\phi$ holds when there is a derivation of $\phi$.

Moreover, we abuse notations and write $u = v$ instead of $u \peq v$. We can always disambiguate using the context: if we expect a term, then $u = v$ stands for the term $\eq{u}{v}$, whereas if a formula is expected then $u = v$ stands for $\eq{u}{v} \sim \true$. We extends this to any boolean term: if $b$ is a boolean term then we say that $b$ holds if we can show that $b \sim \true$ holds. For example, $\cstate_\tau(\sqn_\ue^\ID) \ge \cstate_\tau(\sqn_\hn^\ID)$ holds if we can show that $\Geq{\cstate_\tau(\sqn_\ue^\ID)}{\cstate_\tau(\sqn_\hn^\ID)} \sim \true$.

\paragraph{Properties}
We now start to state and prove properties of the $\faka$ protocol.
\begin{proposition}
  \label{prop:len-eq-sqn}
  For every valid symbolic trace $\tau$, for every $\ID_1,\ID_2 \in \iddom$, we have a derivation of:
  \[
    \unary{
      \length(\instate_{\tau}(\sqn_\ue^{\ID_1}))
      =
      \length(\instate_{\tau}(\sqn_\ue^{\ID_2}))}
  \]
\end{proposition}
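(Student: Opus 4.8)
The plan is to derive Proposition~\ref{prop:len-eq-sqn} from a single \emph{uniform-length invariant}, established by induction: for every valid symbolic trace $\tau$ and every $\ID \in \iddom$, there is a derivation using $\axioms$ of $\length(\instate_\tau(\sqn_\ue^\ID)) \peq \length(\sqnini_\ue^\ID)$. Granting this, the proposition follows at once, since $\length(\instate_\tau(\sqn_\ue^{\ID_1})) \peq \length(\sqnini_\ue^{\ID_1}) \peq \length(\sqnini_\ue^{\ID_2}) \peq \length(\instate_\tau(\sqn_\ue^{\ID_2}))$, where the middle step is the $\axioms_{\textsf{len}}$ axiom $\length(\sqnini_\ue^{\ID_1}) \peq \length(\sqnini_\ue^{\ID_2})$ and we use transitivity of $\peq$. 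Because $\instate_\tau = \cstate_{\tauo}$ whenever $\tau = \tauo,\ai$, and because prefixes of valid symbolic traces are again valid (the automata of Fig.~\ref{fig:as-ts} have all states final and the $\thn$-session ordering constraint is prefix-stable), it suffices to prove $\length(\cstate_\tau(\sqn_\ue^\ID)) \peq \length(\sqnini_\ue^\ID)$ for every valid $\tau$ and every $\ID$.

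I would run the induction on $\tau$ through a slightly stronger \emph{structural invariant} that tames the increment step: for every valid $\tau$ and every $\ID$, $\cstate_\tau(\sqn_\ue^\ID)$ is provably equal, using only the $\axioms_{\textsf{ite}}$ axioms, to an if-context $D[\vec b \diamond (\sqnsuc^{n_k}(\sqnini_\ue^\ID))_k]$ whose leaves are finite iterates $\sqnsuc^{n_k}$ (with $n_k \ge 0$) applied to $\sqnini_\ue^\ID$. The base case $\tau = \epsilon$ is immediate from Definition~\ref{def:init-sigma-phi}, since $\cstate_\epsilon(\sqn_\ue^\ID) \equiv \sqnini_\ue^\ID$. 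For $\tau = \tauo,\ai$: if $\sqn_\ue^\ID \notin \dom(\upstate_\tau)$ then $\cstate_\tau(\sqn_\ue^\ID) \equiv \cstate_\tauo(\sqn_\ue^\ID)$ and the induction hypothesis applies directly. Otherwise, scanning Fig.~\ref{fig:protocol-term-supi} and Fig.~\ref{fig:protocol-guti-refresh}, the only actions writing $\sqn_\ue^\ID$ are $\npuai{1}{\ID}{j}$, for which $\cstate_\tau(\sqn_\ue^\ID) \equiv \sqnsuc(\cstate_\tauo(\sqn_\ue^\ID))$, and $\cuai_\ID(j,1)$, for which $\cstate_\tau(\sqn_\ue^\ID) \equiv \ite{\accept_\tau^\ID}{\sqnsuc(\cstate_\tauo(\sqn_\ue^\ID))}{\cstate_\tauo(\sqn_\ue^\ID)}$; in each case I apply the induction hypothesis to $\cstate_\tauo(\sqn_\ue^\ID)$ and then push the freshly applied $\sqnsuc$ (and, in the $\guti$ case, the fresh head conditional $\accept_\tau^\ID$) inside the if-context using the homomorphism axioms $f(\dots,\ite{b}{x}{y},\dots) \peq \ite{b}{f(\dots,x,\dots)}{f(\dots,y,\dots)}$, obtaining again a term of the required shape with each $n_k$ either unchanged or increased by one.

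From the structural invariant I then recover the length invariant. Writing $\cstate_\tau(\sqn_\ue^\ID) \peq D[\vec b \diamond (\sqnsuc^{n_k}(\sqnini_\ue^\ID))_k]$, I push $\length$ inside $D$ (again the $\axioms_{\textsf{ite}}$ homomorphism axioms), reducing the goal to $\length(\sqnsuc^{n_k}(\sqnini_\ue^\ID)) \peq \length(\sqnini_\ue^\ID)$ for each leaf; rewriting every leaf to the common value $\length(\sqnini_\ue^\ID)$ via $R$ and then collapsing the context with repeated $\ite{b}{x}{x} \peq x$ finishes. The remaining per-leaf fact, $\length(\sqnsuc^{n}(\sqnini_\ue^\ID)) \peq \length(\sqnini_\ue^\ID)$ for all $n$, is obtained from the $\axioms_{\textsf{len}}$ axioms guaranteeing that $\length$ is preserved by the by-one increment $\sqnsuc$ (together with $\sqnsuc(u)\peq u+1$ from $\axioms_\sqn$). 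This propagation of length-preservation through an unbounded number of sequence-number increments is the only genuinely delicate point; everything else is a finite case analysis over the action kinds plus routine if-context manipulation already packaged in $\axioms_{\textsf{ite}}$.
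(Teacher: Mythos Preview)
Your proof is correct and follows essentially the same approach as the paper: establish by induction that $\cstate_\tau(\sqn_\ue^\ID)$ is an if-context whose leaves are iterates $\sqnsuc^{k}(\sqnini_\ue^\ID)$, then use the $\axioms_{\textsf{len}}$ axioms to equate all leaf lengths and collapse via $R$. Your intermediate reduction to $\length(\sqnini_\ue^\ID)$ is a slight organizational difference, but the structural invariant and the per-leaf length argument are exactly the paper's.
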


\begin{proof}
  It is easy to show by induction over $\tau$ that for every $\ID \in \iddom$, there exists an if-context $C$, terms $\vec b$ and integers $(k_i)_i$ such that:
  \[
    \instate_{\tau}(\sqn_\ue^{\ID}) =
    C[\vec b \diamond (\sqnsuc^{k_i}(\sqnini_\ue^\ID))_i)]
  \]
  Therefore, let $C_1,C_2$, $\vec b_1,\vec b_2$ and $(k^1_i)_i,(k^2_j)_j$ be such that:
  \begin{mathpar}
    \instate_{\tau}(\sqn_\ue^{\ID_1}) =
    C_1[\vec b_1 \diamond (\sqnsuc^{k^1_i}(\sqnini_\ue^{\ID_1}))_i)]

    \instate_{\tau}(\sqn_\ue^{\ID_2})) =
    C_2[\vec b_2 \diamond (\sqnsuc^{k^2_j}(\sqnini_\ue^{\ID_2}))_j)]
  \end{mathpar}
  Moreover, it is trivial to show using the axioms in $\axioms_{\textsf{len}}$ that for every $i,i',j,j'$:
  \[
    \length(\sqnsuc^{k^1_i}(\sqnini_\ue^{\ID_1})) =
    \length(\sqnsuc^{k^1_{i'}}(\sqnini_\ue^{\ID_1})) =
    \length(\sqnsuc^{k^2_j}(\sqnini_\ue^{\ID_2})) =
    \length(\sqnsuc^{k^2_{j'}}(\sqnini_\ue^{\ID_2}))
  \]
  It is then easy, using $R$, to get a derivation of:
  \[
    \unary{\length(\instate_{\tau}(\sqn_\ue^{\ID_1}))
      =
      \length(\instate_{\tau}(\sqn_\ue^{\ID_2}))}
    \qedhere
  \]
\end{proof}

The following proposition states that $\nonce_\hn$ appears only in the $\thn$ public key $\pk(\nonce_\hn)$ and secret key $\sk(\nonce_\hn)$, and that for every $\ID \in \iddom$, the keys $\key^\ID$ and $\mkey^\ID$ appear only in key position in $\macsym^1$--$\,\macsym^5$. These properties will be useful to apply the cryptographic axioms later.
\begin{proposition}[Invariant \textsc{(inv-key)}]
  For all valid symbolic trace $\tau$, we have:
  \begin{alignat*}{2}
    &&&\nonce_\hn \tpos_{\pk(\cdot),\sk(\cdot)} \cframe_\tau
    \;\wedge\;
    \sk(\nonce_\hn) \tpos_{\dec(\_,\cdot)} \cframe_\tau\\
    &\forall 1 \le i \le N,\quad&&
    \mkey^{\ID_i} \tpos_{\mac{\_}{\cdot}{\_}} \cframe_\tau\\
    % \;\wedge\;
    % \taggedmac(\mkey^i;\cframe_\tau)\\
    &\forall 1 \le i \le N,\quad&&
    \key^{\ID_i} \tpos_{\ow{\_}{\cdot},\row{\_}{\cdot}} \cframe_\tau
  \end{alignat*}
\end{proposition}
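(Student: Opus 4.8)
The plan is to prove the three containments simultaneously by induction on the length of the valid symbolic trace $\tau$, after strengthening the statement so that the induction hypothesis also controls the terms stored in the symbolic state. Concretely, I would show that for every valid symbolic trace $\tau$, not only $\cframe_\tau$ but also every term in the image of $\cstate_\tau$ (that is, $\cstate_\tau(\sfx)$ for each $\sfx \in \vardom$) satisfies: $\nonce_\hn$ occurs only under $\pk(\cdot)$ or $\sk(\cdot)$, $\sk(\nonce_\hn)$ occurs only in decryption-key position $\dec(\_,\cdot)$, each $\mkey^{\ID_i}$ occurs only as a $\macsym^{\_}$ key, and each $\key^{\ID_i}$ occurs only as an $\owsym$ or $\rowsym$ key. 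This strengthening is necessary because the message term $t_\tau$ and the update $\upstate_\tau$ of an action are built by substituting look-ups $\instate_\tau(\sfx)$, so we need the invariant on states to feed the invariant on frames.

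For the base case $\tau = \epsilon$, we have $\cframe_\epsilon \equiv \pk_\hn \equiv \pk(\nonce_\hn)$, and by Definition~\ref{def:init-sigma-phi} every $\cstate_\epsilon(\sfx)$ is one of $\sqnini_\ue^\ID$, $\sqnini_\hn^\ID$, $\unset$, $\fail$, $\false$, none of which mentions $\nonce_\hn$, $\mkey^{\ID_i}$ or $\key^{\ID_i}$, so all four containments hold. For the inductive step write $\tau = \tauo, \ai$; by induction the strengthened invariant holds for $\inframe_\tau = \cframe_\tauo$ and for $\instate_\tau = \cstate_\tauo$. Since $\cframe_\tau \equiv \inframe_\tau, t_\tau$ and $\cstate_\tau$ differs from $\instate_\tau$ only on $\dom(\upstate_\tau)$, it suffices to check that $t_\tau$ and every term in the image of $\upstate_\tau$ satisfy the four containments. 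This I would do by a case analysis on $\ai$, reading $t_\tau$ and $\upstate_\tau$ off Fig.~\ref{fig:protocol-term-supi} and Fig.~\ref{fig:protocol-guti-refresh} (plus the extended updates for $\ns_\ID(j)$ and $\npuai{2}{\ID}{j}$). In every case the relevant term is assembled, using function symbols other than the forbidden positions, only from: adversarial subterms $g(\inframe_\tau)$ with $g \in \mathcal{G}$, which by definition of the logic are not names and thus introduce no occurrence of $\nonce_\hn$, $\mkey^{\ID_i}$ or $\key^{\ID_i}$ beyond those already inside $\inframe_\tau$ (controlled by the IH); state look-ups $\instate_\tau(\sfx)$ (controlled by the strengthened IH); the constant $\pk_\hn \equiv \pk(\nonce_\hn)$ and fresh session names $\nonce^j$, $\enonce^j$, $\guti^j$, which do not involve the keys in question; applications $\mac{\_}{\mkey^\ID}{k}$, $\ow{\_}{\key^\ID}$, $\row{\_}{\key^\ID}$, which use their keys only in key position; encryptions $\enc{\_}{\pk_\hn}{\enonce^j}$, which put $\nonce_\hn$ only under $\pk(\cdot)$; and the single decryption $\dec(\pi_1(g(\inframe_\tau)),\sk_\hn)$ occurring in $\pnai(j,1)$, which puts $\sk(\nonce_\hn)$ only in decryption-key position. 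As $\tpos$ is preserved when one builds a larger term from subterms that each satisfy the containment, provided the added function symbols are not the forbidden contexts, the invariant propagates.

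The main difficulty is not conceptual but a careful and exhaustive bookkeeping of the cases: in $\pnai(j,1)$ and $\cnai(j,1)$ the update touches $\sqn_\hn^{\ID_i}$, $\suci_\hn^{\ID_i}$ and $\tsuccess_\hn^{\ID_i}$ for \emph{all} $i$, and one must confirm the new values are built only from $\instate_\tau$ look-ups, $\sqnsuc(\cdot)$, projections/decryptions of $g(\inframe_\tau)$ and the fresh names $\nonce^j$, $\guti^j$; in the $\guti$ message and the $\refresh$ message the masked sequence number and masked $\guti$ combine a state look-up with $\ow{\nonce^j}{\key^\ID}$ or $\row{\nonce^j}{\key^\ID}$, so one must check $\key^\ID$ sits only in the mask's key slot; the decoy message to $\IDdum$ in $\cnai(j,0)$ and the extended updates of $\sync_\ue^\ID$ must be included as well. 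Once every action is dispatched, the conclusion for $\cframe_\tau$ follows (the state component of the strengthened hypothesis being retained for the next induction step).
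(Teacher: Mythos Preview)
Your proposal is correct and follows the same approach as the paper, which simply writes ``The proof is straightforward by induction on $\tau$.'' Your explicit strengthening to carry the invariant on the state $\cstate_\tau$ alongside the frame $\cframe_\tau$ is exactly what makes that induction go through, and your case analysis over the actions in Fig.~\ref{fig:protocol-term-supi} and Fig.~\ref{fig:protocol-guti-refresh} is the bookkeeping the paper leaves implicit.
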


\begin{proof}
  The proof is straightforward by induction on $\tau$.  
\end{proof}

\begin{proposition}
  \label{prop:unset-prop}
  For every valid symbolic trace $\tau$, for every $\tautt \potau \taui$ and identity $\ID \in \iddom$, we have:
  \[
    \left(
      \cstate_\tautt(\suci_\ue^\ID) = \unset \wedge
      \bigwedge_{\taut = \_,\fuai_\ID(j_1)\atop{\tautt \potau \taut \potau \taui}}
      \neg \accept_{\taut}^{\ID}
    \right)
    \ra
    \instate_\taui(\suci_\ue^\ID) = \unset
  \]
\end{proposition}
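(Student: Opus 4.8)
Writing $\taui = \tauo,\ai_0$ with $\ai_0 = \sthead(\taui)$, recall $\instate_\taui = \cstate_{\tauo}$, and note that $\tautt \potau \taui$ is equivalent to $\tautt \popreleq \tauo$, while $\tautt \potau \taut \potau \taui$ with $\sthead(\taut) = \fuai_\ID(j_1)$ is equivalent to $\tautt \popre \taut \popreleq \tauo$ with $\sthead(\taut) = \fuai_\ID(j_1)$. Hence it suffices to prove, for every prefix $\taua$ of $\tau$ and every prefix $\tautt \popreleq \taua$:
\[
  \Big(\cstate_\tautt(\suci_\ue^\ID) = \unset
  \;\wedge\;
  \bigwedge_{\substack{\taut = \_,\fuai_\ID(j_1)\\ \tautt \popre \taut \popreleq \taua}} \neg \accept_\taut^\ID\Big)
  \;\ra\;
  \cstate_\taua(\suci_\ue^\ID) = \unset ,
\]
and then instantiate $\taua := \tauo$.

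\textbf{Induction.} I would prove this claim by induction on $|\taua| - |\tautt|$. If $|\taua| = |\tautt|$ then $\taua = \tautt$ and the conclusion is the first conjunct of the premise. Otherwise write $\taua = \taua_0,\ai$ with $\ai = \sthead(\taua)$; then $\tautt \popreleq \taua_0$, and since the conjunctive premise for the pair $(\tautt,\taua_0)$ is a sub-conjunction of the one for $(\tautt,\taua)$, the induction hypothesis yields $\cstate_{\taua_0}(\suci_\ue^\ID) = \unset$. Now by definition of $\cstate_\taua$, either $\suci_\ue^\ID \notin \dom(\upstate_\taua)$, in which case $\cstate_\taua(\suci_\ue^\ID) = \instate_\taua(\suci_\ue^\ID) = \cstate_{\taua_0}(\suci_\ue^\ID) = \unset$ and we are done, or $\suci_\ue^\ID \in \dom(\upstate_\taua)$ and $\cstate_\taua(\suci_\ue^\ID) = \upstate_\taua(\suci_\ue^\ID)$, so it remains to inspect the actions $\ai$ that actually write to $\suci_\ue^\ID$.

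\textbf{Case analysis on $\ai$.} By inspection of Fig.~\ref{fig:protocol-term-supi} and Fig.~\ref{fig:protocol-guti-refresh} (and the state-update extensions in the appendix), the variable $\suci_\ue^\ID$ belongs to $\dom(\upstate_\taua)$ only when $\ai = \npuai{1}{\ID}{j}$ or $\ai = \fuai_\ID(j)$ for some $j$: all $\thn$ actions write to $\suci_\hn^{\ID_i}$ rather than $\suci_\ue^\ID$, the actions $\cuai_\ID(\cdot,\cdot)$, $\npuai{2}{\ID}{\cdot}$, $\newsession_\ID(\cdot)$ do not touch $\suci_\ue^\ID$, and the $\supi$/$\guti$/$\refresh$ actions of any other identity write only to $\suci_\ue^{\ID'}$ with $\ID' \ne \ID$; all of these therefore fall in the "otherwise" case above. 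If $\ai = \npuai{1}{\ID}{j}$ then $\upstate_\taua(\suci_\ue^\ID) \equiv \unset$ directly. If $\ai = \fuai_\ID(j)$ then $\taua$ is itself of the form $\_,\fuai_\ID(j)$ with $\tautt \popre \taua \popreleq \taua$, so $\neg\accept_\taua^\ID$ is one of the conjuncts of the premise, i.e.\ $\accept_\taua^\ID \peq \false$; since $\upstate_\taua(\suci_\ue^\ID) \equiv \ite{\accept_\taua^\ID}{t_\guti}{\unset}$, the $R$ rule together with the $\symite$ axiom $\ite{\false}{x}{y} \peq y$ gives $\cstate_\taua(\suci_\ue^\ID) \peq \unset$. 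This closes the induction.

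\textbf{Main obstacle.} There is no conceptual difficulty here: the whole argument is bookkeeping. The only points needing care are the exhaustive check that no action other than $\npuai{1}{\ID}{\cdot}$ and $\fuai_\ID(\cdot)$ ever modifies $\suci_\ue^\ID$ (so that the "otherwise" branch really covers everything else), and keeping the prefix-ordering indices consistent — in particular recognising $\taua$ itself as an index of the big conjunction in the $\fuai_\ID$ case so as to extract $\neg\accept_\taua^\ID$.
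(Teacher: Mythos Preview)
Your proof is correct and is precisely the detailed unfolding of the paper's one-line ``straightforward by induction on $\taui$'': the same induction, the same case analysis on the last action, and the same observation that only $\npuai{1}{\ID}{\cdot}$ and $\fuai_\ID(\cdot)$ write to $\suci_\ue^\ID$. The only cosmetic slip is writing ``$\accept_\taua^\ID \peq \false$'' when you mean ``under the premise, $\accept_\taua^\ID$ rewrites to $\false$'', but your surrounding text makes clear you understand you are reasoning under the implication.
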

\begin{proof}
  The proof is straightforward by induction on $\taui$.
\end{proof}

We now state several simple properties of our system.
\begin{proposition}
  \label{prop:invs}
  Let $\tau = \_,\ai$ be a valid symbolic trace, then:
  \begin{enumerate}
  \item \customlabel{a1}{\lpa{1}} If $\neg \sessionstarted_j(\tau)$ then $n^j \not \in \st(\cframe_\tau)$.
    % (P2)
  \item \customlabel{a2}{\lpa{2}} For all $\tauo = \_,\npuai{2}{\ID}{j_0} \popreleq \tau$ and $\taut = \_,\npuai{2}{\ID}{j_1} \popreleq \tau$, if $\tauo \ne \taut$ then:
    \[
      \instate_\tauo(\sqn_\ue^\ID) \ne \instate_\taut(\sqn_\ue^\ID)
    \]
    % (P9)
  \item \customlabel{a3}{\lpa{3}} For every $\tauo = \_, \npuai{2}{\ID}{j_0}$, $\taut = \_, \npuai{1}{\ID}{j_1}$ such that $\taut \potau \tauo$, if $j_0 \ne j_1$ then:
    \[
      \instate_\tauo(\sqn_\ue^\ID) \ne \sqnsuc(\instate_\taut(\sqn_\ue^\ID))
    \]
    % (P9b)
  \item \customlabel{a4}{\lpa{4}} For all $\ID_0 \ne \ID_1$,
    \[
      \left(
        \instate_\tau(\suci_\hn^{\ID_0}) \ne \unset
        \wedge
        \instate_\tau(\suci_\hn^{\ID_1}) \ne \unset
      \right)
      \;\ra\;
      \instate_\tau(\suci_\hn^{\ID_0}) \ne
      \instate_\tau(\suci_\hn^{\ID_1})
    \]
    % (P15)
  \item \customlabel{a5}{\lpa{5}}, \customlabel{a6}{\lpa{6}}, \customlabel{a7}{\lpa{7}} If $\sthead(\tau) = \pnai(j,1), \cnai(j,0)$ or $\cnai(j,1)$, then for every $\ID_0 \ne \ID_1$,
    \[
      \big(\neg \accept_\tau^{\ID_0}\big) \vee \big(\neg \accept_\tau^{\ID_1}\big)
    \]
    % (P5), (P5.5), (P6)
  \item \customlabel{a8}{\lpa{8}} For every $\ID \in \iddom, j \in \mathbb{N}$, $\instate_\tau(\eauth_\ue^\ID) = \nonce^j  \ra \instate_\tau(\bauth_\ue^\ID) = \nonce^j$.
    % (P16)
  \end{enumerate}
\end{proposition}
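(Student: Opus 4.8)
The plan is to prove all eight items together by one induction on the length of the valid symbolic trace $\tau$ (invoking Assumption~\ref{ass:support-valid}), strengthening the statement with the auxiliary state invariants the induction needs. Two items cost essentially nothing: for \lpa{5}, at $\pnai(j,1)$ the conditional $\accept_\tau^{\ID_i}$ contains the conjunct $\eq{\pi_1(t_{\textsf{dec}})}{\ID_i}$ for the fixed ground term $t_{\textsf{dec}} \equiv \dec(\pi_1(g(\inframe_\tau)),\sk_\hn)$, so $\accept_\tau^{\ID_0} \wedge \accept_\tau^{\ID_1}$ forces $\eq{\pi_1(t_{\textsf{dec}})}{\ID_0}$ and $\eq{\pi_1(t_{\textsf{dec}})}{\ID_1}$, hence, rewriting under the guard with $R$, $\eq{\ID_0}{\ID_1}$, which contradicts the disequality axiom $\textsf{$\ne$-Const}$; so $\accept_\tau^{\ID_0} \wedge \accept_\tau^{\ID_1} \peq \false$, i.e. $(\neg \accept_\tau^{\ID_0}) \vee (\neg \accept_\tau^{\ID_1})$ holds. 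Item \lpa{7} is identical, using the conjunct $\eq{\instate_\tau(\bauth_\hn^j)}{\ID_i}$ of $\accept_\tau^{\ID_i}$ at $\cnai(j,1)$. For \lpa{6}, the conjunct of $\accept_\tau^{\ID_i}$ at $\cnai(j,0)$ is $\eq{\instate_\tau(\suci_\hn^{\ID_i})}{g(\inframe_\tau)} \wedge \neg\eq{\instate_\tau(\suci_\hn^{\ID_i})}{\unset}$, so two simultaneously accepting distinct identities would make $\instate_\tau(\suci_\hn^{\ID_0})$ and $\instate_\tau(\suci_\hn^{\ID_1})$ both non-$\unset$ and equal, contradicting \lpa{4}; hence \lpa{6} reduces to \lpa{4}.

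For \lpa{4} the invariant I would carry is: for every $\ID$, $\instate_\tau(\suci_\hn^\ID)$ is either $\unset$ or syntactically a nonce $\suci^{j}$, and the partial map taking each $\ID$ with $\instate_\tau(\suci_\hn^\ID) \ne \unset$ to the corresponding session index $j$ is injective. Preservation is only checked at the writers of $\suci_\hn^{\cdot}$: $\cnai(j,0)$ only resets some entries to $\unset$; and $\pnai(j,1)$ (resp. $\cnai(j,1)$) writes the fresh nonce $\suci^{j}$ into $\suci_\hn^{\ID_i}$ only under the guard $\incaccept_\tau^{\ID_i}$, which (as $\incaccept_\tau^{\ID_i} \ra \accept_\tau^{\ID_i}$, together with the already-proved \lpa{5}, resp.\ \lpa{7}) holds for at most one $\ID_i$, while $\suci^{j}$ does not occur anywhere in the current state — a freshness fact for $\suci^j$ proved exactly like \lpa{1} below, since $\suci^j$ is introduced only by session-$j$ actions and $\pnai(j,1),\cnai(j,1),\fnai(j)$ are the only ones mentioning it, with $\fnai(j)$ occurring last. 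Distinct non-$\unset$ GUTIs are then distinct nonces, hence provably unequal via $\ax{EQIndep}$, which is \lpa{4}.

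Item \lpa{1} is a simultaneous induction on the frame and the state, strengthened to ``$\neg\sessionstarted_j(\tau)$ implies $\nonce^j \notin \st(\cframe_\tau)$ and $\nonce^j \notin \st(\cstate_\tau(\sfx))$ for every $\sfx \in \vardom$''. The only action identifiers whose message or state update can syntactically contain $\nonce^j$ are the actions of session $\thn(j)$ (namely $\pnai(j,\_),\cnai(j,\_),\fnai(j)$), whose presence in $\tau$ is exactly $\sessionstarted_j(\tau)$; every other action builds its output from $\nonce^{j'}$ with $j'\ne j$, from $\suci^{j'}$, and from pieces of $\inframe_\tau$ and $\instate_\tau$, so the induction hypothesis carries $\nonce^j$-freshness through both the frame and the update. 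Item \lpa{8} is an easy induction by inspecting the four actions that write $\eauth_\ue^\ID$ or $\bauth_\ue^\ID$: $\npuai{1}{\ID}{j}$ and $\cuai_\ID(j,0)$ set $\eauth_\ue^\ID$ to $\fail$, making the premise provably false ($\ax{EQIndep}$); $\cuai_\ID(j,1)$ assigns the \emph{same} term to both; and $\npuai{2}{\ID}{j}$ sets $\eauth_\ue^\ID$ to $\ite{\accept_\tau^\ID}{\instate_\tau(\bauth_\ue^\ID)}{\fail}$ while leaving $\bauth_\ue^\ID$ untouched, so a case split on $\accept_\tau^\ID$ closes the goal from the induction hypothesis; all other actions leave both variables unchanged.

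The substantive part is \lpa{2} and \lpa{3}. I would derive them from a monotonicity invariant on $\sqn_\ue^\ID$: along prefixes $\tau'\popreleq\tau''$ the value $\instate_{\tau'}(\sqn_\ue^\ID)$ is non-decreasing, and it strictly increases at every $\npuai{1}{\ID}{\_}$ (the assignment $\sqn_\ue^\ID\mapsto\sqnsuc(\cdot)$ there is unconditional, whereas the only other writer $\cuai_\ID(\_,1)$ increments in its \textsf{then} branch and leaves the value unchanged otherwise). Concretely, after pulling out all conditionals with the $\axioms_{\textsf{ite}}$ axioms, each $\instate_{\tau'}(\sqn_\ue^\ID)$ is an if-context over leaves $\sqnini_\ue^\ID + k$ with $k\in\mathbb{N}$ the number of increments on that branch, and the required disequalities reduce branch-by-branch to instances of linear integer arithmetic discharged by the Presburger axiom in $\axioms_{\sqn}$. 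To locate the decisive increment I would use validity of $\tau$: for \lpa{2}, taking w.l.o.g.\ $\tauo\popre\taut$, distinctness of action identifiers and non-interleaving of the sessions of $\tue_\ID$ force session $j_0$ to complete before session $j_1$ begins, so $\npuai{1}{\ID}{j_1}$ lies strictly between $\tauo$ and $\taut$, giving $\instate_{\taut}(\sqn_\ue^\ID) \ge \sqnsuc(\instate_{\tauo}(\sqn_\ue^\ID))$; for \lpa{3}, $\taut\potau\tauo$ with $j_0\ne j_1$ similarly forces session $j_1$ to precede session $j_0$, so $\npuai{1}{\ID}{j_0}$ lies strictly between $\npuai{1}{\ID}{j_1}$ and $\npuai{2}{\ID}{j_0}$, giving $\instate_{\tauo}(\sqn_\ue^\ID) \ge \sqnsuc(\sqnsuc(\instate_{\taut}(\sqn_\ue^\ID)))$, whence the disequality. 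The main obstacle I anticipate is exactly this last step: translating the automata-based notion of valid trace (Fig.~\ref{fig:as-ts}) into the precise combinatorial statement about which $\npuai{1}{\ID}{\_}$ actions sit strictly between two given prefixes, and then threading the branch-wise increment count correctly so that the arithmetic axiom applies uniformly across all conditional branches.
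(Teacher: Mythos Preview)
Your proposal is correct and follows essentially the same structure as the paper's proof: induction for \lpa{1}; the strict increase of $\sqn_\ue^\ID$ at $\npuai{1}{\ID}{\_}$ for \lpa{2} and \lpa{3}; the shape-and-injectivity invariant on $\suci_\hn^{\ID}$ for \lpa{4}; reducing \lpa{6} to \lpa{4}; and the update-by-update inspection for \lpa{8}.

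One small difference worth noting: for \lpa{5} and \lpa{7} the paper's sketch invokes the unforgeability axioms $\textsc{euf-mac}$, whereas you exploit directly the identity conjuncts $\eq{\pi_1(t_{\textsf{dec}})}{\ID_i}$ (at $\pnai(j,1)$) and $\eq{\instate_\tau(\bauth_\hn^j)}{\ID_i}$ (at $\cnai(j,1)$), deriving $\eq{\ID_0}{\ID_1}$ and closing with $\textsf{$\ne$-Const}$. Your route is more elementary and avoids the cryptographic axioms entirely for these two items; the paper's wording is arguably imprecise here, since the accept conditions carry two conjuncts and the identity conjunct alone suffices.
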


\begin{proof}
  \begin{itemize}
    All these properties are simple to show:
  \item \ref{a1} is trivial by induction over $\tau$.

  \item \ref{a2} and \ref{a3} both follow from the fact that if $\tau = \_,\npuai{1}{\ID}{j}$ then $\cstate_\tau(\sqn_\ue^\ID) \equiv \sqnsuc(\instate_\tau(\sqn_\ue^\ID))$, and therefore $\cstate_\tau(\sqn_\ue^\ID) > \instate_\tau(\sqn_\ue^\ID)$.

  \item \ref{a5} and \ref{a7} follow easily from the unforgeability axioms $\textsc{euf-mac}$.

  \item To prove \ref{a4}, we first observe that for every $\ID \in \iddom$, we initially have $\cstate_\epsilon(\suci_\hn^{\ID}) \equiv \unset$, and that the only value we store in $\suci_\hn^\ID$ are $\unset$ or $\guti^i$ for some $i \in \mathbb{N}$. Therefore it is easy to show that for every $\taun \popre \tau$:
    \[
      \instate_\taun(\suci_\hn^{\ID}) \ne \unset
      \ra
      \bigvee_{i \in \mathbb{N}} \instate_\taun(\suci_\hn^{\ID}) = \suci^i
    \]
    Moreover, we can only store $\suci^i$ in $\suci_\hn^{\ID}$ at $\pnai(i,1)$ or $\cnai(i,1)$, and by validity $\tau$ cannot contain both $\pnai(i,1)$ and $\cnai(i,1)$. We conclude observing that we cannot have $\accept_\taun^{\ID_0}$ and $\accept_\taun^{\ID_1}$ if $\taun = \_,\pnai(i,1)$ or $\_,\cnai(i,1)$ using \ref{a5} and \ref{a7}. The result follows.

  \item \ref{a6} is a consequence of \ref{a4}.
    
  \item \ref{a8} follows from the fact that whenever a new session of the protocol is started, we reset both $\bauth_\ue^\ID$ and $\eauth_\ue^\ID$. Then $\eauth_\ue^\ID$ is either set to $\fail$ or to $\bauth_\ue^\ID$.
    \qedhere
  \end{itemize}
\end{proof}

We can now state and prove our first acceptance necessary conditions.
\begin{lemma}
  \label{lem:accept-charac}
  Let $\tau = \_,\ai$ be a valid symbolic trace, then:
  \begin{enumerate}
  \item \customlabel{acc1}{\lpacc{1}} If $\ai = \pnai(j,1)$, then for every $\ID$ we have:
    \[
      \accept_\tau^\ID \ra
      \bigvee_{
        \tauo = \_,\npuai{1}{\ID}{j_0} \popre \tau}
      \left(
        \pi_1(g(\inframe_\tau)) =
        \enc{\pair
          {\ID}
          {\instate_{\tauo}(\sqn_\ue^\ID)}}
        {\pk_\hn}{\enonce^{j_0}}
        \wedge
        g(\inframe_{\tauo}) = \nonce^j
      \right)
    \]
    % (P3)
  \item \customlabel{acc2}{\lpacc{2}} If $\ai = \npuai{2}{\ID}{j}$. Let $\taut = \_, \npuai{1}{\ID}{j} \popre \tau$. Then:
    % (P10)
    \begin{center}
      \begin{tikzpicture}
        [dn/.style={inner sep=0.2em,fill=black,shape=circle},
        sdn/.style={inner sep=0.15em,fill=white,draw,solid,shape=circle},
        sl/.style={decorate,decoration={snake,amplitude=1.6}},
        dl/.style={dashed},
        pin distance=0.5em,
        every pin edge/.style={thin}]

        \draw[thick] (0,0)
        node[left=1.3em] {$\tau:$}
        -- ++(0.5,0)
        node[dn,pin={above:{$\npuai{1}{\ID}{j}$}}] {}
        node[below=0.3em]{$\taut$}
        -- ++(2.5,0)
        node[dn,pin={above:{$\pnai(j_0,1)$}}] {}
        node[below=0.3em]{$\tauo$}
        -- ++(2.5,0)
        node[dn,pin={above:{$\npuai{2}{\ID}{j}$}}] {}
        node[below=0.3em]{$\tau$};
      \end{tikzpicture}
    \end{center}
    \[
      \accept_\tau^\ID \;\ra\;
      \bigvee_{\tauo = \_, \pnai(j_0,1)
        \atop{\taut \potau \tauo}}
      \accept_\tauo^\ID \;\wedge\;
      g(\inframe_\taut) = \nonce^{j_0} \;\wedge\;
      \pi_1(g(\inframe_\tauo)) =
      \enc{
        \spair{\ID}
        {\instate_\taut(\sqn_\ue^\ID)}}
      {\pk_\hn}{\enonce^j}
    \]

  \item \customlabel{acc3}{\lpacc{3}} If $\ai = \cuai_\ID(j,1)$ then:
    \[
      \accept_\tau^\ID \;\ra\;
      \bigvee_{\tauo = \_, \cnai(j_0,0)
        \atop{\tauo \popre \tau}}
      \left(
        \begin{alignedat}{1}
          &\accept_\tauo^\ID \wedge
          \pi_1(g(\inframe_\tau)) = \nonce^{j_0}
          \wedge
          \pi_2(g(\inframe_\tau)) =
          \instate_\tauo(\sqn_\hn^\ID) \oplus \ow{\nonce^{j_0}}{\key}\\
          &\wedge
          \instate_\tau(\suci_\ue^\ID) = \instate_\tauo(\suci_\hn^\ID)
        \end{alignedat}
      \right)
    \]
    
  \item \customlabel{acc4}{\lpacc{4}} If $\ai = \cnai(j,1)$ then:
    \[
      \accept_\tau^\ID
      \;\ra\;
      \bigvee_{\tauo = \_,\cuai_\ID(\_,1) \popre \tau}
      \accept^\ID_\tauo \wedge \pi_1(g(\inframe_\tauo)) = \nonce^j
    \]
  \end{enumerate}
\end{lemma}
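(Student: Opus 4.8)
The four items \ref{acc1}--\ref{acc4} are proved by the same recipe, and \ref{acc1} is in fact the statement already sketched as Lemma~\ref{lem:acc-cond-body}. In each case $\accept_\tau^\ID$ contains a conjunct of the shape ``$X = \mac{Y}{\mkey^\ID}{j}$'' for the relevant tag $j\in\{1,2,3,4\}$, and the plan is to strip this conjunct with the unforgeability axioms. I would invoke the partitioned form $\textsc{p-euf-mac}^j$ (Proposition~\ref{prop:p-comp-euf-mac-valid-app}), not the plain $\textsc{euf-mac}^j$, with a case split over the acceptance predicates of the earlier actions: this is exactly what lets one \emph{(i)} discard the ``verification'' macs that occur only in conditional positions in the frame (e.g. the $\macsym^1$-mac recomputed inside $\accept_\cdot^{\ID_i}$ in the $\pnai(\cdot,1)$ term of Fig.~\ref{fig:protocol-term-supi}), and \emph{(ii)} retain, as a guard on each surviving disjunct, the acceptance condition of the session that honestly emitted the mac --- precisely the guard $\accept_\tauo^\ID$ appearing in the conclusions of \ref{acc2}--\ref{acc4}. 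The side condition $\mkey^\ID \tpos_{\mac{\_}{\cdot}{\_}} \cframe_\tau$ needed for these axioms is the key-position invariant \textsc{(inv-key)}. Once reduced to ``$Y$ equals one honestly produced message $Y'$'', I would apply $\textsc{cr}^j$ together with the pair/triple injectivity axioms of $\axioms_{\textsf{inj}}$ to extract componentwise equalities, and then prune the impossible disjuncts using trace validity (Assumption~\ref{ass:support-valid} and the automata of Fig.~\ref{fig:as-ts}: each network action identifier occurs at most once, and $\tue_\ID$ runs at most one session at a time) and the disequality axiom $\ax{EQIndep}$ (distinct session nonces, and distinct encryption randomnesses, are unequal).

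Concretely, \ref{acc1} is the argument of Lemma~\ref{lem:acc-cond-body}: the only $\macsym^1$-macs under $\mkey^\ID$ outside conditionals are the first components $\mac{\spair{t^{\textsf{enc}}_{\taut}}{g(\inframe_{\taut})}}{\mkey^\ID}{1}$ of the $\npuai{1}{\ID}{\_}$ messages, so pair injectivity and $\textsc{cr}^1$ give $\pi_1(g(\inframe_\tau)) = t^{\textsf{enc}}_{\taut}$ and $g(\inframe_{\taut}) = \nonce^j$, which is \ref{acc1} after unfolding $t^{\textsf{enc}}_{\taut}$. Item \ref{acc4} is the easiest: $\macsym^4$ is applied to a single value, so $\textsc{p-euf-mac}^4$ followed directly by $\textsc{cr}^4$ on the confirmation mac $\mac{\pi_1(g(\inframe_{\tauo}))}{\mkey^\ID}{4}$ emitted at $\cuai_\ID(\_,1)$ yields $\nonce^j = \pi_1(g(\inframe_{\tauo}))$ with $\accept_\tauo^\ID$ kept by the partition; no injectivity is needed. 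For \ref{acc3} the same pattern with $\textsc{p-euf-mac}^3$ and triple injectivity applies: the honest $\macsym^3$-macs under $\mkey^\ID$ are the third components $\mac{\striplet{\nonce^{j_0}}{\instate_{\tauo}(\sqn_\hn^\ID)}{\instate_{\tauo}(\suci_\hn^\ID)}}{\mkey^\ID}{3}$ produced at $\cnai(j_0,0)$, so $\textsc{cr}^3$ gives $\pi_1(g(\inframe_\tau)) = \nonce^{j_0}$, $t_\sqn = \instate_{\tauo}(\sqn_\hn^\ID)$ and $\instate_\tau(\suci_\ue^\ID) = \instate_{\tauo}(\suci_\hn^\ID)$; substituting the first equality into $t_\sqn \equiv \pi_2(g(\inframe_\tau)) \xor \ow{\pi_1(g(\inframe_\tau))}{\key^\ID}$ and cancelling the $\xor$ yields $\pi_2(g(\inframe_\tau)) = \instate_{\tauo}(\sqn_\hn^\ID) \xor \ow{\nonce^{j_0}}{\key^\ID}$, i.e. \ref{acc3}.

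Item \ref{acc2} is the most involved: validity forces the action just before $\npuai{2}{\ID}{j}$ in session $\tue_\ID(j)$ to be $\taut = \_,\npuai{1}{\ID}{j}$, which set $\bauth_\ue^\ID \mapsto g(\inframe_{\taut})$ and $\sqn_\ue^\ID \mapsto \sqnsuc(\instate_{\taut}(\sqn_\ue^\ID))$, both unchanged at $\tau$ as $\tue_\ID$ is single-session; $\textsc{p-euf-mac}^2$ on $\accept_\tau^\ID$ and $\textsc{cr}^2$ with pair injectivity then give, for some $\tauo = \_,\pnai(j_0,1) \potau \tau$ with $\accept_\tauo^\ID$ holding, the equalities $g(\inframe_{\taut}) = \nonce^{j_0}$ and (via the $\sqnsuc$ and arithmetic axioms of $\axioms_{\sqn}$) $\instate_{\taut}(\sqn_\ue^\ID) = \pi_2(t_{\textsf{dec}})$. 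To pin $\pi_1(g(\inframe_{\tauo}))$ down to the honest ciphertext of $\taut$ I would apply \ref{acc1} itself to $\tauo$, obtaining an honest ciphertext $\enc{\spair{\ID}{\instate_{\tauu}(\sqn_\ue^\ID)}}{\pk_\hn}{\enonce^{j_1}}$ for some $\tauu = \_,\npuai{1}{\ID}{j_1} \popre \tauo$, and use the decryption axiom together with the strict monotonicity of $\sqn_\ue^\ID$ along $\npuai{1}{\ID}{\_}$ actions (analogous to Proposition~\ref{prop:invs}) to force $\tauu = \taut$; this also gives $\taut \potau \tauo$ and hence the stated disjunct.

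The work is almost entirely bookkeeping rather than any deep step: reading off from Fig.~\ref{fig:protocol-term-supi} and Fig.~\ref{fig:protocol-guti-refresh} exactly which honest $\macsym^j$-subterms occur in the frame and which sit in conditional position, so that the partitioned $\textsc{p-euf-mac}^j$ discards the right ones and keeps the right guards; and tracking, via the automata $\mathcal{Q}_\ue^\ID$, $\mathcal{Q}_\hn^j$ of Fig.~\ref{fig:as-ts} and Assumption~\ref{ass:support-valid}, which state components are untouched between the prefix traces in play. The one genuine subtlety is \ref{acc2}'s reuse of \ref{acc1} plus sequence-number monotonicity to identify the honest ciphertext; the arithmetic, xor, and nonce-disequality steps are routine given $\axioms_{\sqn}$, $\axioms_{\textsf{impl}}$ and $\ax{EQIndep}$.
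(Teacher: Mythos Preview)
Your proposal is correct and matches the paper's approach: apply the (partitioned) unforgeability axiom to the mac equality inside $\accept_\tau^\ID$, then use $\textsc{cr}^j$ and pair/triple injectivity to extract componentwise equalities, pruning spurious disjuncts via trace validity and $\ax{EQIndep}$; the recursive use of \ref{acc1} together with sequence-number monotonicity for \ref{acc2} is exactly the paper's ``Part~2''. The paper is slightly more conservative than your sketch in that, for \ref{acc2} and \ref{acc4}, it explicitly carries the verification macs (the $\macsym^2$ inside earlier $\npuai{2}{\ID}{\_}$ accepts, the $\macsym^4$ inside $\cnai(\_,1)$ accepts) as extra disjuncts and eliminates them by the sequence-number inequality \ref{a2} and by nonce freshness respectively, whereas you rely on $\ssetmac$ in $\textsc{p-euf-mac}^j$ to drop them directly; your observation that these macs sit only in conditional positions is correct, so both routes work. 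One trivial slip: the honest $\macsym^1$ at $\npuai{1}{\ID}{\_}$ is the \emph{second} component of the output pair, not the first.
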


\subsection{Proof of Lemma~\ref{lem:accept-charac}}

\subsubsection*{Proof of \ref{acc1}}
Let $\ai = \pnai(j,1)$ and $\mkey \equiv \mkey^i$. Recall that:
\[
  \accept_\tau^{\ID} \;\equiv\;
  \left(\begin{array}[c]{ll}
      &\eq
      {\pi_1(\dec(\pi_1(g(\inframe_\tau)),\sk_\hn))}{\ID}\\
      \wedge &
      \eq{\pi_2(g(\inframe_\tau))}
      {\mac
        {\spair
          {\pi_1(g(\inframe_\tau))}
          {\nonce^j}}
        {\mkey}{1}}
    \end{array}\right)
\]
We apply the $\textsc{euf-mac}^1$ axiom (invariant \textsc{(inv-key)} guarantees that the syntactic side-conditions hold):
\begin{alignat*}{2}
  \accept_\tau^{\ID}
  \;\;\ra\;\;&
  \pi_2(g(\inframe_\tau))
  = \mac
  {\spair
    {\pi_1(g(\inframe_\tau))}
    {\nonce^j}}
  {\mkey}{1} \\
  \;\;\ra\;\;&
  \left(
    \begin{alignedat}{2}
      &\bigvee_{\tauo = \_, \npuai{1}{{\ID}}{j_0} \popre \tau}
      \pi_2(g(\inframe_\tau)) =
      \mac{\spair
        {\enc{\pair{\ID}{\instate_{\tauo}(\sqn_\ue^\ID)}}{\pk_\hn}{\enonce^{j_0}}}
        {g(\inframe_{\tauo})}}
      {\mkey}{1}\\
      &\vee
      \bigvee_{\tauo = \_, \pnai(j_0,1) \popre \tau}
      \pi_2(g(\inframe_\tau)) =
      {\mac
        {\spair
          {\pi_1(g(\inframe_{\tauo}))}
          {\nonce^{j_0}}}
        {\mkey}{1}}
    \end{alignedat}
  \right)
\end{alignat*}
From the validity of $\tau$ we know that $j_0 \ne j$. Hence using the axiom $\ax{Fresh}$, we get that $\nonce^{j_0} \ne \nonce^{j}$. Using the right injectivity of the pair (axiom $\textsf{EQInj}(\pair{\_}{\cdot})$), we know that:
\[
  {\spair
    {\pi_1(g(\inframe_\tau))}
    {\nonce^j}} \ne
  {\spair
    {\pi_1(g(\inframe_{\tauo}))}
    {\nonce^{j_0}}}
\]
From the collision-resistance of $\macsym$ (axiom $\textsc{cr}^1$), we have:
\begin{equation*}
  \mac
  {\spair
    {\pi_1(g(\inframe_\tau))}
    {\nonce^j}}
  {\mkey}{1} =
  \mac
  {\spair
    {\pi_1(g(\inframe_{\tauo}))}
    {\nonce^{j_0}}}
  {\mkey}{1}
  \ra
  {\spair
    {\pi_1(g(\inframe_\tau))}
    {\nonce^j}} =
  {\spair
    {\pi_1(g(\inframe_{\tauo}))}
    {\nonce^{j_0}}}
\end{equation*}
Therefore:
\begin{equation}
  \label{eq:proof1}
  \pi_2(g(\inframe_\tau))
  = \mac
  {\spair
    {\pi_1(g(\inframe_\tau))}
    {\nonce^j}}
  {\mkey}{1}
  \;\;\ra\;\;
  \neg \left(
    \;\;\bigvee_{{\tauo = \_, \pnai(j_0,1) \popre \tau}}\;\;
    \pi_2(g(\inframe_\tau)) =
    {\mac
      {\spair
        {\pi_1(g(\inframe_{\tauo}))}
        {\nonce^{j_0}}}
      {\mkey}{1}}
  \right)
\end{equation}

From which it follows that:
\[
  \accept_\tau^{\ID}
  \;\;\ra\;\;
  \bigvee_{\tauo = \_\, \npuai{1}{{\ID}}{j_0} \popre \tau}
  \pi_2(g(\inframe_\tau)) =
  \mac{\spair
    {\enc{\pair{\ID}{\instate_{\tauo}(\sqn_\ue^\ID)}}{\pk_\hn}{\enonce^{j_0}}}
    {g(\inframe_{\tauo})}}
  {\mkey}{1}
\]
To conclude, we use $\textsc{cr}^1$, $\textsf{EQInj}(\pair{\_}{\cdot})$ and $\textsf{EQInj}(\pair{\cdot}{\_})$ to show that for all $\tauo = \_,\npuai{1}{{\ID}}{j_0} \popre \tau$:
\begin{equation*}
  \left(\begin{alignedat}{2}
      &\pi_2(g(\inframe_\tau))
      = \mac
      {\spair
        {\pi_1(g(\inframe_\tau))}
        {\nonce^j}}
      {\mkey}{1}\\
      \wedge\; &
      \pi_2(g(\inframe_\tau)) =
      \mac{\spair
        {\enc{\pair{\ID}{\instate_{\tauo}(\sqn_\ue^\ID)}}{\pk_\hn}{\enonce^{j_0}}}
        {g(\inframe_{\tauo})}}
      {\mkey}{1}
    \end{alignedat}\right)
  \ra
  \left(\begin{alignedat}{2}
      &\pi_1(g(\inframe_\tau)) =
      \enc{\pair{\ID}{\instate_{\tauo}(\sqn_\ue^\ID)}}{\pk_\hn}{\enonce^{j_0}}\\
      \wedge\;&
      \nonce^j =
      g(\inframe_{\tauo})
    \end{alignedat}\right)
\end{equation*}
This, together with \ref{eq:proof1}, concludes the proof.

\subsubsection*{Proof of \ref{acc2}}
\begin{center}
  \begin{tikzpicture}
    [dn/.style={inner sep=0.2em,fill=black,shape=circle},
    sdn/.style={inner sep=0.15em,fill=white,draw,solid,shape=circle},
    sl/.style={decorate,decoration={snake,amplitude=1.6}},
    dl/.style={dashed},
    pin distance=0.5em,
    every pin edge/.style={thin}]

    \draw[thick] (0,0)
    node[left=1.3em] {$\tau:$}
    -- ++(0.5,0)
    node[dn,pin={above:{$\npuai{1}{\ID}{j}$}}] {}
    node[below=0.3em]{$\taut$}
    -- ++(2.5,0)
    node[dn,pin={above:{$\pnai(j_0,1)$}}] {}
    node[below=0.3em]{$\tauo$}
    -- ++(2.5,0)
    node[dn,pin={above:{$\npuai{2}{\ID}{j}$}}] {}
    node[below=0.3em]{$\tau$};
  \end{tikzpicture}
\end{center}
If $\ai = \npuai{2}{\ID}{j}$. Let $\mkey$ be the $\macsym$ key corresponding to $\ID$, i.e. $\mkey \equiv \mkey^\ID$. Recall that:
\[
  \accept_\tau^{\ID} \;\equiv\;
  g(\inframe_\tau) =
  \mac{
    \pair{\instate_\tau(\bauth_\ue^\ID)}
    {\instate_\tau(\sqn_\ue^\ID)}
  }{\mkey}{2}
\]

\paragraph{Part 1}
We are going to apply the $\textsc{p-euf-mac}^2$ axiom. We let:
\[
  S = \{ \tauo \mid \tauo =\_, \pnai(j_0,1) \popre \tau\}
\]
and for all $S_0 \subseteq S$ we let:
\[
  b_{S_0} =
  \Big(
  \bigwedge_{\tauo \in S_0} \accept_\tauo^\ID
  \Big)
  \wedge
  \Big(
  \bigwedge_{\tauo \in \overline{S_0}} \neg \accept_\tauo^\ID
  \Big)
\]
Then $(b_{S_0})_{S_0 \subseteq S}$ is a valid $\cs$ partition. It is straightforward to check that for every $S_0 \subseteq S$, for every $\tauo = \_, \pnai(j_0,1) \popre \tau$, if $\tauo \in S_0$ then we can rewrite $\cond{b_{S_0}}{t_{\tauo}}$ into a term $\cond{b_{S_0}}{t_{\tauo}^{S_0}}$ by removing the branch corresponding to $\accept_\tauo^\ID$. Therefore:
\[
  \mac{\spair{\nonce^{j_0}}
    {\sqnsuc(\pi_2(\dec(\pi_1(g(\inframe_\tauo)),\sk_\hn)))}}
  {\mkey}{2}
  \in
  \setmac_{\mkey}^2(t^{S_0}_{\tauo})
  \text{ if and only if }
  \tauo \in S_0
\]
Hence by applying the $\textsc{p-euf-mac}^2$ axiom we get that:
\[
  \accept_\tau^{\ID}
  \ra
  \bigvee_{S_0 \subseteq S}
  \left(
    b_{S_0} \wedge
    \begin{alignedat}{2}
      &\bigvee_{\tauo \in S_0}
      g(\inframe_\tau) =
      \mac{
        \spair{\nonce^{j_0}}
        {\sqnsuc(\pi_2(\dec(\pi_1(g(\inframe_\tauo)),\sk_\hn)))}}
      {\mkey}{2}\\
      \vee& \;\;\bigvee_{\mathclap{\tauo = \_,\npuai{2}{\ID}{j_0} \popre \tau}}\;\;
      g(\inframe_\tau) =
      \mac{
        \pair{\instate_\tauo(\bauth_\ue^\ID)}
        {\instate_\tauo(\sqn_\ue^\ID)}
      }{\mkey}{2}
    \end{alignedat}
  \right)
\]
We have:
\begin{align*}
  \left(
    \begin{alignedat}{2}
      &&&g(\inframe_\tau) = \mac{
        \pair{\instate_\tau(\bauth_\ue^\ID)}
        {\instate_\tau(\sqn_\ue^\ID)}
      }{\mkey}{2}
      \\
      &\wedge\;&&
      g(\inframe_\tau) =
      \mac{
        \pair{\instate_\tauo(\bauth_\ue^\ID)}
        {\instate_\tauo(\sqn_\ue^\ID)}
      }{\mkey}{2}
    \end{alignedat}
  \right)
  &\ra&&
  \pair{\instate_\tau(\bauth_\ue^\ID)}
  {\instate_\tau(\sqn_\ue^\ID)} =
  \pair{\instate_\tauo(\bauth_\ue^\ID)}
  {\instate_\tauo(\sqn_\ue^\ID)}
  \tag{$\textsc{cr}^2$}\\
  &\ra&&
  \instate_\tau(\sqn_\ue^\ID) =
  \instate_\tauo(\sqn_\ue^\ID)
  \tag{$\textsf{EQInj}(\pair{\_}{\cdot})$}\\
  &\ra&&
  \false \tag{\textbf{P9}}
\end{align*}
Moreover, remark that for $S_0 = \emptyset$, we have:
\[
  \left(
    \bigvee_{\tauo \in S_0}
    g(\inframe_\tau) =
    \mac{
      \spair{\nonce^j}
      {\sqnsuc(\pi_2(\dec(\pi_1(g(\inframe_\tauo)),\sk_\hn)))}}
    {\mkey}{2}
  \right)
  =
  \false
\]
Putting everything together, we get that:
\begin{alignat*}{2}
  \accept_\tau^{\ID}
  &\;\;\ra\;\;&&
  \bigvee_{S_0 \subseteq S\
    \atop{S_0 \ne \emptyset}}
  \left(
    b_{S_0} \wedge
    \bigvee_{\tauo \in S_0}
    g(\inframe_\tau) =
    \mac{
      \spair{\nonce^j}
      {\sqnsuc(\pi_2(\dec(\pi_1(g(\inframe_\tauo)),\sk_\hn)))}}
    {\mkey}{2}
  \right)\\
  &\;\;\ra\;\;&&
  \bigvee_{S_0 \subseteq S\atop{S_0 \ne \emptyset}}
  \bigvee_{\tauo \in S_0}
  \accept_\tauo^\ID \;\wedge\;
  g(\inframe_\tau) =
  \mac{
    \spair{\nonce^j}
    {\sqnsuc(\pi_2(\dec(\pi_1(g(\inframe_\tauo)),\sk_\hn)))}}
  {\mkey}{2}\displaybreak[0]\\
  &\;\;\ra\;\;&&
  \bigvee_{\tauo = \_,\pnai(j_0,1)\atop{\tauo \popre \tau}}
  \accept_\tauo^\ID \;\wedge\;
  g(\inframe_\tau) =
  \mac{
    \spair{\nonce^j}
    {\sqnsuc(\pi_2(\dec(\pi_1(g(\inframe_\tauo)),\sk_\hn)))}}
  {\mkey}{2}\displaybreak[0]\\
  &\;\;\ra\;\;&&
  \bigvee_{\tauo = \_, \pnai(j_0,1)\atop{\tauo \popre \tau}}
  \accept_\tauo^\ID \wedge
  \pair{\instate_\tau(\bauth_\ue^\ID)}
  {\instate_\tau(\sqn_\ue^\ID)} =
  \pair{\nonce^j}
  {\sqnsuc(\pi_2(\dec(\pi_1(g(\inframe_\tauo)),\sk_\hn)))}
  \tag{$\textsc{cr}^2$}\\
  &\;\;\ra\;\;&&
  \bigvee_{\tauo = \_, \pnai(j_0,1)\atop{\tauo \popre \tau}}
  \begin{alignedat}[c]{1}
    &\accept_\tauo^\ID \wedge
    \instate_\tau(\bauth_\ue^\ID) =
    \nonce^j \\
    &\wedge\;
    \instate_\tau(\sqn_\ue^\ID) =
    \sqnsuc(\pi_2(\dec(\pi_1(g(\inframe_\tauo)),\sk_\hn)))
  \end{alignedat}
  \tag{$\textsf{EQInj}(\pair{\_}{\cdot})$
    and $\textsf{EQInj}(\pair{\cdot}{\_})$}
\end{alignat*}

\paragraph{Part 2}
It only remains to show that we can restrict ourselves to the $\tauo$ such that $\taut \potau \tauo$. Using \ref{acc1} we know that:
\begin{alignat*}{2}
  \accept_\tauo^\ID
  &\;\;\ra\;\;&&
  \bigvee_{
    \tau' = \_,\npuai{1}{\ID}{j'}
    \atop{\tau' \potau\tauo}}
  \left(
    \pi_1(g(\inframe_\tauo)) =
    \enc{
      \pair
      {\ID}
      {\instate_{\tau'}(\sqn_\ue^\ID)}}
    {\pk_\hn}{\enonce^{j'}}
    \wedge
    g(\inframe_{\tau'}) = \nonce^{j_0}
  \right)
\end{alignat*}
Let $\tau' = \_,\npuai{1}{\ID}{j'}$ such that $\tau' \potau\tauo$. We now show that if $j' \ne j$ then the tests fail, which proves the impossibility of replaying an old message here. Assume $j' \ne j$, then:
\begin{alignat*}{2}
  &&&
  \instate_\tau(\sqn_\ue^\ID) =
  \sqnsuc(\pi_2(\dec(\pi_1(g(\inframe_\tauo)),\sk_\hn)))
  \wedge
  \pi_1(g(\inframe_\tauo)) =
  \enc{
    \pair
    {\ID}
    {\instate_{\tau'}(\sqn_\ue^\ID)}}
  {\pk_\hn}{\enonce^{j'}}\\
  &\;\;\ra\;\;&&
  \instate_\tau(\sqn_\ue^\ID) =
  \sqnsuc(\instate_{\tau'}(\sqn_\ue^\ID))\\
  &\;\;\ra\;\;&&
  \false
  \tag{\textbf{P9b}}
\end{alignat*}
We deduce that:
\[
  \accept_\tau^\ID \;\ra\;
  \bigvee_{\tauo = \_,\pnai(j_0,1)
    \atop{\taut \potau \tauo}}
  \accept_\tauo^\ID \;\wedge\;
  g(\inframe_\taut) = \nonce^{j_0} \;\wedge\;
  \pi_1(g(\inframe_\tauo)) =
  \enc{
    \spair{\ID}
    {\instate_\taut(\sqn_\ue^\ID)}}
  {\pk_\hn}{\enonce^j}
\]

\subsubsection*{Proof of \ref{acc3}}
Let $\ai = \cuai_\ID(j,1)$ and $\key$ be the $\owsym$ key corresponding to $\ID$. We know that:
\[
  \accept_\tau^{\ID} \;\ra\;
  \pi_3(g(\inframe_\tau)) =
  \mac
  {\striplet
    {\pi_1(g(\inframe_\tau))}
    {\pi_2(g(\inframe_\tau)) \xor \ow{\pi_1(g(\inframe_\tau))}{\key}}
    {\instate_\tau(\suci_\ue^\ID)}}
  {\mkey}{3}
\]
We are going to apply the $\textsc{p-euf-mac}^3$ axiom. We let $S = S_\hn \cup S_\ue$, where:
\[
  S_\hn = \{ \tauo \mid \tauo = \_, \cnai(j_0,1)\popre \tau\}
  \qquad\qquad
  S_\ue = \{ \tauo \mid \tauo = \_, \cuai_\ID(j_0,1)\popre \tau\}
\]
and for all $S_0 \subseteq S$ we let:
\[
  b_{S_0} =
  \Big(
  \bigwedge_{\tauo \in S_0} \accept_\tauo^\ID
  \Big)
  \wedge
  \Big(
  \bigwedge_{\tauo \in \overline{S_0}} \neg \accept_\tauo^\ID
  \Big)
\]
Then $(b_{S_0})_{S_0 \subseteq S}$ is a valid $\cs$ partition. It is straightforward to check that for every $S_0 \subseteq S$, for every $\tauo = \_,\cnai(j_0,1) \popre \tau$, if $\tauo \in S$ then we can rewrite $\cond{b_{S_0}}{t_{\tauo}}$ into a term $\cond{b_{S_0}}{t_{\tauo}^{S_0}}$ by removing the branch corresponding to $\accept_\tauo^\ID$. Therefore:
\[
  \mac{\striplet
    {\nonce^{j_0}}
    {\instate_\tauo(\sqn_\hn^{\ID})}
    {\instate_\tauo(\suci_\hn^\ID)}}
  {\mkey}{3}
  \in \setmac_{\mkey}^3(t^{S_0}_{\tauo})
  \text{ if and only if }
  \tauo \in S_0
\]
Similarly for every  $\tauo = \_,\cuai_\ID(j_0,1) \popre \tau$, if $\tauo \in S$ then we can rewrite $\cond{b_{S_0}}{t_{\tauo}}$ as follows:
\[
  \cond{b_{S_0}}{t_{\tauo}} =
  \begin{cases}
    \cond{b_{S_0}}
    {\mac{\pi_1(g(\inframe_\tauo))}{\mkey}{4}} & \text{ if } \tauo \in S_0\\
    \cond{b_{S_0}}{\textsf{error}} & \text{ if } \tauo \in \overline{S_0}
  \end{cases}
\]
Hence by applying the $\textsc{p-euf-mac}^4$ axiom we get that:
\[
  \accept_\tau^{\ID}
  \ra
  \bigvee_{S_0 \subseteq S}
  \left(
    b_{S_0} \wedge
    \bigvee_{\tauo \in S_0 \cap S_\hn}
    \pi_3(g(\inframe_\tau)) =
    \mac{\striplet
      {\nonce^{j_0}}
      {\instate_\tauo(\sqn_\hn^{\ID})}
      {\instate_\tauo(\suci_\hn^\ID)}}
    {\mkey}{3}
  \right)
\]
By $\textsc{cr}^3$, $\textsf{EQInj}(\pair{\_}{\cdot})$ and $\textsf{EQInj}(\pair{\cdot}{\_})$ we have:
\begin{multline*}
  \left(
    \begin{alignedat}{2}
      &&&
      \pi_3(g(\inframe_\tau)) =
      \mac
      {\triplet
        {\pi_1(g(\inframe_\tau))}
        {\pi_2(g(\inframe_\tau)) \xor \ow{\pi_1(g(\inframe_\tau))}{\key}}
        {\instate_\tau(\suci_\ue^\ID)}}
      {\mkey}{3}
      \\
      &\wedge\;&&
      \pi_3(g(\inframe_\tau)) =
      \mac{\striplet
        {\nonce^{j_0}}
        {\instate_\tauo(\sqn_\hn^{\ID})}
        {\instate_\tauo(\suci_\hn^\ID)}}
      {\mkey}{3}
    \end{alignedat}
  \right)\;\;\ra\\
  \pi_1(g(\inframe_\tau)) =
  \nonce^{j_0} \wedge
  \pi_2(g(\inframe_\tau)) \xor \ow{\pi_1(g(\inframe_\tau))}{\key} =
  \instate_\tauo(\sqn_\hn^{\ID}) \wedge
  \instate_\tau(\suci_\ue^\ID) = \instate_\tauo(\suci_\hn^\ID)
\end{multline*}
Using the idempotence of the $\xor$ we know that:
\[
  \left(
    \pi_1(g(\inframe_\tau)) =
    \nonce^{j_0} \;\wedge\;
    \pi_2(g(\inframe_\tau)) \xor \ow{\pi_1(g(\inframe_\tau))}{\key} =
    \instate_\tauo(\sqn_\hn^{\ID})
  \right)
  \;\;\ra\;\;
  \pi_2(g(\inframe_\tau)) =
  \instate_\tauo(\sqn_\hn^{\ID})  \xor \ow{\nonce^{j_0}}{\key}
\]
Moreover, remark that if $S_0 \cap S_\hn = \emptyset$, we have:
\[
  \bigvee_{S_0 \subseteq S}
  \left(
    b_{S_0} \wedge
    \bigvee_{\tauo \in S_0 \cap S_\hn}
    \pi_3(g(\inframe_\tau)) =
    \mac{\striplet
      {\nonce^{j_0}}
      {\instate_\tauo(\sqn_\hn^{\ID})}
      {\instate_\tauo(\suci_\hn^\ID)}}
    {\mkey}{3}
  \right)
  =
  \false
\]
Putting everything together, we get that:
\begin{alignat*}{2}
  \accept_\tau^{\ID}
  &\;\;\ra\;\;&&
  \bigvee_{S_0 \subseteq S
    \atop{S_0 \cap S_\hn \ne \emptyset}}
  \left(
    b_{S_0} \wedge
    \bigvee_{\tauo \in S_0 \cap S_\hn}
    \pi_3(g(\inframe_\tau)) =
    \mac{\striplet
      {\nonce^{j_0}}
      {\instate_\tauo(\sqn_\hn^{\ID})}
      {\instate_\tauo(\suci_\hn^\ID)}}
    {\mkey}{3}
  \right)\\
  &\;\;\ra\;\;&&
  \bigvee_{S_0 \subseteq S
    \atop{S_0 \cap S_\hn \ne \emptyset}}
  \bigvee_{\tauo \in S_0 \cap S_\hn}
  \accept_\tauo^\ID \wedge
  \pi_3(g(\inframe_\tau)) =
  \mac{\striplet
    {\nonce^{j_0}}
    {\instate_\tauo(\sqn_\hn^{\ID})}
    {\instate_\tauo(\suci_\hn^\ID)}}
  {\mkey}{3}\displaybreak[0]\\
  &\;\;\ra\;\;&&
  \bigvee_{\tauo = \_,\cnai(j_0,0) \popre \tau}
  \accept_\tauo^\ID \wedge
  \pi_3(g(\inframe_\tau)) =
  \mac{\striplet
    {\nonce^{j_0}}
    {\instate_\tauo(\sqn_\hn^{\ID})}
    {\instate_\tauo(\suci_\hn^\ID)}}
  {\mkey}{3}\\
  &\;\;\ra\;\;&&
  \bigvee_{\tauo = \_,\cnai(j_0,0)\popre \tau}
  \accept_\tauo^\ID \wedge
  \pi_1(g(\inframe_\tau)) =
  \nonce^{j_0}
  \begin{alignedat}[c]{1}
    \wedge\;
    \pi_2(g(\inframe_\tau)) =
    \instate_\tauo(\sqn_\hn^{\ID})  \xor \ow{\nonce^{j_0}}{\key}\\
    \wedge\;
    \instate_\tau(\suci_\ue^\ID) = \instate_\tauo(\suci_\hn^\ID)
  \end{alignedat}
\end{alignat*}

\subsubsection*{Proof of \ref{acc4}}
We are going to apply the $\textsc{p-euf-mac}^4$ axiom. We let $S = \left\{ \tauo \mid \tauo = \_,\cuai_\ID(j_0,1) \popre \tau \right\}$, and for all $S_0 \subseteq S$ we let :
\[
  b_{S_0} =
  \Big(
  \bigwedge_{\tauo \in S_0} \accept_\tauo^\ID
  \Big)
  \wedge
  \Big(
  \bigwedge_{\tauo\in \overline{S_0}} \neg \accept_\tauo^\ID
  \Big)
\]
Then $(b_{S_0})_{S_0 \subseteq S}$ is a valid $\cs$ partition. It is straightforward to check that for every $S_0 \subseteq S$, for every $\tauo = \_,\cuai_\ID(j_0,1) \popre \tau$:
\[
  \cond{b_{S_0}}{t_\tauo} =
  \begin{cases}
    \cond{b_{S_0}}
    {\mac{\pi_1(g(\inframe_\tauo))}{\mkey}{4}} & \text{ if } \tauo \in S_0\\
    \cond{b_{S_0}}{\textsf{error}} & \text{ if } \tauo \in \overline{S_0}
  \end{cases}
\]
Hence by applying the $\textsc{p-euf-mac}^4$ axiom we get that:
\[
  g(\inframe_\tau) = \mac{\nonce^j}{\mkey}{4}
  \ra
  \bigvee_{S_0 \subseteq S}
  b_{S_0} \wedge
  \left(
    \begin{alignedat}{2}
      &\bigvee_{\tauo \in S_0}
      g(\inframe_\tau) = \mac{\pi_1(g(\inframe_\tauo))}{\mkey}{4}\\
      \vee& \bigvee_{\tauo = \_,\cnai(j_0,1) \popre \tau}
      g(\inframe_\tau) = \mac{\nonce^{j_0}}{\mkey}{4}
    \end{alignedat}
  \right)
\]
Applying the $\textsc{cr}^4$ axiom we get that:
\begin{alignat*}{2}
  g(\inframe_\tau) = \mac{\nonce^{j_0}}{\mkey}{4}
  \;\wedge\;
  g(\inframe_\tau) = \mac{\nonce^j}{\mkey}{4}
  &\ra\;\;&&
  \nonce^{j_0}
  = \nonce^j \tag{$\textsc{cr}^4$}\\
  &\ra\;\;&&
  \false \tag{\ax{EQIndep}}
\end{alignat*}
Moreover, remark that for $S_0 = \emptyset$, we have:
\[
  \textstyle\neg\Big(
    b_{S_0} \wedge \bigvee_{\tauo \in S_0}
    g(\inframe_\tau) = \mac{\pi_1(g(\inframe_\tauo))}{\mkey}{4}
  \Big)
\]
Putting everything together, we get that:
\[
  g(\inframe_\tau) = \mac{\nonce^j}{\mkey}{4}
  \ra
  \bigvee_{S_0 \subseteq S\atop{S_0 \ne \emptyset}}
  \Big(
    b_{S_0} \wedge
    \bigvee_{\tauo \in S_0}
    g(\inframe_\tau) = \mac{\pi_1(g(\inframe_\tauo))}{\mkey}{4}
  \Big)
\]
Let $S_0 \subseteq S$ with $S_0 \ne \emptyset$, and let $\tauo \in S_0$. Using the $\textsc{cr}^4$ axiom we know that:
\[
  g(\inframe_\tau) = \mac{\nonce^j}{\mkey}{4}
  \;\wedge\;
  g(\inframe_\tau) = \mac{\pi_1(g(\inframe_\tauo))}{\mkey}{4}
  \;\ra\;
  \pi_1(g(\inframe_\tauo)) = \nonce^j
\]
Therefore:
\begin{alignat*}{2}
  g(\inframe_\tau) = \mac{\nonce^j}{\mkey}{4}
  &\ra\;\;&&
  \bigvee_{S_0 \subseteq S\atop{S_0 \ne \emptyset}}
  \Big(
    b_{S_0} \wedge
    \bigvee_{\tauo \in S_0}
    g(\inframe_\tau) = \mac{\pi_1(g(\inframe_\tauo))}{\mkey}{4}
  \Big)\\
  &\ra\;\;&&
  \bigvee_{S_0 \subseteq S\atop{S_0 \ne \emptyset}}
  \Big(
    b_{S_0} \wedge
    \bigvee_{\tauo \in S_0}
    \pi_1(g(\inframe_\tauo)) = \nonce^j
  \Big)\displaybreak[0]\\
  \intertext{And using the fact that $b_{S_0} \ra \accept_\tauo^\ID$:}
  g(\inframe_\tau) = \mac{\nonce^j}{\mkey}{4}
  &\ra\;\;&&
  \bigvee_{\tauo = \_,\cuai_\ID(\_,1) \popre \tau}
  \accept_\tauo \wedge \pi_1(g(\inframe_\tauo)) = \nonce^j
\end{alignat*}
We conclude by observing that $\accept_\tau^\ID \ra g(\inframe_\tau) = \mac{\nonce^j}{\mkey}{4}$.

\subsection{Authentication of the User by the Network}
\label{app:subsection-auth-serv-net-proof}
We now prove that the $\faka$ protocol provides authentication of the user the network. Remark that the lemma below subsumes Lemma~\ref{lem:auth-serv-net-body}.
\begin{lemma}
  \label{lem:auth-serv-net}
  For all valid symbolic trace $\tau$, $\inframe_\tau$ guarantees authentication of the user by the network:
  \[
    \forall \ID \in \iddom, j \in \mathbb{N},\;\;
    \instate_\tau(\eauth_\hn^j) = \ID \;\ra\;
    \bigvee_{\tau' \popreleq \tau}
    \instate_{\tau'}(\bauth_\ue^\ID) = \nonce^j
  \]
  Moreover, if $\tau = \_,\cnai(j,1)$ then:
  \begin{alignat*}{2}
    \accept_\tau^\ID
    &\;\ra\;&&
    \bigvee_{\tauo = \_,\cuai_\ID(\_,1) \popre \tau}
    \cstate_\tauo(\bauth_\ue^\ID) = \nonce^j
  \end{alignat*}
\end{lemma}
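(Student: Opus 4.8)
The plan is to prove the two assertions in order, the \emph{moreover} clause first — it is essentially a restatement of part~\ref{acc4} of Lemma~\ref{lem:accept-charac} — and then the correspondence property, obtained by tracing the only two actions that ever write to $\eauth_\hn^j$. For the moreover clause, assume $\tau = \_,\cnai(j,1)$. Part~\ref{acc4} of Lemma~\ref{lem:accept-charac} gives a derivation of
\[
  \accept_\tau^\ID \;\ra\;
  \bigvee_{\tauo = \_,\cuai_\ID(\_,1) \popre \tau}
  \big(\accept_\tauo^\ID \wedge \pi_1(g(\inframe_\tauo)) = \nonce^j\big).
\]
For each such $\tauo = \_,\cuai_\ID(j_0,1)$, the state update of $\cuai_\ID(j_0,1)$ (Fig.~\ref{fig:protocol-guti-refresh}) sets $\cstate_\tauo(\bauth_\ue^\ID) \equiv \ite{\accept_\tauo^\ID}{\pi_1(g(\inframe_\tauo))}{\fail}$; under the hypothesis $\accept_\tauo^\ID$ this rewrites, using $\axioms_{\textsf{ite}}$, to $\pi_1(g(\inframe_\tauo))$, which equals $\nonce^j$ by the disjunct. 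Hence each disjunct implies $\cstate_\tauo(\bauth_\ue^\ID) = \nonce^j$, which is the claimed conclusion.

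For the correspondence property, the first step is to localise the relevant action. The variable $\eauth_\hn^j$ is initialised to $\fail$ in $\cstate_\epsilon$ and, by inspection of Figures~\ref{fig:protocol-term-supi} and~\ref{fig:protocol-guti-refresh}, is written only by $\pnai(j,1)$ and $\cnai(j,1)$; by validity of $\tau$, session $j$ runs exactly one of the $\supi$ and $\guti$ sub-protocols, so each of these actions occurs at most once in $\tau$ and not both. Consequently $\instate_\tau(\eauth_\hn^j)$ is either $\fail$ — in which case the premise $\instate_\tau(\eauth_\hn^j) = \ID$ reduces to $\false$ by $\textsf{$\ne$-Const}$ (as $\fail,\ID$ are distinct program constants) and the implication is trivial — or it equals the value $\upstate_{\tau'}(\eauth_\hn^j)$ written by the unique such action, at a prefix $\tau' \popre \tau$. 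In the latter case $\upstate_{\tau'}(\eauth_\hn^j)$ is a nested if-then-else over $\ID_1,\dots,\ID_N$ with guards $(\accept_{\tau'}^{\ID_i})_i$ and final branch $\unknownid$; using the structural if-then-else and equality axioms and the pairwise distinctness of $\ID_1,\dots,\ID_N,\unknownid$ (and, if convenient, Proposition~\ref{prop:invs}(\ref{a5},\ref{a7}) to note at most one guard holds), the premise $\instate_\tau(\eauth_\hn^j) = \ID$ forces $\accept_{\tau'}^\ID$.

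It then remains to turn $\accept_{\tau'}^\ID$ into a session of $\tue_\ID$ that stored $\nonce^j$ in $\bauth_\ue^\ID$. If $\tau' = \_,\pnai(j,1)$, I would apply part~\ref{acc1} of Lemma~\ref{lem:accept-charac}, which yields a prefix $\tauo = \_,\npuai{1}{\ID}{j_0} \popre \tau'$ with $g(\inframe_\tauo) = \nonce^j$; the update of $\npuai{1}{\ID}{j_0}$ sets $\bauth_\ue^\ID \mapsto g(\inframe_\tauo)$, so $\cstate_\tauo(\bauth_\ue^\ID) = \nonce^j$, and writing $\tau''$ for the prefix of $\tau$ obtained by appending to $\tauo$ the next action of $\tau$ (which exists since $\tauo \popre \tau$), we get $\instate_{\tau''}(\bauth_\ue^\ID) \equiv \cstate_\tauo(\bauth_\ue^\ID) = \nonce^j$. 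If instead $\tau' = \_,\cnai(j,1)$, I would apply the moreover clause just established, obtaining $\tauo = \_,\cuai_\ID(\_,1) \popre \tau'$ with $\cstate_\tauo(\bauth_\ue^\ID) = \nonce^j$, and conclude identically. Either way we have exhibited a prefix witnessing the disjunction $\bigvee_{\tau' \popreleq \tau} \instate_{\tau'}(\bauth_\ue^\ID) = \nonce^j$.

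The one substantive ingredient is Lemma~\ref{lem:accept-charac}, which is already available, so there is no real obstacle; the rest is bookkeeping. The two points I would treat with some care in the full write-up are the off-by-one between the state before and after an action (i.e.\ $\instate$ versus $\cstate$ of $\npuai{1}{\ID}{j_0}$ and of $\cuai_\ID(j_0,1)$), handled by passing to the successor action as above, and the elimination of $\accept_{\tau'}^\ID$ out of the nested if-then-else over all identities, a routine case split on the guards using $\axioms_{\textsf{ite}}$ and $\textsf{$\ne$-Const}$.
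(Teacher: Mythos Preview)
Your proposal is correct and follows essentially the same approach as the paper: both arguments reduce the correspondence property to the two actions that can write $\eauth_\hn^j$, and then invoke \ref{acc1} (for $\pnai(j,1)$) and \ref{acc4} (for $\cnai(j,1)$) from Lemma~\ref{lem:accept-charac} to locate the $\tue_\ID$ session that stored $\nonce^j$ in $\bauth_\ue^\ID$. The only differences are organisational: the paper phrases the argument as an induction on $\tau$ (trivially stepping over actions that do not touch $\eauth_\hn^j$) whereas you directly jump to the unique write using validity, and you factor out the \emph{moreover} clause first and reuse it, whereas the paper derives it inline in the $\cnai(j,1)$ case.
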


\begin{proof}
  We prove this by induction on $\tau$. First, for $\tau = \epsilon$ we have that for every $\ID$ $\instate_\tau(\eauth_\hn^j) = \bot \ne \ID$. Therefore the property holds.

  Let $\tau = \tau_0,\ai$. Observe that for all $j_0$, if $\upstate_\tau(\eauth_\hn^{j_0}) = \bot$ and if the authentication property holds for $\inframe_{\tau_0}$:
  \[
    \forall \ID \in \iddom,\;\;
    \instate_{\tau_0}(\eauth_\hn^{j_0}) = \ID \;\ra\;
    \bigvee_{\tau' \popreleq \tau_0} \instate_{\tau'}(\bauth_\ue^\ID) =
    \nonce^{j_0}
  \]
  then it holds for $\tau$. Therefore we only need to show that authentication holds for $\ai = \pnai(j,1)$ with $j_0 = j$, and for $\ai = \cnai(j,1)$ with $j_0 = j$.
  \begin{itemize}
  \item \textbf{Case $\ai = \pnai(j,1)$:} Let $\ID \in \iddom$. Using \textsf{EQConst}, we know that $\instate_{\tau}(\eauth_\hn^{j_0}) = \ID \ra \accept_\tau^\ID$ is true. Using \ref{acc1} of Proposition~\ref{prop:invs}, we deduce that:
    \begin{equation}
      \label{eq:prop1}
      \instate_{\tau}(\eauth_\hn^{j}) = \ID \ra
      \bigvee_{\tauo = \_, \npuai{1}{\ID}{j_0} \popre \tau}
      g(\inframe_{\tauo}) = \nonce^j
    \end{equation}
    By validity of $\tau$, we know there exists $\tautt$ such that $\tautt = \_,\pnai(j,0) \popre \tau$. Let $\tauo \potau \tautt$. We have $\neg \sessionstarted_j(\ai_0)$, therefore using invariant \ref{a1} we get that $n^j \not \in \st(\inframe_{\tauo})$. It follows from axiom $\ax{EQIndep}$ that $\neg g(\inframe_{\tauo}) = \nonce^j$. Hence:
    \begin{equation}
      \label{eq:prop2}
      \bigvee_{\tauo = \_, \npuai{1}{\ID}{j_0} \popre \tau}
      g(\inframe_{\tauo}) = \nonce^j
      \;\;\leftrightarrow\;\;
      \bigvee_{\tauo = \_, \npuai{1}{\ID}{j_0}\atop{ \tautt \potau \tauo\popre \tau}}
      g(\inframe_{\tauo}) = \nonce^j
    \end{equation}
    Let $\tauo$ be such that $\tautt \potau \tauo \popre \tau$ and $\tauo = \_,\npuai{1}{\ID}{j_0}$. Since $\upstate_{\tauo}(\bauth_\ue^\ID) = g(\inframe_{\tauo})$, we know that $\cstate_{\tauo}(\bauth_\ue^\ID) = g(\inframe_{\tauo})$. Hence:
    \begin{equation*}
      g(\inframe_{\tauo}) = \nonce^j \ra
      \cstate_{\tauo}(\bauth_\ue^\ID) = \nonce^j
    \end{equation*}
    Which shows that:
    \begin{equation}
      \label{eq:prop2b}
      \bigvee_{
        \tauo = \_,\npuai{1}{\ID}{j_0}
        \atop{\tautt \potau \tauo \popre \tau}}
      g(\inframe_{\tauo}) = \nonce^j
      \ra
      \bigvee_{
        \tauo = \_,\npuai{1}{\ID}{j_0}
        \atop{\tautt \potau \tauo \popre \tau}}
      \cstate_{\tauo}(\bauth_\ue^\ID) = \nonce^j
    \end{equation}
    Since $\{\tauo \mid \tauo = \_,\npuai{1}{\ID}{j_0} \wedge \tautt \potau \tauo \popre \tau \}$ is a subset of $\{\taut \mid \taut \popre \tau\}$, we have:
    \begin{alignat*}{2}
      \bigvee_{
        \tauo = \_,\npuai{1}{\ID}{j_0}
        \atop{\tautt \potau \tauo \popre \tau}}
      \cstate_{\tauo}(\bauth_\ue^\ID) = \nonce^j
      \;&\ra\;\;&&
      \bigvee_{
        \tauo \popre \tau}
      \cstate_{\tauo}(\bauth_\ue^\ID) = \nonce^j\\
      &\ra\;\;&&
      \bigvee_{
        \tauo \popreleq \tau}
      \instate_{\tauo}(\bauth_\ue^\ID) = \nonce^j
      \numberthis\label{eq:prop3}
    \end{alignat*}
    We conclude the proof using \ref{eq:prop1}, \ref{eq:prop2}, \ref{eq:prop2b} and \ref{eq:prop3}.

  \item \textbf{Case $\ai = \cnai(j,1)$:} Using \ref{acc4}, we know that:
    \[
      \accept_\tau^\ID
      \;\ra\;
      \bigvee_{\tauo = \_,\cuai_\ID(\_,1) \popre \tau}
      \accept_\tauo \wedge \pi_1(g(\inframe_\tauo)) = \nonce^j
    \]
    Moreover, for every $\tauo = \_,\cuai_\ID(\_,1) \popre \tau$, we have:
    \[
      \accept_\tauo^\ID
      \;\wedge\;
      \pi_1(g(\inframe_\tauo)) =
      \nonce^j
      \;\ra\;
      \cstate_\tauo(\bauth_\ue^\ID) =
      \nonce^j
    \]
    Hence:
    \begin{alignat*}{2}
      \accept_\tau^\ID
      &\ra\;\;&&
      \bigvee_{\tauo = \_,\cuai_\ID(\_,1) \popre \tau}
      \accept_\tauo \wedge \pi_1(g(\inframe_\tauo)) = \nonce^j
      \displaybreak[0]\\
      &\ra\;\;&&
      \bigvee_{\tauo = \_,\cuai_\ID(\_,1) \popre \tau}
      \cstate_\tauo(\bauth_\ue^\ID) = \nonce^j\\
      &\ra\;\;&&
      \bigvee_{\tauo \popreleq \tau}
      \instate_\tauo(\bauth_\ue^\ID) = \nonce^j
    \end{alignat*}
    \qedhere
  \end{itemize}
\end{proof}

\subsection{Authentication of the Network by the User}
\label{app:subsection-auth-net-serv-proof}
We now prove that the $\faka$ protocols provides authentication of the network by the user. We actually prove the stronger result that for any valid symbolic trace $\tau$, if the authentication of $\tue_{\ID}$ succeeded at instant $\tau$ (i.e. $\instate_\tau(\eauth_\ue^\ID)$ is not $\fail$ or $\bot$), then there exists some $j \in \mathbb{N}$ such that $\tue_{\ID}$ authenticated the session $\thn(j0$.
\begin{lemma}
  \label{lem:auth-net}
  For all valid symbolic trace $\tau$, $\inframe_\tau$ guarantees authentication of the network by the user. For all $\ID \in \iddom$ and $j \in \mathbb{N}$, we define the formulas:
  \begin{gather*}
    \sucauth_\tau(\ID) \;\;\equiv\;\;
    \instate_\tau(\eauth_\ue^\ID) \ne \fail
    \wedge
    \instate_\tau(\eauth_\ue^\ID) \ne \bot\\
    \auth_\tau(\ID,j) \;\;\equiv\;\;
    \instate_\tau(\bauth_\hn^j) = \ID \wedge
    \nonce^j = \instate_\tau(\eauth_\ue^\ID)
  \end{gather*}
  Then we have:
  \[
    \textstyle
    \forall \ID \in \iddom,\;\;
    \sucauth_\tau(\ID)
    \;\ra\;
    \bigvee_{j \in \mathbb{N}}
    \auth_\tau(\ID,j)
  \]
\end{lemma}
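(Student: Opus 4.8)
The plan is to prove Lemma~\ref{lem:auth-net} by induction on the valid symbolic trace $\tau$, mirroring the structure of the proof of Lemma~\ref{lem:auth-serv-net}. The base case $\tau = \epsilon$ is immediate: $\cstate_\epsilon(\eauth_\ue^\ID) \equiv \fail$, so $\sucauth_\epsilon(\ID)$ is false for every $\ID$, and the implication holds vacuously. For the inductive step, write $\tau = \tau_0, \ai$ and assume the statement for $\tau_0$. The key observation is that $\eauth_\ue^\ID$ is only modified by three kinds of actions: $\npuai{2}{\ID}{j}$, $\cuai_\ID(j,1)$, and the session-starting actions $\npuai{1}{\ID}{j}$, $\cuai_\ID(j,0)$, $\newsession_\ID(j)$ (which reset it to $\fail$). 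In the reset cases, $\upstate_\tau(\eauth_\ue^\ID) \equiv \fail$, so $\sucauth_\tau(\ID)$ is false and the conclusion holds trivially. If $\ai$ does not touch $\eauth_\ue^\ID$ at all (e.g.\ any network action or a user action for a different identity), then $\instate_\tau(\eauth_\ue^\ID) = \instate_{\tau_0}(\eauth_\ue^\ID)$, and moreover $\instate_\tau(\bauth_\hn^j) \sqsupseteq \instate_{\tau_0}(\bauth_\hn^j)$ in the sense that the $\bauth_\hn^j$ variables are only ever set to values in $\iddom \cup \{\unknownid, \bot\}$ and the relevant one is monotone enough that $\auth_{\tau_0}(\ID,j) \ra \auth_\tau(\ID,j)$ — so the induction hypothesis transfers directly. (One needs here the small invariant, provable by a routine induction, that once $\bauth_\hn^j = \ID$ with $\ID \in \iddom$ it is never overwritten, which follows because $\bauth_\hn^j$ is only written in the actions $\pnai(j,1)$ and $\cnai(j,0)$ of the single session $\thn(j)$, each of which occurs at most once by validity of $\tau$.)

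The heart of the argument is therefore the two cases where $\ai$ genuinely sets $\eauth_\ue^\ID$ to a non-failure value. First, $\ai = \npuai{2}{\ID}{j}$: here $\upstate_\tau(\eauth_\ue^\ID) \equiv \ite{\accept_\tau^\ID}{\instate_\tau(\bauth_\ue^\ID)}{\fail}$, so $\sucauth_\tau(\ID)$ implies $\accept_\tau^\ID$ and $\instate_\tau(\eauth_\ue^\ID) = \instate_\tau(\bauth_\ue^\ID) = g(\inframe_\taut)$ where $\taut = \_,\npuai{1}{\ID}{j} \popre \tau$. Applying \ref{acc2} of Lemma~\ref{lem:accept-charac} gives a $\tauo = \_, \pnai(j_0,1)$ with $\taut \potau \tauo$ such that $\accept_\tauo^\ID$ and $g(\inframe_\taut) = \nonce^{j_0}$. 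Now at $\pnai(j_0,1)$, $\accept_\tauo^\ID$ forces $\upstate_\tauo(\bauth_\hn^{j_0}) \equiv \ID$ — from the term description of $t_\tauo$ and $\upstate_\tauo$ in Fig.~\ref{fig:protocol-term-supi}, $\accept_\tauo^\ID$ selects the $\ID$ branch for $\bauth_\hn^{j_0}$ (using that by Proposition~\ref{prop:invs}~\ref{a5} at most one $\accept_\tauo^{\ID_i}$ holds, so the nested conditional evaluates to $\ID$). By the monotonicity invariant, $\instate_\tau(\bauth_\hn^{j_0}) = \ID$, and since $\nonce^{j_0} = g(\inframe_\taut) = \instate_\tau(\eauth_\ue^\ID)$ (as $\eauth_\ue^\ID$ is not touched between $\taut$ and $\tau$ except by the current action setting it to $g(\inframe_\taut)$), we get $\auth_\tau(\ID, j_0)$. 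Second, $\ai = \cuai_\ID(j,1)$: similarly $\sucauth_\tau(\ID)$ implies $\accept_\tau^\ID$ and $\instate_\tau(\eauth_\ue^\ID) = \pi_1(g(\inframe_\tau))$; applying \ref{acc3} of Lemma~\ref{lem:accept-charac} yields $\tauo = \_, \cnai(j_0,0) \popre \tau$ with $\accept_\tauo^\ID$ and $\pi_1(g(\inframe_\tau)) = \nonce^{j_0}$. At $\cnai(j_0,0)$, $\accept_\tauo^\ID$ sets $\upstate_\tauo(\bauth_\hn^{j_0}) \equiv \ID$ (again Fig.~\ref{fig:protocol-guti-refresh}, using Proposition~\ref{prop:invs}~\ref{a6}). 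Hence $\instate_\tau(\bauth_\hn^{j_0}) = \ID$ and $\nonce^{j_0} = \pi_1(g(\inframe_\tau)) = \instate_\tau(\eauth_\ue^\ID)$, giving $\auth_\tau(\ID,j_0)$.

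I expect the main obstacle to be bookkeeping rather than conceptual: making precise and verifying the small monotonicity invariant on $\bauth_\hn^j$ (that it is never overwritten from an identity value), and carefully tracking which prefix actions may modify $\eauth_\ue^\ID$ between the user's first message and $\tau$, so that the equality $\instate_\tau(\eauth_\ue^\ID) = \nonce^{j_0}$ really does hold at the final instant and not just at some intermediate point. Both reduce to straightforward inductions using the symbolic state update descriptions in Figures~\ref{fig:protocol-term-supi} and~\ref{fig:protocol-guti-refresh} together with the uniqueness-of-acceptance facts in Proposition~\ref{prop:invs}, so none of this should require new axioms — only the $\textsc{euf-mac}^j$, $\textsc{cr}^j$ and $\ax{EQIndep}$ machinery already invoked inside Lemma~\ref{lem:accept-charac}. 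The argument is then essentially symmetric to the proof of Lemma~\ref{lem:auth-serv-net} given above, with \ref{acc2}/\ref{acc3} playing the role that \ref{acc1}/\ref{acc4} played there.
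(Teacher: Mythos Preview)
Your proof is correct and follows the same approach as the paper: induction on $\tau$, with the two substantive cases $\npuai{2}{\ID}{j}$ and $\cuai_\ID(j,1)$ handled via \ref{acc2} and \ref{acc3} respectively. The only difference is that you dispatch the cases where $\bauth_\hn^j$ is modified ($\pnai(j,1)$ and $\cnai(j,0)$) using a write-once invariant on $\bauth_\hn^j$, whereas the paper instead analyzes those cases directly by arguing that $\eauth_\ue^\ID \ne \nonce^j$ at those points; both arguments work and yours is slightly more streamlined.
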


\begin{proof}
  We prove this by induction on $\tau$. First, for $\tau = \epsilon$ we have that for every $j$, $\instate_\tau(\eauth_\hn^j) = \bot$. Therefore the property holds. Let $\tau = \tau_0,\ai$, assume that:
  \[
    \textstyle
    \forall \ID \in \iddom,\;\;
    \sucauth_{\tau_0}(\ID)
    \;\ra\;
    \bigvee_{j \in \mathbb{N}}
    \auth_{\tau_0}(\ID,j)
  \]
  If for every $j_0$ we have:
  \begin{equation}
    \label{eq:phcase0}
    \upstate_\tau(\bauth_\hn^{j_0}) = \bot
    \qquad\qquad
    \forall \ID \in \iddom,\;
    \upstate_\tau(\eauth_\ue^{\ID}) = \bot
  \end{equation}
  then we have authentication of the network by the user at $\tau$. Therefore we only need to show that authentication holds for $\tau$ in the cases where $\ai$ is equal to $\pnai(j,0)$, $\pnai(j,1)$, $\npuai{2}{\ID}{j}$, $\cnai(j,0)$ or $\cuai_\ID(j,1)$.
  \begin{itemize}
  \item \textbf{Case $\ai = \pnai(j,0)$.} we are going to show that for all $\ID \in \iddom, j_0 \in \mathbb{N}$, we have:
    \begin{equation}
      \label{eq:phcase1}
      \left(
        \sucauth_\tau(\ID)
        \;\wedge\; \auth_\tau(\ID,j_0)
      \right)
      \;\leftrightarrow\;
      \left(
        \sucauth_{\tau_0}(\ID)
        \;\wedge\; \auth_{\tau_0}(\ID,j_0)
      \right)
    \end{equation}
    which implies the wanted result. Let $\ID \in \iddom, j_0 \in \mathbb{N}$.
    \begin{itemize}
    \item In the case $j_0 \ne j$, using the validity of $\tau$, we know that $\pnai(j,1) \not \popre \tau$. This implies that $\instate_\tau(\bauth_\hn^{j}) = \bot \ne \ID$, which in turn implies that $\neg \auth_\tau(\ID,j)$. By a similar reasoning, we get that $\pnai(j,1) \not \in \tau_0$, therefore $\instate_{\tau_0}(\bauth_\hn^{j}) \ne \ID$, and by consequence $\neg \auth_{\tau_0}(\ID,j)$. This concludes the proof of the validity of \eqref{eq:phcase1}.

    \item In the case $j_0 = j$, since $\instate_\tau(\bauth_\hn^{j}) = \bot \ne \ID$ we know that $\auth_\tau(\ID,j) = \false$. Similarly $\auth_{\tau_0}(\ID,j) = \false$. Therefore \eqref{eq:phcase1} holds for $j_0 = j$.    \end{itemize}

  \item \textbf{Case $\ai = \pnai(j,1)$.} Here also we are going to show that \eqref{eq:phcase1} holds for all $j_0$. Let $\ID \in \iddom, j_0 \in \mathbb{N}$.
    \begin{itemize}
    \item If $j_0 \ne j$, we have $\upstate_\tau(\bauth_\hn^{j_0}) = \bot$ and $\upstate_\tau(\eauth_\ue^{\ID}) = \bot$. It follows that \eqref{eq:phcase1} holds.

    \item If $j_0 = j$, using the validity of $\tau$ we know that $\instate_{\tau_0}(\bauth_\hn^j) \equiv \bot$. From $\textsf{EQConst}$ it follows that $\instate_{\tau_0}(\bauth_\hn^j) \ne \ID$, and therefore $\auth_{\tau_0}(\ID,j) = \false$.

      To conclude this case, we only need to show that $(\sucauth_\tau(\ID) \wedge \auth_{\tau}(\ID,j)) = \false$.

      First, assume that there never was a call to $\npuai{2}{\ID}{\_}$, i.e. $\npuai{2}{\ID}{\_} \not \popre \tau_1$. Then $\instate_\tau(\eauth_\ue^\ID) \equiv \bot$, and therefore $\sucauth_\tau(\ID) = \false$.

      Otherwise, let $\tauo = \_,\npuai{2}{\ID}{j_0}$ be the latest call to $\npuai{2}{\ID}{\_}$, i.e. $\tauo \not \potau \npuai{2}{\ID}{\_}$. By validity of $\tau$, we know that there exists $\tautt$ such that $\tautt = \_,\npuai{1}{\ID}{j_0}$. We know that $\instate_\tau(\eauth_\ue^\ID) \equiv \upstate_\tauo(\eauth_\ue^\ID)$. Hence:
      \begin{alignat*}{2}
        \sucauth_\tau(\ID) &\;\;\ra\;\;&&
        \upstate_\tauo(\eauth_\ue^\ID) \ne \fail
        \tag{by definition of $\sucauth_\tau(\ID)$}\\
        &\;\;\ra\;\;&&
        \accept_\tauo^\ID
        \tag{by definition of $\upstate_\tauo(\eauth_\ue^\ID)$}\\
        &\;\;\ra\;\;&&
        \accept_\tauo^\ID \wedge
        \bigvee_{
          \taut = \_,\pnai(j_1,1)
          \atop{\tautt \potau \taut \potau \tauo}}
        \left(g(\inframe_\tautt) = \nonce^{j_1}\right)
        \tag{by \ref{acc2}}\displaybreak[0]\\
        &\;\;\ra\;\;&&
        \bigvee_{
          \taut = \_,\pnai(j_1,1)
          \atop{\tautt \potau \taut \potau \tauo}}
        \left(\upstate_\tauo(\eauth_\ue^\ID) = \nonce^{j_1}\right)\\
        &\;\;\ra\;\;&&
        \bigvee_{
          \taut = \_,\pnai(j_1,1)
          \atop{\tautt \potau \taut \potau \tauo}}
        \left(\instate_\tau(\eauth_\ue^\ID) = \nonce^{j_1}\right)
        \tag{since $\instate_\tau(\eauth_\ue^\ID) \equiv \upstate_\tauo(\eauth_\ue^\ID)$}
      \end{alignat*}
      Since $\tauo \popre \tau$ we know that for every $\taut = \_,\pnai(j_1,1) \in \{\taut \mid \tautt \popre \taut \potau \tauo\}$, $j_1 \ne j$, and therefore using $\ax{EQIndep}$ we know that $(\nonce^{j_1} = \nonce^j) \lra \false$. Therefore:
      \begin{alignat*}{2}
        \sucauth_\tau(\ID) \wedge \auth_{\tau}(\ID,j)
        &\;\;\ra\;\;&&
        \bigvee_{
          \taut =\_, \pnai(j_1,1)
          \atop{\tautt \potau \taut \popre \tauo}}
        \left(
          \instate_\tau(\eauth_\ue^\ID) = \nonce^{j_1} \wedge
          \nonce^{j_1} \ne \nonce^j \wedge
          \auth_{\tau}(\ID,j)
        \right)\\
        \intertext{And by definition of $\auth_{\tau}(\ID,j)$:}
        &\;\;\ra\;\;&&
        \bigvee_{
          \taut = \_,\pnai(j_1,1)
          \atop{\tautt \potau \taut \popre \tauo}}
        \left(
          \instate_\tau(\eauth_\ue^\ID) = \nonce^{j_1} \wedge
          \nonce^{j_1} \ne \nonce^j \wedge
          \instate_\tau(\eauth_\ue^\ID) = \nonce^j
        \right)\\
        &\;\;\ra\;\;&&
        \false
      \end{alignat*}
    \end{itemize}

  \item \textbf{Case $\ai = \npuai{2}{\ID}{j}$.} For all $\ID_0 \ne \ID$ and for all $j_0 \in \mathbb{N}$, it is trivial that:
    \begin{equation*}
      \left(
        \sucauth_\tau(\ID_0)
        \;\wedge\; \auth_\tau(\ID_0,j_0)
      \right)
      \;\leftrightarrow\;
      \left(
        \sucauth_{\tau_0}(\ID_0)
        \;\wedge\; \auth_{\tau_0}(\ID_0,j_0)
      \right)
    \end{equation*}
    Therefore we only need to show that:
    \[
      \textstyle
      \sucauth_\tau(\ID)
      \;\ra\;
      \bigvee_{j \in \mathbb{N}}
      \auth_\tau(\ID,j)
    \]
    First, we observe that:
    \begin{alignat*}{2}
      \sucauth_\tau(\ID)
      &\;\;\ra\;\;&&
      \accept_\tau^\ID\\
      &\;\;\ra\;\;&&
      \bigvee_{\tauo = \_,\pnai(j_0,1)
        \atop{\taut = \_,\npuai{1}{\ID}{j}
          \atop{\taut \potau \tauo}}}
      \left(
        \begin{array}[c]{l}
          \accept_\tauo^\ID \;\wedge\;
          g(\inframe_\taut) = \nonce^{j_0} \;\wedge\\
          \pi_1(g(\inframe_\tauo)) =
          \enc{
            \spair{\ID}
            {\instate_\taut(\sqn_\ue^\ID)}}
          {\pk_\hn}{\enonce^j}
        \end{array}
      \right)
      \tag{by \ref{acc2}}
    \end{alignat*}

    Let $\tauo = \_,\pnai(j_0,1)$, $\taut = \_,\npuai{1}{\ID}{j}$ such that $\taut \potau \tauo$. Let $\tautt = \_,\pnai(j_0,1)$, by validity of $\tau$ we know that $\tautt \potau \tauo$. Moreover, if $\taut \potau \tautt$ then by \ref{a1} we have $\nonce^{j_0} \not \in \st(\inframe_\taut)$, and therefore using $\ax{EQIndep}$ we obtain that $g(\inframe_\taut) \ne \nonce^{j_0}$. Hence:
    \begin{alignat*}{2}
      \sucauth_\tau(\ID)
      &\;\;\ra\;\;&&
      \bigvee_{
        \tauo = \_,\pnai(j_0,1)
        \atop{\taut = \_,\npuai{1}{\ID}{j}
          \atop{\tautt = \_,\pnai(j_0,0)
            \atop{\tautt \potau \taut \potau \tauo}}}}
      \left(
        \begin{array}[c]{l}
          \accept_\tau^\ID \;\wedge\;
          \accept_\tauo^\ID \;\wedge\;
          g(\inframe_\taut) = \nonce^{j_0}\;\wedge\\
          \pi_1(g(\inframe_\tauo)) =
          \enc{
            \spair{\ID}
            {\instate_\taut(\sqn_\ue^\ID)}}
          {\pk_\hn}{\enonce^j}
        \end{array}
      \right)
    \end{alignat*}
    Moreover, we know that:
    \begin{mathpar}
      \accept_\tauo^\ID \;\ra\;\upstate_\tauo(\bauth_\hn^{j_0}) = \ID

      \accept_\tau^\ID \;\ra\;
      \cstate_\taut(\eauth_\ue^\ID)  = \upstate_\taut(\bauth_\ue^\ID)
    \end{mathpar}
    We represent graphically all the information we have below:
    \begin{center}
      \begin{tikzpicture}
        [dn/.style={inner sep=0.2em,fill=black,shape=circle},
        sdn/.style={inner sep=0.15em,fill=white,draw,solid,shape=circle},
        sl/.style={decorate,decoration={snake,amplitude=1.6}},
        dl/.style={dashed},
        pin distance=1em,
        every pin edge/.style={thin}]

        \draw[thick] (0,0)
        node[left=1.3em] {$\tau:$}
        -- ++(0.5,0)
        node[dn,pin={above:{$\tautt = \_,\pnai(j_0,0)$}}]
        (a) {}
        -- ++(3.5,0)
        node[dn,pin={above:{$\taut = \_,\npuai{1}{\ID}{j}$}}]
        (b) {}
        -- ++(3.5,0)
        node[dn,pin={above:{$\tauo = \_,\pnai(j_0,1)$}}]
        (c) {}
        -- ++(5,0)
        node[dn,pin={above:{$\tau = \_,\npuai{2}{\ID}{j}$}}]
        (d) {};

        \draw[dl] (a) -- ++ (0,-3.3)
        node[sdn] (a1) {}
        node[left,xshift=-0.3em] {$\nonce^{j_0}$};

        \draw[dl] (b) -- ++ (0,-0.8)
        node[sdn] (b1) {}
        node[left] {
          $\begin{alignedat}{2}
            \upstate_\taut(\bauth_\ue^\ID) & = && g(\inframe_\taut)\\
            & = && \nonce^{j_0}
          \end{alignedat}$};

        \draw[dl] (c) -- ++ (0,-2.4)
        node[sdn] (c1) {}
        node[left,xshift=-0.3em] {$\upstate_\tauo(\bauth_\hn^{j_0}) = \ID$};

        \draw[dl] (d) -- ++ (0,-3.3)
        node[sdn] (d4) {};

        \path (d) -- ++ (0,-2.4)
        node[sdn] (d3) {};

        \path (d) -- ++ (0,-1.6)
        node[sdn] (d2) {}
        node[left] {
          $\begin{alignedat}{2}
            \upstate_\taut(\eauth_\ue^\ID) & = && \upstate_\taut(\bauth_\ue^\ID)\\
            & = && \nonce^{j_0}
          \end{alignedat}$};

        \path (d) -- ++ (0,-0.8)
        node[sdn] (d1) {};

        \draw[sl] (a1) -- (d4);
        \draw[sl] (b1) -- (d1);
        \draw[sl] (c1) -- (d3);
      \end{tikzpicture}
    \end{center}
    It follows that:
    \[
      \left(
        \begin{array}[c]{l}
          \accept_\tau^\ID \;\wedge\;
          \accept_\tauo^\ID \;\wedge\;
          g(\inframe_\taut) = \nonce^{j_0}\;\wedge\\
          \pi_1(g(\inframe_\tauo)) =
          \enc{
            \spair{\ID}
            {\instate_\taut(\sqn_\ue^\ID)}}
          {\pk_\hn}{\enonce^j}
        \end{array}
      \right)
      \;\;\ra\;\;
      \auth_\tau(\ID,j_0)
    \]
    Hence:
    \[
      \sucauth_\tau(\ID)
      \;\;\ra\;\;
      \bigvee_{
        \tauo = \_,\pnai(j_0,1)
        \atop{\taut = \_,\npuai{1}{\ID}{j}
          \atop{\tautt = \_,\pnai(j_0,0)
            \atop{\tautt \potau \taut \potau \tauo}}}}
      \auth_\tau(\ID,j_0)
      \;\;\ra\;\;
      \bigvee_{j_0 \in \mathbb{N}}
      \auth_\tau(\ID,j_0)
    \]

  \item \textbf{Case $\ai = \cnai(j,0)$.} For all $\ID \in \iddom$ and for all $j_0 \in \mathbb{N}$ such that $j_0 \ne j$ we have:
    \begin{mathpar}
      \sucauth_\tau(\ID) \equiv
      \sucauth_{\tau_0}(\ID)

      \auth_\tau(\ID,j_0) \equiv
      \auth_{\tau_0}(\ID,j_0)
    \end{mathpar}
    Hence:
    \begin{equation*}
      \left(
        \sucauth_\tau(\ID)
        \;\wedge\; \auth_\tau(\ID,j_0)
      \right)
      \;\leftrightarrow\;
      \left(
        \sucauth_{\tau_0}(\ID)
        \;\wedge\; \auth_{\tau_0}(\ID,j_0)
      \right)
    \end{equation*}
    It only remains the case $j_0 = j$. We know that $\instate_{\tau_0}(\bauth_\hn^j) \equiv \bot$, therefore $\sucauth_{\tau_0}(\ID,j) = \false$, which in turn implies that:
    \[
      \left(
        \sucauth_{\tau_0}(\ID)
        \;\wedge\; \auth_{\tau_0}(\ID,j)
      \right)
      = \false
    \]
    Moreover:
    \begin{equation*}
      \auth_\tau(\ID,j)
      \;\ra\;
      \nonce^j = \instate_\tau(\eauth_\ue^\ID)
      \;\ra\;
      \nonce^j = \instate_\tau(\eauth_\ue^\ID)
    \end{equation*}
    Using \ref{a1} it is easy to show that $\nonce^j \not \in \st(\instate_\tau(\eauth_\ue^\ID))$, therefore we have $\neg \auth_\tau(\ID,j)$. This concludes this case.

  \item \textbf{Case $\ai = \cuai_\ID(j,1)$.} For all $\ID_0 \ne \ID$ and for all $j_0 \in \mathbb{N}$, it is trivial that:
    \begin{equation*}
      \left(
        \sucauth_\tau(\ID_0)
        \;\wedge\; \auth_\tau(\ID_0,j_0)
      \right)
      \;\leftrightarrow\;
      \left(
        \sucauth_{\tau_0}(\ID_0)
        \;\wedge\; \auth_{\tau_0}(\ID_0,j_0)
      \right)
    \end{equation*}
    Therefore we only need to show that:
    \[
      \textstyle
      \sucauth_\tau(\ID)
      \;\ra\;
      \bigvee_{i \in \mathbb{N}}
      \auth_\tau(\ID,i)
    \]
    Let $\key \equiv \key^\ID$. We observe that:
    \begin{alignat*}{2}
      \sucauth_\tau(\ID)
      &\;\;\ra\;\;&&
      \instate_\tau(\eauth_\ue^\ID) \ne \fail\\
      &\;\;\ra\;\;&&
      \accept_\tau^\ID\\
      &\;\;\ra\;\;&&
      \bigvee_{\tauo = \_, \cnai(j_0,0) \popre \tau}
      \begin{array}[c]{l}
        \accept_\tauo^\ID \;\wedge\;
        \pi_1(g(\inframe_\tau)) = \nonce^{j_0} \;\wedge\\
        \pi_2(g(\inframe_\tau)) =
        \instate_\tauo(\sqn_\hn^\ID) \oplus \ow{\nonce^{j_0}}{\key}
      \end{array}
      \tag{by \ref{acc3}}
    \end{alignat*}
    Let $\tauo = \cnai(j_0,0)$ such that $\tauo \potau \tau$. Then:
    \[
      \left(\pi_1(g(\inframe_\tau)) = \nonce^{j_0} \wedge \accept_\tau^\ID \right)
      \;\ra\;
      \instate_\tau(\eauth_\ue^\ID) = \nonce^{j_0}
    \]
    Moreover using \ref{a7} we know that $\accept^\ID_\tauo \ra \instate_\tauo(\bauth_\hn^j) = \ID$. Using the validity of $\tau$, we can easily show that for all $\tauo \potau \tau'$ we have $\upstate_{\tau'}(\bauth_\hn^j) \equiv \bot$. We deduce that $\accept^\ID_\tauo \ra \instate_\tauo(\bauth_\hn^j) = \ID$. Hence:
    \begin{align*}
      \sucauth_\tau(\ID)
      &&&\;\;\ra\;\;&&
      \bigvee_{\tauo = \_,\cnai(j_0,0) \popre \tau}
      \auth_\tau(\ID,j_0)
      &\;\;\ra\;\;&&
      \bigvee_{\tauo \popre \tau}
      \auth_\tau(\ID,j_0)
    \end{align*}
  \end{itemize}
\end{proof}

\subsection{Proof of Lemma~\ref{lem:auth-net-body}}
We give the proof of Lemma~\ref{lem:auth-net-body}, which relies on Lemma~\ref{lem:auth-net}.

\begin{proof}
  Let $\tau$ be a valid symbolic trace. First, observe that $\instate_\tau(\eauth_\ue^\ID) = \nonce^j$ implies that $\instate_\tau(\eauth_\ue^\ID) \ne \fail$ and that $\instate_\tau(\eauth_\ue^\ID) \ne \bot$.
  % Moreover, in any valid symbolic trace $\tau'$ we at most one occurrence of $\cnai(j,0)$ or $\pnai(j,1)$. By consequence, there is at most one ``write'' in $\bauth_\hn^j$. This means that if a value is written in $\bauth_\hn^j$ then it will always stay the same. Formally, it is easy to show that for every $\tau'' \popreleq \tau'$:
  % \[
  %   \instate_{\tau''}(\bauth_\hn^j) \ne \bot \ra
  %   \instate_{\tau'}(\bauth_\hn^j) = \instate_{\tau''}(\bauth_\hn^j)
  % \]
  Using the remark above and Lemma~\ref{lem:auth-net} we get that:
  \begin{alignat*}{2}
    \instate_\tau(\eauth_\ue^\ID) = \nonce^j
    &\;\ra\;\;&&
    \instate_\tau(\eauth_\ue^\ID) \ne \fail \wedge
    \instate_\tau(\eauth_\ue^\ID) \ne \bot \\
    &\;\ra\;\;&&
    \sucauth_\tau(\ID)\\
    &\;\ra\;\;&&
    \bigvee_{j \in \mathbb{N}}
    \auth_\tau(\ID,j)
    \tag{By Lemma~\ref{lem:auth-net}}\\
    &\;\ra\;\;&&
    \instate_\tau(\bauth_\hn^j) = \ID
    \tag{Since $(\nonce^j = \instate_\tau(\eauth_\ue^\ID) \wedge  \nonce^{j'} = \instate_\tau(\eauth_\ue^\ID)) = \false$ if $j \ne j'$.}\\
    &\;\ra\;\;&&    \bigvee_{\tau' \popreleq \tau}\;
    \instate_{\tau'}(\bauth_\hn^j) = \ID
  \end{alignat*}
\end{proof}

    % \sucauth_\tau(\ID) \;\;\equiv\;\;
    % \instate_\tau(\eauth_\ue^\ID) \ne \fail
    % \wedge
    % \instate_\tau(\eauth_\ue^\ID) \ne \bot\\
    % \auth_\tau(\ID,j) \;\;\equiv\;\;
    % \instate_\tau(\bauth_\hn^j) = \ID \wedge
    % \nonce^j = \instate_\tau(\eauth_\ue^\ID)

\subsection{Injective Authentication of the Network by the User}
We actually can show that the authentication of the network by the user is \emph{injective}.
\begin{lemma}
  \label{lem:inj-auth-net}
  For all valid symbolic trace $\tau$, $\inframe_\tau$ guarantees \emph{injective} authentication of the network by the user. For all $\ID \in \iddom$ and $j \in \mathbb{N}$, we define the formula:
  \begin{gather*}
    \textstyle
    \injauth_\tau(\ID,j) \;\;\equiv\;\;
    \auth_\tau(\ID,j)
    \;\wedge\; \bigwedge_{i \ne j}
    \neg\auth_\tau(\ID,i)
  \end{gather*}
  Then we have:
  \[
    \textstyle
    \forall \ID \in \iddom,\;\;
    \sucauth_\tau(\ID)
    \;\ra\;
    \bigvee_{j \in \mathbb{N}}
    \injauth_\tau(\ID,j)
  \]
\end{lemma}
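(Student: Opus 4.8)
The plan is to bootstrap Lemma~\ref{lem:inj-auth-net} directly from Lemma~\ref{lem:auth-net}, whose statement already gives us $\sucauth_\tau(\ID) \ra \bigvee_{j} \auth_\tau(\ID,j)$. So it suffices to show that $\auth_\tau(\ID,j)$ can hold for \emph{at most one} $j$: i.e.\ for all $j \ne j'$ we have $\auth_\tau(\ID,j) \wedge \auth_\tau(\ID,j') \peq \false$. Indeed, unfolding $\auth_\tau(\ID,j) \equiv (\instate_\tau(\bauth_\hn^j) = \ID \wedge \nonce^j = \instate_\tau(\eauth_\ue^\ID))$, the conjunction over $j$ and $j'$ forces $\nonce^j = \instate_\tau(\eauth_\ue^\ID) = \nonce^{j'}$, hence $\nonce^j = \nonce^{j'}$. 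Since $j \ne j'$, the two nonces $\nonce^j$ and $\nonce^{j'}$ are syntactically distinct names, so the $\ax{Fresh}$ / $\ax{EQIndep}$-style axioms (two distinct nonces are equal only with negligible probability, formally $\eq{\nonce^j}{\nonce^{j'}} \peq \false$) give a contradiction. This is exactly the ``mutual exclusion'' observation already used in the final rewriting step of the proof of Lemma~\ref{lem:auth-net-body}.

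Concretely, the key steps are: first, by Lemma~\ref{lem:auth-net}, derive $\sucauth_\tau(\ID) \ra \bigvee_{j \in \mathbb{N}} \auth_\tau(\ID,j)$. Second, prove the disjointness claim $\bigwedge_{j \ne j'} \neg(\auth_\tau(\ID,j) \wedge \auth_\tau(\ID,j'))$ using the nonce-disequality axiom as above; note this is a finite conjunction in practice since a given trace $\tau$ mentions only finitely many session numbers. Third, combine: from $\bigvee_j \auth_\tau(\ID,j)$ and pairwise disjointness, standard Boolean reasoning (encoded via the $\symite$ and equality axioms in $\axioms_{\textsf{ite}}$, $\axioms_{\textsf{eq}}$) yields $\bigvee_j \big(\auth_\tau(\ID,j) \wedge \bigwedge_{i \ne j} \neg \auth_\tau(\ID,i)\big)$, which is precisely $\bigvee_j \injauth_\tau(\ID,j)$. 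Chaining the implications gives $\sucauth_\tau(\ID) \ra \bigvee_j \injauth_\tau(\ID,j)$.

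There is essentially no new obstacle here: all the real work — tracking acceptance conditions, invariants \ref{a1}--\ref{a8}, and the inductive argument over $\tau$ — was already done for Lemma~\ref{lem:auth-net}. The only mildly delicate point is the purely propositional manipulation turning ``at least one holds, at most one holds'' into ``exactly one holds'', which must be carried out inside the Bana--Comon logic rather than in informal classical logic; but since the logic contains the full Boolean algebra of conditionals via $\axioms_{\textsf{ite}}$ and equality reasoning, this is routine. One should just be careful that the index set is genuinely finite for the trace at hand (so the big conjunction $\bigwedge_{i \ne j}$ is a well-formed formula), which follows because $\cframe_\tau$ is a finite term and only finitely many $\nonce^j$ occur in it.
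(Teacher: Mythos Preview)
Your proposal is correct and follows essentially the same approach as the paper: invoke Lemma~\ref{lem:auth-net} for existence, prove pairwise mutual exclusion of $\auth_\tau(\ID,j)$ and $\auth_\tau(\ID,j')$ via $\nonce^j = \instate_\tau(\eauth_\ue^\ID) = \nonce^{j'}$ and the nonce-disequality axiom $\ax{EQIndep}$, then combine by Boolean reasoning. The paper's version additionally carries the $\sucauth_\tau(\ID)$ hypothesis through the mutual-exclusion step and does a small case split on whether each session has started, but this is inessential and your unconditional disjointness argument is if anything cleaner.
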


\begin{proof}
  First, we show that for $\ID \in \iddom$ and $i_0,i_1 \in \mathbb{N}$ with $i_0 \ne i_1$:
  \begin{equation}
    \sucauth_\tau(\ID) \ra
    \left(
      \neg \auth_\tau(\ID,i_0)
      \vee
      \neg \auth_\tau(\ID,i_1)
    \right)\label{eq:inj1aa}
  \end{equation}
  Indeed:
  \begin{alignat*}{2}
    &&&\sucauth_\tau(\ID) \wedge
    \auth_\tau(\ID,i_0)
    \wedge
    \auth_\tau(\ID,i_1)\\
    &\;\ra\;&&
    \sucauth_\tau(\ID) \wedge
    \instate_\tau(\nonce^{i_0}_\hn) = \instate_\tau(\eauth_\ue^\ID) \wedge
    \instate_\tau(\nonce^{i_1}_\hn) = \instate_\tau(\eauth_\ue^\ID)\\
    &\;\ra\;&&
    \sucauth_\tau(\ID) \wedge
    \begin{dcases}
      \nonce^{i_0} = \instate_\tau(\eauth_\ue^\ID) \wedge
      \nonce^{i_1} = \instate_\tau(\eauth_\ue^\ID) & \text{ if } \pnai(i_0,0) \in \tau \text{ and } \pnai(i_1,0) \in \tau\\
      \nonce^{i_0} = \instate_\tau(\eauth_\ue^\ID) \wedge
      \bot = \instate_\tau(\eauth_\ue^\ID) & \text{ if } \pnai(i_0,0) \in \tau \text{ and } \pnai(i_1,0) \not \in \tau\\
      \bot = \instate_\tau(\eauth_\ue^\ID) \wedge
      \nonce^{i_1} = \instate_\tau(\eauth_\ue^\ID) & \text{ if } \pnai(i_0,0) \not \in \tau \text{ and } \pnai(i_1,0) \in \tau\\
      \bot = \instate_\tau(\eauth_\ue^\ID) \wedge
      \bot = \instate_\tau(\eauth_\ue^\ID) & \text{ if } \pnai(i_0,0) \not \in \tau \text{ and } \pnai(i_1,0) \not \in \tau
    \end{dcases}
  \end{alignat*}
  Using $\ax{EQIndep}$, we know that $\nonce^{i_1} \ne \nonce^{i_0}$. Therefore:
  \[
    \left(
      \sucauth_\tau(\ID) \wedge
      \nonce^{i_0} = \instate_\tau(\eauth_\ue^\ID) \wedge
      \nonce^{i_1} = \instate_\tau(\eauth_\ue^\ID)
    \right)
    \ra \false
  \]
  Since $\sucauth_\tau(\ID) \ra \instate_\tau(\eauth_\ue^\ID) \ne \bot$, we know that:
  \[
    \left(\sucauth_\tau(\ID) \wedge \bot = \instate_\tau(\eauth_\ue^\ID) \right) \ra \false
  \]
  And therefore:
  \begin{gather*}
    \left(
      \sucauth_\tau(\ID) \wedge
      \nonce^{i_0} = \instate_\tau(\eauth_\ue^\ID) \wedge
      \bot = \instate_\tau(\eauth_\ue^\ID)
    \right) \ra \false\\
    \left(
      \sucauth_\tau(\ID) \wedge
      \bot = \instate_\tau(\eauth_\ue^\ID)\wedge
      \nonce^{i_1} = \instate_\tau(\eauth_\ue^\ID)
    \right) \ra \false\\
    \left(
      \sucauth_\tau(\ID) \wedge
      \bot = \instate_\tau(\eauth_\ue^\ID) \wedge
      \bot = \instate_\tau(\eauth_\ue^\ID)
    \right) \ra \false
  \end{gather*}
  This concludes the proof of \eqref{eq:inj1aa}. From Lemma~\ref{lem:auth-net} we know that:
  \[
    \textstyle
    \forall \ID \in \iddom,\;\;
    \sucauth_\tau(\ID)
    \;\ra\;
    \bigvee_{j \in \mathbb{N}}
    \auth_\tau(\ID,j)
  \]
  Moreover, using \eqref{eq:inj1aa} we have that for every $\ID \in \iddom, j \in \mathbb{N}$:
  \[
    \textstyle
    \sucauth_\tau(\ID)
    \wedge
    \auth_\tau(\ID,j)
    \;\ra\;
    \bigvee_{i \ne j}
    \neg\auth_\tau(\ID,i)
  \]
  We deduce that:
  \[
    \textstyle
    \forall \ID \in \iddom,\;\;
    \sucauth_\tau(\ID)
    \;\ra\;
    \bigvee_{j \in \mathbb{N}}
    \injauth_\tau(\ID,j)
    \qedhere
  \]
\end{proof}

\begin{proposition}
  \label{prop:injauth-charac}
  For every valid symbolic trace $\tau$, for every $j_o \in \mathbb{N}$:
  \[
    \injauth_\tau(\ID,j_0)
    \;\lra\;
    \nonce^{j_0} = \instate_\tau(\eauth_\ue^\ID)
  \]
\end{proposition}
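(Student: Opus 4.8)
The plan is to prove the two implications of the equivalence separately. The left-to-right direction is immediate by unfolding definitions, and the right-to-left direction is where the authentication lemmas do the work.

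For the forward direction, suppose $\injauth_\tau(\ID,j_0)$ holds. By definition $\injauth_\tau(\ID,j_0) \ra \auth_\tau(\ID,j_0)$, and $\auth_\tau(\ID,j_0)$ has $\nonce^{j_0} = \instate_\tau(\eauth_\ue^\ID)$ as one of its conjuncts, so this direction follows from the propositional reasoning encoded in $\axioms$ (a couple of applications of the Boolean if-then-else axioms).

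For the backward direction, assume $H : \nonce^{j_0} = \instate_\tau(\eauth_\ue^\ID)$ holds, and proceed in three steps. First I would derive $\sucauth_\tau(\ID)$: rewriting $\instate_\tau(\eauth_\ue^\ID)$ into $\nonce^{j_0}$ inside $\sucauth_\tau(\ID)$ via $R$, it suffices to establish $\neg\eq{\nonce^{j_0}}{\fail}$ and $\neg\eq{\nonce^{j_0}}{\bot}$, both of which are instances of $\ax{EQIndep}$ since $\nonce^{j_0}$ occurs syntactically in neither $\fail$ nor $\bot$. Second I would invoke Lemma~\ref{lem:auth-net} (equivalently Lemma~\ref{lem:inj-auth-net}), which yields $\sucauth_\tau(\ID) \ra \bigvee_{j \in \mathbb{N}} \auth_\tau(\ID,j)$; this disjunction is effectively finite, since for every $j$ not occurring in $\tau$ we have $\instate_\tau(\bauth_\hn^j) \equiv \bot$ and hence $\auth_\tau(\ID,j)$ reduces to $\false$. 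Third I would collapse the disjunction using $H$: for each $j \ne j_0$, the conjunct $\auth_\tau(\ID,j)$ entails $\nonce^j = \instate_\tau(\eauth_\ue^\ID)$, which together with $H$ forces $\nonce^j = \nonce^{j_0}$, and by $\ax{EQIndep}$ this is $\false$. Hence only the $j = j_0$ disjunct can hold, i.e. $\auth_\tau(\ID,j_0)$ holds. The very same computation (for $i \ne j_0$, $\auth_\tau(\ID,i) \wedge H \ra \nonce^i = \nonce^{j_0} \ra \false$) establishes $\bigwedge_{i \ne j_0}\neg\auth_\tau(\ID,i)$, so combining with $\auth_\tau(\ID,j_0)$ we conclude $\injauth_\tau(\ID,j_0)$.

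There is no deep obstacle here: the proposition is essentially a repackaging of Lemma~\ref{lem:auth-net}, once one observes that the hypothesis $\nonce^{j_0} = \instate_\tau(\eauth_\ue^\ID)$ simultaneously implies $\sucauth_\tau(\ID)$ and pins down which network session the user authenticated. The only points needing care are the bookkeeping of the (finite, modulo trivially false disjuncts) disjunction in the Bana--Comon logic and checking the side conditions of $\ax{EQIndep}$, namely that a session nonce $\nonce^{j_0}$ is syntactically distinct from the constants $\fail$, $\bot$ and from every other session nonce $\nonce^j$ with $j \ne j_0$.
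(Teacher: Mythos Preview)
Your proof is correct and essentially matches the paper's. The only cosmetic difference is that the paper invokes Lemma~\ref{lem:inj-auth-net} directly (obtaining a disjunction of $\injauth_\tau(\ID,j_1)$ terms and then eliminating the cases $j_1 \ne j_0$ via $\ax{EQIndep}$), whereas you invoke Lemma~\ref{lem:auth-net} and then rebuild the injectivity conjuncts $\bigwedge_{i\ne j_0}\neg\auth_\tau(\ID,i)$ from $H$; since the proof of Lemma~\ref{lem:inj-auth-net} is exactly this computation, the two arguments are the same up to inlining.
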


\begin{proof}
  To do this we show both directions. The first direction is trivial:
  \[
    \injauth_\tau(\ID,j_0)
    \;\ra\;
    \auth_\tau(\ID,j_0)
    \;\ra\;
    \left(\nonce^{j_0} = \instate_\tau(\eauth_\ue^\ID)\right)
  \]
  We now prove the converse direction:
  \begin{alignat*}{2}
    \left(\nonce^{j_0} = \instate_\tau(\eauth_\ue^\ID)\right)
    &\;\ra\;&&
    \sucauth_\tau(\ID)
    \tag{Using $\ax{EQIndep}$}\\
    &\;\ra\;&&
    \bigvee_{j_1 \in \mathbb{N}}
    \injauth_\tau(\ID,j_1)
    \tag{Lemma~\ref{lem:inj-auth-net}}
  \end{alignat*}
  We conclude by observing that for every $j_1 \ne j_0$,
  \begin{alignat*}{2}
    \left(\nonce^{j_0} = \instate_\tau(\eauth_\ue^\ID)\right)
    \wedge
    \injauth_\tau(\ID,j_1)
    &\;\ra\;&&
    \left(\nonce^{j_0} = \instate_\tau(\eauth_\ue^\ID)\right)
    \wedge
    \auth_\tau(\ID,j_1)\\
    &\;\ra\;&&
    \left(\nonce^{j_0} = \instate_\tau(\eauth_\ue^\ID)\right)
    \wedge
    \left(\nonce^{j_1} = \instate_\tau(\eauth_\ue^\ID)\right)\\
    &\;\ra\;&&
    \false
    \tag{Using $\ax{EQIndep}$}
  \end{alignat*}
\end{proof}

%%% Local Variables:
%%% mode: latex
%%% TeX-master: "main"
%%% End:

\newpage
\section{Acceptance Characterizations}
\label{section:acc-charac-app-first}

In this section, we prove necessary and sufficient conditions for a message to be accepted by the user or the network. This section is organized as follow: we start by showing some properties of the $\faka$ protocol, which we then use to show a first set of acceptance characterizations; then, using these, we show that the temporary identity $\suci_\ue^\ID$ is concealed until the subscriber starts of session of the $\suci$ sub-protocol; finally, using the $\suci$ concealment property, we show stronger acceptance characterizations.

\subsection{First Characterizations}

\begin{proposition}
  \label{prop:equiv-props-ini}
  For every valid symbolic trace $\tau = \_,\ai$ and identity $\ID$ we have:
  \begin{itemize}
  \item \customlabel{b5}{\lpb{1}} For every $\tauo \popreleq \taut \popreleq \tau$, $\cstate_\tauo(\sqn_\scx^\ID) \le \cstate_\taut(\sqn_\scx^\ID)$.

  \item \customlabel{b3}{\lpb{2}} If $\ai = \fuai_\ID(j)$ then for every and $j_0 \in \mathbb{N}$, if $\fnai(j_0) \potau \newsession_\ID(\_)$ then:
    % (I1)
    \[
      \cstate_\tau(\eauth_\hn^{j_0}) \ne \unknownid \ra
      \neg\injauth_\tau(\ID,j_0)      
    \]
  \end{itemize}
\end{proposition}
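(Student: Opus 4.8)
I would prove the two items separately: \ref{b5} by a short induction on $\tau$, and \ref{b3} by combining the acceptance characterisations of Lemma~\ref{lem:accept-charac} with \ref{b5} itself — which is why the monotonicity item must come first.

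For \ref{b5} it suffices to establish the one-step inequality $\cstate_\tau(\sqn_\scx^\ID) \ge \instate_\tau(\sqn_\scx^\ID)$ for every valid $\tau = \_,\ai$ and every $\scx \in \{\ue,\hn\}$: the statement for $\tauo' \popreleq \taut' \popreleq \tau$ (I write $\tauo',\taut'$ to avoid clashing with the bound variables of the acceptance lemmas) then follows by chaining one-step inequalities — transitivity of $\le$ is one of the $\axioms_\sqn$ axioms, and $\instate$ of a prefix is $\cstate$ of its predecessor. I would case on $\ai$. If $\ai$ does not update $\sqn_\scx^\ID$ there is nothing to do. If $\ai = \npuai{1}{\ID}{j}$ then $\cstate_\tau(\sqn_\ue^\ID) \equiv \sqnsuc(\instate_\tau(\sqn_\ue^\ID)) \peq \instate_\tau(\sqn_\ue^\ID)+\one$. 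If $\ai \in \{\cuai_\ID(j,1),\cnai(j,1)\}$ the update has the shape $\sqn_\scx^\ID \mapsto \ite{b}{\sqnsuc(\instate_\tau(\sqn_\scx^\ID))}{\instate_\tau(\sqn_\scx^\ID)}$; splitting on $b$ with $\cs$, both branches are $\ge\instate_\tau(\sqn_\scx^\ID)$. If $\ai = \pnai(j,1)$ the guard is $\incaccept_\tau^\ID$, which by definition already contains the conjunct $\Geq{\pi_2(t_{\textsf{dec}})}{\instate_\tau(\sqn_\hn^\ID)}$, so on the \textsf{then}-branch $\cstate_\tau(\sqn_\hn^\ID) \equiv \sqnsuc(\pi_2(t_{\textsf{dec}})) \peq \pi_2(t_{\textsf{dec}})+\one \ge \instate_\tau(\sqn_\hn^\ID)$, and the \textsf{else}-branch is an equality. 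Every inequality is discharged by the theory-of-$\mathbb{Z}$ axiom in $\axioms_\sqn$.

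For \ref{b3}, write $\tau = \tau_0,\fuai_\ID(j)$. By Proposition~\ref{prop:injauth-charac}, $\injauth_\tau(\ID,j_0)\lra\nonce^{j_0}\peq\instate_\tau(\eauth_\ue^\ID)$, and since $\fuai_\ID(j)$ updates neither $\eauth_\ue^\ID$ nor $\eauth_\hn^{j_0}$ it is enough to prove $\cstate_{\tau_0}(\eauth_\hn^{j_0}) \peq \unknownid$ or $\nonce^{j_0}\ne\cstate_{\tau_0}(\eauth_\ue^\ID)$. By validity $\tau_0$ ends with $\npuai{2}{\ID}{j}$ (a $\supi$ session) or $\cuai_\ID(j,1)$ (a $\guti$ session), and in both cases $\cstate_{\tau_0}(\eauth_\ue^\ID)\equiv\ite{\accept_{\tau_0}^\ID}{\cdots}{\fail}$; the $\neg\accept_{\tau_0}^\ID$ branch gives $\fail\ne\nonce^{j_0}$ by $\ax{EQIndep}$, so assume $\accept_{\tau_0}^\ID$. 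In the $\supi$ case $\cstate_{\tau_0}(\eauth_\ue^\ID)\equiv\instate_{\tau_0}(\bauth_\ue^\ID)\equiv g(\inframe_{\taut})$ where $\taut = \_,\npuai{1}{\ID}{j}\popre\tau_0$, and \ref{acc2} forces $g(\inframe_{\taut})\peq\nonce^{j_2}$ for some $j_2$ with $\taut\popre(\_,\pnai(j_2,1))\popre\tau_0$; if this term equalled $\nonce^{j_0}$ we would need $j_2\equiv j_0$ (otherwise $\ax{EQIndep}$), and then, using the hypothesis that $\fnai(j_0)$ precedes a $\newsession_\ID(\_)$ together with the validity ordering $\pnai(j_0,1)\popre\fnai(j_0)$, the session-$j$ actions $\npuai{1}{\ID}{j}$ and $\fuai_\ID(j)$ would straddle a $\newsession_\ID(\_)$ action — impossible, since a subscriber runs its sessions sequentially and in increasing session-number order. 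The $\guti$ case is similar up to the point where \ref{acc3}, the $\rangesym$-test inside $\accept_{\tau_0}^\ID$, and idempotence of $\xor$ give $\cstate_{\tau_0}(\eauth_\ue^\ID)\peq\nonce^{j_2}$ and $\instate_{\tau_0}(\sqn_\ue^\ID)\peq\instate_{\tautt}(\sqn_\hn^\ID)$ for $\tautt=\_,\cnai(j_2,0)\popre\tau_0$, and $\cstate_{\tau_0}(\eauth_\ue^\ID)\peq\nonce^{j_0}$ again forces $j_2\equiv j_0$; the only trace ordering that does not immediately contradict sequentiality is $\newsession_\ID(\_)\popre\cuai_\ID(j,0)$. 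In that case I split on whether $\cnai(j_0,1)$ accepts some identity: if it accepts none, $\eauth_\hn^{j_0}$ is set to $\unknownid$ there and not changed afterwards (only $\cnai(j_0,1)$ updates $\eauth_\hn^{j_0}$ for a $\guti$ session $j_0$), giving the left disjunct; if it accepts, then by \ref{a6}, \ref{a7} and the fact that $\cnai(j_0,0)$ set $\bauth_\hn^{j_0}\peq\ID$ (since $\accept_{\cnai(j_0,0)}^\ID$) it must accept $\ID$, so \ref{acc4} provides an earlier $\tauttt=\_,\cuai_\ID(j_5,1)$ with $j_5\ne j$ that also accepted the challenge $\nonce^{j_0}$, and applying \ref{acc3} to $\tauttt$ and to $\tau_0$ yields $\instate_{\tauttt}(\sqn_\ue^\ID)\peq\instate_{\cnai(j_0,0)}(\sqn_\hn^\ID)\peq\instate_{\tau_0}(\sqn_\ue^\ID)$, while $\accept_{\tauttt}^\ID$ strictly increments $\sqn_\ue^\ID$ and $\tauttt\popre\tau_0$ — contradicting \ref{b5}.

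The hard part is precisely this last $\guti$ sub-case: a $\guti$ session opened after the $\newsession_\ID(\_)$ state reset that re-uses the challenge of a $\guti$ network session $j_0$ which had already completed its $\refresh$ before the reset. Ruling it out is where the validity/sequentiality constraints alone are not enough and one genuinely needs the monotonicity of \ref{b5} — no two accepting $\cuai_\ID(\_,1)$ actions can agree on the incoming value of $\sqn_\ue^\ID$, since accepting strictly increments it — hence the order of the two items. Everything else is routine bookkeeping with Lemma~\ref{lem:accept-charac}, the invariants of Proposition~\ref{prop:invs}, and the structural constraints on valid symbolic traces.
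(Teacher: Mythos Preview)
Your approach is correct and mirrors the paper's: \ref{b5} by one-step monotonicity of each update, and \ref{b3} by casing on the user's sub-protocol, invoking \ref{acc2}--\ref{acc4} together with the ordering $\fnai(j_0)\popre\newsession_\ID(\_)\popre$ (start of session $j$), and closing the hard $\guti$/$\guti$ branch with exactly the \ref{b5}-based $\sqn_\ue^\ID$ contradiction you describe. One notational slip worth fixing: $\tau_0$ as you define it ($\tau=\tau_0,\fuai_\ID(j)$) need not end with $\npuai{2}{\ID}{j}$ or $\cuai_\ID(j,1)$, since network or other-user actions may intervene --- the paper names that earlier $\ID$-action $\taut$ separately and uses that no $\ID$-action lies between $\taut$ and $\tau$, so $\cstate_{\tau_0}(\eauth_\ue^\ID)=\cstate_\taut(\eauth_\ue^\ID)$; with $\taut$ in place of your $\tau_0$, the remainder goes through verbatim.
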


\begin{proof}
  Let $\tau = \_,\ai$ be valid symbolic trace and $\ID \in \iddom$. We prove \ref{b5} and \ref{b3}:
  \begin{itemize}
  \item \ref{b5}. This is straightforward by induction over
    $\taut$.

  \item \ref{b3}. Let $\taux = \_,\fnai(j_0) \popre \tau$. We do a case disjunction on the protocol used by the user for authentication:
    \begin{itemize}
    \item If there exists $\taut = \_,\cuai_\ID(j,1) \popre \tau$. We know that there exists $\taun \popre \taux$ with $\taun = \_,\pnai(j_0,1)$ or $\_,\cnai(j_0,1)$.

      Assume that $\taun = \_,\pnai(j_0,1)$. We know that $\injauth_{\tau}(\ID,j_0) \ra \accept_\taut^\ID$, and by applying \ref{acc3}:
      \begin{alignat*}{2}
        \injauth_{\tau}(\ID,j_0)
        &\;\ra\;\;&&
        \bigvee_{\tautt = \_, \cnai(j_2,0)
          \atop{\tautt \popre \taut}}
        \cstate_\taut(\eauth_\ue^\ID) = \nonce^{j_2}\\
        &\;\ra\;\;&&
        \cstate_\taut(\eauth_\ue^\ID) \ne \nonce^{j_0}
        \tag{Since for every $\tautt = \_, \cnai(j_2,0) \popre \taut$,
          $j_2 \ne j_0$}\\
        &\;\ra\;\;&&
        \false
      \end{alignat*}
      Which is what we wanted.

      Now, assume that $\taun = \_,\cnai(j_0,1)$. Observe that $\cstate_\taun(\eauth_\hn^{j_0}) \ne \fail$ and that $\cstate_\tau(\eauth_\hn^{j_0}) = \cstate_\taun(\eauth_\hn^{j_0})$. Moreover, it is straightforward to check that for every valid symbolic trace $\tau'$:
      \[
        \left(
          \injauth_{\tau'}(\ID,j_0) \wedge
          \cstate_{\tau'}(\eauth_\hn^{j_0}) \ne \unknownid \wedge
          \cstate_{\tau'}(\eauth_\hn^{j_0}) \ne \fail
        \right)
        \ra
        \cstate_{\tau'}(\eauth_\hn^{j_0}) = \cstate_{\tau'}(\bauth_\hn^{j_0})
      \]
      Hence we deduce that:
      \[
        \left(
          \injauth_\tau(\ID,j_0) \wedge
          \cstate_\tau(\eauth_\hn^{j_0}) \ne \unknownid
        \right)
        \ra
        \cstate_\tau(\eauth_\hn^{j_0}) = \cstate_\tau(\bauth_\hn^{j_0})
      \]
      Since $\injauth_\tau(\ID,j_0) \ra \cstate_\tau(\bauth_\hn^{j_0}) = \ID$, we get that:
      \[
        \left(
          \injauth_\tau(\ID,j_0)  \wedge
          \cstate_\tau(\eauth_\hn^{j_0}) \ne \unknownid
        \right)
        \ra
        \cstate_\tau(\eauth_\hn^{j_0}) = \ID
      \]
      Moreover, $\cstate_\tau(\eauth_\hn^{j_0}) = \ID \ra \accept_\taun^\ID$. Using \ref{acc4} on $\taun$:
      \[
        \accept_\taun^\ID
        \;\ra\;
        \bigvee_{\taui = \_,\cuai_\ID(j_i,1) \popre \taun}
        \accept^\ID_{\taui} \wedge \pi_1(g(\inframe_{\taui})) = \nonce^{j_0}
      \]
      Let $\tauo = \cnai(j_0,0)$ and $\taui = \_,\cuai_\ID(j_i,1) \popre \taun$. Observe that $\taui \ne \taut$. Using \ref{acc3}, we can check that:
      \[
        \accept^\ID_{\taui} \wedge \pi_1(g(\inframe_{\taui})) = \nonce^{j_0}
        \ra
        \range{\instate_\taui(\sqn_\ue^\ID)}{\instate_\tauo(\sqn_\hn^\ID)}
      \]
      Recall that $\injauth_{\tau}(\ID,j_0) \ra \accept_\taut^\ID$. Moreover, $\injauth_{\tau}(\ID,j_0) \ra \pi_1(g(\inframe_{\taut})) = \nonce^{j_0}$. Hence using \ref{acc3} again we get: 
      \[
        \accept^\ID_{\taut} \wedge \pi_1(g(\inframe_{\taut})) = \nonce^{j_0}
        \ra
        \range{\instate_\taut(\sqn_\ue^\ID)}{\instate_\tauo(\sqn_\hn^\ID)}
      \]
      Putting everything together:
      \begin{alignat*}{2}
        \left(
          \injauth_\tau(\ID,j_0) \wedge
          \cstate_\tau(\eauth_\hn^{j_0}) \ne \unknownid
        \right)
        &\;\ra\;\;&&
        \left(
          \begin{alignedat}[c]{2}
            &&&
            {\instate_\taui(\sqn_\ue^\ID)} =
            {\instate_\tauo(\sqn_\hn^\ID)} \\
            &\wedge \;\;&&
            {\instate_\taut(\sqn_\ue^\ID)} =
            {\instate_\tauo(\sqn_\hn^\ID)}
          \end{alignedat}  
        \right)\\
        &\;\ra\;\;&&
        {\instate_\taut(\sqn_\ue^\ID)} =
        {\instate_\taui(\sqn_\ue^\ID)}
      \end{alignat*}
      Finally, $\accept_\taui^\ID \ra \instate_\taui(\sqn_\ue^\ID) < \cstate_\taui(\sqn_\ue^\ID)$, and using \ref{b5} we know that $ \cstate_\taui(\sqn_\ue^\ID) \le {\instate_\taut(\sqn_\ue^\ID)}$. We deduce that:
      \begin{alignat*}{2}
        \left(
          \injauth_\tau(\ID,j_0) \wedge
          \cstate_\tau(\eauth_\hn^{j_0}) \ne \unknownid
        \right)
        &\;\ra\;\;&&
        {\instate_\taut(\sqn_\ue^\ID)} =
        {\instate_\taui(\sqn_\ue^\ID)} <
        {\instate_\taut(\sqn_\ue^\ID)}\\
        &\;\ra\;\;&&
        \false
      \end{alignat*}
      This concludes this case. We summarize graphically this proof below:
      \begin{center}
        \begin{tikzpicture}
          [dn/.style={inner sep=0.2em,fill=black,shape=circle},
          sdn/.style={inner sep=0.15em,fill=white,draw,solid,shape=circle},
          sl/.style={decorate,decoration={snake,amplitude=1.6}},
          dl/.style={dashed},
          pin distance=0.5em,
          every pin edge/.style={thin}]

          \draw[thick] (0,0)
          node[left=1.3em] {$\tau:$}
          -- ++(0.5,0)
          node[dn,pin={above:{$\cnai(j_0,0)$}}]
          (a) {}
          node[below,yshift=-0.3em,name=a0] {$\tauo$}
          -- ++(3,0)
          node[dn,pin={above:{$\cuai_\ID(j_i,1)$}}]
          (b) {}
          node[below,yshift=-0.3em,name=b0] {$\taui$}
          -- ++(3,0)
          node[dn]
          (bp) {}
          node[below,yshift=-0.3em,name=bp0] {}
          -- ++(2,0)
          node[dn,pin={above:{$\cnai(j_0,1)$}}]
          (c) {}
          node[below,yshift=-0.3em,name=c0] {$\taun$}
          -- ++(1.5,0)
          node[dn,pin={above:{$\fnai(j_0)$}}]
          (d) {}
          node[below,yshift=-0.3em] {$\taux$}
          -- ++(1.5,0)
          node[dn,pin={above:{$\ns_\ID(\_)$}}]
          (e) {}
          -- ++(1.5,0)
          node[dn,pin={above:{$\cuai_\ID(j,1)$}}]
          (f) {}
          node[below,yshift=-0.3em,name=f0] {$\taut$}
          -- ++(0.5,0);

          \draw[thin,dashed] (a0) -- ++(0,-0.5) -| (b0)
          {[draw=none] -- ++(0,-0.5)} -| (c0);
          
          \draw[thin,dotted] (a0) {[draw=none] -- ++(0,-0.5)}
          -- ++(0,-0.5) -| (f0);

          \path (a) -- ++ (0,-2)
          node (a2) {$\instate_\tauo(\sqn_\hn^\ID)$};

          \path (b) -- ++ (0,-2)
          -- ++ (0,-1.3)
          node (b2) {$\instate_\taui(\sqn_\ue^\ID)$};

          \path (bp) -- ++ (0,-2)
          -- ++ (0,-1.3)
          node (bp2) {$\cstate_\taui(\sqn_\ue^\ID)$};

          \path (f) -- ++ (0,-2)
          -- ++ (0,-1.3)
          node (f2) {$\instate_\taut(\sqn_\ue^\ID)$};

          \draw (a2) -- (b2) node[midway,below,sloped]{$=$};
          \draw (a2) to[bend left=6] node[midway,above,sloped]{$=$} (f2);
          \draw (b2) -- (bp2) node[midway,below]{$<$}
          (bp2) -- (f2) node[midway,below]{$\le$};
        \end{tikzpicture}
      \end{center}

    \item If there exists $\taut = \_,\npuai{2}{\ID}{j} \popre \tau$. Let $\tauttt = \_,\npuai{1}{\ID}{j} \popre \taut$, we know that $\taux \popre \tauttt$. Remark that $\injauth_\tau(\ID,j_0) \ra \accept_\taut^\ID$, and using \ref{acc2} we easily get that:
      \[
        \accept_\taut^\ID \ra
        \bigvee_{\tautt = \_, \pnai(j_2,1)
          \atop{\tauttt \popre \tautt \popre \taut}}
        \instate_\taut(\eauth_\ue^\ID) = \nonce^{j_2}
      \]
      Since no $\ID$ action occurred between $\taut$ and $\tau$, we have $\instate_\taut(\eauth_\ue^\ID) = \cstate_\tau(\eauth_\ue^\ID)$. Moreover, $\injauth_\tau(\ID,j_0) \ra \cstate_\tau(\eauth_\ue^\ID) = \nonce^{j_0}$. Finally, for every $\tautt = \_, \pnai(j_2,1)$ such that $\tauttt \popre \tautt \popre \taut$, since $\taux \popre \tauttt$ we know that $j_2 \ne j_0$. It follows that:
      \[
        \injauth_\tau(\ID,j_0) \ra
        \left(
          \textstyle
          \bigvee_{\tautt = \_, \pnai(j_2,1)
            \atop{\tauttt \popre \tautt \popre \taut}}
          \nonce^{j_0} = \nonce^{j_2}
        \right)
        \ra
        \false
      \]
      This concludes this case.
    \end{itemize}
  \end{itemize}
\end{proof}

We now prove a first acceptance characterization:
\begin{lemma}
  \label{lem:equiv-accept-ini}
  For every valid symbolic trace $\tau = \_,\ai$ and identity $\ID$ we have:
  \begin{itemize}
  \item \customlabel{equ1}{\lpequ{1}} If $\ai = \fuai_\ID(j)$. For every $\taut = \_,\fnai(j_0) \popre \tau$, we let:
    % (I2)
    \[
      \futr{\tau}{\taut} \;\equiv\;
      \left(\begin{alignedat}{2}
          &\injauth_\tau(\ID,j_0)
          \wedge\instate_\tau(\eauth_\hn^{j_0}) \ne \unknownid\\
          \wedge\;& \pi_1(g(\inframe_\tau)) =
          \suci^{j_0} \xor \row{\nonce^{j_0}}{\key}
          \wedge\; \pi_2(g(\inframe_\tau)) =
          \mac{\spair
            {\suci^{j_0}}
            {\nonce^{j_0}}}
          {\mkey}{5}
        \end{alignedat}\right)
    \]
    Then:
    \begin{alignat*}{3}
      \accept_\tau^\ID
      &\;\;\leftrightarrow\;\;&&
      \bigvee_{\taut = \_,\fnai(j_0) \popre \tau
        \atop{\taut \not \potau \newsession_\ID(\_)}}
      \futr{\tau}{\taut}
    \end{alignat*}
  \end{itemize}
\end{lemma}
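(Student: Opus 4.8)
The plan is to prove Lemma~\ref{lem:equiv-accept-ini} (characterization \ref{equ1}) as a two-way implication, reusing the necessary-condition lemmas already established and adding the corresponding sufficiency direction. First I would recall the definition of $\accept_\tau^\ID$ for $\ai = \fuai_\ID(j)$ from Fig.~\ref{fig:protocol-guti-refresh}: with $t_\guti \equiv \pi_1(g(\inframe_\tau)) \xor \row{\instate_\tau(\eauth_\ue^{\ID})}{\key^\ID}$, the condition is $\pi_2(g(\inframe_\tau)) = \mac{\spair{t_\guti}{\instate_\tau(\eauth_\ue^\ID)}}{\mkey^\ID}{5}$ together with $\instate_\tau(\eauth_\ue^\ID) \ne \fail$ and $\neq \bot$. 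The key observation is that $\instate_\tau(\eauth_\ue^\ID) \ne \fail \wedge \neq \bot$ is exactly $\sucauth_\tau(\ID)$, so by Proposition~\ref{prop:injauth-charac} and Lemma~\ref{lem:inj-auth-net} this forces $\instate_\tau(\eauth_\ue^\ID) = \nonce^{j_0}$ for a unique $j_0$ with $\injauth_\tau(\ID,j_0)$, and moreover this $j_0$ satisfies $\taut = \_,\fnai(j_0) \popre \tau$ by validity of the trace (a $\fnai(j_0)$ action must have been scheduled for $\thn(j_0)$ to have authenticated).

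For the forward direction ($\Ra$), I would proceed as follows. Assume $\accept_\tau^\ID$. As above, extract the unique $j_0$ with $\injauth_\tau(\ID,j_0)$ and $\instate_\tau(\eauth_\ue^\ID) = \nonce^{j_0}$. Then apply $\textsc{euf-mac}^5$ (invariant \textsc{(inv-key)} guarantees the side-conditions) to the equation $\pi_2(g(\inframe_\tau)) = \mac{\spair{t_\guti}{\nonce^{j_0}}}{\mkey^\ID}{5}$; the honest $\macsym^5$ subterms under $\mkey^\ID$ are precisely the $(\mac{\spair{\suci^{j_1}}{\nonce^{j_1}}}{\mkey^\ID}{5})$ generated at actions $\fnai(j_1)$. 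Using $\textsc{cr}^5$ and the pair-injectivity axioms, matching the second component forces $\nonce^{j_1} = \nonce^{j_0}$, hence $j_1 = j_0$ by $\ax{EQIndep}$, and matching the first component forces $t_\guti = \suci^{j_0}$, i.e. $\pi_1(g(\inframe_\tau)) = \suci^{j_0} \xor \row{\nonce^{j_0}}{\key^\ID}$ by idempotence of $\xor$. This also gives $\pi_2(g(\inframe_\tau)) = \mac{\spair{\suci^{j_0}}{\nonce^{j_0}}}{\mkey^\ID}{5}$. It remains to establish $\instate_\tau(\eauth_\hn^{j_0}) \ne \unknownid$: this follows from the $\guti$-message being generated only when $\eauth_\hn^{j_0} = \ID$ (the guard in $\fnai(j_0)$), so one more case analysis on whether the message was honestly output there closes it, excluding the $\taut \potau \newsession_\ID(\_)$ traces via \ref{b3} (which says that if $\newsession_\ID$ occurred after $\fnai(j_0)$ then $\neg\injauth_\tau(\ID,j_0)$, contradicting what we derived). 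For the backward direction ($\La$), assume $\futr{\tau}{\taut}$ for some qualifying $\taut = \_,\fnai(j_0)$. From $\injauth_\tau(\ID,j_0)$ and Proposition~\ref{prop:injauth-charac} we get $\instate_\tau(\eauth_\ue^\ID) = \nonce^{j_0}$, which is neither $\fail$ nor $\bot$. Substituting $\pi_1(g(\inframe_\tau)) = \suci^{j_0} \xor \row{\nonce^{j_0}}{\key^\ID}$ into $t_\guti$ and simplifying by $\xor$-idempotence gives $t_\guti = \suci^{j_0}$, so the mac-check equation becomes $\pi_2(g(\inframe_\tau)) = \mac{\spair{\suci^{j_0}}{\nonce^{j_0}}}{\mkey^\ID}{5}$, which is the second conjunct of $\futr{\tau}{\taut}$; hence all three conjuncts of $\accept_\tau^\ID$ hold.

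The main obstacle I expect is bookkeeping around the $\unknownid$ / $\bot$ side-conditions and the restriction $\taut \not\potau \newsession_\ID(\_)$. The subtlety is that $\injauth_\tau(\ID,j_0)$ alone does not immediately yield $\instate_\tau(\eauth_\hn^{j_0}) \ne \unknownid$; one needs to trace how $\eauth_\hn^{j_0}$ is set at $\pnai(j_0,1)$ or $\cnai(j_0,1)$ and show it equals $\ID$ under the acceptance hypothesis, which is essentially the content already packaged in \ref{b3} of Proposition~\ref{prop:equiv-props-ini}. So the proof is mostly a careful assembly: invoke Lemma~\ref{lem:inj-auth-net}, Proposition~\ref{prop:injauth-charac}, the $\textsc{euf-mac}^5$/$\textsc{cr}^5$ axioms, and \ref{b3}, then discharge the disjunction over $\taut$. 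I would also double-check that the $\neq\bot$ guard in the protocol definition of $\accept_\tau^\ID$ is genuinely needed or subsumed — in the symbolic state $\eauth_\ue^\ID$ can only take values $\fail$, $\bot$ (initial), or a challenge $\nonce^{j'}$, so ruling out $\fail$ and $\bot$ exactly isolates the authenticated case, and this is where \ref{a8} of Proposition~\ref{prop:invs} can be invoked if needed to relate $\eauth_\ue^\ID$ and $\bauth_\ue^\ID$.
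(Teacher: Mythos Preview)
Your proposal is correct and follows essentially the same route as the paper: both reduce $\accept_\tau^\ID$ to $\sucauth_\tau(\ID)$ conjoined with the $\macsym^5$ equation, invoke injective authentication (Lemma~\ref{lem:inj-auth-net}/Proposition~\ref{prop:injauth-charac}) to substitute $\nonce^{j_0}$ for $\instate_\tau(\eauth_\ue^\ID)$, apply the $\textsc{euf-mac}^5$/$\textsc{cr}^5$ axioms (the paper uses the partitioned variant $\textsc{p-euf-mac}^5$, which is precisely your ``one more case analysis'' on whether the honest $\fnai(j_0)$ branch was taken, and is what yields the $\eauth_\hn^{j_0} \ne \unknownid$ conjunct), and finally use \ref{b3} to discard the $\taut \potau \newsession_\ID(\_)$ disjuncts. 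One small correction: the restriction to $\taut = \_,\fnai(j_0) \popre \tau$ does not follow from trace validity as your opening paragraph claims (injective authentication only tells you $\bauth_\hn^{j_0} = \ID$, which is set at $\pnai(j_0,1)$ or $\cnai(j_0,0)$, not $\fnai(j_0)$); it comes instead from the $\textsc{euf-mac}^5$ step, exactly as you later argue in the forward direction, so simply drop that earlier remark.
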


\begin{proof}
  Using Lemma~\ref{lem:inj-auth-net} we know that:
  \[
    \sucauth_\tau(\ID)
    \;\ra\;
    \bigvee_{j_0 \in \mathbb{N}}
    \injauth_\tau(\ID,j_0)
  \]
  Let $\key \equiv \key_\ID$ and $\mkey \equiv \mkey^{\ID}$. Since:
  \[
    \accept_\tau^\ID \;\equiv\; \sucauth_\tau(\ID)
    \;\wedge\;
    \underbrace{
      \pi_2(g(\inframe_\tau)) =
      \mac
      {\spair
        {\pi_1(g(\inframe_\tau)) \xor \row{\instate_\tau(\eauth_\ue^{\ID})}{\key}}
        {\instate_\tau(\eauth_\ue^{\ID})}}
      {\mkey}{5}}_{\textsf{EQMac}}
  \]
  And since $\injauth_\tau(\ID,j_0) \;\ra\;\sucauth_\tau(\ID)$ we have:
  \begin{alignat*}{2}
    \accept_\tau^\ID
    &\;\;\leftrightarrow\;\;&&
    \bigvee_{j_0 \in \mathbb{N}} \injauth_\tau(\ID,j_0) \wedge \textsf{EQMac}\\
    &\;\;\leftrightarrow\;\;&&
    \bigvee_{j_0 \in \mathbb{N}} \injauth_\tau(\ID,j_0) \wedge
    \pi_2(g(\inframe_\tau)) =
    \mac
    {\spair
      {\pi_1(g(\inframe_\tau)) \xor \row{\nonce^{j_0}}{\key}}
      {\nonce^{j_0}}}
    {\mkey}{5}
  \end{alignat*}
  Using the $\textsc{p-euf-mac}^5$ and $\textsc{cr}^5$ axioms, it is easy to show that for every $j_0 \in \mathbb{N}$:
  \[
    \pi_2(g(\inframe_\tau)) =
    \mac
    {\spair
      {\pi_1(g(\inframe_\tau)) \xor \row{\nonce^{j_0}}{\key}}
      {\nonce^{j_0}}}
    {\mkey}{5}
    \;\ra\;
    \begin{dcases*}
      \left(
        \begin{alignedat}{2}
          &&&\pi_1(g(\inframe_\tau)) \xor \row{\nonce^{j_0}}{\key} = \suci^{j_0}\\
          &\wedge\,&&\instate_\tau(\eauth_\hn^{j_0}) \ne \unknownid
        \end{alignedat}
      \right)
      & if $\fnai(j_0) \in \tau$\\
      \false & otherwise
    \end{dcases*}
  \]
  Hence:
  \begin{alignat*}{2}
    \accept_\tau^\ID
    &\;\;\leftrightarrow\;\;&&
    \bigvee_{\tauo = \_,\fnai(j_0) \popre \tau}
    \left(\begin{alignedat}{2}
        &\injauth_\tau(\ID,j_0) \wedge
        \instate_\tau(\eauth_\hn^{j_0}) \ne \unknownid\\
        \wedge\;& \pi_1(g(\inframe_\tau)) =
        \suci^{j_0} \xor \row{\nonce^{j_0}}{\key} \wedge
        \pi_2(g(\inframe_\tau)) =
        \mac
        {\spair
          {\suci^{j_0}}
          {\nonce^{j_0}}}
        {\mkey}{5}
      \end{alignedat}\right)\\
    \intertext{By \ref{b3}:}
    \accept_\tau^\ID
    &\;\;\leftrightarrow\;\;&&
    \bigvee_{\tauo = \_,\fnai(j_0) \popre \tau
      \atop{\tauo \not \potau \newsession_\ID(\_)}}
    \left(\begin{alignedat}{2}
        &\injauth_\tau(\ID,j_0) \wedge
        \instate_\tau(\eauth_\hn^{j_0}) \ne \unknownid\\
        \wedge\;& \pi_1(g(\inframe_\tau)) =
        \suci^{j_0} \xor \row{\nonce^{j_0}}{\key} \wedge \pi_2(g(\inframe_\tau)) =
        \mac
        {\spair
          {\suci^{j_0}}
          {\nonce^{j_0}}}
        {\mkey}{5}
      \end{alignedat}\right)
  \end{alignat*}
  Which concludes this proof.
\end{proof}

We show the following additional properties:
\begin{proposition}
  \label{prop:equiv-props}
  For every valid symbolic trace $\tau = \_,\ai$ and identity $\ID$ we have:
  \begin{itemize}
  \item \customlabel{b1}{\lpb{3}} $\cstate_\tau(\success_\ue^\ID) \;\ra\; \cstate_\tau(\suci_\ue^\ID) \ne \unset$.
    % (M1)
  \item \customlabel{b2}{\lpb{4}} For every $\tautt \potau \taut$:
    % (M2)
    \[
      \cstate_\tautt(\sqn_\hn^\ID) < \instate_\taut(\sqn_\hn^\ID)
      \;\ra\;
      \qquad\bigvee_{\mathclap{
          \tautt \potau \taux \potau \taut
          \atop{\taux = \_,\cnai(j_x,0)
            ,\_,\cnai(j_x,1)
            \text{ or }  \_,\pnai(j_x,1)}}}\qquad
      \instate_\taut(\tsuccess_\hn^\ID) = \nonce^{j_x}
    \]
  \item \customlabel{b4}{\lpb{5}} $\cstate_\tau(\sqn_\hn^\ID) \le \cstate_\tau(\sqn_\ue^\ID)$.
    % (I3)
  \item \customlabel{b6}{\lpb{6}} For every $\tauo \potau \taut$ such that $\tauo = \_,\ns_\ID(\_)$ or $\epsilon$, and such that $\tauo \not \potau \newsession_\ID(\_)$, we have:
    % (I5)
    \begin{mathpar}
      \cstate_\taut(\sync_\ue^\ID) \;\ra\;
      \cstate_\taut(\sqn_\hn^\ID) > \cstate_\tauo(\sqn_\ue^\ID)
    \end{mathpar}

  \item \customlabel{b7}{\lpb{7}} If for all $\tau' \popreleq \tau$ such that $\tau' \not \potau \newsession_\ID(\_)$ we have $\tau' \ne \_, \fuai_\ID(\_)$, then:
    % (J1)
    \[
      \cstate(\success_\ue^\ID)
      \;\ra\;
      \false
    \]

  \end{itemize}
\end{proposition}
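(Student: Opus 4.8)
The plan is to establish \ref{b1}, \ref{b2}, \ref{b4}, \ref{b6} and \ref{b7} \emph{simultaneously} by induction on the length of the valid symbolic trace $\tau$, since these five invariants feed into one another. The base case $\tau = \epsilon$ is read off Definition~\ref{def:init-sigma-phi}: $\cstate_\epsilon(\success_\ue^\ID) = \cstate_\epsilon(\sync_\ue^\ID) = \false$ makes \ref{b1}, \ref{b6} and \ref{b7} hold vacuously; \ref{b2} is empty since $\epsilon$ has no strict prefix; and \ref{b4} reduces to $\sqnini_\hn^\ID \le \sqnini_\ue^\ID$, which is the axiom $\sqn$\textsf{-ini}. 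For the inductive step, write $\tau = \tauo,\ai$. For each invariant, the vast majority of last actions $\ai$ leave the relevant state variables unchanged, in which case the statement is inherited from the induction hypothesis after a short check that $\tauo$ still satisfies the side-conditions of the invariant; this check is pure bookkeeping on the positions of the $\newsession_\ID$ actions, using that within a run of the automaton $\mathcal{Q}_\ue^\ID$ the actions of a single user session are consecutive. All the content lies in the few actions that touch $\sqn_\hn^\ID$, $\sqn_\ue^\ID$, $\success_\ue^\ID$, $\suci_\ue^\ID$ or $\sync_\ue^\ID$.

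Two of the invariants are easy. For \ref{b1} the only nontrivial action is $\ai = \fuai_\ID(j)$, where $\cstate_\tau(\success_\ue^\ID) = \accept_\tau^\ID$ and $\cstate_\tau(\suci_\ue^\ID)$ equals $t_\guti$ when $\accept_\tau^\ID$ holds and $\unset$ otherwise, so it suffices to prove $\accept_\tau^\ID \ra t_\guti \ne \unset$; but \ref{equ1} together with Proposition~\ref{prop:injauth-charac} forces $\instate_\tau(\eauth_\ue^\ID) = \nonce^{j_0}$ and $\pi_1(g(\inframe_\tau)) = \suci^{j_0} \xor \row{\nonce^{j_0}}{\key}$ for some $j_0$, whence $t_\guti = \suci^{j_0}$ by idempotence of $\xor$ and $\suci^{j_0} \ne \unset$ by $\ax{EQIndep}$. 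For \ref{b7} the hypothesis already rules out $\ai = \fuai_\ID(\_)$; if $\ai = \newsession_\ID(j)$ the conclusion is immediate, since that action sets $\success_\ue^\ID \mapsto \false$; and in every other case $\ai$ either leaves $\success_\ue^\ID$ alone or resets it to $\false$ (at $\npuai{1}{\ID}{\_}$ or $\cuai_\ID(\_,0)$), so the statement follows from the induction hypothesis once one checks that $\tauo$ inherits the hypothesis of \ref{b7}. Invariant \ref{b2} is a freshness/bookkeeping argument: the only actions that increase $\sqn_\hn^\ID$ are $\pnai(j_x,1)$ (under $\incaccept$, which simultaneously stores $\nonce^{j_x}$ in $\tsuccess_\hn^\ID$) and $\cnai(j_x,1)$ (under $\incaccept$, which requires $\tsuccess_\hn^\ID = \nonce^{j_x}$ and does not modify $\tsuccess_\hn^\ID$, the value $\nonce^{j_x}$ having been installed by the $\cnai(j_x,0)$ of the same network session); using validity of $\tau$ (each network session emits at most one of $\pnai(j,1)$ and $\cnai(j,1)$) together with $\ax{EQIndep}$ (distinct session nonces differ), one propagates inductively that $\tsuccess_\hn^\ID$ still carries the nonce of the \emph{last} such action, which is exactly the conclusion of \ref{b2}.

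The real work is \ref{b4} and \ref{b6}. For \ref{b4} the plain cases $\ai \in \{\npuai{1}{\ID}{j}, \cuai_\ID(j,1)\}$ are mere increments of $\sqn_\ue^\ID$, handled by the induction hypothesis; the hard cases, common to \ref{b4} and \ref{b6}, are $\ai \in \{\pnai(j,1), \cnai(j,1), \npuai{2}{\ID}{j}\}$. Each turns an acceptance condition into a numerical inequality via the characterizations of Lemma~\ref{lem:accept-charac}, then propagates a \emph{strict} inequality across several sessions using monotonicity of sequence numbers (\ref{b5}) and the Presburger-closure axiom of $\axioms_{\sqn}$, with $\cs$ case splits on $\incaccept$. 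For $\ai = \pnai(j,1)$ under $\incaccept_\tau^\ID$: \ref{acc1} gives $\pi_1(g(\inframe_\tau)) = \enc{\pair{\ID}{\instate_\tauo(\sqn_\ue^\ID)}}{\pk_\hn}{\enonce^{j_0}}$ for some $\tauo = \_,\npuai{1}{\ID}{j_0} \popre \tau$, hence $\pi_2(t_{\textsf{dec}}) = \instate_\tauo(\sqn_\ue^\ID)$; since $\npuai{1}{\ID}{j_0}$ increments $\sqn_\ue^\ID$, \ref{b5} yields $\instate_\tauo(\sqn_\ue^\ID) < \cstate_\tauo(\sqn_\ue^\ID) \le \instate_\tau(\sqn_\ue^\ID)$, so $\cstate_\tau(\sqn_\hn^\ID) = \instate_\tauo(\sqn_\ue^\ID) + 1 \le \cstate_\tau(\sqn_\ue^\ID)$. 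For $\ai = \cnai(j,1)$ under $\incaccept_\tau^\ID$: \ref{acc4} then \ref{acc3} expose a $\cuai_\ID(\_,1)$ action $\tauo$ with $\pi_1(g(\inframe_\tauo)) = \nonce^j$ and a $\cnai(j,0)$ action $\tautt$ with $\range{\instate_\tauo(\sqn_\ue^\ID)}{\instate_\tautt(\sqn_\hn^\ID)}$; the range axiom reduces this to $\instate_\tauo(\sqn_\ue^\ID) = \instate_\tautt(\sqn_\hn^\ID)$, so (since $\cuai_\ID(\_,1)$ increments $\sqn_\ue^\ID$) \ref{b5} gives $\instate_\tautt(\sqn_\hn^\ID) < \instate_\tau(\sqn_\ue^\ID)$; finally $\incaccept_\tau^\ID$ forces $\instate_\tau(\tsuccess_\hn^\ID) = \nonce^j$, so applying \ref{b2} to $\tautt$ and $\tau$ contrapositively (no qualifying action lies in between) gives $\instate_\tau(\sqn_\hn^\ID) = \instate_\tautt(\sqn_\hn^\ID)$, whence $\cstate_\tau(\sqn_\hn^\ID) = \instate_\tau(\sqn_\hn^\ID) + 1 \le \cstate_\tau(\sqn_\ue^\ID)$. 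The case $\ai = \npuai{2}{\ID}{j}$ of \ref{b6}, where $\accept_\tau^\ID$ turns $\sync_\ue^\ID$ true, is analogous: \ref{acc2} exposes a $\pnai(j_0,1)$ action $\tauo$ with $\pi_1(g(\inframe_\tauo)) = \enc{\pair{\ID}{\instate_\taut(\sqn_\ue^\ID)}}{\pk_\hn}{\enonce^j}$ for $\taut = \_,\npuai{1}{\ID}{j}$ (which lies in the current partition), and a case split on $\incaccept_\tauo^\ID$ shows that either $\sqn_\hn^\ID$ was raised to $\instate_\taut(\sqn_\ue^\ID) + 1$ at $\tauo$ or $\instate_\tauo(\sqn_\hn^\ID) > \instate_\taut(\sqn_\ue^\ID)$ already; in both cases monotonicity of $\sqn_\hn^\ID$ and $\sqn_\ue^\ID$ (\ref{b5}) gives $\cstate_\tau(\sqn_\hn^\ID) > \cstate_{\tauo'}(\sqn_\ue^\ID)$ for the partition start $\tauo'$.

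I expect this last block to be the main obstacle: one must keep the conditional-guarded terms for $\cstate_\tau(\sqn_\hn^\ID)$, $\pi_2(t_{\textsf{dec}})$, $\instate_\tautt(\sqn_\hn^\ID)$ and their partners in a shape where the Presburger-closure axiom applies, while threading the $\tsuccess_\hn^\ID$ freshness argument so that \ref{b2} is available exactly where \ref{b4} and \ref{b6} need it. Concretely this forces a fixed order inside each induction step, namely first \ref{b2}, then \ref{b4}, then \ref{b6}, with \ref{b1} and \ref{b7} dispatched independently.
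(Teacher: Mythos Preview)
Your proposal is correct and follows essentially the same route as the paper: the same case analysis on the last action, the same appeals to \ref{acc1}--\ref{acc4} and \ref{equ1}, the same contrapositive use of \ref{b2} for the $\cnai(j,1)$ case of \ref{b4}, and the same $\incaccept$ case split in \ref{b6}. Two minor remarks: the paper proves the invariants sequentially rather than by simultaneous induction (the only real dependency is \ref{b4} on \ref{b2}, so the simultaneous framing is unnecessary), and your parenthetical ``no qualifying action lies in between'' in the $\cnai(j,1)$ case is imprecise---other network actions may well lie between $\tautt$ and $\tau$, but their session nonces all differ from $\nonce^j$ by $\ax{EQIndep}$ and validity, which is what lets the contrapositive of \ref{b2} fire.
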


\begin{proof}
  We give the proof of the properties \ref{b1} to \ref{b7}.
  \begin{itemize}
  \item \ref{b1}. We show this by induction over $\tau$. If $\tau = \epsilon$, we know from Definition~\ref{def:init-sigma-phi} that $\cstate_\epsilon(\success^{\ID}_\ue) \equiv \false$ and $\cstate_\epsilon(\suci^\ID_\scx) \equiv \unset$.  Therefore the property holds. Let $\tau = \tauo,\ai$, assume by induction that the property holds for $\tauo$. If $\ai$ is different from $\cuai_\ID(j,0),\npuai{1}{\ID}{j}$ and $\fuai(j)$ then $\upstate_\tau(\success_\ue^\ID) \equiv \upstate_\tau(\suci_\ue^\ID) \equiv \bot$, in which case we conclude immediately by induction hypothesis. We have three cases remaining:
    \begin{itemize}
    \item If $\ai = \cuai_\ID(j,0)$ or $\ai = \npuai{1}{\ID}{j}$ then $\upstate_\tau(\suci_\ue^\ID) \equiv \false$. Therefore the property holds.
    \item If $\ai = \fuai(j)$, using \ref{equ1} we can check that:
      \begin{equation*}
        \accept_\tau^\ID
        \;\ra\;
        \bigvee_{\taut = \_,\fnai(j_0) \popre \tau
          \atop{\taut \not \potau \newsession_\ID(\_)}}
        \left(\cstate_\tau(\suci_\ue^\ID) = \suci^{j_0}\right)
        \;\ra\;
        \cstate_\tau(\suci_\ue^\ID) \ne \unset
      \end{equation*}
      We conclude by observing that $\cstate_\epsilon(\success^{\,\ID}_\ue) \equiv \accept_\tau^\ID$.
    \end{itemize}
    
  \item \ref{b2}. We prove this directly. Intuitively, this holds because if $\cstate_\tautt(\sqn_\hn^\ID) < \instate_\taut(\sqn_\hn^\ID)$ then we know that $\sqn_\hn^\ID$ was updated between $\tautt$ and $\taut$. Moreover, if such an update occurs at $\taux = \_,\pnai(j_x,1)$ or $\cnai(j_x,1)$ then $\tsuccess_\hn^\ID$ has to be equal to $\nonce^{j_x}$ after the update. Finally, the fact that $\tsuccess_\hn^\ID$ is equal to $\nonce^{j_x}$ for some $\taux$ between $\tautt$ and $\taut$ with $\taux = \_,\cnai(j_x,0)$,  $\_,\cnai(j_x,1)$ or $\_,\pnai(j_x,1)$ is an invariant of the protocol. Now we give the formal proof.

    First, we remark that $\sqn_\hn^\ID$ is updated only at $\pnai(\_,1)$ and $\cnai(\_,1)$. Moreover, each update either left $\sqn_\hn^\ID$ unchanged or increments it by one. Finally, it is updated at $\taux \popre \tau$ if and only if $\incaccept_\taux^\ID$ holds. If follows that:
    \[
      \cstate_\tautt(\sqn_\hn^\ID) < \instate_\taut(\sqn_\hn^\ID)
      \;\ra\;
      \qquad\bigvee_{\mathclap{
          \tautt \potau \taux \potau \taut
          \atop{\taux = ,\_,\cnai(j_x,1)
            \text{ or }  \_,\pnai(j_x,1)}}}\qquad
      \incaccept_{\taux}^\ID
    \]
    We know that for every $\tautt \potau \taux \potau \taut$, if:
    \begin{itemize}
    \item $\taux = ,\_,\cnai(j_x,1)$ then $\incaccept_{\taux}^\ID \lra \cstate_\taux(\session_\hn^\ID) = \nonce^{j_x}$.
    \item $\taux = \_,\pnai(j_x,1)$ then since $\incaccept_{\taux}^\ID \equiv \incaccept_{\taux}^\ID \wedge \instate_\taux(\session_\hn^\ID) = \nonce^{j_x}$, we know that $\incaccept_{\taux}^\ID \ra \instate_\taux(\session_\hn^\ID) = \nonce^{j_x}$. Besides, since $\session_\hn^\ID$ is not updated at $\pnai(j_x,1)$ we deduce that $\incaccept_{\taux}^\ID \ra \cstate_\taux(\session_\hn^\ID) = \nonce^{j_x}$.
    \end{itemize}
    Hence:
    \[
      \cstate_\tautt(\sqn_\hn^\ID) < \instate_\taut(\sqn_\hn^\ID)
      \;\ra\;
      \qquad\bigvee_{\mathclap{
          \tautt \potau \taux \popreleq \taut
          \atop{\taux = ,\_,\cnai(j_x,1)
            \text{ or }  \_,\pnai(j_x,1)}}}\qquad
      \cstate_\taux(\session_\hn^\ID) = \nonce^{j_x}
      \numberthis\label{eq:fiqejfcnlcaprqww}
    \]
    Let $\tautt \potau \taux \potau \taut$ such that $\taux = ,\_,\cnai(j_x,1)$ or $\_,\pnai(j_x,1)$. Now, we prove by induction over $\tau'$ such that $\taux \popreleq \tau' \popre \taut$ that:
    \[
      \cstate_\taux(\session_\hn^\ID) = \nonce^{j_x}
      \;\ra\;
      \qquad\bigvee_{\mathclap{
          \taux \popreleq \taun \popreleq \tau'
          \atop{\taun = \_,\cnai(j_n,0)
            ,\_,\cnai(j_n,1)
            \text{ or }  \_,\pnai(j_n,1)}}}\qquad
      \cstate_{\tau'}(\tsuccess_\hn^\ID) = \nonce^{j_n}
    \]
    For $\tau' = \taux$ this is obvious. For the inductive case, we do a disjunction over the final action of $\tau'$. If $\tsuccess_\hn^\ID$ is not updated then we conclude by induction, otherwise we are in one of the following case:
    \begin{itemize}
    \item If $\tau' = \_,\cnai(j',0)$ then we do a case disjunction on $\accept_{\tau'}^\ID$:
      \[
        \neg\accept_{\tau'}^\ID
        \ra
        \cstate_{\tau'}(\tsuccess_\hn^\ID) =
        \instate_{\tau'}(\tsuccess_\hn^\ID)
        \numberthis\label{eq:fdfhfqeirjepo}
      \]
      Hence:
      \begin{alignat*}{2}
        \neg\accept_{\tau'}^\ID \wedge \cstate_\taux(\session_\hn^\ID) = \nonce^{j_x}
        &\;\ra\;\;&&
        \qquad\bigvee_{\mathclap{
            \taux \popreleq \taun \popre \tau'
            \atop{\taun = \_,\cnai(j_n,0)
              ,\_,\cnai(j_n,1)
              \text{ or }  \_,\pnai(j_n,1)}}}\qquad
        \instate_{\tau'}(\tsuccess_\hn^\ID) = \nonce^{j_n}
        \tag{by induction hypothesis}\\
        &\;\ra\;\;&&
        \qquad\bigvee_{\mathclap{
            \taux \popreleq \taun \popre \tau'
            \atop{\taun = \_,\cnai(j_n,0)
              ,\_,\cnai(j_n,1)
              \text{ or }  \_,\pnai(j_n,1)}}}\qquad
        \cstate_{\tau'}(\tsuccess_\hn^\ID) = \nonce^{j_n}
        \tag{Using \eqref{eq:fdfhfqeirjepo}}
        \displaybreak[0]\\
        &\;\ra\;\;&&
        \qquad\bigvee_{\mathclap{
            \taux \popreleq \taun \popreleq \tau'
            \atop{\taun = \_,\cnai(j_n,0)
              ,\_,\cnai(j_n,1)
              \text{ or }  \_,\pnai(j_n,1)}}}\qquad
        \cstate_{\tau'}(\tsuccess_\hn^\ID) = \nonce^{j_n}
        \tag{Relaxing the condition $\taun \popre \tau'$}
      \end{alignat*}
      Moreover,
      \[
        \accept_{\tau'}^\ID
        \ra
        \left(\cstate_{\tau'}(\tsuccess_\hn^\ID) =
          \nonce^{j'}\right)
        \;\ra\;
        \qquad\bigvee_{\mathclap{
            \taux \popreleq \taun \popreleq \tau'
            \atop{\taun = \_,\cnai(j_n,0)
              ,\_,\cnai(j_n,1)
              \text{ or }  \_,\pnai(j_n,1)}}}\qquad
        \cstate_{\tau'}(\tsuccess_\hn^\ID) = \nonce^{j_n}
      \]
      This concludes this case.
    \item If $\taun = \_,\pnai(j_n,1)$ then the proof is the same than in the previous case, but doing a case disjunction over $\incaccept_{\tau'}^\ID$.
    \end{itemize}
    Let $\tauo'$ be such that $\taut = \tauo',\_$. By applying the induction hypothesis to $\tauo'$, we get:
    \[
      \cstate_\taux(\session_\hn^\ID) = \nonce^{j_x}
      \;\ra\;
      \qquad\bigvee_{\mathclap{
          \taux \popreleq \taun \popreleq \tauo'
          \atop{\taun = \_,\cnai(j_n,0)
            ,\_,\cnai(j_n,1)
            \text{ or }  \_,\pnai(j_n,1)}}}\qquad
      \cstate_{\tauo'}(\tsuccess_\hn^\ID) = \nonce^{j_n}
      \;\ra\;
      \qquad\bigvee_{\mathclap{
          \taux \popreleq \taun \popre \taut
          \atop{\taun = \_,\cnai(j_n,0)
            ,\_,\cnai(j_n,1)
            \text{ or }  \_,\pnai(j_n,1)}}}\qquad
      \instate_{\taut}(\tsuccess_\hn^\ID) = \nonce^{j_n}
    \]
    We conclude using \eqref{eq:fiqejfcnlcaprqww} and the property above.    

  \item \ref{b4}. We prove this by induction over $\tau$. For $\tau = \epsilon$, from Definition~\ref{def:init-sigma-phi} we know that $\cstate_\epsilon(\sqn^{\ID}_\ue) \equiv \sqnini_\ue^\ID$ and $\cstate_\epsilon(\sqn^{\ID}_\hn) \equiv \sqnini_\hn^\ID$. Using the axiom $\sqn$-\textsf{ini}, we know that $\sqnini_\hn^\ID \le \sqnini_\ue^\ID$.

    For the inductive case, we let $\tau = \tauo,\ai$ and assume that the property holds for $\tauo$. We have three cases:
    \begin{itemize}
    \item If when executing the action $\ai$ the value $\sqn_\hn^\ID$ is not updated. Using \ref{b5} we know that $\cstate_\tau(\sqn^{\ID}_\ue) \ge \cstate_\tauo(\sqn^{\ID}_\ue)$, and we conclude by applying the induction hypothesis.
    \item If $\ai = \pnai(j,1)$, then we do a case disjunction on $\incaccept_\tau^\ID$. If it is true then: 
      \begin{alignat*}{2}
        \incaccept_\tau^\ID
        &\;\ra\;\;&&
        \bigvee_{
          \tauo = \_,\npuai{1}{\ID}{j_0} \popre \tau}
        \cstate_\tau(\sqn_\hn^\ID) = \instate_\tauo(\sqn_\ue^\ID)
        \tag{Using \ref{acc1}}\displaybreak[0]\\
        &\;\ra\;\;&&
        \bigvee_{
          \tauo = \_,\npuai{1}{\ID}{j_0} \popre \tau}
        \cstate_\tau(\sqn_\hn^\ID) = \instate_\tauo(\sqn_\ue^\ID)
        \wedge \instate_\tauo(\sqn_\ue^\ID) \le \cstate_\tau(\sqn_\ue^\ID)
        \tag{Using \ref{b5}}\\
        &\;\ra\;\;&&
        \cstate_\tau(\sqn_\hn^\ID) \le \cstate_\tau(\sqn_\ue^\ID)
      \end{alignat*}
      If $\incaccept_\tau^\ID$ is false then $\neg \incaccept_\tau^\ID \ra \cstate_\tau(\sqn_\hn^\ID) = \instate_\tau(\sqn_\hn^\ID)$, and we conclude by applying the induction hypothesis.

    \item If $\ai = \cnai(j,1)$, then we do a case disjunction on $\incaccept_\tau^\ID$. First we handle the case where it is true. Let $\tautt = \_,\cnai(j,0) \popre \tau$. We know that $\incaccept_\tau^\ID \ra \instate_\tau(\tsuccess_\hn^\ID) = \nonce^j$. Moreover:
      \begin{alignat*}{2}
        \instate_\tau(\tsuccess_\hn^\ID) = \nonce^j
        &\;\ra\;\;&&
        \qquad\bigwedge_{\mathclap{
            \tautt \popre \taut \popre \tau
            \atop{\taut = \_,\cnai(j_x,0)
              ,\_,\cnai(j_x,1)
              \text{ or }  \_,\pnai(j_x,1)}}}\qquad
        \instate_\tau(\tsuccess_\hn^\ID) \ne \nonce^{j_x}\displaybreak[0]\\
        &\;\ra\;\;&&
        \cstate_\tautt(\sqn_\hn^\ID) \le \instate_\tau(\sqn_\hn^\ID)
        \tag{Using the contrapositive of \ref{b2}}\\
        &\;\ra\;\;&&
        \cstate_\tautt(\sqn_\hn^\ID) = \instate_\tau(\sqn_\hn^\ID)
        \tag{Using \ref{b5}}
      \end{alignat*}
      We know that $\incaccept_\tau^\ID \ra \accept_\tau^\ID$. Moreover, using \ref{acc3} and \ref{acc4}, we can check that:
      \[
        \accept_\tau^\ID
        \;\ra\;
        \bigvee_{\taut = \_,\cuai_\ID(\_,1)\atop{\tautt \popre \taut \popre \tau}}
        \instate_\taut(\sqn_\ue^\ID) = \instate_\tautt(\sqn_\hn^\ID)
      \]
      Moreover, $\accept_\tau^\ID \ra \instate_\taut(\sqn_\ue^\ID) < \cstate_\taut(\sqn_\ue^\ID)$, and using \ref{b5} we know that $\cstate_\taut(\sqn_\ue^\ID) \le \cstate_\tau(\sqn_\ue^\ID)$. Finally, $\incaccept_\tau^\ID \ra \cstate_\tau(\sqn_\hn^\ID) = \instate_\tau(\sqn_\hn^\ID) + 1$. Putting everything together:
      \[
        \incaccept_\tau^\ID \ra
        \cstate_\tau(\sqn_\hn^\ID) \le \cstate_\tau(\sqn_\ue^\ID)
      \]
      Which is what we wanted. We summarize graphically this proof below:
      \begin{center}
        \begin{tikzpicture}
          [dn/.style={inner sep=0.2em,fill=black,shape=circle},
          sdn/.style={inner sep=0.15em,fill=white,draw,solid,shape=circle},
          sl/.style={decorate,decoration={snake,amplitude=1.6}},
          dl/.style={dashed},
          pin distance=0.5em,
          every pin edge/.style={thin}]

          \draw[thick] (0,0)
          node[left=1.3em] {$\tau:$}
          -- ++(0.5,0)
          node[dn,pin={above:{$\cnai(j,0)$}}]
          (a) {}
          node[below,yshift=-0.3em] {$\tautt$}
          -- ++(3,0)
          node[dn,pin={above:{$\cuai_\ID(\_,1)$}}]
          (b) {}
          node[below,yshift=-0.3em] {$\taut$}
          -- ++(3,0)
          node[dn]
          (bo) {}
          -- ++(3,0)
          node[dn,pin={above:{$\cnai(j,1)$}}]
          (c) {}
          node[below,yshift=-0.3em] {$\tau$}
          -- ++(3,0)
          node[dn]
          (cp) {}
          -- ++(0.5,0);

          \path (a) -- ++ (0,-1)
          node (a2) {$\instate_\tautt(\sqn_\hn^\ID)$};

          \path (b) -- ++ (0,-1)
          -- ++ (0,-1.3)
          node (b2) {$\instate_\taut(\sqn_\ue^\ID)$};

          \path (bp) -- ++ (0,-1)
          -- ++ (0,-1.3)
          node (bp2) {$\cstate_\taut(\sqn_\ue^\ID)$};

          \path (c) -- ++ (0,-1)
          node (c1) {$\instate_\tau(\sqn_\hn^\ID)$};

          \path (cp) -- ++ (0,-1)
          node (cp1) {$\cstate_\tau(\sqn_\hn^\ID)$}
          -- ++ (0,-1.3)
          node (cp2) {$\cstate_\tau(\sqn_\ue^\ID)$};

          \draw (a2) to node[midway,above]{$=$} (c1) 
          (c1) -- (cp1) node[midway,above]{$+1$}
          (a2) -- (b2) node[midway,below,sloped]{$=$}
          (b2) -- (bp2) node[midway,below]{$+1$}
          (bp2) -- (cp2) node[midway,below]{$\le$};
        \end{tikzpicture}
      \end{center}
      If $\incaccept_\tau^\ID$ is false then $\neg \incaccept_\tau^\ID \ra \cstate_\tau(\sqn_\hn^\ID) = \instate_\tau(\sqn_\hn^\ID)$, and we conclude by applying the induction hypothesis.
    \end{itemize}

  \item \ref{b6}. First, observe that:
    \[
      \cstate_\taut(\sync_\ue^\ID) \ra
      \bigvee_{\taun = \_,\npuai{2}{\ID}{j}
        \atop{\tauo \popre \taun \popre \taut}}
      \accept_\taun^\ID
      \numberthis\label{eq:fasdadasdsalfhafjsapf}
    \]
    Let $\taun = \_,\npuai{2}{\ID}{j}$ such that $\tauo \popre \taun \popre \taut$. Let $\taui = \_,\npuai{1}{\ID}{j}$ such that $\taui \popre \taun$. We know that $\taui \popre \tauo$. We apply \ref{acc2}:
    \[
      \accept_\taun^\ID \;\ra\;
      \bigvee_{\taux = \_, \pnai(j_x,1)
        \atop{\taui \popre \taux \popre \taun}}
      \accept_\taux^\ID \;\wedge\;
      g(\inframe_\taui) = \nonce^{j_x} \;\wedge\;
      \pi_1(g(\inframe_\taux)) =
      \enc{
        \spair{\ID}
        {\instate_\taui(\sqn_\ue^\ID)}}
      {\pk_\hn}{\enonce^j}
      \numberthis\label{eq:fslfjidfasjfiapfj}
    \]
    Let $\taux = \_, \pnai(j_x,1)$ such that $\taui \popre \taux \popre \taun$. Using \ref{b5}, we get that $\cstate_\tauo(\sqn_\ue^\ID) \le \instate_\taui(\sqn_\ue^\ID)$ and that  $\cstate_\taux(\sqn_\hn^\ID) \le \cstate_\taut(\sqn_\hn^\ID)$. There are two cases, depending on whether we have $\incaccept_\taux^\ID$.
    \begin{itemize}
    \item We know that $\incaccept_\taux^\ID \ra \cstate_\taux(\sqn_\hn^\ID) = \instate_\taui(\sqn_\ue^\ID) + 1 > \instate_\taui(\sqn_\ue^\ID)$. Putting everything together, we get that:
      \[
        \accept_\taun^\ID \wedge
        \incaccept_\taux^\ID
        \ra
        \cstate_\tauo(\sqn_\ue^\ID) < \cstate_\taut(\sqn_\hn^\ID)
        \numberthis\label{eq:fdfhvifajsifpeuipasras}
      \]

    \item We know that:
      \[
        \accept_\taux^\ID \wedge \neg \incaccept_\taux^\ID
        \wedge
        \pi_1(g(\inframe_\taux)) =
        \enc{
          \spair{\ID}
          {\instate_\taui(\sqn_\ue^\ID)}}
        {\pk_\hn}{\enonce^j}
        \ra \instate_\taui(\sqn_\ue^\ID) < \instate_\taux(\sqn_\hn^\ID)
      \]
      Moreover, $\neg \incaccept_\taux^\ID \ra \instate_\taux(\sqn_\hn^\ID) = \cstate_\taux(\sqn_\hn^\ID)$. We recall that  $\cstate_\tauo(\sqn_\ue^\ID) \le \instate_\taui(\sqn_\ue^\ID)$ and that  $\cstate_\taux(\sqn_\hn^\ID) \le \cstate_\taut(\sqn_\hn^\ID)$. Therefore:
      \[
        \accept_\taux^\ID \wedge \neg \incaccept_\taux^\ID
        \wedge
        \pi_1(g(\inframe_\taux)) =
        \enc{
          \spair{\ID}
          {\instate_\taui(\sqn_\ue^\ID)}}
        {\pk_\hn}{\enonce^j}
        \ra \cstate_\tauo(\sqn_\ue^\ID) < \cstate_\taut(\sqn_\hn^\ID)
        \numberthis\label{eq:fdfhvifajsifpeuipasras0}
      \]
    \end{itemize}
    Using \eqref{eq:fslfjidfasjfiapfj}, \eqref{eq:fdfhvifajsifpeuipasras} and \eqref{eq:fdfhvifajsifpeuipasras0} we get that $\accept_\taun^\ID \ra \cstate_\tauo(\sqn_\ue^\ID) < \cstate_\taut(\sqn_\hn^\ID)$. We summarize this graphically below:
    \begin{center}
      \begin{tikzpicture}
        [dn/.style={inner sep=0.2em,fill=black,shape=circle},
        sdn/.style={inner sep=0.15em,fill=white,draw,solid,shape=circle},
        sl/.style={decorate,decoration={snake,amplitude=1.6}},
        dl/.style={dashed},
        pin distance=0.5em,
        every pin edge/.style={thin}]

        \draw[thick] (0,0)
        node[left=1.3em] {$\tau:$}
        -- ++(0.5,0)
        node[dn,pin={above,align=left:{$\ns_\ID(\_)$\\ or $\epsilon$}}]
        (a) {}
        node[below,yshift=-0.3em,name=a0] {$\tauo$}
        -- ++(3,0)
        node[dn,pin={above:{$\npuai{1}{\ID}{j}$}}]
        (b) {}
        node[below,yshift=-0.3em,name=b0] {$\taui$}
        -- ++(3,0)
        node[dn,pin={above:{$\pnai(j_x,1)$}}]
        (c) {}
        node[below,yshift=-0.3em,name=c0] {$\taux$}
        -- ++(3,0)
        node[dn,pin={above:{$\npuai{2}{\ID}{j}$}}]
        (d) {}
        node[below,yshift=-0.3em,name=d0] {$\taun$}
        -- ++(3,0)
        node[dn]
        (f) {}
        node[below,yshift=-0.3em,name=f0] {$\taut$}
        -- ++(0.5,0);

        \draw[thin,dashed] (b0) -- ++(0,-0.5) -| (c0)
        {[draw=none] -- ++(0,-0.5)} -| (d0);
        
        \path (a) -- ++ (0,-1.3)
        -- ++ (0,-1.3)
        node (a2) {$\cstate_\tauo(\sqn_\ue^\ID)$};

        \path (b) -- ++ (0,-1.3)
        -- ++ (0,-1.3)
        node (b2) {$\instate_\taui(\sqn_\ue^\ID)$};

        \path (c) -- ++ (0,-1.3)
        node (c2) {$\cstate_\taux(\sqn_\hn^\ID)$};

        \path (f) -- ++ (0,-1.3)
        node (f2) {$\cstate_\taut(\sqn_\hn^\ID)$};

        \draw (a2) -- (b2) node[midway,above]{$\le$}
        (b2) -- (c2) node[midway,above,sloped]{$<$}
        (c2) -- (f2) node[midway,below]{$\le$};
      \end{tikzpicture}
    \end{center}

    Since this is true for all $\taun = \_,\npuai{2}{\ID}{j}$ such that $\tauo \popre \taun \popre \taut$, we deduce from \eqref{eq:fasdadasdsalfhafjsapf} that
    \[
      \cstate_\taut(\sync_\ue^\ID) \ra
      \cstate_\tauo(\sqn_\ue^\ID) < \cstate_\taut(\sqn_\hn^\ID)
    \]
    Which concludes this proof.
    
  \item \ref{b7}. Let $\tau_\ns = \epsilon$ or $\ns_\ID(\_)$ be such that $\tau_\ns \popreleq \tau$ and $\tau_\ns \not \potau \ns_\ID(\_)$. We show by induction over $\tau'$ with $\tau_\ns \popreleq \tau' \popreleq \tau$ that $\cstate_{\tau'}(\success^{\ID}_\ue) \equiv \false$.

    For $\tau' = \tau_\ns$, this is true using from Definition~\ref{def:init-sigma-phi} if if $\tau_\ns = \epsilon$, and from the protocol term definitions if $\tau_\ns = \ns_\ID(\_)$. The inductive case is straightforward.
    \qedhere
  \end{itemize}
  
\end{proof}

We can now state the following acceptance characterization properties.
\begin{lemma}
  \label{lem:equiv-accept}
  For every valid symbolic trace $\tau = \_,\ai$ and identity $\ID$ we have:
  \begin{itemize}
  \item \customlabel{equ2}{\lpequ{2}} If $\ai = \npuai{2}{\ID}{j}$. Let $\tautt = \_\npuai{1}{\ID}{j}$ such that $\tautt \popre \tau$. Also let:
    % (I6)
    \begin{gather*}
      \supitr{\tautt,\tau}{\taut} \;\equiv\;
      \left(
        \begin{alignedat}{2}
          &&&
          g(\inframe_\tau) =
          \mac{\spair
            {\nonce^{j_1}}
            {\sqnsuc(\instate_\tautt(\sqn_\ue^\ID))}}
          {\mkey^\ID}{2}
          \;\wedge\;
          g(\inframe_\tautt) = \nonce^{j_1} \\
          &\wedge\;&&
          \pi_1(g(\inframe_\taut)) =
          \enc{\spair{\ID}{\instate_\tautt(\sqn_\ue^\ID)}}{\pk_\hn}{\enonce^j}
        \end{alignedat}
      \right)
    \end{gather*}
    Then:
    \[
      \accept_\tau^\ID
      \;\leftrightarrow\;
      \bigvee_{\taut = \_,\pnai(j_1,1)\atop{\tautt \potau \taut}}
      \supitr{\tautt,\tau}{\taut}
    \]

  \item \customlabel{equ3}{\lpequ{3}} If $\ai = \pnai(j,1)$. Then:
    % (I7)
    \begin{alignat*}{2}
      \accept_\tau^\ID
      &\;\lra\;\;&&
      \bigvee_{
        \taut = \_, \npuai{1}{\ID}{j_1}
        \atop{\taut \popre \tau}}
      \left(
        \begin{alignedat}{2}
          &&&g(\inframe_{\taut}) = \nonce^j
          \wedge
          \pi_1(g(\inframe_\tau)) =
          \enc{\spair
            {\ID}
            {\instate_{\taut}(\sqn_\ue^\ID)}}
          {\pk_\hn}{\enonce^{j_1}}\\
          &\wedge\;&&
          \pi_2(g(\inframe_\tau)) =
          {\mac{\spair
              {\enc{\spair
                  {\ID}
                  {\instate_{\taut}(\sqn_\ue^\ID)}}
                {\pk_\hn}{\enonce^{j_1}}}
              {g(\inframe_{\taut})}}
            {\mkey^\ID}{1}}
        \end{alignedat}
      \right)\\
      &\;\lra\;\;&&
      \bigvee_{
        \taut = \_, \npuai{1}{\ID}{j_1}
        \atop{\taut \popre \tau}}
      \left(
        g(\inframe_{\taut}) = \nonce^j
        \wedge
        g(\inframe_\tau) = t_\taut
      \right)
    \end{alignat*}

  \item \customlabel{equ4}{\lpequ{4}} If $\ai = \cuai_\ID(j,1)$. For every $\taut = \_, \cnai(j_0,0)$ such that $\taut \popre \tau$, we let:
    % (I9)
    \begin{alignat*}{2}
      \ctr{\tau}{\taut}
      &\equiv\;\;&&
      \left(
        \begin{alignedat}{1}
          &\pi_3(g(\inframe_\tau)) =
          \mac{\striplet
            {\nonce^{j_0}}
            {\instate_\taut(\sqn_\hn^\ID)}
            {\instate_\tau(\suci_\ue^\ID)}}
          {\mkey}{3} \wedge
          \instate_\tau(\uetsuccess^{\ID}) \\
          &\wedge\range{\instate_\tau(\sqn_\ue^\ID)}
          {\instate_\taut(\sqn_\hn^\ID)}\wedge
          g(\inframe_\taut) = \instate_\taut(\suci_\hn^{\ID}) \wedge
          \pi_1(g(\inframe_\tau)) = \nonce^{j_0}\\
          & \wedge
          \pi_2(g(\inframe_\tau)) =
          \instate_\taut(\sqn_\hn^\ID) \oplus \ow{\nonce^{j_0}}{\key}
          \wedge
          \instate_\tau(\suci_\ue^\ID) = \instate_\taut(\suci_\hn^\ID)
        \end{alignedat}
      \right)
    \end{alignat*}
    Then:
    \begin{mathpar}
      \left(
        \ctr{\tau}{\taut}
        \ra
        \accept_\taut^\ID
      \right)
      _{\taut = \_, \cnai(j_0,0)
        \atop{\taut \popre \tau}}

      \accept_\tau^\ID
      \;\lra\;
      \bigvee_{\taut = \_, \cnai(j_0,0)
        \atop{\taut \popre \tau}}
      \ctr{\tau}{\taut}
    \end{mathpar}

  \item \customlabel{equ5}{\lpequ{5}} If $\ai = \cnai(j,1)$. Let $\taut = \_,\cnai(j,0)$ such that $\taut \popre \tau$, and let $\ID \in \iddom$. Then:
    % (I10)
    \begin{equation*}
      \accept_\tau^\ID
      \;\leftrightarrow\;\;
      \bigvee
      _{\taui = \_,\cuai_\ID(j_i,1)
        \atop{\taut \potau \taui}}
      \ctr{\taui}{\taut} \wedge
      g(\inframe_\tau) = \mac{\nonce^j}{\mkey^\ID}{4}
    \end{equation*}
  \end{itemize}
\end{lemma}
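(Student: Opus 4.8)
The plan is to prove the four equivalences \ref{equ2}--\ref{equ5} in that order, so that \ref{equ5} (whose right-hand side mentions $\ctr{\taui}{\taut}$ for the strict prefix $\taui = \_,\cuai_\ID(j_i,1)$ of $\tau$) may invoke \ref{equ4} as an already-established fact about shorter valid traces. In each case the ``$\ra$'' implication is obtained by repackaging the matching necessary-condition lemma --- \ref{acc1}--\ref{acc4}, together with \ref{equ1} and the authentication results of Lemmas~\ref{lem:auth-serv-net} and \ref{lem:inj-auth-net} --- while the ``$\la$'' implication is obtained by substituting the honestly generated message ($t_\taut$, or $\msg_\tau^\ID$ for network messages) into the definition of $\accept_\tau^\ID$ and simplifying with the equality axioms $\axioms_{\textsf{eq}}$ and $\axioms_{\textsf{ite}}$: decryption correctness $\dec(\enc{x}{\pk(y)}{z},\sk(y)) \peq x$, the projection axioms, $\xor$-idempotence, and $\eq{x}{x} \peq \true$ for the honest $\macsym$ terms. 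There is no induction on $\tau$; each item is a case analysis on the last action of $\tau$, using the invariants of Propositions~\ref{prop:invs}, \ref{prop:equiv-props-ini} and \ref{prop:equiv-props}, and using (via Assumption~\ref{ass:support-valid}) that every state variable is touched only by a fixed set of actions, so that the values of $\bauth_\ue^\ID$, $\sqn_\ue^\ID$, $\suci_\hn^\ID$, $\bauth_\hn^j$, $\uetsuccess^\ID$, etc.\ are preserved between the honest action that produces a message and the action $\tau$ that consumes it.

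For the ``$\ra$'' directions I would proceed as follows. In \ref{equ3}, \ref{acc1} already gives $\pi_1(g(\inframe_\tau)) = t^{\textsf{enc}}_\taut$ and $g(\inframe_\taut) = \nonce^j$ for some $\taut = \_,\npuai{1}{\ID}{j_1} \popre \tau$, and the conjunct $\pi_2(g(\inframe_\tau)) = \mac{\spair{\pi_1(g(\inframe_\tau))}{\nonce^j}}{\mkey^\ID}{1}$ is part of the definition of $\accept_\tau^\ID$, so the pair and length axioms yield $g(\inframe_\tau) = t_\taut$. In \ref{equ2}, \ref{acc2} supplies $\accept_\taut^\ID$, $g(\inframe_\tautt) = \nonce^{j_1}$ and the replayed cipher-text for some $\taut = \_,\pnai(j_1,1)$ with $\tautt \potau \taut$; since no $\ID$-action lies between $\tautt = \_,\npuai{1}{\ID}{j}$ and $\tau = \_,\npuai{2}{\ID}{j}$ we get $\instate_\tau(\bauth_\ue^\ID) = \nonce^{j_1}$ and $\instate_\tau(\sqn_\ue^\ID) = \sqnsuc(\instate_\tautt(\sqn_\ue^\ID))$, which turns the $\macsym^2$ test of $\accept_\tau^\ID$ into $\supitr{\tautt,\tau}{\taut}$. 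In \ref{equ4}, \ref{acc3} directly produces all conjuncts of $\ctr{\tau}{\taut}$ once $t_\sqn$ is rewritten to $\instate_\taut(\sqn_\hn^\ID)$ by $\xor$-idempotence, the range and $\uetsuccess$ conjuncts being part of $\accept_\tau^\ID$ and the equality $g(\inframe_\taut) = \instate_\taut(\suci_\hn^\ID)$ coming from $\accept_\taut^\ID$. In \ref{equ5}, \ref{acc4} yields $\taui = \_,\cuai_\ID(j_i,1) \popre \tau$ with $\accept_\taui^\ID$ and $\pi_1(g(\inframe_\taui)) = \nonce^j$; applying \ref{equ4} to $\taui$ rewrites $\accept_\taui^\ID$ as a disjunction of $\ctr{\taui}{\taut}$ over $\taut = \_,\cnai(j_0,0) \popre \taui$, and $\pi_1(g(\inframe_\taui)) = \nonce^j$ with nonce-distinctness ($\ax{EQIndep}$) forces $j_0 = j$, hence $\taut \potau \taui$, while $g(\inframe_\tau) = \mac{\nonce^j}{\mkey^\ID}{4}$ holds because $\accept_\tau^\ID$ contains this test.

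For the ``$\la$'' directions, once the honest message is substituted the individual tests are discharged exactly as above; in \ref{equ4} every test is already a literal conjunct of $\ctr{\tau}{\taut}$, and in \ref{equ5} the conjunct $\instate_\tau(\bauth_\hn^j) = \ID$ of $\accept_\tau^\ID$ is recovered from the side implication $\ctr{\taui}{\taut} \ra \accept_\taut^\ID$ (so $\bauth_\hn^j$ is set to $\ID$ at $\taut = \_,\cnai(j,0)$, and, by \ref{a6}, unambiguously so) together with the fact that $\bauth_\hn^j$ is untouched between $\taut$ and $\tau$. This side implication --- which \ref{equ4} also asserts and which I would prove along the way --- is the step I expect to be the main obstacle: $\accept_\taut^\ID$ for $\taut = \_,\cnai(j_0,0)$ requires $\instate_\taut(\suci_\hn^\ID) \ne \unset$, whereas $\ctr{\tau}{\taut}$ only offers $\instate_\tau(\uetsuccess^\ID)$ and $\instate_\tau(\suci_\ue^\ID) = \instate_\taut(\suci_\hn^\ID)$. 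The argument is: $\uetsuccess^\ID$ was set at the start of the $\guti$ session $\cuai_\ID(j,0)$ to the then-value of $\success_\ue^\ID$, so $\success_\ue^\ID$ held there; by \ref{b1} this gives $\suci_\ue^\ID \ne \unset$ there; and since $\suci_\ue^\ID$ is modified only at $\fuai_\ID(\_)$ and no $\ID$-action occurs between $\cuai_\ID(j,0)$ and $\tau = \_,\cuai_\ID(j,1)$, we obtain $\instate_\tau(\suci_\ue^\ID) \ne \unset$, hence $\instate_\taut(\suci_\hn^\ID) \ne \unset$; the $\macsym^3$ part of $\accept_\taut^\ID$ then follows from the corresponding conjunct of $\ctr{\tau}{\taut}$. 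Apart from this point the proof is a long but routine bookkeeping over the finitely many state variables, carried out separately for each possible last action of $\tau$.
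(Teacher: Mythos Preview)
Your proposal is correct and follows essentially the same approach as the paper: each item is obtained by starting from the corresponding necessary condition \ref{acc1}--\ref{acc4}, keeping $\accept_\tau^\ID$ as a conjunct to turn the implication into an equivalence, and then rewriting using the exposed equalities until the right-hand side is reached; \ref{equ5} invokes \ref{equ4} on the shorter trace $\taui$ and uses nonce distinctness to pin down $j_0 = j$, exactly as you describe. You have also correctly isolated the one non-routine step --- deriving $\instate_\taut(\suci_\hn^\ID) \ne \unset$ in \ref{equ4} from $\instate_\tau(\uetsuccess^\ID)$ via \ref{b1} and the absence of $\ID$-actions between $\cuai_\ID(j,0)$ and $\tau$ --- which is precisely how the paper handles it.
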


\subsection{Proof of Lemma~\ref{lem:equiv-accept}}

\subsubsection*{Proof of $\ref{equ2}$}
Using \ref{acc2} we know that:
\begin{alignat*}{2}
  \accept_\tau^\ID
  &\;\leftrightarrow\;&&
  \bigvee_{\taut = \_,\pnai(j_1,1)
    \atop{\tautt \potau \taut}}
  \accept_\tau^\ID
  \;\wedge\;
  g(\inframe_\tautt) = \nonce^{j_1}
  \;\wedge\;
  \pi_1(g(\inframe_\taut)) =
  \enc{\spair{\ID}{\instate_\tautt(\sqn_\ue^\ID)}}{\pk_\hn}{\enonce^j}
  \\
  &\;\leftrightarrow\;&&
  \bigvee_{\taut = \_,\pnai(j_1,1)
    \atop{\tautt \potau \taut}}
  \left(
    \begin{alignedat}{2}
      &&&
      g(\inframe_\tau) =
      \mac{\spair{\nonce^{j_1}}{\instate_\tau(\sqn_\ue^\ID)}}{\mkey^\ID}{2}
      \;\wedge\;
      g(\inframe_\tautt) = \nonce^{j_1}
      \\&\wedge\;&&
      \pi_1(g(\inframe_\taut)) =
      \enc{\spair{\ID}{\instate_\tautt(\sqn_\ue^\ID)}}{\pk_\hn}{\enonce^j}
    \end{alignedat}
  \right)
  \\
  \intertext{Since $\instate_\tau(\sqn_\ue^\ID)\;\equiv\; \sqnsuc(\instate_\tautt(\sqn_\ue^\ID))$:}
  &\;\leftrightarrow\;&&
  \bigvee_{\taut = \_,\pnai(j_1,1)
    \atop{\tautt \potau \taut}}
  \left(
    \begin{alignedat}{2}
      &&&
      g(\inframe_\tau) =
      \mac{\spair
        {\nonce^{j_1}}
        {\sqnsuc(\instate_\tautt(\sqn_\ue^\ID))}}
      {\mkey^\ID}{2}
      \;\wedge\;
      g(\inframe_\tautt) = \nonce^{j_1} \\
      &\wedge\;&&
      \pi_1(g(\inframe_\taut)) =
      \enc{\spair{\ID}{\instate_\tautt(\sqn_\ue^\ID)}}{\pk_\hn}{\enonce^j}
    \end{alignedat}
  \right)\\
  &\;\leftrightarrow\;&&
  \bigvee_{\taut = \_,\pnai(j_1,1)
    \atop{\tautt \potau \taut}}
  \supitr{\tautt,\tau}{\taut}
\end{alignat*}

\subsubsection*{Proof of $\ref{equ3}$}
Using \ref{acc1} it is easy to check that:
\begin{alignat*}{2}
  \accept_\tau^\ID
  &\;\leftrightarrow\;\;&&
  \bigvee_{
    \taut = \_,\npuai{1}{\ID}{j_1} \popre \tau}
  \left(
    \begin{alignedat}{2}
      &&&\dotuline{g(\inframe_{\taut}) = \nonce^j}
      \wedge
      \uwave{\pi_1(g(\inframe_\tau)) =
        \enc{\spair
          {\ID}
          {\instate_{\taut}(\sqn_\ue^\ID)}}
        {\pk_\hn}{\enonce^{j_1}}}\\
      &\wedge\;&&
      \pi_2(g(\inframe_\tau)) =
      {\mac{\spair
          {\uwave{\pi_1(g(\inframe_\tau))}}
          {\dotuline{\;\nonce^j}}}
        {\mkey^\ID}{1}}
    \end{alignedat}
  \right)\\
  \intertext{Which can be rewritten as follows (we identify above, using waves and dots, which equalities are used, and which terms are rewritten):}
  &\;\leftrightarrow\;\;&&
  \bigvee_{
    \taut = \_,\npuai{1}{\ID}{j_1} \popre \tau}
  \left(
    \begin{alignedat}{2}
      &&&g(\inframe_{\taut}) = \nonce^j
      \wedge
      \pi_1(g(\inframe_\tau)) =
      \enc{\spair
        {\ID}
        {\instate_{\taut}(\sqn_\ue^\ID)}}
      {\pk_\hn}{\enonce^{j_1}}\\
      &\wedge\;&&
      \pi_2(g(\inframe_\tau)) =
      {\mac{\spair
          {\enc{\spair
              {\ID}
              {\instate_{\taut}(\sqn_\ue^\ID)}}
            {\pk_\hn}{\enonce^{j_1}}}
          {g(\inframe_{\taut})}}
        {\mkey^\ID}{1}}
    \end{alignedat}
  \right)
\end{alignat*}
First, observe that:
\begin{mathpar}
  \enc{\spair
    {\ID}
    {\instate_{\taut}(\sqn_\ue^\ID)}}
  {\pk_\hn}{\enonce^{j_1}} = \pi_1(t_\taut)

  {\mac{\spair
      {\enc{\spair
          {\ID}
          {\instate_{\taut}(\sqn_\ue^\ID)}}
        {\pk_\hn}{\enonce^{j_1}}}
      {g(\inframe_{\taut})}}
    {\mkey^\ID}{1}} = = \pi_2(t_\taut)
\end{mathpar}
We conclude easily using the injectivity of the pair.

\subsubsection*{Proof of \ref{equ4}}
\noindent Using \ref{acc3} we know that:
\begin{alignat*}{2}
  \accept_\tau^\ID
  &\lra\;\;&&
  \bigvee_{\taut = \_, \cnai(j_0,0)
    \atop{\taut \popre \tau}}
  \left(
    \begin{alignedat}{1}
      &\accept_\tau^\ID \wedge
      \accept_\taut^\ID \wedge
      \pi_1(g(\inframe_\tau)) = \nonce^{j_0} \\
      &\wedge
      \pi_2(g(\inframe_\tau)) =
      \instate_\taut(\sqn_\hn^\ID) \oplus \ow{\nonce^{j_0}}{\key}
      \wedge
      \instate_\tau(\suci_\ue^\ID) = \instate_\taut(\suci_\hn^\ID)
    \end{alignedat}
  \right)\\
  \intertext{Inlining the definition of $\accept_\taut^\ID$:}
  &\lra\;\;&&
  \bigvee_{\taut = \_, \cnai(j_0,0)
    \atop{\taut \popre \tau}}
  \left(
    \begin{alignedat}{1}
      &\accept_\tau^\ID \wedge
      g(\inframe_\taut) = \instate_\taut(\suci_\hn^{\ID})
      \wedge
      \instate_\taut(\suci_\hn^{\ID}) \ne \unset \wedge
      \pi_1(g(\inframe_\tau)) = \nonce^{j_0} \\
      &\wedge
      \pi_2(g(\inframe_\tau)) =
      \instate_\taut(\sqn_\hn^\ID) \oplus \ow{\nonce^{j_0}}{\key}
      \wedge
      \instate_\tau(\suci_\ue^\ID) = \instate_\taut(\suci_\hn^\ID)
    \end{alignedat}
  \right)\displaybreak[0]\\
  \intertext{Inlining the definition of $\accept_\tau^\ID$:}
  &\lra\;\;&&
  \bigvee_{\taut = \_, \cnai(j_0,0)
    \atop{\taut \popre \tau}}
  \left(
    \begin{alignedat}{1}
      &\pi_3(g(\inframe_\tau)) =
      \mac{\striplet
        {\uline{\pi_1(g(\inframe_\tau))}}
        {\pi_2(g(\inframe_\tau)) \xor \ow{\uline{\pi_1(g(\inframe_\tau))}}{\key}}
        {\instate_\tau(\suci_\ue^\ID)}}
      {\mkey}{3}\\
      &\wedge
      \instate_\tau(\uetsuccess^{\ID}) \wedge
      \range{\instate_\tau(\sqn_\ue^\ID)}
      {\pi_2(g(\inframe_\tau)) \xor \ow{\uline{\pi_1(g(\inframe_\tau))}}{\key}}\\
      &g(\inframe_\taut) = \instate_\taut(\suci_\hn^{\ID})
      \wedge
      \instate_\taut(\suci_\hn^{\ID}) \ne \unset \wedge
      \uline{\pi_1(g(\inframe_\tau)) = \nonce^{j_0}} \\
      &\wedge
      \pi_2(g(\inframe_\tau)) =
      \instate_\taut(\sqn_\hn^\ID) \oplus \ow{\nonce^{j_0}}{\key}
      \wedge
      \instate_\tau(\suci_\ue^\ID) = \instate_\taut(\suci_\hn^\ID)
    \end{alignedat}
  \right)\displaybreak[0]\\
  \intertext{We rewrite $\pi_1(g(\inframe_\tau))$ into $\nonce^{j_0}$:}
  &\lra\;\;&&
  \bigvee_{\taut = \_, \cnai(j_0,0)
    \atop{\taut \popre \tau}}
  \left(
    \begin{alignedat}{1}
      &\pi_3(g(\inframe_\tau)) =
      \mac{\striplet
        {\uline{\nonce^{j_0}}}
        {\dashuline{\pi_2(g(\inframe_\tau)) \xor \ow{\uline{\nonce^{j_0}}}{\key}}}
        {\instate_\tau(\suci_\ue^\ID)}}
      {\mkey}{3}\\
      &\wedge
      \instate_\tau(\uetsuccess^{\ID}) \wedge
      \range{\instate_\tau(\sqn_\ue^\ID)}
      {\dashuline{\pi_2(g(\inframe_\tau)) \xor \ow{\uline{\nonce^{j_0}}}{\key}}}\\
      &g(\inframe_\taut) = \instate_\taut(\suci_\hn^{\ID})
      \wedge
      \instate_\taut(\suci_\hn^{\ID}) \ne \unset \wedge
      \uline{\pi_1(g(\inframe_\tau)) = \nonce^{j_0}} \\
      &\wedge
      \dashuline{\pi_2(g(\inframe_\tau)) =
        \instate_\taut(\sqn_\hn^\ID) \oplus \ow{\nonce^{j_0}}{\key}}
      \wedge
      \instate_\tau(\suci_\ue^\ID) = \instate_\taut(\suci_\hn^\ID)
    \end{alignedat}
  \right)\displaybreak[0]\\
  \intertext{We rewrite $\pi_2(g(\inframe_\tau)) \xor \ow{\nonce^{j_0}}{\key}$ into $\instate_\taut(\sqn_\hn^\ID)$:}
  &\lra\;\;&&
  \bigvee_{\taut = \_, \cnai(j_0,0)
    \atop{\taut \popre \tau}}
  \left(
    \begin{alignedat}{1}
      &\pi_3(g(\inframe_\tau)) =
      \mac{\striplet
        {\nonce^{j_0}}
        {\dashuline{\instate_\taut(\sqn_\hn^\ID)}}
        {\instate_\tau(\suci_\ue^\ID)}}
      {\mkey}{3}\\
      &\wedge
      \instate_\tau(\uetsuccess^{\ID}) \wedge
      \range{\instate_\tau(\sqn_\ue^\ID)}
      {\dashuline{\instate_\taut(\sqn_\hn^\ID)}}\\
      &\wedge g(\inframe_\taut) = \instate_\taut(\suci_\hn^{\ID})
      \wedge
      \instate_\taut(\suci_\hn^{\ID}) \ne \unset \wedge
      \pi_1(g(\inframe_\tau)) = \nonce^{j_0} \\
      &\wedge
      \dashuline{\pi_2(g(\inframe_\tau)) =
        \instate_\taut(\sqn_\hn^\ID) \oplus \ow{\nonce^{j_0}}{\key}}
      \wedge
      \instate_\tau(\suci_\ue^\ID) = \instate_\taut(\suci_\hn^\ID)
    \end{alignedat}
  \right)
  \numberthis\label{eq:dsjghwioeterierqwpo}
\end{alignat*}
Let $\tautt = \_,\cuai_\ID(j_0,0) \popre \tau$. By validity of $\tau$, there are no user $\ID$ actions between $\tautt$ and $\tau$, and therefore it is easy to check that $\instate_\tau(\uetsuccess^\ID) \ra \instate_\tautt(\success_\ue^\ID)$, and that $\instate_\tau(\suci_\ue^\ID) = \instate_\tautt(\suci_\ue^\ID)$. Moreover, using  \ref{b1} we know that $\instate_\tautt(\success_\ue^\ID) \ra \instate_\tautt(\suci_\ue^\ID) \ne \unset$. Therefore $\instate_\tau(\uetsuccess^\ID) \ra \instate_\tau(\suci_\ue^\ID) \ne \unset$. It follows that:
\[
  \left(
    \instate_\tau(\suci_\ue^\ID) = \instate_\taut(\suci_\hn^\ID) \wedge
    \instate_\tau(\uetsuccess^\ID)
  \right)
  \;\ra\;
  \instate_\taut(\suci_\hn^{\ID}) \ne \unset
\]
Hence we can simplify \eqref{eq:dsjghwioeterierqwpo} by removing $\instate_\taut(\suci_\hn^{\ID}) \ne \unset$. This yields:
\begin{alignat*}{2}
  \accept_\tau^\ID
  &\lra\;\;&&
  \bigvee_{\taut = \_, \cnai(j_0,0)
    \atop{\taut \popre \tau}}
  \left(
    \begin{alignedat}{1}
      &\pi_3(g(\inframe_\tau)) =
      \mac{\striplet
        {\nonce^{j_0}}
        {\instate_\taut(\sqn_\hn^\ID)}
        {\instate_\tau(\suci_\ue^\ID)}}
      {\mkey}{3} \wedge
      \instate_\tau(\uetsuccess^{\ID}) \\
      &\wedge\range{\instate_\tau(\sqn_\ue^\ID)}
      {\instate_\taut(\sqn_\hn^\ID)}\wedge
      g(\inframe_\taut) = \instate_\taut(\suci_\hn^{\ID})\wedge
      \pi_1(g(\inframe_\tau)) = \nonce^{j_0} \\
      &\wedge
      \pi_2(g(\inframe_\tau)) =
      \instate_\taut(\sqn_\hn^\ID) \oplus \ow{\nonce^{j_0}}{\key}
      \wedge
      \instate_\tau(\suci_\ue^\ID) = \instate_\taut(\suci_\hn^\ID)
    \end{alignedat}
  \right)\\
  &\lra\;\;&&
  \bigvee_{\taut = \_, \cnai(j_0,0)
    \atop{\taut \popre \tau}}
  \ctr{\tau}{\taut}
\end{alignat*}
Finally, it is easy to check that for every $\taut = \_, \cnai(j_0,0)$ such that $\taut \popre \tau$, we have $\ctr{\tau}{\taut}\ra \accept_\taut^\ID$.

\subsubsection*{Proof of \ref{equ5}}
Using \ref{acc4} we know that:
\[
  \accept_\tau^\ID
  \;\lra\;
  \bigvee_{\taui = \_,\cuai_\ID(j_i,1) \popre \tau}
  \accept_\tau^\ID \wedge
  \accept_\taui \wedge
  \pi_1(g(\inframe_\taui)) = \nonce^j
\]
Moreover, using \ref{equ4} we know that:
\begin{alignat*}{2}
  \accept_\tau^\ID
  &\lra\;\;&&
  \bigvee_{\taui = \_,\cuai_\ID(j_i,1) \popre \tau
    \atop{\tautt = \_, \cnai(j_2,0) \popre \taui}}
  \accept_\tau^\ID \wedge
  \ctr{\taui}{\tautt} \wedge
  \pi_1(g(\inframe_\taui)) = \nonce^j
\end{alignat*}
Let $\tautt = \_, \cnai(j_2,0) \popre \taui$. Then we know that $\ctr{\taui}{\tautt} \ra \pi_1(g(\inframe_\taui)) = \nonce^{j_2}$. Therefore using $\ax{EQIndep}$ we know that if $j_2 \ne j$:
\[
  \left(
    \ctr{\taui}{\tautt} \wedge
    \pi_1(g(\inframe_\taui)) = \nonce^j
  \right)
  \ra
  \left(
    \pi_1(g(\inframe_\taui)) = \nonce^{j_2} \wedge
    \pi_1(g(\inframe_\taui)) = \nonce^j
  \right)
  \ra
  \false
\]
Hence:
\begin{alignat*}{2}
  \accept_\tau^\ID
  &\lra\;\;&&
  \bigvee_{\taui = \_,\cuai_\ID(j_i,1)
    \atop{\taut \potau \taui}}
  \accept_\tau^\ID \wedge
  \ctr{\taui}{\taut} \wedge
  \pi_1(g(\inframe_\taui)) = \nonce^j\\
  \intertext{Since $\ctr{\taui}{\taut} \ra \pi_1(g(\inframe_\taui)) = \nonce^{j}$:}
  &\lra\;\;&&
  \bigvee_{\taui = \_,\cuai_\ID(j_i,1)
    \atop{\taut \potau \taui}}
  \accept_\tau^\ID \wedge
  \ctr{\taui}{\taut} \displaybreak[0]\\
  \intertext{We inline the definition of $\accept_\tau^\ID$:}
  &\lra\;\;&&
  \bigvee_{\taui = \_,\cuai_\ID(j_i,1)
    \atop{\taut \potau \taui}}
  g(\inframe_\tau) = \mac{\nonce^j}{\mkey^{\ID}}{4} \wedge
  \instate_\tau(\bauth_\hn^j) = \ID \wedge
  \ctr{\taui}{\taut}
\end{alignat*}
Using \ref{equ4}, we know that for every $\taut = \_, \cnai(j_0,0)$ such that $\taut \popre \tau$, $\ctr{\tau}{\taut} \ra \accept_\taut^\ID$. Moreover, using \ref{a6} we know that $\accept_\taut^\ID \ra \instate_\taut(\bauth_\hn^j) = \ID$. Besides, $\instate_\taut(\bauth_\hn^j) = \ID \ra \instate_\tau(\bauth_\hn^j) = \ID$. Hence $\ctr{\tau}{\taut} \ra \instate_\tau(\bauth_\hn^j) = \ID$. By consequence:
\begin{alignat*}{2}
  \accept_\tau^\ID
  &\lra\;\;&&
  \bigvee_{\taui = \_,\cuai_\ID(j_i,1)
    \atop{\taut \potau \taui}}
  g(\inframe_\tau) = \mac{\nonce^j}{\mkey^{\ID}}{4} \wedge
  \ctr{\taui}{\taut}
\end{alignat*}

\subsection{$\suci_\ue^\ID$ Concealment}

\begin{lemma}
  \label{lem:suci-conceal}
  Let $\tau$ be a valid symbolic trace and $\IDx \in \iddom$. For every $\taua = \_,\cnai(j_a,1)$ or $\taua = \_,\pnai(j_a,1)$ such that $\taua \popreleq \tau$, and for every $\taub = \npuai{1}{\IDx}{j_i}$ or $\taub = \cuai_\IDx(j_i,1)$ such that $\taub \popre \taua$, if:
  \[
    \left\{
      \taut \mid \taub \potau \taut
    \right\}
    \cap
    \left\{
      \npuai{\_}{\IDx}{j},
      \cuai_\IDx(j,\_),
      \fuai_\IDx(j)
      \mid j \in \mathbb{N}
    \right\}
    \subseteq
    \left\{
      \npuai{2}{\IDx}{j_i},
      \fuai_\IDx(j_i)
    \right\}
  \]
  Then there exists a derivation of:
  \begin{alignat*}{2}
    &&&
    \incaccept_\taua^\IDx \wedge
    \cstate_\taub(\bauth_\ue^\IDx) = \nonce^{j_a} \wedge
    \accept_\taub^\IDx
    \ra
    g(\inframe_\tau) \ne \suci^{j_a}
  \end{alignat*}
\end{lemma}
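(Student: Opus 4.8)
The plan is to prove the implication by establishing that, under the hypothesis $H := \incaccept_\taua^\IDx \wedge \cstate_\taub(\bauth_\ue^\IDx) = \nonce^{j_a} \wedge \accept_\taub^\IDx$, the freshly sampled GUTI nonce $\suci^{j_a}$ occurs in $\inframe_\tau$ only in \emph{protected} positions — inside a xor mask keyed with $\key^\IDx$, or inside MAC subterms keyed with $\mkey^\IDx$ — and never in plaintext. Once this is known, a short sequence of cryptographic rewrites ($\prffr$, the optimistic-sampling rule $\oplus\textsf{-indep}$, and $\prfmac^3,\prfmac^5$) removes $\suci^{j_a}$ from $\inframe_\tau$ altogether, after which $\ax{EQIndep}$ immediately gives $\eq{g(\inframe_\tau)}{\suci^{j_a}} \sim \false$, which is exactly $H \ra g(\inframe_\tau) \ne \suci^{j_a}$.

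First I would pin down that session $j_a$ belongs to $\IDx$. Since $\incaccept_\taua^\IDx \ra \accept_\taua^\IDx$ and $\taua$ is $\pnai(j_a,1)$ or $\cnai(j_a,1)$, the protocol term definitions give $\cstate_\taua(\bauth_\hn^{j_a}) = \IDx$, and $\bauth_\hn^{j_a}$ is never modified afterwards, so $\instate_{\tau'}(\bauth_\hn^{j_a}) = \IDx$ for every $\tau'$ with $\taua \popreleq \tau' \popreleq \tau$. Consequently $\suci^{j_a}$ is written into the network state only as $\suci_\hn^\IDx$ (at $\taua$, guarded by $\incaccept$) and is shipped in a $\refresh$ message only by the unique action $\fnai(j_a)$ under the guard $\eq{\instate(\eauth_\hn^{j_a})}{\IDx}$. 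Using the acceptance characterisation \ref{equ1} of Lemma~\ref{lem:equiv-accept-ini} together with $\ax{EQIndep}$ on the distinct GUTI nonces, one shows that the only \emph{user}-side variable that can ever hold $\suci^{j_a}$ is $\suci_\ue^\IDx$: a $\fuai_\ID(\_)$ action can store $\suci^{j}$ only under $\injauth(\ID,j)$, which forces $\ID = \instate(\bauth_\hn^{j})$, hence $\ID = \IDx$ when $j = j_a$.

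Next, using the structural hypothesis — that between $\taub$ and $\tau$ the only $\IDx$-actions are $\npuai{2}{\IDx}{j_i}$ and $\fuai_\IDx(j_i)$, so $\IDx$ never initiates a new session — I would prove by induction on $\tau$, in the style of Proposition~\ref{prop:unset-prop}, that $\suci^{j_a}$ is never output in plaintext in $\inframe_\tau$: the only plaintext output of a GUTI is at $\cuai_\IDx(\_,0)$ (the message $\ite{\success_\ue^\IDx}{\suci_\ue^\IDx}{\nosuci}$), which is excluded, and the network never outputs a GUTI in the clear. Hence every occurrence of $\suci^{j_a}$ in $\inframe_\tau$ is inside $\suci^{j_a} \oplus \row{\nonce^{j_a}}{\key^\IDx}$ (from $\fnai(j_a)$), inside $\mac{\spair{\suci^{j_a}}{\nonce^{j_a}}}{\mkey^\IDx}{5}$ (also from $\fnai(j_a)$), or inside $\mac{\striplet{\nonce^{j'}}{\_}{\suci^{j_a}}}{\mkey^\IDx}{3}$ coming from $\cnai(j',0)$ actions at which $\suci_\hn^\IDx$ still equals $\suci^{j_a}$. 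Invariant \textsc{(inv-key)} then supplies the side conditions of the cryptographic axioms: since $\nonce^{j_a}$ occurs under $\rowsym_{\key^\IDx}$ only in $\row{\nonce^{j_a}}{\key^\IDx}$, I apply $\prffr$ to replace it by a fresh nonce $\nu$, then $\oplus\textsf{-indep}$ to replace $\suci^{j_a}\oplus\nu$ by a fresh nonce, killing the only non-MAC occurrence; the remaining MAC subterms are pairwise distinct and distinct from all other MACs under $\mkey^\IDx$ (the type-$5$ ones separated by their $\nonce^{j_a}$ component, the type-$3$ ones by their $\nonce^{j'}$ component, after using \ref{equ4} and $\ax{EQIndep}$ to turn adversarial challenge components into honest nonces), so $\prfmac^3$ and $\prfmac^5$ replace each by a fresh nonce, leaving $\suci^{j_a} \notin \st(\inframe_\tau)$ as required.

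The main obstacle will be the penultimate step: tracking the ``life'' of $\suci^{j_a}$ inside the network variable $\suci_\hn^\IDx$, which may be reset to $\unset$ or overwritten by a different GUTI at intermediate $\cnai(\_,0)$, $\pnai(\_,1)$ or $\cnai(\_,1)$ actions, and proving that under the structural hypothesis none of these paths — on the user side or the network side — ever exposes $\suci^{j_a}$ in the clear. This ``no plaintext leak'' property is a genuine invariant that must be carried along the induction, and it has to be stated in a way that interacts cleanly with the MAC-separation needed for the final $\prfmac$ rewrites.
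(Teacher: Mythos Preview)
Your high-level plan — show that $\suci^{j_a}$ occurs only in protected positions, peel those off with $\prffr$, $\oplus\textsf{-indep}$ and $\prfmac$, then conclude by $\ax{EQIndep}$ — is the right shape, and it is what the paper does. But your list of occurrences is incomplete, and the missing one breaks the one-shot argument. At every intermediate $\taut = \_,\cnai(j',0)$ with $\taua \popre \taut$, the term $t_\taut$ contains the boolean $\accept_\taut^\IDx \equiv \eq{\instate_\taut(\suci_\hn^\IDx)}{g(\inframe_\taut)} \wedge \ldots$, and $\instate_\taut(\suci_\hn^\IDx)$ is (under $\beta$) equal to $\suci^{j_a}$. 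So $\suci^{j_a}$ sits inside an equality test against adversarial input — not inside a MAC, not under a xor mask. No $\prfmac^3$ rewrite touches that occurrence, and after all your rewrites $\suci^{j_a}$ is still syntactically in $\inframe_\tau$, so $\ax{EQIndep}$ does not apply.

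The paper resolves this by a \emph{mutual} induction on $\taut$ from $\taua$ to $\tau$, carrying two invariants simultaneously: the indistinguishability $\cond{\beta}{(\cframe_\taut,\cleak_\taut,\suci^{j_a})} \sim \cond{\beta}{(\cframe_\taut,\cleak_\taut,\suci)}$ for a fresh $\suci$ (where $\cleak_\taut$ packages the non-$\IDx$ state needed to decompose later terms), and the state invariant $\beta \ra \cstate_\taut(\suci_\hn^\IDx) = \suci^{j_a}$. At each $\cnai(j',0)$ step, the indistinguishability hypothesis yields $g(\inframe_\taut) \ne \suci^{j_a}$ (this is the step you are trying to reach only at the very end), which together with the state invariant gives $\beta \ra \neg\accept_\taut^\IDx$; one then rewrites the $\IDx$ branch away, so the problematic occurrence disappears and both invariants are preserved. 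The state invariant also needs to survive $\pnai(j',1)$ and $\cnai(j',1)$: there the paper shows $\beta \ra \neg\incaccept_\taut^\IDx$ by genuinely different arguments — a sequence-number bound for $\pnai$ (using $\beta$ to pin $\sqn_\hn^\IDx$ against $\sqn_\ue^\IDx$ at $\taub$), and the $\tsuccess_\hn^\IDx = \nonce^{j_a}$ mechanism for $\cnai$. Your proposal flags these cases as the main obstacle but does not supply those arguments; without them the invariant $\suci_\hn^\IDx = \suci^{j_a}$ cannot be maintained, and without that invariant you cannot kill the $\accept^\IDx$ boolean. In short: the circular dependency between ``$\suci^{j_a}$ is hidden so far'' and ``$\accept^\IDx$ fails at this step'' forces the induction; a one-shot syntactic occurrence analysis followed by a batch of cryptographic rewrites does not close.
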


\subsubsection*{Proof of Lemma~\ref{lem:suci-conceal}}
Let $\inleak_\tau$ be the vector of terms containing:
\begin{itemize}
\item $\inleak_\tauo$ if $\tau = \tauo,\ai_0$ and $\tau \popre \taua$.
\item The term $\beta$.
\item All the keys except $\key^\IDx$, $\mkey^\IDx$ and the asymmetric secret key $\sk_\hn$.
\item All elements of $\instate_\tau$ (in an arbitrary order) except:
  \begin{itemize}
  \item All the user $\IDx$ values, i.e. for every $\scx$ we have $\instate_\tau(\scx_\ue^\IDx) \not \in \inleak_\tau$.
  \item The network's $\suci$ value of user $\IDx$, i.e. $\instate_\tau(\suci_\hn^\IDx) \not \in \inleak_\tau$.
  \end{itemize}
\end{itemize}
Let:
\[
  \beta \equiv
  \incaccept_\taua^\IDx \wedge
  \cstate_\taub(\bauth_\ue^\IDx) = \nonce^{j_a} \wedge
  \accept_\taub^\IDx
\]
Let $\suci$ be a fresh name. We show by induction on $\taut$ in $\taua \popreleq \taut \popre \tau$ that there are derivations of:
\begin{gather*}
  \lrpcond{\beta}{
    \cframe_\taut,\cleak_\taut,
    \suci^{j_a}}
  \;\sim\;
  \lrpcond{\beta}{
    \cframe_\taut,\cleak_\taut,
    \suci}\\
  \beta \ra \cstate_\taut(\suci_\hn^\IDx) = \suci^{j_a}
  \numberthis\label{eq:ewigqhiqerjqweqwp}
\end{gather*}
We depict the situation below:
\begin{center}
  \begin{tikzpicture}
    [dn/.style={inner sep=0.2em,fill=black,shape=circle},
    sdn/.style={inner sep=0.15em,fill=white,draw,solid,shape=circle},
    sl/.style={decorate,decoration={snake,amplitude=1.6}},
    dl/.style={dashed},
    pin distance=0.5em,
    every pin edge/.style={thin}]

    \draw[thick] (0,0)
    node[left=1.3em] {$\tau:$}
    -- ++(0.5,0)
    node[dn,pin={above,align=left:{$\cuai_\IDx(j_i,1)$\\or $\npuai{1}{\IDx}{j_i}$}}]
    (b) {}
    node[below,yshift=-0.3em,name=d0] {$\taub$}
    -- ++(3,0)
    node[dn,pin={above,align=left:{$\cnai(j_a,1)$\\or $\pnai(j_a,1)$}}]
    (c) {}
    node[below,yshift=-0.3em,name=d1] {$\taua$}
    -- ++(3,0)
    node[dn]
    (e) {}
    node[below,yshift=-0.3em,name=d2] {$\taut$}
    -- ++(3,0)
    node[dn]
    (e) {}
    node[below,yshift=-0.3em,name=d2] {$\tau$};

    \draw[thin,dotted] (d0) -- ++(0,-0.5) -| (d1);
  \end{tikzpicture}
\end{center}

\paragraph{Case $\taut = \taua$}
First, $\beta \ra \incaccept_\taua^\IDx$, and $\incaccept_\taua^\IDx \ra \cstate_\taua(\suci_\hn^\IDx) = \suci^{j_a}$. Therefore we know that:
\[
  \beta \ra \cstate_\taua(\suci_\hn^\IDx) = \suci^{j_a}
\]
Then, we observe from the definition of $\cleak_\taua$ that $\suci^{j_a} \not \in \st(\cleak_\taua)$ (since $\cstate_\taua(\suci_\hn^\IDx)$ is \emph{not} in $\cleak_\taua$). Moreover $\suci^{j_a}$ does not appear in $\inframe_\taua$ and $t_\taua$. Besides, $\suci$ is a fresh name. By consequence we can apply the $\ax{Fresh}$ axiom, and then conclude using $\refl$:
\[
  \infer[\ax{Fresh}]{
    \lrpcond{\beta}{
      \inframe_\taut,
      \inleak_\taut,
      \suci^{j_a}}
    \;\sim\;
    \lrpcond{\beta}{
      \inframe_\taut,
      \inleak_\taut,
      \suci}
  }{
    \infer[\refl]{
      \lrpcond{\beta}{
        \inframe_\taut,
        \inleak_\taut}
      \;\sim\;
      \lrpcond{\beta}{
        \inframe_\taut,
        \inleak_\taut}
    }{}
  }
\]

\paragraph{Case $\taua \popre \taut$}
Let $\ai$ be such that $\taut = \_,\ai$. Assume by induction that:
\begin{gather}
  \lrpcond{\beta}{
    \inframe_\taut,\inleak_\taut,
    \suci^{j_a}}
  \;\sim\;
  \lrpcond{\beta}{
    \inframe_\taut,\inleak_\taut,
    \suci}
  \label{eq:fiujqpurqwrfvvs}\\
  \beta \ra \instate_\taut(\suci_\hn^\IDx) = \suci^{j_a}
  \label{eq:wfeiruqerqwp}
\end{gather}

\paragraph{Part 1}
First, we show that:
\[
  \beta \ra \cstate_\taut(\suci_\hn^\IDx) = \suci^{j_a}
\]
Since we know that \eqref{eq:ewigqhiqerjqweqwp} holds, we just need to look at the case $\ai$ that updates $\suci_\hn^\IDx$ to conclude:
\begin{itemize}
\item If $\ai = \cnai(j,0)$. Using \eqref{eq:fiujqpurqwrfvvs}, we know that $g(\inframe_\taut) \ne \suci^{j_a}$. Hence using \eqref{eq:wfeiruqerqwp} we know that:
  \[
    \beta \ra \instate_\taut(\suci_\hn^\IDx) \ne g(\inframe_\taut)
  \]
  Which shows that $\beta \ra \neg \accept_\taut^\IDx$. This concludes this case.

\item If $\ai = \pnai(j,1)$. Since $\taua = \cnai(j_1,1)$ or $\pnai(j_1,1)$, we know by validity of $\tau$ that $j_a \ne j$. Using \ref{equ3} we know that:
  \begin{alignat*}{2}
    \accept_\taut^\IDx
    &\;\ra\;\;&&
    \bigvee_{
      \taun = \_, \npuai{1}{\ID}{j_n}
      \atop{\taun \popre \taut}}
    g(\inframe_{\taun}) = \nonce^{j}
    \numberthis\label{qe:faitqiothqriqrup}
  \end{alignat*}
  Since $j_a \ne j$ we know that $\nonce^{j} \ne \nonce^{j_a}$. Moreover:
  \[
    \cstate_\taub(\bauth_\ue^\IDx) = \nonce^{j_a} \wedge
    \accept_\taub^\IDx \ra
    g(\inframe_{\taub}) = \nonce^{j_a}
  \]
  Hence $\beta \ra g(\inframe_{\taub}) \ne \nonce^{j}$. Moreover, for every $\tau'$ between $\taub$ and $\taut$, we know that $\tau' \ne \npuai{1}{\IDx}{\_}$. Therefore we know that:
  \begin{alignat*}{2}
    \beta \wedge \accept_\taut^\IDx
    &\;\ra\;\;&&
    \bigvee_{
      \taun = \_, \npuai{1}{\ID}{j_n}
      \atop{\taun \popre \taub}}
    g(\inframe_{\taun}) = \nonce^{j}
    \wedge
    \pi_1(g(\inframe_\taut)) =
    \enc{\spair
      {\IDx}
      {\instate_{\taun}(\sqn_\ue^\IDx)}}
    {\pk_\hn}{\enonce^{j_n}}
  \end{alignat*}
  Let $\taun = \_, \npuai{1}{\ID}{j_n}$ such that $\taun \popre \taub$. We know that:
  \[
    \beta \ra
    \cstate_\taua(\sqn_\hn^\IDx) =
    \cstate_\taub(\sqn_\ue^\IDx) =
    \sqnsuc(\instate_\taub(\sqn_\ue^\IDx))
  \]
  And that:
  \begin{mathpar}
    \cstate_\taua(\sqn_\hn^\IDx) \le \instate_\taut(\sqn_\hn^\IDx)

    \instate_\taun(\sqn_\ue^\IDx) \le \instate_\taub(\sqn_\ue^\IDx)
  \end{mathpar}
  Graphically:
  \begin{center}
    \begin{tikzpicture}
      [dn/.style={inner sep=0.2em,fill=black,shape=circle},
      sdn/.style={inner sep=0.15em,fill=white,draw,solid,shape=circle},
      sl/.style={decorate,decoration={snake,amplitude=1.6}},
      dl/.style={dashed},
      pin distance=0.5em,
      every pin edge/.style={thin}]

      \draw[thick] (0,0)
      node[left=1.3em] {$\tau:$}
      -- ++(0.5,0)
      node[dn,pin={above:{$\npuai{1}{\IDx}{j_n}$}}]
      (a) {}
      node[below,yshift=-0.3em,name=d] {$\taun$}
      -- ++(3,0)
      node[dn,pin={above,align=left:{$\cuai_\IDx(j_i,1)$\\or $\npuai{1}{\IDx}{j_i}$}}]
      (b) {}
      node[below,yshift=-0.3em,name=d0] {$\taub$}
      -- ++(3,0)
      node[dn]
      (b0) {}
      -- ++(3,0)
      node[dn,pin={above,align=left:{$\cnai(j_a,1)$\\or $\pnai(j_a,1)$}}]
      (c) {}
      node[below,yshift=-0.3em,name=d1] {$\taua$}
      -- ++(3,0)
      node[dn,pin={above:{$\pnai(j,1)$}}]
      (e) {}
      node[below,yshift=-0.3em,name=d2] {$\taut$}
      -- ++(0.5,0);

      \draw[thin,dashed] (d) -- ++(0,-1) -| (d2);

      \draw[thin,dotted] (d0) -- ++(0,-0.5) -| (d1);

      \path (a) -- ++ (0,-2)
      -- ++ (0,-1.3)
      node (a2) {$\instate_\taun(\sqn_\ue^\ID)$};

      \path (b) -- ++ (0,-2)
      -- ++ (0,-1.3)
      node (b2) {$\instate_\taub(\sqn_\ue^\ID)$};

      \path (b0) -- ++ (0,-2)
      -- ++ (0,-1.3)
      node (b02) {$\cstate_\taub(\sqn_\ue^\ID)$};

      \path (c) -- ++ (0,-2)
      node (c1) {$\cstate_\taua(\sqn_\hn^\ID)$};

      \path (e) -- ++ (0,-2)
      node (e1) {$\instate_\taut(\sqn_\hn^\ID)$};

      \draw (a2) -- (b2) node[midway,below]{$\le$}
      (b2) -- (b02) node[midway,below,sloped]{$+1$}
      (b02) -- (c1) node[midway,sloped,above]{$=$}
      (c1) -- (e1) node[midway,above]{$\le$};
    \end{tikzpicture}
  \end{center}
  We deduce that:
  \[
    \beta \wedge
    \accept_\taut^\IDx \wedge
    g(\inframe_{\taun}) = \nonce^{j}
    \ra
    \instate_\taut(\sqn_\hn^\IDx) >
    \instate_\taun(\sqn_\ue^\IDx)
  \]
  Moreover:
  \[
    \left(
      \beta \wedge
      \incaccept_\taut^\IDx \wedge
      g(\inframe_{\taun}) = \nonce^{j}\wedge
      \pi_1(g(\inframe_\taut)) =
      \enc{\spair
        {\IDx}
        {\instate_{\taun}(\sqn_\ue^\IDx)}}
      {\pk_\hn}{\enonce^{j_n}}
    \right)
    \ra
    \instate_\taut(\sqn_\hn^\IDx) \le
    \instate_\taun(\sqn_\ue^\IDx)
  \]
  Hence:
  \[
    \left(
      \beta \wedge
      \accept_\taut^\IDx \wedge
      g(\inframe_{\taun}) = \nonce^{j} \wedge
      \wedge
      \pi_1(g(\inframe_\taut)) =
      \enc{\spair
        {\IDx}
        {\instate_{\taun}(\sqn_\ue^\IDx)}}
      {\pk_\hn}{\enonce^{j_n}}
    \right)
    \ra
    \neg \incaccept_\taut^\IDx
  \]
  Using \eqref{qe:faitqiothqriqrup}, this shows that:
  \[
    \beta \wedge
    \accept_\taut^\IDx
    \ra
    \neg \incaccept_\taut^\IDx
  \]
  This concludes this proof.

\item If $\ai = \cnai(j,1)$. Since $\taua = \cnai(j_1,1)$ or $\pnai(j_1,1)$, we know by validity of $\tau$ that $j_a \ne j$. From the induction hypothesis we know that:
  \begin{alignat*}{2}
    \beta \ra \instate_\taut(\suci_\hn^\IDx) = \suci^{j_a}
  \end{alignat*}
  It is easy to check that:
  \begin{alignat*}{2}
    \instate_\taut(\suci_\hn^\IDx) = \suci^{j_a}
    \ra
    \instate_\taut(\tsuccess_\hn^\IDx) = \nonce^{j_a}
  \end{alignat*}
  Hence:
  \begin{alignat*}{2}
    \beta
    &\ra\;\;&&
    \instate_\taut(\tsuccess_\hn^\IDx) = \nonce^{j_a}\\
    &\ra\;\;&&
    \instate_\taut(\tsuccess_\hn^\IDx) \ne \nonce^{j}
    \tag{Since $j \ne j_a$}\\
    &\ra\;\;&&
    \neg \incaccept_\taut^\IDx\\
    &\ra\;\;&&
    \cstate_\taut(\suci_\hn^\IDx) =
    \instate_\taut(\suci_\hn^\IDx) =
    \suci^{j_a}
  \end{alignat*}
  Which concludes this case.
\end{itemize}

\paragraph{Part 2}
We now show that:
\begin{gather*}
  \lrpcond{\beta}{
    \cframe_\taut,\cleak_\taut,
    \suci^{j_a}}
  \;\sim\;
  \lrpcond{\beta_\taut}{
    \cframe_\taut,\cleak_\taut,
    \suci}
\end{gather*}

We do a case disjunction on $\ai$. We only details the case where $\ai$ is a symbolic action of user $\ID$, with $\ID \ne \IDx$, and the case where $\ai = \fnai(j_a)$. All the other cases are similar to these two cases, and their proof will only be sketched.
\begin{itemize}
\item If $\ai$ is a symbolic action of user $\ID$, with $\ID \ne \IDx$, then for every $u \in \cleak_\taut \backslash \inleak_\taut$ (resp. $u \equiv t_\taut$) we show that there exists a many-hole context $C_u$ such that $u \equiv C_u[\inframe_\taut,\inleak_\taut]$ and $C_u$ does not contain any $\nonce$.

  We only detail the case $\ai = \fuai_\ID(j)$. First, observe that:
  \begin{gather*}
    \accept_\taut^\ID \;\equiv\;
    \left(\begin{array}[c]{ll}
        &\eq
        {\pi_2(g(\uline{\inframe_\taut}))}
        {\mac
          {\spair
            {\pi_1(g(\uline{\inframe_\taut}))
              \xor \row{\uline{\instate_\taut(\eauth_\ue^{\ID})}}{\uline{\key}}}
            {\uline{\instate_\taut(\eauth_\ue^{\ID})}}}
          {\uline{\mkey}}{5}}\\
        \wedge &
        \neg \eq{\uline{\instate_\taut(\eauth_\ue^{\ID})}}{\fail}
        \;\wedge\;
        \neg \eq{\uline{\instate_\taut(\eauth_\ue^{\ID})}}{\bot}
      \end{array}\right)
  \end{gather*}
  All the underlined subterms are in $\inframe_\taut,\inleak_\taut$, therefore there exists a context $C_\accept$ such that $\accept_\taut^\ID \equiv C_\accept[\inframe_\taut,\inleak_\taut]$.  Remark that $\cleak_\taut \backslash \inleak_\taut = \{\instate_\taut(\success_\ue^{\ID}), \instate_\taut(\suci_\ue^{\ID})\}$. Moreover:
  \begin{mathpar}
    t_\taut \;\equiv\;
    \ite
    {\accept_\taut^\ID}
    {\textsf{ok}}
    { \textsf{error}}

    \instate_\taut(\success_\ue^{\ID}) \;\equiv\; \accept_\taut^\ID

    \instate_\taut(\suci_\ue^{\ID}) \;\equiv\;
    \ite
    {\accept_\taut^\ID}
    {\pi_1(g(\uline{\inframe_\taut}))
      \xor
      \row{\uline{\instate_\taut(\eauth_\ue^{\ID})}}{\uline{\key}}}
    { \unset}
  \end{mathpar}
  Using the fact that all the underlined subterms are in $\inframe_\taut,\inleak_\taut$, and using $C_\accept$ it is easy to build the wanted contexts.

  We then conclude using the $\fa$ rule under context, the $\dup$ rule and the induction hypothesis:
  \[
    \infer[R]{
      \lrpcond{\beta}{
        \cframe_\taut,\cleak_\taut,
        \suci^{j_a}}
      \;\sim\;
      \lrpcond{\beta}{
        \cframe_\taut,\cleak_\taut,
        \suci}
    }{
      \infer[(\fac+\dup)^*]{
        \begin{alignedat}{2}
          &&&\lrpcond{\beta}{
            \inframe_\taut,\inleak_\taut,
            \suci^{j_a},
            (C_u[\inframe_\taut,\inleak_\taut])
            _{u \in \{t_\taut,\cleak_\taut \backslash \inleak_\taut\}}}\\
          &\sim\;\;&&
          \lrpcond{\beta}{
            \inframe_\taut,\inleak_\taut,
            \suci,
            (C_u[\inframe_\taut,\inleak_\taut])
            _{u \in \{t_\taut,\cleak_\taut \backslash \inleak_\taut\}}}
        \end{alignedat}
      }{
        \lrpcond{\beta}{
          \inframe_\taut,\inleak_\taut,
          \suci^{j_a}}
        \;\sim\;
        \lrpcond{\beta}{
          \inframe_\taut,\inleak_\taut,
          \suci}
      }
    }
  \]

\item If $\ai = \fnai(j_a)$. It is easy to check that:
  \[
    \instate_\taua(\eauth_\hn^\IDx) \ne \IDx
    \ra
    \instate_\taua(\suci_\hn^\IDx) \ne \suci^{j_a}
    \ra
    \instate_\tau(\suci_\hn^\IDx) \ne \suci^{j_a}
  \]
  Therefore using the induction property \eqref{eq:wfeiruqerqwp} we deduce that $\beta \ra \instate_\taua(\eauth_\hn^\IDx) = \IDx$. Moreover by validity of $\tau$, there are no session $j_a$ network actions between $\taua$ and $\taut$. It follows that $\instate_\taua(\eauth_\hn^\IDx) = \IDx \ra \instate_\taut(\eauth_\hn^\IDx) = \IDx$. Hence:
  \[
    t_\taut =
    \pair
    {\suci^{j_a} \oplus \row{\nonce^{j_a}}{\key^{\IDx}}}
    {\mac{\spair
        {\suci^{j_a}}
        {\nonce^{j_a}}}
      {\mkey^{\IDx}}{5}}
  \]
  Observe that:
  \[
    \lrpcond{\beta}{
      \cframe_\taut,\cleak_\taut,
      \suci^{j_a}}
    =
    \lrpcond{\beta}{
      \inframe_\taut,
      \pair
      {\suci^{j_a} \oplus \row{\nonce^{j_a}}{\key^{\IDx}}}
      {\mac{\spair
          {\suci^{j_a}}
          {\nonce^{j_a}}}
        {\mkey^{\IDx}}{5}},
      \inleak_\taut,
      \suci^{j_a}}
  \]
  We are now going to apply the $\prff$ axiom on the left to replace $\suci^{j_a} \oplus \row{\nonce^{j_a}}{\key^{\IDx}}$ with $\suci^{j_a} \oplus \nonce_{\textsf{f}}$ where $\nonce_{\textsf{f}}$ is a fresh nonce. For every $\tautt = \_,\fuai_\ID(\_) \popre \taut$, we use \ref{equ1} to replace every occurrences of $\accept_\tautt$ in $\inframe_\taut,\inleak_\taut,\beta$ with:
  \[
    \gamma_\tautt \equiv
    \bigvee_{\tauttt = \_,\fnai(\_) \popre \tautt
      \atop{\tauttt \not \popre_{\tautt} \newsession_\ID(\_)}}
    \futr{\tautt}{\tauttt}
  \]
  which yields the terms $\inframep_\taut,\inleakp_\taut,\beta'$. We can check that:
  \[
    \setprf_{\key^{\IDx}}^{\rowsym}(\gamma_\tautt) \subseteq
    \left\{
      \nonce^p
      \mid \exists \tau' = \_,\fnai(p) \popre \taut
    \right\}
  \]
  And that:
  \[
    \setprf_{\key^{\IDx}}^{\rowsym}(\inframep_\taut,\inleakp_\taut) =
    \left\{
      \nonce^p
      \mid \exists \tau' = \_,\fnai(p) \popre \taut
    \right\}
  \]
  Therefore we can apply the \prff axiom as wanted: first we replace $\inframe_\taut,\inleak_\taut,\beta$ by $\inframep_\taut,\inleakp_\taut,\beta'$ using rule $R$; then we apply the \prff axiom; and finally we rewrite all $\gamma_\tautt$ back into $\accept_\tautt^\IDx$.  Then, we use the $\oplus\textsf{-indep}$ axiom to replace $\suci^{j_a} \oplus \nonce_{\textsf{f}}$ with a fresh nonce $\nonce_{\textsf{f}}'$. This yield the derivation:
  \[
    \infer[R]{
      \lrpcond{\beta}{
        \cframe_\taut,\cleak_\taut,
        \suci^{j_a}}
      \;\sim\;
      \lrpcond{\beta}{
        \cframe_\taut,\cleak_\taut,
        \suci}
    }{
      \infer[\prff]{
        \lrpcond{\beta'}{
          \inframep_\taut,
          \pair
          {\suci^{j_a} \oplus \row{\nonce^{j_a}}{\key^{\IDx}}}
          {\mac{\spair
              {\suci^{j_a}}
              {\nonce^{j_a}}}
            {\mkey^{\IDx}}{5}},
          \inleakp_\taut,
          \suci^{j_a}}
        \;\sim\;
        \lrpcond{\beta}{
          \cframe_\taut,\cleak_\taut,
          \suci}
      }{
        \infer[R]{
          \lrpcond{\beta'}{
            \inframep_\taut,
            \pair
            {\suci^{j_a} \oplus \nonce_{\textsf{f}}}
            {\mac{\spair
                {\suci^{j_a}}
                {\nonce^{j_a}}}
              {\mkey^{\IDx}}{5}},
            \inleakp_\taut,
            \suci^{j_a}}
          \;\sim\;
          \lrpcond{\beta}{
            \cframe_\taut,\cleak_\taut,
            \suci}
        }{
          \infer[R]{
            \lrpcond{\beta}{
              \inframe_\taut,
              \pair
              {\suci^{j_a} \oplus \nonce_{\textsf{f}}}
              {\mac{\spair
                  {\suci^{j_a}}
                  {\nonce^{j_a}}}
                {\mkey^{\IDx}}{5}},
              \inleak_\taut,
              \suci^{j_a}}
            \;\sim\;
            \lrpcond{\beta}{
              \cframe_\taut,\cleak_\taut,
              \suci}
          }{
            \infer[\oplus\textsf{-indep}]{
              \lrpcond{\beta}{
                \inframe_\taut,
                \pair
                {\suci^{j_a} \oplus \nonce_{\textsf{f}}}
                {\mac{\spair
                    {\suci^{j_a}}
                    {\nonce^{j_a}}}
                  {\mkey^{\IDx}}{5}},
                \inleak_\taut,
                \suci^{j_a}}
              \;\sim\;
              \lrpcond{\beta}{
                \cframe_\taut,\cleak_\taut,
                \suci}
            }{
              \lrpcond{\beta}{
                \inframe_\taut,
                \pair
                {\nonce'_{\textsf{f}}}
                {\mac{\spair
                    {\suci^{j_a}}
                    {\nonce^{j_a}}}
                  {\mkey^{\IDx}}{5}},
                \inleak_\taut,
                \suci^{j_a}}
              \;\sim\;
              \lrpcond{\beta}{
                \cframe_\taut,\cleak_\taut,
                \suci}
            }
          }
        }
      }
    }
  \]
  We do a similar reasoning to replace ${\mac{\spair{\suci^{j_a}}{\nonce^{j_a}}}{\mkey^{\IDx}}{5}}$ with a fresh nonce $\nonce''_{\textsf{f}}$ using $\prfmac^5$ axiom (we omit the details):
  \[
    \infer[(R+\prfmac^5)^*]{
      \lrpcond{\beta}{
        \inframe_\taut,
        \pair
        {\nonce'_{\textsf{f}}}
        {\mac{\spair{\suci^{j_a}}{\nonce^{j_a}}}{\mkey^{\IDx}}{5}},
        \inleak_\taut,
        \suci^{j_a}}
      \;\sim\;
      \lrpcond{\beta}{
        \cframe_\taut,\cleak_\taut,
        \suci}
    }{
      \lrpcond{\beta}{
        \inframe_\taut,
        \pair
        {\nonce'_{\textsf{f}}}
        {\nonce''_{\textsf{f}}},
        \inleak_\taut,
        \suci^{j_a}}
      \;\sim\;
      \lrpcond{\beta}{
        \cframe_\taut,\cleak_\taut,
        \suci}
    }
  \]
  We then do the same thing on the right side, and use the $\fa$ axiom under context% . Finally we conclude using the induction hypothesis:
  \[
    \infer[(R+\prfmac^5 + \prff + \oplus\textsf{-indep})^*]{
      \lrpcond{\beta}{
        \inframe_\taut,
        \pair
        {\nonce'_{\textsf{f}}}
        {\nonce''_{\textsf{f}}},
        \inleak_\taut,
        \suci^{j_a}}
      \;\sim\;
      \lrpcond{\beta}{
        \cframe_\taut,\cleak_\taut,
        \suci}
    }{
      \infer[\fac]{
        \lrpcond{\beta}{
          \inframe_\taut,
          \pair
          {\nonce'_{\textsf{f}}}
          {\nonce''_{\textsf{f}}},
          \inleak_\taut,
          \suci^{j_a}}
        \;\sim\;
        \lrpcond{\beta}{
          \inframe_\taut,
          \pair
          {\nonce'_{\textsf{f}}}
          {\nonce''_{\textsf{f}}},
          \inleak_\taut,
          \suci}
      }{
        \lrpcond{\beta}{
          \inframe_\taut,
          {\nonce'_{\textsf{f}}},
          {\nonce''_{\textsf{f}}},
          \inleak_\taut,
          \suci^{j_a}}
        \;\sim\;
        \lrpcond{\beta}{
          \inframe_\taut,
          {\nonce'_{\textsf{f}}},
          {\nonce''_{\textsf{f}}},
          \inleak_\taut,
          \suci}
      }
    }
  \]
  Using the fact that $\beta \in \inleak_\taut$, we have:
  \[
    \infer[\simp]{
      \lrpcond{\beta}{
        \inframe_\taut,
        {\nonce'_{\textsf{f}}},
        {\nonce''_{\textsf{f}}},
        \inleak_\taut,
        \suci^{j_a}}
      \;\sim\;
      \lrpcond{\beta}{
        \inframe_\taut,
        {\nonce'_{\textsf{f}}},
        {\nonce''_{\textsf{f}}},
        \inleak_\taut,
        \suci}
    }{
      \lrpcond{\beta}{
        \inframe_\taut,
        \inleak_\taut,
        \suci^{j_a}},
      {\nonce'_{\textsf{f}}},
      {\nonce''_{\textsf{f}}},
      \;\sim\;
      \lrpcond{\beta}{
        \inframe_\taut,
        \inleak_\taut,
        \suci},
      {\nonce'_{\textsf{f}}},
      {\nonce''_{\textsf{f}}},
    }
  \]
  We then conclude using \ax{Fresh}:
  \[
    \infer[\ax{Fresh}^2]{
      \lrpcond{\beta}{
        \inframe_\taut,
        \inleak_\taut,
        \suci^{j_a}},
      {\nonce'_{\textsf{f}}},
      {\nonce''_{\textsf{f}}}
      \;\sim\;
      \lrpcond{\beta}{
        \inframe_\taut,
        \inleak_\taut,
        \suci},
      {\nonce'_{\textsf{f}}},
      {\nonce''_{\textsf{f}}}
    }{
      \lrpcond{\beta}{
        \inframe_\taut,
        \inleak_\taut,
        \suci^{j_a}}
      \;\sim\;
      \lrpcond{\beta}{
        \inframe_\taut,
        \inleak_\taut,
        \suci}
    }
  \]
\end{itemize}
We now sketch the proof of the induction property for the remaining cases:
\begin{itemize}
\item If $\ai = \fnai(j)$ with $j \ne j_a$. First, we can decompose $t_\taut$ into terms of $\inframe_\taut,\inleak_\taut$, except for the term:
  \[
    \pair
    {\suci^{j} \oplus \row{\nonce^{j}}{\key^{\IDx}}}
    {\mac{\spair
        {\suci^{j}}
        {\nonce^{j}}}
      {\mkey^{\IDx}}{5}}
  \]
  The rest of the proof goes as in case $\ai = \fnai(j_a)$. On both side, we do the following:
  \begin{itemize}
  \item We apply the \prff axiom to replace $\suci^{j} \oplus \row{\nonce^{j}}{\key^{\IDx}}$ with $\suci^{j} \oplus \nonce_{\textsf{f}}$ where $\nonce_{\textsf{f}}$ is a fresh nonce.
  \item We use the $\oplus\textsf{-indep}$ axiom to replace $\suci^{j} \oplus \nonce_{\textsf{f}}$ with a fresh nonce $\nonce_{\textsf{f}}'$
  \item We apply the $\prfmac^5$ axiom to replace ${\mac{\spair{\suci^{j}}{\nonce^{j}}}{\mkey^{\IDx}}{5}}$ with a fresh nonce $\nonce''_{\textsf{f}}$.
  \end{itemize}
  Finally we use \ax{Fresh} to get rid of the introduced nonces $\nonce'_{\textsf{f}}$ and $\nonce''_{\textsf{f}}$.

\item If $\ai = \cnai(j,0)$. Using the induction hypothesis we know that $\beta \ra \neg \accept_\taut^\IDx$. We can therefore rewrite all occurrences of $\accept_\taut^\IDx$ into $\false$ under the condition $\beta$. This removes all occurrences of $\instate_\taut(\suci_\hn^\IDx)$ in $\cleak_\taut \backslash \inleak_\taut$ and $t_\taut$. We can then decompose the resulting terms into terms of $\inframe_\taut,\inleak_\taut$.

\item If $\ai = \cnai(j,1)$. We can decompose $\cleak_\taut \backslash \inleak_\taut$ and $t_\taut$ into terms of $\inframe_\taut,\inleak_\taut$, except for the term $\mac{\nonce^j}{\mkey^{\IDx}}{4}$. We get rid of this term using the $\prfmac^4$ axiom.

\item If $\ai = \pnai(j,0)$. This is trivial using $\ax{Fresh}$.

\item If $\ai = \pnai(j,1)$. We use \ref{equ3} to rewrite all occurrences of $\accept_\taut^\IDx$ in $\cleak_\taut \backslash \inleak_\taut$ and $t_\taut$. We can then decompose the resulting terms into terms of $\inframe_\taut,\inleak_\taut$, except for the term:
  \[
    \mac{\spair
      {\nonce^j}
      {\sqnsuc(\pi_2(\dec(\pi_1(g(\inframe_\taut)),\sk_\hn)))}}
    {\mkey^{\IDx}}{2}
  \]
  We get rid of this term using the $\prfmac^2$ axiom.

\item If $\ai$ is a symbolic action of user $\ID$, with $\ID = \IDx$, then either $\ai = \npuai{2}{\IDx}{j_i}$ or $\ai = \fuai_\IDx(j_i)$.
  \begin{itemize}
  \item If $\ai = \npuai{2}{\IDx}{j_i}$, then we show using \ref{equ2} that:
    \[
      \beta \ra
      \left(
        \accept_\taut^\IDx
        \lra
        g(\inframe_\taut) = t_\taua
      \right)
    \]
    Therefore we can rewrite $\accept_\taut^\IDx$ into $g(\inframe_\taut) = t_\taua$ under $\beta$ in $t_\taut$. The resulting term can be easily decomposed into terms of $\inframe_\taut,\inleak_\taut$.

  \item $\ai = \fuai_\IDx(j_i)$. We do a similar reasoning, but using \ref{equ1} instead of \ref{equ2}. We omit the details.
  \end{itemize}
\end{itemize}

\subsection{Stronger Characterizations}
Using the $\guti$ concealment lemma, we can show the following stronger version of \ref{acc3}:
\begin{lemma}
  For every valid symbolic trace $\tau = \_,\ai$ and identity $\ID$ we have:
  \begin{itemize}
  \item \customlabel{sacc1}{\lpsacc{1}} If $\ai = \cuai_\ID(j,1)$. Let $\taut = \_, \cuai_\ID(j,0)$ such that $\taut \popre \tau$, and let $\key \equiv \key^\ID$. Then:
    % (P13b)
    \begin{center}
      \begin{tikzpicture}
        [dn/.style={inner sep=0.2em,fill=black,shape=circle},
        sdn/.style={inner sep=0.15em,fill=white,draw,solid,shape=circle},
        sl/.style={decorate,decoration={snake,amplitude=1.6}},
        dl/.style={dashed},
        pin distance=0.5em,
        every pin edge/.style={thin}]

        \draw[thick] (0,0)
        node[left=1.3em] {$\tau:$}
        -- ++(0.5,0)
        node[dn,pin={above:{$\cuai_\ID(j,0)$}}] {}
        node[below=0.3em]{$\taut$}
        -- ++(2.5,0)
        node[dn,pin={above:{$\cnai(j_1,0)$}}] {}
        node[below=0.3em]{$\tauo$}
        -- ++(2.5,0)
        node[dn,pin={above:{$\cuai_\ID(j,1)$}}] {}
        node[below=0.3em]{$\tau$};
      \end{tikzpicture}
    \end{center}
    \[
      \accept_\tau^\ID \;\ra\;
      \bigvee_{\tauo = \_, \cnai(j_0,0)
        \atop{\taut \potau \tauo}}
      \left(
        \begin{alignedat}{2}
          &&&\accept_\tauo^\ID \;\wedge\;
          g(\inframe_\tauo) = \instate_\taut(\suci_\ue^\ID)\;\wedge\;
          \pi_1(g(\inframe_\tau)) = \nonce^{j_0} \\
          &\wedge\;\;&&
          \pi_2(g(\inframe_\tau)) =
          \instate_\tauo(\sqn_\hn^\ID) \oplus \ow{\nonce^{j_0}}{\key}
          \wedge
          \instate_\tau(\suci_\ue^\ID) = \instate_\tauo(\suci_\hn^\ID)
        \end{alignedat}
      \right)
    \]
  \end{itemize}
\end{lemma}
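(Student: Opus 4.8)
The plan is to refine \ref{acc3} of Lemma~\ref{lem:accept-charac} using the $\suci_\ue^\ID$-concealment Lemma~\ref{lem:suci-conceal}. Since $\ai = \cuai_\ID(j,1)$, \ref{acc3} already provides a derivation of $\accept_\tau^\ID \ra \bigvee_{\tauo = \_, \cnai(j_0,0) \popre \tau} R_\tauo$, where $R_\tauo$ is the conjunction of the four conditions $\accept_\tauo^\ID$, $\pi_1(g(\inframe_\tau)) = \nonce^{j_0}$, $\pi_2(g(\inframe_\tau)) = \instate_\tauo(\sqn_\hn^\ID) \oplus \ow{\nonce^{j_0}}{\key}$ and $\instate_\tau(\suci_\ue^\ID) = \instate_\tauo(\suci_\hn^\ID)$. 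The extra conjunct $g(\inframe_\tauo) = \instate_\taut(\suci_\ue^\ID)$ demanded by \ref{sacc1} is then free: $\accept_\tauo^\ID$ gives $g(\inframe_\tauo) = \instate_\tauo(\suci_\hn^\ID)$, the fourth condition of $R_\tauo$ rewrites this to $\instate_\tau(\suci_\ue^\ID)$, and validity of $\tau$ (no user-$\ID$ action strictly between $\taut = \_, \cuai_\ID(j,0)$ and $\tau$) gives $\instate_\tau(\suci_\ue^\ID) = \instate_\taut(\suci_\ue^\ID)$. So the whole content of the lemma reduces to showing $R_\tauo \ra \false$ for every $\tauo \popreleq \taut$.

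To prove that, I would first identify the $\guti$ the user sends at $\taut$ as a fresh network name. The conjunct $\instate_\tau(\uetsuccess^\ID)$ belongs to $\accept_\tau^\ID$ for $\cuai_\ID(j,1)$, and $\uetsuccess^\ID$ is assigned $\instate_\taut(\success_\ue^\ID)$ at $\cuai_\ID(j,0)$, so $\accept_\tau^\ID$ entails $\instate_\taut(\success_\ue^\ID)$. By \ref{b7} (started at the last $\ns_\ID$ below $\taut$) there is a strict prefix $\_, \fuai_\ID(j_r) \popre \taut$ with $\accept_{\fuai_\ID(j_r)}^\ID$ and no reset of $\success_\ue^\ID$ afterwards; then \ref{equ1} together with the $\fuai$-state update forces $\instate_\taut(\suci_\ue^\ID) = \suci^{j_c}$ for a network session $j_c$ such that $\fnai(j_c)$ occurs strictly before $\fuai_\ID(j_r)$ and $\injauth_{\fuai_\ID(j_r)}(\ID, j_c)$ holds. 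Finally, by inspection of the protocol the name $\suci^{j_c}$ is written into a $\suci$-variable only at a single action $\taux$, which is of the form $\_, \cnai(j_c,1)$ or $\_, \pnai(j_c,1)$, and only when its $\incaccept_\taux^\ID$ guard holds; moreover $\taux \popre \fnai(j_c)$ by validity of $\tau$.

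Now fix $\tauo = \_, \cnai(j_0,0)$ with $\tauo \popreleq \taut$ and assume $R_\tauo$. Chaining the conjuncts, $\accept_\tauo^\ID$ gives $\instate_\tauo(\suci_\hn^\ID) = g(\inframe_\tauo) = \instate_\taut(\suci_\ue^\ID) = \suci^{j_c}$ and $\instate_\tauo(\suci_\hn^\ID) \ne \unset$. I would then split on $\taux$: if $\taux \not\popre \tauo$ or $\neg\incaccept_\taux^\ID$, then $\suci_\hn^\ID$ can never equal $\suci^{j_c}$ at $\tauo$ by freshness of $\suci^{j_c}$, a contradiction; otherwise $\taux \popre \tauo \popreleq \taut$ and $\incaccept_\taux^\ID$ holds, and I would invoke Lemma~\ref{lem:suci-conceal} with $\IDx := \ID$, $\taua := \taux$ (so $j_a = j_c$), $\taub :=$ the user message authenticated by network session $j_c$ (recovered from $\injauth$, Lemma~\ref{lem:auth-serv-net-body} and the acceptance characterizations, so that $\cstate_\taub(\bauth_\ue^\ID) = \nonce^{j_c}$, $\accept_\taub^\ID$ and the ``no intervening user-$\ID$ action'' side condition both hold), and the ambient trace taken to be $\tauo$ (permissible since $\taux \popre \tauo$). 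The conclusion $g(\inframe_\tauo) \ne \suci^{j_c}$ contradicts $g(\inframe_\tauo) = \suci^{j_c}$. Hence all disjuncts with $\tauo \popreleq \taut$ vanish, leaving exactly the disjunction over $\taut \potau \tauo$, which is \ref{sacc1}.

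The main obstacle is the bookkeeping of the middle step: pinning down the session $j_c$ and the message $\taub$, and discharging the hypotheses of Lemma~\ref{lem:suci-conceal}, notably that the only user-$\ID$ actions occurring between $\taub$ and $\taux$ are the completion steps $\npuai{2}{\ID}{\_}$ or $\fuai_\ID(\_)$ of $\taub$'s own session. This needs Lemma~\ref{lem:inj-auth-net}, \ref{equ1}, the validity constraints on symbolic traces, and the $\sqn$-invariants (\ref{b1}, \ref{b4}, \ref{b5}) to exclude spurious reuses of identities or sequence numbers; everything else is routine rewriting with the structural, equality and $\cs$ axioms.
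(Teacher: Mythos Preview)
Your proposal is correct and follows essentially the same route as the paper: start from \ref{acc3}, identify $\instate_\taut(\suci_\ue^\ID)$ as a fresh $\suci^{j_c}$ via the last successful $\fuai_\ID$ and \ref{equ1}, then for each $\tauo \popreleq \taut$ derive a contradiction by freshness (when the writing action $\taux$ has not yet occurred or its $\incaccept$ guard fails) or by Lemma~\ref{lem:suci-conceal} (when $\taux \popre \tauo$). One small slip: the side condition of Lemma~\ref{lem:suci-conceal} constrains user-$\ID$ actions between $\taub$ and the ambient trace $\tauo$, not between $\taub$ and $\taux$; but since $\tauo \popre \taut$ and you have already argued there is no new user-$\ID$ session between $\fuai_\ID(j_r)$ and $\taut$, the condition still discharges.
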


\begin{proof}
  First, by applying \ref{acc3} we get that:
  \begin{alignat*}{2}
    \accept_\tau^\ID &\;\ra\;\;&&
    \bigvee_{\tauo = \_, \cnai(j_0,0)
      \atop{\tauo \popre \tau}}
    \left(
      \begin{alignedat}{1}
        &\accept_\tauo^\ID \wedge
        \pi_1(g(\inframe_\tau)) = \nonce^{j_0}
        \wedge
        \pi_2(g(\inframe_\tau)) =
        \instate_\tauo(\sqn_\hn^\ID) \oplus \ow{\nonce^{j_0}}{\key}\\
        &\wedge
        \instate_\tau(\suci_\ue^\ID) = \instate_\tauo(\suci_\hn^\ID)
      \end{alignedat}
    \right)
    \numberthis\label{eq:fdilfajfeijeqripurqw}
  \end{alignat*}
  We have $\accept_\tau^\ID \ra \instate_\tau(\uetsuccess^\ID)$, and $\instate_\tau(\uetsuccess^\ID) \ra \instate_\taut(\success_\ue^\ID)$. Let $\tauo = \_, \cnai(j_0,0)$, we know that $\accept_\tauo^\ID \ra \instate_\tauo(\suci_\hn^\ID) \ne \unset$. Therefore:
  \begin{alignat*}{2}
    \accept_\tau^\ID &\;\ra\;\;&&
    \bigvee_{\tauo = \_, \cnai(j_0,0)
      \atop{\tauo \popre \tau}}
    \left(
      \begin{alignedat}{1}
        &\accept_\tauo^\ID \wedge
        \pi_1(g(\inframe_\tau)) = \nonce^{j_0}
        \wedge
        \pi_2(g(\inframe_\tau)) =
        \instate_\tauo(\sqn_\hn^\ID) \oplus \ow{\nonce^{j_0}}{\key}\\
        &\wedge
        \instate_\tau(\suci_\ue^\ID) = \instate_\tauo(\suci_\hn^\ID)
        \wedge
        \instate_\tau(\suci_\ue^\ID) \ne \unset
        \wedge
        \instate_\taut(\success_\ue^\ID)
      \end{alignedat}
    \right)
  \end{alignat*}
  To conclude, we need to get a contradiction if $\tauo \popre \taut$. Therefore, we assume that $\tauo \popre \taut$. If there does not exists any $\tautt$ such that $\tautt = \_,\fuai_\ID(j_i) \popre \taut$, then it is easy to show that $\instate_\tau(\suci_\ue^\ID) = \unset$. In that case, from the equation above we get that $\neg\accept_\tau^\ID$, which concludes this case.

  Therefore, let $\tautt$ be maximal w.r.t $\popre$ such that $\tautt = \_,\fuai_\ID(j_i) \popre \taut$. We have $\tautt \not \potau \fuai_\ID(\_)$. Assume that there exists a user $\ID$ action between $\tautt$ and $\taut$. It is easy to show by induction (over $\tau'$ in $\tautt \popre \tau' \popreleq \taut$ that, since there are no $\fuai_\ID(\_)$ action between $\tautt$ and $\taut$, we have $\neg\instate_\taut(\success_\ue^\ID)$. This implies $\neg\accept_\tau^\ID$, which concludes this case.

  Therefore we can safely assume that there are no user $\ID$ actions between $\tautt$ and $\taut$. We deduce that $\instate_\taut(\success_\ue^\ID) \ra \accept_\tautt^\ID$. Hence $\accept_\tau^\ID \ra \accept_\tautt^\ID$. By applying \ref{equ1} to $\tautt$, we know that:
  \begin{alignat*}{3}
    \accept_\tau^\ID
    &\;\;\ra\;\;&&
    \bigvee_{\taua = \_,\fnai(j_a) \popre \tautt
      \atop{\taua \not \potau \newsession_\ID(\_)}}
    \futr{\tautt}{\taua}
    \numberthis\label{eq:dihafijdapofapfas}
  \end{alignat*}
  We recall that:
  \[
    \futr{\tautt}{\taua} \;\equiv\;
    \left(\begin{alignedat}{2}
        &\injauth_\tautt(\ID,j_a)
        \wedge\instate_\tautt(\eauth_\hn^{j_a}) \ne \unknownid\\
        \wedge\;& \pi_1(g(\inframe_\tautt)) =
        \suci^{j_a} \xor \row{\nonce^{j_a}}{\key}
        \wedge\; \pi_2(g(\inframe_\tautt)) =
        \mac{\spair
          {\suci^{j_a}}
          {\nonce^{j_a}}}
        {\mkey}{5}
      \end{alignedat}\right)
  \]
  Let $\taua = \_,\fnai(j_a) \popre \tautt$ such that $\taua \not \potau \newsession_\ID(\_)$. We know that there exists $\taun = \_,\pnai(j_a,1)$ and $\taun = \_,\cnai(j_a,1)$ such that $\taun \popre \taua$, and that $\futr{\tautt}{\taua} \ra \accept_\taun^\ID$. Let $\taui = \_,\npuai{1}{\ID}{j_i}$ or $\_,\cuai_\ID(j_i,1)$ such that $\taui \popre \tautt$. If $\taun \popre \taui$, we can show using \ref{acc1} if $\taun = \_,\pnai(j_a,1)$ or \ref{acc4} if $\taun = \_,\pnai(j_a,1)$ that we have $\neg \futr{\tautt}{\taua}$. Therefore, we assume that $\taui \popre \taun$. We depict the situation below:
  \begin{center}
    \begin{tikzpicture}
      [dn/.style={inner sep=0.2em,fill=black,shape=circle},
      sdn/.style={inner sep=0.15em,fill=white,draw,solid,shape=circle},
      sl/.style={decorate,decoration={snake,amplitude=1.6}},
      dl/.style={dashed},
      pin distance=0.5em,
      every pin edge/.style={thin}]

      \draw[thick] (0,0)
      node[left=1.3em] {$\tau:$}
      -- ++(0.5,0)
      node[dn,pin={above,align=left:{$\npuai{1}{\ID}{j_i}$\\ or $\cuai_\ID(j_i,1)$}}]
      (a) {}
      node[below,yshift=-0.3em,name=a0] {$\taui$}
      -- ++(2.5,0)
      node[dn,pin={above,align=left:{$\pnai(j_a,1)$\\ or $\cnai(j_a,1)$}}]
      (b) {}
      node[below,yshift=-0.3em,name=b0] {$\taun$}
      -- ++(2.5,0)
      node[dn,pin={above:{$\fnai(j_a)$}}]
      (c) {}
      node[below,yshift=-0.3em,name=c0] {$\taua$}
      -- ++(2.5,0)
      node[dn,pin={above:{$\fuai_\ID(j_i)$}}]
      (d) {}
      node[below,yshift=-0.3em,name=d0] {$\tautt$}
      -- ++(2.5,0)
      node[dn,pin={above:{$\cuai_\ID(j,0)$}}]
      (e) {}
      node[below,yshift=-0.3em,name=e0] {$\taut$}
      -- ++(2.5,0)
      node[dn,pin={above:{$\cuai_\ID(j,1)$}}]
      (f) {}
      node[below,yshift=-0.3em,name=f0] {$\tau$}
      -- ++(0.5,0);

      \draw[thin,dashed] (a0) -- ++(0,-0.5) -| (b0)
      {[draw=none] -- ++(0,-0.5)} -| (c0)
      {[draw=none] -- ++(0,-0.5)} -| (d0);

      \draw[thin,dotted] (e0) -- ++(0,-0.5) -| (f0);
    \end{tikzpicture}
  \end{center}
  We can check that $\futr{\tautt}{\taua} \ra \cstate_\tautt(\suci_\ue^\ID) = \suci^{j_a}$. Moreover, since there are no user $\ID$ actions between $\tautt$ and $\taut$ or between $\taut$ and $\tau$, $\cstate_\tautt(\suci_\ue^\ID) = \instate_\tau(\suci_\ue^\ID)$. From \eqref{eq:fdilfajfeijeqripurqw}, we know that $\accept_\tau^\ID \ra \instate_\tau(\suci_\ue^\ID) = \instate_\tauo(\suci_\hn^\ID)$. It follows that:
  \[
    \accept_\tau^\ID \wedge \futr{\tautt}{\taua}
    \ra
    \instate_\tauo(\suci_\hn^\ID) = \suci^{j_a}
    \numberthis\label{eq:falhiforfawiorjapsidsa}
  \]
  If $\tauo \popre \taun$, then it is easy to check that $\instate_\tauo(\suci_\hn^\ID) \ne \suci^{j_a}$. Therefore we have $\neg (\accept_\tau^\ID \wedge \futr{\tautt}{\taua})$.

  Now, we assume that $\taun \popre \tauo$. Recall that we assumed $\tauo \popre \taut$. Our goal is to apply the $\suci$ concealment lemma (Lemma~\ref{lem:suci-conceal}) to $\tauo$ get a contradiction. We can check that the following hypothesis of Lemma~\ref{lem:suci-conceal} is true:
  \[
    \left\{
      \tau' \mid \taui \popre_{\tauo} \taub
    \right\}
    \cap
    \left\{
      \npuai{\_}{\ID}{j},
      \cuai_\ID(j,\_),
      \fuai_\ID(j)
      \mid j \in \mathbb{N}
    \right\}
    \subseteq
    \left\{
      \npuai{2}{\ID}{j_i},
      \fuai_\ID(j_i)
    \right\}
  \]
  We deduce that:
  \begin{alignat*}{2}
    \incaccept_\taun^\ID \wedge
    \cstate_\taui(\bauth_\ue^\ID) = \nonce^{j_a} \wedge
    \accept_\taui^\IDx
    \ra
    g(\inframe_\tauo) \ne \suci^{j_a}
    \numberthis\label{eq:gsdighsfijapf}
  \end{alignat*}
  We know that:
  \[
    \futr{\tautt}{\taua} \ra
    \accept_\taui^\ID \wedge
    \cstate_\taui(\bauth_\ue^\ID) = \nonce^{j_a}
    \numberthis\label{eq:aifjafiajpfoasf}
  \]
  Moreover, $\neg \incaccept_\taun^\ID \ra \cstate_\taun(\suci_\hn^\ID) \ne \guti^{j_a}$. It is then straightforward to check that $\neg \incaccept_\taun^\ID \ra \cstate_\tauo(\suci_\hn^\ID) \ne \guti^{j_a}$. Therefore, using \eqref{eq:falhiforfawiorjapsidsa} we get that:
  \[
    \accept_\tau^\ID \wedge \futr{\tautt}{\taua}
    \wedge
    \neg \incaccept_\taun^\ID
    \ra
    \left(
      \instate_\tauo(\suci_\hn^\ID) = \suci^{j_a}
      \wedge
      \instate_\tauo(\suci_\hn^\ID) \ne \suci^{j_a}
    \right)
    \ra
    \false
  \]
  Hence $ \accept_\tau^\ID \wedge \futr{\tautt}{\taua} \ra \incaccept_\taun^\ID$. Therefore using~\eqref{eq:gsdighsfijapf} and~\eqref{eq:aifjafiajpfoasf}, we get:
  \begin{alignat*}{2}
    \accept_\tau^\ID \wedge \futr{\tautt}{\taua}
    \ra
    g(\inframe_\tauo) \ne \suci^{j_a}
    \numberthis\label{eq:adfkafjaifjaafasrfaf}
  \end{alignat*}
  We have $\accept_\tauo^\ID \ra g(\inframe_\tauo) = \instate_\tauo(\suci_\hn^\ID)$. We get from this, \eqref{eq:falhiforfawiorjapsidsa} and \eqref{eq:adfkafjaifjaafasrfaf} that:
  \[
    \accept_\tau^\ID \wedge \futr{\tautt}{\taua} \wedge \accept_\tauo^\ID
    \ra
    \false
  \]
  We showed that this holds for every $\taua = \_,\fnai(j_a) \popre \tautt$. We deduce from \eqref{eq:dihafijdapofapfas} that:
  \[
    \accept_\tau^\ID \wedge \accept_\tauo^\ID
    \ra
    \false
  \]
  Since we have this for every $\tauo \popre \taut$, we can rewrite \eqref{eq:fdilfajfeijeqripurqw} to get:
  \begin{alignat*}{2}
    \accept_\tau^\ID &\;\ra\;\;&&
    \bigvee_{\tauo = \_, \cnai(j_0,0)
      \atop{\taut \popre \tauo \popre \tau}}
    \left(
      \begin{alignedat}{1}
        &\accept_\tauo^\ID \wedge
        \pi_1(g(\inframe_\tau)) = \nonce^{j_0}
        \wedge
        \pi_2(g(\inframe_\tau)) =
        \instate_\tauo(\sqn_\hn^\ID) \oplus \ow{\nonce^{j_0}}{\key}\\
        &\wedge
        \instate_\tau(\suci_\ue^\ID) = \instate_\tauo(\suci_\hn^\ID)
      \end{alignedat}
    \right)
    \numberthis\label{eq:sidhifjasapsf}
  \end{alignat*}
  To conclude, we observe that $\accept_\tau^\ID \wedge \futr{\tautt}{\taua} \ra \instate_\taut(\suci_\ue^\ID) = \suci^{j_a}$. We recall that $\accept_\tauo^\ID \ra g(\inframe_\tauo) = \instate_\tauo(\suci_\hn^\ID)$. We conclude using \eqref{eq:falhiforfawiorjapsidsa} that:
  \[
    \accept_\tau^\ID \wedge \futr{\tautt}{\taua} \ra
    \instate_\taut(\suci_\ue^\ID)
    =
    g(\inframe_\tauo)
  \]
  Since this holds for every $\taua = \_,\fnai(j_a) \popre \tautt$, we deduce from \eqref{eq:dihafijdapofapfas} that:
  \[
    \accept_\tau^\ID \wedge \accept_\tauo^\ID
    \ra
    \instate_\taut(\suci_\ue^\ID)
    =
    g(\inframe_\tauo)
  \]
  Hence using \eqref{eq:sidhifjasapsf} we get:
  \begin{alignat*}{2}
    \accept_\tau^\ID &\;\ra\;\;&&
    \bigvee_{\tauo = \_, \cnai(j_0,0)
      \atop{\taut \popre \tauo \popre \tau}}
    \left(
      \begin{alignedat}{1}
        &\accept_\tauo^\ID \wedge
        \pi_1(g(\inframe_\tau)) = \nonce^{j_0}
        \wedge
        \pi_2(g(\inframe_\tau)) =
        \instate_\tauo(\sqn_\hn^\ID) \oplus \ow{\nonce^{j_0}}{\key}\\
        &\wedge
        \instate_\tau(\suci_\ue^\ID) = \instate_\tauo(\suci_\hn^\ID)
        \wedge
        \instate_\taut(\suci_\ue^\ID)
        =
        g(\inframe_\tauo)
      \end{alignedat}
    \right)
  \end{alignat*}
  This concludes this proof.
\end{proof}

We can now prove the following strong acceptance characterization properties:
\begin{lemma}
  \label{lem:strong-charac}
  For every valid symbolic trace $\tau = \_,\ai$ and identity $\ID$ we have:
  \begin{itemize}
  \item \customlabel{sequ1}{\lpsequ{1}} If $\ai = \fuai_\ID(j)$. For every $\taut = \_,\fnai(j_0) \popre \tau$, if we let $\tautt = \_,\cuai_\ID(j,0)$ or $\_,\npuai{1}{\ID}{j}$ then:
    % (I2b)
    \begin{alignat*}{2}
      \accept_\tau^\ID
      &\;\;\leftrightarrow\;\;&
      \bigvee_{\tautt \potau \taut = \_,\fnai(j_0)}
      \futr{\tau}{\taut}
    \end{alignat*}

  \item \customlabel{sequ2}{\lpsequ{2}} If $\ai = \cuai_\ID(j,1)$. Let $\tautt = \_,\cuai_\ID(j,0)$ such that $\tautt \popre \tau$. Then for every $\taut$ such that $\taut = \_,\cnai(j_1,0)$ and $\tautt \potau \taut$, we let:
    % (I9b)
    \begin{alignat*}{2}
      \ptr{\tautt,\tau}{\taut}
      &\;\equiv\;\;&&
      \left(
        \begin{alignedat}{2}
          &&&
          \pi_1(g(\inframe_\tau)) = \nonce^{j_1}
          \;\wedge\;
          \pi_2(g(\inframe_\tau)) =
          \instate_\taut(\sqn_\hn^\ID) \oplus \ow{\nonce^{j_1}}{\key^\ID} \\
          &\wedge\;\;&&
          \pi_3(g(\inframe_\tau)) =
          \mac{\striplet
            {\nonce^{j_1}}
            {\instate_\taut(\sqn_\hn^\ID)}
            {\instate_\tautt(\suci_\ue^\ID)}}
          {\mkey^\ID}{3}\\
          &\wedge\;\;&&
          g(\inframe_\taut) = \instate_\tautt(\suci_\ue^\ID)
          \;\wedge\;
          \instate_\tautt(\suci_\ue^\ID) = \instate_\taut(\suci_\hn^\ID)
          \;\wedge\;
          \instate_\tautt(\success_\ue^\ID)\\
          &\wedge\;\;&&
          \range{\instate_\tau(\sqn_\ue^\ID)}{\instate_\taut(\sqn_\hn^\ID)}
        \end{alignedat}
      \right)
    \end{alignat*}
    Then we have:
    \begin{mathpar}
      \left(
        \ptr{\tautt,\tau}{\taut} \;\ra\; \accept_\tau^\ID \wedge \accept_\taut^\ID
      \right)
      _{\taut = \_,\cnai(j_1,0)
        \atop{\tautt \potau \taut}}

      \accept_\tau^\ID
      \leftrightarrow
      \bigvee
      _{\taut = \_,\cnai(j_1,0)
        \atop{\tautt \potau \taut}}
      \ptr{\tautt,\tau}{\taut}
    \end{mathpar}

  \item \customlabel{sequ3}{\lpsequ{3}} If $\ai = \cnai(j,1)$. Let $\taut = \_,\cnai(j,0)$ such that $\taut \popre \tau$. Let $\ID \in \iddom$ and $\taui,\tautt$ be such that $\taui = \_,\cuai_\ID(j_i,1)$, $\tautt = \_,\cuai_\ID(j_i,0)$ and $\tautt \potau \taut \potau \taui$. Let:
    % (I10b)
    \begin{alignat*}{2}
      \ftr{\tautt,\taui}{\taut,\tau}
      &\;\equiv\;\;&&
      \left(
        \ptr{\tautt,\taui}{\taut} \wedge
        g(\inframe_\tau) = \mac{\nonce^j}{\mkey^\ID}{4}
      \right)
    \end{alignat*}
    Then we have:
    \begin{mathpar}
      \left(
        \ftr{\tautt,\taui}{\taut,\tau} \;\ra\;
        \accept_\tau^\ID \wedge
        \accept_\taui^\ID \wedge
        \accept_\taut^\ID
      \right)
      _{\tautt = \_,\cuai_\ID(j_i,0)
        \atop{\taui = \_,\cuai_\ID(j_i,1)
          \atop{\tautt \potau \taut \potau \taui}}}

      \accept_\tau^\ID
      \;\leftrightarrow\;\;
      \bigvee
      _{\tautt = \_,\cuai_\ID(j_i,0)
        \atop{\taui = \_,\cuai_\ID(j_i,1)
          \atop{\tautt \potau \taut \potau \taui}}}
      \ftr{\tautt,\taui}{\taut,\tau}
    \end{mathpar}

  \item \customlabel{sequ4}{\lpsequ{4}} If $\ai = \npuai{2}{\ID}{j}$ then for every $\taut = \_,\pnai(j_1,1)$ such that $\tautt \potau \taut$, we have:
    % (I6b)
    \[
      \left(
        \neg\instate_\tau(\sync_\ue^\ID) \wedge \supitr{\tautt,\tau}{\taut}
      \right)
      \;\ra\;
      \incaccept_\taut^\ID \wedge
      \instate_\tau(\sqn_\hn^\ID) - \cstate_\taut(\sqn_\hn^\ID) = \zero
    \]
    Moreover:
    \[
      \left(
        \neg\instate_\tau(\sync_\ue^\ID)\wedge
        \accept_\tau^\ID
      \right)
      \;\ra\;
      \cstate_\tau(\sqn_\ue^\ID) - \cstate_\tau(\sqn_\hn^\ID) = \zero
    \]
  \end{itemize}
\end{lemma}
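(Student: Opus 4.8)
The plan is to prove the four items \ref{sequ1}--\ref{sequ4} one at a time, each by tightening the corresponding ``plain'' acceptance characterization (\ref{equ1} for \ref{sequ1}; \ref{equ4} together with the intermediate strengthening \ref{sacc1} for \ref{sequ2}; \ref{equ5} for \ref{sequ3}; \ref{equ2} for \ref{sequ4}), using the invariants of Propositions~\ref{prop:invs}, \ref{prop:equiv-props-ini} and \ref{prop:equiv-props}, and --- for the three $\guti$/$\refresh$ items --- the $\suci$-concealment statement of Lemma~\ref{lem:suci-conceal}, which has already been invoked once to prove \ref{sacc1}. In every case the converse implications and the ``each transcript forces acceptance'' clauses are routine: one inlines the definitions of $\accept$, $\incaccept$, $\ctr{}{}$, $\ptr{}{}$ and $\ftr{}{}$ and discharges the resulting equalities with the functional-correctness axioms (projection of a pair, idempotence of $\oplus$, $\textsf{EQInj}$, $\textsc{cr}^j$), so I would concentrate on the forward, disjunction-shrinking direction.

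For \ref{sequ2} most of the work is already done by \ref{sacc1}, which yields $\accept_\tau^\ID \ra \bigvee_{\tauo=\_,\cnai(j_0,0),\ \tautt\potau\tauo}(\dots)$ with the desired $\tautt\potau\tauo$ restriction and with $g(\inframe_\tauo)=\instate_\tautt(\suci_\ue^\ID)$; one then folds in the remaining conjuncts of $\ctr{\tau}{\tauo}$ supplied by \ref{equ4} (the range test and $\instate_\tautt(\success_\ue^\ID)$, the latter obtained from $\instate_\tau(\uetsuccess^\ID)$ exactly as in the proof of \ref{equ4}). Item \ref{sequ3} chains directly off \ref{equ5}: that lemma gives $\accept_\tau^\ID \lra \bigvee_{\taui=\_,\cuai_\ID(j_i,1),\ \taut\potau\taui}\ \ctr{\taui}{\taut}\wedge g(\inframe_\tau)=\mac{\nonce^j}{\mkey^\ID}{4}$, and since $\ctr{\taui}{\taut}$ implies $\accept_\taui^\ID$, one rewrites it as $\bigvee\ptr{\tautt,\taui}{\taut'}$ via \ref{sequ2}; the shared challenge $\pi_1(g(\inframe_\taui))=\nonce^{j_i}$ readable off both $\ctr{}{}$ and $\ptr{}{}$ then forces $\taut'=\taut$ by $\ax{EQIndep}$, producing $\ftr{\tautt,\taui}{\taut,\tau}$.

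For \ref{sequ1} one starts from the equivalence in \ref{equ1}, whose disjunction ranges over all $\taut=\_,\fnai(j_0)\popre\tau$ with $\taut\not\potau\newsession_\ID(\_)$, and discards the disjuncts with $\taut\popre\tautt$ (where $\tautt=\_,\npuai{2}{\ID}{j}$ or $\_,\cuai_\ID(j,1)$ is the authentication action of the same session $j$). In that case $\futr{\tau}{\taut}\ra\false$: $\futr{\tau}{\taut}$ contains $\injauth_\tau(\ID,j_0)$, so by Proposition~\ref{prop:injauth-charac} $\nonce^{j_0}=\instate_\tau(\eauth_\ue^\ID)$, and since $\eauth_\ue^\ID$ is reset when session $j$ starts this value was written at $\tautt$; combining $\instate_\tau(\eauth_\hn^{j_0})\ne\unknownid$ with $\injauth_\tau(\ID,j_0)$ as in the proof of \ref{b3} gives $\instate_\tau(\eauth_\hn^{j_0})=\ID$, i.e. $\thn(j_0)$ authenticated $\ID$; but $\thn(j_0)$ can only have reached that state by accepting a confirmation at its $\cnai(j_0,1)$ (resp. $\pnai(j_0,1)$) action, which by \ref{acc4} (resp. \ref{acc1}) must have been honestly generated by the user action $\tautt$, so $\tautt\popre\cnai(j_0,1)\popre\fnai(j_0)=\taut$, contradicting $\taut\popre\tautt$.

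The main obstacle is \ref{sequ4}, the one item with genuine $\sqn$-arithmetic. Given $\supitr{\tautt,\tau}{\taut}$ (with $\tautt=\_,\npuai{1}{\ID}{j}$, $\taut=\_,\pnai(j_1,1)$, $\tautt\potau\taut$), the action $\taut$ has received exactly the $\npuai{1}$ message of $\tue_{\ID}(j)$, so its decrypted sequence number equals $\instate_\tautt(\sqn_\ue^\ID)$, whence $\incaccept_\taut^\ID$ is equivalent to $\Geq{\instate_\tautt(\sqn_\ue^\ID)}{\instate_\taut(\sqn_\hn^\ID)}$; moreover $\accept_\tau^\ID$ then holds outright by \ref{equ2}, so the two displayed conclusions both reduce to controlling $\instate_\taut(\sqn_\hn^\ID)$ and $\cstate_\tau(\sqn_\ue^\ID)-\cstate_\tau(\sqn_\hn^\ID)$. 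Thus the whole item reduces to one $\sqn$-tracking sub-lemma, proved by induction along the segment of $\tau$ between the last $\newsession_\ID(\_)$ (call it $\tauo$, or $\epsilon$) and $\tau$ under the hypothesis $\neg\instate_\tau(\sync_\ue^\ID)$: since $\sync_\ue^\ID$ is reset only at $\newsession_\ID$ and is updated at $\npuai{2}{\ID}{\_}$ to $\sync_\ue^\ID\vee\accept$, no $\npuai{2}{\ID}{\_}$ has accepted since $\tauo$, and a short well-founded argument via \ref{b7} and \ref{equ1} (a successful $\guti$ session needs $\success_\ue^\ID$, hence a previous accepting $\fuai_\ID$, hence a previous successful authentication, which under $\neg\sync$ would itself have to be a $\guti$ session --- impossible for the earliest one) rules out accepting $\cuai_\ID(\_,1)$ actions since $\tauo$ as well; consequently after $\tauo$ the value $\sqn_\ue^\ID$ changes only by the unconditional $+1$ at $\npuai{1}{\ID}{\_}$ and $\sqn_\hn^\ID$ only at $\pnai(\_,1)$ actions that fire $\incaccept$, each such update setting $\sqn_\hn^\ID$ to one plus a sequence number encrypted by an earlier $\npuai{1}{\ID}{\_}$, while the monotonicity of $\sqn_\hn^\ID$ (\ref{b4}, \ref{b5}) together with \ref{a2}--\ref{a3} rules out re-accepting replayed messages. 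The real difficulty is to state the carried invariant precisely --- essentially the one already used to prove \ref{b6}, relating $\cstate_{\tau'}(\sqn_\hn^\ID)$ to the sequence number sent by the last $\npuai{1}{\ID}{\_}$ at or before $\tau'$ --- and to push it through every action case; once it is available, $\incaccept_\taut^\ID$, $\instate_\tau(\sqn_\hn^\ID)-\cstate_\taut(\sqn_\hn^\ID)=\zero$ and $\cstate_\tau(\sqn_\ue^\ID)-\cstate_\tau(\sqn_\hn^\ID)=\zero$ all follow from the linear-arithmetic axiom in $\axioms_{\sqn}$.
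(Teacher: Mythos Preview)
Your plan for \ref{sequ2} and \ref{sequ3} is correct and matches the paper exactly: redo the proof of \ref{equ4} using \ref{sacc1} in place of \ref{acc3}, then redo the proof of \ref{equ5} using \ref{sequ2} in place of \ref{equ4}.

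Your sketch for \ref{sequ1} reaches the right contradiction but takes a detour that does not work as written. You go via the \emph{network} state (from $\instate_\tau(\eauth_\hn^{j_0})=\ID$ apply \ref{acc4}/\ref{acc1} to the network's accepting action), and then claim that the resulting honestly-generated message ``must have been honestly generated by the user action $\tautt$''. But \ref{acc4} only yields \emph{some} $\cuai_\ID(\_,1)$ before $\cnai(j_0,1)$; nothing ties it to session~$j$. The paper's route is direct: $\futr{\tau}{\taut}$ implies $\accept_{\taui}^\ID$ for the user's own authentication action $\taui=\_,\npuai{2}{\ID}{j}$ or $\_,\cuai_\ID(j,1)$; then \ref{acc2} (resp.\ \ref{sacc1}) forces a matching $\pnai(j_0,1)$ (resp.\ $\cnai(j_0,0)$) strictly between $\tautt$ and $\taui$, which is impossible when $\taut=\_,\fnai(j_0)\popre\tautt$.

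The real gap is in \ref{sequ4}. Your well-founded argument that under $\neg\instate_\tau(\sync_\ue^\ID)$ no $\cuai_\ID(\_,1)$ accepts since the last $\newsession_\ID$ is fine, but the ``consequently'' step---that $\sqn_\hn^\ID$ then changes only at $\pnai(\_,1)$ actions---does not follow. A $\cnai(j',1)$ \emph{in} the epoch can still fire $\incaccept^\ID$ if it is completed by a $\cuai_\ID(\_,1)$ that accepted in a \emph{previous} epoch (with $\cnai(j',0)$ also before $\tautt$): your ``no $\cuai$ accepts in the epoch'' says nothing about those. The paper kills this case with the $\tsuccess_\hn^\ID$ machinery: it first shows $\cstate_{\taut}(\sqn_\hn^\ID) > \instate_\tautt(\sqn_\hn^\ID)$ (from \ref{b4}/\ref{b5} and the protocol semantics), then invokes \ref{b2} to conclude that $\cstate_\taut(\tsuccess_\hn^\ID)=\nonce^{j_x}$ for some action between $\tautt$ and $\taut$, hence $\tsuccess_\hn^\ID\ne\nonce^{j'}$ for any stale $\cnai(j',0)\popre\tautt$, and this propagates forward. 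Without \ref{b2} (or an equivalent argument about $\tsuccess_\hn^\ID$) you cannot rule out these delayed $\cnai$ completions, and the invariant you propose will not close. The paper's decomposition into six parts---(1) $\cstate_\taut(\sqn_\hn^\ID)=\cstate_\tautt(\sqn_\ue^\ID)$, (2) the strict increase, (3) freezing $\sqn_\hn^\ID$ on $[\taut,\tau]$ via (4)~$\pnai$ and (5)~$\cnai$ cases, (6) $\incaccept_\taut^\ID$ by a separate induction on $[\tautt,\taut]$ using \ref{sequ3} for the $\cnai$ sub-case---is essentially what is needed; the single global induction you sketch does not replace steps (2), (5) and the $\cnai$ branch of (6).
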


\subsection{Proof of Lemma~\ref{lem:strong-charac}}

\subsubsection*{Proof of \ref{sequ1}}
First, we apply \ref{equ1}:
\begin{alignat*}{3}
  \accept_\tau^\ID
  &\;\;\leftrightarrow\;\;&&
  \bigvee_{\taut = \_,\fnai(j_0) \popre \tau
    \atop{\taut \not \potau \newsession_\ID(\_)}}
  \futr{\tau}{\taut}
\end{alignat*}
Let $\taut = \_,\fnai(j_0) \popre \tau$. Remark that if $\tautt \popre \taut$ then $\taut \not \potau \newsession_\ID(\_)$. Hence to conclude we just need to show that if $\taut \popre \tautt$ then $\neg\futr{\tau}{\taut}$.

Let $\taui = \_,\npuai{2}{\ID}{j}$ or $\_,\cuai_\ID(j,1)$ such that $\taui \popre \tau$. We do a case disjunction on $\taui$:
\begin{itemize}
\item If $\taui = \_,\npuai{2}{\ID}{j}$. We know that $\futr{\tau}{\taut} \ra \accept_\taui^\ID$, hence by applying \ref{acc2} to $\taui$:
  \[
    \futr{\tau}{\taut} \;\ra\;
    \bigvee_{\taux = \_, \pnai(j_x,1)
      \atop{\tautt \popre \taux \popre \taui}}
    \accept_\taux^\ID \;\wedge\;
    g(\inframe_\tautt) = \nonce^{j_x} \;\wedge\;
    \pi_1(g(\inframe_\taux)) =
    \enc{
      \spair{\ID}
      {\instate_\tautt(\sqn_\ue^\ID)}}
    {\pk_\hn}{\enonce^j}
  \]
  We know that $\futr{\tau}{\taut} \ra g(\inframe_\tautt) = \nonce^{j_0}$. We deduce that the main term of the disjunction above is false whenever $j_x \ne j_0$. Hence we have $\neg\futr{\tau}{\taut}$ if there does not exist any $\tauo$ such that $\tautt \popre \tauo \popre \taui$ and $\tauo = \_,\pnai(j_0,1)$.

  If $\taut \popre \tautt$ then we know that for every $\tauo$, if $\tauo = \_,\pnai(j_0,1) \popre \tau$ then $\tauo \popre \taut$, and by transitivity $\tauo \popre \tautt$. Hence there does not exist any $\tauo$ such that $\tautt \popre \tauo \popre \taui$ and $\tauo = \_,\pnai(j_0,1)$. We deduce that if $\taut \popre \tautt$ then $\neg\futr{\tau}{\taut}$ holds, which is what we wanted.
  
\item If $\taui = \_,\cuai_\ID(j,1)$. We know that $\futr{\tau}{\taut} \ra \accept_\taui^\ID$, hence by applying \ref{sacc1} to $\taui$:
  \[
    \futr{\tau}{\taut} \;\ra\;
    \bigvee_{\taux = \_, \cnai(j_x,0)
      \atop{\tautt \popre \taux \popre \taui}}
    \left(
      \begin{alignedat}{2}
        &&&\accept_\taux^\ID \;\wedge\;
        g(\inframe_\taux) = \instate_\tautt(\suci_\ue^\ID)\;\wedge\;
        \pi_1(g(\inframe_\taui)) = \nonce^{j_x} \\
        &\wedge\;\;&&
        \pi_2(g(\inframe_\taui)) =
        \instate_\taux(\sqn_\hn^\ID) \oplus \ow{\nonce^{j_x}}{\key}
        \wedge
        \instate_\taui(\suci_\ue^\ID) = \instate_\taux(\suci_\hn^\ID)
      \end{alignedat}
    \right)
  \]
  Similarly to what we did for $\taui = \_,\npuai{2}{\ID}{j_i}$, the main term above if false if $j_x \ne j_0$. Hence we have $\neg\futr{\tau}{\taut}$ if there does not exist any $\tauo$ such that $\tautt \popre \tauo \popre \taui$ and $\tauo = \_,\cnai(j_0,0)$. Since this is the case whenever $\taut \popre \tautt$, we deduce that if $\taut \popre \tautt$ then $\neg\futr{\tau}{\taut}$ holds. This concludes this case, and this proof.
\end{itemize}

\subsubsection*{Proof of \ref{sequ2}}
We start by repeating the proof of \ref{equ4}, but using \ref{sacc1} instead of \ref{acc3}. All the reasonings we did apply, only the set of $\taut$ the disjunction quantifies upon changes. We quantify over $\taut$ in $\{\taut \mid \taut = \_, \cnai(j_0,0) \wedge \tautt \potau \taut\}$ instead of $\{\taut \mid \taut = \_, \cnai(j_0,0) \wedge \taut \popre \tau\}$. We get that:
\begin{alignat*}{2}
  \accept_\tau^\ID
  &\lra\;\;&&
  \bigvee_{\taut = \_, \cnai(j_0,0)
    \atop{\tautt \potau \taut}}
  \left(
    \begin{alignedat}{1}
      &\pi_3(g(\inframe_\tau)) =
      \mac{\striplet
        {\nonce^{j_0}}
        {\instate_\taut(\sqn_\hn^\ID)}
        {\instate_\tau(\suci_\ue^\ID)}}
      {\mkey}{3} \wedge
      \instate_\tau(\uetsuccess^{\ID}) \\
      &\wedge\range{\instate_\tau(\sqn_\ue^\ID)}
      {\instate_\taut(\sqn_\hn^\ID)}\wedge
      g(\inframe_\taut) = \instate_\taut(\suci_\hn^{\ID})\wedge
      \pi_1(g(\inframe_\tau)) = \nonce^{j_0}\\
      & \wedge
      \pi_2(g(\inframe_\tau)) =
      \instate_\taut(\sqn_\hn^\ID) \oplus \ow{\nonce^{j_0}}{\key}
      \wedge
      \instate_\tau(\suci_\ue^\ID) = \instate_\taut(\suci_\hn^\ID)
    \end{alignedat}
  \right)
  \numberthis\label{eq:fdglhetuqeirqwupe}
\end{alignat*}
Since no user $\ID$ action occurs between $\tautt$ and $\tau$, we know that:
\begin{mathpar}
  \instate_\tau(\suci_\ue^\ID) = \instate_\tautt(\suci_\ue^\ID)

  \instate_\tau(\uetsuccess^\ID) \lra \instate_\tautt(\success_\ue^\ID)
\end{mathpar}
Using this, we can rewrite \eqref{eq:fdglhetuqeirqwupe} as follows (we underline the subterms where rewriting occurred):
\begin{alignat*}{2}
  \accept_\tau^\ID
  &\lra\;\;&&
  \bigvee_{\taut = \_, \cnai(j_0,0)
    \atop{\tautt \potau \taut}}
  \left(
    \begin{alignedat}{1}
      &\pi_3(g(\inframe_\tau)) =
      \mac{\striplet
        {\nonce^{j_0}}
        {\instate_\taut(\sqn_\hn^\ID)}
        {\uline{\instate_\tautt(\suci_\ue^\ID)}}}
      {\mkey}{3} \wedge
      \uline{\instate_\tautt(\success_\ue^\ID)}\\
      &\wedge\range{\instate_\tau(\sqn_\ue^\ID)}
      {\instate_\taut(\sqn_\hn^\ID)}\wedge
      g(\inframe_\taut) = \instate_\taut(\suci_\hn^{\ID})\\
      &\wedge\pi_1(g(\inframe_\tau)) = \nonce^{j_0} \wedge
      \pi_2(g(\inframe_\tau)) =
      \instate_\taut(\sqn_\hn^\ID) \oplus \ow{\nonce^{j_0}}{\key}
      \wedge
      \uline{\instate_\tautt(\suci_\ue^\ID)} = \instate_\taut(\suci_\hn^\ID)
    \end{alignedat}
  \right)\\\displaybreak[0]
  \intertext{We rewrite $\instate_\taut(\suci_\hn^\ID)$ into $\instate_\tautt(\suci_\ue^\ID)$:}
  &\lra\;\;&&
  \bigvee_{\taut = \_, \cnai(j_0,0)
    \atop{\tautt \potau \taut}}
  \left(
    \begin{alignedat}{1}
      &\pi_3(g(\inframe_\tau)) =
      \mac{\striplet
        {\nonce^{j_0}}
        {\instate_\taut(\sqn_\hn^\ID)}
        {\instate_\tautt(\suci_\ue^\ID)}}
      {\mkey}{3} \wedge
      \instate_\tautt(\success_\ue^\ID)\\
      &\wedge\range{\instate_\tau(\sqn_\ue^\ID)}
      {\instate_\taut(\sqn_\hn^\ID)}\wedge
      g(\inframe_\taut) = \dashuline{\instate_\tautt(\suci_\ue^\ID)}\\
      &\wedge\pi_1(g(\inframe_\tau)) = \nonce^{j_0} \wedge
      \pi_2(g(\inframe_\tau)) =
      \instate_\taut(\sqn_\hn^\ID) \oplus \ow{\nonce^{j_0}}{\key}
      \wedge
      \instate_\tautt(\suci_\ue^\ID) = \instate_\taut(\suci_\hn^\ID)
    \end{alignedat}
  \right)\\\displaybreak[0]
  \intertext{Finally we re-order the conjuncts:}
  &\lra\;\;&&
  \bigvee_{\taut = \_, \cnai(j_0,0)
    \atop{\tautt \potau \taut}}
  \left(
    \begin{alignedat}{2}
      &&&
      \pi_1(g(\inframe_\tau)) = \nonce^{j_1}
      \;\wedge\;
      \pi_2(g(\inframe_\tau)) =
      \instate_\taut(\sqn_\hn^\ID) \oplus \ow{\nonce^{j_1}}{\key^\ID} \\
      &\wedge\;\;&&
      \pi_3(g(\inframe_\tau)) =
      \mac{\striplet
        {\nonce^{j_1}}
        {\instate_\taut(\sqn_\hn^\ID)}
        {\instate_\tautt(\suci_\ue^\ID)}}
      {\mkey^\ID}{3}\\
      &\wedge\;\;&&
      g(\inframe_\taut) = \instate_\tautt(\suci_\ue^\ID)
      \;\wedge\;
      \instate_\tautt(\suci_\ue^\ID) = \instate_\taut(\suci_\hn^\ID)
      \;\wedge\;
      \instate_\tautt(\success_\ue^\ID)\\
      &\wedge\;\;&&
      \range{\instate_\tau(\sqn_\ue^\ID)}{\instate_\taut(\sqn_\hn^\ID)}
    \end{alignedat}
  \right)\\
  &\lra\;\;&&
  \bigvee_{\taut = \_, \cnai(j_0,0)
    \atop{\tautt \potau \taut}}
  \ptr{\tautt,\tau}{\taut}
\end{alignat*}
Checking that for every $\taut = \_,\cnai(j_1,0) \tautt \potau \taut$:
\[
  \left(
    \ptr{\tautt,\tau}{\taut} \;\ra\; \accept_\tau^\ID \wedge \accept_\taut^\ID
  \right)
\]
is straightforward.

\subsubsection*{Proof of \ref{sequ3}}
The proof that:
\[
  \accept_\tau^\ID
  \;\leftrightarrow\;\;
  \bigvee
  _{\tautt = \_,\cuai_\ID(j_i,0)
    \atop{\taui = \_,\cuai_\ID(j_i,1)
      \atop{\tautt \potau \taut \potau \taui}}}
  \ftr{\tautt,\taui}{\taut,\tau}
\]
is exactly the same than the proof of \ref{equ5}, but using \ref{sequ2} instead of \ref{equ4}.

Finally, it is straightforward to check that for every $\tautt = \_,\cuai_\ID(j_i,0)$, $\taui = \_,\cuai_\ID(j_i,1)$ such that $\tautt \potau \taut \potau \taui$ we have:
\[
  \left(
    \ftr{\tautt,\taui}{\taut,\tau} \;\ra\;
    \accept_\tau^\ID \wedge
    \accept_\taui^\ID \wedge
    \accept_\taut^\ID
  \right)
\]

\subsubsection*{Proof of \ref{sequ4}}
Let $\tautt = \_\npuai{1}{\ID}{j}$ such that $\tautt \popre \tau$.  Using \ref{equ2}, we know that:
\[
  \accept_\tau^\ID \;\lra\;
  \bigvee_{\taut = \_,\pnai(j_1,1)
    \atop{\tautt \potau \taut}}
  \supitr{\tautt,\tau}{\taut}
\]
Therefore to prove \ref{sequ4} it is sufficient to show that for every $\taut$ such that $\taut = \_, \pnai(j_1,1)$ and $\tautt \potau \taut$ we have:
\[
  \left(
    \neg\instate_\tau(\sync_\ue^\ID)
    \wedge \supitr{\tautt,\tau}{\taut}\right)
  \;\ra\;
  \incaccept_\taut^\ID \wedge
  \instate_\tau(\sqn_\hn^\ID) - \cstate_\taut(\sqn_\hn^\ID) = \zero \wedge
  \cstate_\tau(\sqn_\ue^\ID) - \cstate_\tau(\sqn_\hn^\ID) = \zero
\]
Hence let $\taut$ with $\taut = \_, \pnai(j_1,1)$ and $\tautt \potau \taut$.

\paragraph{Part 1}
First, we are going to show that:
\[
  \left(
    \neg\instate_\tau(\sync_\ue^\ID)
    \wedge \supitr{\tautt,\tau}{\taut}\right)
  \;\ra\;
  \cstate_{\taut}(\sqn_\hn^\ID) = \cstate_\tautt(\sqn_\ue^\ID)
  \numberthis\label{eq:diff-zero0}
\]
We know that $\incaccept_\taut^\ID \;\ra\; \cstate_{\taut}(\sqn_\hn^\ID) = \cstate_\tautt(\sqn_\ue^\ID)$, which is what we wanted. Hence it only remains to show \eqref{eq:diff-zero0} when $\neg\incaccept_\taut^\ID$.  Using \ref{b4} we know that $\cstate_{\taut}(\sqn_\hn^\ID) \le \cstate_\taut(\sqn_\ue^\ID)$. By validity of $\tau$ there are no user action between $\tautt$ and $\tau$, hence $\cstate_\tau(\sqn_\ue^\ID) = \instate_\tautt(\sqn_\ue^\ID)$. Observe that:
\[
  \left(
    \supitr{\tautt,\tau}{\taut} \wedge
    \neg \incaccept_\taut^\ID
  \right)
  \;\ra\;
  \instate_{\taut}(\sqn_\hn^\ID) > \instate_\tautt(\sqn_\ue^\ID)
\]
And:
\[
  \left(
    \supitr{\tautt,\tau}{\taut} \wedge
    \neg \incaccept_\taut^\ID
  \right)
  \;\ra\;
  \cstate_{\tautt}(\sqn_\ue^\ID) = \instate_\tautt(\sqn_\ue^\ID) + 1
\]
Graphically:
\begin{center}
  \begin{tikzpicture}
    [dn/.style={inner sep=0.2em,fill=black,shape=circle},
    sdn/.style={inner sep=0.15em,fill=white,draw,solid,shape=circle},
    sl/.style={decorate,decoration={snake,amplitude=1.6}},
    dl/.style={dashed},
    pin distance=0.5em,
    every pin edge/.style={thin}]

    \draw[thick] (0,0)
    node[left=1.3em] {$\tau:$}
    -- ++(0.5,0)
    node[dn,pin={above:{$\npuai{1}{\ID}{j}$}}]
    (b) {}
    node[below,yshift=-0.3em] {$\tautt$}
    -- ++(3,0)
    node[dn] (b0) {}
    -- ++(3,0)
    node[dn,pin={above:{$\pnai(j_1,1)$}}]
    (c) {}
    node[below,yshift=-0.3em] {$\taut$}
    -- ++(3,0)
    node[dn,pin={above:{$\npuai{2}{\ID}{j}$}}]
    (d) {}
    node[below,yshift=-0.3em] {$\tau$};

    \path (b) -- ++ (0,-1.1)
    -- ++ (0,-1)
    node[sdn] (sb2) {}
    node[below] (b2) {$\instate_{\tautt}(\sqn_\ue^\ID)$};

    \path (b0) -- ++ (0,-1.1)
    -- ++ (0,-1)
    node[sdn] (sb02) {}
    node[below] (b02) {$\cstate_{\tautt}(\sqn_\ue^\ID)$};

    \path (c) -- ++ (0,-1.1)
    node[sdn] (sc1) {}
    node[above right] (c1) {$\instate_\taut(\sqn_\hn^\ID)$}
    -- ++ (0,-1)
    node[sdn] (sc2) {}
    node[below right] (c2) {$\instate_\taut(\sqn_\ue^\ID)$};

    \draw (sb2) -- (sb02) node[midway,below]{$+1$};
    \draw (sb02) -- (sc2) node[midway,below]{$=$};
    \draw (sc1) -- (sc2) node[midway,above,sloped]{$\le$};
    \draw (sb2) -- (sc1) node[midway,above,sloped]{$<$};
  \end{tikzpicture}
\end{center}
We deduce that:
\begin{alignat*}{2}
  \left(
    \neg\instate_\tau(\sync_\ue^\ID)
    \wedge \supitr{\tautt,\tau}{\taut}\wedge
    \neg \incaccept_\taut^\ID
  \right)
  &\;\ra\;&&
  \instate_\tautt(\sqn_\ue^\ID) <
  \instate_{\taut}(\sqn_\hn^\ID) \le
  \instate_\tautt(\sqn_\ue^\ID) + 1\\
  &\;\ra\;&&
  \instate_{\taut}(\sqn_\hn^\ID) =
  \instate_\tautt(\sqn_\ue^\ID) + 1\\
  &\;\ra\;&&
  \cstate_{\taut}(\sqn_\hn^\ID) =
  \cstate_\tautt(\sqn_\ue^\ID)
  \numberthis\label{eq:vsdovshoaeq}
\end{alignat*}
Which is what we wanted to show.

\paragraph{Part 2}
We now show that:
\[
  \left(
    \neg\instate_\tau(\sync_\ue^\ID)
    \wedge \supitr{\tautt,\tau}{\taut}\right)
  \;\ra\;
  \cstate_{\taut}(\sqn_\hn^\ID) > \instate_\tautt(\sqn_\hn^\ID)
  \numberthis\label{eq:diff-strict}
\]
First, notice that:
\begin{alignat*}{2}
  \incaccept_\taut^\ID
  &\;\ra\;&&
  \cstate_{\taut}(\sqn_\hn^\ID) = \instate_{\taut}(\sqn_\hn^\ID) + 1\\
  &\;\ra\;&&
  \cstate_{\taut}(\sqn_\hn^\ID) > \instate_{\taut}(\sqn_\hn^\ID)\\
  &\;\ra\;&&
  \cstate_{\taut}(\sqn_\hn^\ID) > \instate_{\tautt}(\sqn_\hn^\ID)
  \tag{By \ref{b5}}
\end{alignat*}
Therefore we only need to prove:
\[
  \left(
    \neg\instate_\tau(\sync_\ue^\ID)
    \wedge \supitr{\tautt,\tau}{\taut}
    \wedge \neg \incaccept_\taut^\ID
  \right)
  \;\ra\;
  \cstate_{\taut}(\sqn_\hn^\ID) > \instate_\tautt(\sqn_\hn^\ID)
\]
Which is straightforward:
\begin{alignat*}{2}
  \left(
    \neg\instate_\tau(\sync_\ue^\ID)
    \wedge \supitr{\tautt,\tau}{\taut}\wedge
    \neg \incaccept_\taut^\ID
  \right)
  &\;\ra\;&&
  \cstate_{\taut}(\sqn_\hn^\ID) =
  \instate_\tautt(\sqn_\ue^\ID) + 1
  \tag{By \eqref{eq:vsdovshoaeq}}\\
  &\;\ra\;&&
  \cstate_{\taut}(\sqn_\hn^\ID) >
  \instate_\tautt(\sqn_\ue^\ID) \\
  &\;\ra\;&&
  \cstate_{\taut}(\sqn_\hn^\ID) >
  \cstate_\tautt(\sqn_\hn^\ID)
  \tag{By \ref{b4}}
\end{alignat*}
Which concludes the proof of \eqref{eq:diff-strict}.

\paragraph{Part 3}
We give the proof of:
\[
  \left(
    \neg\instate_\tau(\sync_\ue^\ID)
    \wedge \supitr{\tautt,\tau}{\taut}
  \right)
  \;\ra\;
  \cstate_\tau(\sqn_\hn^\ID) = \cstate_\taut(\sqn_\hn^\ID) \wedge
  \cstate_\tau(\sqn_\ue^\ID) - \cstate_\tau(\sqn_\hn^\ID) = \zero
  \numberthis\label{eq:diff-zero-42}
\]
By validity of $\tau$ we know that $\cstate_\tau(\sqn_\ue^\ID) = \cstate_\tautt(\sqn_\ue^\ID)$, therefore using \eqref{eq:diff-zero0} we know that:
\[
  \left(
    \neg\instate_\tau(\sync_\ue^\ID)
    \wedge \supitr{\tautt,\tau}{\taut}
  \right)
  \;\ra\;
  \cstate_\taut(\sqn_\hn^\ID) =
  \cstate_\tau(\sqn_\ue^\ID)
\]
To conclude, we need to show that $\sqn_\hn^\ID$ was kept unchanged since $\taut$, i.e. that $\neg\instate_\tau(\sync_\ue^\ID) \wedge \supitr{\tautt,\tau}{\taut}$ implies that $\cstate_\taut(\sqn_\hn^\ID) = \cstate_\tau(\sqn_\hn^\ID)$. This requires that no $\supi$ or $\suci$ network session incremented $\sqn_\hn^\ID$. Therefore we need to show the two following properties:
\begin{itemize}
\item \textbf{$\supi$:} For every $\taut \potau \taui$ such that $\taui = \_, \pnai(j_i,1)$:
  \[
    \left(
      \neg\instate_\tau(\sync_\ue^\ID)
      \wedge \supitr{\tautt,\tau}{\taut}
    \right)
    \;\ra\;
    \neg\incaccept_\taui^{\ID}
    \numberthis\label{eq:not-increasing-sqn}
  \]
\item \textbf{$\suci$:} For every $\taut \potau \taui$ such that $\taui = \_, \cnai(j_i,1)$:
  \[
    \left(
      \neg\instate_\tau(\sync_\ue^\ID)
      \wedge
      \supitr{\tautt,\tau}{\taut}
    \right)
    \;\ra\;
    \neg\incaccept_\taui^{\ID}
    \numberthis\label{eq:not-increasing-sqn2}
  \]
\end{itemize}
Assuming the two properties above, showing that \eqref{eq:diff-zero-42} holds is easy. First, using \eqref{eq:not-increasing-sqn} and \eqref{eq:not-increasing-sqn2} we know that:
\[
  \left(
    \neg\instate_\tau(\sync_\ue^\ID)
    \wedge \supitr{\tautt,\tau}{\taut}
  \right)
  \;\ra\;
  \cstate_\tau(\sqn_\hn^\ID) = \cstate_\taut(\sqn_\hn^\ID)
\]
We know that $\cstate_\tau(\sqn_\ue^\ID) = \instate_\tau(\sqn_\ue^\ID)$. We deduce that $\cstate_\tau(\sqn_\hn^\ID) = \cstate_\tau(\sqn_\ue^\ID)$, which concludes this case. We summarize graphically this proof below:
\begin{center}
  \begin{tikzpicture}
    [dn/.style={inner sep=0.2em,fill=black,shape=circle},
    sdn/.style={inner sep=0.15em,fill=white,draw,solid,shape=circle},
    sl/.style={decorate,decoration={snake,amplitude=1.6}},
    dl/.style={dashed},
    pin distance=0.5em,
    every pin edge/.style={thin}]

    \draw[thick] (0,0)
    node[left=1.3em] {$\tau:$}
    -- ++(0.5,0)
    node[dn,pin={above:{$\npuai{1}{\ID}{j}$}}]
    (a) {}
    node[below,yshift=-0.3em] {$\tautt$}
    -- ++(3,0)
    node[dn,pin={above:{$\pnai(j_1,1)$}}]
    (b) {}
    node[below,yshift=-0.3em] {$\taut$}
    -- ++(4,0)
    node[dn,pin={above:
      {$\pnai(j_i,1)$ or $\cnai(j_i,1)$}}]
    (c) {}
    node[below,yshift=-0.3em] {$\taui$}
    -- ++(4,0)
    node[dn,pin={above:{$\npuai{2}{\ID}{j}$}}]
    (d) {}
    node[below,yshift=-0.3em] {$\tau$};

    \path (a) -- ++ (0,-1.3)
    -- ++ (0,-1)
    node[sdn] (aa1) {}
    node[left,align=left] (a1) {$\cstate_\tautt(\sqn_\ue^\ID) =$\\
      $  \sqnsuc(\instate_\tautt(\sqn_\ue^\ID))$};

    \path (b) -- ++ (0,-1.3)
    node[sdn] (bb1) {}
    node[left,yshift=1.2em,align=left] (b1) {$\cstate_\taut(\sqn_\hn^\ID) =$\\
      $ \sqnsuc(\instate_\tautt(\sqn_\ue^\ID))$};

    \path (c) -- ++ (0,-1.3)
    node[sdn] (cc1) {}
    node[align=left,above] (c1) {$\cstate_\taui(\sqn_\hn^\ID) =
      \instate_\taui(\sqn_\hn^\ID)$};

    \path (d) -- ++ (0,-1.3)
    node[sdn] (dd1) {}
    node[right,align=left] (d1) {$\cstate_\tau(\sqn_\hn^\ID)$}
    -- ++ (0,-1)
    node[sdn] (dd2) {}
    node[right,align=left] (d2) {$\cstate_\tau(\sqn_\ue^\ID) =$\\
      $ \instate_\tau(\sqn_\ue^\ID)$};

    \draw (aa1) -- (dd2)
    node[midway,below] {$=$};
    \draw (aa1) -- (bb1)
    node[midway,below,sloped] {$=$}
    -- (cc1)
    node[midway,below] {$=$}
    -- (dd1)
    node[midway,below] {$=$};
  \end{tikzpicture}
\end{center}

\paragraph{Part 4 (Proof of  \eqref{eq:not-increasing-sqn})}
Let $\taut \potau \taui$ such that $\taui = \_, \pnai(j_i,1)$. Using \ref{acc1} we know that:
\[
  \accept_\taui^\ID \;\ra\;
  \bigvee_{\tau' = \_,\npuai{1}{\ID}{j'} \potau \taui}
  \left(
    \pi_1(g(\inframe_\taui)) =
    \enc{\pair
      {\ID}
      {\instate_{\tau'}(\sqn_\ue^\ID)}}
    {\pk_\hn}{\enonce^{j'}}
    \wedge
    g(\inframe_{\tau'}) = \nonce^{j_i}
  \right)
\]
We know that $\supitr{\tautt,\tau}{\taut} \ra g(\inframe_{\tautt}) = \nonce^{j_1} \ne \nonce^{j_i}$. Moreover from the validity of $\tau$ we know that for every $\tau''$ such that:
\[
  \tautt = \_,\npuai{1}{\ID}{j}
  \potau
  \tau'' = \_,\ai''
  \potau \tau = \_,\npuai{2}{\ID}{j}
\]
We have $\ai'' \ne \npuai{\_}{\ID}{\_}$. Hence:
\[
  \supitr{\tautt,\tau}{\taut} \wedge \accept_\taui^\ID \;\ra\;
  \bigvee_{\tau' = \_,\npuai{1}{\ID}{j'} \potau \tautt}
  \left(
    \pi_1(g(\inframe_\taui)) =
    \enc{\pair
      {\ID}
      {\instate_{\tau'}(\sqn_\ue^\ID)}}
    {\pk_\hn}{\enonce^{j'}}
    \wedge
    g(\inframe_{\tau'}) = \nonce^{j_i}
  \right)
\]
Which implies that:
\[
  \supitr{\tautt,\tau}{\taut} \wedge \incaccept_\taui^\ID \;\ra\;
  \bigvee_{\tau' = \_,\npuai{1}{\ID}{j'} \potau \tautt}
  \left(
    \cstate_{\taui}(\sqn_\hn^\ID) =
    \sqnsuc(\instate_{\tau'}(\sqn_\ue^\ID))
  \right)
\]
We recall \eqref{eq:diff-zero0}:
\[
  \left(
    \neg\instate_\tau(\sync_\ue^\ID)
    \wedge \supitr{\tautt,\tau}{\taut}
  \right)
  \;\ra\;
  \left(
    \cstate_\taut(\sqn_\hn^\ID) =
    \cstate_\tautt(\sqn_\ue^\ID)
  \right)
\]
Let $\tau' = \_,\npuai{1}{\ID}{j'} \potau \tautt$. We know using \ref{b5} that:
\begin{mathpar}
  \cstate_\taut(\sqn_\hn^\ID) \le
  \cstate_\taui(\sqn_\hn^\ID)

  \cstate_{\tau'}(\sqn_\ue^\ID) \le
  \cstate_\tautt(\sqn_\ue^\ID)
\end{mathpar}
Moreover using \ref{a2} we know that $\cstate_{\tau'}(\sqn_\ue^\ID) \ne \cstate_\tautt(\sqn_\ue^\ID)$, hence $\cstate_{\tau'}(\sqn_\ue^\ID) < \cstate_\tautt(\sqn_\ue^\ID)$.  This implies that:
\begin{alignat*}{2}
  &&&
  \neg\instate_\tau(\sync_\ue^\ID)\wedge
  \supitr{\tautt,\tau}{\taut} \wedge \incaccept_\taui^\ID\\
  &\;\;\ra\;\;&&
  \bigvee_{\tau' = \_,\npuai{1}{\ID}{j'} \potau \tautt}
  \left(
    \begin{alignedat}{4}
      &&&\cstate_{\tau'}(\sqn_\ue^\ID) <
      \cstate_\tautt(\sqn_\ue^\ID)
      &\wedge&
      \cstate_\tautt(\sqn_\ue^\ID) =
      \cstate_\taut(\sqn_\hn^\ID)\\
      &\wedge&&
      \cstate_\taut(\sqn_\hn^\ID) \le
      \cstate_\taui(\sqn_\hn^\ID)
      &\wedge&
      \cstate_{\taui}(\sqn_\hn^\ID) =
      \cstate_{\tau'}(\sqn_\ue^\ID)
    \end{alignedat}
  \right)\\
  &\;\;\ra\;\;&&
  \bigvee_{\tau' = \_,\npuai{1}{\ID}{j'} \potau \tautt}
  \left(
    \cstate_{\tau'}(\sqn_\ue^\ID) <
    \cstate_{\tau'}(\sqn_\ue^\ID)
  \right)\\
  &\;\;\ra\;\;&& \false
\end{alignat*}
Which concludes this proof. We summarize graphically below:
\begin{center}
  \begin{tikzpicture}
    [dn/.style={inner sep=0.2em,fill=black,shape=circle},
    sdn/.style={inner sep=0.15em,fill=white,draw,solid,shape=circle},
    sl/.style={decorate,decoration={snake,amplitude=1.6}},
    dl/.style={dashed},
    pin distance=0.5em,
    every pin edge/.style={thin}]

    \draw[thick] (0,0)
    node[left=1.3em] {$\tau:$}
    -- ++(0.5,0)
    node[dn,pin={above:{$\npuai{1}{\ID}{j'}$}}]
    (a) {}
    node[below,yshift=-0.3em] {$\tau'$}
    -- ++(3.5,0)
    node[dn,pin={above:{$\npuai{1}{\ID}{j}$}}]
    (b) {}
    node[below,yshift=-0.3em] {$\tautt$}
    -- ++(3.5,0)
    node[dn,pin={above:{$\pnai(j_1,1)$}}]
    (c) {}
    node[below,yshift=-0.3em] {$\taut$}
    -- ++(3.5,0)
    node[dn,pin={above:{$\pnai(j_i,1)$}}]
    (d) {}
    node[below,yshift=-0.3em] {$\taui$}
    -- ++(0.5,0);

    \path (a) -- ++ (0,-1)
    node[sdn] (aa1) {}
    node[left] (a1) {$\cstate_{\tau'}(\sqn_\ue^\ID)$};

    \path (b) -- ++ (0,-1)
    node[sdn] (bb1) {}
    node[above right] (b1) {$\cstate_\tautt(\sqn_\ue^\ID)$};

    \path (c) -- ++ (0,-1)
    -- ++ (0,-1.5)
    node[sdn] (cc1) {}
    node[below left] (c1) {$\cstate_\taut(\sqn_\hn^\ID)$};

    \path (d) -- ++ (0,-1)
    -- ++ (0,-1.5)
    node[sdn] (dd1) {}
    node[right] (d1) {$\cstate_\taui(\sqn_\hn^\ID)$};

    \draw (aa1) -- (bb1)
    node[midway,above] {$<$};
    \draw (aa1) ..controls +(-60:1.5) and +(120:1.5) .. (dd1)
    node[pos=0.3,below] {$=$};
    \draw (bb1) -- (cc1)
    node[pos=0.3,above,sloped] {$=$};
    \draw (cc1) -- (dd1)
    node[midway,below] {$\le$};
  \end{tikzpicture}
\end{center}

\paragraph{Part 5 (Proof of  \eqref{eq:not-increasing-sqn2})}
Let $\taut \potau \taui$ such that $\taui = \_, \cnai(j_i,1)$. Using Lemma~\ref{lem:auth-serv-net}, we know that:
\[
  \accept_\taui^\ID \;\ra\;
  \left(\instate_\taui(\eauth_\hn^j) = \ID\right)
  \;\ra\;
  \bigvee_{\tau' = \_, \cuai_\ID(\_,1) \atop{\tau' \potau \taui}}
  \cstate_{\tau'}(\bauth_\ue^\ID) = \nonce^{j_i}
\]
Since $\supitr{\tautt,\tau}{\taut} \ra g(\inframe_\tautt) = \nonce^{j_1}$, we know that $\supitr{\tautt,\tau}{\taut} \ra \cstate_\tautt(\bauth_\ue^\ID) = \nonce^{j_1}$. As we know that $\nonce^{j_1} \ne \nonce^{j_i}$, we deduce that $\supitr{\tautt,\tau}{\taut} \;\ra\; \cstate_\tautt(\bauth_\ue^\ID) \ne \nonce^{j_i}$. Moreover using the validity of $\tau$ we know that $\cstate_\taui(\bauth_\ue^\ID) = \cstate_\tautt(\bauth_\ue^\ID)$. Therefore:
\[
  \left(\supitr{\tautt,\tau}{\taut} \wedge \accept_\taui^\ID \right)
  \;\ra\;
  \bigvee_{\tau' = \_,\cuai_\ID(\_,1) \atop{\tau' \potau \tautt}}
  \cstate_{\tau'}(\bauth_\ue^\ID) = \nonce^{j_i}
\]
Let $\tau' = \_,\cuai_\ID(\_,1)$ with $\tau' \potau \tautt$. We know that $\cstate_{\tau'}(\bauth_\ue^\ID) = \nonce^{j_i}$ implies that $\cstate_{\tau'}(\bauth_\ue^\ID) \ne \fail$, and therefore $\accept_{\tau'}$ holds:
\[
  \left(\cstate_{\tau'}(\bauth_\ue^\ID) = \nonce^{j_i}\right)
  \;\ra\;
  \left(\cstate_{\tau'}(\bauth_\ue^\ID) \ne \fail\right)
  \;\ra\;
  \accept_{\tau'}
\]
By applying \ref{acc3} we know that:
\[
  \accept_{\tau'} \;\ra\;
  \bigvee_{\taui' = \_,\cnai(j_i',0) \potau \tau'}
  \pi_1(g(\inframe_{\tau'})) = \nonce^{j_i'}
\]
Since $\cond{\accept_{\tau'}}{\cstate_{\tau'}(\bauth_\ue^\ID) } = \cond{\accept_{\tau'}}{\pi_1(g(\inframe_{\tau'}))}$ we deduce:
\[
  \left(\cstate_{\tau'}(\bauth_\ue^\ID) = \nonce^{j_i} \right)
  \;\ra\;
  \false  \text{ if }\tau' \potau \cnai(j_i,0)
\]
Hence if $\tau' \potau \cnai(j_i,0)$ we know that $\neg \left(\supitr{\tautt,\tau}{\taut} \wedge \accept_\taui^\ID \right)$, which is what we wanted to show. Therefore let $\taui' = \_,\cnai(j_i,0)$, and assume $\taui' \potau \tau'$.   We summarize graphically this below:
\begin{center}
  \begin{tikzpicture}
    [dn/.style={inner sep=0.2em,fill=black,shape=circle},
    sdn/.style={inner sep=0.15em,fill=white,draw,solid,shape=circle},
    sl/.style={decorate,decoration={snake,amplitude=1.6}},
    dl/.style={dashed},
    pin distance=0.5em,
    every pin edge/.style={thin}]

    \draw[thick] (0,0)
    node[left=1.3em] {$\tau:$}
    -- ++(0.5,0)
    node[dn,pin={above:{$\cnai(j_i,0)$}}]
    (a) {}
    node[below,yshift=-0.3em] {$\taui'$}
    -- ++(2.5,0)
    node[dn,pin={above:{$\cuai_\ID(\_,1)$}}]
    (a) {}
    node[below,yshift=-0.3em] {$\tau'$}
    -- ++(2.5,0)
    node[dn,pin={above:{$\npuai{1}{\ID}{j}$}}]
    (e) {}
    node[below,yshift=-0.3em] {$\tautt$}
    -- ++(2.5,0)
    node[dn,pin={above:{$\pnai(j_1,1)$}}]
    (b) {}
    node[below,yshift=-0.3em] {$\taut$}
    -- ++(2.5,0)
    node[dn,pin={above:{$\cnai(j_i,1)$}}]
    (c) {}
    node[below,yshift=-0.3em] {$\taui$}
    -- ++(2.5,0)
    node[dn,pin={above:{$\npuai{2}{\ID}{j}$}}]
    (d) {}
    node[below,yshift=-0.3em] {$\tau$};

    % \path (b) -- ++ (0,-1.2)
    % node[sdn] (b1) {};

    % \path (d) -- ++ (0,-1.2)
    % node[sdn] (d1) {};

    % \draw[sl] (b1) -- (d1)
    % node[above,midway]
    % {$\forall \taut \popreleq \tau'',
    % \cstate_{\tau''}(\tsuccess_\hn^\ID) \ne \nonce^{j_i}$};
  \end{tikzpicture}
\end{center}
We recall \eqref{eq:diff-strict}:
\[
  \left(
    \neg\instate_\tau(\sync_\ue^\ID)
    \wedge \supitr{\tautt,\tau}{\taut}
  \right)
  \;\ra\;
  \left(
    \instate_\tautt(\sqn_\hn^\ID) <
    \cstate_\taut(\sqn_\hn^\ID)
  \right)
\]
Hence, using \ref{b2} we know that:
\begin{alignat*}{2}
  \left(
    \neg\instate_\tau(\sync_\ue^\ID)
    \wedge \supitr{\tautt,\tau}{\taut}
  \right)
  &\;\ra\;&&
  \bigvee_{\tautt \popreleq \taux \popreleq \taut
    \atop{\taux = \_,\cnai(j_x,0)
      \text{ or }  \_,\cnai(j_x,1)
      \text{ or }  \_,\pnai(j_x,1)}}
  \cstate_\taut(\tsuccess_\hn^\ID) = \nonce^{j_x}
\end{alignat*}
Since $\cnai(j_i,0) \potau \tautt$ and $\taut \potau \cnai(j_i,1)$:
\begin{alignat*}{2}
  \left(
    \neg\instate_\tau(\sync_\ue^\ID)
    \wedge \supitr{\tautt,\tau}{\taut}
  \right)
  &\;\ra\;&&
  \cstate_\taut(\tsuccess_\hn^\ID) \ne \nonce^{j_i}
\end{alignat*}
For every $\taut \popreleq \tau''$ we have:
\[
  \cstate_{\tau''}(\tsuccess_\hn^\ID) \;=\;
  \begin{dcases*}
    \ite{\incaccept^\ID_{\tau''}}
    {\nonce^{j''}}
    {\instate_{\tau''}(\tsuccess_\hn^\ID)} & if $\tau''= \_,\pnai(j'',1)$\\
    \ite{\accept^\ID_{\tau''}}
    {\nonce^{j''}}
    {\instate_{\tau''}(\tsuccess_\hn^\ID)} & if $\tau'' = \_,\cnai(j'',0)$\\
    \instate_{\tau''}(\tsuccess_\hn^\ID) & otherwise
  \end{dcases*}
\]
Since $\tau' \not \potau \cnai(j_i,0)$, we know that after having set $\cstate_{\tau''}(\tsuccess_\hn^\ID)$ to $\nonce^{j_1}$ at $\taut$, it can never be set to $\nonce^{j_i}$. Formally, we show by induction that:
\[
  \cstate_\taut(\tsuccess_\hn^\ID) \ne \nonce^{j_i}
  \;\ra\;
  \cstate_{\tau''}(\tsuccess_\hn^\ID) \ne \nonce^{j_i}
\]
We conclude by observing that $\instate_{\taui}(\tsuccess_\hn^\ID) \ne \nonce^{j_i} \ra \neg \incaccept_\taui^\ID$.

\paragraph{Part 6}
To conclude the proof of $\ref{sequ4}$, it only remains to show that:
\[
  \left(
    \neg\instate_\tau(\sync_\ue^\ID)
    \wedge \supitr{\tautt,\tau}{\taut}\right)
  \;\ra\;
  \incaccept_\taut^\ID
  \numberthis\label{eq:dsjvbsuowiofge}
\]
Since $\supitr{\tautt,\tau}{\taut} \ra \accept_\taut^\ID$, and since:
\[
  \left(\accept_\taut^\ID \wedge \neg \incaccept_\taut^\ID \right)
  \;\lra\;
  \instate_\taut(\sqn_\hn^\ID) > \instate_\tautt(\sqn_\ue^\ID)
\]
To show that \eqref{eq:dsjvbsuowiofge} holds, it is sufficient to show that:
\[
  \left(
    \neg\instate_\tau(\sync_\ue^\ID)
    \wedge \supitr{\tautt,\tau}{\taut}\right)
  \;\ra\;
  \instate_\taut(\sqn_\hn^\ID) \le \instate_\tautt(\sqn_\ue^\ID)
\]
We generalize this, and show by induction that for every $\taun$ such that $\tautt \popreleq \taun \potau \taut$, we have:
\[
  \left(
    \neg\instate_\tau(\sync_\ue^\ID)
    \wedge \supitr{\tautt,\tau}{\taut}\right)
  \;\ra\;
  \cstate_\taun(\sqn_\hn^\ID) \le \instate_\tautt(\sqn_\ue^\ID)
\]
\begin{itemize}
\item If $\taun = \tautt$, this is immediate using \ref{b4} and the fact that $\cstate_\taun(\sqn_\hn^\ID) = \instate_\taun(\sqn_\hn^\ID)$.
\item Let $\taun >_\tau \tautt$. By induction, assume that:
  \[
    \left(
      \neg\instate_\tau(\sync_\ue^\ID)
      \wedge \supitr{\tautt,\tau}{\taut}\right)
    \;\ra\;
    \instate_{\taun}(\sqn_\hn^\ID) \le \instate_\tautt(\sqn_\ue^\ID)
  \]
  We then have three cases:
  \begin{itemize}
  \item If $\taun \ne \_, \pnai(\_,1)$ and $\taun \ne \_, \cnai(\_,1)$, we know that $\cstate_{\taun}(\sqn_\hn^\ID) = \instate_{\taun}(\sqn_\hn^\ID)$, and we conclude directly using the induction hypothesis.
  \item If $\taun = \_,\pnai(j_n,1)$. Using \ref{equ3} we know that:
    \begin{alignat*}{2}
      \cstate_{\taun}(\sqn_\hn^\ID) \ne \instate_{\taun}(\sqn_\hn^\ID)
      &\ra\;\;&&
      \accept_\taun^\ID\\
      &\ra\;\;&&
      \bigvee_{
        \taux = \_, \npuai{1}{\ID}{j_x}
        \atop{\taux \potau \taun}}
      \underbrace{\left(
          \begin{alignedat}{2}
            &&&g(\inframe_{\taux}) = \nonce^{j_n}
            \wedge
            \pi_1(g(\inframe_\taun)) =
            \enc{\spair
              {\ID}
              {\instate_{\taux}(\sqn_\ue^\ID)}}
            {\pk_\hn}{\enonce^{j_n}}\\
            &\wedge\;&&
            \pi_2(g(\inframe_\taun)) =
            {\mac{\spair
                {\enc{\spair
                    {\ID}
                    {\instate_{\taux}(\sqn_\ue^\ID)}}
                  {\pk_\hn}{\enonce^{j_n}}}
                {g(\inframe_{\taux})}}
              {\mkey^\ID}{1}}
          \end{alignedat}
        \right)}
      _{\theta_\taux}
    \end{alignat*}
    Since $\taun \potau \taut$, we know that $j_n \ne j_1$. Moreover, $\supitr{\tautt,\tau}{\taut} \ra g(\inframe_{\tautt}) = \nonce^{j_1}$. By consequence:
    \[
      \left(\supitr{\tautt,\tau}{\taut} \wedge g(\inframe_{\tautt}) = \nonce^{j_n} \right) \ra \false
    \]
    Which shows that $\left(\supitr{\tautt,\tau}{\taut} \wedge \theta_\tautt\right)\ra \false$. Hence:
    \begin{alignat*}{2}
      \supitr{\tautt,\tau}{\taut} \wedge
      \cstate_{\taun}(\sqn_\hn^\ID) \ne \instate_{\taun}(\sqn_\hn^\ID)
      &\ra\;\;&&
      \bigvee_{
        \taux = \_, \npuai{1}{\ID}{j_x}
        \atop{\taux \potau \tautt}}
      \theta_\taux
    \end{alignat*}
    Observe that for every $\taux = \_, \npuai{1}{\ID}{j_x}$ such that $\taux \potau \tautt$:
    \begin{alignat*}{2}
      \theta_\taux
      &\ra\;\;&&
      \cstate_\taun(\sqn_\hn^\ID) \;=\;
      \ite{\instate_\taun(\sqn_\hn^\ID) \le \instate_\taux(\sqn_\ue^\ID)}
      {\instate_\taux(\sqn_\ue^\ID)}
      {\instate_\taun(\sqn_\hn^\ID)}
    \end{alignat*}
    Using \ref{b5}, we know that $\instate_\taux(\sqn_\ue^\ID) \le \instate_\tautt(\sqn_\ue^\ID)$. Therefore:
    \begin{gather*}
      \theta_\taux \;\ra\;
      \cstate_\taun(\sqn_\hn^\ID)
      \;\le\;
      \ite{\instate_\taun(\sqn_\hn^\ID) \le \instate_\taux(\sqn_\ue^\ID)}
      {\instate_\tautt(\sqn_\ue^\ID)}
      {\instate_\taun(\sqn_\hn^\ID)}
    \end{gather*}
    And using the induction hypothesis, we get that:
    \begin{gather*}
      \left(
        \neg\instate_\tau(\sync_\ue^\ID)
        \wedge \supitr{\tautt,\tau}{\taut}
        \wedge \theta_\taux
      \right)
      \;\ra\;
      \cstate_\taun(\sqn_\hn^\ID)
      \le \instate_\tautt(\sqn_\ue^\ID)
    \end{gather*}
    Hence:
    \begin{gather*}
      \left(
        \neg\instate_\tau(\sync_\ue^\ID) \wedge
        \supitr{\tautt,\tau}{\taut} \wedge
        \cstate_{\taun}(\sqn_\hn^\ID) \ne \instate_{\taun}(\sqn_\hn^\ID)
      \right)
      \;\ra\;
      \cstate_\taun(\sqn_\hn^\ID)
      \le \instate_\tautt(\sqn_\ue^\ID)
    \end{gather*}
    From which we deduce, using the induction hypothesis, that:
    \begin{gather*}
      \left(
        \neg\instate_\tau(\sync_\ue^\ID) \wedge
        \supitr{\tautt,\tau}{\taut}
      \right)
      \;\ra\;
      \cstate_\taun(\sqn_\hn^\ID)
      \le \instate_\tautt(\sqn_\ue^\ID)
    \end{gather*}

  \item If $\taun = \_,\cnai(j_n,1)$. Using \ref{sequ2}, we know that:
    \begin{alignat*}{2}
      \cstate_{\taun}(\sqn_\hn^\ID) \ne \instate_{\taun}(\sqn_\hn^\ID)
      &\ra\;\;&&
      \accept_\taun^\ID\\
      &\ra\;\;&&
      \bigvee
      _{\taux' = \_,\cuai_\ID(j_x,0)
        \atop{\taun' = \_,\cnai(j_n,0)
          \atop{\taux = \_,\cuai_\ID(j_x,1)
            \atop{\taux' \potau \taun' \potau \taux \potau \taun}}}}
      \ftr{\taux',\taux}{\taun',\taun}
    \end{alignat*}
    Let $\taux = \_,\cuai_\ID(j_x,1)$, $\taun' = \_,\cnai(j_n,0)$, $\taux' = \_,\cuai_\ID(j_x,0)$ be such that $\taux' \potau \taun' \potau \taux \potau \taun$. One can check that:
    \begin{alignat*}{2}
      \incaccept_\taun^\ID\;
      &\ra\;\;&&
      \bigwedge_{\taux \potau \taui \potau \taun}
      \neg \incaccept_{\taui}^\ID\\
      &\ra\;\;&&
      \cstate_{\taun'}(\sqn_\hn^\ID) =
      \instate_\taun(\sqn_\hn^\ID)
    \end{alignat*}
    Moreover, since:
    \[
      \incaccept_\taun^\ID\;\ra\;
      \instate_\taux(\sqn_\ue^\ID)
      \;=\;
      \cstate_{\taun'}(\sqn_\hn^\ID)
    \]
    We deduce that:
    \[
      \ftr{\taux',\taux}{\taun',\taun} \;\ra\;
      \cstate_\taun(\sqn_\hn^\ID) =
      \ite{\incaccept_\taun^\ID}
      {\sqnsuc(\instate_\taux(\sqn_\ue^\ID))}
      {\instate_\taun(\sqn_\hn^\ID)}
    \]
    By validity of $\tau$, we know that $j_x \ne j$ and that $\taux \potau \tautt$.
    Therefore using \ref{b5} we know that $\cstate_\taux(\sqn_\ue^\ID) \le \instate_\tautt(\sqn_\ue^\ID)$. Moreover $\cstate_\taux(\sqn_\ue^\ID) = \sqnsuc(instate_\taux(\sqn_\ue^\ID))$. Hence:
    \[
      \ftr{\taux',\taux}{\taun',\taun} \;\ra\;
      \cstate_\taun(\sqn_\hn^\ID) \le
      \ite{\incaccept_\taun^\ID}
      {\instate_\tautt(\sqn_\ue^\ID)}
      {\instate_\taun(\sqn_\hn^\ID)}
    \]
    And using the induction hypothesis, we get that:
    \begin{gather*}
      \left(
        \neg\instate_\tau(\sync_\ue^\ID)
        \wedge \supitr{\tautt,\tau}{\taut}
        \wedge \ftr{\taux',\taux}{\taun',\taun}
      \right)
      \;\ra\;
      \cstate_\taun(\sqn_\hn^\ID)
      \le \instate_\tautt(\sqn_\ue^\ID)
    \end{gather*}
    Hence:
    \begin{gather*}
      \left(
        \neg\instate_\tau(\sync_\ue^\ID) \wedge
        \supitr{\tautt,\tau}{\taut} \wedge
        \cstate_{\taun}(\sqn_\hn^\ID) \ne \instate_{\taun}(\sqn_\hn^\ID)
      \right)
      \;\ra\;
      \cstate_\taun(\sqn_\hn^\ID)
      \le \instate_\tautt(\sqn_\ue^\ID)
    \end{gather*}
    From which we deduce, using the induction hypothesis, that:
    \begin{gather*}
      \left(
        \neg\instate_\tau(\sync_\ue^\ID) \wedge
        \supitr{\tautt,\tau}{\taut}
      \right)
      \;\ra\;
      \cstate_\taun(\sqn_\hn^\ID)
      \le \instate_\tautt(\sqn_\ue^\ID)
    \end{gather*}
  \end{itemize}
\end{itemize}

%%% Local Variables:
%%% mode: latex
%%% TeX-master: "main"
%%% End:

\newpage
\section{Unlinkability}
\label{section:app-unlink}
The goal of this section is to prove unlinkability of the $\faka$ protocol. To do this, we need, for every valid basic symbolic trace $\tau$, to show that there exists a derivation of $\cframe_\tau \sim \cframe_{\ufresh{\tau}}$. We show this by induction on $\tau$.

\subsection{Resistance against de-synchronization attacks}
To show that the $\suci$ protocol guarantees unlinkability, we need the protocol the be resilient to de-synchronization attacks: for every agent $\ID$, the adversary should not be able to keep $\ID$ synchronized in the left protocol, while de-synchronizing $\nu_\tau(\ID)$ in the right protocol.

Therefore, we need the range check on the sequence number to hold on the left iff the range check hold on the right. More precisely, for every identity $\ID$ and the matching identity $\nu_{\tau}(\ID)$ on the right, the range checks on the sequence numbers should be indistinguishable:
\begin{equation}
  \range
  {\cstate_\tau(\sqn_\ue^\ID)}
  {\cstate_\tau(\sqn_\hn^\ID)}
  \;\;\sim\;\;
  \range
  {\cstate_{\utau}(\sqn_\ue^{\nu_{\tau}(\ID)})}
  {\cstate_{\utau}(\sqn_\hn^{\nu_{\tau}(\ID)})}
  \label{eq:range-ex}
\end{equation}
But the property above is not a invariant of the $\faka$ protocol for two reasons:
\begin{itemize}
\item First, knowing that the range checks are indistinguishable after a
  symbolic execution $\tau$ is not enough to show that they are
  indistinguishable after $\tau_1 = \tau,\ai$ (for some $\ai$). For
  example, take a model where $\range{u}{v}$ is implemented as a check
  that the difference between $u$ and $v$ lies in some interval:
  \[
    \sem{\range{u}{v}} \text{ if and only if }
    \sem{u} - \sem{v} \in \{0,\dots,D\}
  \]
  for some constant $D > 0$, and $\sqnsuc$ is implemented as an increment by one. Then, a priori, we may have:
  \begin{gather*}
    \sem{\cstate_\tau(\sqn_\ue^\ID)} -
    \sem{\cstate_\tau(\sqn_\hn^\ID)} = 0 \in \{0,\dots,D\}\\
    \sem{\cstate_{\utau}(\sqn_\ue^{\nu_{\tau}(\ID)})} -
    \sem{\cstate_{\utau}(\sqn_\hn^{\nu_{\tau}(\ID)})} = D \in \{0,\dots,D\}
  \end{gather*}

  While \eqref{eq:range-ex} holds for $\tau$, it does not hold for $\tau_1 = \tau,\npuai{1}{\ID}{j}$. Indeed, after executing $\npuai{1}{\ID}{j}$ we have:
  \begin{gather*}
    \sem{\cstate_{\tau_1}(\sqn_\ue^\ID)} -
    \sem{\cstate_{\tau_1}(\sqn_\hn^\ID)} = 1 \in \{0,\dots,D\}\\
    \sem{\cstate_{\ufresh{\tau_1}}(\sqn_\ue^{\nu_{\tau_1}(\ID)})} -
    \sem{\cstate_{\ufresh{\tau_1}}(\sqn_\hn^{\nu_{\tau_1}(\ID)})} = D + 1
    \not \in \{0,\dots,D\}
  \end{gather*}
  To avoid this, we require that $\range{\_}{\_}$ and $\sqnsuc(\_)$ are implemented as, respectively, an equality check and an integer by-one increment.

  Moreover, we strengthen the induction property to show that the difference between the sequence numbers are indistinguishable, i.e.:
  \begin{equation}
    \idiff
    {\cstate_\tau(\sqn_\ue^\ID)}
    {\cstate_\tau(\sqn_\hn^\ID)}
    \;\;\sim\;\;
    \idiff
    {\cstate_{\utau}(\sqn_\ue^{\nu_{\tau}(\ID)})}
    {\cstate_{\utau}(\sqn_\hn^{\nu_{\tau}(\ID)})}
    \label{eq:diff-inv}
  \end{equation}

\item Second, the property in \eqref{eq:diff-inv} actually does not always hold: after a $\newsession_\ID(\_)$ action, the agent $\ID$ and the network may be synchronized on the left (if, e.g., the $\supi$ protocol has just been successfully executed), but $\nu_{\tau}(\ID)$ is not synchronized with the network.

  Even though the property does not hold, there is no attack on unlinkability. Indeed a desynchronization attack would need the $\suci$ protocol to succeed on the left and fail on the right. But the $\suci$ protocol requires that a fresh $\suci$ has been established between $\ID$ (resp. $\nu_{\tau}(\ID)$) and the network. This can only be achieved through a honest execution of the $\supi$ protocol. As such a execution will re-synchronized the agent and the network sequence numbers \emph{on both side}, there is no attack.

  To model this, we extend the symbolic state with a new boolean variable, $\sync_\ue^\ID$, that records whether there was a successful execution of the $\supi$ protocol with agent $\ID$ since the last $\newsession_\ID(\_)$. This variable is only here for proof purposes, and is never used in the actual protocol. We can then state the synchronization invariant:
  \[
    \underbrace{\begin{alignedat}{2}
        \ite{\cstate_\tau(\sync_\ue^\ID)&}
        {\idiff
          {\cstate_\tau(\sqn_\ue^\ID)}
          {\cstate_\tau(\sqn_\hn^\ID)}\\ &}
        {\bot}
      \end{alignedat}}_{\syncdiff_\tau^\ID}
    \;\;\sim\;\;
    \underbrace{\begin{alignedat}{2}
        \ite{\cstate_\utau(\sync_\ue^{\nu_{\tau}(\ID)})&}
        {\idiff
          {\cstate_{\utau}(\sqn_\ue^{\nu_{\tau}(\ID)})}
          {\cstate_{\utau}(\sqn_\hn^{\nu_{\tau}(\ID)})}\\ &}
        {\bot}
      \end{alignedat}}_{\syncdiff_\utau^{\nu_\tau(\ID)}}
  \]
\end{itemize}

\subsection{Strengthened induction hypothesis}

\begin{definition}
  Let $L = (i_1,\dots,i_l)$ be a list of indexes, and $(b_i)_{i \in L}$, $(t_i)_{i \in L}$ two list of terms. Then:
  \begin{equation*}
    \switch{i \in L}{(b_i)_{i \in L}}{(m_i)_{i \in L}} \;\equiv\;
    \begin{dcases*}
      \ite{b_{i_1}}{m_{i_1}}{\switch{i \in L_0}{(b_i)_{i \in L_0}}{(m_i)_{i \in L_0}}}
      & when $L \ne \emptyset$ and $L_0 = (i_2,\dots,i_l)$\\
      \bot & otherwise
    \end{dcases*}
  \end{equation*}
  We will often abuse notation, and write $\switch{i \in L}{b_i}{m_i}$ instead of $\switch{i \in L}{(b_i)_{i \in L}}{(m_i)_{i \in L}}$.
\end{definition}

\begin{proposition}
  \label{prop:case}
  Let $L = (i_1,\dots,i_l)$ be a list of indexes, and $(b_i)_{i \in L}$, $(t_i)_{i \in L}$ two list of terms. If $(b_i)_{i \in L}$ is a $\cs$ partition, then for any permutation $\pi$ of $\{1,\dots,l\}$, if we let $L_\pi = (i_{\pi(1)},\dots,i_{\pi{l}})$ then:
  \[
    \switch{i \in L}{b_i}{m_i} \;=\;
    \switch{i \in L_\pi}{b_i}{m_i}
  \]
  When that is the case, we write $\switch{i \in \{i_1,\dots,i_l\}}{b_i}{m_i}$ (i.e. we use a set notation instead of list notation).
\end{proposition}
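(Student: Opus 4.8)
The plan is to prove this by induction on $l = |L|$, reducing an arbitrary permutation $\pi$ to a product of adjacent transpositions and then reducing a single adjacent transposition to a two-element commutation identity for $\symite$, which we settle from the pairwise-disjointness part of the $\cs$-partition hypothesis together with the axioms $\axioms_{\textsf{ite}}$. Note that the exhaustiveness clause $\bigvee_i b_i \peq \true$ of the $\cs$-partition definition will not be needed; only the disjointness clause, which for any two distinct components gives $b_i \wedge b_j \peq \false$, is used.

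Since every element of the symmetric group on $\{1,\dots,l\}$ is a product of adjacent transpositions, by transitivity of $\peq$ (axiom $\trans$, used through $R$) it suffices to prove the statement when $\pi$ swaps the $k$-th and $(k{+}1)$-th entries of $L$. Spelling out $\switch{i \in L}{b_i}{m_i}$ and $\switch{i \in L_\pi}{b_i}{m_i}$, the two terms coincide everywhere except in one subterm: one contains $\ite{b_{i_k}}{m_{i_k}}{\ite{b_{i_{k+1}}}{m_{i_{k+1}}}{S}}$ while the other contains $\ite{b_{i_{k+1}}}{m_{i_{k+1}}}{\ite{b_{i_k}}{m_{i_k}}{S}}$, with the common tail $S \equiv \switch{i \in (i_{k+2},\dots,i_l)}{b_i}{m_i}$. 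As $\peq$ is a congruence — it propagates under any context, via $\fa$ and $R$ (and $\eq{x}{x}\peq\true$) — the whole proposition reduces to the commutation lemma: for all terms $b,c,x,y,z$ with $b \wedge c \peq \false$,
\[
  \ite{b}{x}{\ite{c}{y}{z}} \;\peq\; \ite{c}{y}{\ite{b}{x}{z}}.
\]
This is exactly applicable here, since $b_{i_k}\wedge b_{i_{k+1}} \peq \false$.

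To prove the commutation lemma I would first extract from $b \wedge c \peq \false$ the conditional-collapse equalities $\cond{c}{b} \peq \cond{c}{\false}$ and $\cond{b}{c} \peq \cond{b}{\false}$, each obtained from the axiom $\ite{b}{(\ite{b}{x}{y})}{z} \peq \ite{b}{x}{z}$ plus one application of $R$; and the auxiliary conditional-rewriting principle that $\cond{c}{s}\peq\cond{c}{t}$ implies $\ite{c}{D[s]}{w} \peq \ite{c}{D[t]}{w}$ for any if-context $D$ — proved by pushing the guard $\ite{c}{\cdot}{\bot}$ down to the leaves of $D$ with the distributivity axioms of $\axioms_{\textsf{ite}}$ (the ones for $f(\dots,\ite{\cdot}{\cdot}{\cdot},\dots)$ and for $\ite{b}{(\ite{a}{\cdot}{\cdot})}{\cdot}$, $\ite{b}{\cdot}{(\ite{a}{\cdot}{\cdot})}$) and then rewriting with $R$. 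With these in hand: starting from $\ite{b}{x}{\ite{c}{y}{z}}$, distribute to $\ite{c}{(\ite{b}{x}{y})}{(\ite{b}{x}{z})}$, then use $\cond{c}{b}\peq\cond{c}{\false}$ inside the then-branch to collapse $\ite{b}{x}{y}$ to $y$ (via $\ite{\false}{x}{y}\peq y$), arriving at $\ite{c}{y}{(\ite{b}{x}{z})}$, which is the required right-hand side. (An equivalent route is to normalise both sides to the common two-level decision tree over $\{b,c\}$ and discard the dead $b\wedge c$ branch.) The main obstacle is entirely this $\symite$ bookkeeping, and in particular a clean justification of the conditional-rewriting principle from $\axioms_{\textsf{ite}}$; once it is available the remainder is mechanical.
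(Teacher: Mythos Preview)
Your proposal is correct and considerably more detailed than the paper's proof, which consists solely of the sentence ``The proof is straightforward by induction over $|L|$.'' Your reduction to adjacent transpositions and the two-element $\symite$ commutation lemma is exactly the natural way to carry out that induction, and your observation that only the disjointness clause of the $\cs$-partition definition is needed (not exhaustiveness) is a genuine refinement the paper does not state.

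One small remark on the derivation of the commutation lemma itself: a slightly more direct route than the conditional-rewriting principle is to exploit axiom (2) of $\axioms_{\textsf{ite}}$ on the hypothesis. From $b\wedge c\peq\false$, i.e.\ $\ite{b}{c}{\false}\peq\false$, one gets for arbitrary $X,Y$
\[
\ite{b}{(\ite{c}{X}{Y})}{Y}\;\peq\;\ite{(\ite{b}{c}{\false})}{X}{Y}\;\peq\;\ite{\false}{X}{Y}\;\peq\;Y,
\]
which is the ``under $b$, the $c$-branch is dead'' fact you need, packaged as a closed equation rather than a guarded one. This avoids having to justify the general conditional-rewriting principle and lets the rest of the argument go through with the distributivity axioms (8)--(9) alone.
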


\begin{proof}
  The proof is straightforward by induction over $|L|$.
\end{proof}

If $(b_i)_{i \in L}$ is such that $(\bigvee_{i \in L} b_i) = \true$ then the case where all tests fail and we return $\bot$ never happens. This motivates the introduction of a second definition.

\begin{definition}
  Let $L = (i_1,\dots,i_l)$ be a list of indexes with $l\ge 1$, and $(b_i)_{i \in L}$, $(t_i)_{i \in L}$ two list of terms. Then:
  \begin{equation*}
    \sswitch{i \in L}{(b_i)_{i \in L}}{(m_i)_{i \in L}} \;\equiv\;
    \begin{dcases*}
      \ite{b_{i_1}}{m_{i_1}}{\switch{i \in L_0}{(b_i)_{i \in L_0}}{(m_i)_{i \in L_0}}}
      & when $L_0 = (i_2,\dots,i_l)$ and $l \ge 1$\\
      m_1 & if $l = 1$
    \end{dcases*}
  \end{equation*}
\end{definition}

\begin{proposition}
  For every list of terms $(b_i)_{i \in L}$ and $(t_i)_{i \in L}$, if  $(\bigvee_{i \in L} b_i) = \true$ then:
  \[
    \switch{i \in L}{b_i}{m_i} = \sswitch{i \in L}{b_i}{m_i}
  \]
\end{proposition}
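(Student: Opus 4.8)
The plan is to prove this by induction on the length $l = |L|$, after strengthening the statement so that the inductive hypothesis can be invoked inside a conditional. Precisely, I would first establish the following more general claim: for every boolean term $\beta$ and every list $L = (i_1,\dots,i_l)$ such that $\cond{\beta}{(\bigvee_{i \in L} b_i)} \peq \cond{\beta}{\true}$ (recall $\cond{b}{t} \equiv \ite{b}{t}{\bot}$), one has $\cond{\beta}{\switch{i \in L}{b_i}{m_i}} \peq \cond{\beta}{\sswitch{i \in L}{b_i}{m_i}}$. The proposition is then the instance $\beta \equiv \true$, since $\cond{\true}{t} \peq t$ by the $\symite$ axioms of $\axioms_{\textsf{ite}}$.

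For the base case $l = 1$ we have $\switch{i \in L}{b_i}{m_i} \equiv \ite{b_{i_1}}{m_{i_1}}{\bot}$ and $\sswitch{i \in L}{b_i}{m_i} \equiv m_{i_1}$, while the hypothesis reads $\cond{\beta}{b_{i_1}} \peq \cond{\beta}{\true}$ (as $\bigvee_{i \in L} b_i$ is $b_{i_1}$). Using the $R$ rule to rewrite $b_{i_1}$ to $\true$ under $\beta$, together with $\ite{\true}{x}{y} \peq x$ and the distributivity/idempotence equalities of $\axioms_{\textsf{ite}}$, one obtains $\cond{\beta}{\ite{b_{i_1}}{m_{i_1}}{\bot}} \peq \cond{\beta}{m_{i_1}}$. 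For the inductive step $l \ge 2$, write $L_0 = (i_2,\dots,i_l)$; both $\switch$ and $\sswitch$ peel off the same head $\ite{b_{i_1}}{m_{i_1}}{\cdot}$, so it suffices to compare the else-branches $\switch{i \in L_0}{b_i}{m_i}$ and $\sswitch{i \in L_0}{b_i}{m_i}$. From the hypothesis we derive the valid implication $\neg b_{i_1} \ra (\bigvee_{i \in L_0} b_i)$, hence $\cond{\beta \wedge \neg b_{i_1}}{(\bigvee_{i \in L_0} b_i)} \peq \cond{\beta \wedge \neg b_{i_1}}{\true}$; applying the inductive hypothesis with the refined guard $\beta \wedge \neg b_{i_1}$ yields $\cond{\beta \wedge \neg b_{i_1}}{\switch{i \in L_0}{b_i}{m_i}} \peq \cond{\beta \wedge \neg b_{i_1}}{\sswitch{i \in L_0}{b_i}{m_i}}$. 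It then remains to lift this equality through the common head and the guard $\beta$, which I would do with the $\symite$ distributivity axioms (pushing $\beta$ and $b_{i_1}$ inside, using $\ite{b}{x}{(\ite{b}{y}{z})} \peq \ite{b}{x}{z}$ and $\ite{b}{(\ite{a}{x}{y})}{z} \peq \ite{a}{(\ite{b}{x}{z})}{(\ite{b}{y}{z})}$) — equivalently, by a single application of the $\cs$ axiom.

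The only delicate point is this last lifting step: the inductive hypothesis tells us the two tails agree only under $\beta \wedge \neg b_{i_1}$, and one has to check that this suffices, since the else-branch of $\ite{b_{i_1}}{m_{i_1}}{\cdot}$ is semantically irrelevant when $b_{i_1}$ holds. This is routine $\symite$ bookkeeping with no real difficulty. I would also remark that if the recursion in the definition of $\sswitch$ is read literally (the else-branch recursing through $\switch$ rather than $\sswitch$), then for $l \ge 2$ the terms $\switch{i \in L}{b_i}{m_i}$ and $\sswitch{i \in L}{b_i}{m_i}$ are already syntactically identical, so only the base case $l = 1$ — handled above — actually uses the hypothesis $(\bigvee_{i \in L} b_i) \peq \true$; the argument above subsumes this reading as well.
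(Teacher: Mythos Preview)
Your proof is correct. The paper itself omits the proof entirely, so there is no approach to compare against; your induction with a strengthened guard is a clean way to handle it, and your closing remark is on point: as written, $\sswitch$ recurses through $\switch$ in its else-branch, so for $|L|\ge 2$ the two terms are syntactically identical and only the base case $|L|=1$ genuinely uses the hypothesis $(\bigvee_{i\in L} b_i)\peq\true$.
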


\begin{proof}
  We omit the proof.
\end{proof}

\begin{definition}
  \label{def:app-ind-gen}
  Let $\tau = \ai_0,\dots,\ai_n$ be a valid basic symbolic trace. Then $\reveal_{\tau}$ is a list of elements of the form $u \sim v$, where $u,v$ are terms, representing the information that can be safely leaked to the adversary. Let $\ai = \ai_n$. Then $\reveal_{\tau}$ contains exactly the following list of elements:
  \begin{enumerate}
  \item All the elements from $\reveal_{\tauo}$, where $\tauo = \ai_0,\dots,\ai_{n-1}$.
  \item \label{item:ref} For every base identity $\ID$, let:
    \[
      \msuci^\ID_\tau \;\equiv\;
      \cond{\cstate_\tau(\success_\ue^{\ID})}
      {\cstate_\tau(\suci_\ue^\ID)}
    \]
    We then have the following synchronization invariants.
    \begin{mathpar}
      \cstate_\tau(\success_\ue^{\ID})
      \;\sim\;
      \cstate_\utau(\success_\ue^{\nu_\tau(\ID)})

      \msuci^\ID_\tau
      \;\sim\;
      \msuci^{\nu_\tau(\ID)}_\utau

      \cstate_\tau(\sync_\ue^{\ID})
      \;\sim\;
      \cstate_\utau(\sync_\ue^{\nu_\tau(\ID)})

      \syncdiff_\tau^{\ID}
      \;\sim\;
      \syncdiff_\utau^{\nu_\tau(\ID)}

      \length(\spair{\ID}{\instate_{\tau}(\sqnini_\ue^\ID)})
      \;\sim\;
      \length(\spair{\ID}{\instate_{\tau}(\sqnini_\ue^\ID)})
    \end{mathpar}

  \item If $\ai \ne \newsession_{\_}(\_)$ then for every base identity $\ID$:
    \[
      \cstate_\tau(\sqn_\ue^\ID) -
      \instate_\tau(\sqn_\ue^\ID)
      \;\;\sim\;\;
      \cstate_\utau(\sqn_\ue^{\nu_\tau(\ID)}) -
      \instate_\utau(\sqn_\ue^{\nu_\tau(\ID)})
    \]

  \item If $\ai = \cuai_\ID(j,0)$, then:
    \[
      \cstate_\tau(\uetsuccess^{\ID})
      \;\sim\;
      \cstate_\utau(\uetsuccess^{\nu_\tau(\ID)})
    \]

  \item If $\ai = \npuai{1}{\ID}{j}$, then:
    \begin{alignat*}{2}
      \enc{\spair
        {\ID}
        {\instate_{\tau}(\sqn_\ue^\ID)}}
      {\pk_\hn}{\enonce^{j}}
      & \;\;\sim\;\;&&
      \enc{\spair
        {\nu_\tau(\ID)}
        {\instate_{\tau}(\sqn_\ue^{\nu_\tau(\ID)})}}
      {\pk_\hn}{\enonce^{j}}\\
      \mac{\spair
        {\enc{\spair
            {\ID}
            {\instate_{\tau}(\sqn_\ue^\ID)}}
          {\pk_\hn}{\enonce^{j}}}
        {g(\inframe_{\tau})}}
      {\mkey^\ID}{1}
      & \;\;\sim\;\;&&
      \mac{\spair
        {\enc{\spair
            {\nu_\tau(\ID)}
            {\instate_{\utau}(\sqn_\ue^{\nu_\tau(\ID)})}}
          {\pk_\hn}{\enonce^{j}}}
        {g(\inframe_{\utau})}}
      {\mkey^{\nu_\tau(\ID)}}{1}
    \end{alignat*}

  \item If $\ai = \npuai{2}{\ID}{\_}$, $\cuai_\ID(\_,1)$ or $\fuai_\ID(\_)$:
    \begin{alignat*}{2}
      \cstate_\tau(\eauth_\ue^\ID)
      & \;\;\sim\;\;&&
      \cstate_\utau(\eauth_\ue^{\nu_\tau(\ID)})
    \end{alignat*}

  \item If $\cuai_\ID(j,1)$ then for every $\taut = \_,\cnai(j_0,0)$ such that $\cuai_\ID(j,0) \potau \taut$:% and $\cncai(j_0,1) \not \potau \tau$:
    \[
      \mac{\nonce^{j_0}}{\mkey^\ID}{4}
      \;\;\sim\;\;
      \mac{\nonce^{j_0}}{\mkey^{\nu_\tau(\ID)}}{4}
    \]

  \item If $\ai = \pnai(j,1)$ then for every base identity $\ID$, for every $\taut = \_, \npuai{1}{\ID}{j_1} \popre \tau$ such that $\taut \not \potau \ns_\ID(\_)$ we have:
    \begin{alignat*}{2}
      \mac{\spair
        {\nonce^{j}}
        {\sqnsuc(\instate_\taut(\sqn_\ue^\ID))}}
      {\mkey^\ID}{2}
      & \;\;\sim\;\;&&
      \mac{\spair
        {\nonce^{j}}
        {\sqnsuc(\instate_\utaut(\sqn_\ue^{\nu_\tau(\ID)}))}}
      {\mkey^{\nu_\tau(\ID)}}{2}
    \end{alignat*}

  \item If $\ai = \pnai(j,1)$ or $\ai = \cnai(j,1)$, for all base identity $\ID$, we let:
    \begin{alignat*}{2}
      \neauth_\tau(\ID,j) &\;\;\equiv\;\;&&
      \eq{\cstate_\tau(\eauth_\hn^{j})}{\ID}\\
      \uneauth_\utau(\ID,j) &\;\;\equiv\;\;&&
      \bigvee_{\uID\in \copyid(\ID)}\eq{\cstate_\utau(\eauth_\hn^{j})}{\uID}
    \end{alignat*}
    Then we ask that:
    \begin{alignat*}{2}
      \neauth_\tau(\ID,j)
      &\;\;\sim\;\;&&
      \uneauth_\utau(\ID,j)
    \end{alignat*}

  \item If $\ai = \fnai(j)$ for every base identity $\ID$ we let
    $\{\uID_1,\dots,\uID_{l_\ID}\} = \copyid(\ID)$. We define:
    \begin{alignat*}{2}
      \tsuci_\tau(\ID,j)
      &\;\;\equiv\;\;&&
      \suci^j
      \oplus \row{\nonce^{j}}{\key^{\ID}}\\
      \utsuci_\utau(\ID,j)
      &\;\;\equiv\;\;&&
      \sswitch{1 \le i \le l_\ID}
      {\eq{\cstate_\utau(\eauth_\hn^{j})}{\uID_i}}
      {\suci^j\oplus\row{\nonce^{j}}{\key^{\uID_i}}}\\
      \tmac_\tau(\ID,j)
      &\;\;\equiv\;\;&&
      \mac{\spair{\suci^j}{\nonce^j}}{\mkey^{\ID}}{5}\\
      \utmac_\utau(\ID,j)
      &\;\;\equiv\;\;&&
      \sswitch{1 \le i \le l_\ID}
      {\eq{\cstate_\utau(\eauth_\hn^{j})}{\uID_i}}
      {\mac{\spair{\suci^j}{\nonce^j}}{\mkey^{\uID_i}}{5}}
    \end{alignat*}

    % \begin{alignat*}{2}
    %   \tsuci_\tau(\ID,j)
    %   &\;\;\equiv\;\;&&
    %   \suci^j
    %   \oplus \row{\nonce^{j}}{\key^{\ID}}\\
    %   \utsuci_\utau(\ID,j)
    %   &\;\;\equiv\;\;&&
    %   \begin{alignedat}[t]{2}
    %     &\ite{
    %     \eq{\cstate_\utau(\eauth_\hn^{j})}{\uID_1}
    %   }
    %     {\suci^j
    %     \oplus \row{\nonce^{j}}{\key^{\uID_1}}\\ & \!\!}
    %     {
    %     \ite{
    %     \eq{\cstate_\utau(\eauth_\hn^{j})}{\uID_2}
    %   }
    %     {\suci^j
    %     \oplus \row{\nonce^{j}}{\key^{\uID_2}}
    %     \\ & \qquad \cdots \\ &\!\! }
    %     { \suci^j
    %     \oplus \row{\nonce^{j}}{\key^{\uID_l}}}
    %   }
    %   \end{alignedat}\displaybreak[0]\\[1em]
    %   \tmac_\tau(\ID,j)
    %   &\;\;\equiv\;\;&&
    %   \mac{\spair{\suci^j}{\nonce^j}}{\mkey^{\ID}}{5}\\
    %   \utmac_\utau(\ID,j)
    %   &\;\;\equiv\;\;&&
    %   \begin{alignedat}[t]{2}
    %     &\ite{
    %     \eq{\cstate_\utau(\eauth_\hn^{j})}{\uID_1}
    %   }
    %     {\mac{\spair{\suci^j}{\nonce^j}}{\mkey^{\uID_1}}{5}\\ & \!\!}
    %     {
    %     \ite{
    %     \eq{\cstate_\utau(\eauth_\hn^{j})}{\uID_2}
    %   }
    %     {\mac{\spair{\suci^j}{\nonce^j}}{\mkey^{\uID_2}}{5}
    %     \\ & \qquad \cdots \\ &\!\! }
    %     { \mac{\spair{\suci^j}{\nonce^j}}{\mkey^{\uID_l}}{5}}
    %   }
    %   \end{alignedat}
    % \end{alignat*}
    Then we ask that:
    \begin{alignat*}{2}
      \suci^j
      & \;\;\sim\;\;&&
      \suci^j\\
      \cond{\neauth_\tau(\ID,j)}
      {\left(\tsuci_\tau(\ID,j)\right)}
      & \;\;\sim\;\;&&
      \cond{\uneauth_\utau(\ID,j)}
      {\left(\utsuci_\utau(\ID,j)\right)}\\
      \cond{\neauth_\tau(\ID,j)}
      {\left(\tmac_\tau(\ID,j)\right)}
      & \;\;\sim\;\;&&
      \cond{\uneauth_\utau(\ID,j)}
      {\left(\utmac_\utau(\ID,j)\right)}
    \end{alignat*}
  \end{enumerate}
  Let $(u_i \sim v_i)_{i \in I}$ be such that $\reveal_{\tau} = (u_i \sim v_i)_{i \in I}$. Then we let $\lreveal_{\tau} = (u_i)_{i \in I}$ be the list of left elements of $\reveal_{\tau}$, and $\rreveal_{\tau} = (v_i)_{i \in I}$ list of left elements of $\reveal_{\tau}$ (in the same order).
\end{definition}

\begin{proposition}
  \label{prop:derivation}
  For every basic valid symbolic trace $\tau = \_,\ai$:
  \begin{itemize}
  \item \customlabel{der1}{\lpder{1}} For every base identity, for every $\taut$ such that $\taut \popre \tau$ and $\taut \not \potau \newsession_\ID(\_)$, there exist derivations using only $\fa$ and $\dup$ of:
    % D0
    \begin{gather*}
      \begin{gathered}[c]
        \infer[\simp]{
          \begin{alignedat}{2}
            &&&\lreveal_{\tauo},
            \instate_\tau(\sync_\ue^\ID) \wedge
            \instate_\tau(\sqn_\hn^\ID) <
            \instate_\taut(\sqn_\ue^\ID)\\
            &\sim\;\;&&
            \rreveal_{\tauo},
            \instate_\utau(\sync_\ue^{\nu_\tau(\ID)}) \wedge
            \instate_\utau(\sqn_\hn^{\nu_\tau(\ID)}) <
            \instate_\utaut(\sqn_\ue^{\nu_\tau(\ID)})
          \end{alignedat}
        }{
          \inframe_\tau, \lreveal_{\tauo}
          \sim
          \inframe_\utau,\rreveal_{\tauo}
        }
      \end{gathered}
      \label{eq:deriv:u-n}\\[1em]
      \begin{gathered}
        \infer[\simp]{
          \begin{alignedat}{2}
            &&&\lreveal_{\tauo},
            \instate_\taut(\sync_\ue^\ID) \wedge
            \instate_\taut(\sqn_\hn^\ID) <
            \instate_\tau(\sqn_\ue^\ID)\\
            &\sim\;\;&&
            \rreveal_{\tauo},
            \instate_\utaut(\sync_\ue^{\nu_\tau(\ID)}) \wedge
            \instate_\utaut(\sqn_\hn^{\nu_\tau(\ID)}) <
            \instate_\utau(\sqn_\ue^{\nu_\tau(\ID)})
          \end{alignedat}
        }{
          \inframe_\tau, \lreveal_{\tauo}
          \sim
          \inframe_\utau,\rreveal_{\tauo}
        }
      \end{gathered}
      \label{eq:deriv:n-u}
    \end{gather*}
  \item \customlabel{der2}{\lpder{2}} If $\ai = \fuai_\ID(j)$. For every $\ID \in \iddom$, for every $\taut = \_,\fnai(j_0) \popre \tau$ such that $\taut \not \potau \ns_\ID(\_)$:
    % (D3)
    \begin{itemize}
    \item We have $\utaut = \_,\fnai(j_0)$, $\utau = \_,\fuai_{\nu_\tau(\ID)}(j)$, $\utaut \poutau \utau$ and $\utaut \not \poutau \ns_{\nu_\tau(\ID)}(\_)$. Therefore, $\futr{\utau}{\utaut}$ is well-defined.
    \item There is a derivation of the form:
      \[
        \infer[\simp]{
          \inframe_\tau,
          \lreveal_{\tauo},
          \futr{\tau}{\taut}
          \;\sim\;
          \inframe_\utau,
          \rreveal_{\tauo},
          \futr{\utau}{\utaut}
        }{
          \inframe_\tau, \lreveal_{\tauo}
          \sim
          \inframe_\utau,\rreveal_{\tauo}
        }
      \]
    \end{itemize}

  \item \customlabel{der3}{\lpder{3}} If $\ai = \cuai_\ID(j,1)$. For every $\taut = \_,\cnai(j_1,0)$, $\tautt = \_,\cuai_\ID(j,0)$ such that $\tautt \potau \taut$:
    % (D1)
    \begin{center}
      \begin{tikzpicture}
        [dn/.style={inner sep=0.2em,fill=black,shape=circle},
        sdn/.style={inner sep=0.15em,fill=white,draw,solid,shape=circle},
        sl/.style={decorate,decoration={snake,amplitude=1.6}},
        dl/.style={dashed},
        pin distance=0.5em,
        every pin edge/.style={thin}]

        \draw[thick] (0,0)
        node[left=1.3em] {$\tau:$}
        -- ++(0.5,0)
        node[dn,pin={above:{$\cuai_\ID(j,0)$}}] {}
        node[below=0.3em]{$\tautt$}
        -- ++(2.5,0)
        node[dn,pin={above:{$\cnai(j_1,0)$}}] {}
        node[below=0.3em]{$\taut$}
        -- ++(2.5,0)
        node[dn,pin={above:{$\cuai_\ID(j,1)$}}] {}
        node[below=0.3em]{$\tau$};
      \end{tikzpicture}
    \end{center}
    \begin{itemize}
    \item We have $\utautt = \_,\cuai_{\nu_\tau(\ID)}(j,0)$, $\utaut = \_,\cuai_{\nu_\tau(\ID)}(j,1)$ and $\utautt \poutau \utaut \poutau \utau$. Therefore, $\ptr{\utautt,\utau}{\utaut}$ is well-defined.
    \item There is a derivation of the form:
      \[
        \infer[\simp]{
          \inframe_\tau,
          \lreveal_{\tauo},
          \ptr{\tautt,\tau}{\taut}
          \;\sim\;\
          \inframe_\utau,
          \rreveal_{\tauo},
          \ptr{\utautt,\utau}{\utaut}
        }{
          \inframe_\tau, \lreveal_{\tauo}
          \sim
          \inframe_\utau,\rreveal_{\tauo}
        }
      \]
    \end{itemize}

  \item \customlabel{der4}{\lpder{4}}  If $\ai = \cnai(j,1)$. For every $\ID \in \iddom$, $\taui = \_,\cuai_\ID(j_i,1)$, $\taut = \_,\cnai(j,0)$, $\tautt = \_,\cuai_\ID(j_i,0)$ such that $\tautt \potau \taut \potau \taui$:
    % (D2)
    \begin{center}
      \begin{tikzpicture}
        [dn/.style={inner sep=0.2em,fill=black,shape=circle},
        sdn/.style={inner sep=0.15em,fill=white,draw,solid,shape=circle},
        sl/.style={decorate,decoration={snake,amplitude=1.6}},
        dl/.style={dashed},
        pin distance=0.5em,
        every pin edge/.style={thin}]

        \draw[thick] (0,0)
        node[left=1.3em] {$\tau:$}
        -- ++(0.5,0)
        node[dn,pin={above:{$\cuai_\ID(j_i,0)$}}] {}
        node[below=0.3em]{$\tautt$}
        -- ++(2.5,0)
        node[dn,pin={above:{$\cnai(j,0)$}}] {}
        node[below=0.3em]{$\taut$}
        -- ++(2.5,0)
        node[dn,pin={above:{$\cuai_\ID(j_i,1)$}}] {}
        node[below=0.3em]{$\taui$}
        -- ++(2.5,0)
        node[dn,pin={above:{$\cnai(j,1) $}}] {}
        node[below=0.3em]{$\tau$};
      \end{tikzpicture}
    \end{center}
    \begin{itemize}
    \item We have $\utautt = \_,\cuai_{\nu_\taut(\ID)}(j_i,0)$, $\utaui = \_,\cuai_{\nu_\taut(\ID)}(j_i,1)$ and $\utautt \poutau \utaut \poutau \utaui \poutau \utau$. Therefore, $\ftr{\utautt,\utaui}{\utaut,\utau}$ is well-defined.
    \item There is a derivation of the form:
      \[
        \infer[\simp]{
          \inframe_\tau,
          \lreveal_{\tauo},
          \ftr{\tautt,\taui}{\taut,\tau}
          \sim\;\;
          \inframe_\utau,
          \rreveal_{\tauo},
          \ftr{\utautt,\utaui}{\utaut,\utau}
        }{
          \inframe_\tau, \lreveal_{\tauo}
          \sim
          \inframe_\utau,\rreveal_{\tauo}
        }
      \]
    \end{itemize}

  \end{itemize}
\end{proposition}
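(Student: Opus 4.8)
The plan is to treat all four derivations as \emph{purely structural} reductions to the induction hypothesis $\inframe_\tau,\lreveal_{\tauo}\sim\inframe_\utau,\rreveal_{\tauo}$: by the time one reaches this proposition, every cryptographic and arithmetic fact has already been absorbed either into the definition of $\reveal_\tau$ (Definition~\ref{def:app-ind-gen}) or into the acceptance characterizations of Section~\ref{section:acc-charac-app-first}, so that only shuffling of function applications remains. For each item the idea is to exhibit a single context $D$, built from protocol function symbols only and containing no name, such that the extra term on the left of the target $\sim$ is $D$ applied to subterms already occurring in $\inframe_\tau,\lreveal_{\tauo}$, and the extra term on the right is the \emph{same} $D$ applied to the matching subterms of $\inframe_\utau,\rreveal_{\tauo}$. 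Once such a $D$ is found, the derivation is mechanical: $\fac$/$\fa$ peels off $D$ layer by layer, $\dup$ shares the repeated leaves, and (for \ref{der2}--\ref{der4}) $R$ rewrites the target into context form; nothing else is used, so the proof tree sits exactly above $\inframe_\tau,\lreveal_{\tauo}\sim\inframe_\utau,\rreveal_{\tauo}$. That the \emph{same} $D$ works on both sides is forced by the shape of $\ufresh{\tau}$: replacing $\ID$ by $\nu_\tau(\ID)$ never touches the control flow, and $\reveal_\tau$ was laid out symmetrically --- but this coincidence must be checked action by action.

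For \ref{der2}--\ref{der4} the relevant subterms are read off directly from Lemma~\ref{lem:strong-charac}, which rewrites $\futr{\tau}{\taut}$ (resp.\ $\ptr{\tautt,\tau}{\taut}$, $\ftr{\tautt,\taui}{\taut,\tau}$) into a conjunction whose atoms are of three kinds. First, equalities $\pi_k(g(\inframe_\tau))=w$ where $w$ is an honestly produced component --- a mac $\mac{\_}{\mkey^\ID}{i}$, a masked identity $\suci^{j}\oplus\row{\_}{\key^\ID}$, or an encryption $\enc{\spair{\ID}{\instate_{\_}(\sqn_\ue^\ID)}}{\pk_\hn}{\enonce^{\_}}$; each such $w$ is declared verbatim in $\reveal_\tau$ (items 5, 7--10), matched on the right by its $\nu_\tau(\ID)$-renamed version. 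Second, authentication predicates $\injauth_\tau(\ID,\_)$ or $\eq{\instate_\tau(\eauth_\hn^{\_})}{\ID}$, which by Proposition~\ref{prop:injauth-charac} and the equalities just listed reduce to contexts over $\neauth_\tau(\ID,\_)$ and the $\eauth_\ue^\ID$/$\bauth_\ue^\ID$ values --- exactly the terms related by $\reveal_\tau$ (items 6, 9). Third, range checks $\range{\instate_\tau(\sqn_\ue^\ID)}{\instate_\taut(\sqn_\hn^\ID)}$ and the flags $\instate_\tau(\uetsuccess^\ID)$, $\instate_\tautt(\success_\ue^\ID)$ and the item-3 sequence-number differences, handled through items 2--4. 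Since each conjunct is a context over revealed leaves, so is the whole transcript formula, and $\simp$ finishes \ref{der2}--\ref{der4}.

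For \ref{der1} the extra term is the $\sync$-guarded inequality $\instate_\tau(\sync_\ue^\ID)\wedge\instate_\tau(\sqn_\hn^\ID)<\instate_\taut(\sqn_\ue^\ID)$, and the task reduces to checking that, written out, this boolean is already a function-only context over subterms occurring in $\inframe_\tau$ and $\lreveal_{\tauo}$ --- no rewriting, hence only $\fa$ and $\dup$. This holds because $\instate_\tau=\cstate_\tauo$, so $\instate_\tau(\sync_\ue^\ID)$ is a component of $\reveal_{\tauo}$; the guarded difference $\instate_\tau(\sqn_\ue^\ID)-\instate_\tau(\sqn_\hn^\ID)$ sits inside $\syncdiff_\tauo^\ID\in\lreveal_{\tauo}$; and $\instate_\taut(\sqn_\ue^\ID)-\instate_\tau(\sqn_\ue^\ID)$ telescopes over the item-3 differences along $\taut\popreleq\tau'\popreleq\tauo$, all of which lie in $\lreveal_{\tauo}$. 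The characterization \ref{sequ4} is what guarantees that this boolean enters $t_\tau$ in exactly this shape, while the monotonicity facts \ref{b4}, \ref{b5} and the axioms $\axioms_\sqn$ make the telescoping meaningful. The right side is the same context over the corresponding revealed leaves, and the second displayed derivation of \ref{der1} is obtained by interchanging $\tau$ and $\taut$.

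The hard part will be the sequence-number bookkeeping in \ref{der4} (and, to a lesser degree, in \ref{der3}): the transcript $\ftr{\tautt,\taui}{\taut,\tau}$ reaches across both a $\guti$-session and the closing $\cnai$-session, so it mentions $\instate$-values of $\sqn_\ue^\ID$ and $\sqn_\hn^\ID$ at several traces whose absolute values are not revealed --- only their differences are. The plan there is to massage the range check and the $\sqnsuc$ updates occurring inside $\ftr{\tautt,\taui}{\taut,\tau}$, using \ref{sequ4}'s "difference is zero" statement together with the $\syncdiff$ invariant of $\reveal_\tau$, into the normal form $(\text{revealed difference})\peq\zero$ (via $\range{u}{v}\peq\eq{u}{v}$); only once the formula is in this form does one read off $D$ and invoke $\simp$. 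A second, more bureaucratic obstacle is the action-by-action verification that the left and right contexts genuinely coincide, which is where the basicness of $\tau$ and the precise recursive definitions of $\ufresh{\tau}$ and $\reveal_\tau$ are used; this is routine but long.
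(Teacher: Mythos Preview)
Your high-level strategy is right, and your treatment of \ref{der1} and \ref{der2} is essentially what the paper does; \ref{der4} indeed reduces to \ref{der3} plus the $\macsym^4$ pair already in $\reveal_{\tauo}$. But your account of \ref{der3} has a genuine gap, and it is not where you locate it.

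Look at the conjunct $\instate_\tautt(\suci_\ue^\ID) = \instate_\taut(\suci_\hn^\ID)$ inside $\ptr{\tautt,\tau}{\taut}$. It fits none of your three categories: it is not a $\pi_k(g(\cdot))=w$ equality, not an authentication predicate, not a range check or flag. Neither side is a revealed leaf --- $\instate_\taut(\suci_\hn^\ID)$ is the network's stored $\guti$ for $\ID$, and nothing in Definition~\ref{def:app-ind-gen} exposes it (only $\suci^j$ and the masked $\tsuci/\tmac$ appear, and only at $\fnai(j)$ time). So you cannot read off a context $D$ here, and this is in fact the longest stretch of the paper's argument. One first shows $\ptr{\tautt,\tau}{\taut}\ra\accept_\tauttt^\ID$ for the last $\tauttt=\_,\fuai_\ID(\_)$ before $\tautt$ (if there is none, \ref{b7} collapses both sides to $\false$). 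Then \ref{sequ1} expands $\accept_\tauttt^\ID$ as a disjunction over $\futr{\tauttt}{\taua}$; under each disjunct $\instate_\tautt(\suci_\ue^\ID)=\suci^{j_a}$, so the conjunct becomes $\instate_\taut(\suci_\hn^\ID)=\suci^{j_a}$. \emph{That} equality can be characterized by whether $\incaccept_\taub^\ID$ held at the session that wrote $\suci^{j_a}$ and whether any later $\incaccept$/$\accept$ overwrote it; those events are detectable from $\syncdiff$ increments and from the frame inequalities $g(\inframe_{\tau'})\ne\suci^{j_a}$, all of which \emph{are} revealed. A further case split on $\instate_\taub(\sync_\ue^\ID)$ is needed, and in the negative branch one invokes \ref{sequ4}. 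So the hard part of \ref{der3} is $\guti_\hn$ tracking, not the sequence-number bookkeeping you flag for \ref{der4}.

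A smaller inaccuracy: the $\macsym^3$ term and the xored $\sqn_\hn^\ID$ appearing in $\ptr$ are \emph{not} in $\reveal_\tau$ --- items 5, 7--10 cover $\macsym^1,\macsym^2,\macsym^4,\macsym^5$ but not $\macsym^3$, and no $\cnai(j,0)$ clause exists. The paper handles these by observing that under $\ptr$ they equal $\pi_3(t_\taut)$ and $\pi_2(t_\taut)$, which \emph{are} in $\inframe_\tau$; you need that $R$-step before your context argument applies.
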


The proof is given in Section~\ref{sec:derivations}

\begin{lemma}
  \label{lem:unlink}
  For all valid basic symbolic trace $\tau$ with at most $C$ actions $\newsession$, there exists a derivation of:
  \[
    \cframe_\tau,\lreveal_{\tau}\sim \cframe_{\ufresh{\tau}},\rreveal_{\tau}
  \]
\end{lemma}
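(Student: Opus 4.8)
The plan is to prove Lemma~\ref{lem:unlink} by induction on the length of the basic valid symbolic trace $\tau$; the hypothesis on the number of $\newsession$ actions guarantees, by Proposition~\ref{prop:tau-tu-utau}, that $\ufresh{\tau}$ is again a valid symbolic trace, so both sides of the stated equivalence are defined. For the base case $\tau = \epsilon$ we have $\cframe_\epsilon \equiv \pk_\hn$ and, since $\nu_\epsilon(\ID) = \ID$ for every base identity, the vectors $\lreveal_\epsilon$ and $\rreveal_\epsilon$ are syntactically equal (each synchronization invariant of Definition~\ref{def:app-ind-gen} reduces at $\epsilon$ to $\false \sim \false$, $\bot \sim \bot$, or a length comparison of fixed constants, using the initial state and the $\symite$-axioms), so $\cframe_\epsilon,\lreveal_\epsilon \sim \cframe_\epsilon,\rreveal_\epsilon$ follows from $\refl$. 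For the inductive step write $\tau = \tauo,\ai$; the induction hypothesis applied to $\tauo$ is a derivation of $\inframe_\tau,\lreveal_{\tauo} \sim \inframe_\utau,\rreveal_{\tauo}$ (because $\cframe_{\tauo} \equiv \inframe_\tau$ and $\ufresh{\tauo}$ underlies $\inframe_\utau$), and the goal is to extend it to $\inframe_\tau,t_\tau,\lreveal_\tau \sim \inframe_\utau,t_\utau,\rreveal_\tau$. We then argue by cases on the last action $\ai$.

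For each $\ai$, the term $t_\tau$, the relevant part of the state update, and the new entries of $\reveal_\tau$ are if-then-else terms whose only non-structural ingredients are the acceptance conditionals $\accept_\tau^\ID$. The first step is to rewrite these using the strong acceptance characterizations: \ref{sequ1} when $\ai = \fuai_\ID(j)$, \ref{sequ2} when $\ai = \cuai_\ID(j,1)$, \ref{sequ3} when $\ai = \cnai(j,1)$, \ref{sequ4} together with \ref{equ2} when $\ai = \npuai{2}{\ID}{j}$, \ref{equ3} when $\ai = \pnai(j,1)$, \ref{acc3} and Lemma~\ref{lem:auth-serv-net} when $\ai = \cnai(j,0)$, and Proposition~\ref{prop:injauth-charac} for the $\eauth_\ue$ entries. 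After this rewriting every $\accept_\tau^\ID$ has become a disjunction, indexed by honest partial transcripts $\taut \popre \tau$, of predicates $\futr{\tau}{\taut}$, $\ptr{\tautt,\tau}{\taut}$, $\ftr{\tautt,\taui}{\taut,\tau}$ or $\supitr{\tautt,\tau}{\taut}$, each depending only on $\inframe_\tau$ and on nonces and keys that already occur. Since $\ufresh$ merely renames each base identity $\ID$ to $\nu_\tau(\ID)$, the identical rewriting applied to $t_\utau$ produces an isomorphic disjunction over the same index set with $\ID$ replaced by $\nu_\tau(\ID)$, which furnishes the matching between left and right transcripts. One then assembles the required derivation by combining the block-derivations of Proposition~\ref{prop:derivation} (\ref{der1}--\ref{der4}) with the structural rules: pull all conditionals to the top with $\cs$, pair the $i$-th left branch with the $i$-th right branch, apply $\fac$ to factor out the common context, and close each leaf either with one of the blocks of Proposition~\ref{prop:derivation} or, for the purely symbolic leaves ($\textsf{ok}$, $\textsf{error}$, the constant renamings $\ID \mapsto \nu_\tau(\ID)$), with $\simp$.

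The cryptographic content lives only at the leaves. When $\ai = \npuai{1}{\ID}{j}$ the fresh component of $t_\tau$ is $\enc{\spair{\ID}{\instate_\tau(\sqn_\ue^\ID)}}{\pk_\hn}{\enonce^j}$, which is exchanged for the corresponding ciphertext under $\nu_\tau(\ID)$ by the $\ccao$ axiom; the structural side conditions come from invariant \textsc{(inv-key)} ($\sk_\hn$ occurs only in decryption position, $\enonce^j$ is fresh) and the length premise from the $\axioms_{\textsf{len}}$ axioms as in Proposition~\ref{prop:len-eq-sqn}, the relevant length equivalence being already carried in $\reveal$. The accompanying $\macsym^1$ tag, and likewise the masked sequence numbers $\sqn_\hn^\ID \oplus \ow{\nonce^j}{\key^\ID}$ and $\macsym^3$ tags in $\cnai(j,0)$, the $\macsym^2$ tag in $\pnai(j,1)$, the $\macsym^4$ tags in $\cuai_\ID(j,1)$ and $\cnai(j,1)$, and the masked $\suci^j \oplus \row{\nonce^j}{\key^\ID}$ together with the $\macsym^5$ tag in $\fnai(j)$, are all replaced by fresh nonces on both sides using the $\prfmac^j$, $\prff$, $\prffr$ and $\oplus\textsf{-indep}$ axioms exactly as in the proof of Lemma~\ref{lem:suci-conceal}, after which the two leaves become identical and $\refl$ closes them; the entries of $\reveal_\tau$ that deliberately expose $\suci^j$, the $\macsym^4$ tags, and the sequence-number differences are re-established from their $\tauo$-counterparts using the already-available invariant $\neauth_\tau(\ID,j) \sim \uneauth_\utau(\ID,j)$.

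The hard part will be re-establishing the desynchronization-resistance invariants, namely the $\reveal$ entries $\cstate_\tau(\sync_\ue^\ID) \sim \cstate_\utau(\sync_\ue^{\nu_\tau(\ID)})$, $\syncdiff_\tau^\ID \sim \syncdiff_\utau^{\nu_\tau(\ID)}$, and the sequence-number-difference invariant. When $\ai = \newsession_\ID(j)$ both sides reset $\success_\ue$ and $\sync_\ue$ to $\false$, so $\syncdiff$ collapses to $\bot \sim \bot$ and nothing more is needed; the delicate cases are $\ai \in \{\npuai{2}{\ID}{j},\cuai_\ID(j,1),\cnai(j,1),\fuai_\ID(j)\}$, where a $\thn$- or $\tue$-side update can alter a sequence number and where the range test of the $\guti$ sub-protocol must succeed on the left precisely when it succeeds on the right. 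This is exactly the point of \ref{sequ4}: if $\instate_\tau(\sync_\ue^\ID)$ is false then acceptance at $\npuai{2}{\ID}{j}$ forces $\cstate_\tau(\sqn_\ue^\ID) - \cstate_\tau(\sqn_\hn^\ID) = \zero$ on \emph{both} sides, re-synchronizing the difference irrespective of any earlier desynchronization --- the formal expression of the \emph{clean slate} property of the $\supi$ sub-protocol, which ultimately rests on the $\tsuccess_\hn^\ID$ invariant \ref{b2} and, through \ref{sacc1} and \ref{sequ2}, on the $\suci$-concealment Lemma~\ref{lem:suci-conceal} --- while if $\instate_\tau(\sync_\ue^\ID)$ is true the difference is propagated step by step via \ref{der1} and the by-one-increment form of $\sqnsuc$, using that $\range$ is an equality test. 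Apart from this, the main difficulty is simply the number of cases and the bookkeeping needed to keep $\reveal_\tau$ in exact correspondence with the transcripts produced by the characterization lemmas; I would organize the write-up one action at a time, in each case reusing the generic \emph{$\cs$, then $\fac$, then a block of Proposition~\ref{prop:derivation}} skeleton above.
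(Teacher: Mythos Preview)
Your proposal is correct and follows essentially the same approach as the paper: induction on $\tau$, a case analysis on the last action, rewriting each $\accept_\tau^\ID$ via the strong characterization lemmas (\ref{equ2}, \ref{equ3}, \ref{sequ1}--\ref{sequ4}) into disjunctions over honest transcripts, matching left and right transcripts via the block derivations of Proposition~\ref{prop:derivation}, discharging the cryptographic leaves with $\ccao$, $\prfmac^j$, $\prff$, $\prffr$ and $\oplus\textsf{-indep}$, and treating the $\syncdiff$ invariant separately via \ref{sequ4} for the clean-slate argument. The paper's proof is exactly this, carried out in full detail for each of the eleven action shapes; the most intricate case is indeed $\ai = \cnai(j,1)$, where re-establishing the $\syncdiff$ invariant requires a further sub-case split on whether $\taui \potau \newsession_\ID(\_)$ and, in the second sub-case, a careful analysis of the $\tsuccess_\hn^\ID = \nonce^j$ condition (using \ref{b2} and the characterization of interleavings of partial $\supi$ sessions) that your sketch correctly attributes to the $\tsuccess$ mechanism but does not spell out.
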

The proof is given in Section~\ref{sec:unlink-proof}.

Using this lemma, we can prove our main theorem, which we recall below:
\begin{theorem*}[Theorem~\ref{thm:main-unlink}]
  The $\fiveaka$ protocol is $\sigma_\sunlink$-unlinkable for an arbitrary number of agents and sessions when the asymmetric encryption $\enc{\_}{\_}{\_}$ is \textsc{ind-cca1} secure and $\owsym$ and $\rowsym$ (resp. $\macsym^1$--$\,\macsym^5$) satisfy jointly the $\prfass$ assumption.
\end{theorem*}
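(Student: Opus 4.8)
The plan is to obtain Theorem~\ref{thm:main-unlink} from Proposition~\ref{prop:main-unlink-bc-app}: it suffices to produce, for every $\tau \in \support(\runlink)$, a derivation using $\axioms$ of $\cframe_\tau \sim \cframe_{\utau}$. By Assumption~\ref{ass:support-valid} every such $\tau$ is a valid symbolic trace, and — as in the argument for Proposition~\ref{prop:main-unlink-bc-app}, since $(q,\sigma_{\sunlink})$-unlinkability with $N' \le N$ identities follows from the same property with $N$ identities — we may assume $\tau$ is basic and uses at most $C$ actions $\newsession$. So the whole theorem reduces to Lemma~\ref{lem:unlink}, which asserts a derivation of $\cframe_\tau,\lreveal_\tau \sim \cframe_{\utau},\rreveal_\tau$; the theorem then follows by the $\restr$ axiom, dropping the extra $\reveal$ components. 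The bulk of the work is therefore the proof of Lemma~\ref{lem:unlink}.

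I would prove Lemma~\ref{lem:unlink} by induction on $\tau$. The base case $\tau = \epsilon$ is immediate from $\cframe_\epsilon \equiv \pk_\hn$ and Definition~\ref{def:init-sigma-phi} (every entry of $\reveal_\epsilon$ becomes an equality between syntactically identical terms, closed by $\refl$). For the inductive step, write $\tau = \tauo,\ai$, assume a derivation of $\cframe_{\tauo},\lreveal_{\tauo} \sim \cframe_{\utauo},\rreveal_{\tauo}$, and note that the goal is $\inframe_\tau,t_\tau,\lreveal_\tau \sim \inframe_{\utau},t_{\utau},\rreveal_\tau$ with $\inframe_\tau \equiv \cframe_{\tauo}$. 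The uniform strategy is: (i) for each shape of $\ai$, rewrite $t_\tau$ and the new entries of $\reveal_\tau$ using the acceptance characterizations — Lemma~\ref{lem:accept-charac}, Lemma~\ref{lem:equiv-accept-ini}, Lemma~\ref{lem:equiv-accept} and Lemma~\ref{lem:strong-charac} — so that every conditional $\accept_\tau^\ID$ is replaced by a $\cs$-partition indexed by the honest partial transcripts of $\faka$ ending at $\ai$ (the terms $\supitr{\tautt,\tau}{\taut}$, $\ptr{\tautt,\tau}{\taut}$, $\ctr{\tau}{\taut}$, $\futr{\tau}{\taut}$, $\ftr{\tautt,\taui}{\taut,\tau}$); (ii) do the same on the right; (iii) match each left transcript to a right transcript. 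When the left and right oracle-call traces still agree up to $\ai$, a matched pair differs only by the relabelling $\ID \mapsto \nu_\tau(\ID)$, and its indistinguishability is exactly what Proposition~\ref{prop:derivation} and the invariants carried in $\reveal$ provide, after cleanup by $\fa$, $\cs$, $R$ and $\dup$ (packaged as $\simp$). Establishing the remaining new $\reveal_\tau$ entries — the synchronization invariants $\cstate_\tau(\success_\ue^\ID)\sim\cstate_\utau(\success_\ue^{\nu_\tau(\ID)})$, $\msuci^\ID_\tau\sim\msuci^{\nu_\tau(\ID)}_\utau$, $\cstate_\tau(\sync_\ue^\ID)\sim\cstate_\utau(\sync_\ue^{\nu_\tau(\ID)})$ and $\syncdiff_\tau^{\ID}\sim\syncdiff_\utau^{\nu_\tau(\ID)}$, together with the length and difference invariants — is long but mechanical: it amounts to tracking the state updates through $\upstate_\tau$ and re-using Proposition~\ref{prop:derivation} for the message entries.

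The hard part is the cases where the left and right oracle-call traces genuinely diverge, which happens precisely at and after each $\newsession_\ID(\_)$ action (on the left the ensuing session uses $\ID$, on the right the fresh copy $\nu_\tau(\ID)$), so an honest left transcript need not have a one-to-one matching right transcript. I would handle this sub-protocol by sub-protocol. For a $\guti$ session, the key facts are that it can succeed only if a fresh $\suci$ was previously established and consumed exactly once (Lemma~\ref{lem:accept-charac}, item \ref{sacc1}) and that the $\suci$-concealment Lemma~\ref{lem:suci-conceal} forces a left/right success mismatch to contradict $\msuci^\ID_\tau\sim\msuci^{\nu_\tau(\ID)}_\utau$; for a $\supi$ session, item \ref{sequ4} of Lemma~\ref{lem:strong-charac} shows that whenever $\sync_\ue^\ID$ is false the accepted $\sqn$ re-synchronizes $\tue$ and $\thn$ (difference $\zero$), so the equality and difference checks stay indistinguishable on the two sides via the $\syncdiff$ invariant — which is exactly why the property must be $\sigma_{\sunlink}$-unlinkability rather than plain unlinkability (cf.\ the impossibility discussed in Section~\ref{subsection:conj-unlink}). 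In the residual mismatches one either builds a many-to-one match (one left branch against the union of several mutually exclusive right branches, justified by $\cs$ together with the injectivity, freshness and $\textsc{euf-mac}$/$\textsc{cr}$ axioms) or shows the orphan transcript is unreachable because it would require de-synchronization by two, contradicting \ref{b4} and \ref{b6} and the arithmetic axioms $\axioms_{\sqn}$. I expect this de-synchronization accounting — maintaining precise control of the possible values of $\sqn_\ue^\ID$ and $\sqn_\hn^\ID$ across arbitrarily many interleaved sessions — to be the main technical obstacle, and the reason the induction hypothesis has to be strengthened to the whole of Definition~\ref{def:app-ind-gen}.
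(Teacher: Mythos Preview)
Your proposal is correct and follows essentially the same route as the paper: reduce via Proposition~\ref{prop:main-unlink-bc-app} and Assumption~\ref{ass:support-valid} to Lemma~\ref{lem:unlink}, then apply $\restr$; prove Lemma~\ref{lem:unlink} by induction on $\tau$, case-splitting on the last action, rewriting each $\accept$ via the characterization lemmas into $\cs$-partitions over honest transcripts, matching left and right transcripts through Proposition~\ref{prop:derivation} and the invariants carried in $\reveal$, and handling the post-$\newsession$ divergence via the $\sync$/$\syncdiff$ invariants together with \ref{sequ4} and the $\suci$-concealment lemma. Your identification of the de-synchronization accounting as the main obstacle, and of Definition~\ref{def:app-ind-gen} as the necessary strengthening of the induction hypothesis, is exactly right.
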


\begin{proof}
  Using Proposition~\ref{prop:main-unlink-bc-app}, we only need to show that for every $\tau \in \support(\runlink)$, there is a derivation of $\cframe_{\tau} \sim \cframe_{\utau}$ using $\axioms$. Moreover, using Assumption~\ref{ass:support-valid} we know that for every $\tau \in \support(\runlink)$, $\tau$ is a valid symbolic trace. Therefore, it is sufficient to prove that for every valid symbolic trace $\tau$, we have a derivation using $\axioms$ of $\cframe_{\tau}  \sim \cframe_{\utau}$. Using Lemma~\ref{lem:unlink}, we know that we have a derivation of $\cframe_\tau,\lreveal_{\tau}\sim \cframe_{\ufresh{\tau}},\rreveal_{\tau}$. We conclude using the $\restr$ rule:
  \[
    \infer[\restr]{
      \cframe_\tau\sim \cframe_{\ufresh{\tau}}
    }{
      \cframe_\tau,\lreveal_{\tau}
      \sim
      \cframe_{\ufresh{\tau}},\rreveal_{\tau}
    }
    \qedhere
  \]
\end{proof}

\newpage
\section{Proof of Proposition~\ref{prop:derivation}}
\label{sec:derivations}
\FloatBarrier
\subsection*{Proof of \ref{der1}}
We have two cases:
\begin{itemize}
\item either there exists $l$ such that $\newsession_\ID(l) \popre \tau$ and $\newsession_\ID(l) \not\potau \newsession_\ID(\_)$. In that case we have $\newsession_\ID(l) \potau \taut$.
\item or for every $i$, $\newsession_\ID(i) \not\potau \taut$.
\end{itemize}
\begin{figure}[t]
  \begin{center}
    \begin{tikzpicture}
      [dn/.style={inner sep=0.2em,fill=black,shape=circle},
      sdn/.style={inner sep=0.15em,fill=white,draw,solid,shape=circle},
      sl/.style={decorate,decoration={snake,amplitude=1.6}},
      dl/.style={dashed},
      pin distance=0.5em,
      every pin edge/.style={thin}]

      \draw[thick] (0,0)
      node[left=1.3em] {$\tau:$}
      -- ++(0.5,0)
      node[dn,pin={above:{$\newsession_\ID(l)$ or $\epsilon$}}] 
      (a) {}
      -- ++(2.5,0)
      node[dn,pin={above:{$\taut$}}] 
      (b) {}
      -- ++(5.5,0)
      node[dn,pin={above:{$\tau$}}] 
      (c) {};

      \path (b) -- ++ (0,-0.6)
      node (b1) {$\instate_\taut(\sqn_\ue^\ID)$};

      \path (c) -- ++ (0,-0.6)
      node (c1) {$\instate_\tau(\sqn_\ue^\ID)$}
      -- ++ (0,-1.2)
      node (c2) {$\instate_\tau(\sqn_\hn^\ID)$};
      
      \draw (c1) -- (c2) node[midway,right,transform shape, scale=0.7]
      {$\instate_\tau(\sqn_\hn^\ID) - \instate_\tau(\sqn_\ue^\ID)$};

      \draw (b1) -- (c1) node[midway,below,transform shape, scale=0.7]
      {$\instate_\tau(\sqn_\ue^\ID) - \instate_\taut(\sqn_\ue^\ID)$};

      \draw[thick] (0,-5)
      node[left=1.3em] {$\utau:$}
      -- ++(0.5,0)
      node[dn,pin={below:{$\newsession_{\uID}(l)$ or $\epsilon$}}] 
      (a) {}
      -- ++(2.5,0)
      node[dn,pin={below:{$\utaut$}}] 
      (bb) {}
      -- ++(5.5,0)
      node[dn,pin={below:{$\utau$}}] 
      (c) {};

      \path (bb) -- ++ (0,0.6)
      node (b1) {$\instate_\utaut(\sqn_\ue^\uID)$};

      \path (c) -- ++ (0,0.6)
      node (c1) {$\instate_\utau(\sqn_\ue^\uID)$}
      -- ++ (0,1.2)
      node (c2) {$\instate_\utau(\sqn_\hn^\uID)$};
      
      \draw (c1) -- (c2) node[midway,right,transform shape, scale=0.7]
      {$\instate_\utau(\sqn_\hn^\uID) - \instate_\utau(\sqn_\ue^\uID)$};

      \draw (b1) -- (c1) node[midway,below,transform shape, scale=0.7]
      {$\instate_\utau(\sqn_\ue^\uID) - \instate_\utaut(\sqn_\ue^\uID)$};

      \path (b) -- (bb) node[midway,right=0.5cm,transform shape,scale=2]
      {$\mathbf{\sim}$};
    \end{tikzpicture}
  \end{center}

  \caption{First Graphical Representation for the Proof of \ref{der1}}
  \label{fig:adksjdasldkjas}
\end{figure}
Let $\uID = \nu_\tau(\ID)$. We summarize the situation graphically in Fig.~\ref{fig:adksjdasldkjas}. In both case, for every $\taut \popreleq \tau' \popre \tau$ we have:
\begin{gather*}
  \left(
    \cstate_{\tau'}(\sqn_\ue^\ID) - \instate_{\tau'}(\sqn_\ue^\ID),
    \cstate_{\utau'}(\sqn_\ue^{\uID})
    - \instate_{\utau'}(\sqn_\ue^{\uID})
  \right) \in
  \reveal_{\tauo}\\
  \left(
    \cond{\instate_\tau(\sync_\ue^\ID)}
    {\left(\instate_\tau(\sqn_\hn^\ID) -
        \instate_\tau(\sqn_\ue^\ID)\right)},
    \cond{\instate_\utau(\sync_\ue^{\uID})}
    {\left(\instate_\utau(\sqn_\hn^{\uID}) -
        \instate_\utau(\sqn_\ue^{\uID})\right)}
  \right) \in
  \reveal_{\tauo}
\end{gather*}
We know that:
\begin{gather*}
  \instate_\tau(\sqn_\ue^\ID) - \instate_\taut(\sqn_\ue^\ID)
  \;=\;
  \cstate_\tauo(\sqn_\ue^\ID) - \instate_\taut(\sqn_\ue^\ID)
  \;=\;
  \sum_{\taut \popreleq \tau'}
  \cstate_{\tau'}(\sqn_\ue^\ID) - \instate_{\tau'}(\sqn_\ue^\ID)
\end{gather*}
And:
\begin{multline*}
  \left(
    \instate_\tau(\sync_\ue^\ID) \wedge 
    \instate_\tau(\sqn_\hn^\ID) <
    \instate_\taut(\sqn_\ue^\ID)
  \right)
  \;\lra\;\\
  \instate_\tau(\sync_\ue^\ID) \wedge 
  \left(
    \left(\instate_\tau(\sqn_\ue^\ID) - \instate_\taut(\sqn_\ue^\ID)\right) +
    \cond{\instate_\tau(\sync_\ue^\ID)}
    {\left(\instate_\tau(\sqn_\hn^\ID) - \instate_\tau(\sqn_\ue^\ID)\right)}
    < \zero\right)
\end{multline*}
Similarly:
\begin{multline*}
  \instate_\utau(\sqn_\ue^{\uID}) -
  \instate_\utaut(\sqn_\ue^{\uID})
  =
  \cstate_\utauo(\sqn_\ue^{\uID}) -
  \instate_\utaut(\sqn_\ue^{\uID})
  =
  \sum_{\utaut\, \popreleq\, \utau'\, \popreleq\, \utauo}
  \cstate_{\utau'}(\sqn_\ue^{\uID}) -
  \instate_{\utau'}(\sqn_\ue^{\uID})\\
  =
  \sum_{\taut \popreleq \tau'}
  \cstate_{\utau'}(\sqn_\ue^{\uID}) -
  \instate_{\utau'}(\sqn_\ue^{\uID})
\end{multline*}
And:
\begin{multline*}
  \left(
    \instate_\utau(\sync_\ue^\uID) \wedge 
    \instate_\utau(\sqn_\hn^\uID) <
    \instate_\utaut(\sqn_\ue^\uID)
  \right)
  \;\lra\;\\
  \instate_\utau(\sync_\ue^\uID) \wedge 
  \left(
    \left(
      \left(\instate_\utau(\sqn_\ue^\uID) - \instate_\utaut(\sqn_\ue^\uID)\right) +
      \cond{\instate_\utau(\sync_\ue^\uID)}
      {\left(\instate_\utau(\sqn_\hn^\uID) - \instate_\utau(\sqn_\ue^\uID)\right)}
    \right) < \zero
  \right)
\end{multline*}
Putting everything together, we get the following derivation:
\[
  \infer[\simp]{
    \begin{alignedat}{2}
      &&&\lreveal_{\tauo},
      \instate_\taut(\sync_\ue^\ID) \wedge 
      \instate_\tau(\sqn_\hn^\ID) <
      \instate_\taut(\sqn_\ue^\ID)\\
      &\sim\;\;&&
      \rreveal_{\tauo},
      \instate_\utaut(\sync_\ue^{\uID}) \wedge 
      \instate_\utau(\sqn_\hn^{\uID}) <
      \instate_\utaut(\sqn_\ue^{\uID})
    \end{alignedat}
  }{
    \infer[\dup^*]{
      \begin{alignedat}{2}
        &&&\lreveal_{\tauo},
        \instate_\tau(\sync_\ue^\ID),
        \cond{\instate_\tau(\sync_\ue^\ID)}
        {\left(\instate_\tau(\sqn_\hn^\ID) - \instate_\tau(\sqn_\ue^\ID)\right)},
        \left(
          \cstate_{\tau'}(\sqn_\ue^\ID) - \instate_{\tau'}(\sqn_\ue^\ID),
        \right)_{\taut \popreleq \tau'}
        \\
        &\sim&&
        \rreveal_{\tauo},
        \instate_\utau(\sync_\ue^\uID),
        \cond{\instate_\utau(\sync_\ue^\uID)}
        {\left(\instate_\utau(\sqn_\hn^\uID) - \instate_\utau(\sqn_\ue^\uID)\right)},
        \left(
          \cstate_{\utau'}(\sqn_\ue^{\uID}) - 
          \instate_{\utau'}(\sqn_\ue^{\uID}),
        \right)_{\taut \popreleq \tau'}          
      \end{alignedat}
    }{
      \lreveal_{\tauo}
      \;\sim\;
      \rreveal_{\tauo}
    }
  }
\]
The derivation of \eqref{eq:deriv:n-u} is very similar. We omit the details, and only give the graphical representation of the situation in Fig.~\ref{fig:dsaldasdklasdjmlaskd}.

\subsection*{Proof of \ref{der3}}
\begin{center}
  \begin{tikzpicture}
    [dn/.style={inner sep=0.2em,fill=black,shape=circle},
    sdn/.style={inner sep=0.15em,fill=white,draw,solid,shape=circle},
    sl/.style={decorate,decoration={snake,amplitude=1.6}},
    dl/.style={dashed},
    pin distance=0.5em,
    every pin edge/.style={thin}]

    \draw[thick] (0,0)
    node[left=1.3em] {$\tau:$}
    -- ++(0.5,0)
    node[dn,pin={above:{$\cuai_\ID(j,0)$}}] {}
    node[below=0.3em]{$\tautt$}
    -- ++(2.5,0)
    node[dn,pin={above:{$\cnai(j_1,0)$}}] {}
    node[below=0.3em]{$\taut$}
    -- ++(2.5,0)
    node[dn,pin={above:{$\cuai_\ID(j,1)$}}] {}
    node[below=0.3em]{$\tau$};
  \end{tikzpicture}
\end{center}

\noindent Recall that:
\begin{alignat*}{2}
  \ptr{\tautt,\tau}{\taut}
  &\;\equiv\;\;&&
  \left(
    \begin{alignedat}[c]{2}
      &&&
      \pi_1(g(\inframe_\tau)) = \nonce^{j_1}
      \;\wedge\;
      \pi_2(g(\inframe_\tau)) =
      \dotuline{\instate_\taut(\sqn_\hn^\ID)
        \oplus \ow{\nonce^{j_1}}{\key^\ID}} \\
      &\wedge\;\;&&
      \pi_3(g(\inframe_\tau)) =
      \dotuline{\mac{\striplet
          {\nonce^{j_1}}
          {\instate_\taut(\sqn_\hn^\ID)}
          {\instate_\tautt(\suci_\ue^\ID)}}
        {\mkey^\ID}{3}}\\
      &\wedge\;\;&&
      g(\inframe_\taut) = \dashuline{\instate_\tautt(\suci_\ue^\ID)}
      \;\wedge\;
      \uwave{\instate_\tautt(\suci_\ue^\ID)
        = \instate_\taut(\suci_\hn^\ID)}
      \;\wedge\;
      \dashuline{\instate_\tautt(\success_\ue^\ID)}\\
      &\wedge\;\;&&
      \dashuline{\range{\instate_\tau(\sqn_\ue^\ID)}
        {\instate_\taut(\sqn_\hn^\ID)}}
    \end{alignedat}
  \right)
  \numberthis\label{eq:sdvnbiogfjdpiafjpof}
\end{alignat*}
Since $\tau$ is valid, we know that for every $\tau'$, if $\tautt \potau \tau'$ then $\tau' \ne \newsession_\ID(\_)$. It follows that $\utautt = \_,\cuai_{\nu_\tau(\ID)}(j,0)$ and $\utau = \_,\cuai_{\nu_\tau(\ID)}(j,1)$. The fact that $\utautt \poutau \utaut$ is then straightforward. Letting $\uID = \nu_\tau(\ID)$, we can then check that:
\begin{alignat*}{2}
  \ptr{\utautt,\utau}{\utaut}
  &\;\equiv\;\;&&
  \left(
    \begin{alignedat}{2}
      &&&
      \pi_1(g(\inframe_\utau)) = \nonce^{j_1}
      \;\wedge\;
      \pi_2(g(\inframe_\utau)) =
      \dotuline{\instate_\utaut(\sqn_\hn^{\uID})
        \oplus \ow{\nonce^{j_1}}{\key^{\uID}}} \\
      &\wedge\;\;&&
      \pi_3(g(\inframe_\utau)) =
      \dotuline{
        \mac{\striplet
          {\nonce^{j_1}}
          {\instate_\utaut(\sqn_\hn^{\uID})}
          {\instate_\utautt(\suci_\ue^{\uID})}}
        {\mkey^{\uID}}{3}}\\
      &\wedge\;\;&&
      g(\inframe_\utaut) = \dashuline{\instate_\utautt(\suci_\ue^{\uID})}
      \;\wedge\;
      \uwave{\instate_\utautt(\suci_\ue^{\uID}) =
        \instate_\utaut(\suci_\hn^{\uID})}
      \;\wedge\;
      \dashuline{\instate_\utautt(\success_\ue^{\uID})}\\
      &\wedge\;\;&&
      \dashuline{\range{\instate_\utau(\sqn_\ue^{\uID})}
        {\instate_\utaut(\sqn_\hn^{\uID})}}
    \end{alignedat}
  \right)
  \numberthis\label{eq:sdijvp9eqrqurqptuafjpda}
\end{alignat*}
We have two cases.
\paragraph{Case 1}
Assume that for all $\tau' \potau \taut$ such that $\tau' \not \potau \newsession_\ID(\_)$ we have $\tau' \ne \_, \fuai_\ID(\_)$.

Then we know that for all $\utau' <_\utau \utaut$ such that $\utau' \not <_\utau \newsession_{\nu_\tau(\ID)}(\_)$ we have $\utau' \ne \_, \fuai_{\nu_\tau(\ID)}(\_)$. Therefore using $\ref{b7}$ twice we get:
\begin{mathpar}
  \ptr{\tautt,\tau}{\taut} \;\ra\; \false

  \ptr{\utautt,\utau}{\utaut} \;\ra\;\false
\end{mathpar}

Therefore we have a trivial derivation:
\[
  \begin{gathered}[c]
    \infer[R]{
      \inframe_\tau,
      \lreveal_{\tauo},
      \ptr{\tautt,\tau}{\taut}
      \;\sim\;
      \inframe_\utau,
      \rreveal_{\tauo},
      \ptr{\utautt,\utau}{\utaut}
    }{
      \infer[\fa]{
        \inframe_\tau, \lreveal_{\tauo},\false
        \sim
        \inframe_\utau,\rreveal_{\tauo},\false
      }{
        \inframe_\tau, \lreveal_{\tauo}
        \sim
        \inframe_\utau,\rreveal_{\tauo}
      }
    }
  \end{gathered}
  \numberthis\label{eq:dsuovhsoghhifgs}
\]

\paragraph{Case 2}
Assume that there exists $\tauttt = \_,\fuai_\ID(j_0)$ such that $\tauttt \potau \taut$, $\tauttt \not \potau \newsession_\ID(\_)$ and $\tauttt \not \potau \fuai_\ID(\_)$. Then $\utauttt = \_,\fuai_{\nu_\tau(\ID)}(\_)$, $\utauttt <_\utau \utaut$, $\utauttt \not <_\utau \newsession_{\nu_\tau(\ID)}(\_)$ and $\utauttt \not <_\utau \fuai_{\nu_\tau(\ID)}(\_)$.

First, we show that $j_0 \ne j$: assume that $j_0 = j$, then we know that $\tau \potau \tauttt$, which is absurd. Therefore $j_0 \ne j$. Using the validity of $\tau$, we know that $\tauttt$ cannot occur between $\tautt = \_,\cuai_\ID(j,0)$ and $\tau = \_,\cuai_\ID(j,0)$. Hence $\tauttt \potau \tautt$.

Let $\tau_\ns$ be the latest $\ns_\ID(\_)$, if it exists, or $\epsilon$ otherwise: $\tau_\ns = \_,\ns_\ID(\_)$ or $\epsilon$ and $\tau_\ns \not \potau \ns_\ID(\_)$. Let $\taux$ be $\_,\cuai_\ID(j_0,0)$ or $\_,\npuai{1}{\ID}{j_0}$ be the beginning of the $\ue$ session associated to $\tauttt$. We know that $\tau_\ns \potau \taux \potau \tauttt$.

We know that $\ptr{\tautt,\tau}{\taut} \ra \instate_\tautt(\success_\ue^\ID)$. As $\tauttt \not \potau \fuai_\ID(\_)$, we know that there are no $\fuai_\ID(\_)$ action between $\tauttt$ and $\tautt$. If there exists a action by user $\ID$ between $\tauttt$ and $\tautt$, then we have either $\tauttt \potau \npuai{1}{\ID}{\_} \potau \tautt$ or $\tauttt \potau \cuai_\ID(\_,0) \potau \tautt$. In both case, $\success_\ue^\ID$ is set to $\false$, and cannot be set back to something else without a $\fuai_\ID(\_)$ action. It follows that if there exists a user action between $\tauttt$ and $\tautt$ then $\neg \instate_\tautt(\success_\ue^\ID)$. Using the same reasoning we have $\neg \instate_\utautt(\success_\ue^\uID)$ if there exists a user action between $\tauttt$ and $\tautt$. Hence in that case the derivation \eqref{eq:dsuovhsoghhifgs} works.

By consequence we now assume that:
\[
  \left\{ \_,\cuai_\ID(\_),\_,\npuai{\_}{\ID}{\_}1,\fuai_\ID(\_) \right\}
  \cap \left\{ \tau' \mid \tauttt \potau \tau' \potau \tautt \right\} = \emptyset
  \numberthis\label{eq:dsvuovsibvsibw}
\]
It follows that $\neg \accept_\tauttt^\ID \ra \neg \instate_\tautt(\success_\ue^\ID)$, hence $\ptr{\tautt,\tau}{\taut} \ra \accept_\tauttt^\ID$. We also deduce from \eqref{eq:dsvuovsibvsibw} that $\cstate_\tauttt(\suci_\ue^\ID) \equiv \instate_\tautt(\suci_\ue^\ID)$. Applying \ref{sequ1}, we know that:
\begin{alignat*}{3}
  \accept_\tauttt^\ID
  &\;\;\leftrightarrow\;\;&
  \bigvee_{\taux \potau \taua \,=\, \_,\fnai(j_a) \potau \tauttt}
  \futr{\tauttt}{\taua}
\end{alignat*}
Therefore:
\[
  \ptr{\tautt,\tau}{\taut} \;\lra\;
  \bigvee_{\taux \potau \taua \,=\, \_,\fnai(j_a) \potau \tauttt}
  \futr{\tauttt}{\taua} \wedge \ptr{\tautt,\tau}{\taut}
\]
Similarly, we show that $\cstate_\utauttt(\suci_\ue^\uID) \equiv \instate_\utautt(\suci_\ue^\uID)$ and that:
\[
  \ptr{\utautt,\utau}{\utaut} \;\lra\;
  \bigvee_{\taux \potau \taua \,=\, \_,\fnai(j_a) \potau \tauttt}
  \futr{\utauttt}{\utaua} \wedge \ptr{\utautt,\utau}{\utaut}
\]
We can start building the wanted derivation:
\[
  \infer[R]{
    \inframe_\tau,
    \lreveal_{\tauo},
    \ptr{\tautt,\tau}{\taut}
    \;\sim\;
    \inframe_\utau,
    \rreveal_{\tauo},
    \ptr{\utautt,\utau}{\utaut}
  }{
    \infer[\fa^*]{
      \begin{alignedat}{2}
        &&&\inframe_\tau, \lreveal_{\tauo},
        \bigvee_{\taux \potau \taua \,=\, \_,\fnai(j_a) \potau \tauttt}
        \futr{\tauttt}{\taua} \wedge \ptr{\tautt,\tau}{\taut}\\
        &\sim\;\;&&
        \inframe_\utau,\rreveal_{\tauo},
        \bigvee_{\taux \potau \taua \,=\, \_,\fnai(j_a) \potau \tauttt}
        \futr{\utauttt}{\utaua} \wedge \ptr{\utautt,\utau}{\utaut}
      \end{alignedat}
    }{
      \begin{alignedat}{2}
        &&&\inframe_\tau, \lreveal_{\tauo},
        \left(
          \futr{\tauttt}{\taua} \wedge \ptr{\tautt,\tau}{\taut}
        \right)_{\taux \potau \taua \,=\, \_,\fnai(j_a) \potau \tauttt}\\
        &\sim\;\;&&
        \inframe_\utau,\rreveal_{\tauo},
        \left(
          \futr{\utauttt}{\utaua} \wedge \ptr{\utautt,\utau}{\utaut}
        \right)_{\taux \potau \taua \,=\, \_,\fnai(j_a) \potau \tauttt}
      \end{alignedat}
    }
  }
\]
\begin{figure}[t]
\begin{center}
  \begin{tikzpicture}
    [dn/.style={inner sep=0.2em,fill=black,shape=circle},
    sdn/.style={inner sep=0.15em,fill=white,draw,solid,shape=circle},
    sl/.style={decorate,decoration={snake,amplitude=1.6}},
    dl/.style={dashed},
    pin distance=0.5em,
    every pin edge/.style={thin}]

    \draw[thick] (0,0)
    node[left=1.3em] {$\tau:$}
    -- ++(0.5,0)
    node[dn,pin={above:{$\newsession_\ID(l)$ or $\epsilon$}}] 
    (a) {}
    -- ++(2.5,0)
    node[dn,pin={above:{$\taut$}}] 
    (b) {}
    -- ++(5.5,0)
    node[dn,pin={above:{$\tau$}}] 
    (c) {};

    \path (b) -- ++ (0,-0.6)
    node (b1) {$\instate_\taut(\sqn_\ue^\ID)$}
    -- ++ (0,-1.2)
    node (b2) {$\instate_\taut(\sqn_\hn^\ID)$};

    \path (c) -- ++ (0,-0.6)
    node (c1) {$\instate_\tau(\sqn_\ue^\ID)$};
    
    \draw (b1) -- (b2) node[midway,left,transform shape, scale=0.7]
    {$\instate_\taut(\sqn_\hn^\ID) - \instate_\taut(\sqn_\ue^\ID)$};

    \draw (b1) -- (c1) node[midway,below,transform shape, scale=0.7]
    {$\instate_\tau(\sqn_\ue^\ID) - \instate_\taut(\sqn_\ue^\ID)$};

    \draw[thick] (0,-5)
    node[left=1.3em] {$\utau:$}
    -- ++(0.5,0)
    node[dn,pin={below:{$\newsession_{\uID}(l)$ or $\epsilon$}}] 
    (a) {}
    -- ++(2.5,0)
    node[dn,pin={below:{$\utaut$}}] 
    (bb) {}
    -- ++(5.5,0)
    node[dn,pin={below:{$\utau$}}] 
    (c) {};

    \path (bb) -- ++ (0,0.6)
    node (b1) {$\instate_\utaut(\sqn_\ue^\uID)$}
    -- ++ (0,1.2)
    node (b2) {$\instate_\utaut(\sqn_\hn^\uID)$};

    \path (c) -- ++ (0,0.6)
    node (c1) {$\instate_\utau(\sqn_\ue^\uID)$};
    
    \draw (b1) -- (b2) node[midway,left,transform shape, scale=0.7]
    {$\instate_\utaut(\sqn_\hn^\uID) - \instate_\utaut(\sqn_\ue^\uID)$};

    \draw (b1) -- (c1) node[midway,below,transform shape, scale=0.7]
    {$\instate_\utau(\sqn_\ue^\uID) - \instate_\utaut(\sqn_\ue^\uID)$};

    \path (b) -- (bb) node[midway,right=0.5cm,transform shape,scale=2]
    {$\mathbf{\sim}$};
  \end{tikzpicture}
\end{center}

\caption{Second Graphical Representation for the Proof of \ref{der1}}
\label{fig:dsaldasdklasdjmlaskd}
\end{figure}
Let $\taua = \_,\fnai(j_a)$ be such that $\taux \potau \taua \potau \tauttt$. Let $\taub$ be $\_,\cnai(j_a,1)$ or $\_,\pnai(j_a,1)$ such that $\taub \potau \taua$. To conclude, we just need to build a derivation of:
\[
  \inframe_\tau, \lreveal_{\tauo},
  \futr{\tauttt}{\taua} \wedge \ptr{\tautt,\tau}{\taut}
  \;\sim\;\;
  \inframe_\utau,\rreveal_{\tauo},
  \futr{\utauttt}{\utaua} \wedge \ptr{\utautt,\utau}{\utaut}
\]
The proof consist in rewriting $\futr{\tauttt}{\taua} \wedge \ptr{\tautt,\tau}{\taut}$ and $\futr{\utauttt}{\utaua} \wedge \ptr{\utautt,\utau}{\utaut}$ such that they can be decomposed (using $\fa$) into corresponding parts appearing in $\reveal_{\tauo}$. We do this piece by piece: the  waved underlined part first, the dotted underlined and the dashed underlined part. We represent graphically the protocols executions below:
\begin{center}
  \begin{tikzpicture}
    [dn/.style={inner sep=0.2em,fill=black,shape=circle},
    sdn/.style={inner sep=0.15em,fill=white,draw,solid,shape=circle},
    sl/.style={decorate,decoration={snake,amplitude=1.6}},
    dl/.style={dashed},
    pin distance=0.5em,
    every pin edge/.style={thin}]

    \draw[thick] (0,0)
    -- ++(0.5,0)
    node[dn,pin={above,align=left:{$\ns_\ID(\_)$\\or $\epsilon$}}] {}
    node[below=0.3em,name=a]{$\tau_\ns$}
    -- ++(2,0)
    node[dn,pin={above,align=left:{$\cuai_\ID(j_0,0)$\\or $\npuai{1}{\ID}{j_0}$}}] {}
    node[below=0.3em,name=b]{$\taux$}
    -- ++(2.5,0)
    node[dn,pin={above,align=left:{$\cnai(j_a,1)$\\ or $\pnai(j_a,1)$}}] {}
    node[below=0.3em,name=c0]{$\taub$}
    -- ++(2.5,0)
    node[dn,pin={above:{$\fnai(j_a)$}}] {}
    node[below=0.3em,name=c]{$\taua$}
    -- ++(2,0)
    node[dn,pin={above:{$\fuai_\ID(j_0)$}}] {}
    node[below=0.3em,name=d]{$\tauttt$}
    -- ++(2,0)
    node[dn,pin={above:{$\cuai_\ID(j,0)$}}] {}
    node[below=0.3em,name=e]{$\tautt$}
    -- ++(2,0)
    node[dn,pin={above:{$\cnai(j_1,0)$}}] {}
    node[below=0.3em,name=f]{$\taut$}
    -- ++(2,0)
    node[dn,pin={above:{$ \cuai_\ID(j,1) $}}] {}
    node[below=0.3em,name=g]{$\tau$};

    \draw[thin,dashed] (b) -- ++(0,-0.5) -| (c0)
    {[draw=none] -- ++(0,-0.5)} -| (c)
    {[draw=none] -- ++(0,-0.5)} -| (d);
    \draw[thin,densely dotted] (e) -- ++(0,-0.5) -| (f)
    {[draw=none] -- ++(0,-0.5)} -| (g);
  \end{tikzpicture}
\end{center}

\paragraph{Part 1 (Waves)}
We are going to give a derivation of:
\[
  \inframe_\tau,
  \lreveal_{\tauo},
  \futr{\tauttt}{\taua}\wedge
  \instate_\tautt(\suci_\ue^\ID) =
  \instate_\taut(\suci_\hn^\ID)
  \;\sim\;
  \inframe_\utau,
  \rreveal_{\tauo},
  \futr{\utauttt}{\utaua}\wedge
  \instate_\utautt(\suci_\ue^\uID) =
  \instate_\utaut(\suci_\hn^\uID)
\]
Recall that $\cstate_\tauttt(\suci_\ue^\ID) \equiv \instate_\tautt(\suci_\ue^\ID)$ and $\cstate_\utauttt(\suci_\ue^\uID) \equiv \instate_\utautt(\suci_\ue^\uID)$. Therefore it is sufficient to give a derivation of:
\[
  \inframe_\tau,
  \lreveal_{\tauo},
  \futr{\tauttt}{\taua} \wedge
  \cstate_\tauttt(\suci_\ue^\ID) =
  \instate_\taut(\suci_\hn^\ID)
  \;\sim\;
  \inframe_\utau,
  \rreveal_{\tauo},
  \futr{\utauttt}{\utaua} \wedge
  \cstate_\utauttt(\suci_\ue^\uID) =
  \instate_\utaut(\suci_\hn^\uID)
\]
We know that:
\[
  \cond{\futr{\tauttt}{\taua}}{\cstate_\tauttt(\suci_\ue^\ID)} =
  \cond{\futr{\tauttt}{\taua}}{\suci^{j_a}}
\]
Hence:
\[
  \left(
    \futr{\tauttt}{\taua} \wedge
    \cstate_\tauttt(\suci_\ue^\ID) =
    \instate_\taut(\suci_\hn^\ID)
  \right)
  \;\lra\;
  \left(
    \futr{\tauttt}{\taua} \wedge
    \instate_\taut(\suci_\hn^\ID) = \suci^{j_a}
  \right)
\]
Intuitively, the only way we can have $\instate_\taut(\suci_\hn^\ID) = \suci^{j_a}$ is:
\begin{itemize}
\item if the $\supi$ or $\suci$ network session $j_a$ accepts with the increasing sequence number condition.
\item and if $\instate_\taut(\suci_\hn^\ID)$ was not over-written between $\taub$ and $\taut$.
\end{itemize}
It is actually straightforward to show by induction that:
\[
  \instate_\taut(\suci_\hn^\ID) \ne \suci^{j_a}
  \;\lra\;
  \left(
    \neg \incaccept_\taub^\ID
    \vee
    \bigvee_{\tau'=\_,\cnai(j',1)
      \atop{\text{or }\tau'=\_,\pnai(j',1)
        \atop{\taub \potau \tau' \potau \taut}}}
    \incaccept_{\tau'}^\ID
    \vee
    \bigvee_{\tau'=\_,\cnai(j',0)\atop{\taub \potau \tau' \potau \taut}}
    \accept_{\tau'}^\ID
  \right)
\]
Hence:
\begin{alignat*}{2}
  &&&
  \futr{\tauttt}{\taua} \wedge
  \cstate_\tauttt(\suci_\ue^\ID) =  \instate_\taut(\suci_\hn^\ID)\\
  &\lra\;\;&&
  \futr{\tauttt}{\taua} \wedge
  \incaccept_\taub^\ID
  \wedge
  \bigwedge_{\tau'=\_,\cnai(j',1)
    \atop{\text{or }\tau'=\_,\pnai(j',1)
      \atop{\taub \potau \tau' \potau \taut}}}
  \neg\incaccept_{\tau'}^\ID
  \wedge
  \bigwedge_{\tau'=\_,\cnai(j',0)\atop{\taub \potau \tau' \potau \taut}}
  \neg\accept_{\tau'}^\ID\\
  &\lra\;\;&&
  \futr{\tauttt}{\taua} \wedge
  \incaccept_\taub^\ID
  \wedge
  \bigwedge_{\tau'=\_,\cnai(j',1)
    \atop{\text{or }\tau'=\_,\pnai(j',1)
      \atop{\taub \potau \tau' \potau \taut}}}
  \neg\incaccept_{\tau'}^\ID
  \wedge
  \bigwedge_{\tau'=\_,\cnai(j',0)\atop{\taub \potau \tau' \potau \taut}}
  g(\inframe_{\tau'}) \ne \suci^{j_a}  
\end{alignat*}
For every $\taun = \_,\cnai(\_,1)$ or $\_,\pnai(\_,1)$, we know that $\sqn_\hn^\ID$ is incremented at $\taun$ if and only if $\incaccept_\taun^\ID$ is true. Therefore:
\begin{alignat*}{2}
  \incaccept_\taun^\ID\;\;
  &\lra\;\;&&
  \instate_\taun(\sqn_\hn^\ID) < \cstate_\taun(\sqn_\hn^\ID)
\end{alignat*}
Using the fact that $\instate_\taun(\sqn_\ue^\ID) = \cstate_\taun(\sqn_\ue^\ID)$, we can rewrite this as:
\begin{alignat*}{2}
  \incaccept_\taun^\ID\;\;
  &\lra\;\;&&
  \instate_\taun(\sqn_\hn^\ID) - \instate_\taun(\sqn_\ue^\ID) <
  \cstate_\taun(\sqn_\hn^\ID) - \cstate_\taun(\sqn_\ue^\ID)
\end{alignat*}
Using this remark we can show that:
\begin{alignat*}{2}
  &&& \futr{\tauttt}{\taua} \wedge
  \cstate_\tauttt(\suci_\ue^\ID) =  \instate_\taut(\suci_\hn^\ID)\\
  &\lra\;\;&&
  \futr{\tauttt}{\taua} \wedge
  \left(
    \begin{alignedat}{2}
      &&&\instate_\taub(\sqn_\hn^\ID) - \instate_\taub(\sqn_\ue^\ID)\\
      &<\;\;&&
      \cstate_\taub(\sqn_\hn^\ID) - \cstate_\taub(\sqn_\ue^\ID)
    \end{alignedat}
  \right)
  \wedge
  \left(
    \begin{alignedat}{2}
      &&&\cstate_\taub(\sqn_\hn^\ID) - \cstate_\taub(\sqn_\ue^\ID)\\
      &=\;\;&&
      \instate_\taut(\sqn_\hn^\ID) - \instate_\taut(\sqn_\ue^\ID)
    \end{alignedat}
  \right)
  \wedge
  \;\;\bigwedge_{
    \mathclap{\tau'=\_,\cnai(j',0)}
    \atop{\mathclap{\taub \potau \tau' \potau \taut}}}\;\;
  g(\inframe_{\tau'}) \ne \suci^{j_a}
  \numberthis\label{eq:vdsuovhwioqirjq}
\end{alignat*}
Doing exactly the same reasoning, we show that:
\begin{alignat*}{2}
  &&&
  \futr{\utauttt}{\utaua} \wedge
  \cstate_\utauttt(\suci_\ue^\uID) =  \instate_\utaut(\suci_\hn^\uID)\\
  &\lra\;\;&&
  \futr{\utauttt}{\utaua} \wedge
  \left(
    \begin{alignedat}{2}
      &&&\instate_\utaub(\sqn_\hn^\uID) - \instate_\utaub(\sqn_\ue^\uID)\\
      &<\;\;&&
      \cstate_\utaub(\sqn_\hn^\uID) - \cstate_\utaub(\sqn_\ue^\uID)
    \end{alignedat}
  \right)
  \wedge
  \left(
    \begin{alignedat}{2}
      &&&\cstate_\utaub(\sqn_\hn^\uID) - \cstate_\utaub(\sqn_\ue^\uID)\\
      &=\;\;&&
      \instate_\utaut(\sqn_\hn^\uID) - \instate_\utaut(\sqn_\ue^\uID)
    \end{alignedat}
  \right)
  \wedge
  \;\;\bigwedge_{
    \mathclap{\tau'=\_,\cnai(j',0)}
    \atop{\mathclap{\taub \potau \tau' \potau \taut}}}\;\;
  g(\inframe_{\utau'}) \ne \suci^{j_a}
  \numberthis\label{eq:qeufohvivspiovad}
\end{alignat*}
We introduce some notation that will be used later: for every symbolic trace $\tau = \tauo,\ai$ and identity $\ID$, we let $\syncdiffin_\tau^\ID \equiv \syncdiff_\tauo^\ID$.

We now split the proof in two, depending on whether $\instate_\taub(\sync_\ue^\ID)$ is true or false. Let $\psi \equiv \futr{\tauttt}{\taua}\wedge \instate_\tautt(\suci_\ue^\ID) = \instate_\taut(\suci_\hn^\ID)$ and $\upsi \equiv \futr{\utauttt}{\utaua}\wedge\instate_\utautt(\suci_\ue^\uID) = \instate_\utaut(\suci_\hn^\uID)$. Using the fact that:
\[
  \left(\instate_\taub(\sync_\ue^\ID),\instate_\utaub(\sync_\ue^\uID)\right) \in \reveal_{\tauo}
\]
We can build the derivation:
\begin{equation*}
  \infer[\simp]{
    \inframe_\tau,
    \lreveal_{\tauo},
    \psi
    \;\sim\;
    \inframe_\utau,
    \rreveal_{\tauo},
    \upsi
  }{
    \infer[\dup]{
      \begin{alignedat}{2}
        &&&\inframe_\tau,
        \lreveal_{\tauo},
        \instate_\taub(\sync_\ue^\ID),
        \instate_\taub(\sync_\ue^\ID) \wedge
        \psi,
        \neg \instate_\taub(\sync_\ue^\ID) \wedge
        \psi\\
        &\sim\;\;&&
        \inframe_\utau,
        \rreveal_{\tauo},
        \instate_\utaub(\sync_\ue^\uID),
        \instate_\utaub(\sync_\ue^\uID)\wedge
        \upsi,
        \neg \instate_\utaub(\sync_\ue^\uID)\wedge
        \upsi
      \end{alignedat}
    }{
      \inframe_\tau,
      \lreveal_{\tauo},
      \instate_\taub(\sync_\ue^\ID) \wedge
      \psi,
      \neg \instate_\taub(\sync_\ue^\ID) \wedge
      \psi
      \;\sim\;
      \inframe_\utau,
      \rreveal_{\tauo},
      \instate_\utaub(\sync_\ue^\uID)\wedge
      \upsi,
      \neg \instate_\utaub(\sync_\ue^\uID)\wedge
      \upsi
    }
  }
  % \label{eq:wefiqjefasfjoc}
\end{equation*}
We now build a derivation of $\inframe_\tau, \lreveal_{\tauo}, \instate_\taub(\sync_\ue^\ID) \wedge \psi$ and one for $\inframe_\tau, \lreveal_{\tauo}, \neg \instate_\taub(\sync_\ue^\ID) \wedge \psi$:
\begin{itemize}
\item Using the fact that we have $\instate_\taub(\sync_\ue^\ID)$ and \eqref{eq:vdsuovhwioqirjq}, we know that:
  \begin{alignat*}{2}
    &&& \instate_\taub(\sync_\ue^\ID) \wedge
    \futr{\tauttt}{\taua} \wedge
    \cstate_\tauttt(\suci_\ue^\ID) =  \instate_\taut(\suci_\hn^\ID)\\
    &\lra\;\;&&
    \instate_\taub(\sync_\ue^\ID) \wedge
    \futr{\tauttt}{\taua} \wedge
    \left(
      \begin{alignedat}{2}
        &&&\syncdiffin_\taub^\ID\\
        &<\;\;&&
        \syncdiff_\taub^\ID
      \end{alignedat}
    \right)
    \wedge
    \left(
      \begin{alignedat}{2}
        &&&\syncdiff_\taub^\ID\\
        &=\;\;&&
        \syncdiffin_\taut^\ID
      \end{alignedat}
    \right)
    \wedge
    \;\;\bigwedge_{
      \mathclap{\tau'=\_,\cnai(j',0)}
      \atop{\mathclap{\taub \potau \tau' \potau \taut}}}\;\;
    g(\inframe_{\tau'}) \ne \suci^{j_a}
  \end{alignat*}
  Similarly, using \eqref{eq:qeufohvivspiovad} we get:
  \begin{alignat*}{2}
    &&&\instate_\utaub(\sync_\ue^\uID) \wedge
    \futr{\utauttt}{\utaua} \wedge
    \cstate_\utauttt(\suci_\ue^\uID) =  \instate_\utaut(\suci_\hn^\uID)\\
    &\lra\;\;&&
    \instate_\utaub(\sync_\ue^\uID) \wedge
    \futr{\utauttt}{\utaua} \wedge
    \left(
      \begin{alignedat}{2}
        &&&\syncdiffin_\utaub^\uID\\
        &<\;\;&&
        \syncdiff_\utaub^\uID
      \end{alignedat}
    \right)
    \wedge
    \left(
      \begin{alignedat}{2}
        &&&\syncdiff_\utaub^\uID\\
        &=\;\;&&
        \syncdiffin_\utaut^\uID
      \end{alignedat}
    \right)
    \wedge
    \;\;\bigwedge_{
      \mathclap{\tau'=\_,\cnai(j',0)}
      \atop{\mathclap{\taub \potau \tau' \potau \taut}}}\;\;
    g(\inframe_{\utau'}) \ne \suci^{j_a}
  \end{alignat*}
  Moreover, we know that:
  \begin{mathpar}
    \left(
      \left(
        \suci^{j_a},\suci^{j_a}
      \right) \in \reveal_{\tauo}
    \right)_{\tau'=\_,\cnai(j',0)
      \atop{\taub \potau \tau' \potau \taut}}
    
    \left(
      \syncdiffin_\taut^\ID,
      \syncdiffin_\utaut^\uID
    \right) \in \reveal_{\tauo}

    \left(
      \syncdiffin_\taub^\ID,
      \syncdiffin_\utaub^\uID
    \right) \in \reveal_{\tauo}

    \left(
      \syncdiff_\taub^\ID,
      \syncdiff_\utaub^\uID
    \right) \in \reveal_{\tauo}

    \left(
      \instate_\taub(\sync_\ue^\ID),
      \instate_\utaub(\sync_\ue^\uID)
    \right) \in \reveal_{\tauo}    
  \end{mathpar}
  And using \ref{der2}, we know that we have a derivation of:
  \[
    \infer[\simp]{
      \inframe_\tau,
      \lreveal_{\tauo},
      \futr{\tauttt}{\taua}
      \sim\;\;
      \inframe_\utau,
      \rreveal_{\tauo},
      \futr{\tauttt}{\taua}
    }{
      \inframe_\tau, \lreveal_{\tauo}
      \sim
      \inframe_\utau,\rreveal_{\tauo}
    }
  \]
  Using this, we can rewrite $\instate_\taub(\sync_\ue^\ID) \wedge \psi$ and $\instate_\utaub(\sync_\ue^\uID)\wedge \upsi$ as two terms that decompose, using $\fa$, into matching part of $\reveal_{\tauo}$. By consequence we can build the following derivation:
  \[
    \begin{gathered}[c]
      \infer[\simp]{
        \inframe_\tau,
        \lreveal_{\tauo},
        \instate_\taub(\sync_\ue^\ID) \wedge
        \psi
        \;\sim\;
        \inframe_\utau,
        \rreveal_{\tauo},
        \instate_\utaub(\sync_\ue^\uID)\wedge
        \upsi
      }{
        \inframe_\tau,
        \lreveal_{\tauo}
        \psi
        \;\sim\;
        \inframe_\utau,
        \rreveal_{\tauo}
      }
    \end{gathered}
    \numberthis\label{eq:vnsduvuiogquigw}
  \]

\item We now focus on the case where we have $\neg \instate_\taub(\sync_\ue^\ID)$.

  First, assume that $\taub = \_,\cnai(j_a,1)$. In that case, we know that $\futr{\tauttt}{\taua} \ra \accept_\taub^\ID$. Since $\accept_\taub^\ID \ra \instate_\taub(\sync_\ue^\ID)$, we get that $(\neg \instate_\taub(\sync_\ue^\ID) \wedge \psi) \lra \false$. Similarly we have $(\neg \instate_\utaub(\sync_\ue^\uID) \wedge \upsi) \lra \false$. By consequence, we have a trivial derivation:
  \[
    \infer[\simp]{
      \inframe_\tau,
      \lreveal_{\tauo},
      \neg \instate_\taub(\sync_\ue^\ID) \wedge
      \psi
      \;\sim\;
      \inframe_\utau,
      \rreveal_{\tauo},
      \neg \instate_\utaub(\sync_\ue^\uID)\wedge
      \upsi 
    }{
      \infer[\fa]{
        \inframe_\tau,
        \lreveal_{\tauo},
        \false
        \;\sim\;
        \inframe_\utau,
        \rreveal_{\tauo},
        \false
      }{
        \inframe_\tau,
        \lreveal_{\tauo}
        \;\sim\;
        \inframe_\utau,
        \rreveal_{\tauo}
      }
    }
  \]

  Now assume that $\taub = \_, \pnai(j_a,1)$. Since $\tauttt = \_,\fuai_\ID(j_0) \popre \tau$, we know by validity of $\tau$ there there exists $\tau' = \_,\npuai{2}{\ID}{j_0}$ or $\_,\cuai_\ID(j_0,1)$ such that $\tau' \potau \tauttt$. It is straightforward to check that if $\tau' = \_,\cuai_\ID(j_0,1)$ then since $\taub = \_, \pnai(j_a,1)$ we have $\futr{\tauttt}{\taua} \lra \false$ and $\futr{\utauttt}{\utaua} \lra \false$. Building the wanted derivation is then trivial.

  Therefore assume that $\tau' = \_,\npuai{2}{\ID}{j_0}$. Observe that $\futr{\tauttt}{\taua} \ra \accept^\ID_{\tau'}$. We have two cases:
  \begin{itemize}
  \item Assume $\tau' \potau \taub$. Using \ref{equ2}, we know that:
    \begin{alignat*}{2}
      \accept_{\tau'}^\ID
      &\;\ra\;\;&&
      \bigvee_{\taun = \_,\pnai(j_n,1)\atop{\taux \potau \taun \potau \tau'}}
      \supitr{\taux,\tau'}{\taun}\\
      &\;\ra\;\;&&
      \bigvee_{\taun = \_,\pnai(j_n,1)\atop{\taux \potau \taun \potau \tau'}}
      g(\inframe_\taux) = \nonce^{j_n}\\
      &\;\ra\;\;&&
      \bigvee_{\taun = \_,\pnai(j_n,1)\atop{\taux \potau \taun \potau \tau'}}
      \cstate_\taux^\ID(\bauth_\ue^\ID) = \nonce^{j_n}\\
      &\;\ra\;\;&&
      \cstate_\taux^\ID(\bauth_\ue^\ID) \ne \nonce^{j_a}
      \tag{Since $\tau' \potau \taub$}
    \end{alignat*}
    Moreover:
    \[
      \futr{\tauttt}{\taua} \ra \cstate_{\tau'}^\ID(\eauth_\ue^\ID) = \nonce^{j_a}
      \ra \cstate_{\taux}^\ID(\bauth_\ue^\ID) = \nonce^{j_a}
    \]
    Therefore $\futr{\tauttt}{\taua} \ra \false$. Similarly we can show that $\futr{\utauttt}{\utaua} \ra \false$. It is then easy build the wanted derivation.

  \item Assume $\taub \potau \tau'$. We summarize graphically the situation below:
    \begin{center}
      \begin{tikzpicture}
        [dn/.style={inner sep=0.2em,fill=black,shape=circle},
        sdn/.style={inner sep=0.15em,fill=white,draw,solid,shape=circle},
        sl/.style={decorate,decoration={snake,amplitude=1.6}},
        dl/.style={dashed},
        pin distance=0.5em,
        every pin edge/.style={thin}]

        \draw[thick] (0,0)
        -- ++(0.5,0)
        node[dn,pin={above,align=left:{$\ns_\ID(\_)$\\or $\epsilon$}}] {}
        node[below=0.3em,name=a]{$\tau_\ns$}
        -- ++(2,0)
        node[dn,pin={above:{$\npuai{1}{\ID}{j_0}$}}] {}
        node[below=0.3em,name=b]{$\taux$}
        -- ++(2.5,0)
        node[dn,pin={above:{$\pnai(j_a,1)$}}] {}
        node[below=0.3em,name=c0]{$\taub$}
        -- ++(2.5,0)
        node[dn,pin={above:{$\npuai{2}{\ID}{j_0}$}}] {}
        node[below=0.3em,name=c]{$\tau'$}
        -- ++(2,0)
        node[dn,pin={above:{$\fuai_\ID(j_0)$}}] {}
        node[below=0.3em,name=d]{$\tauttt$}
        -- ++(2,0)
        node[dn,pin={above:{$\cnai(j_1,0)$}}] {}
        node[below=0.3em,name=f]{$\taut$}
        -- ++(2,0)
        node[dn,pin={above:{$ \cuai_\ID(j,1) $}}] {}
        node[below=0.3em,name=g]{$\tau$};

        \draw[thin,dashed] (b) -- ++(0,-0.5) node (spe) {} -| (c0)
        {[draw=none] -- ++(0,-0.5)} -| (c)
        let \p1 = (c) in
        let \p2 = (spe) in
        (\x1,\y2) -| (d);
        % (g) -| (d);
      \end{tikzpicture}
    \end{center}
    First, since there are no $\ID$ actions between $\taub$ and $\tau'$, we know that $\neg \instate_\taub(\sync_\ue^\ID) \ra \neg \instate_{\tau'}(\sync_\ue^\ID)$. Recall that $\futr{\tauttt}{\taua} \ra \accept^\ID_{\tau'}$. Using \ref{equ2}, it is simple to check that $\futr{\tauttt}{\taua} \wedge \accept^\ID_{\tau'} \ra \supitr{\taux,\tau'}{\taub}$. Therefore:
    \begin{alignat*}{2}
      \left(
        \neg \instate_\taub(\sync_\ue^\ID) \wedge
        \futr{\tauttt}{\taua}
      \right)
      \;&\ra\;\;&&
      \neg \instate_{\tau'}(\sync_\ue^\ID) \wedge
      \accept^\ID_{\tau'}\\
      \;&\ra\;\;&&
      \incaccept_\taub^\ID
      \begin{alignedat}[t]{2}
        &\;\wedge\;&&
        \instate_{\tau'}(\sqn_\hn^\ID) - \cstate_{\taub}(\sqn_\hn^\ID) = \zero\\
        &\;\wedge\;&&
        \cstate_{\tau'}(\sqn_\ue^\ID) - \cstate_{\tau'}(\sqn_\hn^\ID) = \zero
      \end{alignedat}
      \tag{Using \ref{sequ4}}
    \end{alignat*}
    Using again the fact that there are no $\ID$ actions between $\taub$ and $\tau'$, we know that $\instate_\taub(\sqn_\ue^\ID) \equiv \instate_{\tau'}(\sync_\ue^\ID)$. Moreover $\instate_{\tau'}(\sync_\ue^\ID) \equiv \cstate_{\tau'}(\sync_\ue^\ID)$, therefore $\instate_\taub(\sqn_\ue^\ID) = \cstate_{\tau'}(\sync_\ue^\ID)$. Similarly, we know that $\cstate_{\tau'}(\sqn_\hn^\ID) \equiv \instate_{\tau'}(\sqn_\hn^\ID)$. Summarizing:
    \begin{center}
      \begin{tikzpicture}
        [dn/.style={inner sep=0.2em,fill=black,shape=circle},
        sdn/.style={inner sep=0.15em,fill=white,draw,solid,shape=circle},
        sl/.style={decorate,decoration={snake,amplitude=1.6}},
        dl/.style={dashed},
        pin distance=0.5em,
        every pin edge/.style={thin}]

        \draw[thick] (0,0)
        -- ++(0.5,0)
        node[dn,pin={above,align=left:{$\ns_\ID(\_)$\\or $\epsilon$}}] {}
        node[below=0.3em,name=a]{$\tau_\ns$}
        -- ++(2,0)
        node[dn,pin={above:{$\npuai{1}{\ID}{j_0}$}}] {}
        node[below=0.3em,name=b]{$\taux$}
        -- ++(2.5,0)
        node[dn,pin={above:{$\pnai(j_a,1)$}}] {}
        node[below=0.3em,name=c0]{$\taub$}
        -- ++(2.5,0)
        node[dn,pin={above:{$\npuai{2}{\ID}{j_0}$}}] {}
        node[below=0.3em,name=c]{$\tau'$}
        -- ++(2,0)
        node[dn,pin={above:{$\fuai_\ID(j_0)$}}] {}
        node[below=0.3em,name=d]{$\tauttt$}
        -- ++(2,0)
        node[dn,pin={above:{$\cnai(j_1,0)$}}] {}
        node[below=0.3em,name=f]{$\taut$}
        -- ++(2,0)
        node[dn,pin={above:{$ \cuai_\ID(j,1) $}}] {}
        node[below=0.3em,name=g]{$\tau$};

        \draw[thin,dashed] (b) -- ++(0,-0.5) node (spe) {} -| (c0)
        {[draw=none] -- ++(0,-0.5)} -| (c)
        let \p1 = (c) in
        let \p2 = (spe) in
        (\x1,\y2) -| (d);

        \path (c0) -- ++ (0,-1.3)
        node (c01) {$\instate_{\taub}(\sqn_\hn^\ID)$}
        -- ++ (0,-1)
        node (c02) {$\instate_{\taub}(\sqn_\ue^\ID)$};
        
        \path (c) -- ++ (0,-1.3)
        node (c1) {$\cstate_{\tau'}(\sqn_\hn^\ID)$}
        -- ++ (0,-1)
        node (c2) {$\cstate_{\tau'}(\sqn_\ue^\ID)$};
        
        \draw (c01) -- (c1) node[midway,above,sloped]{$=$}
        (c1) -- (c2) node[midway,above,sloped]{$=$}
        (c02) -- (c2) node[midway,above,sloped]{$=$};
      \end{tikzpicture}
    \end{center}
    Using the fact that we have $\neg\instate_\taub(\sync_\ue^\ID)$ and \eqref{eq:vdsuovhwioqirjq}, we know that:
    \begin{alignat*}{2}
      &&& \neg\instate_\taub(\sync_\ue^\ID) \wedge
      \futr{\tauttt}{\taua} \wedge
      \cstate_\tauttt(\suci_\ue^\ID) =  \instate_\taut(\suci_\hn^\ID)\\
      &\lra\;\;&&
      \neg\instate_\taub(\sync_\ue^\ID) \wedge
      \futr{\tauttt}{\taua} \wedge
      \incaccept_\taub^\ID
      \wedge
      \left(
        \begin{alignedat}{2}
          &&&\cstate_{\tau'}(\sqn_\hn^\ID) - \cstate_{\tau'}(\sqn_\ue^\ID)\\
          &=\;\;&&
          \instate_\taut(\sqn_\hn^\ID) - \instate_\taut(\sqn_\ue^\ID)
        \end{alignedat}
      \right)
      \wedge
      \;\;\bigwedge_{
        \mathclap{\tau'=\_,\cnai(j',0)}
        \atop{\mathclap{\taub \potau \tau' \potau \taut}}}\;\;
      g(\inframe_{\tau'}) \ne \suci^{j_a}
    \end{alignat*}
    Besides, $\accept_{\tau'}^\ID \ra \cstate_{\tau'}(\sync_\ue^\ID)$, and since $\tau' \potau \taut$ we know that $\cstate_{\tau'}(\sync_\ue^\ID) \ra \instate_{\taut}(\sync_\ue^\ID)$. Hence:
    \begin{alignat*}{2}
      &&& \neg\instate_\taub(\sync_\ue^\ID) \wedge
      \futr{\tauttt}{\taua} \wedge
      \cstate_\tauttt(\suci_\ue^\ID) =  \instate_\taut(\suci_\hn^\ID)\\
      &\lra\;\;&&
      \neg\instate_\taub(\sync_\ue^\ID) \wedge
      \futr{\tauttt}{\taua} \wedge
      \incaccept_\taub^\ID
      \wedge
      \syncdiff_{\tau'}^\ID =
      \syncdiffin_\taut^\ID
      \wedge
      \;\;\;\;\bigwedge_{
        \mathclap{\tau'=\_,\cnai(j',0)}
        \atop{\mathclap{\taub \potau \tau' \potau \taut}}}\;\;\;\;
      g(\inframe_{\tau'}) \ne \suci^{j_a}
    \end{alignat*}
    Similarly we have:
    \begin{alignat*}{2}
      &&& \neg\instate_\utaub(\sync_\ue^\uID) \wedge
      \futr{\utauttt}{\utaua} \wedge
      \cstate_\utauttt(\suci_\ue^\uID) =  \instate_\utaut(\suci_\hn^\uID)\\
      &\lra\;\;&&
      \neg\instate_\utaub(\sync_\ue^\uID) \wedge
      \futr{\utauttt}{\utaua} \wedge
      \incaccept_\utaub^\uID
      \wedge
      \syncdiff_{\utau'}^\uID =
      \syncdiffin_\utaut^\uID
      \wedge
      \;\;\;\;\bigwedge_{
        \mathclap{\tau'=\_,\cnai(j',0)}
        \atop{\mathclap{\taub \potau \tau' \potau \taut}}}\;\;\;\;
      g(\inframe_{\utau'}) \ne \suci^{j_a}
    \end{alignat*}
    And using \ref{der2}, we know that we have a derivation of:
    \[
      \infer[\simp]{
        \inframe_\tau,
        \lreveal_{\tauo},
        \futr{\tauttt}{\taua}
        \sim\;\;
        \inframe_\utau,
        \rreveal_{\tauo},
        \futr{\tauttt}{\taua}
      }{
        \inframe_\tau, \lreveal_{\tauo}
        \sim
        \inframe_\utau,\rreveal_{\tauo}
      }
    \]
    Moreover, we know that:
    \begin{mathpar}
      \left(
        \left(
          \suci^{j_a},\suci^{j_a}
        \right) \in \reveal_{\tauo}
      \right)_{\tau'=\_,\cnai(j',0)
        \atop{\taub \potau \tau' \potau \taut}}
      
      \left(
        \syncdiffin_\taut^\ID,
        \syncdiffin_\utaut^\uID
      \right) \in \reveal_{\tauo}

      \left(
        \syncdiff_{\tau}^\ID,
        \syncdiff_{\utau'}^\uID
      \right) \in \reveal_{\tauo}

      \left(
        \instate_\taub(\sync_\ue^\ID),
        \instate_\utaub(\sync_\ue^\uID)
      \right) \in \reveal_{\tauo}    
    \end{mathpar}
    Similarly to what we did in \eqref{eq:vnsduvuiogquigw}, we can rewrite $\neg\instate_\taub(\sync_\ue^\ID) \wedge \psi$ and $\neg\instate_\utaub(\sync_\ue^\uID)\wedge \upsi$ as two terms that decompose, using $\fa$, into matching part of $\reveal_{\tauo}$. By consequence we can build the following derivation:
    \[
      \begin{gathered}[c]
        \infer[\simp]{
          \inframe_\tau,
          \lreveal_{\tauo},
          \neg\instate_\taub(\sync_\ue^\ID) \wedge
          \psi
          \;\sim\;
          \inframe_\utau,
          \rreveal_{\tauo},
          \neg\instate_\utaub(\sync_\ue^\uID)\wedge
          \upsi
        }{
          \inframe_\tau,
          \lreveal_{\tauo}
          \psi
          \;\sim\;
          \inframe_\utau,
          \rreveal_{\tauo}
        }
      \end{gathered}
    \]
  \end{itemize}
\end{itemize}

\paragraph{Part 2 (Dots)}

Using \ref{sequ2} we know that $\ptr{\tautt,\tau}{\taut} \ra \accept_\taut^{\ID}$. Therefore, using \ref{a6} we get that $\ptr{\tautt,\tau}{\taut} \ra \neg\accept_\taut^{\ID'}$ for every $\ID' \ne \ID$. It follows that $\ptr{\tautt,\tau}{\taut} \ra t_\taut = \textsf{msg}_\taut^\ID$, and therefore:
\[
  \ptr{\tautt,\tau}{\taut}\;\ra\;
  \pi_2(t_\taut) =
  \instate_\taut(\sqn_\hn^{\ID})
  \oplus \ow{\nonce^{j_1}}{\key^{\ID}}
  \numberthis\label{eq:gwouhsdifjpoifjrqepr}
\]
And:
\[
  \ptr{\tautt,\tau}{\taut}\;\ra\;
  \pi_3(t_\taut) =
  \mac{\striplet
  {\nonce^{j_1}}
  {\instate_\taut(\sqn_\hn^\ID)}
  {\instate_\taut(\suci_\ue^\ID)}}
  {\mkey^\ID}{3}
\]
Moreover, since no action from agent $\ID$ occurs between $\tautt$ and $\taut$, we know that $\instate_\taut(\suci_\ue^{\ID}) = \instate_\tautt(\suci_\ue^{\ID})$. Hence:
\[
  \ptr{\tautt,\tau}{\taut}\;\ra\;
  \pi_3(t_\taut) =
  \mac{\striplet
  {\nonce^{j_1}}
  {\instate_\taut(\sqn_\hn^\ID)}
  {\instate_\tautt(\suci_\ue^\ID)}}
  {\mkey^\ID}{3}
  \numberthis\label{eq:fdjslvieruqwrirupq}
\]
Therefore using \eqref{eq:gwouhsdifjpoifjrqepr} and \eqref{eq:fdjslvieruqwrirupq} we can rewrite $\ptr{\tautt,\tau}{\taut}$ as follows:
\begin{alignat*}{2}
  \ptr{\tautt,\tau}{\taut}
  &\;=\;\;&&
  \left(
    \begin{alignedat}[c]{2}
      &&&
      \pi_1(g(\inframe_\tau)) = \nonce^{j_1}
      \;\wedge\;
      \pi_2(g(\inframe_\tau)) =
      \dotuline{\pi_2(t_\taut)} 
      \;\wedge\;
      \pi_3(g(\inframe_\tau)) =
      \dotuline{\pi_3(t_\taut)}\\
      &\wedge\;\;&&
      g(\inframe_\taut) = \dashuline{\instate_\tautt(\suci_\ue^\ID)}
      \;\wedge\;
      \uwave{\instate_\tautt(\suci_\ue^\ID)
        = \instate_\taut(\suci_\hn^\ID)}
      \;\wedge\;
      \dashuline{\instate_\tautt(\success_\ue^\ID)}\\
      &\wedge\;\;&&
      \dashuline{\range{\instate_\tau(\sqn_\ue^\ID)}
        {\instate_\taut(\sqn_\hn^\ID)}}
    \end{alignedat}
  \right)
\end{alignat*}
By a similar reasoning we rewrite $\ptr{\utautt,\utau}{\utaut}$ as follows:
\begin{alignat*}{2}
  \ptr{\utautt,\utau}{\utaut}
  &\;\equiv\;\;&&
  \left(
    \begin{alignedat}{2}
      &&&
      \pi_1(g(\inframe_\utau)) = \nonce^{j_1}
      \;\wedge\;
      \pi_2(g(\inframe_\utau)) =
      \dotuline{\pi_2(t_\utaut)}
      \;\wedge\;
      \pi_3(g(\inframe_\utau)) =
      \dotuline{\pi_3(t_\utaut)}\\
      &\wedge\;\;&&
      g(\inframe_\utaut) = \dashuline{\instate_\utautt(\suci_\ue^{\uID})}
      \;\wedge\;
      \uwave{\instate_\utautt(\suci_\ue^{\uID}) =
        \instate_\utaut(\suci_\hn^{\uID})}
      \;\wedge\;
      \dashuline{\instate_\utautt(\success_\ue^{\uID})}\\
      &\wedge\;\;&&
      \dashuline{\range{\instate_\utau(\sqn_\ue^{\uID})}
        {\instate_\utaut(\sqn_\hn^{\uID})}}
    \end{alignedat}
  \right)
\end{alignat*}

\paragraph{Part 3 (Dash)}
Since $\ptr{\tautt,\tau}{\taut} \ra \instate_\tautt(\success_\ue^\ID)$ we know that:
\[
  \ptr{\tautt,\tau}{\taut}
  \;\ra\;
  \instate_\tautt(\suci_\ue^\ID)
  \;=\;
  \msuci_\tau^\ID
\]
Besides, as $\instate_\tautt(\success_\ue^\ID) \ra \instate_\tautt(\sync_\ue^\ID)$, and since $\instate_\tautt(\success_\ue^\ID) \ra \instate_\taut(\success_\ue^\ID)$ (because $\tautt \potau \taut$ and $\tautt \not \potau \ns_\ID(\_)$), we know that:
\[
  \ptr{\tautt,\tau}{\taut} \ra
  \left(
    \range{\instate_\tau(\sqn_\ue^\ID)}
    {\instate_\taut(\sqn_\hn^\ID)}
    \lra
    \left(
      \instate_\taut(\success_\ue^\ID) \wedge
      \instate_\tau(\sqn_\ue^\ID) =
      \instate_\taut(\sqn_\hn^\ID)
    \right)
  \right)
\]
Similarly we have:
\begin{gather*}
  \ptr{\utautt,\utau}{\utaut}
  \;\ra\;
  \instate_\utautt(\suci_\ue^{\uID})
  \;=\;
  \msuci_\utau^{\uID}\\
  \ptr{\utautt,\utau}{\utaut} \ra
  \left(
    \range{\instate_\utau(\sqn_\ue^\uID)}
    {\instate_\utaut(\sqn_\hn^\uID)}
    \lra
    \left(
      \instate_\utaut(\success_\ue^\uID) \wedge
      \instate_\utau(\sqn_\ue^\uID) =
      \instate_\utaut(\sqn_\hn^\uID)
    \right)
  \right)
\end{gather*}
Moreover:
\begin{mathpar}
  \left(
    \msuci_\tau^\ID
    \;\;\sim\;\;
    \msuci_\utau^{\uID}
  \right)
  \in \reveal_{\tauo}

  \left(
    \instate_\tautt(\success_\ue^\ID)
    \;\;\sim\;\;
    \instate_\utautt(\success_\ue^{\uID})
  \right)
  \in \reveal_{\tauo}
\end{mathpar}
Finally, using \ref{der1}, we know that we have a derivation of:
\[
  \infer[\fa^*]{
    \lreveal_{\tauo},
    \instate_\taut(\success_\ue^\ID) \wedge
    \instate_\tau(\sqn_\ue^\ID) =
    \instate_\taut(\sqn_\hn^\ID)  
    \;\sim\;
    \rreveal_{\tauo},
    \instate_\utaut(\success_\ue^\uID) \wedge
    \instate_\utau(\sqn_\ue^\uID) =
    \instate_\utaut(\sqn_\hn^\uID)
  }{
    \lreveal_{\tauo}
    \;\sim\;
    \rreveal_{\tauo}
  }
\]

\paragraph{Part 4 (conclusion)}
To conclude, we combine the derivations of {Part~1}, {Part~2} and {Part~3}.

\subsection*{Proof of \ref{der4}}
\begin{center}
  \begin{tikzpicture}
    [dn/.style={inner sep=0.2em,fill=black,shape=circle},
    sdn/.style={inner sep=0.15em,fill=white,draw,solid,shape=circle},
    sl/.style={decorate,decoration={snake,amplitude=1.6}},
    dl/.style={dashed},
    pin distance=0.5em,
    every pin edge/.style={thin}]

    \draw[thick] (0,0)
    node[left=1.3em] {$\tau:$}
    -- ++(0.5,0)
    node[dn,pin={above:{$\cuai_\ID(j_i,0)$}}] {}
    node[below=0.3em]{$\tautt$}
    -- ++(2.5,0)
    node[dn,pin={above:{$\cnai(j,0)$}}] {}
    node[below=0.3em]{$\taut$}
    -- ++(2.5,0)
    node[dn,pin={above:{$\cuai_\ID(j_i,1)$}}] {}
    node[below=0.3em]{$\taui$}
    -- ++(2.5,0)
    node[dn,pin={above:{$\cnai(j,1) $}}] {}
    node[below=0.3em]{$\tau$};
  \end{tikzpicture}
\end{center}
Recall that:
\begin{alignat*}{2}
  \ftr{\tautt,\taui}{\taut,\tau}
  &\;\equiv\;\;&&
  \left(
    \ptr{\tautt,\taui}{\taut} \wedge
    g(\inframe_\tau) = \mac{\nonce^j}{\mkey^\ID}{4}
  \right)
\end{alignat*}
The fact that $\utautt = \_,\cuai_{\nu_\taut(\ID)}(j_i,0)$, $\utaui = \_,\cuai_{\nu_\taut(\ID)}(j_i,1)$ and $\utautt <_\utau \utaut <_\utau \utaui$ is straightforward from~$\ref{der3}$. It is easy to check that:
\begin{alignat*}{2}
  \ftr{\utautt,\utaui}{\utaut,\utau}
  &\;\equiv\;\;&&
  \left(
    \ptr{\utautt,\utaui}{\utaut} \wedge
    g(\inframe_\utau) = \mac{\nonce^j}{\mkey^{\nu_\taut(\ID)}}{4}
  \right)
\end{alignat*}
Moreover:
\[
  \left(
    \mac{\nonce^j}{\mkey^\ID}{4},
    \mac{\nonce^j}{\mkey^{\nu_\taut(\ID)}}{4}
  \right) \in
  \reveal_{\tauo}
\]
And, using \ref{der3}, we know that there exists a derivation using only $\fa$ and $\dup$ of:
\[
  \inframe_\tau, \lreveal_{\tauo}
  \sim
  \inframe_\utau,\rreveal_{\tauo}
  \;\ra\;
  \left(
    \begin{alignedat}{4}
      &&&
      \inframe_\tau,&&
      \lreveal_{\tauo},&&
      \ptr{\tautt,\tau}{\taut}\\
      &\sim\;\;&&
      \inframe_\utau,&&
      \rreveal_{\tauo},&&
      \ptr{\utautt,\utau}{\utaut}
    \end{alignedat}
  \right)
\]
It is therefore easy to built the wanted derivation using only $\fa$ and $\dup$.

\subsection*{Proof of \ref{der2}}
We recall that:
\begin{gather*}
  \futr{\tau}{\taut} \;\equiv\;
  \left(\begin{alignedat}{2}
      &\injauth_\tau(\ID,j_0) 
      \wedge\instate_\tau(\eauth_\hn^{j_0}) \ne \unknownid\\
      \wedge\;& \pi_1(g(\inframe_\tau)) =
      \suci^{j_0} \xor \row{\nonce^{j_0}}{\key}
      \wedge\; \pi_2(g(\inframe_\tau)) =
      \mac{\spair
        {\suci^{j_0}}
        {\nonce^{j_0}}}
      {\mkey}{5}
    \end{alignedat}\right)\\
  \futr{\utau}{\utaut} \;\equiv\;
  \left(\begin{alignedat}{2}
      &\injauth_\utau({\nu_\tau(\ID)},j_0) 
      \wedge\instate_\utau(\eauth_\hn^{j_0}) \ne \unknownid\\
      \wedge\;& \pi_1(g(\inframe_\utau)) =
      \suci^{j_0} \xor \row{\nonce^{j_0}}{\key}
      \wedge\; \pi_2(g(\inframe_\utau)) =
      \mac{\spair
        {\suci^{j_0}}
        {\nonce^{j_0}}}
      {\mkey}{5}
    \end{alignedat}\right)\\
\end{gather*}
Let $j_0 \in \mathbb{N}$. Using Proposition~\ref{prop:injauth-charac} on $\tau$, we know that:
\[
  \injauth_\tau(\ID,j_0)
  \;\lra\;
  \nonce^{j_0} = \instate_\tau(\eauth_\ue^\ID)
  \numberthis\label{eq:i2-3}
\]
Similarly, using Proposition~\ref{prop:injauth-charac} on $\utau$ we have:
\[
  \injauth_\utau(\nu_\tau(\ID),j_0)
  \;\lra\;
  \nonce^{j_0} = \instate_\utau(\eauth_\ue^{\nu_\tau(\ID)})
  \numberthis\label{eq:i2-4}
\]
Let $\tauo$ be such that $\tau = \tauo,\ai$. It is straightforward to check that for any $n \in \mathbb{N}$:
\begin{gather*}
  \underbrace{
    \left(
      \cstate_\tauo(\eauth_\hn^{j_0}) = \unknownid
    \right)}
  _{\textsf{unk}}
  \;\lra\;
  \bigwedge_{1 \le i \le B}
  \neg \neauth_\tau(\agent{A}_i,j_0)\\
  \underbrace{
    \left(
      \cstate_\utauo(\eauth_\hn^{j_0}) = \unknownid
    \right)}
  _{\ufresh{\textsf{unk}}}
  \;\lra\;
  \bigwedge_{1 \le i \le B}
  \neg \uneauth_\utau(\agent{A}_i,j_0)
\end{gather*}
Since for all $1 \le i \le B$:
\[
  (\neauth_\tau(\agent{A}_i,j_0)
  \sim
  \uneauth_\utau(\agent{A}_i,j_0))
  \in \reveal_{\tauo}
\]
and since $\futr{\tau}{\taut} \wedge \textsf{unk} \;\ra\;\false$ and $\futr{\utau}{\utaut} \wedge \ufresh{\textsf{unk}} \;\ra\;\false$, we deduce that:
\[
  \infer[\fa^*]{
    \inframe_\tau,\lreveal_{\tauo},
    \futr{\tau}{\taut}
    \sim
    \inframe_\utau,\rreveal_{\tauo},
    \futr{\utau}{\utaut}
  }{
    \infer[R]{
      \begin{alignedat}{2}
        &&&\inframe_\tau,\lreveal_{\tauo},
        \textsf{unk},
        \futr{\tau}{\taut} \wedge \textsf{unk},
        \futr{\tau}{\taut} \wedge \neg\textsf{unk}\\
        &\sim&&
        \inframe_\utau,\rreveal_{\tauo},
        \ufresh{\textsf{unk}},
        \futr{\utau}{\utaut} \wedge \ufresh{\textsf{unk}},
        \futr{\utau}{\utaut} \wedge \neg \ufresh{\textsf{unk}}
      \end{alignedat}
    }{
      \infer[\dup^*]{
        \begin{alignedat}{2}
          &&&\inframe_\tau,\lreveal_{\tauo},
          \textsf{unk},
          \false,
          \futr{\tau}{\taut}\wedge \neg\textsf{unk}\\
          &\sim&&
          \inframe_\utau,\rreveal_{\tauo},
          \ufresh{\textsf{unk}},
          \false,
          \futr{\utau}{\utaut} \wedge \neg \ufresh{\textsf{unk}}
        \end{alignedat}
      }{
        \inframe_\tau,\lreveal_{\tauo},
        b_{j_i}\wedge \neg\textsf{unk}
        \sim
        \inframe_\utau,\rreveal_{\tauo},
        \futr{\utau}{\utaut} \wedge \neg \ufresh{\textsf{unk}}
      }
    }
  }
\]
From the definitions, we get that:
\[
  \left(\instate_\tau(\bauth_\hn^{j_0}) = \ID \right)
  \;\ra\;
  \left(
    \instate_\tau(\eauth_\hn^{j_0}) = \ID
    \vee
    \instate_\tau(\eauth_\hn^{j_0}) = \unknownid
  \right)
\]
Therefore:
\[
  \left(\futr{\tau}{\taut} \wedge \neg \textsf{unk}\right)
  \;\;\ra\;\;
  \instate_\tau(\eauth_\hn^{j_0}) = \ID
  \;\;\ra\;\;
  \neauth_\tau(\ID,j_0)
\]
Moreover:
\begin{equation*}
  \left(
    \neauth_\tau(\ID,j_0) \;\;\ra\;\;
    \left(
      \begin{alignedat}{2}
        & \suci^{j_0} \xor \row{\nonce^{j_0}}{{\key}} =
        \cond{\neauth_\tau(\ID,j_0)}
        {\tsuci_\tau(\ID,j_0)}\\
        \wedge\;& 
        \mac{\spair
          {\suci^{j_0}}
          {\nonce^{j_0}}}
        {{\mkey}}{5} =
        \cond{\neauth_\tau(\ID,j_0)}
        {\tmac_\tau(\ID,j_0)}
      \end{alignedat}\right)
  \right)
\end{equation*}
Using \eqref{eq:i2-3} and the observations above, we can rewrite $\futr{\tau}{\taut}\wedge \neg\textsf{unk}$ as follows:
\begin{alignat*}{3}
  \futr{\tau}{\taut}\wedge \neg\textsf{unk}
  &\;\;=\;\;&&
  \left(\begin{alignedat}{2}
      &\nonce^{j_0} = \instate_\tau(\eauth_\ue^\ID) 
      \;\wedge\;\neg\textsf{unk}\\
      \wedge\;& \pi_1(g(\inframe_\tau)) =
      \cond{\neauth_\tau(\ID,j_0)}
      {\tsuci_\tau(\ID,j_0)}\\
      \wedge\;& \pi_2(g(\inframe_\tau)) =
      \cond{\neauth_\tau(\ID,j_0)}
      {\tmac_\tau(\ID,j_0)}
    \end{alignedat}\right)
\end{alignat*}
Similarly, using \eqref{eq:i2-4}, we can rewrite $\futr{\utau}{\utaut} \wedge \neg \ufresh{\textsf{unk}}$ as follows:
\begin{alignat*}{2}
  \futr{\utau}{\utaut} \wedge \neg \ufresh{\textsf{unk}}
  &\;\;=\;\;&&
  \left(\begin{alignedat}{2}
      &\nonce^{j_0} = \instate_\utau(\eauth_\ue^{\nu_\tau(\ID)})
      \;\wedge\;\neg \ufresh{\textsf{unk}}\\
      \wedge\;& \pi_1(g(\inframe_\utau)) =
      \cond{\uneauth_\utau(\ID,j_0)}
      {\utsuci_\utau(\ID,j_0)}\\
      \wedge\;& \pi_2(g(\inframe_\utau)) =
      \cond{\uneauth_\utau(\ID,j_0)}
      {\utmac_\utau(\ID,j_0)}
    \end{alignedat}\right)
\end{alignat*}
We can now conclude the proof:
\[
  \infer[R+\fa^*]{
    \inframe_\tau,\lreveal_{\tauo},
    \futr{\tau}{\taut}\wedge \neg\textsf{unk}
    \sim
    \inframe_\utau,\rreveal_{\tauo},
    \futr{\utau}{\utaut} \wedge \neg \ufresh{\textsf{unk}}
  }{
    \infer[\dup^*]{
      \begin{alignedat}{2}
        &&&\inframe_\tau,\lreveal_{\tauo},
        \cstate_\tau(\eauth_\ue^\ID),
        \left(
          \begin{alignedat}{2}
            &\nonce^{j_0},\neg\textsf{unk},
            \suci^{j_0},\\
            & \cond{\neauth_\tau(\ID,j_0)}
            {\tsuci_\tau(\ID,j_0)},\\
            & \cond{\neauth_\tau(\ID,j_0)}
            {\tmac_\tau(\ID,j_0)}
          \end{alignedat}
        \right)\\
        &\sim&&
        \inframe_\utau,\rreveal_{\tauo},
        \cstate_\utau(\eauth_\ue^{\nu_\tau(\ID)}),
        \left(
          \begin{alignedat}{2}
            &\nonce^{j_0},\neg \ufresh{\textsf{unk}},
            \suci^{j_0},\\
            &\cond{\uneauth_\utau(\ID,j_0)}
            {\utsuci_\utau(\ID,j_0)},\\
            & \cond{\uneauth_\utau(\ID,j_0)}
            {\utmac_\utau(\ID,j_0)}
          \end{alignedat}
        \right)
      \end{alignedat}
    }{
      \inframe_\tau,\lreveal_{\tauo}
      \sim
      \inframe_\utau,\rreveal_{\tauo}
    }
  }
\]

%%% Local Variables:
%%% mode: latex
%%% TeX-master: "main"
%%% End:

\newpage
\section{Proof of Lemma~\ref{lem:unlink}}
\label{sec:unlink-proof}

The proof is by induction over $\tau$. For $\tau = \epsilon$, we just need to check that the elements from Item~\ref{item:ref} of Definition~\ref{def:app-ind-gen} are indistinguishable, which is obvious from the definition of $\cstate_\epsilon$ in Definition~\ref{def:init-sigma-phi}.

We now show the inductive case: let $\tau = \ai_0,\dots,\ai_n$ be a valid basic symbolic trace with at most $C$ actions $\newsession$, and let $\uain{0},\dots,\uain{n}$ be such that $\utau = \uain{0},\dots,\uain{n}$.  Also let $\tauo = \ai_0,\dots,\ai_{n-1}$ and $\utauo = \uain{0},\dots,\uain{n-1}$. We assume by induction that there exists a derivation of:
\[
  \inframe_\tau, \lreveal_{\tauo}
  \sim
  \inframe_\utau,\rreveal_{\tauo}
\]
We do a case disjunction on the value of $\ai$.

\subsection{Case $\ai = \newsession_\ID(j)$}
We know that $\uai = \newsession_{\nu_\utau(\ID)}(j)$ and $\nu_\tau(\ID) = \freshid(\nu_{\tauo}(\ID))$. Moreover, $\cframe_\tau \equiv \inframe_{\tau}$ and $\cframe_\utau \equiv \inframe_{\utau}$. Hence $\lreveal_{\tau}$ and $\lreveal_{\tauo}$ coincide everywhere except on:
\begin{mathpar}
  \cstate_\tau(\success_\ue^\ID) \sim
  \cstate_\utau(\success_\ue^{\nu_\tau(\ID)})

  \syncdiff_\tau^{\ID} \sim \syncdiff_\utau^{\nu_\tau(\ID)}

  \msuci^{\ID}_\tau \sim \msuci^{\nu_\tau(\ID)}_\utau
\end{mathpar}
We can easily conclude with the following derivation:
\begin{equation*}
  \infer[R]{
    \begin{alignedat}{2}
      &&&\inframe_\tau, \lreveal_{\tauo},
      \cstate_\tau(\success_\ue^\ID),
      % \cstate_\tau(\uetsuccess^\ID),
      \msuci^{\ID}_\tau,
      \syncdiff_\tau^{\ID}\\
      &\;\sim\;\;&&
      \inframe_\utau,\rreveal_{\tauo},
      \cstate_\utau(\success_\ue^{\nu_\tau(\ID)}),
      % \cstate_\utau(\uetsuccess^{\nu_\tau(\ID)}),
      \msuci^{\nu_\tau(\ID)}_\utau,
      \syncdiff_\utau^{\nu_\tau(\ID)}
    \end{alignedat}}
  {
    \infer[\simp]{
      \inframe_\tau, \lreveal_{\tauo},
      \false% ,\false
      ,\bot,\false
      \sim
      \inframe_\utau,\rreveal_{\tauo},
      \false% ,\false
      ,\bot,\false
    }{
      \inframe_\tau, \lreveal_{\tauo}
      \sim
      \inframe_\utau,\rreveal_{\tauo}
    }
  }
\end{equation*}

\subsection{Case $\ai = \pnai(j,0)$}
We know that $\uai = \pnai(j,0)$. Here $\lreveal_{\tau}$ and $\lreveal_{\tauo}$ coincides completely. Using invariant \ref{a1} we know that $\nonce^j \not \in \st(\inframe_\tau)$, and $\nonce^j \not \in \st(\cframe_{\utauo})$. Therefore we conclude this case easily using the axiom \ax{Fresh}:
\[
  \infer[\ax{Fresh}]{
    \inframe_\tau,\lreveal_{\tauo},\nonce^j
    \sim
    \inframe_\utau,\rreveal_{\tauo},\nonce^j
  }{
    \inframe_\tau,\lreveal_{\tauo}
    \sim
    \inframe_\utau,\rreveal_{\tauo}
  }
\]

\subsection{Case $\ai = \npuai{1}{\ID}{j}$}
We know that $\uai = \npuai{1}{{\nu_\tau(\ID)}}{j}$. Here $\lreveal_{\tau}$ and $\lreveal_{\tauo}$ coincides everywhere except on the pairs:
\begin{mathpar}
  \cstate_\tau(\success_\ue^{\,\ID})
  \;\sim\;
  \cstate_\utau(\success_\ue^{\,\nu_\tau(\ID)})

  \msuci^\ID_\tau
  \;\sim\;
  \msuci^{\nu_\tau(\ID)}_\utau

  \syncdiff_\tau^\ID
  \;\;\sim\;\;
  \syncdiff_\utau^{\nu_\tau(\ID)}

  \idiff
  {\cstate_\tau(\sqn_\ue^\ID)}
  {\instate_\tau(\sqn_\ue^\ID)}
  \;\;\sim\;\;
  \idiff
  {\cstate_\utau(\sqn_\ue^{\nu_\tau(\ID)})}
  {\instate_\utau(\sqn_\ue^{\nu_\tau(\ID)})}

  \left(
    \enc{\spair
      {\ID}
      {\instate_{\tau}(\sqn_\ue^\ID)}}
    {\pk_\hn}{\enonce^{j}}
    \;\sim\;
    \enc{\spair
      {\nu_\tau(\ID)}
      {\instate_{\tau}(\sqn_\ue^{\nu_\tau(\ID)})}}
    {\pk_\hn}{\enonce^{j}}
  \right)

  \left(
    \mac{\spair
      {\enc{\spair
          {\ID}
          {\instate_{\tau}(\sqn_\ue^\ID)}}
        {\pk_\hn}{\enonce^{j}}}
      {g(\inframe_{\tau})}}
    {\mkey^\ID}{1}
    \;\sim\;
    \mac{\spair
      {\enc{\spair
          {\nu_\tau(\ID)}
          {\instate_{\utau}(\sqn_\ue^{\nu_\tau(\ID)})}}
        {\pk_\hn}{\enonce^{j}}}
      {g(\inframe_{\utau})}}
    {\mkey^{\nu_\tau(\ID)}}{1}
  \right)
\end{mathpar}

\paragraph{Part 1}
We know that $\cstate_\tau(\success_\ue^{\,\ID}) \equiv \cstate_\utau(\success_\ue^{\,\nu_\tau(\ID)}) \equiv \false$. We deduce that $\msuci^\ID_\tau = \msuci^{\nu_\tau(\ID)}_\utau = \bot$. It follows that we have the derivation:
\[
  \begin{gathered}[c]
    \infer[R]{
      \inframe_\tau,\lreveal_{\tauo},
      \cstate_\tau(\success_\ue^{\,\ID}),
      \msuci^\ID_\tau
      \sim
      \inframe_\utau,\rreveal_{\tauo},
      \cstate_\utau(\success_\ue^{\,\nu_\tau(\ID)}),
      \msuci^{\nu_\tau(\ID)}_\utau
    }{
      \infer[\fa^*]{
        \inframe_\tau,\lreveal_{\tauo},
        \false,
        \bot
        \sim
        \inframe_\utau,\rreveal_{\tauo},
        \false,
        \bot
      }{
        \inframe_\tau,\lreveal_{\tauo}
        \sim
        \inframe_\utau,\rreveal_{\tauo}
      }
    }
  \end{gathered}
  \numberthis\label{eq:ffhdjqoiejqirueprapdo}
\]
  
\paragraph{Part 2}
We have:
\begin{alignat*}{4}
  &\idiff
  {\cstate_\tau(\sqn_\ue^\ID)}
  {\instate_\tau(\sqn_\ue^\ID)} &\;\;=\;\;&
  \idiff
  {\sqnsuc(\instate_\tau(\sqn_\ue^\ID))}
  {\instate_\tau(\sqn_\ue^\ID)} &\;\;=\;\;&
  \one\\
  &\idiff
  {\cstate_\utau(\sqn_\ue^{\nu_\tau(\ID)})}
  {\instate_\utau(\sqn_\ue^{\nu_\tau(\ID)})} &\;\;=\;\;&
  \idiff
  {\sqnsuc(\instate_\utau(\sqn_\ue^{\nu_\tau(\ID)}))}
  {\instate_\utau(\sqn_\ue^{\nu_\tau(\ID)})} &\;\;=\;\;&
  \one
\end{alignat*}
And:
\begin{alignat*}{2}
  \syncdiff_\tau^\ID
  &\;\;=\;\;&&
  \lrpcond{\cstate_\tau(\sync_\ue^\ID)}
  {\idiff
    {\cstate_\tau(\sqn_\ue^\ID)}
    {\cstate_\tau(\sqn_\hn^\ID)}}\displaybreak[1]\\
  &\;\;=\;\;&&
  \lrpcond{\instate_\tau(\sync_\ue^\ID)}
  {\idiff
    {\sqnsuc(\instate_\tau(\sqn_\ue^\ID))}
    {\instate_\tau(\sqn_\hn^\ID)}}\\
  &\;\;=\;\;&&
  \lrpcond{\instate_\tau(\sync_\ue^\ID)}
  {\sqnsuc(\syncdiff_\tauo^\ID)}
\end{alignat*}
Similarly, $\syncdiff_\utau^{\nu_\tau(\ID)} = \lrpcond{\instate_\utau(\sync_\ue^{\nu_\tau(\ID)})}{\sqnsuc(\syncdiff_\utauo^{\nu_\tau(\ID)})}$. Hence we have the derivation:
\[
  \begin{gathered}
    \infer[\simp]{
      \begin{alignedat}{4}
        &&&\inframe_\tau,\lreveal_{\tauo},\;&&
        \syncdiff_\tau^\ID,\;&&
        \idiff
        {\cstate_\tau(\sqn_\ue^\ID)}
        {\instate_\tau(\sqn_\ue^\ID)}\\
        &\sim\;\;&&
        \inframe_\utau,\rreveal_{\tauo},\;&&
        \syncdiff_\utau^{\nu_\tau(\ID)},\;&&
        \idiff
        {\cstate_\utau(\sqn_\ue^{\nu_\tau(\ID)})}
        {\instate_\utau(\sqn_\ue^{\nu_\tau(\ID)})}
      \end{alignedat}
    }{
      \infer[\dup^*]{
        \inframe_\tau,\lreveal_{\tauo},
        \instate_\tau(\sync_\ue^\ID),
        \syncdiff_\tauo^\ID
        \sim
        \inframe_\utau,\rreveal_{\tauo},
        \instate_\utau(\sync_\ue^{\nu_\tau(\ID)}),
        \syncdiff_\utauo^{\nu_\tau(\ID)}
      }{
        \inframe_\tau,\lreveal_{\tauo}
        \sim
        \inframe_\utau,\rreveal_{\tauo}
      }
    }
  \end{gathered}
  \numberthis\label{eq:lmdopadc}
\]

\paragraph{Part 3}
Let $s_l \equiv \length(\spair{\ID}{\instate_{\tau}(\sqn_\ue^\ID)})$. Using the $\ccao$ axiom we directly have that:
\[
  \begin{gathered}
    \infer[\ccao]{
      \inframe_\tau,\lreveal_{\tauo},
      \enc{\spair
        {\ID}
        {\instate_{\tau}(\sqn_\ue^\ID)}}
      {\pk_\hn}{\enonce^{j}}
      \;\sim\;
      \inframe_\utau,\rreveal_{\tauo},
      \enc{\spair
        {\nu_\tau(\ID)}
        {\instate_{\tau}(\sqn_\ue^{\nu_\tau(\ID)})}}
      {\pk_\hn}{\enonce^{j}}
    }{
      \inframe_\tau,\lreveal_{\tauo},s_l
      \sim
      \inframe_\utau,\rreveal_{\tauo},s_l
      \;&\;
      \infer{
        \length(\spair
        {\ID}
        {\instate_{\tau}(\sqn_\ue^\ID)})
        =
        \length(\spair
        {\nu_\tau(\ID)}
        {\instate_{\tau}(\sqn_\ue^{\nu_\tau(\ID)})})
      }{
        \unary{\length(\ID) =
          \length(\nu_\tau(\ID))}
        \;&\;
        \length(\instate_{\tau}(\sqn_\ue^\ID))
        =
        \length(\instate_{\tau}(\sqn_\ue^{\nu_\tau(\ID)}))
      }
    }
  \end{gathered}
  \numberthis\label{eq:lmdopadc2}
\]
Moreover, using Proposition~\ref{prop:len-eq-sqn}, we know that:
\[
  \unary{
    \length(\instate_{\tau}(\sqn_\ue^\ID))
    =
    \length(\instate_{\tau}(\sqn_\ue^{\nu_\tau(\ID)}))}
\]
Similarly, we can show that $s_l = \length(\spair{\ID}{\instate_{\tau}(\sqnini_\ue^\ID)})$. Since:
\[
  \left(
    \length(\spair{\ID}{\instate_{\tau}(\sqnini_\ue^\ID)}),
    \length(\spair{\ID}{\instate_{\tau}(\sqnini_\ue^\ID)})
  \right)
  \in
  \reveal_{\tauo}
\]
we know that:
\[
  \infer[R+\dup]{
    \inframe_\tau,\lreveal_{\tauo},s_l
    \sim
    \inframe_\utau,\rreveal_{\tauo},s_l
  }{
    \inframe_\tau,\lreveal_{\tauo}
    \sim
    \inframe_\utau,\rreveal_{\tauo}
  }
\]
This completes the derivation in~\ref{eq:lmdopadc2}.

\paragraph{Part 4}
To conclude, it only remains to deal with the $\macsym^1$ terms. We start by computing $\setmac_{\mkey^\ID}^1$:
\begin{alignat*}{4}
  \setmac_{\mkey^\ID}^1(\inframe_\tau,\lreveal_{\tauo})
  &\;\;=\;\;&&&&
  \left\{
    \spair
    {\enc{\spair{\ID}{\instate_\taut(\sqn_\ue^\ID)}}{\pk_\hn}{\enonce^{j_1}}}
    {g(\inframe_\taut)} \mid
    \taut = \_,\npuai{1}{\ID}{j_1} \popre \tau
  \right\}\\
  &&&\cup\;\;&&
  \left\{
    \spair
    {\pi_1(g(\inframe_\taut))}
    {\nonce^{j_1}} \mid
    \taut = \_,\pnai(j_1,1) \popre \tau
  \right\}
\end{alignat*}
We want to get rid of the second set above: using $\ref{equ3}$, we know that for every $\taut = \_,\pnai(j_1,1) \popre \tau$:
\begin{alignat*}{2}
  \accept_\tau^\ID
  &\;\lra\;\;&&
  \bigvee_{
    \tautt = \_,\npuai{1}{\ID}{j_2}
    \atop{ \tautt \potau \taut}}
  \left(
    \begin{alignedat}{2}
      &&&g(\inframe_{\tautt}) = \nonce^j
      \wedge
      \pi_1(g(\inframe_\taut)) =
      \enc{\spair
        {\ID}
        {\instate_{\tautt}(\sqn_\ue^\ID)}}
      {\pk_\hn}{\enonce^{j_2}}\\
      &\wedge\;&&
      \pi_2(g(\inframe_\taut)) =
      {\mac{\spair
          {\enc{\spair
              {\ID}
              {\instate_{\tautt}(\sqn_\ue^\ID)}}
            {\pk_\hn}{\enonce^{j_2}}}
          {g(\inframe_{\tautt})}}
        {\mkey^\ID}{1}}
    \end{alignedat}
  \right)
  \numberthis\label{eq:dacxzc}
\end{alignat*}
We let $\Psi'$ be the vector of terms $\inframe_\tau,\lreveal_{\tauo}$ where we replaced every occurrence of $\accept_\taut^\ID$ (where $\taut = \_,\pnai(j_1,1) \popre \tau$) by the equivalent term from \eqref{eq:dacxzc}. We can check that we have:
\begin{alignat*}{4}
  \setmac_{\mkey^\ID}^1(\Psi')
  &\;\;=\;\;&&&&
  \left\{
    \spair
    {\enc{\spair{\ID}{\instate_\taut(\sqn_\ue^\ID)}}{\pk_\hn}{\enonce^{j_1}}}
    {g(\inframe_\taut)} \mid
    \taut = \_,\npuai{1}{\ID}{j_1} \popre \tau
  \right\}
\end{alignat*}
For every $\taut = \_,\npuai{1}{\ID}{j_1} \popre \tau$ it is easy to show using Proposition~\ref{prop:len-eq-sqn} that :
\[
  \length(\spair{\ID}{\instate_\tau(\sqn_\ue^\ID)}) =
  \length(\spair{\ID}{\instate_\taut(\sqn_\ue^\ID)})
\]
Moreover, using the axioms in $\axioms_{\textsf{len}}$ we know that $\length(\spair{\ID}{\instate_\tau(\sqn_\ue^\ID)}) \ne 0$. Therefore, using Proposition~\ref{prop:enc-neq} we get that we have:
\[
  \enc{\spair{\ID}{\instate_\tau(\sqn_\ue^\ID)}}{\pk_\hn}{\enonce^{j}} \ne
  \enc{\spair{\ID}{\instate_\taut(\sqn_\ue^\ID)}}{\pk_\hn}{\enonce^{j_1}}
\]
Hence by left injectivity of $\pair{\cdot}{\_}$:
\[
  \spair
  {\enc{\spair
      {\ID}
      {\instate_{\tau}(\sqn_\ue^\ID)}}
    {\pk_\hn}{\enonce^{j}}}
  {g(\inframe_{\tau})}
  \ne
  \spair
  {\enc{\spair{\ID}{\instate_\taut(\sqn_\ue^\ID)}}{\pk_\hn}{\enonce^{j_1}}}
  {g(\inframe_\taut)}
\]
It follows that we can apply the $\prfmac^1$ axiom to replace the following term by a fresh nonce $\nonce$:
\[
  \mac{\spair
    {\enc{\spair
        {\ID}
        {\instate_{\tau}(\sqn_\ue^\ID)}}
      {\pk_\hn}{\enonce^{j}}}
    {g(\inframe_{\tau})}}
  {\mkey^\ID}{1}
\]
We then rewrite every occurrence of the right-hand side of \eqref{eq:dacxzc} into $\accept_\taut^\ID$ (where $\taut = \_,\pnai(j_1,1) \popre \tau$). This yields the derivation:
\[
  \infer[\prfmac^1]{
    \begin{alignedat}{3}
      &&&\inframe_\tau,\lreveal_{\tauo},\;&&
      \mac{\spair
        {\enc{\spair
            {\ID}
            {\instate_{\tau}(\sqn_\ue^\ID)}}
          {\pk_\hn}{\enonce^{j}}}
        {g(\inframe_{\tau})}}
      {\mkey^\ID}{1}\\
      &\sim\;\;&&
      \inframe_\utau,\rreveal_{\tauo},\;&&
      \mac{\spair
        {\enc{\spair
            {\nu_\tau(\ID)}
            {\instate_{\utau}(\sqn_\ue^{\nu_\tau(\ID)})}}
          {\pk_\hn}{\enonce^{j}}}
        {g(\inframe_{\utau})}}
      {\mkey^{\nu_\tau(\ID)}}{1}
    \end{alignedat}
  }{
    \inframe_\tau,\lreveal_{\tauo},
    \nonce
    \;\sim\;
    \inframe_\utau,\rreveal_{\tauo}
    \mac{\spair
      {\enc{\spair
          {\nu_\tau(\ID)}
          {\instate_{\utau}(\sqn_\ue^{\nu_\tau(\ID)})}}
        {\pk_\hn}{\enonce^{j}}}
      {g(\inframe_{\utau})}}
    {\mkey^{\nu_\tau(\ID)}}{1}
  }
\]
We then do the same on the right side (we omit the details), and conclude using \ax{Fresh}:
\[
  \infer[\prfmac^1]{
    \inframe_\tau,\lreveal_{\tauo},
    \nonce
    \;\sim\;
    \inframe_\utau,\rreveal_{\tauo}
    \mac{\spair
      {\enc{\spair
          {\nu_\tau(\ID)}
          {\instate_{\utau}(\sqn_\ue^{\nu_\tau(\ID)})}}
        {\pk_\hn}{\enonce^{j}}}
      {g(\inframe_{\utau})}}
    {\mkey^{\nu_\tau(\ID)}}{1}
  }{
    \infer[\ax{Fresh}]{
      \inframe_\tau,\lreveal_{\tauo},
      \nonce
      \;\sim\;
      \inframe_\utau,\rreveal_{\tauo},
      \nonce
    }{
      \inframe_\tau,\lreveal_{\tauo}
      \;\sim\;
      \inframe_\utau,\rreveal_{\tauo}
    }
  }
\]
We conclude the proof by combining the derivation above with the derivations in \eqref{eq:ffhdjqoiejqirueprapdo}, \eqref{eq:lmdopadc} and \eqref{eq:lmdopadc2}, and by using the induction hypothesis.

\subsection{Case $\ai = \pnai(j,1)$}
We know that $\uai = \pnai(j,1)$. For every base identity $\ID$, let $M_\ID$ be the set:
\[
  M_\ID \;=\;
  \left\{
    \tautt \mid
    \tautt = \_,\npuai{1}{\ID}{j_1} \popre \tau \wedge
    \forall \taut \text{ s.t. } \taut \potau \taut\,
    \taut \ne \_,\newsession_\ID(\_)
  \right\}
\]
Here $\lreveal_{\tau}$ and $\lreveal_{\tauo}$ coincides everywhere except on the following pairs:
\begin{mathpar}
  \left(
    \syncdiff_\tau^\ID
    \;\;\sim\;\;
    \syncdiff_\utau^{\nu_\tau(\ID)}
  \right)_{\ID \in \baseiddom}

  \left(
    \neauth_\tau(\ID,j)
    \;\;\sim\;\;
    \uneauth_\utau(\ID,j)
  \right)_{\ID \in \baseiddom}

  \left(
    \mac{\spair
      {\nonce^{j}}
      {\sqnsuc(\instate_\tautt(\sqn_\ue^\ID))}}
    {\mkey^\ID}{2}
    \;\;\sim\;\;
    \mac{\spair
      {\nonce^{j}}
      {\sqnsuc(\instate_\utautt(\sqn_\ue^{\nu_\tau(\ID)}))}}
    {\mkey^{\nu_\tau(\ID)}}{2}
  \right)_{\tautt \in M_\ID, \ID \in \baseiddom}
\end{mathpar}
\paragraph{Part 1}
Let $\ID$ be a base identity. We consider all the new sessions started with identity $\ID$ in $\tau$:
\[
  \left\{
    \newsession_\ID(0),\dots,\newsession_\ID(l_\ID)
  \right\}
  =
  \left\{
    \newsession_\ID(i) \mid \newsession_\ID(i) \in \tau
  \right\}
\]
This induce a partition of symbolic actions in $\tau$ for identity $\ID$.
Indeed, let $k$ be such that $\ID = \agent{A}_{k,0}$, and for every $-1 \le i \le l_\ID$, let $\uID_i = \agent{A}_{k,i+1}$. Then we define, for every $-1 \le i \le l_\ID$:
\begin{equation*}
  T_\ID^i \;=\;
  \left\{ \taut \mid
    \taut = \_,\npuai{1}{\ID}{j_1} \wedge
    \begin{dcases*}
      \newsession_\ID(i) \potau \taut \potau \newsession_\ID(i+1)
      & if $1 \le i < l_\ID$\\
      \taut \potau \newsession_\ID(0)
      & if $i = -1$\\
      \newsession_\ID(l) \potau \taut \popre \tau
      & if $i = l$\\
    \end{dcases*}
  \right\}
\end{equation*}
And $T_\ID = \{ \taut \mid \taut = \_,\npuai{1}{\ID}{j_1} \wedge \taut \popre \tau \}$. We have $T_\ID = \biguplus_{-1 \le i \le l_\ID} T_\ID^i$, and for every $-1 \le i \le l_\ID$:
\begin{equation*}
  \forall \taut \in T_\ID^i,\,\nu_\taut(\ID) = \uID_i
  \quad\text{ and }\quad
  T_\ID^i \;=\;
  \left\{ \taut \mid
    \utaut = \_,\npuai{1}{{\uID_i}}{j_1} \wedge
    \utaut \popre \utaut
  \right\}
\end{equation*}

\paragraph{Part 2}
Using \ref{equ3} we know that:
\begin{alignat*}{2}
  \accept_\tau^\ID
  &\;\lra\;\;&&
  \bigvee_{
    \taut = \_,\npuai{1}{\ID}{j_1} \in T_\ID}
  \underbrace{\left(
      \begin{alignedat}{2}
        &&&g(\inframe_{\taut}) = \nonce^j
        \wedge
        \pi_1(g(\inframe_\tau)) =
        \enc{\spair
          {\ID}
          {\instate_{\taut}(\sqn_\ue^\ID)}}
        {\pk_\hn}{\enonce^{j_1}}\\
        &\wedge\;&&
        \pi_2(g(\inframe_\tau)) =
        {\mac{\spair
            {\enc{\spair
                {\ID}
                {\instate_{\taut}(\sqn_\ue^\ID)}}
              {\pk_\hn}{\enonce^{j_1}}}
            {g(\inframe_{\taut})}}
          {\mkey^\ID}{1}}
      \end{alignedat}
    \right)}_{b_\taut^\ID}
  \numberthis\label{eq:newuniq}
\end{alignat*}
For all $\taut \in T_\ID$, we let $b^\ID_\taut$ be the main term of the disjunction above.

Similarly, using \ref{equ3} on $\utau$, which is a valid symbolic frame, we have that for every $-1 \le i \le l_\ID$:
\begin{alignat*}{2}
  \accept_\utau^{\uID_i}
  &\;\lra\;\;&&
  \bigvee
  _{\taut = \_,\npuai{1}{\ID}{j_1} \in T_\ID^i}
  \underbrace{
    \left(
      \begin{alignedat}{2}
        &&&g(\inframe_{\utaut}) = \nonce^j
        \wedge
        \pi_1(g(\inframe_\utau)) =
        \enc{\spair
          {\uID_i}
          {\instate_{\utaut}(\sqn_\ue^{\uID_i})}}
        {\pk_\hn}{\enonce^{j_1}}\\
        &\wedge\;&&
        \pi_2(g(\inframe_\utau)) =
        {\mac{\spair
            {\enc{\spair
                {\uID_i}
                {\instate_{\utaut}(\sqn_\ue^{\uID_i})}}
              {\pk_\hn}{\enonce^{j_1}}}
            {g(\inframe_{\utaut})}}
          {\mkey^{\uID_i}}{1}}
      \end{alignedat}
    \right)
  }_{\ufresh{b}_\utaut^{\uID_i}}
  \numberthis\label{eq:newuniq2}
\end{alignat*}
Moreover, if we let $\{\uID_{l_\ID+1},\dots,\uID_{m}\}$ be such that:
\[
  \copyid(\ID) = \{\uID_{0},\dots,\uID_{l_\ID}\}
  \uplus  \{\uID_{l_\ID+1},\dots,\uID_{m}\}
\]
Then, for all $i > l_\ID$, we have $\accept_\utau^{\uID_i} \lra \false$. Therefore, using \ref{a5}, we can show that:
\[
  \uneauth_\utau^\ID \;\lra\;
  \bigvee_{-1 \le i \le l} \accept_\utau^{\uID_i}
  \numberthis\label{eq:newuniq3}
\]

\paragraph{Part 3}
For every $\taut,\tautt \in T_\ID$ such that $\taut \ne \tautt$, $\taut = \_,\npuai{1}{\ID}{j_1}$ and $\tautt = \_,\npuai{1}{\ID}{j_2}$, using Proposition~\ref{prop:enc-neq} and \ref{prop:len-eq-sqn} we can show that:
\begin{alignat*}{2}
  b^\ID_\taut \wedge b^\ID_\tautt
  &\;\ra\;&&
  \enc{\spair
    {\ID}
    {\instate_{\taut}(\sqn_\ue^\ID)}}
  {\pk_\hn}{\enonce^{j_1}}
  =
  \enc{\spair
    {\ID}
    {\instate_{\tautt}(\sqn_\ue^\ID)}}
  {\pk_\hn}{\enonce^{j_2}}\\
  &\;\ra\;&&
  \false
  \numberthis\label{eq:ferdff}
\end{alignat*}
Similarly, for every $\taut,\tautt \in T^{\uID_i}_\ID$ such that $\taut \ne \tautt$, $\taut = \_,\npuai{1}{\ID}{j_1}$ and $\tautt = \_,\npuai{1}{\ID}{j_2}$, using Proposition~\ref{prop:enc-neq} and \ref{prop:len-eq-sqn} we have that:
\begin{alignat*}{2}
  \ufresh{b}_\utaut^{\uID_i} \wedge \ufresh{b}_\utautt^{\uID_i}
  &\;\ra\;&&
  \enc{\spair
    {{\uID_i}}
    {\instate_{\utaut}(\sqn_\ue^{\uID_i})}}
  {\pk_\hn}{\enonce^{j_1}}
  =
  \enc{\spair
    {{\uID_i}}
    {\instate_{\utautt}(\sqn_\ue^{\uID_i})}}
  {\pk_\hn}{\enonce^{j_2}}\\
  &\;\ra\;&&
  \false
  \numberthis\label{eq:ferdff2}
\end{alignat*}
Moreover, since for all identities $\ID_1 \ne \ID_2$, we have $\eq{\ID_1}{\ID_2} = \false$ we know that:
\[
  \left(\accept_\tau^{\ID_1} \wedge \accept_\tau^{\ID_2}\right)
  = \false
  \qquad\qquad
  \left(\accept_\utau^{\ID_1} \wedge \accept_\utau^{\ID_2}\right)
  = \false
\]
And for all non base identity $\ID$, using \ref{acc1} we know that $\accept_\tau^\ID \lra \false$. We deduce that:
\begin{equation*}
  \left(
    \left(
      (b^\ID_\taut)_{\taut \in T_\ID}
    \right)_{\ID \in \baseiddom},
    \underbrace{{\textstyle \bigwedge_{\ID  \in \baseiddom}} \neg \accept_\tau^\ID}
    _{b_\textsf{unk}}
  \right)
  \qquad\text{ and }\qquad
  \left(
    \left(
      (\ufresh{b}^{\uID_i}_\utaut)_{\taut \in T^{\uID_i}_\ID\atop{-1 \le i \le l_\ID}}
    \right)_{\ID  \in \baseiddom},
    \underbrace{{\textstyle\bigwedge_{\ID \in \iddom}} \neg \accept_\tau^\ID}
    _{\ufresh{b_\textsf{unk}}}
  \right)
\end{equation*}
are $\cs$ partitions. Besides, for all $\taut \in T_\ID$ we have:
\[
  \lrpcond{b^\ID_\taut}{t_\tau \;=\;
    \mac{\spair{\nonce^j}{\sqnsuc(\instate_\taut(\sqn_\ue^\ID)}}
    {\mkey^\ID}{2}}
  \qquad\text{ and }\qquad
  \lrpcond{b_{\textsf{unk}}}{t_\tau = \unknownid}
\]
From Proposition~\ref{prop:case} we deduce:
\[
  \begin{gathered}[c]
    t_\tau \;=\;
    \begin{alignedat}[t]{2}
      &\ite{\neg b_{\textsf{unk}}}{
        \switch{\taut \in T_\ID\atop{\ID \in \baseiddom}}
        {b^\ID_\taut}
        {\mac{\spair{\nonce^j}{\sqnsuc(\instate_\taut(\sqn_\ue^\ID)}}
          {\mkey^\ID}{2}}\\ &}{\unknownid}
    \end{alignedat}
  \end{gathered}
  \numberthis\label{eq:pnai1-switch}
\]
Similarly, for every $-1 \le i \le l$, for every $\taut \in T_i^\ID$:
\[
  \lrpcond{\ufresh{b}_\utaut^{\uID_i}}{t_\utau \;=\;
    \mac{\spair{\nonce^j}{\sqnsuc(\instate_\utaut(\sqn_\ue^{\uID_i})}}
    {\mkey^{\uID_i}}{2}}
  \qquad\text{ and }\qquad
  \lrpcond{\ufresh{b_{\textsf{unk}}}}{t_\utau = \unknownid}
\]
Again, from Proposition~\ref{prop:case} we deduce:
\[
  t_\utau \;=\;
  \begin{alignedat}[t]{2}
    &\ite{\neg \ufresh{b_{\textsf{unk}}}}{
      \switch{\taut \in T^i_\ID
        \atop{-1\le i \le l_\ID\atop{\ID \in \baseiddom}}}
      {\ufresh{b}^{\uID_i}_\utaut}
      {\mac{\spair{\nonce^j}{\sqnsuc(\instate_\utaut(\sqn_\ue^{\uID_i})}}
        {\mkey^{\uID_i}}{2}}\\ &}{\unknownid}
  \end{alignedat}
\]
Since $T_\ID \;=\;\biguplus_{-1 \le i \le l_\ID} T_\ID^i$, and since $\forall \taut \in T^i_\ID,\,\uID_i = \nu_\taut(\ID)$, we know that:
\[
  \begin{gathered}[c]
    t_\utau \;=\;
    \begin{alignedat}[t]{2}
      &\ite{\neg \ufresh{b_{\textsf{unk}}}}{
        \switch{\taut \in T_\ID\atop{\ID \in \baseiddom}}
        {\ufresh{b}^{\nu_\taut(\ID)}_\utaut}
        {\mac{\spair{\nonce^j}{\sqnsuc(\instate_\utaut(\sqn_\ue^{\nu_\taut(\ID)})}}
          {\mkey^{\nu_\taut(\ID)}}{2}}\\ &}{\unknownid}
    \end{alignedat}
  \end{gathered}
  \numberthis\label{eq:pnai1-switch1}
\]
\paragraph{Part 4}
We are going to show that for every $\ID \in \baseiddom$, for every $\taut = \npuai{1}{{\ID}}{j_1} \in T_\ID$, there is a derivation~of:
\[
  \Phi_\taut \;\; \equiv \;\;
  \inframe_\tau, \lreveal_{\tauo},b_\taut^\ID
  \sim
  \inframe_\utau,\rreveal_{\tauo},\ufresh{b}^{\uID_i}_\utaut
\]
For this, we rewrite $b_\taut^\ID$ and $\ufresh{b}^{\uID_i}_\utaut$ using, respectively, \eqref{eq:newuniq} and \eqref{eq:newuniq2}. First, remark that:
\[
  \infer[\dup^*]{
    \inframe_{\tau},
    \inframe_{\taut}
    \;\;\sim\;\;
    \inframe_{\utau},
    \inframe_{\utaut}
  }{
    \inframe_{\tau}\;\;\sim\;\;\inframe_{\utau}
  }
\]
And that the following pairs of terms are in $\reveal_{\tauo}$:
\begin{mathpar}
  (\nonce^j,\nonce^j)

  \left(
    \enc{\spair
      {\ID}
      {\instate_{\taut}(\sqn_\ue^\ID)}}
    {\pk_\hn}{\enonce^{j_1}},
    \enc{\spair
      {\nu_\taut(\ID)}
      {\instate_{\utaut}(\sqn_\ue^{\nu_\taut(\ID)})}}
    {\pk_\hn}{\enonce^{j_1}}
  \right)

  \left(
    \mac{\spair
      {\enc{\spair
          {\ID}
          {\instate_{\taut}(\sqn_\ue^\ID)}}
        {\pk_\hn}{\enonce^{j_1}}}
      {g(\inframe_{\taut})}}
    {\mkey^\ID}{1},
    \mac{\spair
      {\enc{\spair
          {\nu_\taut(\ID)}
          {\instate_{\utaut}(\sqn_\ue^{\nu_\taut(\ID)})}}
        {\pk_\hn}{\enonce^{j_1}}}
      {g(\inframe_{\utaut})}}
    {\mkey^{\nu_\taut(\ID)}}{1}
  \right)
\end{mathpar}
Therefore:
\[
  \begin{gathered}[c]
    \infer[\simp]{
      \inframe_\tau, \lreveal_{\tauo},b_\taut^\ID
      \sim
      \inframe_\utau,\rreveal_{\tauo},\ufresh{b}^{\uID_i}_\utaut
    }{
      \inframe_\tau, \lreveal_{\tauo}
      \sim
      \inframe_\utau,\rreveal_{\tauo}
    }
  \end{gathered}
  \numberthis\label{eq:pnai1-deriv0}
\]
Combining this with \eqref{eq:newuniq}, \eqref{eq:newuniq2} and \eqref{eq:newuniq3}, we have:
\[
  \begin{gathered}[c]
    \infer[\simp]{
      \inframe_\tau, \lreveal_{\tauo},\neauth_\tau^\ID
      \sim
      \inframe_\utau,\rreveal_{\tauo},\uneauth_\utau^\ID
    }{
      \infer[\simp]{
        \inframe_\tau, \lreveal_{\tauo},
        \left(b^{\ID}_\utaut\right)_{\taut \in T_\ID}
        \sim
        \inframe_\utau,\rreveal_{\tauo},
        \left(\ufresh{b}^{\uID_i}_\utaut\right)_{\taut \in T_\ID}
      }{
        \inframe_\tau, \lreveal_{\tauo}
        \sim
        \inframe_\utau,\rreveal_{\tauo}
      }
    }
  \end{gathered}
  \numberthis\label{eq:pnai1-deriv4}
\]
And:
\[
  \begin{gathered}[c]
    \infer[\simp]{
      \inframe_\tau, \lreveal_{\tauo},b_{\textsf{unk}}
      \sim
      \inframe_\utau,\rreveal_{\tauo},\ufresh{b_{\textsf{unk}}}
    }{
      \infer[\simp]{
        \inframe_\tau, \lreveal_{\tauo},
        \left(b^{\ID}_\utaut\right)_{\taut \in T_\ID,\ID \in \baseiddom}
        \sim
        \inframe_\utau,\rreveal_{\tauo},
        \left(\ufresh{b}^{\uID_i}_\utaut\right)_{\taut \in T_\ID,\ID \in \baseiddom}
      }{
        \inframe_\tau, \lreveal_{\tauo}
        \sim
        \inframe_\utau,\rreveal_{\tauo}
      }
    }
  \end{gathered}
  \numberthis\label{eq:pnai1-deriv5}
\]
We can now prove that $t_\tau \sim t_\utau$. First we rewrite $t_\tau$ and $t_\utau$ using, respectively, \eqref{eq:pnai1-switch} and \eqref{eq:pnai1-switch1}. Then we split the proof with $\fa$, and combine it with \eqref{eq:pnai1-deriv0} and \eqref{eq:pnai1-deriv5}. This yields:
\[
  \begin{gathered}[c]
    \infer[\simp]{
      \inframe_\tau, \lreveal_{\tauo},t_\tau
      \sim
      \inframe_\utau,\rreveal_{\tauo},t_\utau
    }{
      \infer{
        \begin{alignedat}{4}
          &&&\inframe_\tau, \lreveal_{\tauo},&&
          b_{\textsf{unk}},&&
          \left(
            b^\ID_\taut,
            \mac{\spair{\nonce^j}
              {\sqnsuc(\instate_\taut(\sqn_\ue^\ID))}}
            {\mkey^\ID}{2}
          \right)_{\taut \in T_\ID,\ID \in \baseiddom}\\
          &\sim\;\;&&
          \inframe_\utau,\rreveal_{\tauo},&&
          \ufresh{b_{\textsf{unk}}},&&
          \left(
            \ufresh{b}^{\nu_\taut(\ID)}_\utaut,
            \mac{\spair{\nonce^j}
              {\sqnsuc(\instate_\utaut(\sqn_\ue^{\nu_\taut(\ID)}))}}
            {\mkey^{\nu_\taut(\ID)}}{2}
          \right)_{\taut \in T_\ID,\ID \in \baseiddom}
        \end{alignedat}
      }{
        \begin{alignedat}{3}
          &&&\inframe_\tau, \lreveal_{\tauo},&&
          \left(
            \mac{\spair{\nonce^j}{\sqnsuc(\instate_\taut(\sqn_\ue^\ID))}}
            {\mkey^\ID}{2}
          \right)_{\taut \in T_\ID,\ID \in \baseiddom}\\
          &\sim\;\;&&
          \inframe_\utau,\rreveal_{\tauo},&&
          \left(
            \mac{\spair{\nonce^j}
              {\sqnsuc(\instate_\utaut(\sqn_\ue^{\nu_\taut(\ID)}))}}
            {\mkey^{\nu_\taut(\ID)}}{2}
          \right)_{\taut \in T_\ID,\ID \in \baseiddom}
        \end{alignedat}
      }
    }
  \end{gathered}
  \numberthis\label{eq:pnai1-deriv6}
\]
Notice that for every $\ID \in \baseiddom$, $M_\ID = T_\ID^{l_\ID}$. Therefore the $\macsym$ part in $\reveal_{\tau} \backslash \reveal_{\tauo}$ appears in the derivation above, i.e.:
\[
  \begin{alignedat}[c]{2}
    &&&\left(
      \mac{\spair
        {\nonce^{j}}
        {\sqnsuc(\instate_\tautt(\sqn_\ue^\ID))}}
      {\mkey^\ID}{2},
      \mac{\spair
        {\nonce^{j}}
        {\sqnsuc(\instate_\utautt(\sqn_\ue^{\nu_\tau(\ID)}))}}
      {\mkey^{\nu_\tau(\ID)}}{2}
    \right)_{\tautt \in M_\ID, \ID \in \baseiddom}\\
    &\subseteq\quad&&
    \left(
      \mac{\spair{\nonce^j}{\sqnsuc(\instate_\taut(\sqn_\ue^\ID))}}
      {\mkey^\ID}{2},
      \mac{\spair{\nonce^j}
        {\sqnsuc(\instate_\utaut(\sqn_\ue^{\nu_\taut(\ID)}))}}
      {\mkey^{\nu_\taut(\ID)}}{2}
    \right)_{\taut \in T_\ID,\ID \in \baseiddom}
  \end{alignedat}
  \numberthis\label{eq:pnai1-deriv10}
\]

\paragraph{Part 5}
Let $\ID \in \baseiddom$. Our goal is to apply the $\prfmac^2$ hypothesis to $\mac{\spair{\nonce^j} {\sqnsuc(\instate_\taut(\sqn_\ue^\ID))}} {\mkey^\ID}{2}$ simultaneously for every $\taut \in T_\ID$ in:
\[
  \Psi \;\equiv\;
  \inframe_\tau, \lreveal_{\tauo},
  \left(
    \mac{\spair{\nonce^j}{\sqnsuc(\instate_\taut(\sqn_\ue^\ID))}}
    {\mkey^\ID}{2}
  \right)_{\taut \in T_\ID,\ID \in \baseiddom}
\]
Using \ref{equ2} we know that for every $\newsession_\ID(l_\ID) \potau \taui = \_,\npuai{2}{\ID}{j_i}$:
\begin{alignat*}{2}
  \accept_\taui^\ID
  &\;\lra\;&&
  \bigvee
  _{\taut = \_,\pnai(j_1,1)
    \atop{\tautt = \_,\npuai{1}{\ID}{j_i}
      \atop{\tautt \potau \taut \popre \tau}}}
  g(\inframe_\tau) =
  \mac{\spair
    {\nonce^{j_1}}
    {\sqnsuc(\instate_\tautt(\sqn_\ue^\ID))}}
  {\mkey^\ID}{2}
  \;\wedge\;
  g(\inframe_\tautt) = \nonce^{j_1}
  \numberthis\label{eq:rewrite-accept}
\end{alignat*}
Let $\Psi'$ be the formula obtained from $\Psi$ by rewriting every $\accept_\taui^\ID$ s.t. $\newsession_\ID(l_\ID) \potau \taui = \_,\npuai{2}{\ID}{j_i}$ using the equation above. Then we can check that for every $\taut \in T_\ID$, there is only one occurrence of $\mac{\spair{\nonce^j}{\sqnsuc(\instate_\taut(\sqn_\ue^\ID))}}{\mkey^\ID}{2}$ in $\Psi'$. Moreover:
 \begin{alignat*}{3}
  \lefteqn{\setmac_{\ID}^{2}\left(\Psi'\right)
    \backslash\{
    \spair{\nonce^j}
    {\sqnsuc(\instate_\taut(\sqn_\ue^\ID))}
    \} \;\;=}\\
  &\qquad\qquad&&&&
  \left\{
    \spair{\nonce^j}
    {\sqnsuc(\instate_\tautt(\sqn_\ue^\ID))}
    \mid
    \tautt \in T_\ID \wedge \taut \ne \tautt
  \right\}\\
  &&&\cup\;&&
  \left\{
    \spair
    {\nonce^{j_0}}
    {\sqnsuc(\pi_2(\dec(\pi_1(g(\inframe_\taui)),\sk_\hn)))}
    \mid
    \taui = \_,\pnai(j_0,1) \popre \tau
  \right\}
\end{alignat*}
To apply the $\prfmac^2$ axioms, it is sufficient to show that for every element $u$ in the set above, we have $(\spair{\nonce^j}{\sqnsuc(\instate_\taut(\sqn_\ue^\ID))} \ne u$:
\begin{itemize}
\item Using $\ref{a2}$ we know that for every
  $\taut,\tautt \in T_\ID$, if $\taut \ne \tautt$ then
  $\instate_\tautt(\sqn_\ue^\ID)) \ne
  \instate_\tautt(\sqn_\ue^\ID))$. Therefore:
  \[
    \spair{\nonce^j}
    {\sqnsuc(\instate_\taut(\sqn_\ue^\ID))} \;\ne\;
    \spair{\nonce^j}
    {\sqnsuc(\instate_\tautt(\sqn_\ue^\ID))}
  \]
\item for every $\taui = \_,\pnai(j_0,1) \popre \tau$, we have $j_0 < j$, hence $\nonce^{j_0} \ne \nonce^j$ and by consequence:
  \[
    \spair{\nonce^j}
    {\sqnsuc(\instate_\taut(\sqn_\ue^\ID))}
    \;\ne\;
    \spair
    {\nonce^{j_0}}
    {\sqnsuc(\pi_2(\dec(\pi_1(g(\inframe_\taui)),\sk_\hn)))}
  \]
\end{itemize}
We can conclude: we rewrite $\Psi$ into $\Psi'$; we apply $\prfmac^2$ for every $\taut \in T_\ID$, replacing $\mac{\spair{\nonce^j}{\sqnsuc(\instate_\taut(\sqn_\ue^\ID))}}{\mkey^\ID}{2}$ by a fresh nonce $\nonce^{j,\taut}$; and we rewrite any term of the form \eqref{eq:rewrite-accept} back into $\accept_\taui^\ID$. Doing this for every base identity $\ID \in \baseiddom$, this yields:
\[
  \infer[(\prfmac^2)^*]{
    \begin{alignedat}{3}
      &&&\inframe_\tau, \lreveal_{\tauo},&&
      \left(
        \mac{\spair{\nonce^j}{\sqnsuc(\instate_\taut(\sqn_\ue^\ID))}}
        {\mkey^\ID}{2}
      \right)_{\taut \in T_\ID,\ID \in \baseiddom}\\
      &\sim\;\;&&
      \inframe_\utau,\rreveal_{\tauo},&&
      \left(
        \mac{\spair{\nonce^j}
          {\sqnsuc(\instate_\utaut(\sqn_\ue^{\nu_\taut(\ID)}))}}
        {\mkey^{\nu_\taut(\ID)}}{2}
      \right)_{\taut \in T_\ID,\ID \in \baseiddom}
    \end{alignedat}
  }{
    \begin{alignedat}{3}
      &&&\inframe_\tau, \lreveal_{\tauo},&&
      \left(
        \nonce^{j,\taut}
      \right)_{\taut \in T_\ID,\ID \in \baseiddom}\\
      &\sim\;\;&&
      \inframe_\utau,\rreveal_{\tauo},&&
      \left(
        \mac{\spair{\nonce^j}
          {\sqnsuc(\instate_\utaut(\sqn_\ue^{\nu_\taut(\ID)}))}}
        {\mkey^{\nu_\taut(\ID)}}{2}
      \right)_{\taut \in T_\ID,\ID \in \baseiddom}
    \end{alignedat}
  }
\]
We then do the same thing to replace, for every base identity $\ID$ and $\taut \in T_\ID$, the mac $\mac{\spair{\nonce^j} {\sqnsuc(\instate_\utaut(\sqn_\ue^{\nu_\taut(\ID)}))}} {\mkey^{\nu_\taut(\ID)}}{2}$ by the nonce $\nonce^{j,\taut}$ in the formula:
\[
  \ufresh{\Psi} \;\equiv\;
  \inframe_\utau,\rreveal_{\tauo},
  \left(
    \mac{\spair{\nonce^j}
      {\sqnsuc(\instate_\utaut(\sqn_\ue^{\nu_\taut(\ID)}))}}
    {\mkey^{\nu_\taut(\ID)}}{2}
  \right)_{\taut \in T_\ID,\ID \in \baseiddom}
\]
The proof is similar, we omit to check the details. This yields:
\[
  \infer[(\prfmac^2)^*]{
    \begin{alignedat}{6}
      &&&\inframe_\tau, \lreveal_{\tauo},&&
      \left(
        \nonce^{j,\taut}
      \right)_{\taut \in T_\ID,\ID \in \baseiddom}
      &\sim\;\;&&
      \inframe_\utau,\rreveal_{\tauo},&&
      \left(
        \mac{\spair{\nonce^j}
          {\sqnsuc(\instate_\utaut(\sqn_\ue^{\nu_\taut(\ID)}))}}
        {\mkey^{\nu_\taut(\ID)}}{2}
      \right)_{\taut \in T_\ID,\ID \in \baseiddom}
    \end{alignedat}
  }{
    \infer[\ax{Fresh}^*]
    {\begin{alignedat}{3}
        &&&\inframe_\tau, \lreveal_{\tauo},&&
        \left(
          \nonce^{j,\taut}
        \right)_{\taut \in T_\ID,\ID \in \baseiddom}
        &\sim\;\;&&
        \inframe_\utau,\rreveal_{\tauo},&&
        \left(
          \nonce^{j,\taut}
        \right)_{\taut \in T_\ID,\ID \in \baseiddom}
      \end{alignedat}
    }{
      \inframe_\tau, \lreveal_{\tauo}
      \;\;\sim\;\;
      \inframe_\utau,\rreveal_{\tauo}
    }
  }
\]
Combining this with \eqref{eq:pnai1-deriv6}, we get:
\[
  \begin{gathered}[c]
    \infer[\simp^*]{
      \begin{alignedat}{2}
        \inframe_\tau, \lreveal_{\tauo},t_\tau
        \sim
        \inframe_\utau,\rreveal_{\tauo},t_\utau
      \end{alignedat}
    }{
      \infer*{}{
        \infer[]{}{
          \inframe_\tau,\lreveal_{\tauo}
          \;\sim\;
          \inframe_\utau,\rreveal_{\tauo}
        }
      }
    }
    \numberthis\label{eq:pnai1-deriv7}
  \end{gathered}
\]

\paragraph{Part 6}
We now handle the $\syncdiff_\tau^\ID \sim \syncdiff_\utau^{\nu_\tau(\ID)}$ part. We first handle the case where $\instate_\tau(\sync_\ue^\ID)$ is false. Observe that $\instate_\tau(\sync_\ue^\ID) = \instate_\tauo(\sync_\ue^\ID)$, $\instate_\utau(\sync_\ue^{\nu_\tau(\ID)}) = \instate_\utauo(\sync_\ue^{\nu_\tau(\ID)})$ and that
$(\instate_\tauo(\sync_\ue^\ID), \instate_\utauo(\sync_\ue^{\nu_\tau(\ID)}))\in \reveal_{\tauo}$. Moreover:
\[
  \cond{\neg\instate_\tau(\sync_\ue^\ID)}{\syncdiff_\tau^\ID}
  \; = \; \bot
  \qquad\qquad
  \cond{\neg\instate_\utau(\sync_\ue^{\nu_\tau(\ID)})}
  {\syncdiff_\utau^{\nu_\tau(\ID)}} \;=\; \bot
\]
Hence:
\[
  \begin{alignedat}[c]{2}
    \infer[\fa^*]{
      \lreveal_{\tauo},
      \syncdiff_\tau^\ID
      \;\sim\;
      \rreveal_{\tauo},
      \syncdiff_\utau^{\nu_\tau(\ID)}
    }{
      \infer[\simp]{
        \begin{alignedat}{5}
          &&&\lreveal_{\tauo},&&
          \instate_\tau(\sync_\ue^\ID),&&
          \cond{\instate_\tau(\sync_\ue^\ID)}{\syncdiff_\tau^\ID},&&
          \cond{\neg\instate_\tau(\sync_\ue^\ID)}{\syncdiff_\tau^\ID}\\
          &\sim\;\;&&
          \rreveal_{\tauo},&&
          \instate_\utau(\sync_\ue^{\nu_\tau(\ID)}),&&
          \cond{\instate_\utau(\sync_\ue^{\nu_\tau(\ID)})}
          {\syncdiff_\utau^{\nu_\tau(\ID)}},&&
          \cond{\neg\instate_\utau(\sync_\ue^{\nu_\tau(\ID)})}
          {\syncdiff_\utau^{\nu_\tau(\ID)}}
        \end{alignedat}
      }{
        \lreveal_{\tauo},
        \cond{\instate_\tau(\sync_\ue^\ID)}{\syncdiff_\tau^\ID}
        \;\sim\;
        \rreveal_{\tauo},
        \cond{\instate_\utau(\sync_\ue^{\nu_\tau(\ID)})}
        {\syncdiff_\utau^{\nu_\tau(\ID)}}
      }
    }
  \end{alignedat}
  \numberthis\label{eq:pnai1-deriv1}
\]
Therefore we can focus on the case where $\instate_\tau(\sync_\ue^{\ID})$ is true. For all $\ID \in \baseiddom$, we let:
\[
  \incsqn_\tau^\ID \;\equiv\;
  \Geq{\pi_2(\dec(\pi_1(g(\inframe_\tau)),\sk^{\ID}_\hn))}
  {\instate_\tau(\sqn^{\ID}_\hn)}
\]
Then:
\[
  \cond{\instate_\tau(\sync_\ue^\ID)}{\syncdiff_\tau^\ID} \;=\;
  \begin{alignedat}{2}
    \lrswitch{\taut \in T_\ID}{b^\ID_\taut}
    {
      \begin{alignedat}{2}
        \ite{
          \left(\begin{alignedat}[c]{2}
              &&&\instate_\tau(\sync_\ue^\ID)
              \\ &  \wedge\, &&\incsqn_\tau^\ID
            \end{alignedat}\right)}
        {&\idiff{\instate_\tau(\sqn_\ue^\ID)}
          {\sqnsuc(\instate_\tau(\sqn_\hn^\ID))}\\ &}
        {\cond{\instate_\tau(\sync_\ue^\ID)}
          {\syncdiff_\tauo^\ID}}
      \end{alignedat}
    }
  \end{alignedat}
  \numberthis\label{eq:sync-diff-expr}
\]
And:
\begin{multline*}
  \cond{\instate_\utau(\sync_\ue^{\nu_\tau(\ID)})}
  {\syncdiff_\utau^{\nu_\tau(\ID)}} \;=\;\\
  \begin{alignedat}{2}
    \lrswitch{\taut \in T^{\uID_{l_\ID}}_\ID}{\ufresh{b}^{\nu_\tau(\ID)}_\utaut}
    {
      \begin{alignedat}{2}
        \ite{
          \left(\begin{alignedat}[c]{2}
              &&&\instate_\utau(\sync_\ue^{\nu_\tau(\ID)})
              \\ &  \wedge\, &&\incsqn_\utau^{\nu_\tau(\ID)}
            \end{alignedat}\right)}
        {&\idiff{\instate_\utau(\sqn_\ue^{\nu_\tau(\ID)})}
          {\sqnsuc(\instate_\utau(\sqn_\hn^{\nu_\tau(\ID)}))}\\ &}
        {\cond{\instate_\utau(\sync_\ue^{\nu_\tau(\ID)})}
          {\syncdiff_\utauo^{\nu_\tau(\ID)}}}
      \end{alignedat}
    }
  \end{alignedat}
  \numberthis\label{eq:sync-diff-expr2}
\end{multline*}
Take $\taut \in T_\ID$, and let $\taui$ be such that $\taui = \_,\newsession_\ID(l_\ID)$ and $\taui \popre \tau$. We have two cases:
\begin{itemize}
\item If $\taut \potau \newsession_\ID(l_\ID)$, then using \ref{b5} and \ref{b6}, we know that $\instate_\taut(\sqn_\ue^\ID) \le \instate_\taui(\sqn_\ue^\ID)$ and that $ \instate_\taut(\sync_\ue^\ID) \ra \instate_\tau(\sqn_\hn^\ID) > \instate_\taui(\sqn_\ue^\ID)$. We summarize this below:
  \begin{center}
    \begin{tikzpicture}
      [dn/.style={inner sep=0.2em,fill=black,shape=circle},
      sdn/.style={inner sep=0.15em,fill=white,draw,solid,shape=circle},
      sl/.style={decorate,decoration={snake,amplitude=1.6}},
      dl/.style={dashed},
      pin distance=0.5em,
      every pin edge/.style={thin}]

      \draw[thick] (0,0)
      node[left=1.3em] {$\tau:$}
      -- ++(0.5,0)
      node[dn,pin={above:{$\npuai{1}{\ID}{j_1}$}}]
      (a) {}
      node[below,yshift=-0.3em] {$\taut$}
      -- ++(3,0)
      node[dn,pin={above:{$\newsession_\ID(l_\ID)$}}]
      (b) {}
      node[below,yshift=-0.3em] {$\taui$}
      -- ++(3,0)
      node[dn,pin={above:{$\pnai(j,1)$}}]
      (c) {}
      node[below,yshift=-0.3em] {$\tau$};

      \path (a) -- ++ (0,-1)
      node (a1) {$\instate_\taut(\sqn_\ue^\ID)$};

      \path (b) -- ++ (0,-1)
      node (b1) {$\instate_\taui(\sqn_\ue^\ID)$};

      \path (c) -- ++ (0,-1)
      -- ++ (0,-1)
      node (c2) {$\instate_\tau(\sqn_\hn^\ID)$};

      \draw (a1) -- (b1) node[midway,above]{$\le$};

      \draw (b1) -- (c2) node[midway,above]{$<$};
    \end{tikzpicture}
  \end{center}
  Hence $\neg( b^\ID_\taut \wedge \instate_\tau(\sync_\ue^\ID) \wedge \incsqn_\tau^\ID)$.
  
  Now we look at the right protocol: since $\taut \potau \newsession_\ID(l_\ID)$, we know that $\nu_\taut(\ID) = \uID_{l_\ID - p}$ for some $p > 0$. Hence $\nu_\taut(\ID) \ne \uID_{l_\ID} = \nu_\tau(\ID)$, which implies that:
  \[
    \ufresh{b}^{\nu_\taut(\ID)}_\utaut
    \;\ra\;
    \accept_\utau^{\nu_\taut(\ID)}
    \;\ra\;
    \neg \accept_\utau^{\nu_\tau(\ID)}
    \;\ra\;
    \bigwedge_{\tautt \in T_\ID^{l_\ID}} \neg \ufresh{b}_\utautt^{\nu_\tau(\ID)}
  \]
  We deduce that:
  \begin{gather*}
    \cond{b^\ID_\taut\wedge\instate_\tau(\sync_\ue^\ID)}{\syncdiff_\tau^\ID}
    \;=\;
    \cond{b^\ID_\taut\wedge\instate_\tau(\sync_\ue^\ID)}{\syncdiff_\tauo^\ID}\\
    \cond{\ufresh{b}^{\nu_\taut(\ID)}_\utaut
      \wedge\instate_\utau(\sync_\ue^{\nu_\tau(\ID)})}
    {\syncdiff_\utau^{\nu_\tau(\ID)}}
    \;=\;
    \cond{\ufresh{b}^{\nu_\taut(\ID)}_\utaut
      \wedge\instate_\utau(\sync_\ue^{\nu_\tau(\ID)})}
    {\syncdiff_\utauo^{\nu_\tau(\ID)}}
  \end{gather*}
  Since $( \syncdiff_\tauo^\ID, \syncdiff_\utauo^{\nu_\tau(\ID)}) \in \reveal_{\tauo}$, we have:
  \[
    \begin{gathered}[c]
      \infer[\fa^*]{
        \begin{alignedat}{1}
          \lreveal_{\tauo},
          \cond{b^\ID_\taut\wedge\instate_\tau(\sync_\ue^\ID)}
          {\syncdiff_\tau^\ID}
          \;\sim\;
          \rreveal_{\tauo},
          \cond{\ufresh{b}^{\nu_\taut(\ID)}_\utaut
            \wedge\instate_\utau(\sync_\ue^{\nu_\tau(\ID)})}
          {\syncdiff_\utau^{\nu_\tau(\ID)}}
        \end{alignedat}
      }{
        \infer[\dup^*]{
          \begin{alignedat}{1}
            \lreveal_{\tauo},
            b^\ID_\taut,\instate_\tau(\sync_\ue^\ID),
            \syncdiff_\tauo^\ID
            \;\sim\;
            \rreveal_{\tauo},
            \ufresh{b}^{\nu_\taut(\ID)}_\utaut,
            \instate_\utau(\sync_\ue^{\nu_\tau(\ID)}),
            \syncdiff_\utauo^{\nu_\tau(\ID)}
          \end{alignedat}
        }{
          \lreveal_{\tauo},
          b^\ID_\taut
          \;\sim\;
          \rreveal_{\tauo},
          \ufresh{b}^{\nu_\taut(\ID)}_\utaut
        }
      }
    \end{gathered}
  \]
  Combining this with \eqref{eq:pnai1-deriv0}, we can get rid of $b^\ID_\taut \sim\ufresh{b}^{\nu_\taut(\ID)}_\utaut$:
  \[
    \begin{gathered}[c]
      \infer[\fa^*]{
        \begin{alignedat}{2}
          &&&\inframe_\tau,\lreveal_{\tauo},
          \cond{b^\ID_\taut\wedge\instate_\tau(\sync_\ue^\ID)}
          {\syncdiff_\tau^\ID}\\
          &\sim\;\;&&
          \inframe_\utau,\rreveal_{\tauo},
          \cond{\ufresh{b}^{\nu_\taut(\ID)}_\utaut
            \wedge\instate_\utau(\sync_\ue^{\nu_\tau(\ID)})}
          {\syncdiff_\utau^{\nu_\tau(\ID)}}
        \end{alignedat}
      }{
        \infer*{}{
          \infer{}{
            \inframe_\tau,\lreveal_{\tauo}
            \;\sim\;
            \inframe_\utau,\rreveal_{\tauo}
          }{}
        }
      }
    \end{gathered}
    \numberthis\label{eq:pnai1-deriv2}
  \]

\item If $\taut \not \potau \newsession_\ID(l_\ID)$, then $\nu_\taut(\ID) = \nu_\tau(\ID)$. Let $\uID = \nu_\tau(\ID)$, and using \eqref{eq:sync-diff-expr} and \eqref{eq:sync-diff-expr2} we get that:
  \begin{alignat*}{4}
    &\cond{b^\ID_\taut\wedge\instate_\tau(\sync_\ue^\ID)}
    {\syncdiff_\tau^\ID}
    &\;= &&&
    \lrpcond{b^\ID_\taut\wedge\instate_\tau(\sync_\ue^\ID)}
    {\idiff{\instate_\tau(\sqn_\ue^\ID)}
      {\sqnsuc(\instate_\tau(\sqn_\hn^\ID))}}\\
    &&&& +\;&
    \ite{
      b^\ID_\taut\wedge
      \instate_\tau(\sync_\ue^\ID) \wedge
      \incsqn_\tau^\ID}
    {\mone}{\zero}\\
    &\cond{\ufresh{b}^{\uID}_\utaut
      \wedge\instate_\utau(\sync_\ue^{\uID})}
    {\syncdiff_\utau^{\uID}}
    &\;= &&&
    \lrpcond{\ufresh{b}^{\uID}_\utaut
      \wedge\instate_\utau(\sync_\ue^{\uID})}
    {\idiff{\instate_\utau(\sqn_\ue^{\uID})}
      {\sqnsuc(\instate_\utau(\sqn_\hn^{\uID}))}}\\
    &&&& +\;&
    \ite{
      \ufresh{b}^{\uID}_\utaut \wedge
      \instate_\utau(\sync_\ue^{\uID})\wedge
      \incsqn_\utau^{\uID}}
    {\mone}{\zero}
  \end{alignat*}
  Hence using \eqref{eq:pnai1-deriv0} we get:
  \[
    \begin{gathered}[c]
      \infer[\fa^*]{
        \inframe_\tau,\lreveal_{\tauo},
        \cond{b^\ID_\taut\wedge\instate_\tau(\sync_\ue^\ID)}
        {\syncdiff_\tau^\ID}
        \;\sim\;
        \inframe_\utau,\rreveal_{\tauo},
        \cond{\ufresh{b}^{\uID}_\utaut
          \wedge\instate_\utau(\sync_\ue^{\uID})}
        {\syncdiff_\utau^{\uID}}
      }{
        \infer[\dup]{
          \begin{alignedat}{2}
            &&&
            \inframe_\tau,\lreveal_{\tauo},
            b^\ID_\taut,\instate_\tau(\sync_\ue^\ID),
            {\idiff{\instate_\tau(\sqn_\ue^\ID)}
              {\instate_\tau(\sqn_\hn^\ID)}},
            b^\ID_\taut\wedge
            \instate_\tau(\sync_\ue^\ID) \wedge
            \incsqn_\tau^\ID\\
            &\sim\;\;&&
            \inframe_\utau,\rreveal_{\tauo},
            \ufresh{b}^{\uID}_\utaut,
            \instate_\utau(\sync_\ue^{\uID}),
            {\idiff{\instate_\utau(\sqn_\ue^{\uID})}
              {\instate_\utau(\sqn_\hn^{\uID})}},
            \ufresh{b}^{\uID}_\utaut \wedge
            \instate_\utau(\sync_\ue^{\uID})\wedge
            \incsqn_\utau^{\uID}
          \end{alignedat}
        }{
          \inframe_\tau,\lreveal_{\tauo},
          b^\ID_\taut\wedge
          \instate_\tau(\sync_\ue^\ID) \wedge
          \incsqn_\tau^\ID
          \;\sim\;
          \inframe_\utau,\rreveal_{\tauo},
          \ufresh{b}^{\uID}_\utaut \wedge
          \instate_\utau(\sync_\ue^{\uID})\wedge
          \incsqn_\utau^{\uID}
        }
      }
    \end{gathered}
    \numberthis\label{eq:pnai1-deriv30}
  \]
  We split the proof in two, depending on whether $\instate_\taut(\sync_\ue^\ID)$ is true or not.
  \begin{itemize}
  \item If it is true, this is simple:
    \begin{gather*}
      \left(
        \instate_\taut(\sync_\ue^\ID) \wedge
        b^\ID_\taut\wedge
        \instate_\tau(\sync_\ue^\ID) \wedge
        \incsqn_\tau^{\ID}
      \right)
      \lra
      \left(
        b^\ID_\taut\wedge
        \instate_\taut(\sync_\ue^\ID) \wedge
        \instate_\taut(\sqn_\ue^\ID) <
        \instate_\tau(\sqn_\hn^\ID)
      \right)\\
      \left(
        \instate_\utaut(\sync_\ue^{\uID}) \wedge
        \ufresh{b}^{\uID}_\utaut \wedge
        \instate_\utau(\sync_\ue^{\uID})\wedge
        \incsqn_\utau^{\uID}
      \right)
      \lra
      \left(
        \ufresh{b}^{\uID}_\utaut \wedge
        \instate_\utaut(\sync_\ue^{\uID})\wedge
        \instate_\utaut(\sqn_\ue^{\uID}) <
        \instate_\utau(\sqn_\hn^{\uID})
      \right)
    \end{gather*}
    Hence using \eqref{eq:pnai1-deriv0} we get:
    \[
      \infer[R]{
        \begin{alignedat}{2}
          &&&\inframe_\tau,\lreveal_{\tauo},
          \instate_\taut(\sync_\ue^\ID) \wedge
          b^\ID_\taut\wedge
          \instate_\tau(\sync_\ue^\ID) \wedge
          \incsqn_\tau^\ID\\
          &\sim\;\;&&
          \inframe_\utau,\rreveal_{\tauo},
          \instate_\utaut(\sync_\ue^{\uID}) \wedge
          \ufresh{b}^{\uID}_\utaut \wedge
          \instate_\utau(\sync_\ue^{\uID})\wedge
          \incsqn_\utau^{\uID}
        \end{alignedat}
      }{
        \infer[\simp]{
          \begin{alignedat}{2}
            &&&\inframe_\tau,\lreveal_{\tauo},
            b^\ID_\taut\wedge
            \instate_\taut(\sync_\ue^\ID) \wedge
            \instate_\taut(\sqn_\ue^\ID) <
            \instate_\tau(\sqn_\hn^\ID)\\
            &\sim\;\;&&
            \inframe_\utau,\rreveal_{\tauo},
            \ufresh{b}^{\uID}_\utaut \wedge
            \instate_\utaut(\sync_\ue^{\uID})\wedge
            \instate_\utaut(\sqn_\ue^{\uID}) <
            \instate_\utau(\sqn_\hn^{\uID})
          \end{alignedat}
        }{
          \begin{alignedat}{2}
            &&&\inframe_\tau,\lreveal_{\tauo},
            \instate_\taut(\sync_\ue^\ID) \wedge
            \instate_\taut(\sqn_\ue^\ID) <
            \instate_\tau(\sqn_\hn^\ID)\\
            &\sim\;\;&&
            \inframe_\utau,\rreveal_{\tauo},
            \instate_\utaut(\sync_\ue^{\uID})\wedge
            \instate_\utaut(\sqn_\ue^{\uID}) <
            \instate_\utau(\sqn_\hn^{\uID})
          \end{alignedat}
        }
      }
    \]
    We conclude the case $\instate_\taut(\sync_\ue^{\ID})$ using \ref{der1}:
    % \toadd{Remark: since case of D0 does not require $\sync$ at $\taut$, but only at $\tau$. Therefore the net case is probably ot needed ... (we have a simpler proof)}
    \[
      \begin{gathered}
        \infer[\simp]{
          \begin{alignedat}{2}
            &&&\inframe_\tau,\lreveal_{\tauo},
            \instate_\taut(\sync_\ue^\ID) \wedge
            \instate_\taut(\sqn_\ue^\ID) <
            \instate_\tau(\sqn_\hn^\ID)\\
            &\sim\;\;&&
            \inframe_\utau,\rreveal_{\tauo},
            \instate_\utaut(\sync_\ue^{\uID})\wedge
            \instate_\utaut(\sqn_\ue^{\uID}) <
            \instate_\utau(\sqn_\hn^{\uID})
          \end{alignedat}
        }{
          \lreveal_{\tauo}
          \;\sim\;
          \rreveal_{\tauo}
        }
      \end{gathered}
      \numberthis\label{eq:gasdjnlsfhglfzzgfnlg}
    \]

  \item If $\sync_\ue^\ID$ is false at $\taut$ and true at $\tau$, then we know that there is an instant $\taut \popreleq \taua$ such that $\neg\instate_\taua(\sync_\ue^\ID) \wedge \instate_\taua(\sync_\ue^\ID)$. Since $\sync_\ue^\ID$ is only updated at instant $\npuai{\_}{\ID}{\_}$ and $\ns_\ID(\_)$, and since $\taut \not \potau \ns_\ID(\_)$, the only possibilities are $\taua$ of the form $\_,\npuai{2}{\ID}{j_a}$. In that case, we must have $\accept_\taua^\ID$. Formally, it is straightforward to show by induction that:
    \[
      \left(
        b^\ID_\taut\wedge
        \neg \instate_\taut(\sync_\ue^\ID) \wedge
        \instate_\tau(\sync_\ue^\ID)
      \right)
      \;\ra\;
      \bigvee_{\taua = \_,\npuai{2}{\ID}{j_a} \atop{\taut \potau \taua}}
      \neg \instate_\taua(\sync_\ue^\ID) \wedge
      \accept_\taua^\ID
      \numberthis\label{eq:fojvsqroadsjaspcv}
    \]
    Using \ref{sequ4}, we know that:
    \[
      \accept_\taua^\ID \wedge \neg \instate_\taua(\sync_\ue^\ID)
      \;\ra\;
      \cstate_\taua(\sqn_\ue^\ID) = \cstate_\taua(\sqn_\hn^\ID)
    \]
    We know that $\cstate_\taua(\sqn_\ue^\ID) = \instate_\taua(\sqn_\ue^\ID)$ and $\cstate_\taua(\sqn_\hn^\ID) = \instate_\taua(\sqn_\hn^\ID)$. Moreover using \ref{b5} we have:
    \begin{mathpar}
      \cstate_\taut(\sqn_\ue^\ID) \le \cstate_\taua(\sqn_\ue^\ID)

      \cstate_\taua(\sqn_\hn^\ID) \le \instate_\tau(\sqn_\hn^\ID)
    \end{mathpar}
    Finally, we know that $\cstate_\taut(\sqn_\ue^\ID) = \instate_\taut(\sqn_\ue^\ID) + 1$, and therefore $\cstate_\taut(\sqn_\ue^\ID) > \instate_\taut(\sqn_\ue^\ID)$. We summarize this graphically:
    \begin{center}
      \begin{tikzpicture}
        [dn/.style={inner sep=0.2em,fill=black,shape=circle},
        sdn/.style={inner sep=0.15em,fill=white,draw,solid,shape=circle},
        sl/.style={decorate,decoration={snake,amplitude=1.6}},
        dl/.style={dashed},
        pin distance=0.5em,
        every pin edge/.style={thin}]

        \draw[thick] (0,0)
        node[left=1.3em] {$\tau:$}
        -- ++(0.5,0)
        node[dn,pin={above,align=left:{$\ns_\ID(\_)$\\or $\epsilon$}}]
        (a) {}
        node[below,yshift=-0.3em] {$\taui$}
        -- ++(3,0)
        node[dn,pin={above:{$\npuai{1}{\ID}{j_1}$}}]
        (b) {}
        node[below,yshift=-0.3em,name=b0] {$\taut$}
        -- ++(3,0)
        node[dn,pin={above:{$\npuai{2}{\ID}{j_a}$}}]
        (c) {}
        node[below,yshift=-0.3em] {$\taua$}
        -- ++(3,0)
        node[dn,pin={above:{$\pnai(j,1)$}}]
        (d) {}
        node[below,yshift=-0.3em,name=d0] {$\tau$};

        \draw[thin,dashed] (b0) -- ++(0,-0.5) -| (d0);

        \path (b) -- ++ (0,-1.8)
        -- ++ (0,-1.3)
        node (b2) {$\instate_\taut(\sqn_\ue^\ID)$};

        \path (c) -- ++ (0,-1.8)
        node (c1) {$\instate_\taua(\sqn_\hn^\ID)$}
        -- ++ (0,-1.3)
        node (c2) {$\instate_\taua(\sqn_\ue^\ID)$};

        \path (d) -- ++ (0,-1.8)
        node (d1) {$\instate_\tau(\sqn_\hn^\ID)$};

        \draw (b2) -- (c2) node[midway,below]{$<$}
        (c2) -- (c1) node[midway,below,sloped]{$=$}
        (c1) -- (d1) node[midway,above]{$\le$};
      \end{tikzpicture}
    \end{center}
    Therefore:
    \[
      \left(
        \neg \instate_\taua(\sync_\ue^\ID) \wedge
        \accept_\taua^\ID
      \right)
      \;\ra\;
      \instate_\taut(\sync_\ue^\ID) <
      \instate_\taua(\sync_\hn^\ID)
    \]
    Hence we deduce from \eqref{eq:fojvsqroadsjaspcv} that:
    \[
      \left(
        b^\ID_\taut\wedge
        \neg \instate_\taut(\sync_\ue^\ID) \wedge
        \instate_\tau(\sync_\ue^\ID)
      \right)
      \;\ra\;
      \incsqn_\tau^{\ID}
    \]
    Similarly, we show that:
    \[
      \left(
        \ufresh{b}^{\uID}_\utaut \wedge
        \neg \instate_\utaut(\sync_\ue^{\uID}) \wedge
        \instate_\utau(\sync_\ue^{\uID})
      \right)
      \;\ra\;
      \incsqn_\utau^{\uID}
    \]
    Hence using \eqref{eq:pnai1-deriv0} we get:
    \[
      \begin{gathered}
        \infer[R]{
          \begin{alignedat}{2}
            &&&\inframe_\tau,\lreveal_{\tauo},
            \neg\instate_\taut(\sync_\ue^\ID) \wedge
            b^\ID_\taut\wedge
            \instate_\tau(\sync_\ue^\ID) \wedge
            \incsqn_\tau^\ID\\
            &\sim\;\;&&
            \inframe_\utau,\rreveal_{\tauo},
            \neg\instate_\utaut(\sync_\ue^{\uID}) \wedge
            \ufresh{b}^{\uID}_\utaut \wedge
            \instate_\utau(\sync_\ue^{\uID})\wedge
            \incsqn_\utau^{\uID}
          \end{alignedat}
        }{
          \infer[\simp]{
            \begin{alignedat}{2}
              &&&\inframe_\tau,\lreveal_{\tauo},
              \neg\instate_\taut(\sync_\ue^\ID) \wedge
              b^\ID_\taut\wedge
              \instate_\tau(\sync_\ue^\ID)\\
              &\sim\;\;&&
              \inframe_\utau,\rreveal_{\tauo},
              \neg\instate_\utaut(\sync_\ue^{\uID}) \wedge
              \ufresh{b}^{\uID}_\utaut \wedge
              \instate_\utau(\sync_\ue^{\uID})
            \end{alignedat}
          }{
            \infer[\dup^*]{
              \begin{alignedat}{2}
                \inframe_\tau,\lreveal_{\tauo},
                \instate_\taut(\sync_\ue^\ID),
                b^\ID_\taut,
                \instate_\tau(\sync_\ue^\ID)
                \;\sim\;
                \inframe_\utau,\rreveal_{\tauo},
                \instate_\utaut(\sync_\ue^{\uID}),
                \ufresh{b}^{\uID}_\utaut,
                \instate_\utau(\sync_\ue^{\uID})
              \end{alignedat}
            }{
              \begin{alignedat}{2}
                \inframe_\tau,\lreveal_{\tauo}
                \;\sim\;
                \inframe_\utau,\rreveal_{\tauo}
              \end{alignedat}
            }
          }
        }
      \end{gathered}
      \numberthis\label{eq:sdbasdgoharhagauv}
    \]
  \end{itemize}
  Combining \eqref{eq:gasdjnlsfhglfzzgfnlg}, \eqref{eq:sdbasdgoharhagauv} with \eqref{eq:pnai1-deriv0} and \eqref{eq:pnai1-deriv30}, it is easy to build a derivation of the form:
  \[
    \begin{gathered}[c]
      \infer[\fa^*]{
        \begin{alignedat}{2}
          &&&\inframe_\tau,\lreveal_{\tauo},
          \cond{b^\ID_\taut\wedge\instate_\tau(\sync_\ue^\ID)}
          {\syncdiff_\tau^\ID}\\
          &\sim\;\;&&
          \inframe_\utau,\rreveal_{\tauo},
          \cond{\ufresh{b}^{\nu_\taut(\ID)}_\utaut
            \wedge\instate_\utau(\sync_\ue^{\uID})}
          {\syncdiff_\utau^{\uID}}
        \end{alignedat}
      }{
        \infer*{}{
          \infer{}{
            \inframe_\tau,\lreveal_{\tauo}
            \;\sim\;
            \inframe_\utau,\rreveal_{\tauo}
          }{}
        }
      }
    \end{gathered}
    \numberthis\label{eq:pnai1-deriv8}
  \]

\end{itemize}

\paragraph{Part 7} Now it only remains to put everything together. First combining \eqref{eq:pnai1-deriv0}, \eqref{eq:pnai1-deriv2} and \eqref{eq:pnai1-deriv8}, we get:
\[
  \infer[\fa^*]{
    \inframe_\tau,
    \lreveal_{\tauo},
    \cond{\instate_\tau(\sync_\ue^\ID)}{\syncdiff_\tau^\ID}
    \;\sim\;
    \inframe_\utau,
    \rreveal_{\tauo},
    \cond{\instate_\utau(\sync_\ue^{\uID})}
    {\syncdiff_\utau^{\uID}}
  }{
    \infer{
      \begin{alignedat}{4}
        &&&\inframe_\tau,&&
        \lreveal_{\tauo},&&
        \left(
          b_\taut^\ID,
          \cond{\instate_\tau(\sync_\ue^\ID)\wedge b_\taut^\ID}
          {\syncdiff_\tau^\ID}
        \right)_{\taut \in T_\ID}\\
        &\sim\;\;&&
        \inframe_\utau,&&
        \rreveal_{\tauo},&&
        \left(
          \ufresh{b}_\utaut^{\nu_\taut(\ID)},
          \cond{\instate_\utau(\sync_\ue^{\uID})\wedge
            \ufresh{b}_\utaut^{\nu_\taut(\ID)}}
          {\syncdiff_\utau^{\uID}}
        \right)_{\taut \in T_\ID}
      \end{alignedat}}{
      \infer*{}{
        \infer{}{
          \inframe_\tau,
          \lreveal_{\tauo}
          \;\;\sim\;\;
          \inframe_\utau,
          \rreveal_{\tauo}
        }
      }
    }
  }
\]
Combine with \eqref{eq:pnai1-deriv1}, this yields:
\[
  \begin{gathered}[c]
    \infer[\fa^*]{
      \inframe_\tau,
      \lreveal_{\tauo},
      \syncdiff_\tau^\ID
      \;\sim\;
      \inframe_\utau,
      \rreveal_{\tauo},
      \syncdiff_\utau^{\uID}
    }{
      \infer*{}{
        \infer{}{
          \inframe_\tau,
          \lreveal_{\tauo}
          \;\;\sim\;\;
          \inframe_\utau,
          \rreveal_{\tauo}
        }
      }
    }
  \end{gathered}
  \numberthis\label{eq:pnai1-deriv9}
\]
We conclude the proof of this case by combining \eqref{eq:pnai1-deriv4}, \eqref{eq:pnai1-deriv7}, and \eqref{eq:pnai1-deriv9} (recall that the $\macsym$ in $\reveal_{\tau} \backslash \reveal_{\tauo}$ where handled in \eqref{eq:pnai1-deriv10}).

\subsection{Case $\ai = \npuai{2}{\ID}{j}$}
We know that $\uai = \npuai{2}{\nu_\tau(\ID)}{j}$. Here $\lreveal_{\tau}$ and $\lreveal_{\tauo}$ coincides everywhere except on the pairs:
\begin{mathpar}
  \syncdiff_\tau^\ID
  \;\;\sim\;\;
  \syncdiff_\utau^{\nu_\tau(\ID)}

  \cstate_\tau(\eauth_\ue^\ID)
  \;\;\sim\;\;
  \cstate_\utau(\eauth_\ue^{\nu_\tau(\ID)})

  \cstate_\tau(\sync_\ue^\ID)
  \;\;\sim\;\;
  \cstate_\utau(\sync_\ue^{\nu_\tau(\ID)})
\end{mathpar}
Therefore we are looking for a derivation of:
\begin{equation}
  \Phi \;\equiv\;
  \begin{alignedat}{2}
    &&&\inframe_\tau,\lreveal_{\tauo},
    \syncdiff_\tau^\ID,
    \cstate_\tau(\eauth_\ue^\ID),
    \cstate_\tau(\sync_\ue^{\ID}),
    \accept_\tau^\ID\\
    &\;\;\sim\;\;&&
    \inframe_\utau,\rreveal_{\tauo},
    \syncdiff_\utau^{\nu_\tau(\ID)},
    \cstate_\utau(\eauth_\ue^{\nu_\tau(\ID)}),
    \cstate_\utau(\sync_\ue^{\nu_\tau(\ID)}),
    \accept_\utau^{\nu_\tau(\ID)}
  \end{alignedat}
  \label{eq:puaij1}
\end{equation}
Let $\tautt = \_,\npuai{1}{\ID}{j} \popre \tau$. We know that $\tautt \not \potau \newsession_\ID(\_)$, and therefore $\utautt = \_, \npuai{1}{\nu_\tau(\ID)}{j}$. We also know that:
\[
  \instate_\tau(\bauth_\ue^\ID)
  \equiv \cstate_\tautt(\bauth_\ue^\ID)
  \equiv g(\inframe_\tautt)
  \qquad\qquad
  \instate_\utau(\bauth_\ue^{\nu_\tau(\ID)})
  \equiv \cstate_\utautt(\bauth_\ue^{\nu_\tau(\ID)})
  \equiv g(\inframe_\utautt)
\]
We summarize this graphically:
\begin{center}
  \begin{tikzpicture}
    [dn/.style={inner sep=0.2em,fill=black,shape=circle},
    sdn/.style={inner sep=0.15em,fill=white,draw,solid,shape=circle},
    sl/.style={decorate,decoration={snake,amplitude=1.6}},
    dl/.style={dashed},
    pin distance=0.5em,
    every pin edge/.style={thin}]

    \draw[thick] (0,0)
    node[left=1.3em] {$\tau:$}
    -- ++(0.5,0)
    node[dn,pin={above:{$\tautt = \_,\npuai{1}{\ID}{j}$}}]
    (b) {}
    -- ++(4.5,0)
    node[dn,pin={above:{$\tau = \_\npuai{2}{\ID}{j}$}}]
    (c) {};

    \path (b) -- ++ (0,-0.6)
    node[sdn] (b1) {}
    node[left,scale=0.9, transform shape]
    {$\cstate_\tautt(\bauth_\ue^\ID) = g(\inframe_\tautt)$};

    \path (c) -- ++ (0,-0.6)
    node[sdn] (c1) {};

    % Right protocol
    \draw[thick] (0,-3)
    node[left=1.3em] {$\utau:$}
    -- ++(0.5,0)
    node[dn,pin={below:{$\utautt = \_,\npuai{1}{{\nu_\tau(\ID)}}{j}$}}]
    (bb) {}
    -- ++(4.5,0)
    node[dn,pin={below:{$\utau = \_,\npuai{2}{\nu_\tau(\ID)}{j}$}}]
    (cc) {};

    \path (bb) -- ++ (0,+0.6)
    node[sdn] (bb1) {}
    node[left,scale=0.9, transform shape]
    {$\cstate_\utautt(\bauth_\ue^{\nu_\tau(\ID)}) =
      g(\inframe_\utautt)$};

    \path (cc) -- ++ (0,+0.6)
    node[sdn] (cc1) {};

    \draw[sl] (b1) -- (c1);
    \draw[sl] (bb1) -- (cc1);
    \path (b) -- (c) node[midway] (d) {};
    \path (bb) -- (cc) node[midway] (dd) {};
    \path (d) -- (dd) node[midway,transform shape,scale=2]
    {$\mathbf{\sim}$};
  \end{tikzpicture}
\end{center}

Hence we can start deconstructing the terms using $\fa$ and simplifying with $\dup$:
\[
  \infer[\simp]{
    \Phi
  }{
    \infer[\simp]{
      \begin{alignedat}{2}
        &&&\inframe_\tau,\lreveal_{\tauo},
        \syncdiff_\utau^{\ID},
        \accept_\tau^\ID,
        g(\cframe_\tautt)\\
        &\sim\;\;&&
        \inframe_\utau,\rreveal_{\tauo},
        \syncdiff_\utau^{\nu_\tau(\ID)},
        \accept_\utau^{\nu_\tau(\ID)},
        g(\cframe_\utautt)
      \end{alignedat}
    }{
      \inframe_\tau,\lreveal_{\tauo},
      \syncdiff_\tau^{\ID},
      \accept_\tau^\ID\\
      \;\;\sim\;\;
      \inframe_\utau,\rreveal_{\tauo},
      \syncdiff_\utau^{\nu_\tau(\ID)},
      \accept_\utau^{\nu_\tau(\ID)}
    }
  }
\]
\paragraph{Part 1}
We now focus on $\accept_\tau^\ID$. Let:
\[
  T = \{ \taut \mid
  \taut = \_, \pnai(j_1,1) \wedge \tautt \potau \taut \popre \tau\}
\]
Using \ref{equ2} we know that:
\begin{alignat*}{2}
  \accept_\tau^\ID
  &\;\lra\;&&
  \bigvee_{\taut = \_,\pnai(j_1,1) \in T}
  \underbrace{\left(
      \begin{alignedat}{2}
        &&&
        g(\inframe_\tau) =
        \mac{\spair
          {\nonce^{j_1}}
          {\sqnsuc(\instate_\tautt(\sqn_\ue^\ID))}}
        {\mkey^\ID}{2}
        \;\wedge\;
        g(\inframe_\tautt) = \nonce^{j_1} \\
        &\wedge\;&&
        \pi_1(g(\inframe_\taut)) =
        \enc{\spair{\ID}{\instate_\tautt(\sqn_\ue^\ID)}}{\pk_\hn}{\enonce^j}
      \end{alignedat}
    \right)}_{\supitr{\tautt,\tau}{\taut}}
  \numberthis\label{eq:p10b}
\end{alignat*}
Using again \ref{equ2} on $\utau$ (which is a valid symbolic trace) we also have:
\begin{alignat*}{2}
  \accept_\utau^{\nu_\tau(\ID)}
  &\;\lra\;&&
  \bigvee_{\taut = \_,\pnai(j_1,1) \in T}
  \underbrace{\left(
      \begin{alignedat}{2}
        &&&
        g(\inframe_\utau) =
        \mac{\spair
          {\nonce^{j_1}}
          {\sqnsuc(\instate_\utautt(\sqn_\ue^{\nu_\tau(\ID)}))}}
        {\mkey^{\nu_\tau(\ID)}}{2}
        \;\wedge\;
        g(\inframe_\utautt) = \nonce^{j_1} \\
        &\wedge\;&&
        \pi_1(g(\inframe_\utaut)) =
        \enc{\spair
          {\ID^{\nu_\tau(\ID)}}
          {\instate_\utautt(\sqn_\ue^{\nu_\tau(\ID)})}}
        {\pk_\hn}{\enonce^j}
      \end{alignedat}
    \right)}_{\supitr{\utautt,\utau}{\utaut}}
\end{alignat*}

\paragraph{Part 2}
We focus on $\syncdiff_\tau^\ID$. First we get rid of the case where $\instate_\tau(\sync_\ue^\ID)$ is true. Indeed, we have:
\begin{mathpar}
  \cond{\instate_\tau(\sync_\ue^\ID)}
  {\syncdiff_\tau^{\ID}}
  \; = \;
  \cond{\instate_\tau(\sync_\ue^\ID)}
  {\sqnsuc(\syncdiff_\tauo^{\ID})}

  \cond{\instate_\utau(\sync_\ue^{\nu_\tau(\ID)})}
  {\syncdiff_\utau^{\nu_\tau(\ID)}}
  \; = \;
  \cond{\instate_\utau(\sync_\ue^{\nu_\tau(\ID)})}
  {\sqnsuc(\syncdiff_\utauo^{\nu_\tau(\ID)})}
\end{mathpar}
And:
\begin{mathpar}
  \left(
    \syncdiff_\tauo^{\ID},\syncdiff_\utauo^{\nu_\tau(\ID)}
  \right)
  \in \reveal_{\tauo}

  \left(
    \instate_\tau(\sync_\ue^\ID),
    \instate_\utau(\sync_\ue^{\nu_\tau(\ID)})
  \right)
  \in \reveal_{\tauo}
\end{mathpar}
Therefore:
\[
  \infer[\simp]{
    \inframe_\tau,\lreveal_{\tauo},
    \syncdiff_\tau^{\ID}
    \;\;\sim\;\;
    \inframe_\utau,\rreveal_{\tauo},
    \syncdiff_\utau^{\nu_\tau(\ID)}
  }{
    \inframe_\tau,\lreveal_{\tauo},
    \cond{\neg\instate_\tau(\sync_\ue^\ID)}
    {\syncdiff_\tau^{\ID}}
    \;\;\sim\;\;
    \inframe_\utau,\rreveal_{\tauo},
    \cond{\neg\instate_\utau(\sync_\ue^{\nu_\tau(\ID)})}
    {\syncdiff_\utau^{\nu_\tau(\ID)}}
  }
\]
Similarly:
\begin{mathpar}
  \cond{\neg \instate_\tau(\sync_\ue^\ID)
    \wedge \neg \accept_\tau^\ID}
  {\syncdiff_\tau^{\ID}}
  \; = \;
  \bot

  \cond{\neg \instate_\utau(\sync_\ue^{\nu_\tau(\ID)})
    \wedge \neg \accept_\utau^{\nu_\tau(\ID)}}
  {\syncdiff_\utau^{\nu_\tau(\ID)}}
  \; = \;
  \bot
\end{mathpar}
Hence we can go one step further:
\[
  \begin{gathered}
    \infer[\simp]{
      \begin{alignedat}{4}
        &&&\inframe_\tau,\lreveal_{\tauo},&&
        \accept_\tau^\ID,&&
        {\syncdiff_\tau^{\ID}}\\
        &\sim\;\;&&
        \inframe_\utau,\rreveal_{\tauo},&&
        \accept_\utau^{\nu_\tau(\ID)},&&
        {\syncdiff_\utau^{\nu_\tau(\ID)}}
      \end{alignedat}
    }{
      \infer[\simp^*]{
        \begin{alignedat}{4}
          &&&\inframe_\tau,\lreveal_{\tauo},&&
          \accept_\tau^\ID,&&
          \cond{\neg\instate_\tau(\sync_\ue^\ID)
            \wedge \accept_\tau^\ID}
          {\syncdiff_\tau^{\ID}}\\
          &\sim\;\;&&
          \inframe_\utau,\rreveal_{\tauo},&&
          \accept_\utau^{\nu_\tau(\ID)},&&
          \cond{\neg\instate_\utau(\sync_\ue^{\nu_\tau(\ID)})
            \wedge \accept_\utau^{\nu_\tau(\ID)}}
          {\syncdiff_\utau^{\nu_\tau(\ID)}}
        \end{alignedat}
      }{
        \begin{alignedat}{4}
          &&&\inframe_\tau,\lreveal_{\tauo},&&
          \left(\supitr{\tautt,\tau}{\taut},
            \cond{\neg\instate_\tau(\sync_\ue^\ID)
              \wedge b_\taut}
            {\syncdiff_\tau^{\ID}}\right)_{\taut \in T}\\
          &\sim\;\;&&
          \inframe_\utau,\rreveal_{\tauo},&&
          \left(\supitr{\utautt,\utau}{\utaut},
            \cond{\neg\instate_\utau(\sync_\ue^{\nu_\tau(\ID)})
              \wedge \ufresh{b}_\utaut,}
            {\syncdiff_\utau^{\nu_\tau(\ID)}}
          \right)_{\taut \in T}
        \end{alignedat}
      }
    }
  \end{gathered}
  \numberthis\label{eq:adasdsads}
\]

\paragraph{Part 3}
Using \ref{sequ4} twice, we know that for every $\taut \in T$:
\[
  \left(
    \neg\instate_\tau(\sync_\ue^\ID)
    \wedge \supitr{\tautt,\tau}{\taut}
  \right)
  \;\ra\;
  \syncdiff_\tau^{\ID} = \zero
  \numberthis\label{eq:diff-zero}
\]
And that:
\[
  \left(
    \neg\instate_\utau(\sync_\ue^{\nu_\tau(\ID)})
    \wedge\supitr{\utautt,\utau}{\utaut}
  \right)
  \;\ra\;
  \syncdiff_\utau^{\nu_\tau(\ID)} = \zero
  \numberthis\label{eq:diff-zero1}
\]
Using \eqref{eq:diff-zero} and \eqref{eq:diff-zero1}, we can extend the derivation in \eqref{eq:adasdsads}:
\[
  \begin{gathered}
    \infer[\simp]{
      \begin{alignedat}{4}
        &&&\inframe_\tau,\lreveal_{\tauo},&&
        \accept_\tau^\ID,&&
        {\syncdiff_\tau^{\ID}}\\
        &\sim\;\;&&
        \inframe_\utau,\rreveal_{\tauo},&&
        \accept_\utau^{\nu_\tau(\ID)},\;&&
        {\syncdiff_\utau^{\nu_\tau(\ID)}}
      \end{alignedat}
    }{
      \infer[\simp]{
        \begin{alignedat}{4}
          &&&\inframe_\tau,\lreveal_{\tauo},&&
          \left(\supitr{\tautt,\tau}{\taut},
            \cond{\neg\instate_\tau(\sync_\ue^\ID)
              \wedge \supitr{\tautt,\tau}{\taut}}
            {\zero}\right)_{\taut \in T}\\
          &\sim\;\;&&
          \inframe_\utau,\rreveal_{\tauo},&&
          \left(\supitr{\utautt,\utau}{\utaut},
            \cond{\neg\instate_\utau(\sync_\ue^{\nu_\tau(\ID)})
              \wedge \supitr{\utautt,\utau}{\utaut}}
            {\zero}
          \right)_{\taut \in T}
        \end{alignedat}
      }{
        \inframe_\tau,\lreveal_{\tauo},
        \left(\supitr{\tautt,\tau}{\taut}\right)_{\taut \in T}
        \;\;\sim\;\;
        \inframe_\utau,\rreveal_{\tauo},
        \left(\supitr{\utautt,\utau}{\utaut}\right)_{\taut \in T}
      }
    }
  \end{gathered}
  \numberthis\label{eq:qeamvocvm}
\]
We can check that for all $\taut = \_,\pnai(j_1,1) \in T$, since $\tautt \potau \taut$ and $\tautt \not \potau \newsession_\ID(\_)$ we have that:
\begin{mathpar}
  \left(
    \mac{\spair
      {\nonce^{j_1}}
      {\sqnsuc(\instate_\tautt(\sqn_\ue^\ID))}}
    {\mkey^\ID}{2},
    \mac{\spair
      {\nonce^{j_1}}
      {\sqnsuc(\instate_\utautt(\sqn_\ue^{\nu_\tau(\ID)}))}}
    {\mkey^{\nu_\tau(\ID)}}{2}
  \right) \in \reveal_{\tauo}

  \left(
    \nonce^{j_1},
    \nonce^{j_1}
  \right) \in \reveal_{\tauo}

  \left(
    \enc{\spair
      {\ID}
      {\instate_\tautt(\sqn_\ue^\ID)}}
    {\pk_\hn}{\enonce^j},
    \enc{\spair
      {\ID^{\nu_\tau(\ID)}}
      {\instate_\utautt(\sqn_\ue^{\nu_\tau(\ID)})}}
    {\pk_\hn}{\enonce^j}
  \right) \in \reveal_{\tauo}
\end{mathpar}
We can complete the derivation in \eqref{eq:qeamvocvm}: first, for every $\taut \in T$, we deconstruct  $b_\taut \sim \ufresh{b}_\utaut$ with $\fa$; and then, we absorb the subterms into $\reveal_{\tauo}$ using rule $\dup$ (which is sound using the remark above). This yields:
\[
  \infer[\simp]{
    \begin{alignedat}{4}
      &&&\inframe_\tau,\lreveal_{\tauo},&&
      \accept_\tau^\ID,&&
      {\syncdiff_\tau^{\ID}}\\
      &\sim\;\;&&
      \inframe_\utau,\rreveal_{\tauo},&&
      \accept_\utau^{\nu_\tau(\ID)},&&
      {\syncdiff_\utau^{\nu_\tau(\ID)}}
    \end{alignedat}
  }{
    \inframe_\tau,\lreveal_{\tauo}
    \;\;\sim\;\;
    \inframe_\utau,\rreveal_{\tauo}
  }
\]
Finally we conclude using the induction hypothesis.

\subsection{Case $\ai = \fnai(j)$}
We know that $\ai = \fnai(j)$. Here $\lreveal_{\tau}$ and $\lreveal_{\tauo}$ coincides everywhere except on the following pairs: for every base identity $\ID$:
\begin{alignat*}{2}
  \suci^j
  & \;\;\sim\;\;&&
  \suci^j\\
  \cond{\neauth_\tau(\ID,j)}
  {\left(\tsuci_\tau(\ID,j)\right)}
  & \;\;\sim\;\;&&
  \cond{\uneauth_\utau(\ID,j)}
  {\left(\utsuci_\utau(\ID,j)\right)}\\
  \cond{\neauth_\tau(\ID,j)}
  {\left(\tmac_\tau(\ID,j)\right)}
  & \;\;\sim\;\;&&
  \cond{\uneauth_\utau(\ID,j)}
  {\left(\utmac_\utau(\ID,j)\right)}
\end{alignat*}
\paragraph{Part 1}
Let $\ID \in \iddom$. Using Lemma~\ref{lem:auth-serv-net}, we know that:
\[
  \cstate_\tau(\eauth_\hn^j) = \ID \;\ra\;
  \bigvee_{\tau' \popreleq \tau}
  \cstate_{\tau'}(\bauth_\ue^\ID) = \nonce^j
\]
Let $\tau' \popreleq \tau$. If $\ID$ is not a base identity we know that $\cstate_{\tau'}(\bauth_\ue^\ID) \equiv \bot$, and therefore:
\[
  \neg\left(
    \cstate_{\tau'}(\bauth_\ue^\ID) = \nonce^j
  \right)
\]
It follows that $\eq{\cstate_\tau(\eauth^j_\hn)}{\ID} = \false$. We can then check that:
\begin{equation*}
  t_\tau \;=\;
  \begin{alignedat}[c]{2}
    &\ite{\neauth_\tau(\agent{A}_{1},j)}
    {\\ & \;\;\pair
      {\tsuci_\tau(\agent{A}_{1},j)}
      {\tmac_\tau(\agent{A}_{1},j)}\\
      & \!\!}
    {\ite{\neauth_\tau(\agent{A}_{2},j)}
      {\\ & \;\;\pair
        {\tsuci_\tau(\agent{A}_{2},j)}
        {\tmac_\tau(\agent{A}_{2},j)}\\
        & \qquad \cdots \\ &\!\! }
      {\unknownid}}
  \end{alignedat}
  \qquad\qquad
  t_\utau \;=\;
  \begin{alignedat}[c]{2}
    &\ite{\uneauth_\utau(\agent{A}_{1},j)}
    {\\ & \;\;\spair
      {\utsuci_\utau(\agent{A}_{1},j)}
      {\utmac_\utau(\agent{A}_{1},j)}\\
      & \!\!}
    {\ite{\uneauth_\utau(\agent{A}_{2},j)}
      {\\ & \;\;\spair
        {\utsuci_\utau(\agent{A}_{2},j)}
        {\utmac_\utau(\agent{A}_{2},j)}\\
        & \qquad \cdots \\ &\!\! }
      {\unknownid}}
  \end{alignedat}
\end{equation*}
Using the $\fa$ axiom, we can split $t_\tau$ and $t_\utau$ as follows:
\[
  \infer[\fa^*]{t_\tau \sim t_\utau}{
    \begin{alignedat}{4}
      &\big(
      &\neauth_\tau(\agent{A}_{i},j)&,\;&
      \cond{\neauth_\tau(\agent{A}_{i},j)}
      {\tsuci_\tau(\agent{A}_{i},j)}&,\;&
      \cond{\neauth_\tau(\agent{A}_{i},j)}
      {\tmac_\tau(\agent{A}_{i},j)}
      \big)_{i \le B}\\
      \;\sim\;&
      \big(
      &\uneauth_\utau(\agent{A}_{i},j)&,\;&
      \cond{\uneauth_\utau(\agent{A}_{i},j)}
      {\utsuci_\utau(\agent{A}_{i},j)}&,\;&
      \cond{\uneauth_\utau(\agent{A}_{i},j)}
      {\utmac_\utau(\agent{A}_{i},j)}
      \big)_{i \le B}
    \end{alignedat}
  }
\]
Since:
\[
  \left(\neauth_\tau(\agent{A}_{i},j),\uneauth_\utau(\agent{A}_{i},j)\right)
  \in
  \reveal_{\tauo}
\]
We just need to prove that there is a derivation of:
\[
  \begin{alignedat}{4}
    &&&\inframe_\tau,\lreveal_{\tauo}&,\;&
    \big(\cond{\neauth_\tau(\agent{A}_{i},j)}
    {\tsuci_\tau(\agent{A}_{i},j)}&,\;&
    \cond{\neauth_\tau(\agent{A}_{i},j)}
    {\tmac_\tau(\agent{A}_{i},j)}
    \big)_{i \le B}\\
    &\;\sim\;&&\inframe_\utau,\rreveal_{\tauo}&,\;&
    \big(\cond{\uneauth_\utau(\agent{A}_{i},j)}
    {\utsuci_\utau(\agent{A}_{i},j)}&,\;&
    \cond{\uneauth_\utau(\agent{A}_{i},j)}
    {\utmac_\utau(\agent{A}_{i},j)}
    \big)_{i \le B}
  \end{alignedat}
\]
Assume that we have a proof of
\[
  \begin{alignedat}[c]{2}
    &&&\inframe_\tau,\lreveal_{\tauo},\;
    \big(\cond{\neauth_\tau(\agent{A}_{i},j)}
    {\tsuci_\tau(\agent{A}_{i},j)},\;
    \cond{\neauth_\tau(\agent{A}_{i},j)}
    {\tmac_\tau(\agent{A}_{i},j)}
    \big)_{i \le B}\\
    &\;\sim\;&&
    \inframe_\tau,\lreveal_{\tauo},\;
    \big(\nonce_{i,j},\;
    \nonce_{i,j}'
    \big)_{i \le B}\\
  \end{alignedat}
  \numberthis\label{eq:myeqaligned}
\]
And:
\[
  \begin{alignedat}[c]{2}
    &&&
    \inframe_\utau,\rreveal_{\tauo},\;
    \big(\nonce_{i,j},\;
    \nonce_{i,j}'
    \big)_{i \le B} \\
    &\;\sim\;&&
    \inframe_\utau,\rreveal_{\tauo},\;
    \big(\cond{\uneauth_\utau(\agent{A}_{i},j)}
    {\utsuci_\utau(\agent{A}_{i},j)},\;
    \cond{\uneauth_\utau(\agent{A}_{i},j)}
    {\utmac_\utau(\agent{A}_{i},j)}
    \big)_{i \le B}
  \end{alignedat}
  \numberthis\label{eq:myeqaligned2}
\]
Where for all $\{\nonce_{i,j}, \nonce_{i,j}' \mid 1 \le i \le B\}$ are fresh distinct nonces. Since:
\[
  \infer[\ax{Fresh}]{
    \inframe_\tau,\lreveal_{\tauo},\;
    \big(\nonce_{i,j},\;
    \nonce_{i,j}'
    \big)_{i \le B}
    \;\sim\;
    \inframe_\utau,\rreveal_{\tauo},\;
    \big(\nonce_{i,j},\;
    \nonce_{i,j}'
    \big)_{i \le B}
  }{
    \inframe_\tau,\lreveal_{\tauo}
    \;\sim\;
    \inframe_\utau,\rreveal_{\tauo}
  }
\]
We can conclude by induction.

\paragraph{Part 2}
It only remains to give derivations of the formulas in Eq.~\eqref{eq:myeqaligned} and Eq.~\eqref{eq:myeqaligned2}. We only give the proof for Eq.~\eqref{eq:myeqaligned2}, and we omit the derivation of Eq.~\eqref{eq:myeqaligned} (as it is similar, and simpler).

Instead of doing the proof simultaneously for all $i$ in $\{1,\dots,B\}$, we give the proof for a single $i$. We let the reader check that the syntactic side-conditions necessary for the derivations for $i$ and $i'$, with $i \ne i'$, are compatible. Therefore the derivations can be sequentially composed, which yield the full proof.

Let $1 \le i \le B$. By transitivity, we only have to show that:
\[
  \begin{alignedat}[c]{2}
    &&&
    \inframe_\utau,\rreveal_{\tauo},\;
    \nonce_{i,j},\;
    \nonce_{i,j}'\\
    &\sim\;\;&&
    \inframe_\utau,\rreveal_{\tauo},\;
    \nonce_{i,j},\;
    \cond{\uneauth_\utau(\agent{A}_{i},j)}
    {\utmac_\utau(\agent{A}_{i},j)}
  \end{alignedat}
  \numberthis\label{eq:myeqaligned4b}
\]
And:
\[
  \begin{alignedat}[c]{2}
    &&&\inframe_\utau,\rreveal_{\tauo},\;
    \nonce_{i,j},\;
    \cond{\uneauth_\utau(\agent{A}_{i},j)}
    {\utmac_\utau(\agent{A}_{i},j)}\\
    &\sim\;\;&&
    \inframe_\utau,\rreveal_{\tauo},\;
    \cond{\uneauth_\utau(\agent{A}_{i},j)}
    {\utsuci_\utau(\agent{A}_{i},j)},\;
    \cond{\uneauth_\utau(\agent{A}_{i},j)}
    {\utmac_\utau(\agent{A}_{i},j)}
  \end{alignedat}
  \numberthis\label{eq:myeqaligned4}
\]

\paragraph{Derivation of Formula~\eqref{eq:myeqaligned4}}
Let $\{\uID_1,\dots,\uID_l\} = \copyid(\ID_i)$. We define, for every $0 \le y \le l$, the partially randomized terms $\utsuci_\tau^y(\ID_i,j)$:
\begin{alignat*}{2}
  \utsuci_\utau^y(\ID_i,j)
  &\;\;\equiv\;\;&&
  \begin{alignedat}[t]{2}
    &\ite{
      \eq{\cstate_\utau(\eauth_\hn^{j})}{\uID_1}
    }{\nonce_{i,j}^1
      \\ & \qquad \cdots \\ &\!\! }
    {\ite{
        \eq{\cstate_\utau(\eauth_\hn^{j})}{\uID_{y-1}}
      }{\nonce_{i,j}^{y-1}\\ & \!\!}
      {\ite{
          \eq{\cstate_\utau(\eauth_\hn^{j})}{\uID_y}
        }{
          \suci^j
          \oplus \row{\nonce^{j}}{\key^{\uID_y}}
          \\ & \qquad \cdots \\ &\!\! }
        {
          \suci^j
          \oplus \row{\nonce^{j}}{\key^{\uID_l}}
        }
      }
    }
  \end{alignedat}
\end{alignat*}
Remark that:
\begin{alignat*}{2}
  \cond{\uneauth_\utau(\agent{A}_{i},j)}{\utsuci_\utau^0(\ID_i,j)}
  &\;\;=\;\;&& \cond{\uneauth_\utau(\agent{A}_{i},j)}
  {\utsuci_\utau(\agent{A}_{i},j)}
\end{alignat*}
And that:
\[
  \infer[\textsf{indep-branch}]
  {\begin{alignedat}{3}
      &&&\inframe_\utau,\rreveal_{\tauo},\;
      \nonce_{i,j},&&
      \cond{\uneauth_\utau(\agent{A}_{i},j)}
      {\utmac_\utau(\agent{A}_{i},j)}\\
      &\;\sim\;&&
      \inframe_\utau,\rreveal_{\tauo},\;
      \cond{\uneauth_\utau(\agent{A}_{i},j)}{\utsuci_\utau^l(\ID_i,j)},\;&&
      \cond{\uneauth_\utau(\agent{A}_{i},j)}
      {\utmac_\utau(\agent{A}_{i},j)}
    \end{alignedat}}{}
\]

Hence by transitivity, to prove that there exists a derivation of Formula~\eqref{eq:myeqaligned4} it is sufficient to prove that, for every $0 < y \le l$, that we have a derivation of $\cframe_{y-1} \sim \cframe_y$, where:
\begin{alignat*}{2}
  \cframe_{y-1}&\;\;\equiv\;\;&&
  \inframe_\utau,\rreveal_{\tauo},\;
  \cond{\uneauth_\utau(\agent{A}_{i},j)}
  {\utsuci_\utau^{y-1}(\ID_i,j)},\;
  \cond{\uneauth_\utau(\agent{A}_{i},j)}
  {\utmac_\utau(\agent{A}_{i},j)}\\
  \cframe_{y}&\;\;\equiv\;\;&&
  \inframe_\utau,\rreveal_{\tauo},\;
  \cond{\uneauth_\utau(\agent{A}_{i},j)}
  {\utsuci_\utau^{y}(\ID_i,j)},\;
  \cond{\uneauth_\utau(\agent{A}_{i},j)}
  {\utmac_\utau(\agent{A}_{i},j)}
\end{alignat*}
Let $1 \le y \le B$, we are going to give a derivation of $\cframe_{y-1} \sim \cframe_y$. This is done in two times:
\begin{itemize}
\item First, we are going to use the $\prffr$ axiom applied to $\rowsym$, with key $\key^{\uID_{y}}$, to replace $\suci^j\oplus \row{\nonce^{j}}{\key^{\uID_y}}$ with $\suci^j\oplus \nonce''^y_{i,j}$ (where $\nonce''^y_{i,j}$ is a fresh nonce).

  First, observe that there is only one occurrence of $\row{\nonce^{j}}{\key^{\uID_y}}$ in $\cframe_{y-1}$ (and none in $\cframe_y$). Moreover:
  \begin{alignat*}{3}
    \setprf_{\key^{\uID_{y}}}^{\rowsym}\left(\cframe_{y-1},\cframe_{y}\right)
    \backslash\{
    \nonce^{j}\}
    &\;\;=\;\;&&&&
    \left\{
      \instate_{\taut}(\eauth_\ue^{\ID})
      \mid \taut = \_,\fuai_{\uID_{y}}(p) \popre \tau
    \right\}\\
    &&&\cup\;&&
    \left\{
      \nonce^p
      \mid \taut = \_,\fnai(p) \popre \tau
    \right\}
  \end{alignat*}
  Let $\taut = \_, \fnai(p) \popre \tau$. We know that $p \ne j$, and therefore that $(\nonce^p = \nonce^j) = \false$. We deduce that:
  \[
    \row{\nonce^{j}}{\key^{\uID_y}}
    \;=\;
    \lrcond{\bigwedge
      _{\taut = \_, \fnai(p) \popre \tau}
      \nonce^p \ne \nonce^j}
    {\row{\nonce^{j}}{\key^{\uID_y}}}
  \]
  But we still need guards for $\instate_{\taut}(\eauth_\ue^{\ID}) = \nonce^j$, for every $\taut = \_, \fuai_{\key^{\uID_{y}}}(p) \popre \tau$. The problem is that it is not true that $(\instate_{\taut}(\eauth_\ue^{\ID}) = \nonce^j) = \false$. We solve this problem by rewriting $\cframe_{y-1}$ (resp. $\cframe_{y}$) into the vector of terms $\cframe_{y-1}'$ (resp. $\cframe_{y}'$) obtained by replacing (recursively) any occurrence of $\accept_{\taut}^{\key^{\uID_y}}$ with:
  \[
    \bigvee_{\tauo = \_\fnai(j_0) \popre \taut
      \atop{\tauo \not \popre_{\taut} \newsession_{\uID_y}(\_)}}
    \left(\begin{alignedat}{2}
        &\injauth_{\taut}({\uID_y},j_0)
        \wedge\instate_{\taut}(\eauth_\hn^{j_0}) \ne \unknownid\\
        \wedge\;& \pi_1(g(\inframe_{\taut})) =
        \suci^{j_0} \xor \row{\nonce^{j_0}}{\key^{\uID_y}}
        \wedge\; \pi_2(g(\inframe_{\taut})) =
        \mac
        {\spair
          {\suci^{j_0}}
          {\nonce^{j_0}}}
        {\mkey^{\uID_y}}{5}
      \end{alignedat}\right)
    \numberthis\label{eq:mylabel}
  \]
  Which is sound using \ref{equ1}. We then have:
  \begin{alignat*}{3}
    \setprf_{\key^{\uID_{y}}}^{\rowsym}\left(\cframe'\right) &\;\;=\;\;&&&&
    \left\{
      \nonce^p
      \mid \taut = \_,\fnai(p) \popre \tau
    \right\}
  \end{alignat*}
  Therefore we can apply the $\prffr$ axioms as wanted: first we replace $\cframe_{y-1}$ and $\cframe_{y}$ by $\cframe'_{y-1}$ and $\cframe'_{y}$ using rule $R$; then we apply the $\prffr$ axiom; and finally we rewrite any term of the form \eqref{eq:mylabel} back into $\accept_{\taut}^{\key^{\uID_y}}$.

\item Then, we use the $\oplus\textsf{-indep}$ axiom to replace $\suci^j\oplus \nonce''^y_{i,j}$ with $\nonce^y_{i,j}$.
\end{itemize}

\paragraph{Derivation of Formula~\eqref{eq:myeqaligned4b}}
We use the same proof technique. We define, for every $0 \le y \le l$, the partially randomized terms $\utmac_\tau^y(\ID_i,j)$:
\begin{alignat*}{2}
  \utmac_\utau^y(\ID_i,j)
  &\;\;\equiv\;\;&&
  \begin{alignedat}[t]{2}
    &\ite{
      \eq{\cstate_\utau(\eauth_\hn^{j})}{\uID_1}
    }{\nonce_{i,j}'^1
      \\ & \qquad \cdots \\ &\!\! }
    {\ite{
        \eq{\cstate_\utau(\eauth_\hn^{j})}{\uID_{y-1}}
      }{\nonce_{i,j}'^{y-1}\\ & \!\!}
      {\ite{
          \eq{\cstate_\utau(\eauth_\hn^{j})}{\uID_y}
        }{
          \mac{\spair{\suci^j}{\nonce^j}}{\mkey^{\uID_y}}{5}
          \\ & \qquad \cdots \\ &\!\! }
        {
          \mac{\spair{\suci^j}{\nonce^j}}{\mkey^{\uID_l}}{5}
        }
      }
    }
  \end{alignedat}
\end{alignat*}
Remark that:
\begin{alignat*}{2}
  \cond{\uneauth_\utau(\agent{A}_{i},j)}{\utmac_\utau^0(\ID_i,j)}
  &\;\;=\;\;&& \cond{\uneauth_\utau(\agent{A}_{i},j)}
  {\utmac_\utau(\agent{A}_{i},j)}
\end{alignat*}
And that:
\[
  \infer[\textsf{indep-branch}]
  {\begin{alignedat}{3}
      &&&\inframe_\utau,\rreveal_{\tauo},\;
      \nonce_{i,j},&&
      \nonce'_{i,j}\\
      &\;\sim\;&&
      \inframe_\utau,\rreveal_{\tauo},\;
      \nonce_{i,j},\;&&
      \cond{\uneauth_\utau(\agent{A}_{i},j)}
      {\utmac^l_\utau(\agent{A}_{i},j)}
    \end{alignedat}}{}
\]

Hence by transitivity, to prove that there exists a derivation of Formula~\eqref{eq:myeqaligned4b} it is sufficient to prove that, for every $0 < y \le l$, that we have a derivation of $\psi_{y-1} \sim \psi_y$, where:
\begin{alignat*}{2}
  \psi_{y-1}&\;\;\equiv\;\;&&
  \psi_\utauo,\rreveal_{\tauo},\;
  \nonce_{i,j},\;
  \cond{\uneauth_\utau(\agent{A}_{i},j)}
  {\utmac_\utau^{y-1}(\ID_i,j)}\\
  \psi_{y}&\;\;\equiv\;\;&&
  \psi_\utauo,\rreveal_{\tauo},\;
  \nonce_{i,j},\;
  \cond{\uneauth_\utau(\agent{A}_{i},j)}
  {\utmac_\utau^{y}(\ID_i,j)}
\end{alignat*}
Let $1 \le y \le B$, we are going to give a derivation of $\psi_{y-1} \sim \psi_y$. For this, we are going to use the $\prfmac^5$ axiom with key $\mkey^{\uID_{y}}$, to replace $\mac{\spair{\suci^j}{\nonce^j}}{\mkey^{\uID_y}}{5}$ with a fresh nonce $\tilde{\nonce}^y_{i,j}$.

First, observe that there is only one occurrence of $\mac{\spair{\suci^j}{\nonce^j}}{\mkey^{\uID_y}}{5}$ in $\psi_{y-1}$ (and none in $\psi_y$). Moreover:
\begin{multline*}
  \setmac_{\key^{\uID_{y}}}^{5}\left(\psi_{y-1},\psi_{y}\right)
  \backslash\left\{
    \spair{\suci^j}{\nonce^j}
  \right\}
  \;\;=\;\;\\
  \begin{alignedat}{2}
    &&&\left\{
      \spair
      {\suci^{p}}
      {\nonce^p}
      \mid \taut = \_, \fnai(p) \popre \tau
    \right\}\\
    &\cup\;&&
    \left\{
      \spair
      {\pi_1(g(\inframe_{\taut})) \xor
        \row{\instate_{\taut}(\eauth_\ue^{{\key^{\uID_{y}}}})}{\key}}
      {\instate_{\taut}(\eauth_\ue^{{\key^{\uID_{y}}}})}
      \mid \taut = \_, \fnai(p) \popre \tau
    \right\}
  \end{alignedat}
\end{multline*}
Let $\taut = \_, \fnai(p) \popre \tau$. Since $\suci^j$ is a fresh nonce, we know using $\ax{EQIndep}$ and the injectivity of the pair function that:
\begin{gather*}
  \left(
    \spair{\suci^j}{\nonce^j} =
    \spair{\suci^{p}}{\nonce^p}
  \right) = \false\\
  \left(
    \spair{\suci^j}{\nonce^j} =
    \spair
    {\pi_1(g(\inframe_{\taut})) \xor
      \row{\instate_{\taut}(\eauth_\ue^{{\key^{\uID_{y}}}})}{\key}}
    {\instate_{\taut}(\eauth_\ue^{{\key^{\uID_{y}}}})}
  \right) = \false
\end{gather*}
Therefore we can directly apply the $\prfmac^5$ axiom, which concludes this case.

\subsection{Case $\ai = \fuai_\ID(j)$}
We know that $\uai = \fuai_{\nu_\tau(\ID)}(j)$. Here $\lreveal_{\tau}$ and $\lreveal_{\tauo}$ coincides everywhere except on the pairs:
\begin{alignat*}{2}
  \cstate_\tau(\success_\ue^\ID)
  &\;\;\sim\;\;&&
  \cstate_\utau(\success_\ue^{\nu_\tau(\ID)})\\
  \underbrace{\begin{alignedat}{2}
      \ite{\cstate_\tau(\success_\ue^\ID)&}
      {\cstate_\tau(\suci_\ue^\ID)\\ &}
      {\bot}
    \end{alignedat}}_{\msuci_\tau^\ID}
  &\;\;\sim\;\;&&
  \underbrace{\begin{alignedat}{2}
      \ite{\cstate_\utau(\success_\ue^{\nu_\tau(\ID)})&}
      {\cstate_\utau(\suci_\ue^{\nu_\tau(\ID)})\\ &}
      {\bot}
    \end{alignedat}}_{\msuci_\utau^{\nu_\tau(\ID)}}
\end{alignat*}
Moreover, we also need to show that:
\[
  \accept_\tau^\ID \sim \accept_\utau^{\nu_\tau(\ID)}
\]
Recall that $\tau = \tauo,\ai$ and $\utau = \utauo,\uai$, and that:
\begin{alignat*}{2}
  \cstate_\tau(\success_\ue^{\ID})
  &\;\equiv\;&& \accept_\tau^\ID\\
  \cstate_\tau(\suci_\ue^{\ID}) &\;\equiv\;&&
  \begin{alignedat}[t]{2}
    &\ite
    {\accept_\tau^\ID}
    {\pi_1(g(\inframe_\tau))
      \xor
      \row{\cstate_\tauo(\eauth_\ue^{\ID})}{\key}\\ & \!\!}
    {\unset}
  \end{alignedat}\\\displaybreak[0]
  \cstate_\utau(\success_\ue^{{\nu_\tau(\ID)}})
  &\;\equiv\;&&
  \accept_\utau^{\nu_\tau(\ID)}\\
  \cstate_\utau(\suci_\ue^{{\nu_\tau(\ID)}})
  &\;\equiv\;&&
  \begin{alignedat}[t]{2}
    &\ite
    {\accept_\utau^{\nu_\tau(\ID)}}
    {\pi_1(g(\inframe_\utau))
      \xor
      \row{\cstate_\utauo(\eauth_\ue^{{\nu_\tau(\ID)}})}{\key}\\ & \!\!}
    {\unset}
  \end{alignedat}
\end{alignat*}
Therefore we want a proof of:
\[
  \inframe_\tau,\lreveal_{\tauo},
  \accept_\tau^\ID,\msuci_\tau^\ID
  \sim
  \inframe_\utau,\rreveal_{\tauo},
  \accept_\utau^{\nu_\tau(\ID)},\msuci_\utau^{\nu_\tau(\ID)}
  \numberthis\label{eq:fuai1}
\]
Using \ref{equ1}, we know that:
\begin{alignat*}{2}
  \accept_\tau^\ID
  &\;\;\lra\;\;&&
  \bigvee_{\taut = \_,\fnai(j_0) \popre \tau
    \atop{\taut \not \potau \newsession_\ID(\_)}}
  \futr{\tau}{\taut}
  \numberthis\label{eq:i2-1}
\end{alignat*}
We know that $\utau = \utauo,\fuai_{\nu_\tau(\ID)}(j)$ is a valid symbolic trace. Using \ref{equ1} again, we know that:
\begin{alignat*}{2}
  \accept_\utau^{\nu_\tau(\ID)}
  &\;\;\lra\;\;&&
  \bigvee_{\taut = \_,\fnai(j_0) \popre \tau
    \atop{\taut \not \potau \newsession_\ID(\_)}}
  \futr{\utau}{\utaut}
  \numberthis\label{eq:i2-2}
\end{alignat*}
Let:
\begin{gather*}
  \{j_0,\dots,j_l\} =
  \{i \mid \tau' = \_,\fnai(i) \popre \tau \wedge
  \tau' \not \potau \newsession_\ID(\_)\}
\end{gather*}
One can check that:
\begin{gather*}
  \{j_0,\dots,j_l\} =
  \{i \mid \tau' = \_,\fnai(i) \popre \utau
  \wedge \tau' \not \poutau \newsession_{\nu_\tau(\ID)}(\_)\}
\end{gather*}
For all $0 \le i \le l$, let $\tau_{j_i}$ be such that $\tau_{j_i} = \_,\fnai(j_i) \popre \tau$. One can check that:
\begin{equation*}
  \msuci_\tau^\ID =
  \begin{alignedat}[t]{2}
    &\ite{
      \futr{\tau}{\tau_{j_0}}
    }
    {\suci^{j_0}\\ & \!\!}
    {
      \ite{
        \futr{\tau}{\tau_{j_1}}
      }
      { \suci^{j_1} \\ & \qquad \cdots \\ &\!\! }
      { \suci^{j_l}}
    }
  \end{alignedat}
  \qquad\qquad
  \msuci_\utau^{\nu_\tau(\ID)} =
  \begin{alignedat}[t]{2}
    &\ite{
      \futr{\utau}{\ufresh{\tau_{j_0}}}
    }
    {\suci^{j_0}\\ & \!\!}
    {
      \ite{
        \futr{\utau}{\ufresh{\tau_{j_1}}}
      }
      { \suci^{j_1} \\ & \qquad \cdots \\ &\!\! }
      { \suci^{j_l}}
    }
  \end{alignedat}
\end{equation*}
We can now start giving a derivation of \eqref{eq:fuai1}:
\[
  \infer[\fa^*]{
    \inframe_\tau,\lreveal_{\tauo},
    \accept_\tau^\ID,\msuci_\tau^\ID
    \sim
    \inframe_\utau,\rreveal_{\tauo},
    \accept_\utau^{\nu_\tau(\ID)},\msuci_\utau^{\nu_\tau(\ID)}
  }{
    \infer[\dup^*]{
      \inframe_\tau,\lreveal_{\tauo},
      \left(\futr{\tau}{\tau_{j_i}}\right)_{i \le l},
      \left(\suci^{j_i}\right)_{i \le l}
      \sim
      \inframe_\utau,\rreveal_{\tauo},
      \left(\futr{\utau}{\ufresh{\tau_{j_i}}}\right)_{i \le l},
      \left(\suci^{j_i}\right)_{i \le l}
    }{
      \inframe_\tau,\lreveal_{\tauo},
      \left(\futr{\tau}{\tau_{j_i}}\right)_{i \le l}
      \sim
      \inframe_\utau,\rreveal_{\tauo},
      \left(\futr{\utau}{\ufresh{\tau_{j_i}}}\right)_{i \le l}
    }}
\]
Since for all $1 \le i \le l$, $(\suci^{j_i} \sim \suci^{j_i}) \in \reveal_{\tauo}$. Finally, we conclude using \ref{der2} for every $0 \le i \le l$:
\[
  \infer[\fa^*]{
    \inframe_\tau,\lreveal_{\tauo},
    \left(\futr{\tau}{\tau_{j_i}}\right)_{i \le l}
    \sim
    \inframe_\utau,\rreveal_{\tauo},
    \left(\futr{\utau}{\ufresh{\tau_{j_i}}}\right)_{i \le l}
  }{
    \inframe_\tau,\lreveal_{\tauo}
    \sim
    \inframe_\utau,\rreveal_{\tauo}
  }
\]

\subsection{Case $\ai = \cuai_\ID(j,0)$}
We know that $\uai = \cuai_{\nu_\tau(\ID)}(j,0)$. Let $\uID = \nu_\tau(\ID)$. Here $\lreveal_{\tau}$ and $\lreveal_{\tauo}$ coincides everywhere except on:
\begin{mathpar}
  \cstate_\tau(\success_\ue^\ID)
  \;\sim\;
  \cstate_\utau(\success_\ue^{\uID})

  \cstate_\tau(\uetsuccess^\ID)
  \;\sim\;
  \cstate_\utau(\uetsuccess^{\uID})

  \msuci^\ID_\tau
  \;\sim\;
  \msuci^{\uID}_\utau
\end{mathpar}
Handling these is completely trivial because:
\begin{mathpar}
  \cstate_\tau(\success_\ue^\ID) \equiv \false

  \cstate_\utau(\success_\ue^{\uID}) \equiv \false

  \cstate_\tau(\uetsuccess^\ID) \equiv
  \instate_\tau(\success_\ue^\ID)

  \cstate_\utau(\uetsuccess^{\uID}) \equiv
  \instate_\utau(\success_\ue^{\uID})

  \msuci^\ID_\tau \equiv \bot

  \msuci^{\uID}_\utau \equiv \bot
\end{mathpar}
And $(\instate_\tau(\success_\ue^\ID),\instate_\utau(\success_\ue^{\uID})) \in \reveal_{\tauo}$. Finally, we conclude by observing that:
\begin{mathpar}
  t_\tau = \ite{\instate_\tau(\success_\ue^\ID)}
  {\msuci^\ID_\tau}{\nosuci}

  t_\utau = \ite{\instate_\utau(\success_\ue^\uID)}
  {\msuci^\uID_\utau}{\nosuci}
\end{mathpar}
Hence:
\[
  \begin{gathered}[c]
    \infer[\simp]{
      \inframe_\tau,
      \lreveal_{\tauo},
      t_\tau
      \;\sim\;\
      \inframe_\utau,
      \rreveal_{\tauo},
      t_\utau
    }{
      \infer[\dup^*]{
        \inframe_\tau, \lreveal_{\tauo},
        \instate_\tau(\success_\ue^\ID),
        \msuci^\ID_\tau,
        \nosuci
        \;\sim\;
        \inframe_\utau,\rreveal_{\tauo},
        \instate_\utau(\success_\ue^\uID),
        \msuci^\uID_\utau,
        \nosuci
      }{
        \inframe_\tau, \lreveal_{\tauo}
        \;\sim\;
        \inframe_\utau,\rreveal_{\tauo},
        \instate_\utau(\success_\ue^\uID)
      }
    }
  \end{gathered}
\]

\subsection{Case $\ai = \cnai(j,0)$}
We know that $\uai = \cnai(j,0)$. Using \ref{a6}, we know that for every $\ID \ne \ID'$, $\neg\accept_\tau^\ID \lra \neg \accept_\tau^{\ID'}$. Therefore the answer from the network does not depend on the order in which we make the $\accept_\tau^\ID$ tests. Formally, the following list of conditionals is a $\cs$ partition:
\[
  \left(
    \left(\accept_\tau^\ID\right)_{\ID \in \iddom},
    \bigwedge_{\ID \in \iddom}\neg \accept_\tau^\ID
  \right)
\]
To get a uniform notation, we let $\accept_\tau^{\IDdum} \equiv \bigwedge_{\ID \in \iddom}\neg \accept_\tau^\ID$, and $\extiddom = \iddom \cup \{\IDdum\}$. Hence using Proposition~\ref{prop:case} we get that:
\begin{gather*}
  t_\tau \;=\;
  \switch{\ID \in \extiddom}{\accept_\tau^\ID}{\msg_\tau^\ID}
\end{gather*}
We are now going to show that for every $\ID \in \extiddom$, the term $\msg_\tau^\ID$ can be replaced by $\triplet{\nonce^j}{\nonce_\ID^\oplus}{\nonce_\ID^\macsym}$ (where $(\nonce_\ID^\oplus)_{\ID \in \extiddom}$ and $(\nonce_\ID^\macsym)_{\ID \in \extiddom}$ are fresh distinct nonces). We will then conclude easily using the $\ax{fresh}$ axiom.

Let $\ID_1,\dots,\ID_l$ be an arbitrary enumeration of $\extiddom$. For every $1 \le n \le l$, and for every $\ID_i \in \{ \ID_1,\dots,\ID_l\}$, we let:
\[
  \rmsg_n^{\ID_i} \equiv
  \begin{dcases*}
    \triplet{\nonce^j}{\nonce_{\ID_i}^\oplus}{\nonce_{\ID_i}^\macsym} & if $i \le n$\\
    \rmsg_\tau^{\ID_i} & if $i > n$
  \end{dcases*}
\]
And we let $t_n$ be the term $t_\tau$ where the subterms $\msg_\tau^\ID$ have been replaced by $\triplet{\nonce^j}{\nonce_\ID^\oplus}{\nonce_\ID^\macsym}$ for the first $n$ identities:
\[
  t_n \equiv
  \switch{\ID \in \extiddom}{\accept_\tau^\ID}{\rmsg_n^\ID}
\]
We can check that $t_0 \equiv t_\tau$.

\paragraph{Part 1} We now show that for every $1 \le n \le l$, we have a derivation of:
\[
  \inframe_\tau,\lreveal_{\tauo},t_{n-1}
  \;\sim\;
  \inframe_\tau,\lreveal_{\tauo},t_{n}
  \numberthis\label{eq:fgsghafiadsssdfg}
\]
Let $n$ be in $\{1,\dots,l\}$. Let $\ID = \ID_n$, $\key = \key^\ID$ and $\mkey = \mkey^\ID$. We are going to apply $\prff$ axiom with key $\key$ to replace $\ow{\key}{\nonce^j}$ by $\nonce_{\ID}$, where $\nonce_{\ID}$ is a fresh nonce. Recall that:
\[
  \msg_\tau^{\ID} \;\equiv\;
  \striplet{\nonce^j}
  {\underbrace{\instate_\tau(\sqn_\hn^{\ID})}_{u_\sqn} \oplus \ow{\nonce^j}{\key^{\ID}}}
  {\underbrace{\mac{\striplet
        {\nonce^j}
        {\instate_\tau(\sqn_\hn^{\ID})}
        {\instate_\tau(\suci_\hn^{\ID})}}
      {\mkey^{\ID}}{3}}_{u_\macsym}}
\]
We let $\psi$ be the context with one hole (which has only one occurrence) such that:
\[
  \psi[\triplet
  {\nonce^j}
  {u_\sqn \oplus \ow{\nonce^j}{\key^{\ID}}}
  {u_\macsym}]
  \equiv
  \inframe_\tau,\lreveal_{\tauo},t_{n - 1}
  \qquad\qquad
  \psi[\triplet{\nonce^j}{\nonce_\ID^\oplus}{\nonce_\ID^\macsym}]
  \equiv
  \inframe_\tau,\lreveal_{\tauo},t_{n}
\]
Let $\psi_0[] \equiv \psi[\triplet{\nonce^j}{u_\sqn \oplus []}{u_\macsym}]$. Notice that:
\begin{alignat*}{3}
  \setprf_{\key}^{\owsym}
  \left(
    \psi_0[]
  \right)
  &\;\;=\;\;&&&&
  \left\{
    \pi_1(\inframe_\taut)
    \mid \taut = \_, \cuai_{\ID}(p,1) \popre \tau
  \right\}\\
  &&&\cup\;&&
  \left\{
    \nonce^p
    \mid \taut = \_, \cnai(p) \popre \tau
  \right\}
\end{alignat*}
We want to get rid of the sub-terms of the form $\ow{\pi_1(\inframe_\taut)}{\key}$, for any $\taut$ such that $\taut = \_, \cuai_{\ID}(p,1) \popre \tau$. To do this, for every $\taut = \_, \cuai_{\ID}(p,1) \popre \tau$, we let $\tauttt = \_,\cuai_\ID(j_p,0) \popre \tau$, and we apply \ref{sequ2} to rewrite all occurrence of $\accept_\taut^\ID$ in $\psi_0$ using:
\[
  \accept_\taut^\ID
  \leftrightarrow
  \bigvee
  _{\tautt = \_,\cnai(j_1,0)
    \atop{\tauttt <_\taut \tautt <_\taut \taut}}
  \left(\ptr{\tauttt,\taut}{\tautt}\right)
  \numberthis\label{eq:ddsuovweiegrjqw9}
\]
This yields a vector of terms $\psi_0'[]$ with one hole. It is easy to check that:
\begin{alignat*}{3}
  \setprf_{\key}^{\owsym}
  \left(
    \psi_0'[]
  \right)
  &\;\;=\;\;&&&&
  \left\{
    \nonce^p
    \mid \taut = \_, \cnai(p) \popre \tau
  \right\}
\end{alignat*}
By validity of $\tau$, we know that for every $\taut = \_, \cnai(p) \popre \tau$, we have $p \ne j$. Therefore using \ax{fresh} we have $(\nonce^j = \nonce^P)\lra \false$. It follows that we can apply the $\prff$ axiom in $\psi'_0[\ow{\key}{\nonce^j}]$, replacing $\ow{\key}{\nonce^j}$ by $\nonce_{\ID}$, which yields $\psi'_0[\nonce_{\ID}]$. We then rewrite any term of the form in \eqref{eq:ddsuovweiegrjqw9} back into $\accept_\taut^\ID$, obtaining $\psi_0[\nonce_{\ID}] \equiv \psi[\triplet{\nonce^j}{u_\sqn \oplus \nonce_\ID}{u_\macsym}]$. We then use $\oplus\textsf{-indep}$ to replace $u_\sqn \oplus \nonce_\ID$ by $\nonce_\ID^\oplus$.
\[
  \infer[R]{
    \inframe_\tau,\lreveal_{\tauo},t_{n-1}
    \;\sim\;
    \inframe_\tau,\lreveal_{\tauo},t_{n}
  }{
    \infer[R]{
      \psi[\triplet
      {\nonce^j}
      {u_\sqn \oplus \ow{\nonce^j}{\key^{\ID}}}
      {u_\macsym}]
      \sim
      \psi[\triplet{\nonce^j}{\nonce_\ID^\oplus}{\nonce_\ID^\macsym}]
    }{
      \infer[\prff]{
        \psi_0'[\ow{\key}{\nonce^j}]
        \sim
        \psi[\triplet{\nonce^j}{\nonce_\ID^\oplus}{\nonce_\ID^\macsym}]
      }{
        \infer[R]{
          \psi_0'[\nonce_\ID]
          \sim
          \psi[\triplet{\nonce^j}{\nonce_\ID^\oplus}{\nonce_\ID^\macsym}]
        }{
          \infer[\oplus\textsf{-indep}]{
            \psi[\triplet
            {\nonce^j}
            {u_\sqn \oplus \nonce_\ID}
            {u_\macsym}]
            \sim
            \psi[\triplet{\nonce^j}{\nonce_\ID^\oplus}{\nonce_\ID^\macsym}]
          }{
            \psi[\triplet
            {\nonce^j}
            {\nonce_\ID^\oplus}
            {u_\macsym}]
            \sim
            \psi[\triplet{\nonce^j}{\nonce_\ID^\oplus}{\nonce_\ID^\macsym}]
          }
        }
      }
    }
  }
\]
We now the same thing with $u_\macsym$, applying $\prfmac^3$ axiom to replace $u_\macsym$ by $\nonce_\ID^\macsym$. The proof is similar to the one we just did for $\prff$, and we omit the details. We then conclude using $\refl$. This yields:
\[
  \infer{
    \psi[\triplet
    {\nonce^j}{\nonce_\ID^\oplus}{u_\macsym}]
    \sim
    \psi[\triplet{\nonce^j}{\nonce_\ID^\oplus}{\nonce_\ID^\macsym}]
  }{
    \infer*{}{
      \infer[\refl]{}{
        \psi[\triplet
        {\nonce^j}{\nonce_\ID^\oplus}{\nonce_\ID^\macsym}]
        \sim
        \psi[\triplet{\nonce^j}{\nonce_\ID^\oplus}{\nonce_\ID^\macsym}]
      }
    }
  }
\]

\paragraph{Part 2}
Using the fact that $t_0 \equiv t_\tau$ and \eqref{eq:fgsghafiadsssdfg}, and using the transitivity axiom, we can build a derivation of:
\[
  \inframe_\tau,\lreveal_{\tauo},t_\tau
  \;\sim\;
  \inframe_\tau,\lreveal_{\tauo},t_{l}
\]
Moreover, using the \textsf{indep-branch} axiom we know that:
\[
  \infer[\textsf{indep-branch}]{
    \inframe_\tau,\lreveal_{\tauo},t_{l}
    \;\sim\;
    \inframe_\tau,\lreveal_{\tauo},\nonce}{}
\]
where $\nonce$ is a fresh nonce. Using transitivity again, we get a derivation of:
\[
  \inframe_\tau,\lreveal_{\tauo},t_\tau
  \;\sim\;
  \inframe_\tau,\lreveal_{\tauo},\nonce
  \numberthis\label{eqref:fgjdgofqerfqjporkapsd}
\]
Repeating everything we did in \textbf{Part~1}, we can show that we have a derivation of:
\[
  \inframe_\utau,\rreveal_{\tauo},\nonce'
  \;\sim\;
  \inframe_\utau,\rreveal_{\tauo},t_\utau
  \numberthis\label{eqref:gsdhifjaafdfd}
\]
where $\nonce'$ is a fresh nonce. We then conclude using the transitivity and \ax{Fresh}:
\[
  \infer[\trans]{
    \inframe_\tau,\lreveal_{\tauo},t_\tau
    \;\sim\;
    \inframe_\utau,\rreveal_{\tauo},t_\utau
  }{
    \infer{
      \begin{alignedat}{2}
        &&&\inframe_\tau,\lreveal_{\tauo},t_\tau\\
        &\sim\;\;&&
        \inframe_\tau,\lreveal_{\tauo},\nonce
      \end{alignedat}
    }{\eqref{eqref:fgjdgofqerfqjporkapsd}}
    &
    \infer[\ax{Fresh}]{
      \inframe_\tau,\lreveal_{\tauo},\nonce
      \;\sim\;
      \inframe_\utau,\rreveal_{\tauo},\nonce'
    }{
      \inframe_\tau,\lreveal_{\tauo}
      \;\sim\;
      \inframe_\utau,\rreveal_{\tauo}
    }
    &
    \infer{
      \begin{alignedat}{2}
        &&&\inframe_\utau,\rreveal_{\tauo},\nonce'\\
        &\sim\;\;&&
        \inframe_\utau,\rreveal_{\tauo},t_\utau
      \end{alignedat}
    }{\eqref{eqref:gsdhifjaafdfd}}
  }
\]

\subsection{Case $\ai = \cuai_\ID(j,1)$}
We know that $\uai = \cuai_{\nu_\tau(\ID)}(j,1)$. Let $\uID = \nu_\tau(\ID)$. By validity of $\tau$, we know that there exists $\tautt = \_,\cuai_\ID(j,0)$ such that $\tautt \popre \tau$. Here $\lreveal_{\tau}$ and $\lreveal_{\tauo}$ coincides everywhere except on:
\begin{mathpar}
  \cstate_\tau(\sqn_\ue^\ID) -
  \instate_\tau(\sqn_\ue^\ID)
  \;\sim\;
  \cstate_\utau(\sqn_\ue^{\uID}) -
  \instate_\utau(\sqn_\ue^{\uID})

  \cstate_\tau(\eauth_\ue^\ID)
  \;\sim\;
  \cstate_\utau(\eauth_\ue^{\uID})

  \left(
    \mac{\nonce^{j_0}}{\mkey^\ID}{4}
    \;\sim\;
    \mac{\nonce^{j_0}}{\mkey^{\uID}}{4}
  \right)_{\taut = \_,\cnai(j_0,0)\atop{\tautt \potau \taut}}
\end{mathpar}
First, using \ref{sequ2} twice we know that:
\begin{mathpar}
  \accept_\tau^\ID
  \leftrightarrow
  \bigvee
  _{\taut = \_,\cnai(j_1,0)
    \atop{\tautt \potau \taut}}
  \ptr{\tautt,\tau}{\taut}

  \accept_\utau^\uID
  \leftrightarrow
  \bigvee
  _{\taut = \_,\cnai(j_1,0)
    \atop{\tautt \potau \taut}}
  \ptr{\utautt,\utau}{\utaut}
\end{mathpar}
Using \ref{der3} we know that for every $\taut = \_,\cnai(j_1,0)$ such that $\tautt \potau \taut$ we have a derivation:
\[
  \begin{gathered}[c]
    \infer[\simp]{
      \inframe_\tau,
      \lreveal_{\tauo},
      \ptr{\tautt,\tau}{\taut}
      \;\sim\;\
      \inframe_\utau,
      \rreveal_{\tauo},
      \ptr{\utautt,\utau}{\utaut}
    }{
      \inframe_\tau, \lreveal_{\tauo}
      \;\sim\;
      \inframe_\utau,\rreveal_{\tauo}
    }
  \end{gathered}
  \numberthis\label{eq:sdjuofheiafjagoae}
\]
Therefore we can build the following derivation:
\[
  \begin{gathered}[c]
    \infer[\simp]{
      \inframe_\tau,
      \lreveal_{\tauo},
      \accept_\tau^\ID
      \;\sim\;\
      \inframe_\utau,
      \rreveal_{\tauo},
      \accept_\utau^\uID
    }{
      \infer[\simp]{
        \inframe_\tau, \lreveal_{\tauo},
        \left(
          \ptr{\tautt,\tau}{\taut}
        \right)
        _{\taut = \_,\cnai(j_1,0)
          \atop{\tautt \potau \taut}}
        \;\sim\;
        \inframe_\utau,\rreveal_{\tauo},
        \left(
          \ptr{\utautt,\utau}{\utaut}
        \right)
        _{\taut = \_,\cnai(j_1,0)
          \atop{\tautt \potau \taut}}
      }{
        \inframe_\tau, \lreveal_{\tauo}
        \;\sim\;
        \inframe_\utau,\rreveal_{\tauo}
      }
    }
  \end{gathered}
  \numberthis\label{eq:dfivjqirqurqwpiqwoeiqwpox}
\]

\paragraph{Part 1}
We can check that for every $\taut = \_,\cnai(j_1,0)$ such that $\tautt \potau \taut $:
\begin{gather*}
  \ptr{\tautt,\tau}{\taut} \ra
  \cstate_\tau(\eauth_\ue^\ID) = \nonce^{j_1}
  \qquad\qquad
  \ptr{\utautt,\utau}{\utaut} \ra
  \cstate_\utau(\eauth_\ue^{\uID}) = \nonce^{j_1}\\
  \neg \accept_\tau^\ID \ra
  \cstate_\tau(\eauth_\ue^\ID) = \fail
  \qquad\qquad
  \neg \accept_\utau^\uID \ra
  \cstate_\utau(\eauth_\ue^{\uID}) = \fail
\end{gather*}
And $(\nonce^{j_1},\nonce^{j_1}) \in \reveal_{\tauo}$. Therefore we can decompose $\cstate_\tau(\eauth_\ue^\ID)$ and $\cstate_\utau(\eauth_\ue^{\uID})$ using $\fa$ and get rid of the resulting terms using \eqref{eq:sdjuofheiafjagoae} and \eqref{eq:dfivjqirqurqwpiqwoeiqwpox}:
\[
  \begin{gathered}[c]
    \infer[R]{
      \inframe_\tau,
      \lreveal_{\tauo},
      \cstate_\tau(\eauth_\ue^\ID)
      \;\sim\;\
      \inframe_\utau,
      \rreveal_{\tauo},
      \cstate_\utau(\eauth_\ue^{\uID})
    }{
      \infer[\simp]{
        \begin{alignedat}{2}
          &&&\inframe_\tau, \lreveal_{\tauo},
          \ite{\accept_\tau^\ID}{
            \switch{{\taut = \_,\cnai(j_1,0)
                \atop{\tautt \potau \taut}}}
            {\ptr{\tautt,\tau}{\taut}}
            {\nonce^{j_1}}}
          {\fail}\\
          &\sim\;\;&&
          \inframe_\utau,\rreveal_{\tauo},
          \ite{\accept_\utau^\uID}{
            \switch{{\taut = \_,\cnai(j_1,0)
                \atop{\tautt \potau \taut}}}
            {\ptr{\utautt,\utau}{\utaut}}
            {\nonce^{j_1}}}
          {\fail}
        \end{alignedat}
      }{
        \infer[\simp]{
          \begin{alignedat}{2}
            &&&\inframe_\tau, \lreveal_{\tauo},
            \accept_\tau^\ID,
            \left(
              \ptr{\tautt,\tau}{\taut},\nonce^{j_1}
            \right)
            _{\taut = \_,\cnai(j_1,0)
              \atop{\tautt \potau \taut}},
            \fail\\
            &\sim\;\;&&
            \inframe_\utau,\rreveal_{\tauo},
            \accept_\utau^\uID,
            \left(
              \ptr{\utautt,\utau}{\utaut},\nonce^{j_1}
            \right)
            _{\taut = \_,\cnai(j_1,0)
              \atop{\tautt \potau \taut}},
            \fail
          \end{alignedat}
        }{
          \inframe_\tau, \lreveal_{\tauo}
          \;\sim\;
          \inframe_\utau,\rreveal_{\tauo}
        }
      }
    }
  \end{gathered}
  \numberthis\label{eq:gsduiroqruqiri8}
\]

\paragraph{Part 2}
Observe that for every $\taut = \_,\cnai(j_1,0)$ such that $\tautt \potau \taut$:
\begin{mathpar}
  \ptr{\tautt,\tau}{\taut} \ra
  \cstate_\tau(\sqn_\ue^\ID) -
  \instate_\tau(\sqn_\ue^\ID) =
  \one

  \ptr{\utautt,\utau}{\utaut} \ra
  \cstate_\utau(\sqn_\ue^{\uID}) -
  \instate_\utau(\sqn_\ue^{\uID}) =
  \one

  \neg \accept_\tau^\ID \ra
  \cstate_\tau(\sqn_\ue^\ID) -
  \instate_\tau(\sqn_\ue^\ID) =
  \zero

  \neg \accept_\utau^\uID \ra
  \cstate_\utau(\sqn_\ue^{\uID}) -
  \instate_\utau(\sqn_\ue^{\uID}) =
  \zero
\end{mathpar}
It is then easy to adapt the derivation in \eqref{eq:gsduiroqruqiri8} to get a derivation of (we omit the details):
\[
  \begin{gathered}[c]
    \infer[\simp]{
      \inframe_\tau,
      \lreveal_{\tauo},
      \cstate_\tau(\sqn_\ue^\ID) -
      \instate_\tau(\sqn_\ue^\ID)
      \;\sim\;\
      \inframe_\utau,
      \rreveal_{\tauo},
      \cstate_\utau(\sqn_\ue^{\uID}) -
      \instate_\utau(\sqn_\ue^{\uID})
    }{
      \inframe_\tau, \lreveal_{\tauo}
      \;\sim\;
      \inframe_\utau,\rreveal_{\tauo}
    }
  \end{gathered}
  \numberthis\label{eq:oipiewtjiqewfapcd}
\]

\paragraph{Part 3}
We finally take care of $t_\tau$ and the $\macsym^4$ terms. First, we check that for every $\taut = \_,\cnai(j_1,0)$ such that $\tautt \potau \taut$:
\begin{gather*}
  \ptr{\tautt,\tau}{\taut} \ra
  t_\tau =
  \mac{\nonce^{j_0}}{\mkey^\ID}{4}
  \qquad\qquad
  \ptr{\utautt,\utau}{\utaut} \ra
  t_\utau =
  \mac{\nonce^{j_0}}{\mkey^{\uID}}{4}\\
  \neg \accept_\tau^\ID \ra
  t_\tau = \textsf{error}
  \qquad\qquad
  \neg \accept_\utau^\uID \ra
  t_\utau = \textsf{error}
\end{gather*}
Similarly to what we did in \eqref{eq:gsduiroqruqiri8}, we decompose $t_\tau$ and $t_\utau$ using \eqref{eq:sdjuofheiafjagoae} and \eqref{eq:dfivjqirqurqwpiqwoeiqwpox}. Omitting the detail of the derivation, this yield:
\[
  \begin{gathered}[c]
    \infer[\simp]{
      \inframe_\tau,
      \lreveal_{\tauo},
      t_\tau
      \;\sim\;\
      \inframe_\utau,
      \rreveal_{\tauo},
      t_\utau
    }{
      \inframe_\tau, \lreveal_{\tauo},
      \left(
        \mac{\nonce^{j_0}}{\mkey^\ID}{4}
      \right)_{\taut = \_,\cnai(j_0,0)\atop{\tautt \potau \taut}}
      \;\sim\;
      \inframe_\utau,\rreveal_{\tauo},
      \left(
        \mac{\nonce^{j_0}}{\mkey^{\uID}}{4}
      \right)_{\taut = \_,\cnai(j_0,0)\atop{\tautt \potau \taut}}
    }
  \end{gathered}
\]
Observe that the $\macsym^4$ terms here are exactly the $\macsym^4$ terms in $\lreveal_{\tau} \backslash \lreveal_{\tauo}$. To conclude this proof, it only remains to give a derivation of:
\[
  \inframe_\tau, \lreveal_{\tauo},
  \left(
    \mac{\nonce^{j_0}}{\mkey^\ID}{4}
  \right)_{\taut = \_,\cnai(j_0,0)\atop{\tautt \potau \taut}}
  \;\sim\;
  \inframe_\utau,\rreveal_{\tauo},
  \left(
    \mac{\nonce^{j_0}}{\mkey^{\uID}}{4}
  \right)_{\taut = \_,\cnai(j_0,0)\atop{\tautt \potau \taut}}
\]
For every $\taut = \_,\cnai(j_1,0)$ such that $\tautt \potau \taut$, we are going to apply the $\prfmac^4$ axiom with key $\mkey^\ID$ to replace $\mac{\nonce^{j_0}}{\mkey^\ID}{4}$ by a fresh nonce $\nonce_{\taut}$. Let $\psi \equiv \inframe_\tau, \lreveal_{\tauo}$, observe that:
\begin{alignat*}{3}
  \setmac_{\ID}^{4}\left(\psi\right)\;\;=
  &&&&&
  \left\{
    \pi_1(g(\inframe_\taua))
    \mid
    \taua = \_,\cuai_\ID(j_a,1) \popre \tau
  \right\}\\
  &&&\cup\;&&
  \left\{
    \nonce^{j_n}
    \mid
    \taun = \_,\cnai(j_n,1) \popre \tau
  \right\}
\end{alignat*}
Let:
\[
  \mathcal{N} =
  \left\{
    \nonce^{j_0}
    \mid
    \taut = \_,\cnai(j_0,0) \wedge \tautt \potau \taut
  \right\}
\]
Our goal is to rewrite $\psi$ into a vector of terms $\psi_1$ such that $\setmac_{\ID}^{4}\left(\psi_1\right) \cap \mathcal{N} = \emptyset$. This will allow us to apply the $\prfmac^4$ axiom. We are going to rewrite $\psi$ as follows:
\begin{itemize}
\item Let $\taua = \_,\cuai_\ID(j_a,1) \popre \tau$. By validity of $\tau$, we know that $\taua \potau \tautt$, and that there exists $\taub = \_,\cuai_\ID(j_a,0) \potau \taua$. Using \ref{sequ2}, we know that:
  \[
    \accept_\taua^\ID
    \leftrightarrow
    \bigvee
    _{\taux = \_,\cnai(j_x,0)
      \atop{\taub \potau \taux \potau \taua}}
    \ptr{\taub,\taua}{\taux}
  \]
  We let $\alpha_\taua^\ID$ be the right-hand side of the equation above. Using this, we can check that:
  \[
    t_\taua \; = \;
    \ite{\alpha_\taua^\ID}
    {\switch{\taux = \_,\cnai(j_x,0)
        \atop{\taub \potau \taux \potau \taua}}
      {\ptr{\taub,\taua}{\taux}}
      {\mac{\nonce^{j_x}}{\mkey^\ID}{4}}}
    {\textsf{error}}
  \]
  Let $\kappa_\taua^\ID$ be the right-hand side of the equation above. For every $\taux = \_,\cnai(j_x,0)$ such that $\tautt \potau \taut$, we have $\nonce^{j_x} \in \setmac_{\ID}^{4}(\alpha_\taua^\ID,\kappa_\taua^\ID)$ if and only if $\taub \potau \taux \potau \taua$. Therefore:
  \begin{alignat*}{2}
    &&&\setmac_{\ID}^{4}\left(\alpha_\taua^\ID,\kappa_\taua^\ID\right)
    \cap
    \mathcal{N}\\
    &=\;\;&&
    \left\{
      \nonce^{j_x}
      \mid
      \taux = \_,\cnai(j_x,0) \wedge
      \taub \potau \taux \potau \taua
    \right\}
    \cap
    \left\{
      \nonce^{j_0}
      \mid
      \taut = \_,\cnai(j_0,0) \wedge \tautt \potau \taut
    \right\}\\
    &=\;\;&&
    \left\{
      \nonce^{j_x}
      \mid
      \taux = \_,\cnai(j_x,0) \wedge
      \taub \potau \taux \potau \taua
      \wedge
      \tautt \potau \taux
    \right\}
  \end{alignat*}
  By validity of $\tau$, we know that $\taua \potau \tautt$. This implies that whenever $\taub \potau \taux \potau \taua$ and $\tautt \potau \taux$, we have $\taux \potau \tautt \potau \taux$. Hence:
  \begin{equation*}
    \setmac_{\ID}^{4}\left(\alpha_\taua^\ID,\kappa_\taua^\ID\right)
    \cap
    \mathcal{N}
    \;=\;
    \emptyset
    \numberthis\label{eq:fnsdjhioafjqepir}
  \end{equation*}
  Let $\psi_0$ be $\psi$ in which we replace, for every $\taua = \_,\cuai_\ID(j_a,1) \popre \tau$, any occurrence of $\accept_\taua^\ID$ and $t_\taua$ by, respectively, $\alpha_\taua^\ID$ and $\kappa_\taua^\ID$. We then have:
\begin{alignat*}{3}
  \setmac_{\ID}^{4}\left(\psi_0\right)\;=\;
  \left\{
    \nonce^{j_n}
    \mid
    \taun = \_,\cnai(j_n,1) \popre \tau
  \right\}
  \;\cup\;
  \bigcup_{
    \taua = \_,\cuai_\ID(j_a,1)
    \atop{\taua \popre \tau}}
  \setmac_{\ID}^{4}\left(\alpha_\taua^\ID,\kappa_\taua^\ID\right)
\end{alignat*}
And using \eqref{eq:fnsdjhioafjqepir}, we know that:
\[
  \setmac_{\ID}^{4}\left(\psi_0\right)
  \cap
  \mathcal{N} =
  \left\{
    \nonce^{j_n}
    \mid
    \taun = \_,\cnai(j_n,1) \popre \tau
  \right\}
  \numberthis\label{eq:gisgieruqpwruqwp}
\]
\item Let $\taun = \_,\cnai(j_n,1)$ and $\taun' = \_,\cnai(j_n,0)$ such that $\taun' \potau \taun$. Using \ref{sequ3}, we know that:
  \[
    \accept_\taun^\ID
    \;\leftrightarrow\;\;
    \bigvee
    _{\taui' = \_,\cuai_\ID(j_i,0)
      \atop{\taui = \_,\cuai_\ID(j_i,1)
        \atop{\taui' \potau \taun' \potau \taui \potau \taun}}}
    \ftr{\taui',\taui}{\taun',\taun}
  \]
  Let $\lambda_\taun^\ID$ be the right-hand side of the equation above. We can check that $\nonce^{j_n} \in \setmac_{\ID}^{4}(\lambda_\taun^\ID)$ if and only if there exists $\taui' = \_,\cuai_\ID(j_i,0)$ and $\taui = \_,\cuai_\ID(j_i,1)$ such that $\taui' \potau \taun' \potau \taui \potau \taun$.  Since $\taui \popre \tau$, we know that $j_i \ne j$. Therefore $\taui \potau \tautt$, and we can show that:
  \begin{equation*}
    \setmac_{\ID}^{4}\left(\lambda_\taun^\ID\right)
    \cap
    \mathcal{N}
    \;=\;
    \emptyset
    \numberthis\label{eq:sfjlghwifqeirqjsd}
  \end{equation*}
  Let $\psi_1$ be $\psi_0$ in which we replace, for every $\taun = \_,\cnai(j_n,1)$ and $\taun' = \_,\cnai(j_n,0)$ such that $\taun' \potau \taun$, any occurrence of $\accept_\taun^\ID$ by $\lambda_\taua^\ID$. Using \eqref{eq:gisgieruqpwruqwp} and \eqref{eq:sfjlghwifqeirqjsd}, we can check that:
  \[
    \setmac_{\ID}^{4}\left(\psi_1\right)
    \cap
    \mathcal{N}
    \;=\;
    \emptyset
  \]
  Which is what we wanted to show.
\end{itemize}

\paragraph{Part 4}
Let $\taut = \_,\cnai(j_0,0)$ be such that $\tautt \potau \taut$. For every $\taut' = \_,\cnai(j'_0,0)$ be such that $\tautt \potau \taut'$, if $j_0' \ne j_0$ then $(\nonce^{j_0'} = \nonce^{j_0} ) \lra \false$. Moreover, since $\setmac_{\ID}^{4}\left(\psi_1\right) \cap \mathcal{N} = \emptyset$, we know that for every $\nonce \in \setmac_{\ID}^{4}\left(\psi_1\right)$, $(\nonce = \nonce^{j_0} ) \lra \false$.

We can therefore apply simultaneously the $\prfmac^4$ axiom with key $\mkey^\ID$ for every $\taut = \_,\cnai(j_0,0)$ be such that $\tautt \potau \taut$, to replace $\mac{\nonce^{j_0}}{\mkey^\ID}{4}$ by a fresh nonce $\nonce_{\taut}$. We then rewrite back $\psi_1$ into $\psi$. This yield the derivation:
\[
  \infer[R]{
    \inframe_\tau, \lreveal_{\tauo},
    \left(
      \mac{\nonce^{j_0}}{\mkey^\ID}{4}
    \right)_{\taut = \_,\cnai(j_0,0)\atop{\tautt \potau \taut}}
    \;\sim\;
    \zeta
  }{
    \infer[\prfmac^4]{
      \psi_1,
      \left(
        \mac{\nonce^{j_0}}{\mkey^\ID}{4}
      \right)_{\taut = \_,\cnai(j_0,0)\atop{\tautt \potau \taut}}
      \;\sim\;
      \zeta
    }{
      \infer[R]{
        \psi_1,
        \left(
          \nonce_{\taut}
        \right)_{\taut = \_,\cnai(j_0,0)\atop{\tautt \potau \taut}}
        \;\sim\;
        \zeta
      }{
        \inframe_\tau, \lreveal_{\tauo},
        \left(
          \nonce_{\taut}
        \right)_{\taut = \_,\cnai(j_0,0)\atop{\tautt \potau \taut}}
        \;\sim\;
        \zeta
      }
    }
  }
\]
where:
\[
  \zeta \equiv
  \inframe_\utau,\rreveal_{\tauo},
  \left((\mac{\nonce^{j_0}}{\mkey^{\uID}}{4}\right)
  _{\taut = \_,\cnai(j_0,0)\atop{\tautt \potau \taut}}
\]
Observe that we never used the fact that $\tau$ was a \emph{basic} trace of actions above, but only the fact that $\tau$ is a \emph{valid} trace of actions. Therefore the same reasoning applies to $\zeta$, which allows us, for every $\taut = \_,\cnai(j_0,0)$ be such that $\tautt \potau \taut$, to replace $\mac{\nonce^{j_0}}{\mkey^\uID}{4}$ by a fresh nonce $\nonce_{\taut}'$. We conclude using \ax{fresh}. We get:
\[
  \infer[R+\prfmac^4]{
    \inframe_\tau, \lreveal_{\tauo},
    \left(
      \nonce_{\taut}
    \right)_{\taut = \_,\cnai(j_0,0)\atop{\tautt \potau \taut}}
    \;\sim\;
    \inframe_\utau,\rreveal_{\tauo},
    \left((\mac{\nonce^{j_0}}{\mkey^{\uID}}{4}\right)
    _{\taut = \_,\cnai(j_0,0)\atop{\tautt \potau \taut}}
  }{
    \infer[\ax{fresh}]{
      \inframe_\tau, \lreveal_{\tauo},
      \left(
        \nonce_{\taut}
      \right)_{\taut = \_,\cnai(j_0,0)\atop{\tautt \potau \taut}}
      \;\sim\;
      \inframe_\utau,\rreveal_{\tauo},
      \left(
        \nonce_\taut'
      \right)
      _{\taut = \_,\cnai(j_0,0)\atop{\tautt \potau \taut}}
    }{
      \inframe_\tau, \lreveal_{\tauo}
      \;\sim\;
      \inframe_\utau,\rreveal_{\tauo}
    }
  }
\]
Which concludes this proof.

\subsection{Case $\ai = \cnai(j,1)$}
We know that $\uai = \cnai(j,1)$. Here $\lreveal_{\tau}$ and $\lreveal_{\tauo}$ coincides everywhere except on:
\begin{mathpar}
  \neauth_\tau(\ID,j)
  \;\sim\;
  \uneauth_\utau(\ID,j)

  \syncdiff_\tau^\ID
  \;\sim\;
  \syncdiff_\utau^{\nu_\tau(\ID)}
\end{mathpar}
Let $\ID \in \baseiddom$, $\taui = \_,\cuai_\ID(j_i,1)$, $\taut = \_,\cnai(j,0)$, $\tautt = \_,\cuai_\ID(j_i,0)$ such that $\tautt \potau \taut \potau \taui$:
\begin{center}
  \begin{tikzpicture}
    [dn/.style={inner sep=0.2em,fill=black,shape=circle},
    sdn/.style={inner sep=0.15em,fill=white,draw,solid,shape=circle},
    sl/.style={decorate,decoration={snake,amplitude=1.6}},
    dl/.style={dashed},
    pin distance=0.5em,
    every pin edge/.style={thin}]

    \draw[thick] (0,0)
    node[left=1.3em] {$\tau:$}
    -- ++(0.5,0)
    node[dn,pin={above:{$\cuai_\ID(j_i,0)$}}] {}
    node[below=0.3em]{$\tautt$}
    -- ++(2.5,0)
    node[dn,pin={above:{$\cnai(j,0)$}}] {}
    node[below=0.3em]{$\taut$}
    -- ++(2.5,0)
    node[dn,pin={above:{$\cuai_\ID(j_i,1)$}}] {}
    node[below=0.3em]{$\taui$}
    -- ++(2.5,0)
    node[dn,pin={above:{$\ai = \cnai(j,1) $}}] {}
    node[below=0.3em]{$\tau$};
  \end{tikzpicture}
\end{center}
Let $\sff \equiv \ftr{\tautt,\taui}{\taut,\tau}$ and $\usff \equiv \ftr{\utautt,\utaui}{\utaut,\utau}$. Using \ref{der4} we know that we have the following derivation:
\[
  \begin{gathered}[c]
    \infer[\simp]{
      \inframe_\tau,
      \lreveal_{\tauo},
      \sff
      \;\sim\;
      \inframe_\utau,
      \rreveal_{\tauo},
      \usff
    }{
      \inframe_\tau, \lreveal_{\tauo}
      \sim
      \inframe_\utau,\rreveal_{\tauo}
    }
  \end{gathered}
  \numberthis\label{eq:ascvjkxo}
\]
Since $\sff \ra \accept_\tau^\ID$, we have:
\[
  \cond{\sff \wedge \instate_\tau(\sync_\ue^\ID)}
  {\syncdiff_\tau^\ID} =
  \cond{\sff \wedge \instate_\tau(\sync_\ue^\ID)}{
    \left(\begin{alignedat}{2}
        \ite{\instate_\tau(\tsuccess_\hn^\ID) = \nonce^j}
        {& \sqnsuc(\syncdiff_\tauo^\ID)\\}
        {& \syncdiff_\tauo^\ID}
      \end{alignedat}\right)}
\]

\paragraph{Case 1}
Assume that $\taui = \_,\cuai_\ID(j_i,1) \potau \newsession_\ID(\_)$. Let $\tau_\ns = \_,\ns_\ID(j_\newsession)$ be the latest session reset in $\tau$, i.e. $\tau_\ns <_\tau\tau$ and $\tau_\ns \not \potau \newsession_\ID(\_)$. We show by induction that for every $\tau'$ such that $\tau_\ns \popreleq \tau'$ we have:
\[
  \left(
    f \wedge \instate_\tau(\tsuccess_\hn^\ID) = \nonce^j
  \right)
  \;\ra\;
  \cstate_{\tau_\ns}(\sqn_\hn^\ID) =
  \cstate_{\tau'}(\sqn_\hn^\ID)
  \numberthis\label{eq:fsdhkvzxclka}
\]
Let $\tau'$ be such that $\tau_\ns \popreleq \tau'$:
\begin{itemize}
\item If $\tau' = \tau_\ns$ then the property trivially holds.
\item If $\tau_\ns \potau \tau'$. The only cases where $\sqn_\hn^\ID$ is updated are $\pnai(j',1)$ and $\cnai(j',1)$:
  \begin{itemize}
  \item If $\tau' = \_,\pnai(j',1)$: since $\tau = \cnai(j,1)$ we know by validity of $\tau$ that $j' \ne j$. Therefore:
    \[
      \incaccept_{\tau'}^\ID
      \;\ra\;
      \left(\cstate_{\tau'}(\tsuccess_\hn^\ID) = \nonce^{j'}\right)
      \;\ra\;
      \left(\cstate_{\tau'}(\tsuccess_\hn^\ID) \ne \nonce^{j}\right)
      \;\ra\;
      \left(\instate_\tau(\tsuccess_\hn^\ID) \ne \nonce^j\right)
    \]
    It follows that:
    \begin{alignat*}{2}
      \left(\instate_\tau(\tsuccess_\hn^\ID) = \nonce^j\right)
      \;\ra\;
      \neg \incaccept_{\tau'}^\ID
      \;\ra\;
      \left(
        \instate_{\tau'}(\sqn_\hn^\ID) =
        \cstate_{\tau'}(\sqn_\hn^\ID)
      \right)
    \end{alignat*}
    And we conclude by applying the induction hypothesis.
  \item If $\tau' = \_,\cnai(j',1)$: since $\tau = \cnai(j,1)$ and $\tau' \popre \tau$, we know that $j' \ne j$ (by validity of $\tau$). Therefore:
    \begin{alignat*}{2}
      \left(\instate_\tau(\tsuccess_\hn^\ID) = \nonce^j\right)
      \;\ra\;
      \neg \incaccept_{\tau'}^\ID
      \;\ra\;
      \left(
        \instate_{\tau'}(\sqn_\hn^\ID) =
        \cstate_{\tau'}(\sqn_\hn^\ID)
      \right)
    \end{alignat*}
    And we conclude by applying the induction hypothesis.
  \end{itemize}
\end{itemize}
This concludes the proof of \eqref{eq:fsdhkvzxclka}.

We then prove by induction over $\tau'$, for $\newsession_\ID(j_\newsession) \popreleq \tau' \popreleq \tau$ we have:
\[
  \left(
    f \wedge \instate_\tau(\tsuccess_\hn^\ID) = \nonce^j
  \right)
  \;\ra\;
  \neg \cstate_{\tau'}(\sync_\ue^\ID)
  \numberthis\label{eq:vzxclka}
\]
Let $\ai'$ be such that $\tau' = \_,\ai'$.
\begin{itemize}
\item The case $\ai' = \newsession_\ID(j_\newsession)$ is trivial since we then have $\cstate_{\tau'}(\sync_\ue^\ID) = \false$.

\item If $\ai' \ne \npuai{2}{\ID}{\_}$, then since $\newsession(j_\newsession) \not \potau \newsession(\_)$ we know that $\ai' \ne \newsession(\_)$. Hence $\upstate_{\tau'}(\sync_\ue^\ID) = \bot$, which implies $\cstate_{\tau'}(\sync_\ue^\ID) \equiv \instate_{\tau'}(\sync_\ue^\ID)$. By induction hypothesis we know that:
  \[
    \left(
      f \wedge \instate_\tau(\tsuccess_\hn^\ID) = \nonce^j
    \right)
    \;\ra\;
    \neg \instate_{\tau'}(\sync_\ue^\ID)
  \]
  which concludes this case.

\item If $\ai' = \npuai{2}{\ID}{j'}$.  Let $\tau''' = \_,\npuai{1}{\ID}{j'} <_\tau$. By validity of $\tau$ we know that $\tau_\ns \potau \tau'''$. Using \ref{equ2} we know that:
  \[
    \accept_{\tau'}^\ID
    \;\lra\;
    \bigvee_{\tau'' = \_,\pnai(j'',1)
      \atop{\tau''' \potau \tau'' \potau \tau'}}
    \supitr{\tau''',\tau'}{\tau''}
  \]
  And using \ref{sequ4}:
  \[
    \left(
      \neg\instate_{\tau'}(\sync_\ue^\ID)
      \wedge \supitr{\tau''',\tau'}{\tau''}
    \right)
    \;\ra\;
    \idiff{\cstate_{\tau'}(\sqn_\ue^\ID)}{\cstate_{\tau'}(\sqn_\hn^\ID)} = \zero
  \]
  Using \eqref{eq:fsdhkvzxclka}, we know that:
  \[
    \left(
      f \wedge \instate_\tau(\tsuccess_\hn^\ID) = \nonce^j
    \right)
    \;\ra\;
    \left(
      \cstate_{\tau_\ns}(\sqn_\hn^\ID) =
      \instate_{\tau'''}(\sqn_\hn^\ID) \wedge
      \cstate_{\tau_\ns}(\sqn_\hn^\ID) =
      \cstate_{\tau'}(\sqn_\hn^\ID)
    \right)
  \]
  Therefore:
  \[
    \left(
      f \wedge \instate_\tau(\tsuccess_\hn^\ID) = \nonce^j
    \right)
    \;\ra\;
    \left(
      \instate_{\tau'''}(\sqn_\hn^\ID) =
      \cstate_{\tau'}(\sqn_\hn^\ID)
    \right)
  \]
  Using \ref{b4} we know that $\instate_{\tau'''}(\sqn_\hn^\ID) \le \instate_{\tau'''}(\sqn_\ue^\ID)$, and by \ref{b5} we know that $\cstate_{\tau'''}(\sqn_\hn^\ID) \le \cstate_{\tau'}(\sqn_\ue^\ID)$. Moreover $\cstate_{\tau'''}(\sqn_\hn^\ID) = \sqnsuc(\instate_{\tau'''}(\sqn_\hn^\ID)) < \instate_{\tau'''}(\sqn_\hn^\ID)$. We summarize all of this graphically below:
  \begin{center}
    \begin{tikzpicture}
      [dn/.style={inner sep=0.2em,fill=black,shape=circle},
      sdn/.style={inner sep=0.15em,fill=white,draw,solid,shape=circle},
      sl/.style={decorate,decoration={snake,amplitude=1.6}},
      dl/.style={dashed},
      pin distance=0.5em,
      every pin edge/.style={thin}]

      \draw[thick] (0,0)
      node[left=1.3em] {$\tau:$}
      -- ++(0.5,0)
      node[below=0.3em] {$\tau_\ns$}
      node[dn,pin={above:{$\ns_\ID(j_\ns)$}}]
      (a) {}
      -- ++(3,0)
      node[below=0.3em] {$\tau'''$}
      node[dn,pin={above:{$\npuai{1}{\ID}{j'}$}}]
      (b) {}
      -- ++(3,0)
      node[dn,pin={above:{}}]
      (c) {}
      -- ++(3,0)
      node[below=0.3em] {$\tau''$}
      node[dn,pin={above:{$\pnai(j'',1)$}}]
      (d) {}
      -- ++(3,0)
      node[below=0.3em] {$\tau'$}
      node[dn,pin={above:{$\npuai{2}{\ID}{j'}$}}]
      (e) {}
      -- ++(0.5,0);

      \path (b) -- ++ (0,-0.8)
      node (b1) {$\instate_{\tau'''}(\sqn_\hn^\ID)$}
      -- ++ (0,-1)
      node (b2) {$\instate_{\tau'''}(\sqn_\ue^\ID)$};

      \path (c) -- ++ (0,-0.8)
      -- ++ (0,-1)
      node (c2) {$\cstate_{\tau'''}(\sqn_\ue^\ID)$};

      \path (e) -- ++ (0,-0.8)
      node (e1) {$\cstate_{\tau'}(\sqn_\hn^\ID)$}
      -- ++ (0,-1)
      node (e2) {$\cstate_{\tau'}(\sqn_\ue^\ID)$};

      \draw (e1) -- (e2) node[midway,above,sloped]{$=$};
      \draw (b1) -- (e1) node[midway,above,sloped]{$=$};
      \draw (b1) -- (b2) node[midway,above,sloped]{$\le$};
      \draw (b2) -- (c2) node[midway,above,sloped]{$<$};
      \draw (c2) -- (e2) node[midway,above,sloped]{$\le$};
    \end{tikzpicture}
  \end{center}
  Putting everything together we get that:
  \[
    \left(
      \sff \wedge
      \neg\instate_{\tau'}(\sync_\ue^\ID)
      \wedge \supitr{\tau''',\tau'}{\tau''}
    \right)
    \;\ra\;
    \left(\cstate_{\tau'}(\sqn_\ue^\ID) < \cstate_{\tau'}(\sqn_\ue^\ID)\right)
    \;\ra\;
    \false
  \]
  We deduce that:
  \begin{alignat*}{2}
    \left(
      \sff\wedge \neg\instate_{\tau'}(\sync_\ue^\ID) \wedge \accept_{\tau'}^\ID
    \right)
    &\;\ra\;\;&&
    \bigvee_{\tau'' = \_,\pnai(j'',1)
      \atop{\npuai{1}{\ID}{j'} \potau \tau'' \potau \tau'}}
    \left(
      \sff\wedge \neg\instate_{\tau'}(\sync_\ue^\ID) \wedge
      \supitr{\_,\tau'}{\tau''}
    \right)\\
    &\;\ra\;\;&&
    \false
  \end{alignat*}
  Moreover, using the induction hypothesis we know that:
  \[
    \left(
      f \wedge \instate_\tau(\tsuccess_\hn^\ID) = \nonce^j
    \right)
    \;\ra\;
    \neg \instate_{\tau'}(\sync_\ue^\ID)
  \]
  Therefore:
  \begin{alignat*}{3}
    \left(
      f \wedge \instate_\tau(\tsuccess_\hn^\ID) = \nonce^j
    \right)
    &\;\ra\;\;&
    \neg \accept_{\tau'}
    &\;\ra\;\;&
    \neg \cstate_{\tau'}(\sync_\ue^\ID)
  \end{alignat*}
\end{itemize}
This concludes the proof by induction of~\eqref{eq:vzxclka}. Using~\eqref{eq:vzxclka} we get that:
\[
  \cond{\sff}{\syncdiff_\tau^\ID} =
  \cond{\sff \wedge \instate_\tau(\sync_\ue^\ID)}{\syncdiff_\tauo^\ID}
\]
We know that $\usff \ra \accept_\utau^{\nu_\tautt(\ID)}$. Moreover, $\nu_\tautt(\ID) \ne \nu_\tau(\ID)$, hence using \ref{a5} we know that $\usff \ra \neg \accept_\utau^{\nu_\tau(\ID)}$. Hence:
\[
  \cond{\usff}
  {\syncdiff_\utau^{\nu_\tau(\ID)}} =
  \cond{\usff \wedge \instate_\utau(\sync_\ue^{\nu_\tau(\ID)})}
  {\syncdiff_\utauo^{\nu_\tau(\ID)}}
\]
Using the derivation in~\eqref{eq:ascvjkxo} and the fact that:
\begin{mathpar}
  \left(
    \instate_\tau(\sync_\ue^\ID),
    \instate_\utau(\sync_\ue^{\nu_\tau(\ID)})
  \right) \in \reveal_{\tauo}

  \left(
    \syncdiff_\tau^\ID,
    \syncdiff_\utauo^{\nu_\tau(\ID)}
  \right) \in \reveal_{\tauo}
\end{mathpar}
We can build the derivation:
\[
  \begin{gathered}[c]
    \infer[\simp]{
      \begin{alignedat}{2}
        \inframe_\tau,
        \lreveal_{\tauo},
        \cond{\sff}
        {\syncdiff_\tau^\ID}
        \;\sim\;
        \inframe_\utau,
        \rreveal_{\tauo},
        \cond{\usff}
        {\syncdiff_\utau^{\nu_\tau(\ID)}}
      \end{alignedat}
    }{
      \infer[\simp]{
        \begin{alignedat}{6}
          &&&\inframe_\tau,&&
          \lreveal_{\tauo},&&
          \sff,&&
          \instate_\tau(\sync_\ue^\ID),&&
          \syncdiff_\tauo^\ID\\
          &\sim\;\;&&
          \inframe_\utau,\;&&
          \rreveal_{\tauo},\;&&
          \usff,\;&&
          \instate_\utau(\sync_\ue^{\nu_\tau(\ID)}),\;&&
          \syncdiff_\utauo^{\nu_\tau(\ID)}
        \end{alignedat}
      }{
        \inframe_\tau, \lreveal_{\tauo}
        \sim
        \inframe_\utau,\rreveal_{\tauo}
      }
    }
  \end{gathered}
  \numberthis\label{eq:vcofeojgsdpifjajn}
\]

\paragraph{Case 2}
Assume that $\taui = \_,\cuai_\ID(j_i,1) \not\potau \newsession_\ID(\_)$. We introduce the term $\theta_\pnai$ (resp. $\theta_\cnai$) which states that no $\supi$ (resp. $\suci$) network session has been initiated which $\ID$ between $\taut$ and $\tau$:
\begin{mathpar}
  \theta_\pnai \;\;\equiv\;\;
  \bigwedge_{\tau' = \_,\pnai(\_,1) \atop{\taut \potau \tau'}}
  \neg\incaccept_{\tau'}^\ID

  \theta_\cnai \;\;\equiv\;\;
  \bigwedge_{\tau' = \cnai(\_,0) \atop{\taut \potau \tau'}}
  \neg\accept_{\tau'}^\ID
\end{mathpar}
It is easy to show that:
\[
  \left(\sff \wedge \instate_\tau(\tsuccess_\hn^\ID) = \nonce^j\right)
  \;\lra\;
  \left(\sff \wedge \theta_\pnai \wedge \theta_\cnai\right)
\]
We are now going to show that for every $\taut \popreleq \tau'$, $P(\tau')$ holds where $P(\tau')$ is the term:
\begin{equation}
  P(\tau') \;\;\equiv\;\;
  \left(\sff \wedge \theta_\pnai\right) \;\ra\;
  \left(
    \begin{array}[c]{l}
      \cstate_{\tau'}(\suci_\hn^\ID) = \unset\\
      \wedge \; \cstate_{\tau'}(\tsuccess_\hn^\ID) = \nonce^j
    \end{array}
    \wedge
    \bigwedge_{\taut \potau \tau'' \popreleq \tau'\atop{\tau'' = \cnai(\_,0)}}
    \neg\accept_{\tau''}^\ID
  \right)
  \label{eq:asdvjzxlv}
\end{equation}
Since $\sff \ra \accept_\taut$, we know that $\sff \ra\cstate_\taut(\suci_\hn^\ID) = \unset$. This shows that $P(\suc_\tau(\taut))$ holds. Let $\taut \popreleq \tau'_0$, and assume $P(\tau'_0)$ holds by induction. Let $\tau' = \suc_\tau(\tau'_0)$. We have four cases:
\begin{itemize}
\item If $\tau' \not \in \{\cnai(\_,0),\cnai(\_,1),\pnai(\_,1)\}$ then $P(\tau') \equiv P(\tau'_0)$, which concludes this case.

\item If $\tau' = \cnai(\_,0)$, then using the induction hypothesis $P(\tau'_0)$ we know that $\sff \wedge \theta_\pnai \ra \instate_{\tau'}(\suci_\hn^\ID) = \unset$. Therefore $\sff \wedge \theta_\pnai \ra \neg \accept_{\tau'}^\ID$. We know that $\sff \wedge \theta_\pnai \ra \instate_{\tau'}(\tsuccess_\hn^\ID) = \nonce^j$. We conclude by observing that:
  \[
    \left(
      \neg \accept_{\tau'}^\ID \wedge
      \instate_{\tau'}(\suci_\hn^\ID) = \unset \wedge
      \instate_{\tau'}(\tsuccess_\hn^\ID) = \nonce^j
    \right)
    \ra
    \left(
      \cstate_{\tau'}(\suci_\hn^\ID) = \unset \wedge
      \cstate_{\tau'}(\tsuccess_\hn^\ID) = \nonce^j
    \right)
  \]

\item If $\tau' = \cnai(j',1)$. Since $\tau' \popre \tau$, we know that $j \ne j'$. Therefore $\instate_{\tau'}(\tsuccess_\hn^\ID) = \nonce^j \ra \instate_{\tau'}(\tsuccess_\hn^\ID) \ne \nonce^{j'}$. We deduce that $\sff \wedge \theta_\pnai \ra \neg \accept_{\tau'}^\ID$. This concludes this case.

\item If $\tau' = \_,\pnai(\_,1)$. We know that:
  \[
    \sff \wedge \theta_\pnai \ra
    \neg \incaccept_{\tau'}^\ID
  \]
  We then directly conclude using the facts that:
  \begin{mathpar}
    \neg \incaccept_{\tau'}^\ID \ra
    \cstate_{\tau'}(\tsuccess_\hn^\ID) =
    \instate_{\tau'}(\tsuccess_\hn^\ID)

    \neg \incaccept_{\tau'}^\ID \ra
    \cstate_{\tau'}(\suci_\hn^\ID) =
    \instate_{\tau'}(\suci_\hn^\ID)
  \end{mathpar}
\end{itemize}
By applying \eqref{eq:asdvjzxlv} at instant $\tauo$, we get that:
\[
  \left(\sff \wedge \instate_\tau(\tsuccess_\hn^\ID) = \nonce^j\right)
  \;\lra\;
  \left(\sff \wedge \theta_\pnai \wedge \theta_\cnai\right)
  \;\lra\;
  \left(\sff \wedge \theta_\pnai\right)
  \numberthis\label{eq:vioiqperqfa}
\]

\paragraph{Part 1}
Let $\tau' = \_,\pnai(j',1)$, with $\taut \potau \tau'$. Let $\tau'_0 = \pnai(j',0)$. Using \ref{equ3} we know that:
\[
  \accept_{\tau'}^\ID
  \;\lra\;
  \bigvee_{\taua = \_,\npuai{1}{\ID}{j_a}
    \atop{\tau'_0 \potau \taua \potau \tau'}}
  \underbrace{\left(
      \begin{alignedat}{2}
        &&&g(\inframe_{\taua}) = \nonce^{j'}
        \wedge
        \pi_1(g(\inframe_{\tau'})) =
        \enc{\spair
          {\ID}
          {\instate_{\taua}(\sqn_\ue^\ID)}}
        {\pk_\hn}{\enonce^{j_a}}\\
        &\wedge\;&&
        \pi_2(g(\inframe_{\tau'})) =
        {\mac{\spair
            {\enc{\spair
                {\ID}
                {\instate_{\taua}(\sqn_\ue^\ID)}}
              {\pk_\hn}{\enonce^{j_a}}}
            {g(\inframe_{\taua})}}
          {\mkey^\ID}{1}}
      \end{alignedat}
    \right)}_{\lambda_\taua^{\tau'}}
  \numberthis\label{eq:avcxvschiaa}
\]
We define:
\[
  \tau_\newsession =
  \begin{dcases*}
    \newsession_\ID(j_\newsession) & if there exists $j_\newsession$ s.t. $\newsession_\ID(j_\newsession) <_\tau\tau$ and $\newsession_\ID(j_\newsession) \not \potau \newsession_\ID(\_)$.    \\
    \epsilon & otherwise
  \end{dcases*}
\]
Let $\taua = \_,\npuai{1}{\ID}{j_a}$ such that $\tau'_0 \potau \taua \potau \tau'$. Since $\taui = \_,\cuai_\ID(j_i,1) \not\potau \newsession_\ID(\_)$, we have only three interleavings possible: $\taut \potau \taua$, $\taut \potau \taua$ or $\tau_\newsession \potau \taua \potau \tautt$. First, we are going to show that in the first two cases we have:
\[
  \sff \wedge \lambda_\taua^{\tau'}
  \;\ra\;
  \neg \incaccept_{\tau'}^\ID
\]
\begin{itemize}
\item If $\taua \potau \tau_\newsession$, we have the following interleaving:
  \begin{center}
    \begin{tikzpicture}
      [dn/.style={inner sep=0.2em,fill=black,shape=circle},
      sdn/.style={inner sep=0.15em,fill=white,draw,solid,shape=circle},
      sl/.style={decorate,decoration={snake,amplitude=1.6}},
      dl/.style={dashed},
      pin distance=0.5em,
      every pin edge/.style={thin}]

      \draw[thick] (0,0)
      node[left=1.3em] {$\tau:$}
      -- ++(0.5,0)
      node[dn,pin={above:{$\npuai{1}{\ID}{j_a}$}}] {}
      node[below=0.3em,name=a0]{$\taua$}
      -- ++(2,0)
      node[dn,pin={above:{$\newsession_\ID(\_)$}}] {}
      node[below=0.3em]{$\tau_\newsession$}
      -- ++(2,0)
      node[dn,pin={above:{$\cuai_\ID(j_i,0)$}}] {}
      node[below=0.3em,name=b0]{$\tautt$}
      -- ++(2,0)
      node[dn,pin={above:{$\cnai(j,0)$}}] {}
      node[below=0.3em,name=b1]{$\taut$}
      -- ++(2,0)
      node[dn,pin={above:{$\pnai(j',1)$}}] {}
      node[below=0.3em,name=a1]{$\tau'$}
      -- ++(2,0)
      node[dn,pin={above:{$ \cnai(j,1) $}}] {}
      node[below=0.3em,name=b2]{$\tau$};

      \draw[thin,dashed] (a0) -- ++(0,-0.5) -| (a1);
      \draw[thin,densely dotted] (b0) -- ++(0,-0.8) -| (b1);
      \path (b1) -- ++(0,-0.8) node[draw=none,inner sep=0] (b1p){};
      \draw[thin,densely dotted] (b1p) -| (b2);
    \end{tikzpicture}
  \end{center}
  By definition of $\incaccept_{\tau'}^\ID$, and using the fact that $\lambda_\taua^{\tau'} \ra \accept_{\tau'}^\ID$ we know that:
  \[
    \left(
      \lambda_\taua^{\tau'} \wedge \incaccept_{\tau'}^\ID
    \right)
    \;\ra\;
    \instate_{\tau'}(\sqn_\hn^\ID) \le \instate_\taua(\sqn_\ue^\ID)
  \]
  To conclude this case, we only need to show that:
  \[
    \left(
      \lambda_\taua^{\tau'} \wedge \incaccept_{\tau'}^\ID
    \right)
    \;\ra\;
    \instate_\taua(\sqn_\ue^\ID) < \instate_{\tau'}(\sqn_\hn^\ID)
    \numberthis\label{qe:azxzxof}
  \]
  From which we obtain directly a contradiction, implies that:
  \[
    \sff \wedge \lambda_\taua^{\tau'}
    \;\ra\;
    \neg \incaccept_{\tau'}^\ID
    \qquad\qquad\text{ when $\taua \potau \tau_\newsession$}
    \numberthis\label{eq:rutwdfsdfvsjov}
  \]
  The proof of \eqref{qe:azxzxof} is straightforward using \ref{b5} and \ref{b6}, we just give the proof graphically below:
  \begin{center}
    \begin{tikzpicture}
      [dn/.style={inner sep=0.2em,fill=black,shape=circle},
      sdn/.style={inner sep=0.15em,fill=white,draw,solid,shape=circle},
      sl/.style={decorate,decoration={snake,amplitude=1.6}},
      dl/.style={dashed},
      pin distance=0.5em,
      every pin edge/.style={thin}]

      \draw[thick] (0,0)
      node[left=1.3em] {$\tau:$}
      -- ++(0.5,0)
      node[below=0.3em] {$\taua$}
      node[dn,pin={above:{$\npuai{1}{\ID}{j_a}$}}]
      (a) {}
      -- ++(3,0)
      node[below=0.3em] {$\tau_\newsession$}
      node[dn,pin={above:{$\newsession_\ID(j_\newsession)$}}]
      (b) {}
      -- ++(3,0)
      node[below=0.3em] {$\tau'$}
      node[dn,pin={above:{$\pnai(j',1)$}}]
      (c) {}
      -- ++(0.5,0);

      \path (a) -- ++ (0,-0.8)
      node (a1) {$\instate_\taua(\sqn_\ue^\ID)$};

      \path (b) -- ++ (0,-0.8)
      node (b1) {$\instate_{\tau_\newsession}(\sqn_\ue^\ID)$};

      \path (c) -- ++ (0,-0.8)
      -- ++ (0,-1)
      node (c2) {$\instate_{\tau'}(\sqn_\hn^\ID)$};

      \draw (a1) -- (b1) node[midway,above]{$\le$};
      \draw (b1) -- (c2) node[midway,above]{$<$};
    \end{tikzpicture}
  \end{center}

\item If $\tau_\newsession \potau \taua \potau \tautt$, we have the following interleaving:
  \begin{center}
    \begin{tikzpicture}
      [dn/.style={inner sep=0.2em,fill=black,shape=circle},
      sdn/.style={inner sep=0.15em,fill=white,draw,solid,shape=circle},
      sl/.style={decorate,decoration={snake,amplitude=1.6}},
      dl/.style={dashed},
      pin distance=0.5em,
      every pin edge/.style={thin}]

      \draw[thick] (0,0)
      node[left=1.3em] {$\tau:$}
      -- ++(0.5,0)
      node[dn,pin={above,align=left:{$\newsession_\ID(\_)$\\or $\epsilon$}}] {}
      node[below=0.3em]{$\tau_\newsession$}
      -- ++(2,0)
      node[dn,pin={above:{$\npuai{1}{\ID}{j_a}$}}] {}
      node[below=0.3em,name=a0]{$\taua$}
      -- ++(2,0)
      node[dn,pin={above:{$\cuai_\ID(j_i,0)$}}] {}
      node[below=0.3em,name=b0]{$\tautt$}
      -- ++(2,0)
      node[dn,pin={above:{$\cnai(j,0)$}}] {}
      node[below=0.3em,name=b1]{$\taut$}
      -- ++(2,0)
      node[dn,pin={above:{$\pnai(j',1)$}}] {}
      node[below=0.3em,name=a1]{$\tau'$}
      -- ++(2,0)
      node[dn,pin={above:{$\cnai(j,1) $}}] {}
      node[below=0.3em,name=b2]{$\tau$};

      \draw[thin,dashed] (a0) -- ++(0,-0.5) -| (a1);
      \draw[thin,densely dotted] (b0) -- ++(0,-0.8) -| (b1);
      \path (b1) -- ++(0,-0.8) node[draw=none,inner sep=0] (b1p){};
      \draw[thin,densely dotted] (b1p) -| (b2);
    \end{tikzpicture}
  \end{center}
  We know that $\lambda_\taua^{\tau'} \ra \cstate_\taua(\suci_\ue^\ID) = \unset$, and that $\sff \ra \instate_\tautt(\sync_\ue^\ID)$. By \ref{b1}, we get $\sff \ra \instate_\tautt(\suci_\ue^\ID) \ne \unset$. This means that $\suci_\ue^\ID$ is unset at $\taua$, but set at $\tautt$. Therefore there was a successful run of the protocol ($\supi$ or $\suci$) between $\taua$ and $\tautt$. More precisely, using Proposition~\ref{prop:unset-prop} we have:
  \begin{alignat*}{2}
    \sff \wedge \lambda_\taua^{\tau'}
    &\;\;\ra\;\;&&
    \left(
      \cstate_\taua(\suci_\ue^\ID) = \unset \wedge
      \instate_\tautt(\suci_\ue^\ID) \ne \unset
    \right)\\
    &\;\;\ra\;\;&&
    \bigvee_{\tau'' = \_,\fuai_\ID(j'')\atop{\taua \potau \tau'' \potau \tautt}}
    \accept_{\tau''}^\ID
    \numberthis\label{eq:nkvjva}
  \end{alignat*}
  Let $\tau'' = \_,\fuai_\ID(j'')$ such that $\taua \potau \tau'' \potau \tautt$. We then have two cases:
  \begin{itemize}
  \item Assume $j'' = j_a$. In order to have $\accept_{\tau''}^\ID$, we need the $\supi$ or $\suci$ session $j''$ to have been executed before $\tau''$. Intuitively, this cannot happen if $j'' = j_a$ because the user session $j_a$ is interacting with the network session $j'$, and $\tau'' \potau \tau'$. Formally, using the fact that $j'' = j_a$ we are going to show that:
    \[
      \left(
        \lambda_\taua^{\tau'} \wedge \accept_{\tau''}^\ID
      \right)
      \ra \false
      \numberthis\label{eq:sdnjksdf}
    \]
    First, by \ref{equ1} we know that:
    \begin{alignat*}{2}
      \accept_{\tau''}^\ID
      &\;\;\ra\;\;&&
      \bigvee_{\fnai(j_x) \not <_{\tau''} \newsession_\ID(\_)}
      \injauth_{\tau''}(\ID,j_x) \\
      &\;\;\ra\;\;&&
      \bigvee_{\fnai(j_x) \not <_{\tau''} \newsession_\ID(\_)}
      \left(
        \instate_{\tau''}(\bauth_\hn^{j_x}) = \ID \wedge
        \instate_{\tau''}(\eauth_\ue^{\ID}) = \nonce^{j_x}
      \right)\\
      \intertext{By \ref{a8} we get:}
      &\;\;\ra\;\;&&
      \bigvee_{\fnai(j_x) \not <_{\tau''} \newsession_\ID(\_)}
      \left(
        \instate_{\tau''}(\bauth_\hn^{j_x}) = \ID \wedge
        \instate_{\tau''}(\bauth_\ue^{\ID}) = \nonce^{j_x}
      \right)
      \numberthis\label{nkcvzioe}
    \end{alignat*}
    We know that $\lambda_\taua^{\tau'} \ra \instate_{\taua}(\bauth_\ue^{\ID}) = \nonce^{j'}$. Moreover, using the validity of $\tau$ we know that $\bauth_\ue^{\ID}$ is not updated between $\taua$ and $\tau''$, therefore $\lambda_\taua^{\tau'} \ra \instate_{\tau''}(\bauth_\ue^{\ID}) = \nonce^{j'}$. Putting this together with \eqref{nkcvzioe}, and using the fact that:
    \[
      \left(
        \instate_{\tau''}(\bauth_\ue^{\ID}) = \nonce^{j_x} \wedge
        \instate_{\tau''}(\bauth_\ue^{\ID}) = \nonce^{j'}
      \right) \ra \false \text{ if } j_x \ne j'
    \]
    We get:
    \begin{alignat*}{2}
      \accept_{\tau''}^\ID \wedge \lambda_\taua^{\tau'}
      &\;\;\ra\;\;&&
      \instate_{\tau''}(\bauth_\hn^{j'}) = \ID \wedge
      \instate_{\tau''}(\bauth_\ue^{\ID}) = \nonce^{j'}\displaybreak[1]\\
      \intertext{Since $\tau'' \potau \tau'$, we know that $\instate_{\tau''}(\bauth_\hn^{j'}) = \bot$. This yields a contradiction:}
      &\;\;\ra\;\;&&
      \instate_{\tau''}(\bauth_\hn^{j'}) = \ID \wedge
      \instate_{\tau''}(\bauth_\hn^{j'}) = \bot\\
      &\;\;\ra\;\;&&
      \false
    \end{alignat*}
    Which concludes the proof of \eqref{eq:sdnjksdf}.

  \item Assume $j'' \ne j_a$. Intuitively, we know that $\accept_{\tau''}^\ID$ implies that $\sqn^\ID_\ue$ and $\sqn^\ID_\hn$ have been incremented and synchronized between $\taua$ and $\tau'$. Therefore we know that the test $\incaccept_{\tau'}^\ID$ fails. Formally, we show that:
    \begin{alignat*}{2}
      \accept_{\tau''}^\ID
      &\;\;\ra\;\;&&
      \cstate_\taua(\sqn_\ue^\ID) < \instate_{\tau''}(\sqn_\hn^\ID)
      \numberthis\label{eq:fdsfjaoifjasodaspoas}
    \end{alignat*}
    We give the outline of the proof. First, we apply \ref{sequ1} to $\tau''$. Then, we take $\tau''_0 = \_,\fnai(j_e) \popre \tau''$. We let $\tau''_1 = \_,\pnai(j_e,1)$ or $\_,\cnai(j_e,1)$ such that $\tau''_1 \popre \tau''_0$, and we do a case disjunction on $\tau''_1$:
    \begin{itemize}
    \item If $\tau''_1 = \_,\pnai(j_e,1)$, then we use \ref{sequ4} on it, and we show that $\cstate_\taua(\sqn_\ue^\ID) < \instate_{\tau''}(\sqn_\hn^\ID)$ by doing a case disjunction on $\incaccept_{\tau''_1}^\ID$.
    \item If $\tau''_1 = \_,\cnai(j_e,1)$, then we use \ref{sequ2} on it, and we show that $\cstate_\taua(\sqn_\ue^\ID) < \instate_{\tau''}(\sqn_\hn^\ID)$ using \ref{b2}
    \end{itemize}
    We omit the details.
    
    Using \ref{b5} we know that $\instate_{\tau''}(\sqn_\hn^\ID) \le \instate_{\tau'}(\sqn_\hn^\ID)$ and $\instate_\taua(\sqn_\ue^\ID) \le \cstate_\taua(\sqn_\ue^\ID)$. Hence, we deduce from \eqref{eq:fdsfjaoifjasodaspoas} that:
    \begin{alignat*}{2}
      \accept_{\tau''}^\ID
      &\;\;\ra\;\;&&
      \instate_\taua(\sqn_\ue^\ID) < \instate_{\tau'}(\sqn_\hn^\ID)
      \numberthis\label{eq:fsdsdsdfjkvzd}
    \end{alignat*}

    Moreover, by definition of $\incaccept_{\tau'}^\ID$, and using the fact that $\lambda_\taua^{\tau'} \ra \accept_{\tau'}^\ID$ we know that:
    \[
      \left(
        \lambda_\taua^{\tau'} \wedge \incaccept_{\tau'}^\ID
      \right)
      \;\ra\;
      \instate_{\tau'}(\sqn_\hn^\ID) \le \instate_\taua(\sqn_\ue^\ID)
      \numberthis\label{eq:sdfjkvzd}
    \]
    Putting  \eqref{eq:fsdsdsdfjkvzd} and \eqref{eq:sdfjkvzd} together:
    \[
      \left(
        \lambda_\taua^{\tau'} \wedge \incaccept_{\tau'}^\ID \wedge \accept_{\tau''}^\ID
      \right)
      \;\ra\;
      \left(
        \instate_{\tau'}(\sqn_\hn^\ID) \le \instate_\taua(\sqn_\ue^\ID)
        \wedge
        \instate_\taua(\sqn_\ue^\ID) < \instate_{\tau'}(\sqn_\hn^\ID)
      \right)
      \;\ra\;
      \false
    \]
    Hence:
    \[
      \left(
        \lambda_\taua^{\tau'} \wedge \accept_{\tau''}^\ID
      \right)
      \;\ra\;
      \neg \incaccept_{\tau'}^\ID
      \numberthis\label{eq:vcjcsdfjkvzd}
    \]
  \end{itemize}
  From \eqref{eq:nkvjva}, \eqref{eq:sdnjksdf} and \eqref{eq:vcjcsdfjkvzd} we deduce that:
  \begin{alignat*}{2}
    \sff \wedge \lambda_\taua^{\tau'}
    &\;\;\ra\;\;&&
    \bigvee_{\tau'' = \_,\fuai_\ID(j'')\atop{\taua \potau \tau'' \potau \tautt}}
    \sff \wedge \lambda_\taua^{\tau'}\wedge\accept_{\tau''}^\ID\\
    &\;\;\ra\;\;&&
    \bigvee_{\tau'' = \_,\fuai_\ID(j'')\atop{\taua \potau \tau'' \potau \tautt}}
    \neg \incaccept_{\tau'}^\ID\\\intertext{Hence:}
    \sff \wedge \lambda_\taua^{\tau'}
    &\;\;\ra\;\;&&
    \neg \incaccept_{\tau'}^\ID
    \qquad\qquad \text{ when $\tau_\newsession \potau \taua \potau \tautt$}
    \numberthis\label{eq:pqwqokdp}
  \end{alignat*}
\end{itemize}

\paragraph{Part 2}
Using \eqref{eq:rutwdfsdfvsjov} and \eqref{eq:pqwqokdp}, we know that we can focus on the (partial) $\supi$ sessions that started after $\taui$, i.e. the sessions with transcript of the from $\lambda_\taua^{\tau'}$, where $\taua = \_,\npuai{1}{\ID}{j_a}$, $\tau' = \_,\pnai(j',1)$ and $\taui \potau \taua \potau \tau'$. Formally, we have:
\begin{alignat*}{2}
  \left(\sff \wedge \theta_\pnai\right)
  &\;\;\lra\;\;&&
  \bigwedge_{\tau' = \_,\pnai(\_,1) \atop{\taut \potau \tau'}}
  \neg\incaccept_{\tau'}^\ID\\
  &\;\;\lra\;\;&&
  \bigwedge_{\tau' = \_,\pnai(\_,1)\atop{\taut \potau \tau'}}
  \left(
    \left(\sff \wedge \accept_{\tau'}^\ID\right)
    \ra
    \neg\incaccept_{\tau'}^\ID
  \right)\\\displaybreak[1]
  &\;\;\lra\;\;&&
  \bigwedge_{\tau' = \_,\pnai(j',1)
    \atop{\tau'_0 = \_,\pnai(j',0)
      \atop{\taua = \_,\npuai{1}{\ID}{j_a}
        \atop{\taut \potau \tau'
          \atop{\tau'_0 \potau \taua \potau \tau'}}}}}
  \left(
    \left(\sff \wedge \lambda_\taua^{\tau'}\right)
    \ra
    \neg\incaccept_{\tau'}^\ID
  \right)
  \tag{By \eqref{eq:avcxvschiaa}}\\\displaybreak[1]
  &\;\;\lra\;\;&&
  \bigwedge_{
    \taua = \_,\npuai{1}{\ID}{j_a}
    \atop{\tau' = \_,\pnai(j',1)
      \atop{\taui \potau \taua \potau \tau'}}}
  \left(
    \left(
      \sff \wedge \lambda_\taua^{\tau'}
    \right)
    \;\ra\;
    \neg\incaccept_{\tau'}^\ID
  \right)
  \tag{By \eqref{eq:rutwdfsdfvsjov} and \eqref{eq:pqwqokdp}}
\end{alignat*}
We represent graphically the shape of the interleavings that we need to consider:
\begin{center}
  \begin{tikzpicture}
    [dn/.style={inner sep=0.2em,fill=black,shape=circle},
    sdn/.style={inner sep=0.15em,fill=white,draw,solid,shape=circle},
    sl/.style={decorate,decoration={snake,amplitude=1.6}},
    dl/.style={dashed},
    pin distance=0.5em,
    every pin edge/.style={thin}]

    \draw[thick] (0,0)
    node[left=1.3em] {$\tau:$}
    -- ++(0.5,0)
    node[dn,pin={above,align=left:{$\newsession_\ID(\_)$\\or $\epsilon$}}] {}
    node[below=0.3em]{$\tau_\newsession$}
    -- ++(2,0)
    node[dn,pin={above:{$\cuai_\ID(j_i,0)$}}] {}
    node[below=0.3em,name=b0]{$\tautt$}
    -- ++(2,0)
    node[dn,pin={above:{$\cnai(j,0)$}}] {}
    node[below=0.3em,name=b1]{$\taut$}
    -- ++(2,0)
    node[dn,pin={above:{$\cuai_\ID(j_i,1)$}}] {}
    node[below=0.3em,name=b9]{$\taui$}
    -- ++(2,0)
    node[dn,pin={above:{$\npuai{1}{\ID}{j_a}$}}] {}
    node[below=0.3em,name=a0]{$\taua$}
    -- ++(2,0)
    node[dn,pin={above:{$\pnai(j',1)$}}] {}
    node[below=0.3em,name=a1]{$\tau'$}
    -- ++(2,0)
    node[dn,pin={above:{$ \cnai(j,1) $}}] {}
    node[below=0.3em,name=b2]{$\tau$};

    \draw[thin,dashed] (a0) -- ++(0,-0.5) -| (a1);
    \draw[thin,densely dotted] (b0) -- ++(0,-0.8) -| (b1);
    \path (b1) -- ++(0,-0.8) node[draw=none,inner sep=0] (b1p){};
    \draw[thin,densely dotted] (b1p) -| (b9);
    \path (b9) -- ++(0,-0.8) node[draw=none,inner sep=0] (b9p){};
    \draw[thin,densely dotted] (b9p) -| (b2);
  \end{tikzpicture}
\end{center}

\paragraph{Part 3}
We are now going to show that if at least one partial $\supi$ session that started after $\taui$ accepts (i.e. $\sff \wedge \lambda_\taua^{\tau'}$ holds), then we have $\instate_\tau(\tsuccess_\hn^\ID) \ne \nonce^j$. First, from what we showed in \textbf{Part 2}, and using \eqref{eq:vioiqperqfa} we know that:
\begin{alignat*}{2}
  \neg \left(\sff \wedge \instate_\tau(\tsuccess_\hn^\ID) = \nonce^j\right)
  &\;\;\lra\;\;&&
  \bigvee_{
    \taua = \_,\npuai{1}{\ID}{j_a}
    \atop{\tau' = \_,\pnai(j',1)
      \atop{\taui \potau \taua \potau \tau'}}}
  \sff \wedge \lambda_\taua^{\tau'} \wedge \incaccept_{\tau'}^\ID \\
  &\;\;\ra\;\;&&
  \bigvee_{
    \taua = \_,\npuai{1}{\ID}{j_a}
    \atop{\tau' = \_,\pnai(j',1)
      \atop{\taui \potau \taua \potau \tau'}}}
  \sff \wedge \lambda_\taua^{\tau'}
\end{alignat*}
In a first time, assume that for every $\taua = \_,\npuai{1}{\ID}{j_a}$ and $\tau' = \_,\pnai(j',1)$ such that $\taui \potau \taua \potau \tau'$ we have:
\[
  \left(
    \sff \wedge \lambda_\taua^{\tau'} \wedge \neg \incaccept_{\tau'}^\ID
  \right)
  \;\ra\;
  \instate_\tau(\tsuccess_\hn^\ID) \ne \nonce^j
  \numberthis\label{eq:dffovsnsiofqwerefascs}
\]
Then we know that:
\[
  \bigvee_{
    \taua = \_,\npuai{1}{\ID}{j_a}
    \atop{\tau' = \_,\pnai(j',1)
      \atop{\taui \potau \taua \potau \tau'}}}
  \sff \wedge \lambda_\taua^{\tau'}
  \;\ra\;
  \neg \left(\sff \wedge \instate_\tau(\tsuccess_\hn^\ID) = \nonce^j\right)
\]
Therefore:
\[
  \neg \left(\sff \wedge \instate_\tau(\tsuccess_\hn^\ID) = \nonce^j\right)
  \;\lra\;
  \bigvee_{
    \taua = \_,\npuai{1}{\ID}{j_a}
    \atop{\tau' = \_,\pnai(j',1)
      \atop{\taui \potau \taua \potau \tau'}}}
  \sff \wedge \lambda_\taua^{\tau'}
  \numberthis\label{eq:dffovsnsiofqwere}
\]
We now give the proof of \eqref{eq:dffovsnsiofqwerefascs}. Let $\taua = \_,\npuai{1}{\ID}{j_a}$ and $\tau' = \_,\pnai(j',1)$ such that $\taui \potau \taua \potau \tau'$. We know that:
\[
  \left(
    \lambda_\taua^{\tau'} \wedge \neg \incaccept_{\tau'}^\ID
  \right)
  \;\ra\;
  \instate_{\taua}(\sqn_\ue^\ID) < \instate_{\tau'}(\sqn_\hn^\ID)
\]
And that:
\[
  \sff \;\ra\;
  \instate_{\taut}(\sqn_\hn^\ID) = \instate_{\taui}(\sqn_\ue^\ID)
\]
Moreover by \ref{b5} we know that $\instate_{\taui}(\sqn_\ue^\ID) \le \instate_{\taua}(\sqn_\ue^\ID)$. We summarize this graphically:
\begin{center}
  \begin{tikzpicture}
    [dn/.style={inner sep=0.2em,fill=black,shape=circle},
    sdn/.style={inner sep=0.15em,fill=white,draw,solid,shape=circle},
    sl/.style={decorate,decoration={snake,amplitude=1.6}},
    dl/.style={dashed},
    pin distance=0.5em,
    every pin edge/.style={thin}]

    \draw[thick] (0,0)
    node[left=1.3em] {$\tau:$}
    -- ++(0.5,0)
    node[below=0.3em] (a) {$\tautt$}
    node[dn,pin={above:{$\cuai_\ID(j_i,0)$}}] {}
    -- ++(3,0)
    node[below=0.3em] (b) {$\taut$}
    node[dn,pin={above:{$\cnai(j,0)$}}] {}
    -- ++(3,0)
    node[below=0.3em] (c) {$\taui$}
    node[dn,pin={above:{$\cuai_\ID(j_i,1)$}}] {}
    -- ++(3,0)
    node[below=0.3em] (d) {$\taua$}
    node[dn,pin={above:{$\npuai{1}{\ID}{j_a}$}}] {}
    -- ++(3,0)
    node[below=0.3em] (e) {$\tau'$}
    node[dn,pin={above:{$\pnai(j',1)$}}] {}
    -- ++(3,0)
    node[below=0.3em] (f) {$\tau$}
    node[dn,pin={above:{$\cnai(j,1)$}}] {};

    \path (c) -- ++ (0,-1.6)
    node (c1) {$\instate_\taui(\sqn_\ue^\ID)$}
    (d) -- ++ (0,-1.6)
    node (d1) {$\instate_\taua(\sqn_\ue^\ID)$}
    (b) -- ++ (0,-2.6)
    node (b1) {$\instate_\taut(\sqn_\hn^\ID)$}
    (e) -- ++ (0,-2.6)
    node (e1) {$\instate_{\tau'}(\sqn_\hn^\ID)$};

    \draw (b1) -- (c1) node[midway,above,sloped]{$=$}
    (c1) -- (d1) node[midway,above,sloped]{$\le$}
    (d1) -- (e1) node[midway,above,sloped]{$<$};

    \draw[thin,dashed] (a) -- ++(0,-0.8) -| (b)
    {[draw=none] -- ++(0,-0.8)} -| (c)
    {[draw=none] -- ++(0,-0.8)} -| (f);
    \draw[thin,densely dotted] (d) -- ++(0,-0.5) -| (e);
  \end{tikzpicture}
\end{center}
We deduce that:
\[
  \left(
    \sff \wedge \lambda_\taua^{\tau'} \wedge \neg \incaccept_{\tau'}^\ID
  \right)
  \;\ra\;
  \instate_{\taut}(\sqn_\hn^\ID) < \instate_{\tau'}(\sqn_\hn^\ID)
\]
Moreover:
\[
  \instate_{\taut}(\sqn_\hn^\ID) < \instate_{\tau'}(\sqn_\hn^\ID)
  \;\ra\;
  \Big(
  \bigvee_{\taux = \pnai(j_x,1)\atop{\taut \potau \taux \potau \tau'}}
  \incaccept_{\taux}^\ID
  \Big)
  \vee
  \Big(
  \bigvee_{\taux = \cnai(j_x,1)\atop{\taut \potau \taux \potau \tau'}}
  \incaccept_{\taux}^\ID
  \Big)
\]
For every $\taux = \pnai(j_x,1)$ such that  $\taut \potau \taux \potau \tau'$ we have $j_x \ne j$. Therefore:
\begin{alignat*}{2}
  \bigvee_{\taux = \pnai(j_x,1)\atop{\taut \potau \taux \potau \tau'}}
  \incaccept_{\taux}^\ID
  &\;\;\ra\;\;&&
  \bigvee_{\taux = \pnai(j_x,1)\atop{\taut \potau \taux \potau \tau'}}
  \cstate_\taux(\tsuccess_\hn^\ID) = \nonce^{j_x}\\
  &\;\;\ra\;\;&&
  \instate_\tau(\tsuccess_\hn^\ID) \ne \nonce^j
\end{alignat*}
And:
\begin{alignat*}{2}
  \bigvee_{\taux = \cnai(j_x,1)\atop{\taut \potau \taux \potau \tau'}}
  \incaccept_{\taux}^\ID
  &\;\;\ra\;\;&&
  \bigvee_{\taux = \cnai(j_x,1)\atop{\taut \potau \taux \potau \tau'}}
  \instate_\taux(\tsuccess_\hn^\ID) = \nonce^{j_x}\\
  &\;\;\ra\;\;&&
  \instate_\tau(\tsuccess_\hn^\ID) \ne \nonce^j
\end{alignat*}
This concludes the proof of \eqref{eq:dffovsnsiofqwerefascs}.

The proofs in \textbf{Part 1} to \textbf{3} only used the fact that $\tau$ is a valid symbolic trace. We never used the fact that $\tau$ is a basic trace. Therefore, carrying out the same proof, we can show that:
\[
  \neg \left(
    \usff \wedge \instate_\utau(\tsuccess_\hn^{\nu_\tau(\ID)}) = \nonce^j
  \right)
  \;\lra\;
  \bigvee_{
    \taua = \_,\npuai{1}{\ID}{j_a}
    \atop{\tau' = \_,\pnai(j',1)
      \atop{\taui \potau \taua \potau \tau'}}}
  \usff \wedge \lambda_\utaua^{\utau'}
  \numberthis\label{eq:dffovsnsfafsiofqwere}
\]

\paragraph{Part 4}
Let $\taua = \_,\npuai{1}{\ID}{j_a}$ and $\tau' = \_,\pnai(j',1)$ be such that $\taui \potau \taua \potau \tau'$. Observing that:
\begin{mathpar}
  \left(\nonce^{j'},\nonce^{j'}\right)
  \in \reveal_{\tauo}

  \left(
    \enc{\spair
      {\ID}
      {\instate_{\taua}(\sqn_\ue^\ID)}}
    {\pk_\hn}{\enonce^{j_a}},
    \enc{\spair
      {{\nu_\tau(\ID)}}
      {\instate_{\utaua}(\sqn_\ue^{\nu_\tau(\ID)})}}
    {\pk_\hn}{\enonce^{j_a}}
  \right)
  \in \reveal_{\tauo}

  \left(
    \mac{\spair
      {\enc{\spair
          {\ID}
          {\instate_{\taua}(\sqn_\ue^\ID)}}
        {\pk_\hn}{\enonce^{j_a}}}
      {g(\inframe_{\taua})}}
    {\mkey^\ID}{1},
    \mac{\spair
      {\enc{\spair
          {{\nu_\tau(\ID)}}
          {\instate_{\utaua}(\sqn_\ue^{\nu_\tau(\ID)})}}
        {\pk_\hn}{\enonce^{j_a}}}
      {g(\inframe_{\utaua})}}
    {\mkey^{\nu_\tau(\ID)}}{1}
  \right)
  \in \reveal_{\tauo}
\end{mathpar}
It is straightforward to show that we have a derivation of:
\[
  \begin{gathered}[c]
    \infer[\simp]{
      \inframe_\tau,
      \lreveal_{\tauo},
      \lambda_\taua^{\tau'}
      \;\sim\;
      \inframe_\utau,
      \rreveal_{\tauo},
      \lambda_\utaua^{\utau'}
    }{
      \inframe_\tau, \lreveal_{\tauo}
      \sim
      \inframe_\utau,\rreveal_{\tauo}
    }
  \end{gathered}
\]
Using \eqref{eq:dffovsnsiofqwere} and \eqref{eq:dffovsnsfafsiofqwere}, and combining the derivation above with the derivation in \eqref{eq:ascvjkxo}, we can build the following derivation:
\[
  \begin{gathered}[c]
    \infer[R]{
      \inframe_\tau,
      \lreveal_{\tauo},
      \sff \wedge \instate_\tau(\tsuccess_\hn^{\ID}) = \nonce^j
      \;\sim\;
      \inframe_\utau,
      \rreveal_{\tauo},
      \usff \wedge \instate_\utau(\tsuccess_\hn^{\nu_\tau(\ID)}) = \nonce^j
    }{
      \infer[(\dup,\fa)^*]{
        \inframe_\tau, \lreveal_{\tauo},
        \neg\left(
          \bigvee_{
            \taua = \_,\npuai{1}{\ID}{j_a}
            \atop{\tau' = \_,\pnai(j',1)
              \atop{\taui \potau \taua \potau \tau'}}}
          \sff \wedge \lambda_\taua^{\tau'}
        \right)
        \sim
        \inframe_\utau,\rreveal_{\tauo},
        \neg \left(
          \bigvee_{
            \taua = \_,\npuai{1}{\ID}{j_a}
            \atop{\tau' = \_,\pnai(j',1)
              \atop{\taui \potau \taua \potau \tau'}}}
          \usff \wedge \lambda_\utaua^{\utau'}
        \right)
      }{
        \inframe_\tau, \lreveal_{\tauo},
        \sim
        \inframe_\utau,\rreveal_{\tauo},
      }
    }
  \end{gathered}
  \numberthis\label{eq:vjkcqpweiqra}
\]
We know that:
\[
  \cond{\sff}{\syncdiff_\tau^\ID} =
  \begin{alignedat}{2}
    &\ite{\sff \wedge \instate_\tau(\sync_\ue^\ID)\wedge
      \instate_\tau(\tsuccess_\hn^\ID) = \nonce^j\\ &\quad}
    { \sqnsuc(\syncdiff_\tauo^\ID)\\ &\quad}
    { \syncdiff_\tauo^\ID}
  \end{alignedat}
\]
Similarly:
\[
  \cond{\usff}{\syncdiff_\utau^{\nu_\tau(\ID)}} =
  \begin{alignedat}{2}
    &\ite{\usff \wedge \instate_\utau(\sync_\ue^{\nu_\tau(\ID)})\wedge
      \instate_\utau(\tsuccess_\hn^{\nu_\tau(\ID)}) = \nonce^j\\ &\quad}
    { \sqnsuc(\syncdiff_\utauo^{\nu_\tau(\ID)})\\ &\quad}
    { \syncdiff_\utauo^{\nu_\tau(\ID)}}
  \end{alignedat}
\]
Hence, using \eqref{eq:vjkcqpweiqra} and the fact that:
\begin{mathpar}
  \left(
    \instate_\tau(\sync_\ue^\ID),
    \instate_\utau(\sync_\ue^{\nu_\tau(\ID)}
  \right) \in \reveal_{\tauo}

  \left(
    \syncdiff_\tauo^\ID,
    \syncdiff_\utauo^{\nu_\tau(\ID)}
  \right) \in \reveal_{\tauo}
\end{mathpar}
We have a derivation of:
\[
  \begin{gathered}[c]
    \infer[\simp]{
      \inframe_\tau,
      \lreveal_{\tauo},
      \cond{\sff}{\syncdiff_\tau^\ID}
      \;\sim\;
      \inframe_\utau,
      \rreveal_{\tauo},
      \cond{\usff}{\syncdiff_\utau^{\nu_\tau(\ID)}}
    }{
      \inframe_\tau, \lreveal_{\tauo}
      \sim
      \inframe_\utau,\rreveal_{\tauo}
    }
  \end{gathered}
  \numberthis\label{eq:dasdasdsadsdvjkcqpweiqra}
\]

\paragraph{Part 5}
Using \textbf{(J10)}, we know that:
\begin{alignat*}{2}
  \accept_\tau^\ID
  &\;\;\lra\;\;&&
  \bigvee_{
    \taui = \_,\cuai_\ID(j_i,1)
    \atop{\taut = \_,\cnai(j,0)
      \atop{\tautt = \_,\cuai_\ID(j_i,0)
        \atop{\tautt \potau \taut \potau \taui}}}}
  \left(\ftr{\tautt,\taui}{\taut,\tau}\right)\\
  \intertext{We split between the cases $\taui \potau \tau_\newsession$ and $\taui \not \potau \tau_\newsession$:}
  &\;\;\lra\;\;&&
  \bigvee_{
    \taui = \_,\cuai_\ID(j_i,1)
    \atop{\taut = \_,\cnai(j,0)
      \atop{\tautt = \_,\cuai_\ID(j_i,0)
        \atop{\mathclap{\tautt \potau \taut <_\tau
            \taui \potau \tau_\newsession}}}}}
  \left(\ftr{\tautt,\taui}{\taut,\tau}\right)
  \;\vee\;
  \bigvee_{
    \taui = \_,\cuai_\ID(j_i,1)
    \atop{\taut = \_,\cnai(j,0)
      \atop{\tautt = \_,\cuai_\ID(j_i,0)
        \atop{\mathclap{\tau_\newsession \potau \tautt <_\tau
            \taut \potau \taui}}}}}
  \left(\ftr{\tautt,\taui}{\taut,\tau}\right)
\end{alignat*}
If $\taui \potau \tau_\newsession$ then $\nu_\tautt(\ID) = \nu_\taui(\ID) \ne \nu_\tau(\ID)$, and if $\taui \not \potau \tau_\newsession$ then $\nu_\tautt(\ID) = \nu_\taui(\ID) = \nu_\tau(\ID)$. It follows, using \textbf{(J10)} on $\utau$, that:
\begin{mathpar}
  \bigvee_{\uID \in \copyid(\ID)\atop{\uID \ne \nu_\tau(\ID)}}
  \accept_\utau^\uID
  \;\lra\;
  \bigvee_{
    \taui = \_,\cuai_\uID(j_i,1)
    \atop{\taut = \_,\cnai(j,0)
      \atop{\tautt = \_,\cuai_\uID(j_i,0)
        \atop{\mathclap{\tautt \potau \taut <_\tau
            \taui \potau \tau_\newsession}}}}}
  \left(\ftr{\utautt,\utaui}{\utaut,\utau}\right)

  \accept_\utau^{\nu_\tau(\ID)}
  \;\lra\;
  \bigvee_{
    \taui = \_,\cuai_{\nu_\tau(\ID)}(j_i,1)
    \atop{\taut = \_,\cnai(j,0)
      \atop{\tautt = \_,\cuai_{\nu_\tau(\ID)}(j_i,0)
        \atop{\mathclap{\tau_\newsession <_\tau
            \tautt \potau \taut \potau \taui}}}}}
  \left(\ftr{\utautt,\utaui}{\utaut,\utau}\right)
\end{mathpar}
% Besides
% \begin{mathpar}
%   \syncdiff_\tau^\ID
%   \;=\;
%   \begin{alignedat}[t]{2}
%     \ite{\accept_\tau^\ID}{&
%     \switch{
%     \taui = \_,\cuai_\ID(j_i,1)
%     \atop{\taut = \_,\cnai(j,0)
%     \atop{\tautt = \_,\cuai_\ID(j_i,0)
%     \atop{\tautt \potau \taut \potau \taui \potau \tau}}}}
%     {\ftr{\tautt,\taui}{\taut,\tau}}
%     {\cond{\ftr{\tautt,\taui}{\taut,\tau}}{\syncdiff_\tau^\ID}}\\}
%     {&\syncdiff_\tauo^\ID}
%   \end{alignedat}

%   \syncdiff_\utau^{\nu_\tau(\ID)}
%   \;=\;
%   \begin{alignedat}[t]{2}
%     \ite{\bigvee_{\uID \in \copyid(\ID)}
%     \accept_\utau^\uID}{&
%     \switch{
%     \utaui = \_,\cuai_{\nu_\tau(\ID)}(j_i,1)
%     \atop{\utaut = \_,\cnai(j,0)
%     \atop{\utautt = \_,\cuai_{\nu_\tau(\ID)}(j_i,0)
%     \atop{\utautt <_\utau\, \utaut <_\utau\, \utaui <_\utau\, \utau}}}}
%     {\ftr{\utautt,\utaui}{\utaut,\utau}}
%     {\cond{\ftr{\utautt,\utaui}{\utaut,\utau}}{\syncdiff_\utau^{\nu_\tau(\ID)}}}\\}
%     {&\syncdiff_\utauo^{\nu_\tau(ID)}}
%   \end{alignedat}
% \end{mathpar}
\noindent Hence, using \eqref{eq:vcofeojgsdpifjajn} if $\taui \potau \newsession_\ID$, and \eqref{eq:dasdasdsadsdvjkcqpweiqra} if $\taui \not\potau \newsession_\ID$, we can build the following derivation:
\[
  \begin{gathered}[c]
    \infer[\simp]{
      \inframe_\tau,
      \lreveal_{\tauo},
      \syncdiff_\tau^\ID
      \;\sim\;
      \inframe_\utau,
      \rreveal_{\tauo},
      \syncdiff_\utau^{\nu_\tau(\ID)}
    }{
      \inframe_\tau, \lreveal_{\tauo}
      \sim
      \inframe_\utau,\rreveal_{\tauo}
    }
  \end{gathered}
\]

\paragraph{Part 6}
Observe that:
\begin{mathpar}
  \neauth_\tau(\ID,j)
  \;\lra\;
  \accept_\tau^\ID

  \uneauth_\utau(\ID,j)
  \;\lra\;
  \bigvee_{\uID \in \copyid(\ID)}
  \accept_\utau^\uID
\end{mathpar}
We therefore easily obtain the derivation:
\[
  \begin{gathered}[c]
    \infer[]{
      \inframe_\tau,
      \lreveal_{\tauo},
      \neauth_\tau(\ID,j)
      \;\sim\;
      \inframe_\utau,
      \rreveal_{\tauo},
      \uneauth_\utau(\ID,j)
    }{
      \inframe_\tau, \lreveal_{\tauo}
      \sim
      \inframe_\utau,\rreveal_{\tauo}
    }
  \end{gathered}
\]
Finally, using \textbf{(J10)}, we know that:
\[
  \bigvee_{\ID \in \iddom} \accept_\tau^\ID
  \;\lra\;
  \bigvee_{\ID \in \baseiddom} \accept_\tau^\ID
  \;\lra\;
  \neauth_\tau^\ID
\]
Moreover:
\[
  \bigvee_{\ID \in \iddom} \accept_\utau^\ID
  \;\lra\;
  \bigvee_{\ID \in \baseiddom}
  \left(
    \bigvee_{\uID \in \copyid(\ID)}
    \accept_\utau^\uID
  \right)
  \;\lra\;
  \bigvee_{\ID \in \baseiddom}
  \uneauth_\utau(\ID,j)
\]
It follows that:
\[
  \begin{gathered}[c]
    \infer[\fa]{
      \inframe_\tau,
      \lreveal_{\tauo},
      t_\tau
      \;\sim\;
      \inframe_\utau,
      \rreveal_{\tauo},
      t_\utau
    }{
      \infer[R]{
        \inframe_\tau, \lreveal_{\tauo},
        \bigvee_{\ID \in \iddom}
        \accept_\tau^\ID
        \sim
        \inframe_\utau,\rreveal_{\tauo},
        \bigvee_{\ID \in \iddom}
        \accept_\utau^\ID
      }{
        \infer[\simp]{
          \inframe_\tau, \lreveal_{\tauo},
          \bigvee_{\ID \in \baseiddom}
          \neauth_\tau^\ID
          \sim
          \inframe_\utau,\rreveal_{\tauo},
          \bigvee_{\ID \in \baseiddom}
          \uneauth_\utau(\ID,j)
        }{
          \inframe_\tau, \lreveal_{\tauo}
          \sim
          \inframe_\utau,\rreveal_{\tauo}
        }
      }
    }
  \end{gathered}
\]
Which concludes this proof.

%%% Local Variables:
%%% mode: latex
%%% TeX-master: "main"
%%% End:

%%% Local Variables:
%%% mode: latex
%%% TeX-master: "main"
%%% End:
\fi

\end{document}

%%% Local Variables:
%%% mode: latex
%%% TeX-master: t
%%% End: